\date{}
\theoremstyle{plain}
\newtheorem{property}{Property}
\newtheorem{fact}{Fact}
\newcommand{\remove}[1]{}
\newtheorem{observation}{Observation}
\newtheorem{theorem}{Theorem}
\newtheorem{lemma}{Lemma}
\newtheorem{claim}{Claim}
\newtheorem{corollary}{Corollary}
\newtheorem{definition}{Definition}
\newtheorem{remark}{Remark}
\Crefname{observation}{Observation}{Observations}
\Crefname{algorithm}{Algorithm}{Algorithms}
\Crefname{}{Section}{Sections}
\Crefname{observation}{Observation}{Observations}
\Crefname{lemma}{Lemma}{Lemmas}
\Crefname{claim}{Claim}{Claims}
\Crefname{figure}{Fig.}{Figs.}
\Crefname{figure}{Fig.}{Figs.}
\Crefname{enumi}{Condition}{Conditions}
\Crefname{property}{Property}{Properties}
\Crefname{remark}{Remark}{Remarks}
\Crefname{fact}{Fact}{Facts}
\definecolor{lipicsblue}{rgb}{0.08235294118,0.3098039216,0.537254902}
\definecolor{ourred}{rgb}{1,0.3,0.3}
\definecolor{green}{rgb}{0,0.588,0.509}
\renewcommand{\emph}[1]{\textcolor{lipicsblue}{\em #1}}
\renewcommand{\paragraph}[1]{\noindent{\bf #1}\xspace}
\newcommand{\red}[1]{\textcolor{ourred}{#1}\xspace} 
\newcommand{\blue}[1]{\textcolor{lipicsblue}{#1}\xspace}
\renewcommand{\todo}[2][]{\@bsphack\@todo[#1]{\textcolor{black}{#2}}\@esphack\ignorespaces}
\newcommand*{\qed}{\hfill\ensuremath{\square}}
\newenvironment{fullproof}{\noindent{\em Proof.~}}{\hspace*{\fill}${\small\qed}$\vspace{2mm}}
\newenvironment{proof}{\noindent{\em Proof.~}}{\hspace*{\fill}${\small\qed}$\vspace{2mm}}
\newif\ifshortversion
\renewcommand{\NPC}{\mbox{\NP-complete}\xspace}
\newcommand{\NPCN}{\mbox{\NP-completeness}\xspace}
\newcommand{\NPHN}{\mbox{\NP-hardness}\xspace}
\newcommand{\backbone}{${\cal B_\mu}$\xspace} 
\newcommand{\btpbeP}{{\sc Bipartite 2-Page Book Embedding}\xspace}
\newcommand{\btpbe}{{\sc B2BE}\xspace}
\newcommand{\btpbescP}{{\sc Bipartite 2-Page Book Embedding with Spine Crossings}\xspace}
\newcommand{\btpbef}{{\sc B2BEFO}\xspace}
\newcommand{\btpbefP}{{\sc Bipartite 2-Page Book Embedding with Fixed Order}\xspace}
\newcommand{\ctype}[1]{BP{#1}\xspace}
\newcommand{\gtype}[1]{BB{#1}\xspace}
\newcommand{\atype}[1]{RE{#1}\xspace}
\newcommand{\dtype}[1]{RF{#1}\xspace}
\newcommand{\skel}{sk}
\newcommand{\pert}[1]{\ensuremath{H_{#1}}}
\newcommand{\rest}[1]{\ensuremath{\overline{H}_{#1}}}
\newcommand{\cpllong}{Clustered Planarity with Linear Saturators\xspace}
\newcommand{\cpl}{{\sc cpls}\xspace}
\newif\ificons
\newcommand{\BF}{\ificons\raisebox{-2pt}{\includegraphics[page=17,scale=0.35]{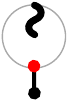}}\hspace{1pt}\else$\square$\fi}
\newcommand{\BE}{\ificons\raisebox{-2pt}{\includegraphics[page=1,scale=0.35]{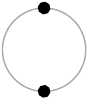}}\hspace{1pt}\else$\square$\fi}
\renewcommand{\BP}{\ificons\raisebox{-2pt}{\includegraphics[page=1,scale=0.35]{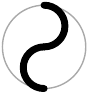}}\hspace{1pt}\else$\square$\fi}
\newcommand{\BPi}{\ificons\raisebox{-2pt}{\includegraphics[page=1,scale=0.35]{figs/icons-bp.pdf}}\hspace{1pt}\else$\square$\fi}
\newcommand{\BPii}{\ificons\raisebox{-2pt}{\includegraphics[page=2,scale=0.35]{figs/icons-bp.pdf}}\hspace{1pt}\else$\square$\fi}
\newcommand{\BPiii}{\ificons\raisebox{-2pt}{\includegraphics[page=3,scale=0.35]{figs/icons-bp.pdf}}\hspace{1pt}\else$\square$\fi}
\newcommand{\BPiv}{\ificons\raisebox{-2pt}{\includegraphics[page=4,scale=0.35]{figs/icons-bp.pdf}}\hspace{1pt}\else$\square$\fi}
\newcommand{\BPv}{\ificons\raisebox{-2pt}{\includegraphics[page=5,scale=0.35]{figs/icons-bp.pdf}}\hspace{1pt}\else$\square$\fi}
\renewcommand{\RE}{\ificons\raisebox{-2pt}{\includegraphics[page=1,scale=0.35]{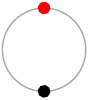}}\hspace{1pt}\else$\square$\fi}
\newcommand{\RF}{\ificons\raisebox{-2pt}{\includegraphics[page=1,scale=0.35]{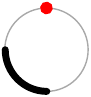}}\hspace{1pt}\else$\square$\fi}
\newcommand{\RFNz}{\ificons\raisebox{-2pt}{\includegraphics[page=1,scale=0.35]{figs/icons-rf.pdf}}\hspace{1pt}\else$\square$\fi}
\newcommand{\RFNza}{\ificons\raisebox{-2pt}{\includegraphics[page=16,scale=0.35]{figs/icons-rf.pdf}}\hspace{1pt}\else$\square$\fi}
\newcommand{\RFNzb}{\ificons\raisebox{-2pt}{\includegraphics[page=1,scale=0.35]{figs/icons-rf.pdf}}\hspace{1pt}\else$\square$\fi}
\newcommand{\RFNi}{\ificons\raisebox{-2pt}{\includegraphics[page=2,scale=0.35]{figs/icons-rf.pdf}}\hspace{1pt}\else$\square$\fi}
\newcommand{\RFNia}{\ificons\raisebox{-2pt}{\includegraphics[page=17,scale=0.35]{figs/icons-rf.pdf}}\hspace{1pt}\else$\square$\fi}
\newcommand{\RFNib}{\ificons\raisebox{-2pt}{\includegraphics[page=18,scale=0.35]{figs/icons-rf.pdf}}\hspace{1pt}\else$\square$\fi}
\newcommand{\RFNic}{\ificons\raisebox{-2pt}{\includegraphics[page=20,scale=0.35]{figs/icons-rf.pdf}}\hspace{1pt}\else$\square$\fi}
\newcommand{\RFNid}{\ificons\raisebox{-2pt}{\includegraphics[page=19,scale=0.35]{figs/icons-rf.pdf}}\hspace{1pt}\else$\square$\fi}
\newcommand{\RFNii}{\ificons\raisebox{-2pt}{\includegraphics[page=21,scale=0.35]{figs/icons-rf.pdf}}\hspace{1pt}\else$\square$\fi}
\newcommand{\RFIz}{\ificons\raisebox{-2pt}{\includegraphics[page=11,scale=0.35]{figs/icons-rf.pdf}}\hspace{1pt}\else$\square$\fi}
\newcommand{\RFIia}{\ificons\raisebox{-2pt}{\includegraphics[page=13,scale=0.35]{figs/icons-rf.pdf}}\hspace{1pt}\else$\square$\fi}
\newcommand{\RFIib}{\ificons\raisebox{-2pt}{\includegraphics[page=12,scale=0.35]{figs/icons-rf.pdf}}\hspace{1pt}\else$\square$\fi}
\newcommand{\RFIi}{\ificons\raisebox{-2pt}{\includegraphics[page=8,scale=0.35]{figs/icons-rf.pdf}}\hspace{1pt}\else$\square$\fi}
\newcommand{\RFIii}{\ificons\raisebox{-2pt}{\includegraphics[page=10,scale=0.35]{figs/icons-rf.pdf}}\hspace{1pt}\else$\square$\fi}
\newcommand{\BFNz}{\ificons\raisebox{-2pt}{\includegraphics[page=1,scale=0.35]{figs/icons-bf.pdf}}\hspace{1pt}\else$\square$\fi}
\newcommand{\BFNza}{\ificons\raisebox{-2pt}{\includegraphics[page=18,scale=0.35]{figs/icons-bf.pdf}}\hspace{1pt}\else$\square$\fi}
\newcommand{\BFNzb}{\ificons\raisebox{-2pt}{\includegraphics[page=19,scale=0.35]{figs/icons-bf.pdf}}\hspace{1pt}\else$\square$\fi}
\newcommand{\BFNi}{\ificons\raisebox{-2pt}{\includegraphics[page=2,scale=0.35]{figs/icons-bf.pdf}}\hspace{1pt}\else$\square$\fi}
\newcommand{\BFNia}{\ificons\raisebox{-2pt}{\includegraphics[page=2,scale=0.35]{figs/icons-bf.pdf}}\hspace{1pt}\else$\square$\fi}
\newcommand{\BFNib}{\ificons\raisebox{-2pt}{\includegraphics[page=20,scale=0.35]{figs/icons-bf.pdf}}\hspace{1pt}\else$\square$\fi}
\newcommand{\BFNic}{\ificons\raisebox{-2pt}{\includegraphics[page=21,scale=0.35]{figs/icons-bf.pdf}}\hspace{1pt}\else$\square$\fi}
\newcommand{\BFNid}{\ificons\raisebox{-2pt}{\includegraphics[page=22,scale=0.35]{figs/icons-bf.pdf}}\hspace{1pt}\else$\square$\fi}
\newcommand{\BFNii}{\ificons\raisebox{-2pt}{\includegraphics[page=3,scale=0.35]{figs/icons-bf.pdf}}\hspace{1pt}\else$\square$\fi}
\newcommand{\BFIza}{\ificons\raisebox{-2pt}{\includegraphics[page=4,scale=0.35]{figs/icons-bf.pdf}}\hspace{1pt}\else$\square$\fi}
\newcommand{\BFIzb}{\ificons\raisebox{-2pt}{\includegraphics[page=5,scale=0.35]{figs/icons-bf.pdf}}\hspace{1pt}\else$\square$\fi}
\newcommand{\BFIia}{\ificons\raisebox{-2pt}{\includegraphics[page=23,scale=0.35]{figs/icons-bf.pdf}}\hspace{1pt}\else$\square$\fi}
\newcommand{\BFIib}{\ificons\raisebox{-2pt}{\includegraphics[page=24,scale=0.35]{figs/icons-bf.pdf}}\hspace{1pt}\else$\square$\fi}
\newcommand{\BFIic}{\ificons\raisebox{-2pt}{\includegraphics[page=25,scale=0.35]{figs/icons-bf.pdf}}\hspace{1pt}\else$\square$\fi}
\newcommand{\BFIid}{\ificons\raisebox{-2pt}{\includegraphics[page=26,scale=0.35]{figs/icons-bf.pdf}}\hspace{1pt}\else$\square$\fi}
\newcommand{\BFIie}{\ificons\raisebox{-2pt}{\includegraphics[page=27,scale=0.35]{figs/icons-bf.pdf}}\hspace{1pt}\else$\square$\fi}
\newcommand{\BFIif}{\ificons\raisebox{-2pt}{\includegraphics[page=29,scale=0.35]{figs/icons-bf.pdf}}\hspace{1pt}\else$\square$\fi}
\newcommand{\BFIig}{\ificons\raisebox{-2pt}{\includegraphics[page=28,scale=0.35]{figs/icons-bf.pdf}}\hspace{1pt}\else$\square$\fi}
\newcommand{\BFIiia}{\ificons\raisebox{-2pt}{\includegraphics[page=11,scale=0.35]{figs/icons-bf.pdf}}\hspace{1pt}\else$\square$\fi}
\newcommand{\BFIiib}{\ificons\raisebox{-2pt}{\includegraphics[page=12,scale=0.35]{figs/icons-bf.pdf}}\hspace{1pt}\else$\square$\fi}
\newcommand{\BFIz}{\ificons\raisebox{-2pt}{\includegraphics[page=14,scale=0.35]{figs/icons-bf.pdf}}\hspace{1pt}\else$\square$\fi}
\newcommand{\BFIi}{\ificons\raisebox{-2pt}{\includegraphics[page=15,scale=0.35]{figs/icons-bf.pdf}}\hspace{1pt}\else$\square$\fi}
\newcommand{\BFIii}{\ificons\raisebox{-2pt}{\includegraphics[page=16,scale=0.35]{figs/icons-bf.pdf}}\hspace{1pt}\else$\square$\fi}
\newcommand{\BB}{\ificons\raisebox{-2pt}{\includegraphics[page=1,scale=0.35]{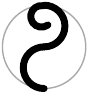}}\hspace{1pt}\else$\square$\fi}
\newcommand{\BBii}{\ificons\raisebox{-2pt}{\includegraphics[page=2,scale=0.35]{figs/icons-bb.pdf}}\hspace{1pt}\else$\square$\fi}
\newcommand{\BBiii}{\ificons\raisebox{-2pt}{\includegraphics[page=3,scale=0.35]{figs/icons-bb.pdf}}\hspace{1pt}\else$\square$\fi}
\newcommand{\BBiv}{\ificons\raisebox{-2pt}{\includegraphics[page=4,scale=0.35]{figs/icons-bb.pdf}}\hspace{1pt}\else$\square$\fi}
\newcommand{\BBv}{\ificons\raisebox{-2pt}{\includegraphics[page=5,scale=0.35]{figs/icons-bb.pdf}}\hspace{1pt}\else$\square$\fi}
\author{
Patrizio Angelini\\
{\em John Cabot University, Rome}\\
\href{mailto:pangelini@johncabot.edu}{pangelini@johncabot.edu}
\medskip\\
Giordano {Da Lozzo}, Giuseppe {Di Battista}, Fabrizio {Frati}, and Maurizio {Patrignani}\\
{\em Roma Tre University, Rome, Italy}\\
\href{mailto:giordano.dalozzo@uniroma3.it,giuseppe.dibattista@uniroma3.it,fabrizio.frati@uniroma3.it,maurizio.patrignani@uniroma3.it}{
\{giordano.dalozzo,giuseppe.dibattista,fabrizio.frati,maurizio.patrignani\}@uniroma3.it}
}
\title{2-Level Quasi-Planarity\\or How Caterpillars Climb (SPQR-)Trees}
\begin{document}

\maketitle

\begin{abstract}
Given a bipartite graph $G=(V_b,V_r,E)$, the {\sc $2$-Level Quasi-Planarity} problem asks for the existence of a drawing of $G$ in the plane such that the vertices in $V_b$ and in $V_r$ lie along two parallel lines $\ell_b$ and $\ell_r$, respectively, each edge in $E$ is drawn in the unbounded strip of the plane delimited by $\ell_b$ and $\ell_r$, and no three edges in $E$ pairwise cross.

We prove that the {\sc $2$-Level Quasi-Planarity} problem is \NPC. This answers an open question of Dujmovi\'c, P{\'{o}}r, and Wood. Furthermore, we show that the problem becomes linear-time solvable if the ordering of the vertices in $V_b$ along $\ell_b$ is prescribed. 
Our contributions provide the first results on the computational complexity of recognizing quasi-planar graphs, which is a long-standing open question. 

Our linear-time algorithm exploits several ingredients, including a combinatorial characterization of the positive instances of the problem in terms of the existence of a planar embedding with a caterpillar-like structure, and an SPQR-tree-based algorithm for testing the existence of such a planar embedding. Our algorithm builds upon a classification of the types of embeddings with respect to the structure of the portion of the caterpillar they contain and performs a computation of the realizable embedding types based on a succinct description of their features by means of constant-size gadgets.

\end{abstract}

\clearpage
\vspace{-15mm}
\setcounter{tocdepth}{3}
\tableofcontents

\clearpage


\section{Introduction}

Planarity is a key concept in graph theory~\cite{bm-gtwa-76,DETT99,d-gt-97,nc-pgta-88,nr-pgd-lnsc-04,w-igt-01}. It has long been known that the planarity of a graph can be tested in polynomial time~\cite{ap-igs-61,b-dtp-64} and, in fact, even in linear time~\cite{ht-ept-74}. It was at first surprising, and it appears now evident, that testing whether a graph is ``almost'' planar is difficult. For example, it is \NP-hard to test whether a graph is \emph{$k$-planar}~\cite{cm-cnwc-11,km-mo1p-13}, i.e., whether it admits a drawing in which each edge has at most $k$ crossings, even for $k=1$. It is also \NP-hard to recognize \emph{$k$-apex} graphs~\cite{ly-ndphp-11} or \emph{$k$-skewness} graphs~\cite{lg-dneg-79}, i.e., graphs that become planar if $k$ vertices or edges are allowed to be removed, respectively. Further, it is \NP-hard to decide whether a graph that consists of a planar graph plus a single edge admits a drawing with at most $k$ crossings~\cite{cm-aoe-13}. See~\cite{dlm-sgdbp-19} for a survey on the variants of ``almost'' planarity. 

There is one notorious notion of ``almost'' planarity that has so far eluded the efforts to establish its computational complexity; this is called \emph{quasi-planarity}. A graph is \emph{quasi-planar} if it admits a \emph{quasi-planar drawing}, i.e., a drawing in which no three edges pairwise cross. Quasi-planar graphs have been studied thoroughly. For example, it is known that an $n$-vertex quasi-planar graph has at most $8n-O(1)$ edges and some quasi-planar graphs have $7n-O(1)$ edges~\cite{at-mneqpg-07}; therefore, the graphs that can be represented with no three pairwise crossing edges can be much denser than those that can be represented planarly. Further, the class of quasi-planar graphs is known to include the one of $2$-planar graphs~\cite{DBLP:conf/wg/AngeliniBBLBDLM17}.
Despite these combinatorial results, from an algorithmic perspective neither efficient algorithms nor hardness results are known (see, e.g.,~\cite[Problem 5]{dlm-sgdbp-19}), even in the generalization in which no $k$ edges are allowed to cross.

In this paper, we show the first complexity results on the problem of recognizing quasi-planar graphs. Given a bipartite graph $G=(V_b,V_r,E)$, where the vertices in $V_b$ and $V_r$ are called \emph{black} and \emph{red}, respectively, the {\sc $2$-Level Quasi-Planarity} problem asks for the existence of a \emph{$2$-level quasi-planar drawing} of $G$, that is, a quasi-planar drawing such that the black and red vertices lie along two parallel lines $\ell_b$ and $\ell_r$, respectively, and each edge in $E$ is drawn in the unbounded strip of the plane delimited by $\ell_b$ and $\ell_r$. We show that the {\sc $2$-Level Quasi-Planarity} problem is \NPC. Further, we show how to solve the problem in linear time if the order of the black vertices along $\ell_b$ is part of the input; this version of the problem is called {\sc $2$-Level Quasi-Planarity with Fixed Order}. 

The study of $2$-level drawings of bipartite graphs is a classical research topic. A bipartite graph admits a $2$-level planar drawing if and only if it is a forest of caterpillars~\cite{emw-oacp-86,hs-ncn-72}, where a \emph{caterpillar} is a tree whose non-leaf nodes induce a path, called \emph{backbone}. The problems of constructing $2$-level $k$-planar drawings, $2$-level RAC drawings, and $2$-level fan-planar drawings have been studied in~\cite{adfs-2lkp-20,DBLP:journals/corr/abs-2002-09597,DBLP:journals/jgaa/BinucciCDGKKMT17,d-dsp-13,DBLP:journals/algorithmica/GiacomoDEL14}. Further, minimizing the number of crossings in a $2$-level drawing is an \NPC~problem (for a discussion on the matter, see~\cite{ms-gcn-13}), even if the order of the vertices in one level is prescribed~\cite{emw-oacp-86,ew-ecdbg-94}; the fact that the problem's complexity remains unchanged even if the order of the vertices in one level is prescribed contrasts with what we show to happen for $2$-level quasi-planarity. Many approximation algorithms and heuristics for the crossing-minimization problem for $2$-level drawings have also been designed; see, e.g.,~\cite{ew-ecdbg-94,jm-scm-97,pllmd-amp-20}. These are commonly used as building blocks in the notorious Sugiyama framework~\cite{stt-mvuhs-81}, that inspired a substantial number of graph drawing algorithms; see, e.g.,~\cite{bm-ldd-01,DETT99}. In the Sugiyama framework, one is interested in the construction of a drawing on more than $2$ levels; this is done by fixing the ordering of the vertices on the first level and by then repeatedly solving a $2$-level drawing problem to establish the order of the vertices on a level with an already fixed order of the vertices on the previous level. A possible practical application of our linear-time algorithm is to 
be employed as the $2$-level drawing algorithm within the Sugiyama framework (only for pairs of levels that admit a quasi-planar drawing). 

To obtain our results, we leverage on the equivalence of the {\sc $2$-Level Quasi-Planarity} problem with the \btpbeP problem~\cite{DBLP:journals/jgaa/AngeliniLBFPR17}. Given a bipartite graph $G$, a \emph{bipartite $2$-page book embedding} of $G$ is a planar drawing of $G$ in which the vertices are placed along a Jordan curve $\ell$, called \emph{spine}, the black vertices occur consecutively along $\ell$ (and, thus, so do the red vertices), and each edge is entirely drawn in one of the two regions of the plane, called \emph{pages}, delimited by $\ell$. The \btpbeP problem asks whether a bipartite $2$-page book embedding exists for a given bipartite graph. The equivalence between the two problems descends from the equivalence of both problems with the {\sc $(2,2)$-Track Graph Recognition} problem, which asks, for a given bipartite graph $G$, whether a $2$-level drawing of $G$ and a $2$-coloring of the edges of $G$ exist such that no two edges of $G$ with the same color cross. These equivalences easily follow from results by Dujmovi\'c, P{\'{o}}r, and Wood~\cite{dpw-tlg-04,dw-sqtlg-05}. See \cref{fig:problems}. We prove that the \btpbeP problem is \NPC, which implies that the {\sc $2$-Level Quasi-Planarity} and the {\sc $(2,2)$-Track Graph Recognition} problems are also \NPC. The latter result solves an open question by Dujmovi\'c et al.~\cite{dpw-tlg-04}.  

\begin{figure}[tb!]
	\centering
	\subfloat[]{
		\includegraphics[width=.27\textwidth,page=1]{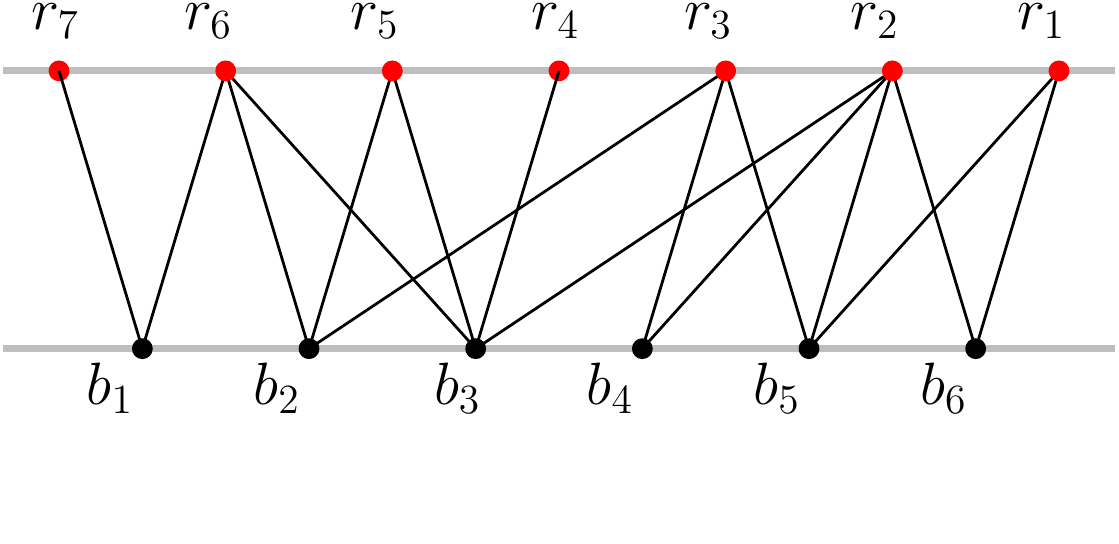} 
		\label{fig:problems-quasi}
	}\hfil
	\subfloat[]{
		\includegraphics[width=.27\textwidth,page=2]{problemi} 
		\label{fig:problems-track}
	}\hfil
	\subfloat[]{
		\includegraphics[width=.35\textwidth,page=3]{problemi} 
		\label{fig:probelms-2-page}
	}
	\caption{A bipartite graph $G$ and (a) a $2$-level quasi-planar drawing of $G$, (b) a $(2,2)$-track layout of $G$, and (c) a bipartite $2$-page book embedding of $G$.}
	\label{fig:problems}
\end{figure}

 
The linear-time testing algorithm for {\sc $2$-Level Quasi-Planarity with Fixed Order} is obtained by studying the \btpbefP problem. Given a bipartite graph $G$ and a total order $\pi_b$ of its black vertices, the problem asks whether $G$ admits a bipartite $2$-page book embedding in which the black vertices appear (consecutively) along the spine in the order $\pi_b$. We show a linear-time algorithm to solve the \btpbefP problem. The outline of the algorithm is as follows.

First, we augment $G$ with a path $P$ connecting the black vertices in the order $\pi_b$; let $H$ be the resulting graph. The basis of our algorithm is a structural characterization of the planar embeddings of $H$ that allow for a bipartite $2$-page book embedding with fixed order of $\langle G,\pi_b\rangle$; these are called \emph{good embeddings}. Namely, for a given planar embedding $\cal E$ of $H$, we construct an auxiliary graph $A(\mathcal E)$ whose vertex set consists of the red vertices of $H$ and of the \emph{red faces} of $\cal E$, which are those faces incident to at least two red vertices, and whose edges connect red vertices to their incident red faces. We show that $\langle G,\pi_b\rangle$ admits a bipartite $2$-page book embedding with fixed order in which the planar embedding of $H$ is $\mathcal E$ if and only if $A(\mathcal E)$ is a caterpillar whose backbone starts and ends ``close'' to the end-vertices of $P$; when this happens, $\mathcal E$ is a good embedding. Our problem now becomes the one of testing whether $H$ admits a good embedding. It is interesting that caterpillars, that characterize $2$-level planarity, also show up, coincidentally or not, in our characterization of $2$-level quasi-planarity. 

Second, we show that the problem of testing whether $H$ admits a good embedding can be solved independently for each \emph{$rb$-augmented component} of $H$; this is a maximal biconnected component of $H$ together with edges connecting its black vertices with degree-$1$ red vertices. 

Third, we consider an $rb$-augmented component, which for simplicity we denote again by $H$. We decompose $H$ along its separating pairs of vertices. For each separation pair $\{u,v\}$, we classify the subgraphs separated by $\{u,v\}$, usually called \emph{triconnected components}, according to the color of $u$ and $v$, and according to the portion of $P$ they contain. We also classify the types of embeddings of the triconnected components according to several features, related to the structure of the subgraph of the caterpillar $A(\mathcal E)$ they contain. We then show how to test whether an embedding type is realizable by a graph, based on the embedding types that are realizable by its triconnected components. We represent the decomposition of $H$ into its triconnected components via an SPQR-tree $\cal T$~\cite{dt-opl-96}. Our algorithm performs a bottom-up traversal of $\cal T$, while computing, at each node $\mu$ of $\cal T$, the set of realizable embedding types for the triconnected component associated with $\mu$. This is done using dynamic programming, by combining the embedding types that can be realized by the triconnected components associated with the children of $\mu$. Since there might be exponentially-many possible combinations, we need to argue that it suffices to consider ``few'' of them, without losing any realizable type of embedding. In order to achieve linear running time, with a methodology that resembles Thevenin's theorem~\cite{t-83}, which represents an arbitrarily complex electrical circuit with an equivalent circuit consisting only of a resistance and a source voltage, we represent an embedding of a triconnected component with a constant-size gadget that has the same embedding type as the triconnected component of the embedding it substitutes. 

Our algorithm is constructive and returns a bipartite $2$-page book embedding (which can also be easily transformed into a $2$-level quasi-planar drawing) of $\langle G,\pi_b\rangle$, if it exists.

The paper is organized as follows. In \cref{se:preliminaries} we introduce some preliminaries. In \cref{se:fixed-embedding} we present the characterization for graphs with a fixed planar embedding. In \cref{se:simply} we reduce the problem to $rb$-augmented components. In \cref{se:biconnected} we classify the types of nodes of the SPQR-tree and the types of embeddings of the triconnected components associated with such nodes, and we present related structural results. In \cref{se:spqr-tree}  we present our linear-time algorithm. Finally, in \cref{se:conclusions}, we conclude and present some open problems.

\section{Preliminaries and Relationships with Other Problems} \label{se:preliminaries}
In the paper we denote the vertex and edge sets of a graph $G$ by $V(G)$ and $E(G)$, respectively. For simplicity, we denote $|G|:=|V(G)|+|E(G)|$; note that, whenever $G$ is planar, we have $|G|\in O(|V(G)|)$. A \emph{drawing} of a graph maps each vertex to a point in the plane and each edge to a Jordan arc between its end-vertices. A drawing is \emph{planar} if no two edges cross. A graph is \emph{planar} if it admits a planar drawing. A planar drawing partitions the plane into connected regions, called \emph{faces}. The unbounded face is the \emph{outer face}, while all the other faces are \emph{internal}. Two planar drawings of a connected planar graph are \emph{equivalent} if the clockwise order of the edges incident to each vertex is the same in both drawings. An equivalence class of planar drawings is called a \emph{planar embedding} or, sometimes, just \emph{embedding}.  We often talk about planar embeddings as if they were actual planar drawings; when this happens, we are referring to any planar drawing within the equivalence class. This happens frequently in this paper because the problems we study are topological and the actual geometry of the drawings does not matter. For example, we often talk about a \emph{face of a planar embedding}, meaning a face in any planar drawing within that equivalence class. In a planar embedding, an \emph{internal vertex} is not incident to the outer face.

\subsection{Equivalence Between $2$-Level Quasi-Planarity and Related Problems}

We observe that the {\sc $2$-Level Quasi-Planarity} problem is linear-time equivalent to two problems in the area of linear layouts, namely the problem of recognizing graphs that can be drawn on two tracks, called {\sc $(2,2)$-Track Graph Recognition} problem~\cite{dpw-tlg-04}, and a variant of the $2$-page book embedding problem for bipartite graphs, called \btpbeP problem~\cite{DBLP:journals/jgaa/AngeliniLBFPR17}. Using this equivalence, by proving that \btpbeP is \NP-complete, we obtain analogous results for the {\sc $2$-Level Quasi-Planarity} and the {\sc $(2,2)$-Track Graph Recognition} problems.

We start by formally defining the above mentioned problems and further related problems. 

A \emph{$2$-page book embedding} of a planar graph $H$ is a planar drawing of $H$ in which the vertices are placed along a Jordan curve $\ell$, called \emph{spine}, and each edge is entirely drawn in one of the two regions of the plane delimited by $\ell$, which we call \emph{pages}. The {\sc $2$-Page Book Embedding} problem asks whether a $2$-page book embedding exists for a given graph. 

Now, consider a bipartite graph $G=(V_b,V_r,E)$. The vertices in $V_b$ are \emph{black} vertices and those in $V_r$ are \emph{red} vertices. A \emph{bipartite $2$-page book embedding} of $G$ is a $2$-page book embedding such that all the vertices in $V_b$ occur consecutively along the spine (and thus all the vertices in $V_r$ occur consecutively as well). We call the corresponding decision problem \btpbeP. For simplicity, we use \btpbe as an abbreviation for both a bipartite $2$-page book embedding and for the corresponding decision problem.

We will also consider a version of the problem in which, together with a bipartite graph $G=(V_b,V_r,E)$, the input also contains a total order $\pi_b$ of the vertices in $V_b$. The question is then whether $G$ admits a \btpbe in which the vertices in $V_b$ appear (consecutively) along the spine in the order $\pi_b$. We call the corresponding decision problem \btpbefP.
For simplicity, we use \btpbef as an abbreviation for both a bipartite $2$-page book embedding with fixed order and for the corresponding decision problem.

A \emph{$(k,t)$-track layout} $\langle \gamma,\Sigma,\omega\rangle$ of a graph $G$ consists of a proper vertex $t$-coloring $\gamma: V(G) \rightarrow \{1,2,\dots,t\}$ of $G$, of total orders $\Sigma=\langle\xi_1,\xi_2,\dots,\xi_t\rangle$ for the vertices in each color class, and of an edge $k$-coloring $\omega: E(G) \rightarrow \{1,2,\dots,k\}$ such that there exist no two edges $e'=(u,v)$ and $e''=(w,z)$ with $\gamma(u)=\gamma(w)$, $\gamma(v)=\gamma(z)$, $\omega(e')=\omega(e'')$, $u \prec_{\xi_{C(u)}} w$, and $z \prec_{\xi_{C(z)}} v$.
A graph is a \emph{$(k,t)$-track graph} if it admits a $(k,t)$-track layout.
The {\sc $(k,t)$-Track Graph Recognition} problem takes as input a graph and asks whether it is a $(k,t)$-track graph~\cite{dpw-tlg-04}. 

%
%
%

The following lemmata were proved\footnote{In~\cite{dpw-tlg-04,dw-sqtlg-05} more general versions of the results we state here are actually proved, dealing with $(k,t)$-track layouts and $k$-page book embeddings, where $k$ and $t$ might be larger than $2$.} by Dujmovi\'c, P{\'{o}}r, and Wood~\cite{dpw-tlg-04,dw-sqtlg-05}. Refer to \cref{fig:problems}.

\begin{lemma}\label{le:structural-equivalence-1}~\cite[Lemma 2]{dpw-tlg-04}
Let $G=(V_b,V_r,E)$ be a bipartite graph and let $\xi_1$ and $\xi_2$ be total orders of $V_b$ and $V_r$, respectively. The following statements are equivalent.
\begin{itemize}
\item The $2$-level drawing in which the vertices in $V_b$ lie along a horizontal line $\ell_b$ in left-to-right order $\xi_1$, the vertices in $V_r$ lie along a distinct horizontal line $\ell_r$ in left-to-right order $\xi_2$, and the edges in $E$ are straight-line segments is quasi-planar. 
\item There exists an edge $2$-coloring $\omega: E \rightarrow \{1,2\}$ such that $\langle \gamma,\langle\xi_1,\xi_2\rangle,\omega\rangle$ is a $(2,2)$-track layout of $G$, where $\gamma$ is the vertex $2$-coloring of $G$ that defines the color classes $V_b$ and $V_r$.
\end{itemize}
\end{lemma}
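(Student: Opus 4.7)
The plan is to reduce both statements to a single combinatorial condition on the \emph{crossing graph} $X$ of the drawing and then apply Dilworth's theorem.

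First, I would use the standard geometric fact that two straight-line segments joining distinct points on $\ell_b$ to distinct points on $\ell_r$ cross in their interiors iff the relative orders of their endpoints on $\ell_b$ and $\ell_r$ are inverted; concretely, edges $(u,v)$ and $(w,z)$ of $G$ cross in the drawing of the first statement iff either $u \prec_{\xi_1} w$ and $z \prec_{\xi_2} v$, or the symmetric condition holds. This is precisely the configuration forbidden between two monochromatic edges in the definition of a $(2,2)$-track layout $\langle \gamma,\langle\xi_1,\xi_2\rangle,\omega\rangle$, where $\gamma$ is the bipartition coloring. Hence both statements can be rephrased in terms of the graph $X$ on vertex set $E$ whose adjacencies are the crossing pairs: the first statement asserts that $X$ is triangle-free, and the second that $X$ is bipartite.

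The direction from the second statement to the first is then immediate by pigeonhole: among three pairwise crossing edges, some two would share a color under $\omega$, contradicting that $\omega$ properly $2$-colors $X$.

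For the converse, the key step is to recognize $X$ as the incomparability graph of a $2$-dimensional partial order on $E$, defined by $(u,v) \preceq (w,z)$ iff $u \preceq_{\xi_1} w$ and $v \preceq_{\xi_2} z$. Using the crossing characterization above, together with a short case check for edges sharing an endpoint (which are always both comparable in the poset and non-crossing in the drawing), two distinct edges turn out to be comparable iff they do not cross. Thus antichains of the poset correspond exactly to cliques of $X$, and the triangle-freeness of $X$ translates to the poset having width at most $2$. Dilworth's theorem then partitions $E$ into at most two chains, and assigning a distinct value of $\omega$ to each chain yields the required edge $2$-coloring, certifying that $\langle \gamma,\langle\xi_1,\xi_2\rangle,\omega\rangle$ is a $(2,2)$-track layout. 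The only delicate point in the whole argument is the handling of edges sharing an endpoint in the equivalence between comparability and non-crossing; once that is verified, the remainder is routine.
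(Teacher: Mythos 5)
Your proof is correct: the reduction of both statements to properties of the crossing graph, the observation that non-crossing (together with the shared-endpoint case) is exactly comparability in the product order of $\xi_1$ and $\xi_2$, and the application of Dilworth's theorem to the width-$2$ poset are all sound. Note that the paper does not prove this lemma itself but cites Dujmovi\'c, P\'or, and Wood, and your argument is essentially the standard one underlying that citation, so there is nothing further to reconcile.
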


\begin{lemma}\label{le:structural-equivalence-2}~\cite[Lemma 1]{dw-sqtlg-05}
	Let $G=(V_b,V_r,E)$ be a bipartite graph, let $\xi_1$ and $\xi_2$ be total orders of $V_b$ and $V_r$, respectively, and let $\omega: E \rightarrow \{1,2\}$ be an edge $2$-coloring of $G$. The following statements are equivalent.
	\begin{itemize}
		\item The total order $\xi_1 \circ\overleftarrow{\xi_2}$ of the vertices of $G$ and the page assignment $\omega$ for the edges of $G$ define a bipartite $2$-page book embedding, where $\overleftarrow{\xi_2}$ is the reverse order with respect to $\xi_2$.
		\item $\langle \gamma,\langle\xi_1,\xi_2\rangle,\omega\rangle$ is a $(2,2)$-track layout of $G$, where $\gamma$ is the vertex $2$-coloring of $G$ that defines the color classes $V_b$ and $V_r$.
	\end{itemize}
\end{lemma}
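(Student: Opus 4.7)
The plan is to reduce both statements to the same combinatorial assertion about pairs of edges. Concretely, I would show that, under the spine order $\xi_1 \circ \overleftarrow{\xi_2}$ and page assignment $\omega$, two edges lie on the same page and cross if and only if they form the quadruple forbidden by the $(2,2)$-track-layout definition; with this established, both statements become the same non-existence condition.

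The first step is to make the spine explicit: along $\xi_1 \circ \overleftarrow{\xi_2}$ the black vertices occupy the left block of the spine in the order $\xi_1$, while the red vertices occupy the right block in the order opposite to $\xi_2$. In particular, the two color classes form consecutive blocks, so the \emph{bipartite} requirement of the book embedding is automatically met, and what remains to be checked is planarity within each page.

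The second step is to apply the standard characterization that two edges on the same page of a $2$-page book embedding cross if and only if their endpoints alternate along the spine. For a pair of bipartite edges $e' = (u, v)$ and $e'' = (w, z)$ with $u, w \in V_b$ and $v, z \in V_r$, the separation of colors on the spine eliminates every alternation pattern in which a red endpoint would lie between two black endpoints. Up to swapping the roles of $e'$ and $e''$, the only remaining alternation reduces to $u \prec_{\xi_1} w$ together with the condition that $v$ lies strictly to the right of $z$ on the spine; since $\overleftarrow{\xi_2}$ reverses $\xi_2$, this rightward condition on the red side is exactly $z \prec_{\xi_2} v$.

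Finally, I would match this with the definition of a $(2,2)$-track layout, taking $\gamma$ to be the bipartition coloring. The four constraints $\gamma(u) = \gamma(w)$, $\gamma(v) = \gamma(z)$, $u \prec_{\xi_{C(u)}} w$, and $z \prec_{\xi_{C(z)}} v$ single out precisely the crossing configuration derived above, while $\omega(e') = \omega(e'')$ is exactly ``same page''. Hence ``no two same-page edges cross'' and ``no forbidden quadruple exists'' are literally the same condition, so the two statements of the lemma are equivalent. The only delicate point is the bookkeeping introduced by the reversal $\overleftarrow{\xi_2}$, but this is a one-line observation and does not pose a real obstacle.
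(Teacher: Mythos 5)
Your approach is correct, but note that there is nothing in the paper to compare it against: the paper does not prove this lemma, it imports it verbatim from Dujmovi\'c, P\'or, and Wood (cited as Lemma~1 of the referenced work). What you supply is the natural self-contained argument, and almost certainly the same one as in the cited source: along $\xi_1 \circ \overleftarrow{\xi_2}$ the two color classes occupy consecutive blocks, so the ``bipartite'' requirement of the book embedding is automatic, and the classical criterion that two same-page edges cross if and only if their endpoints alternate along the spine reduces page-planarity exactly to the absence of the quadruple forbidden in the $(2,2)$-track definition, with $\omega(e')=\omega(e'')$ playing the role of ``same page''. Since this reduction is an equivalence at the level of edge pairs, both implications of the lemma follow at once, so the plan is complete.

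One bookkeeping detail is stated backwards, and since you yourself flag the reversal as the only delicate point, you should get it right: with the black block first (in the order $\xi_1$) and the red block second (in the order $\overleftarrow{\xi_2}$), two edges $(u,v)$ and $(w,z)$ on the same page with $u \prec_{\xi_1} w$ cross precisely when the spine order of the four endpoints is $u,w,v,z$, i.e.\ when $v$ lies to the \emph{left} of $z$ within the red block, not to the right as you wrote. It is this leftward position that, after undoing the reversal, becomes $z \prec_{\xi_2} v$ and matches the track-layout condition; as written, your intermediate claim (``$v$ strictly to the right of $z$'') contradicts your correct final conclusion $z \prec_{\xi_2} v$. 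A one-word fix repairs the argument.
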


From~\cref{le:structural-equivalence-1,le:structural-equivalence-2} we get the following corollaries.

\begin{corollary}\label{cor:equivalence-variable}
	The following problems are linear-time equivalent:
	\begin{inparaenum}[(i)]
		\item\label{i:quasi} {\sc $2$-Level Quasi-Planarity},
		\item\label{i:btpbe} \btpbeP, and 
		\item\label{i:track} {\sc $(2,2)$-Track Graph Recognition}.
	\end{inparaenum}
\end{corollary}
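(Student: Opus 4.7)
The plan is to derive the corollary by combining Lemmas~\ref{le:structural-equivalence-1} and~\ref{le:structural-equivalence-2}, viewing the $(2,2)$-Track Graph Recognition problem as a bridge between the other two. All three problems take the same bipartite graph $G=(V_b,V_r,E)$ as input, so each direction of the equivalence amounts to transforming a certificate (a 2-level quasi-planar drawing, a bipartite 2-page book embedding, or a $(2,2)$-track layout) into a certificate of another problem in $O(|G|)$ time. Throughout, let $\gamma$ denote the vertex $2$-coloring of $G$ induced by the bipartition $\{V_b,V_r\}$.

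For the equivalence between (i) and (iii), I would rely on Lemma~\ref{le:structural-equivalence-1}. The preliminary observation is that, in a 2-level drawing with edges drawn as Jordan arcs in the strip between $\ell_b$ and $\ell_r$, two edges cross if and only if their endpoints alternate along the two lines; thus quasi-planarity depends only on the left-to-right orders $\xi_1$ of $V_b$ along $\ell_b$ and $\xi_2$ of $V_r$ along $\ell_r$. Consequently, any 2-level quasi-planar drawing can be replaced by the straight-line drawing induced by $\xi_1$ and $\xi_2$ without changing the set of pairs of crossing edges. Hence the existence of a 2-level quasi-planar drawing is equivalent to the existence of orders $\xi_1,\xi_2$ that make the corresponding straight-line 2-level drawing quasi-planar which, by Lemma~\ref{le:structural-equivalence-1}, is equivalent to the existence of an edge $2$-coloring $\omega$ such that $\langle\gamma,\langle\xi_1,\xi_2\rangle,\omega\rangle$ is a $(2,2)$-track layout of $G$. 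Reading $\xi_1,\xi_2$ off a drawing and realizing a straight-line drawing from them are clearly linear-time operations.

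For the equivalence between (ii) and (iii), I would apply Lemma~\ref{le:structural-equivalence-2}. From a bipartite 2-page book embedding one reads off the left-to-right order $\xi_1$ of $V_b$ along the spine, the order $\xi_2$ of $V_r$ obtained by reversing the spine order of the red contiguous block, and the page assignment $\omega$; the lemma then certifies that $\langle\gamma,\langle\xi_1,\xi_2\rangle,\omega\rangle$ is a $(2,2)$-track layout of $G$. Conversely, from a $(2,2)$-track layout $\langle\gamma,\langle\xi_1,\xi_2\rangle,\omega\rangle$, the same lemma yields a bipartite 2-page book embedding whose spine order is $\xi_1\circ\overleftarrow{\xi_2}$ and whose page assignment is $\omega$. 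Each transformation scans the graph and its associated structure once and thus runs in $O(|G|)$ time. I expect no real obstacle here: the only subtle step is the topological straightening used in the first equivalence, which is a purely combinatorial observation about vertex orders rather than an algorithmic step.
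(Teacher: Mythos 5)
Your proposal is correct and follows essentially the same route as the paper, which states the corollary as an immediate consequence of \cref{le:structural-equivalence-1,le:structural-equivalence-2} with $(2,2)$-track layouts acting as the bridge and all transformations being linear-time. The only slight imprecision is the claim that in a Jordan-arc $2$-level drawing two edges cross if and only if their endpoints alternate (curved edges with non-alternating endpoints may still cross); what you actually need, and what does hold, is that straightening the edges can only remove crossings, so quasi-planarity is preserved when passing to the straight-line drawing with the same orders.
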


\begin{corollary}\label{cor:equivalence-fixed}
	The following problems are linear-time equivalent:
	\begin{inparaenum}[(i)]
		\item\label{i:quasi-fixed} {\sc $2$-Level Quasi-Planarity with Fixed Order},
		\item\label{i:btpbe-fixed} \btpbefP, and 
		\item\label{i:track-fixed} {\sc $(2,2)$-Track with Fixed Order Graph Recognition}.
	\end{inparaenum}
\end{corollary}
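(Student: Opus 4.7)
The plan is to derive \Cref{cor:equivalence-fixed} by re-running the argument of \Cref{cor:equivalence-variable}, but this time keeping one of the two orderings appearing in \Cref{le:structural-equivalence-1,le:structural-equivalence-2} pinned to the prescribed order $\pi_b$. The key observation is that both lemmata take $\xi_1$ and $\xi_2$ as parameters, so any restriction imposed on $\xi_1$ is preserved through the equivalence.

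First, I would apply \Cref{le:structural-equivalence-2} with $\xi_1 := \pi_b$. An instance $\langle G,\pi_b\rangle$ of \btpbefP admits a solution if and only if there exist a total order $\xi_2$ of $V_r$ and a $2$-coloring $\omega\colon E\to\{1,2\}$ of the edges such that the order $\pi_b\circ \overleftarrow{\xi_2}$ along the spine together with the page assignment $\omega$ defines a \btpbe of $G$. By the lemma this is equivalent to the existence of $\xi_2$ and $\omega$ such that $\langle \gamma,\langle \pi_b,\xi_2\rangle,\omega\rangle$ is a $(2,2)$-track layout of $G$, which is precisely a positive instance of {\sc $(2,2)$-Track with Fixed Order Graph Recognition} for the color class $V_b$ ordered by $\pi_b$.

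Second, I would apply \Cref{le:structural-equivalence-1} again with $\xi_1 := \pi_b$: the existence of $\xi_2$ and $\omega$ turning $\langle \gamma,\langle \pi_b,\xi_2\rangle,\omega\rangle$ into a $(2,2)$-track layout of $G$ is equivalent to the existence of a straight-line $2$-level drawing of $G$ that is quasi-planar and in which the vertices of $V_b$ appear along $\ell_b$ in the order $\pi_b$. Here one must observe that, between two parallel lines with both vertex orderings fixed, the pairs of crossing edges are determined solely by the combinatorial interleaving of their endpoints, and hence the restriction to straight-line edges is without loss of generality for deciding (quasi-)planarity; this yields a solution of {\sc $2$-Level Quasi-Planarity with Fixed Order} on $\langle G,\pi_b\rangle$, and conversely. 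Chaining these two biconditionals gives the claimed equivalence of the three problems.

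Third, I would check that every transformation used above is constructive and runs in linear time: passing from one formulation to another amounts to reinterpreting an edge $2$-coloring as a page assignment, concatenating the two orderings (possibly reversing one of them), and leaving the graph itself unchanged. Each step touches every vertex and edge a constant number of times. The only genuinely delicate point, which I expect to be the main obstacle in the write-up, is to verify that the prescribed order of $V_b$ survives unreversed through \Cref{le:structural-equivalence-2}: since the spine is a Jordan curve and the black vertices occupy a consecutive arc of it, one has to fix the traversal direction of the spine so that the black vertices read off in the order $\pi_b$ correspond to $\xi_1=\pi_b$ in the lemma (rather than to its reverse). With this orientation convention, the reductions in both directions preserve the fixed-order constraint verbatim.
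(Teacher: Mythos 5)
Your proposal is correct and matches the paper's approach: the paper derives \cref{cor:equivalence-fixed} directly from \cref{le:structural-equivalence-1,le:structural-equivalence-2} exactly as you do, by instantiating $\xi_1=\pi_b$ and noting that the translations (reinterpreting the edge $2$-coloring as a page assignment, concatenating $\pi_b$ with $\overleftarrow{\xi_2}$) are linear-time and preserve the prescribed order. Your extra remarks on the spine orientation and on straight-line edges being without loss of generality for fixed vertex orders are fine and only make explicit what the paper leaves implicit.
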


Although we did not introduce the {\sc $(2,2)$-Track with Fixed Order Graph Recognition} problem, its definition can be easily derived from the other problems presented in this section. 

In view of \cref{cor:equivalence-variable,cor:equivalence-fixed}, in the remainder of the paper we study the complexity of the {\sc $2$-Level Quasi-Planarity} and of {\sc $2$-Level Quasi-Planarity with Fixed Order} under the notation of the \btpbeP and \btpbefP problems, respectively. In the following we discuss properties concerning this problem and introduce further definitions.

\remove{
Next, we use \cref{le:quasiplanar-edge-partition} to show that $2$-level quasi-planar graphs are planar, and to provide a tight bound on their edge density.

\begin{theorem}\label{th:2-level-quasiplanar-density}
A $2$-level quasi-planar graph is planar and, as a consequence, has at most $2n-4$ edges, where $n$ is the number of its vertices. This bound is tight, as there exist $2$-level quasi-planar graphs with $n$ vertices and $2n-4$ edges.
\end{theorem}
\begin{proof}
Let $G$ be a bipartite graph with a $2$-level quasi-planar drawing $\Gamma$. By \cref{le:quasiplanar-edge-partition}, the edges of $G$ can be partitioned into two sets $E_1$ and $E_2$ such that no two edges in the same set cross in $\Gamma$.
	
To construct a planar drawing of $G$, we first draw the edges of $E_1$ with the same curves as in~$\Gamma$; recall that these curves completely lie in the strip of the plane delimited by $\ell_b$ and $\ell_r$. Refer to \cref{fi:quasiplanar-planar}. To draw the edges of $E_2$, first consider a horizontal line $\ell$ between $\ell_b$ and $\ell_r$; relabel the edges of $E_2$ as $e_1, \dots, e_h$, where $h = |E_2|$, according to the left-to-right order in which they cross $\ell$ in $\Gamma$. \todo{\scriptsize Gio: purtroppo queste curve non le definiamo $y$-monotone quindi questo ordering è mal definito}
Then, place $h$ points $p_1, \dots, p_h$ in this right-to-left order along $\ell_b$ and $h$ points $q_1, \dots, q_h$ in this right-to-left order along $\ell_r$, so that they all lie to the right of any point of $\Gamma$.
	
For each $i=1,\dots,h$, we draw $e_i$ as a curve composed of three parts. The first part lies in the half-plane delimited by $\ell_b$ not containing $\ell_r$, and connects the endvertex of $e_i$ in $V_b$ to the point $p_i$; the second part is a straight-line segment between $p_i$ and $q_i$; the third part lies in the half-plane delimited by $\ell_r$ not containing $\ell_b$, and connects the endvertex of $e_i$ in $V_r$ to $q_i$. 
	
First observe that, by construction, an edge of $E_2$ might cross an edge of $E_1$ only in its second part; however, this is not possible since this part is drawn as a straight-line segment between two points to the right of $\Gamma$. Also, the second parts of the edges of $E_2$ do not cross each other, since points $p_1,\dots,p_h$ and $q_1,\dots,q_h$ appear in this \red{left-to-right order} along the two horizontal lines.
	
Thus, it remains to show that the first (third) parts of the edges $e_1, \dots, e_h$ can be drawn without crossing each other. This is due to the fact that, for any two edges $e_i =(u,x), e_j=(v,y)$, with $1 \leq i < j \leq h$ and $u,v \in V_b$, we have that $u$, $v$, $p_j$, and $p_i$ appear in this left-to-right order along $\ell_b$, and that $x$, $y$, $q_j$, and $q_i$ appear in this left-to-right order along $\ell_r$.
	
The upper bound of $2n-4$ follows from the fact that $G$ is planar and bipartite. To see that this bound is tight, note that any straight-line $2$-level drawing of the complete bipartite graph $K_{2,n-2}$, which has $2n-4$ edges, is trivially quasi-planar.
\end{proof}
}

\subsection{Properties and Definitions for Bipartite $2$-Page Book Embeddings}\label{sse:tools}

Let $G=(V_b,V_r,E)$ be a bipartite planar graph. 
Let $G^+$ be a planar supergraph of $G$ whose vertex set is $V_b \cup V_r$ and whose edge set is $E \cup E(\mathcal{C})$, where $\cal C$ is a Hamiltonian cycle that traverses all the vertices of $V_b$ (and, thus, also of $V_r$) consecutively. 
We say that a cycle $\cal C$ satisfying the above condition is a \emph{saturator} of $G$.
An edge of $\cal C$ is a \emph{saturating edge}.
Also, a saturating edge  is \emph{black} if it connects two vertices in $V_b$, and it is \emph{red} if it connects two vertices in $V_r$.  
Note that the saturator $\cal C$ consists of four paths: A path $P=(b_1,\dots,b_m)$ consisting of black saturating edges, a path $R=(r_1,\dots,r_p)$ consisting of red saturating edges, and two edges, namely, the edge $(b_m,r_1)$ and the edge $(r_p,b_1)$. 
The end-vertices of $P$ and the end-vertices of $R$ are the \emph{black} and the \emph{red end-vertices} of $\cal C$, respectively. We formalize a simple property of saturators.

\begin{property}\label{prop:edpointsAREclose}
In any planar embedding of $G^+$, or in any bipartite $2$-page book embedding of $G$, each red end-vertex of~$\cal C$ shares a face with a black end-vertex of $\cal C$, and vice versa.
\end{property}

The {\sc $2$-Page Book Embedding} problem boils down to determining whether the input graph admits a Hamiltonian planar spanning supergraph~\cite{bk-btg-79} and it was proved \NPC by Wigderson~\cite{w-chcpmpg-82}. The next lemma shows a similar characterization for the \btpbe problem.

\begin{lemma}\label{le:characterization-btpbe}
	A bipartite graph $G=(V_b,V_r,E)$ admits a bipartite $2$-page book embedding if and only if it admits a saturator.
\end{lemma}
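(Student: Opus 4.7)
The plan is to prove both directions of the biconditional by exploiting the classical equivalence, due to Bernhart and Kainen~\cite{bk-btg-79}, between $2$-page book embeddings and planar Hamiltonian spanning supergraphs, and then arguing that the consecutiveness constraint on black vertices corresponds exactly to the saturator having its black edges forming a single path.

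\medskip\noindent\textbf{Forward direction} ($G$ admits a \btpbe $\Rightarrow$ $G$ admits a saturator). Starting from a \btpbe $\Gamma$ of $G$, I would consider the cyclic sequence $\sigma=v_1,v_2,\dots,v_n$ of vertices as they appear along the spine $\ell$. For every pair of cyclically consecutive vertices $v_i,v_{i+1}$, draw a Jordan arc $\alpha_i$ within a sufficiently narrow tubular neighborhood of the portion of $\ell$ between them, chosen thin enough to avoid all edges of $\Gamma$ (which lie inside the two pages). The arcs $\alpha_1,\dots,\alpha_n$ form a Hamiltonian cycle $\mathcal{C}$ and, together with $\Gamma$, yield a planar drawing of $G\cup E(\mathcal{C})$. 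Because the vertices of $V_b$ are consecutive along $\ell$ by the definition of a \btpbe, they are also consecutive along $\mathcal{C}$, so $\mathcal{C}$ is a saturator of $G$.

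\medskip\noindent\textbf{Backward direction} ($G$ admits a saturator $\Rightarrow$ $G$ admits a \btpbe). Assume $G$ admits a saturator $\mathcal{C}$ and let $G^+=(V_b\cup V_r,E\cup E(\mathcal{C}))$; by definition of a saturator, $G^+$ is planar and contains $\mathcal{C}$ as a Hamiltonian cycle. I would then invoke the standard construction of Bernhart and Kainen: fix a planar embedding $\mathcal{E}$ of $G^+$ in which $\mathcal{C}$ bounds the outer face (such an embedding exists because $\mathcal{C}$ is a simple cycle that can always be placed on the outer boundary after a suitable re-embedding), place the vertices along a spine $\ell$ in the cyclic order induced by $\mathcal{C}$, draw the edges of $\mathcal{C}$ along $\ell$, and, for each remaining edge $e\in E$, assign it to the page corresponding to the side of $\mathcal{C}$ on which $e$ lies in $\mathcal{E}$. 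This produces a $2$-page book embedding of $G^+$; deleting the saturating edges yields a $2$-page book embedding of $G$ whose spine order is exactly the cyclic order along $\mathcal{C}$. Since $\mathcal{C}$ is a saturator, the black vertices appear consecutively in this order, so the resulting embedding is a \btpbe of $G$.

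\medskip\noindent\textbf{Main obstacle.} The two directions are each essentially a direct reduction, so no genuine obstacle arises; the only delicate point is arguing, in the forward direction, that the arcs $\alpha_i$ can be realized without introducing crossings with edges of $\Gamma$, and, in the backward direction, that one can choose an embedding of $G^+$ where $\mathcal{C}$ bounds the outer face so that the two sides of $\mathcal{C}$ become the two pages of the book. Both points follow from standard topological arguments on planar drawings of Hamiltonian graphs, and I would treat them briefly rather than in full detail.
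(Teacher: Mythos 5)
Your forward direction is exactly the paper's argument (join spine-consecutive vertices by arcs running alongside the spine), and your backward direction is the same reduction the paper uses, so the overall route matches. However, the backward direction as written contains a step that fails: you claim one can "fix a planar embedding $\mathcal{E}$ of $G^+$ in which $\mathcal{C}$ bounds the outer face," justified by the assertion that a simple cycle can always be placed on the outer boundary after re-embedding. That assertion is false. Take $V_b=\{b_1,b_2\}$, $V_r=\{r_1,r_2\}$, $E=\{(b_1,r_1),(b_2,r_2)\}$ and the saturator $\mathcal{C}=(b_1,b_2,r_1,r_2)$: then $G\cup E(\mathcal{C})\cong K_4$, and in every planar drawing one chord lies inside $\mathcal{C}$ and the other outside, so $\mathcal{C}$ bounds no face of any embedding. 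In general a cycle bounds a face in some embedding only under special circumstances (for a $3$-connected graph, only if it is a face of the unique embedding), and a Hamiltonian cycle with crossing chords never does.

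Fortunately this requirement is superfluous, and dropping it recovers the paper's (one-line) argument: take \emph{any} planar drawing $\Gamma$ of $G\cup E(\mathcal{C})$. Since $\mathcal{C}$ is Hamiltonian, its drawing is a closed Jordan curve through all vertices, and every edge of $E$ is a chord whose endpoints lie on this curve and which cannot cross it; hence each such edge lies entirely in one of the two regions delimited by the curve. Interpreting the drawing of $\mathcal{C}$ itself as the spine (recall the spine is a Jordan curve in this paper's definition, so no straightening is needed) and the two regions as the pages yields a $2$-page book embedding, and the black vertices are consecutive along the spine because $\mathcal{C}$ is a saturator — i.e., a \btpbe of $G$. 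With that correction your proof is complete and coincides with the paper's.
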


\begin{fullproof}
Let $\Gamma$ be a \btpbe of $G$ and let $\ell$ be the closed curve traversing the vertices of $V_r$ and the vertices of $V_b$ consecutively in $\Gamma$. We construct a saturator $\cal C$ of $G$ by joining with an edge any two vertices that are consecutive along $\ell$. 

Conversely, let $\cal C$ be a saturator of $G$. By the definition of saturator, $G \cup E(\cal C)$ is planar. Let $\Gamma$ be a planar drawing of $G \cup E(\cal C)$. By interpreting $\cal C$ as a closed curve, we have that $\Gamma$ is a \btpbe of $G$.
\end{fullproof}

The following will turn useful.

\begin{lemma}\label{le:quadrangles}
	Let $G=(V_b,V_r,E)$ be a bipartite planar graph that admits a \btpbe and that has a unique (up to a flip) planar embedding $\cal E$. Also, let $f$ be a face of $\cal E$ bounded by a length-$4$ cycle $c_f = (v_1,v_2,v_3,v_4)$ of $G$ and let $\cal C$ be a saturator of $G$. 
	We have that 
	\begin{enumerate}[a)]
	\item \label{atleastone} if $(v_1,v_3) \notin \cal C$, then $(v_2,v_4) \in \cal C$, and vice versa, and
	\item\label{oneinside}  in any planar drawing $\Gamma$ of $G \cup E(\cal C)$  either $(v_1,v_3)$ or $(v_2,v_4)$ lies inside~$c_f$.
	\end{enumerate}
\end{lemma}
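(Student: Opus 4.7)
The plan is to first prove part (b) via a topological argument, and then to obtain part (a) as an immediate consequence. I would begin with any planar drawing $\Gamma$ of $G \cup E(\mathcal{C})$, which exists by Lemma~\ref{le:characterization-btpbe}. Since $\mathcal{E}$ is the unique planar embedding of $G$ (up to a flip), the embedding that $\Gamma$ induces on $G$ is $\mathcal{E}$, so $c_f$ bounds the face $f$ in $\Gamma$ and the open interior of $f$ contains no vertex of $G$. Assume without loss of generality that $v_1, v_3 \in V_b$ and $v_2, v_4 \in V_r$. By the structure of a saturator, all black vertices appear consecutively along $\mathcal{C}$ and so do all the red ones; hence in the cyclic order of $v_1, v_2, v_3, v_4$ along $\mathcal{C}$ the pairs $\{v_1, v_3\}$ and $\{v_2, v_4\}$ are each contiguous, and this cyclic order is therefore $(v_1, v_3, v_2, v_4)$ or $(v_1, v_3, v_4, v_2)$, up to rotation and reflection. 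This partitions $\mathcal{C}$ into four arcs $A_1, A_2, A_3, A_4$, where $A_1$ has endpoints $\{v_1, v_3\}$, $A_3$ has endpoints $\{v_2, v_4\}$, and $A_2, A_4$ connect pairs of vertices adjacent in $c_f$.

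I would next argue that an arc of $\mathcal{C}$ drawn strictly inside $f$ must be a single edge (an intermediate vertex would otherwise lie inside $f$), and that a single-edge arc of type $A_2$ or $A_4$ would coincide with a side of $c_f$ and so be drawn on the boundary of $f$, not strictly inside. Hence the only candidates for arcs inside $f$ are $A_1 = (v_1, v_3)$ and $A_3 = (v_2, v_4)$, and they cannot both lie inside $f$ because they would cross. The crux of the argument is then to show that at least one of these two diagonals actually lies inside $f$. Assuming for contradiction that neither does, $\mathcal{C}$ lies entirely in the closed exterior of $c_f$, which on the sphere is a topological disk whose boundary carries the four vertices in the cyclic order $(v_1, v_2, v_3, v_4)$. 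The arc $A_1$, drawn inside this disk from $v_1$ to $v_3$, splits it into two sub-disks, one having $v_2$ on its boundary and the other having $v_4$; since the endpoints of $A_3$ are $\{v_2, v_4\}$ in either of the cyclic orders considered, $A_3$ would have to cross $A_1$, contradicting that the arcs of $\mathcal{C}$ are pairwise internally disjoint. This establishes (b).

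Part (a) then follows at once: if $(v_1, v_3) \notin \mathcal{C}$, arc $A_1$ has at least one intermediate vertex and therefore cannot lie inside $f$, so by (b) it is $A_3$ that lies inside $f$, forcing $A_3 = (v_2, v_4) \in \mathcal{C}$; the symmetric case is analogous. The main delicacy lies in the topological contradiction of the previous paragraph, which is ultimately the standard Jordan-curve fact that a simple closed curve inside a topological disk visiting four boundary points in a cyclic order incompatible with the boundary's order must self-intersect.
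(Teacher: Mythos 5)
Your proof is correct and follows essentially the same route as the paper's: both identify the two monochromatic subpaths of $\mathcal{C}$ joining $\{v_1,v_3\}$ and $\{v_2,v_4\}$, use the vertex-free interior of $f$ to force any such subpath lying inside $f$ to be a single diagonal edge, and use the alternation of their endpoints along $c_f$ (a Jordan-curve argument) to rule out both subpaths lying in the closed exterior. The only difference is cosmetic: you prove (b) first and deduce (a) from it, whereas the paper derives (a) directly and then (b).
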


\begin{fullproof}
Let $\cal C$ be a saturator of $G$ and assume, w.l.o.g., that $v_1,v_3 \in V_b$ and $v_2,v_4 \in V_r$. 
Observe that, since $G$ has a unique (up to a flip) planar embedding $\cal E$, then the restriction to $G$ of any planar drawing $\Gamma$ of $G \cup E(\cal C)$ induces the embedding $\cal E$ (or its flip).

Clearly, since $\cal C$ is a saturator, we have that $v_1$ and $v_3$ are connected by a subpath $P_b$ of $\cal C$ consisting of black saturating edges, and that $v_2$ and $v_4$ are connected by a subpath $P_r$ of $\cal C$ consisting of red saturating edges. Also, if $(v_1,v_3)$ exists in $\cal C$, then $P_b=(v_1,v_3)$; similarly, if $(v_2,v_4)$ exists in $\cal C$, then $P_r=(v_2,v_4)$.

Suppose, for a contradiction, that neither $(v_1,v_3)$ nor $(v_2,v_4)$ belongs to $\cal C$. Then $P_b$ and $P_r$ both contain some vertices different from $v_1$, $v_2$, $v_3$, and $v_4$, hence they both lie outside $c_f$ (thus, $c_f$ also bounds a face of $\Gamma$). Since the end-vertices of $P_b$ and $P_r$ alternate along $c_f$ and since they do not share any vertex, they cross in $\Gamma$, a contradiction which proves \cref{atleastone}. Thus, one of $P_b$ and $P_r$, say $P_b$, lies in the interior of $f$.
We prove that $P_b$ contains no internal vertices, that is, $P_b = (v_1,v_3)$.
Suppose, for a contradiction, that $(v_1,\dots,x,\dots,v_3)$, where $x \in V_b$ and $x \notin \{v_1,v_3\}$. However, this implies that $x$ is incident to $f$ and hence that $f$ is not bounded by the simple cycle $c_f$, a contradiction which proves \cref{oneinside}.
\end{fullproof}


\remove{

Let $G$ be a graph and $S$ be a set of cliques inducing a partition of the vertex set of $G$. In an \emph{intersection-link representation} of $(G,S)$: 
\begin{inparaenum}[(i)]
	\item each vertex $u$ is a geometric object $R(u)$, which is a translate of an axis-aligned rectangle $\cal R$; 
	\item two rectangles $R(u)$ and $R(v)$ intersect if and only if edge $(u,v)$ is an intersection-edge, that is, if and only if $(u,v)$ belongs to a clique in $S$; and 
	\item if $(u,v)$ is a link-edge, then it is represented by a curve connecting the boundaries of $R(u)$ and $R(v)$. 
\end{inparaenum}
To avoid degenerate intersections we assume that no two rectangles have their sides on the same horizontal or vertical line. The {\sc Clique Planarity} problem asks whether an intersection-link representation of a pair $(G,S)$ exists such that no two curves intersect and no curve intersects the interior of a rectangle.

A {\em clustered graph} $(G,T)$ is a pair such that $G$ is a graph and $T$ is a rooted tree whose leaves are the vertices of $G$. The internal nodes of $T$ different from the root are the {\em clusters} of $G$. Each cluster $\mu \in T$ is associated with a set containing all and only the vertices of $G$ that are the leaves of the subtree of $T$ rooted at $\mu$. We call {\em cluster} also this set.
A clustered graph is {\em flat} if every cluster is a child of the root.  The {\sc Clustered Planarity} problem asks whether a given clustered graph $(G,T)$ admits a {\em c-planar drawing}, i.e., a planar drawing of $G$, together with a representation of each cluster $\mu$ in $T$ as a simple region $R_\mu$ of the plane such that: (i) every region $R_\mu$ contains all and only the vertices in $\mu$; (ii) every two regions $R_\mu$ and $R_\nu$ are either disjoint or one contains the other; and (iii) every edge intersects the boundary of each region $R_\mu$ at most once. 
A clustered graph $(G,T)$ is c-planar if and only if a set of edges can be added to $G$ so that the resulting graph is c-planar and each cluster induces a connected graph~\cite{fce-pcg-95}. Any such set of edges is called a {\em saturator}, and the subset of a saturator composed of those edges between vertices of the same cluster $\mu$ defines a {\em saturator for $\mu$}. A saturator is {\em linear} if the saturator of each cluster is a path.
The {\sc \cpllong} (\cpl) problem takes as input a flat clustered
graph $(G,T)$ such that each cluster in $T$ induces an independent set
of vertices, and asks whether $(G,T)$ admits a linear saturator.

\begin{theorem}[Angelini et al.~\cite{DBLP:journals/jgaa/AngeliniLBFPR17}]
\label{th:equivalent-spine-crossings}
The following problems are linear-time equivalent:
\begin{inparaenum}[(i)]
\item \btpbescP, 
\item {\sc Clique Planarity} for instances $(G,S)$ in which $S$ contains only two cliques, and 
\item {\sc \cpllong} for instances that contain only two clusters.
\end{inparaenum}
\end{theorem}

}


\subsection{SPQR-trees and Planar Embeddings}\label{se:spqr-trees}
Let $G$ be a graph. A \emph{cut-vertex} in a graph $G$ is a vertex whose removal disconnects $G$. A \emph{separation pair} in $G$ is a pair of vertices whose removal disconnects $G$. Graph $G$ is \emph{biconnected} (\emph{triconnected}) if it has no cut-vertex (resp.\ no separation pair). A \emph{biconnected component} (or {\em block}) of $G$ is a maximal (in terms of vertices and edges) biconnected subgraph of $G$. If $G$ contains the vertices $s$ and $t$, we say that it is \emph{st-biconnectible} if the graph $G \cup (s,t)$ is biconnected.

Let $H$ be an $n$-vertex biconnected planar graph. 
A \emph{split pair} of $H$ is either a separation pair or a pair of adjacent vertices of $H$. 
A \emph{maximal split component} of $H$ with respect to a split pair $\{u,v\}$ (or, simply, a \emph{maximal split component} of $\{u,v\}$) is either an edge $(u,v)$ or a maximal subgraph $H'$ of $H$ such that $H'$ contains $u$ and $v$ and $\{u,v\}$ is not a split pair of $H'$. A vertex $w$ distinct from $u$ and $v$ belongs to exactly one maximal split component of $\{u,v\}$. We define a \emph{split component} of $\{u,v\}$ as the union of any number of maximal split components of $\{u,v\}$.
A split pair $\{u,v\}$ is \emph{maximal}, if there is no distinct split pair $\{w,z\}$ in $H$ such that $\{u, v\}$ is contained in a split component of $\{w,z\}$.

\begin{figure}[tb!]
	\centering
	\includegraphics[height=.33\textwidth]{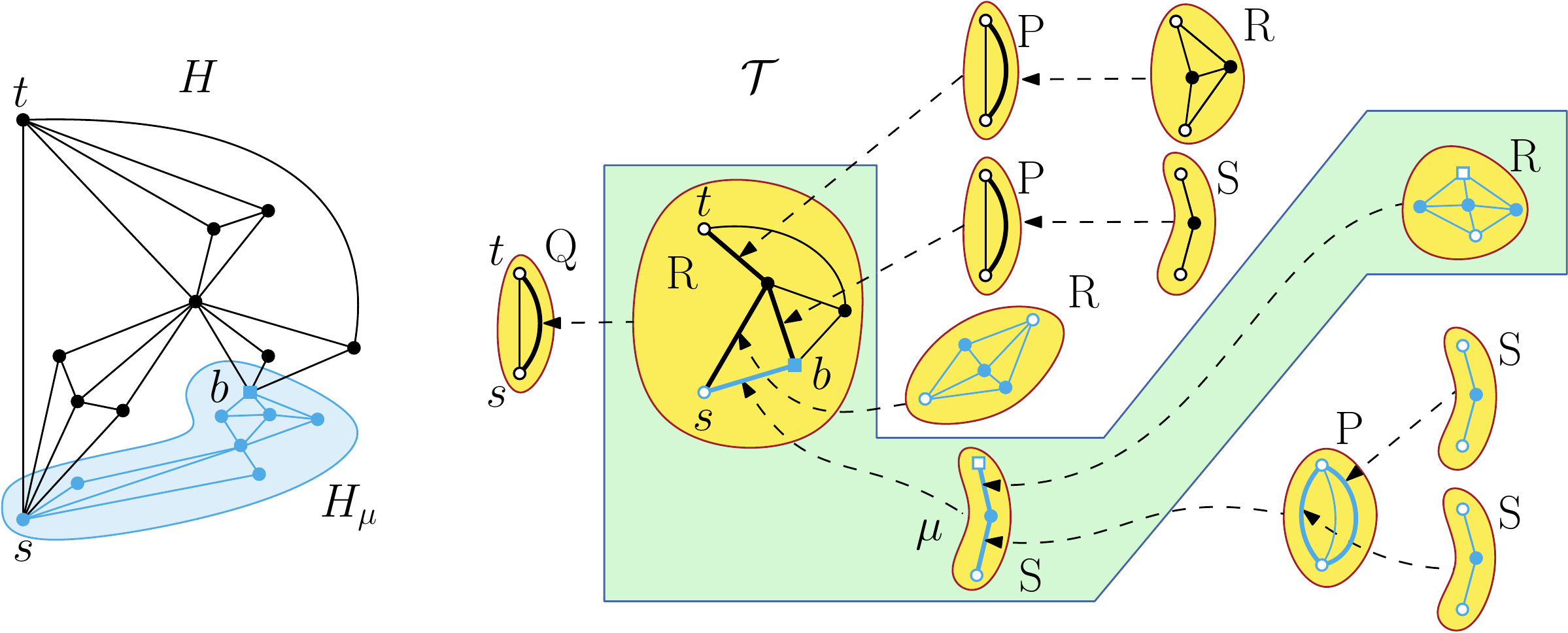}
	\caption{(left) A biconnected planar graph $H$ containing the edge $(s,t)$ and (right) the SPQR-tree $\cal T$ of $H$ rooted at the Q-node representing $(s,t)$. The leaves of $\mathcal T$, that is, the Q-nodes different from the root of $\mathcal T$, are omitted. The skeleton of each node of $\mathcal T$ that is not a leaf of $\mathcal T$ is represented inside a yellow region, corresponding to the node itself. Virtual edges corresponding to omitted Q-nodes are drawn thin, whereas virtual edges corresponding to S-, P-, and R-nodes are drawn thick. The pertinent graph $H_\mu$ of the S-node $\mu$ is enclosed in the blue shaded region.
	The allocation nodes of the vertex $b$ (squared vertex) are in the green-shaded region; the child of the root of $\cal T$ is the proper allocation node of $b$. The poles of the skeletons are filled white.
	}
	\label{fig:SPQR}
\end{figure}

The SPQR-tree $\mathcal T$ of $H$, defined in~\cite{dt-opl-96}, describes a recursive decomposition of $H$ with respect to its split pairs and represents succinctly all the planar embeddings of $H$. The tree $\mathcal T$ is a rooted tree with four types of nodes: S, P, Q, and R; refer to~\cref{fig:SPQR}.  Any node $\mu$ of $\mathcal T$ is associated with a planar $uv$-biconnectible graph, called \emph{skeleton} of $\mu$, which might contain multiple edges and which we denote by $\skel(\mu)$. 
The tree $\mathcal T$ is recursively defined as follows. 

Let $(s,t)$ be an edge of $H$, called \emph{reference edge}. We initialize $\mathcal T$ with a Q-node $\rho$ representing the edge $(s,t)$; $\rho$ is the \emph{root} of $\mathcal T$. The skeleton of $\rho$ consists of two parallel edges $(s,t)$, namely a \emph{real} edge $(s,t)$ and a \emph{virtual} edge $(s,t)$. Further, we insert into $\mathcal T$ a child $\tau$ of $\rho$.

Now assume that we are given a node $\mu$ of $\mathcal T$, a split component $H_{\mu}$ of $H$, and a pair of vertices $\{u,v\}$ of $H_\mu$, where (i) $H_\mu$ is a planar $uv$-biconnectible graph called \emph{pertinent graph} of $\mu$, and (ii) $u$ and $v$ are two vertices of $H_\mu$ called \emph{poles} of $\mu$. 

In order to meet this assumption after the initialization of $\mathcal T$, we let $\mu=\tau$, we let $H_\mu$ be the graph obtained from $H$ by removing the edge $(s,t)$, and we let $\{u,v\}=\{s,t\}$. 

\begin{itemize}
	\item {\em Trivial case}. If $H_\mu$ consists of a single edge $(u,v)$, we have that $\mu$ is a Q-node and is a leaf of $\mathcal T$; further, $\skel(\mu)$ also coincides with the edge $(u,v)$.
	
	\item {\em Series case}. If $H_\mu$ is not a single edge and is not biconnected, we have that $\mu$ is an S-node. Let $c_1, \dots, c_{k-1}$ (for some $k \geq  2$) be the cut-vertices of $H_\mu$, in the order in which they appear in any simple path in $H_{\mu}$ from $c_0=u$ to $c_k=v$. Then $\skel(\mu)$ is a path $(c_0,c_1,\dots,c_k)$. We insert $k$ children $\mu_1,\dots,\mu_k$ of $\mu$ in $\mathcal T$. For $i=1,\dots,k$, the pertinent graph $H_{\mu_i}$ of $\mu_i$ is the biconnected component of $H_\mu$ containing the vertices $c_{i-1}$ and $c_i$. Further, the poles of $\mu_i$ are the vertices $c_{i-1}$ and $c_i$. 
	
	\item {\em Parallel case}. If $H_\mu$ is not a single edge, if it is biconnected, and if $\{u,v\}$ is a split pair of $H_{\mu}$ defining $k$ split components of $H_\mu$, we have that $\mu$ is a P-node. Then $\skel(\mu)$ consists of $k$ parallel edges $(u,v)$. We insert $k$ children $\mu_1,\dots,\mu_k$ of $\mu$ in $\mathcal T$. The pertinent graphs $H_{\mu_1},\dots,H_{\mu_k}$ of $\mu_1,\dots,\mu_k$ are the split components of $H_\mu$; these are planar $uv$-biconnectible graphs. Further, the poles of $\mu_i$ are $u$ and $v$, for every $i=1,\dots,k$. 
	
	\item {\em Rigid case}. If $H_\mu$ is not a single edge, if it is biconnected, and if $\{u,v\}$ is not a split pair of $H_{\mu}$, we have that $\mu$ is an R-node. Let $\{u_1,v_1\},\dots,\{u_k,v_k\}$ be the maximal split pairs of $H_{\mu}$. We insert $k$ children $\mu_1,\dots,\mu_k$ of $\mu$ in $\mathcal T$. For $i=1,\dots,k$, the pertinent graph  $H_{\mu_i}$ of $\mu_i$ is the union of all the split components of $\{u_i,v_i\}$; then $H_{\mu_i}$ is a planar $uv$-biconnectible graph. The graph $\skel(\mu)$ is obtained from $H_{\mu}$ by replacing each subgraph $H_{\mu_i}$ with an edge $(u_i,v_i)$. Then $\skel(\mu)$ is a planar $uv$-biconnectible graph that becomes triconnected if the edge $uv$ is added to it.
\end{itemize}

For each node $\mu$ of $\mathcal T$ that is not a Q-node, the edges of $\skel(\mu)$ are called \emph{virtual edges}, as they do not correspond to real edges of $H$, but rather to subgraphs of $H$. Indeed, every virtual edge $e_i$ in $\skel(\mu)$ is associated with a child $\mu_i$ of $\mu$ and thus corresponds to the pertinent graph $\pert{\mu_i}$, which in fact we also denote by $\pert{e_i}$. On the other hand, for a Q-node different from the root of $\mathcal T$, the only edge of $\skel(\mu)$ is not virtual, but rather is a \emph{real edge}. 

Let~$w$ be a vertex of $H$. The \emph{allocation nodes} of $w$ are the nodes of $\mathcal T$ whose skeletons contain~$w$. Note that~$w$ has at least one allocation node. The lowest common ancestor of the allocation nodes of~$w$ is itself an allocation node of~$w$, called the \emph{proper allocation node} of~$w$. 
We have the following.

\begin{remark}\label{remark:pan}
Let $\mu$ be a non-root node of $\mathcal T$ and let $w$ be a vertex of $\skel(\mu)$. Then, $\mu$ is the proper allocation node of $w$ if and only if $w$ is not a pole of $\mu$.
\end{remark}


If $H$ has $n$ vertices, then  $\mathcal T$ has $O(n)$ nodes and the total number of virtual edges in the skeletons of the nodes of $\mathcal T$ is in $O(n)$.  From a computational complexity perspective, $\mathcal T$ can be constructed in $O(n)$ time~\cite{DBLP:conf/gd/GutwengerM00}. To ease the description of our embedding algorithms, we use the slightly modified version of SPQR-trees defined in \cite{DBLP:journals/siamdm/DidimoGL09}, where each S-node has exactly two children. The SPQR-trees defined in this way can still be constructed in $O(n)$ time.

The SPQR-tree $\mathcal T$ allows to succinctly and recursively construct all the planar embeddings of $H$. We now explain this fact. 

First, for any node $\mu$ of $\mathcal T$, the restriction of any planar embedding $\mathcal E$ of $H$ to the vertices and edges of $H_{\mu}$ is a planar embedding $\mathcal E_{\mu}$ of $H_{\mu}$ in which the poles of $\mu$ are incident to a common face. This is best seen by assuming\footnote{We remark that, in all the problems we address in this paper, the existence of an outer face is not relevant other than for ease of description. Then it can be assumed without loss of generality, as we do in the following, that a certain edge is incident to the outer face of every planar embedding we consider.} that the reference edge $(s,t)$ of $H$ is incident to the outer face of $\mathcal E$; then the poles of $\mu$ are incident to the outer face of $\mathcal E_{\mu}$. A planar embedding $\mathcal E$ of $H$ also defines a \emph{corresponding} planar embedding $\mathcal S_{\mu}$ for the skeleton $\skel(\mu)$ of each node $\mu$ of $\mathcal T$ different from the root. The embedding $\mathcal S_{\mu}$ can be obtained from the embedding $\mathcal E_{\mu}$ of $H_{\mu}$ in $\mathcal E$ by replacing the pertinent graph of each virtual edge $e_i$ of $\skel(\mu)$ with $e_i$. 

Second, every planar embedding of $H$ in which $(s,t)$ is incident to the outer face can be obtained by bottom-up traversing $\mathcal T$ and constructing, at each node $\mu$ of $\mathcal T$ different from the root, a planar embedding of $H_{\mu}$ in which the poles of $\mu$ are incident to the outer face. This is done by performing two choices at $\mu$: (i) a planar embedding $\mathcal S_{\mu}$ for $\skel(\mu)$ such that the poles of $\mu$ are incident to the outer face; and (ii)  for each virtual edge $e_i$ of $\skel(\mu)$, whether to flip or not the already constructed embedding of the pertinent graph of $e_i$, where a \emph{flip} is a reversal of the adjacency list of each vertex. From these choices, a planar embedding of $H_{\mu}$ in which the poles of $\mu$ are incident to the outer face is constructed starting from the chosen planar embedding $\mathcal S_{\mu}$ of $\skel(\mu)$ and by replacing each virtual edge $e_i$ with the already constructed embedding of the pertinent graph of $e_i$ or with its flip, as chosen. Different types of nodes of $\mathcal T$ allow for different choices for the planar embedding $\mathcal S_{\mu}$ of $\skel(\mu)$. If a node $\mu$ of $\mathcal T$ is an S-node, then $\skel(\mu)$ is a path and has a unique planar embedding, hence there is nothing to choose. If $\mu$ is an R-node, then its skeleton $\skel(\mu)$ also has a unique planar embedding in which the poles of $\mu$ are incident to the outer face~\cite{Whitney33}, up to a flip, and again there is nothing to choose. Conversely, if $\mu$ is a P-node, then $\mathcal S_{\mu}$ can be chosen as any permutation of its virtual edges. After bottom-up traversing $\mathcal T$ up to the child $\tau$ of the root, a planar embedding of $H$ is constructed by inserting the edge $(s,t)$ in the outer face of the constructed embedding of $H_{\tau}$.

In the above bottom-up construction of an embedding $\mathcal E$ of $H$, an injection is naturally defined from the internal faces of $\mathcal S_{\mu}$ to the faces of $\mathcal E$, where $\mathcal S_{\mu}$ is the embedding of the skeleton $\skel(\mu)$ of a node $\mu$ of $\mathcal T$ chosen in the construction of $\mathcal E$. Indeed, for every internal face $f$ of $\mathcal S_{\mu}$ incident to virtual edges $e_1,\dots,e_k$, there is a distinct and unique face $g$ of $\mathcal E$ that is delimited by edges of the pertinent graphs of all of $e_1,\dots,e_k$; then $g$ is the face of $\mathcal E$ that \emph{corresponds to} $f$ (see \cref{fig:SPQR2} for an example). This correspondence also extends to the outer face(s) of $\mathcal S_{\mu}$ in a slightly less obvious way. Note that the restriction $\mathcal E_{\mu}$ of $\mathcal E$ to the pertinent graph $H_{\mu}$ of a non-root node $\mu$ of $\mathcal T$ has an outer face that is delimited by two paths between the poles of $\mu$, possibly sharing some vertices and edges. These two paths are incident to distinct faces of $\mathcal E$. These are called \emph{outer faces of}  $\mathcal E_{\mu}$; in fact, we often refer to the outer faces of an embedding  $\mathcal E_{\mu}$ of  the pertinent graph $H_{\mu}$ of a non-root node $\mu$ of $\mathcal T$ even when an entire embedding of $H$ is not specified. By convention, if $u$ and $v$ denote the poles of $\mu$, we call \emph{left outer face} $\ell(\mathcal E_{\mu})$ of $\mathcal E_{\mu}$ (\emph{right outer face} $r(\mathcal E_{\mu})$ of $\mathcal E_{\mu}$) the outer face that is delimited by the path obtained by walking in clockwise direction (resp.\ in counter-clockwise direction) from $u$ to $v$ along the boundary of the outer face of $\mathcal E_{\mu}$. The terms left outer face and right outer face come from the fact that we usually think about $\mathcal E_{\mu}$ as having a pole $u$ of $\mu$ at the bottom and the other pole $v$ of $\mu$ at the top. We also talk about left and right outer faces $\ell(\mathcal S_{\mu})$ and $r(\mathcal S_{\mu})$, respectively, of the embedding $\mathcal S_{\mu}$ of $\skel(\mu)$. These can be defined by adding an edge $(u,v)$ in the outer face of  $\mathcal S_{\mu}$; then the \emph{left outer face} of $\mathcal S_{\mu}$ (\emph{right outer face} of $\mathcal S_{\mu}$) is the face delimited by the cycle composed of $(u,v)$ and of the path obtained by walking in clockwise direction (resp.\ in counter-clockwise direction) from $u$ to $v$ along the boundary of the outer face of $\mathcal S_{\mu}$. 

\begin{figure}[tb!]
	\centering
	\includegraphics[width=.85\textwidth]{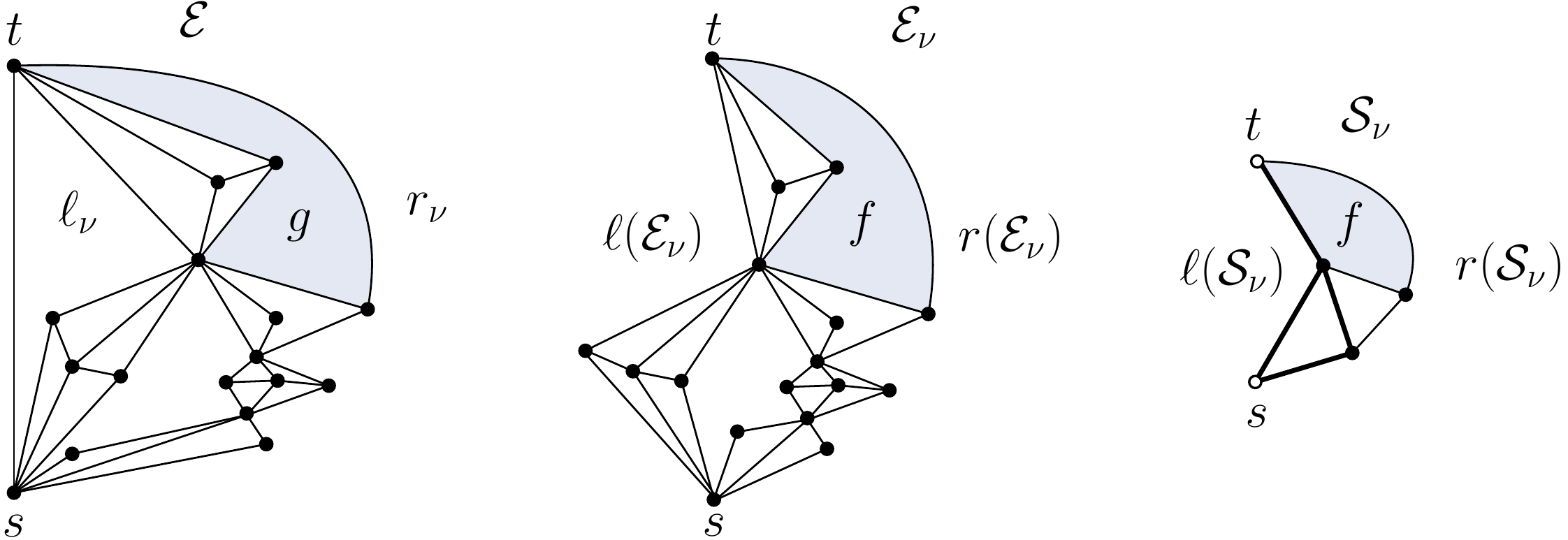}
	\caption{Consider the R-node $\nu$ child of the root of the SPQR-tree of \cref{fig:SPQR}. The internal face $f$ of $\mathcal{E}_{\nu}$ corresponds to face $g$ of $\mathcal E$. Faces $\ell(\mathcal{E}_\nu)$ and $r(\mathcal{E}_\nu)$ are the left outer face and the right outer face of $\mathcal{E}_{\nu}$, respectively, and correspond to faces $\ell_\nu$ and $r_\nu$ of $\mathcal{E}$, respectively.}
	\label{fig:SPQR2}
\end{figure}


\section{Complexity}\label{se:complexity}
In this section, we show that the \btpbeP{} (\btpbe) problem is \NPC. 
This result and \cref{cor:equivalence-variable} then imply that the {\sc 2-Level Quasi-Planarity} and the {\sc $(2,2)$-Track Graph Recognition} problems are also \NPC.

The membership in \NP~is trivial. Therefore, in the remainder of the section we will focus on proving the \NP-hardness of the problem.

\medskip
\paragraph{Leveled Planarity.} Our \NPHN proof is based on a polynomial-time reduction from an \NP-complete problem called {\sc Leveled Planarity}~\cite{DBLP:journals/siamcomp/HeathR92}.  We start with a definition. 
\begin{definition}\label{def:leveled-planar}
A \emph{leveled planar drawing} of a graph $H=(V,E)$ is a triple $\langle k,\gamma,\Sigma\rangle$ containing:
\begin{enumerate}
\item \label{con:integer} an integer $k \leq |V|$, 
\item \label{con:i-leveled} a function $\gamma: V \rightarrow \{1,\dots,k\}$ such that, for each edge $(u,v) \in E$, it holds $\gamma(v)= \gamma(u) \pm 1$, and
\item \label{cond:orders} a sequence of orders $\Sigma=\langle\xi_1,\xi_2,\dots,\xi_k\rangle$, where $\xi_i: U_i \rightarrow \{1,\dots,|U_i|\}$ with $U_i:= \{v \in V: \gamma(v)=i\}$, such that, for any two edges $(u,v),(p,q) \in E$ with $a=\gamma(u)=\gamma(p)$ and $a+1=\gamma(v)=\gamma(q)$, it holds that $\xi_a(u) < \xi_a(p)$ if and only if $\xi_{a+1}(v) < \xi_{a+1}(q)$.
\end{enumerate}

A graph $H=(V,E)$ is \emph{leveled planar} if at admits a leveled planar drawing $\langle k,\gamma,\Sigma\rangle$. 
\end{definition}

We say that the subset $U_i$ of $V$ is \emph{level} $i$ of $\langle k,\gamma,\Sigma\rangle$, for $i=1,\dots,k$. 
Observe that bipartiteness is necessary for a graph to be leveled planar, by \cref{con:i-leveled} of \cref{def:leveled-planar}. 

The {\sc Leveled Planarity} problem, proved \NPC by Heath and Rosenberg~\cite{DBLP:journals/siamcomp/HeathR92}, asks whether a graph is leveled planar.
In fact, in~\cite{DBLP:journals/siamcomp/HeathR92}, it is shown that {\sc Leveled Planarity} is \NPC also when the following two properties hold: 
\begin{enumerate}[(i)]
	\item the instance is connected; and 
	\item the input also specifies the number $k$ of levels and two special vertices $v_1$ and $v_k$ such that
	$v_1$ is the only vertex on level $1$ and $v_k$ is the only vertex on level $k$, i.e.,
	$\gamma^{-1}(1)=\{v_1\}$ and $\gamma^{-1}(k)=\{v_k\}$. 
\end{enumerate}
Furthermore, the proof in~\cite{DBLP:journals/siamcomp/HeathR92} can be easily adapted to show that the problem remains \NPC even when $k$ is constrained to be odd. We call the resulting problem {\sc Odd-Leveled Planarity} and denote its input as a quadruple $\langle H, k, v_1, v_k \rangle$, where $H$ is a connected bipartite graph, $k$ is an odd integer, and $v_1, v_k \in V(H)$.

\smallskip
\paragraph{Proof strategy.} 
To prove the \NP-hardness of \btpbe we proceed as follows. A \emph{subdivision} of a graph $K$ is a graph $K'$ obtained by replacing some edges of $K$ with paths whose internal vertices have degree $2$ in $K'$; these are called \emph{subdivision vertices}. Let $\langle H, k, v_1, v_k \rangle$ be an instance of {\sc Odd-Leveled Planarity}.
First, we construct a graph $F_k$, called \emph{frame}, that has the following properties (refer to \cref{fig:frame}):
\begin{enumerate}[(1)]
	\item $F_k$ is a bipartite graph which is a subdivision of a triconnected planar graph (and, thus, it has a unique planar embedding, up to a flip~\cite{Whitney33});
	\item $F_k$ has a unique face $f_{in}$ consisting of $2k$ vertices, while all the other faces have $4$ vertices (see \cref{fig:frame-a}); and
	\item $F_k$ admits a unique saturator $\cal C$; in such a saturator the edges of $\cal C$ traversing $f_{in}$ form a non-crossing matching $M=\{(v_i,x_i): 1 \leq i \leq k-1\}$ of alternating red and black edges (see \cref{fig:frame-b}). 
\end{enumerate}  
Then we combine $H$ with $F_k$ into a new graph $G$, by identifying $v_1$ and $v_k$ with the corresponding vertices of $f_{in}$; this ensures that $G$ is bipartite and that $H$ lies inside $f_{in}$ in any planar embedding of $G$ (and thus in any \btpbe  of $G$).  
Finally, we show that any saturator $\mathcal C_G$ of $G$ is obtained from the unique saturator $\cal C$ of $F_k$ by subdividing the edges of $M$ into a set $P$ of paths; this is done by suitably inserting all the vertices of $H$ different from $v_1$ and $v_k$ into the edges of $M$ (see \cref{fig:frame-c}).
This allows us to establish a one-to-one correspondence between a saturator of $G$ and a solution $\langle k,\gamma,\Sigma\rangle$ of {\sc Odd-Leveled Planarity} for $\langle H, k, v_1, v_k \rangle$, where the assignment $\gamma$ of the vertices of $H$ to the levels and the order $\xi_i\in \Sigma$ of the vertices of each level $i$ are those defined by the paths in $P$.

\begin{figure}[t!]
		\centering
\subfloat[]{
\includegraphics[page=1,height=.5\textwidth]{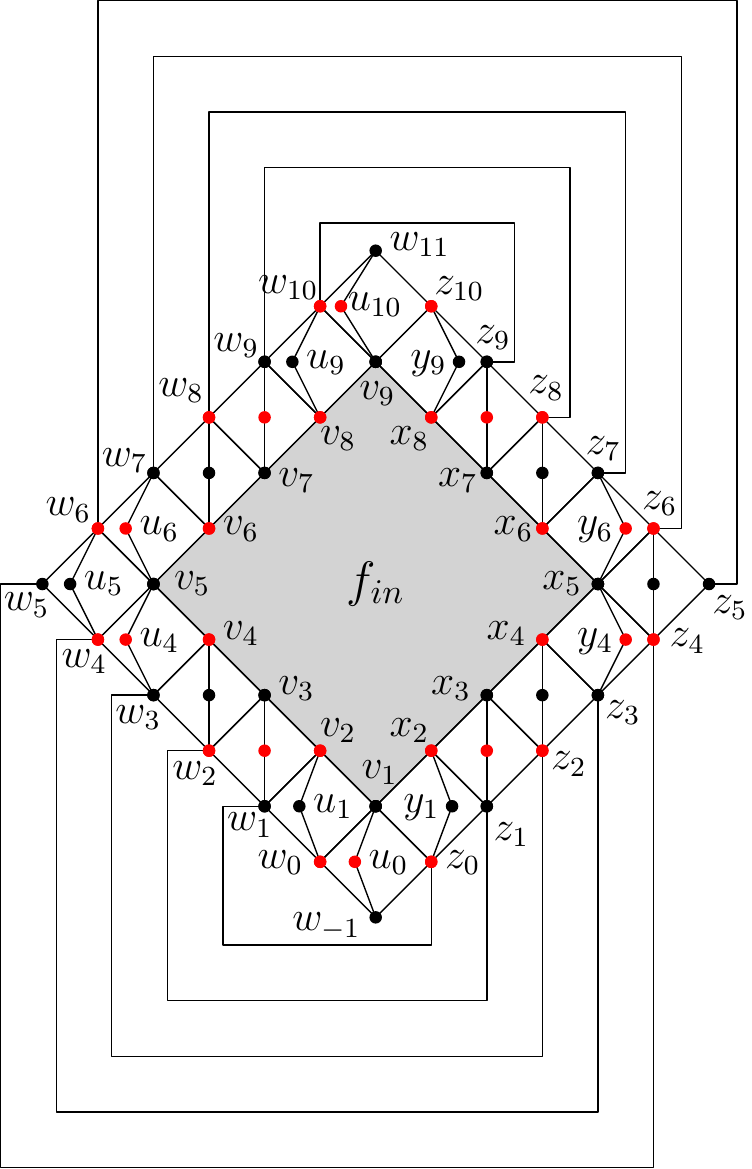}
\label{fig:frame-a}
}
\subfloat[]{
\includegraphics[page=2,height=.5\textwidth]{nuovo-gadget-rod.pdf}
\label{fig:frame-b}
}
\subfloat[]{
\includegraphics[page=3,height=.5\textwidth]{nuovo-gadget-rod.pdf}
\label{fig:frame-c}
}
\caption{(a) The frame $F_9$; the face $f_{in}$ bounded by the inner cycle $c_{in}$ is shaded gray.
(b) The graph $F_9$ together with the edges of its unique saturator; the red and black edges of the saturator are thick. The face bounded by the $4$-cycle $c_4 = (v_3,v_4,x_4,x_3,v_3)$ is shaded gray.
(c) A planar drawing of $G \cup E(\mathcal C_G)$, where $G= F_9 \cup H$, the graph $H$ (in the shaded-gray region) is a positive instance of {\sc Odd-Leveled Planarity} with $k=9$ and $\mathcal C_G$ is a saturator~of~$G$.}
\label{fig:frame}
\end{figure}

\smallskip
\paragraph{Frame gadget.} We describe the frame $F_k$; refer to \cref{fig:frame-a}. 

We initialize $F_k$ to the union of an \emph{inner cycle} $c_{in}=({v_1},v_2,\dots,v_{k-1},{v_k},x_{k-1},\dots,x_{2},{v_1})$ and an \emph{outer cycle} $({w_{-1}},w_0,\dots,w_{k+1},{w_{k+2}},z_{k+1},\dots,z_0, {w_{-1}})$.
For $i=0,\dots,\frac{k-1}{2}$, we add the edges $(w_i,v_{i+1})$ and $(z_i,x_{i+1})$, where $x_1=v_1$; also, for $i=\frac{k+1}{2},\dots,k$, we add the edges $(v_i,w_{i+1})$ and $(x_i,z_{i+1})$, where $x_k=v_k$.
Then, for $i=1,\dots,\frac{k-1}{2}$, we add a vertex~$u_i$ together with the edges $(w_{i-1},u_i)$ and $(u_i,v_{i+1})$, and a vertex $y_i$ together with the edges $(z_{i-1},y_i)$ and $(y_i,x_{i+1})$. Also, for $i=\frac{k+3}{2},\dots,k$, 
we add a vertex $u_i$ together with the edges $(w_{i+1},u_i)$ and $(u_i,v_{i-1})$, and a vertex $y_i$ together with the edges $(z_{i+1},y_i)$ and $(y_i,x_{i-1})$.
Further, we add a vertex $u_0$ together with the edges $(w_{-1},u_0)$ and $(u_0,v_1)$, a vertex $u_{k+1}$ together with the edges $(w_{k+2},u_{k+1})$ and $(u_{k+1},v_k)$, a vertex $u_{\frac{k+1}{2}}$ together with the edges $(w_{\frac{k-1}{2}},u_\frac{k+1}{2})$ and $(u_\frac{k+1}{2},w_{\frac{k+3}{2}})$, and a vertex $y_{\frac{k+1}{2}}$ together with the edges $(z_{\frac{k-1}{2}},y_\frac{k+1}{2})$ and $(y_\frac{k+1}{2},z_{\frac{k+3}{2}})$. 
Finally, for $i=0,\dots,k$, we add the edges $(z_i,w_{i+1})$.

By construction, the frame is bipartite and is a subdivision of a triconnected planar graph (with subdivision vertices $u_0,\dots,u_{k+1}$ and $y_1,\dots,y_k$). Hence, it has a unique planar embedding $\mathcal E$ (up to a flip), whose faces all have  length $4$, except for the face $f_{in}$ bounded by $c_{in}$, which has length $2k$. We now need to prove that $\cal C$ has a unique saturator.
We start with the following.

\begin{lemma}\label{lem:two-red-edges}
Any saturator $\cal C$ of the frame $F_k$ contains the saturating red edges $(w_0,u_0)$, $(u_0,z_0)$, $(w_{k+1},u_{k+1})$, and $(u_{k+1},z_{k+1})$.
\end{lemma}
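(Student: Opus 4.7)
My plan is to apply \cref{le:quadrangles} to the two 4-faces incident to $u_0$ in the unique planar embedding of $F_k$ and exploit the degree-$2$ structure of $u_0$. Since $F_k$ is a subdivision of a triconnected planar graph, its planar embedding $\mathcal E$ is unique up to a flip, and $u_0$, being a red subdivision vertex of degree $2$ with black neighbors $w_{-1}$ and $v_1$, is incident to exactly two 4-faces: $f_1 = (u_0, w_{-1}, w_0, v_1)$ and $f_2 = (u_0, v_1, z_0, w_{-1})$. Crucially, the black diagonal $(w_{-1}, v_1)$ is common to both $f_1$ and $f_2$, while the red diagonals $(u_0, w_0)$ and $(u_0, z_0)$ are distinct. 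The analogous faces around $u_{k+1}$ give a symmetric setup that handles the edges $(w_{k+1}, u_{k+1})$ and $(u_{k+1}, z_{k+1})$.

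The first step is to show that at least one of $(u_0, w_0)$ and $(u_0, z_0)$ lies in $\mathcal C$. By \cref{le:quadrangles}(b), the edge $(w_{-1}, v_1)$, if present in $\mathcal C$, can be drawn inside at most one of $f_1$ and $f_2$; consequently, by \cref{le:quadrangles}(a) applied to the other face, a red diagonal must also lie in $\mathcal C$.

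The second step is to show that in fact \emph{both} red diagonals belong to $\mathcal C$. I would proceed by contradiction, assuming only $(u_0, w_0) \in \mathcal C$ (the case for $(u_0, z_0)$ being symmetric). Because $u_0$ has degree $2$ in $\mathcal C$, only one red saturating edge, and can be incident to at most one bridging edge (since each vertex can be an end-vertex of $\mathcal C$ at most once), the second $\mathcal C$-edge at $u_0$ must be one of the existing $F_k$-edges $(u_0, w_{-1})$ or $(u_0, v_1)$, acting as a bridging edge; thus $u_0$ would be a red end-vertex of $\mathcal C$. By \cref{le:quadrangles}(a) applied to $f_2$, $(w_{-1}, v_1) \in \mathcal C$, and by \cref{prop:edpointsAREclose} the corresponding black end-vertex of $\mathcal C$ must lie among the black vertices of $f_1, f_2$, i.e., it must be $w_{-1}$ or $v_1$. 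I then analyze the neighboring 4-face $(w_0, w_1, z_0, w_{-1})$, whose diagonals are the red $(w_0, z_0)$ and the black $(w_{-1}, w_1)$: the two $\mathcal C$-edges already forced at the degree-$3$ vertex $w_{-1}$ leave no room for $(w_{-1}, w_1)$, so by \cref{le:quadrangles}(a) we get $(w_0, z_0) \in \mathcal C$; combined with $(u_0, w_0) \in \mathcal C$, this pins down both $\mathcal C$-edges at $w_0$, and tracing the red path $R$ out of $w_0 - z_0$ leads to a violation of Hamiltonicity (because the induced restrictions on the remaining red vertices become inconsistent with a single path $R$ reaching $u_0$ as an end).

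The main obstacle is the last part of step two: concretely exhibiting the contradiction in the end-vertex scenario. The key leverage comes from the rigidity of the planar embedding of $F_k$ and the tiny number of candidate $\mathcal C$-edges available at low-degree vertices like $u_0$ and $w_{-1}$, together with \cref{prop:edpointsAREclose} and repeated applications of \cref{le:quadrangles} to the tightly constrained 4-faces neighbouring $u_0$. Once the contradiction is established for $u_0$, the symmetric structure of $F_k$ yields the same conclusion for $u_{k+1}$, completing the proof.
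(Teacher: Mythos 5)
Your first step (at least one of $(u_0,w_0)$, $(u_0,z_0)$ is in $\mathcal C$) is fine and is a nice variant of the paper's argument, and you correctly reach the point where $u_0$ is a red end-vertex of $\mathcal C$, $(w_{-1},v_1)\in\mathcal C$, and by \cref{prop:edpointsAREclose} the black end-vertex sharing a face with $u_0$ is $w_{-1}$ or $v_1$. The gap appears immediately after. You assert that ``the two $\mathcal C$-edges already forced at $w_{-1}$ leave no room for $(w_{-1},w_1)$'', but only one edge, $(w_{-1},v_1)$, is forced at $w_{-1}$; the second edge $(u_0,w_{-1})$ is forced only in the sub-case in which $w_{-1}$ itself is the black end-vertex joined to $u_0$ by the bridging edge. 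In the other sub-case ($v_1$ is the black end-vertex and the bridging edge is $(u_0,v_1)$), the vertex $w_{-1}$ is an \emph{internal} vertex of the black path, so its second $\mathcal C$-edge must be the black saturating edge $(w_{-1},w_1)$ (the only other black vertex it is co-facial with), and then $(w_0,z_0)\notin\mathcal C$: your chain of deductions collapses and a different argument is needed. The paper handles exactly this sub-case by forcing, via \cref{le:quadrangles}, the red edges $(w_0,v_2)$ and $(z_0,x_2)$ and concluding that $u_1$, $y_1$ and $v_1$ would all be black end-vertices of $\mathcal C$ --- three end-vertices, a contradiction. So your case analysis is incomplete in a way that cannot be patched by the step you propose.

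Moreover, even in the sub-case you do treat ($w_{-1}$ the black end-vertex, $(w_0,z_0)\in\mathcal C$), the promised contradiction is never exhibited: ``tracing the red path $R$ out of $w_0-z_0$ leads to a violation of Hamiltonicity'' is precisely the nontrivial content of the lemma, and you concede it is the ``main obstacle''. The paper closes this sub-case with a concrete cascade of applications of \cref{le:quadrangles} to the neighbouring quadrilateral faces, forcing the black edges $(w_1,u_1)$, $(u_1,v_1)$, $(y_1,z_1)$, $(z_1,w_1)$ and the red edges $(z_0,x_2)$, ending with a black saturating path $(w_{-1},v_1,u_1,w_1,z_1,y_1)$ whose end-vertices are fixed but which fails to span all black vertices --- that explicit propagation is what is missing from your write-up. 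Finally, a smaller point: the two top-level cases (only $(u_0,w_0)$ present versus only $(u_0,z_0)$ present) are not literally symmetric, because the chords $(z_i,w_{i+1})$ break the $w$/$z$ symmetry of the frame; the arguments are analogous but must both be carried out, as the paper does.
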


\begin{proof}
Let $\cal E'$ be any planar embedding of $F_k \cup E(\cal C)$.
In the following, we show that $\cal C$ contains the saturating red edges $(w_0,u_0)$ and $(u_0,z_0)$; the proof that $\cal C$ contains the saturating red edges $(w_{k+1},u_{k+1})$ and $(u_{k+1},z_{k+1})$ is analogous. 

The only red vertices $u_0$ shares faces of~$\cal E$ with are $w_0$ and $z_0$; hence, at least one of the saturating edges $(w_0,u_0)$ and $(u_0,z_0)$ must belong to $\cal C$. 


First, suppose, for a contradiction, that $(w_0,u_0) \notin E({\cal C})$ and $(u_0,z_0) \in E({\cal C})$; refer to \cref{fig:lemma-two-red-edges-NO-w_0u_0-global}.
Since $(w_0,u_0) \notin E({\cal C})$, by \cref{le:quadrangles} we have that $\cal C$ contains the black saturating edge $(w_{-1},v_1)$; further, the edge $(w_{-1},v_1)$ lies in the interior of the $4$-cycle $(w_{-1}, w_0, v_{1}, u_{0})$ in~$\cal E'$,  as this cycle bounds a face of~$\cal E$; refer to \cref{fig:lemma-two-red-edges-NO-w_0u_0}. 
 
Since the only red vertices $u_0$ shares a face of~$\cal E$ with are $w_0$ and $z_0$, our assumption implies that $u_0$ is a red end-vertex of~$\cal C$.  Also, since the only black vertices $u_0$ shares a face of~$\cal E$ with are $v_1$ and $w_{-1}$, by \cref{prop:edpointsAREclose} one of such vertices must be a black end-vertex of~$\cal C$. Note that $w_{-1}$ and $v_1$ are not both black end-vertices of~$\cal C$, as otherwise $\cal C$ would not span all the black vertices of $F_k$, given that the path $P$ of black saturating edges would only consist of the edge $(w_{-1},v_1)$.


\begin{itemize}
\item Suppose first that $w_{-1}$ is a black end-vertex of~$\cal C$ and that $v_1$ is not; refer to \cref{fig:lemma-two-red-edges-NO-w_0u_0-AND-w_-1-endvertex,fig:lemma-two-red-edges-NO-w_0u_0-AND-w_-1-endvertex-first-case}. Since $w_{-1}$ is a black end-vertex of~$\cal C$ and since the edge $(w_{-1},v_1)$ belongs to $\cal C$, the black saturating edge $(w_{-1},w_1)$ does not belong to $\cal C$. Therefore, by \cref{le:quadrangles}, $\cal C$ contains the red saturating edge $(w_0,z_0)$, which lies in the interior of the $4$-cycle $(w_{-1}, z_0, w_{1}, w_{0})$ in~$\cal E'$, as this cycle bounds a face of~$\cal E$.
Also, since $z_0$ is adjacent to $u_0$ and $w_0$ in $\cal C$, we have that $\cal C$ contains neither the red saturating edge $(z_0,x_2)$ nor the red saturating edge $(z_0,w_2)$. Thus, by \cref{le:quadrangles}, $\cal C$ contains the black saturating edges $(v_1,y_1)$, $(y_1,z_1)$, and $(z_1,w_1)$, which lie in the interior of the $4$-cycles $(z_0,v_1,x_2,y_1)$, $(z_0,y_1,x_2,z_1)$, and $(z_0,z_1,w_2,w_1)$ in~$\cal E'$, respectively, since such cycles bound faces of~$\cal E$; refer to \cref{fig:lemma-two-red-edges-NO-w_0u_0-AND-w_-1-endvertex}.

\begin{figure}[tb!]
	\centering
	\subfloat[]{
		\includegraphics[page=1,height=.18\textwidth]{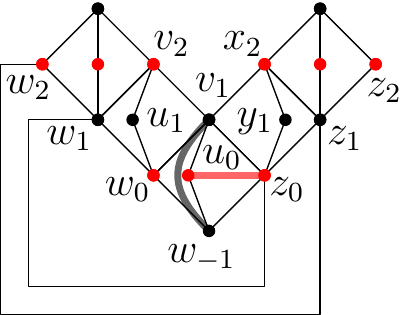}\label{fig:lemma-two-red-edges-NO-w_0u_0}
	}\hfil
	\subfloat[]{
		\includegraphics[page=2,height=.18\textwidth]{lemma-two-red-edges.pdf}\label{fig:lemma-two-red-edges-NO-w_0u_0-AND-w_-1-endvertex}
	}\hfil
	\subfloat[]{
		\includegraphics[page=3,height=.18\textwidth]{lemma-two-red-edges.pdf}\label{fig:lemma-two-red-edges-NO-w_0u_0-AND-w_-1-endvertex-first-case}
	}
	\hfil
	\subfloat[]{
		\includegraphics[page=5,height=.18\textwidth]{lemma-two-red-edges.pdf}\label{fig:lemma-two-red-edges-NO-w_0u_0-AND-v_1-endvertex}
	}
	\caption{Illustration for the proof of \cref{lem:two-red-edges}, when $(w_0,u_0) \notin E({\cal C})$ and $(u_0,z_0) \in E({\cal C})$. In (b) and (c), the vertex $w_{-1}$ is supposed to be a black end-vertex of~$\cal C$. In (d), the vertex $v_1$ is supposed to be a black end-vertex of~$\cal C$.}
	\label{fig:lemma-two-red-edges-NO-w_0u_0-global}
\end{figure}

Further, since $v_1$ is adjacent to $w_{-1}$ and $y_1$ in $\cal C$, we have that $\cal C$ does not contain the black saturating edge $(v_1,u_1)$. Thus, by \cref{le:quadrangles}, $\cal C$ contains the red saturating edge $(w_0,v_2)$, which lies in the interior of the $4$-cycle $(w_0,u_1,v_2,v_1)$ in~$\cal E'$. Again by \cref{le:quadrangles}, we have that $\cal C$ contains the black saturating edge $(w_1,u_1)$, which lies in the interior of the cycle $(w_0,w_1,v_2,u_1)$, as this cycle bounds a face in~$\cal E$ (see \cref{fig:lemma-two-red-edges-NO-w_0u_0-AND-w_-1-endvertex-first-case}). Since the only black vertices the vertex $u_1$ shares a face of~$\cal E$ with are $w_1$ and $v_1$, and since $\cal C$ does not contain the black saturating edge $(v_1,u_1)$, it follows that $u_1$ is a black end-vertex of~$\cal C$. However, the path $(w_{-1},v_1,y_1,z_1,w_1,u_1)$ of black saturating edges of $\cal C$ does not span all the black vertices, which contradicts the assumption that $\cal C$ is a saturator.

%

\item Suppose next that $v_{1}$ is a black end-vertex of~$\cal C$ and that $w_{-1}$ is not; refer to \cref{fig:lemma-two-red-edges-NO-w_0u_0-AND-v_1-endvertex}. Since $v_{1}$ is a black end-vertex of~$\cal C$, the black saturating edges $(u_1,v_1)$ and $(v_1,y_1)$ do not belong to $\cal C$. Therefore, by \cref{le:quadrangles}, $\cal C$ contains the red saturating edges $(w_0,v_2)$ and $(z_0,x_2)$, which lie in the interior of the $4$-cycles $(w_0,u_1,v_2,v_1)$ and $(z_0,v_1,x_2,y_1)$ in~$\cal E'$, respectively, as these cycles bound faces of~$\cal E$. Consequently, $\cal C$ does not contain the black saturating edges $(v_1,u_1)$ and $(v_1,y_1)$. Since the only black vertices the vertex $u_1$ shares a face of~$\cal E$ with are $w_1$ and $v_1$, and since $\cal C$ does not contain the black saturating edge $(v_1,u_1)$, it follows that $u_1$ is a black end-vertex of~$\cal C$. Analogously, since the only black vertices the vertex $y_1$ shares a face of~$\cal E$ with are $z_1$ and $v_1$, and since $\cal C$ does not contain the black saturating edge $(v_1,y_1)$, it follows that $y_1$ is a black end-vertex of~$\cal C$. Therefore, $\cal C$ contains three black end-vertices, namely $u_1$, $y_1$, and $v_1$, a contradiction.
\end{itemize}

Suppose now that $(w_0,u_0) \in E({\cal C})$ and $(u_0,z_0) \notin E({\cal C})$; refer to \cref{fig:lemma-two-red-edges-NO-u_0z_0-global}. Since $(u_0,z_0) \notin E({\cal C})$, by \cref{le:quadrangles} we have that $\cal C$ contains the black saturating edge $(w_{-1},v_1)$, which lies in the interior of the $4$-cycle $(w_{-1}, u_0, v_{1}, z_{0})$ in~$\cal E'$, as this cycle bounds a face of~$\cal E$; refer to \cref{fig:lemma-two-red-edges-NO-u_0z_0}. 

Since the only red vertices $u_0$ shares a face of~$\cal E$ with are $w_0$ and $z_0$, our assumption implies that $u_0$ is a red end-vertex of~$\cal C$. 
Also, since the only black vertices $u_0$ shares a face of~$\cal E$ with are $v_1$ and $w_{-1}$, by \cref{prop:edpointsAREclose} one of such vertices must be a black end-vertex of~$\cal C$. Note that $w_{-1}$ and $v_1$ are not both black end-vertices of~$\cal C$, as otherwise $\cal C$ would not span all the black vertices of $F_k$, given that the path $P$ of black saturating edges would only consist of the edge $(w_{-1},v_1)$.

\begin{figure}[tb!]
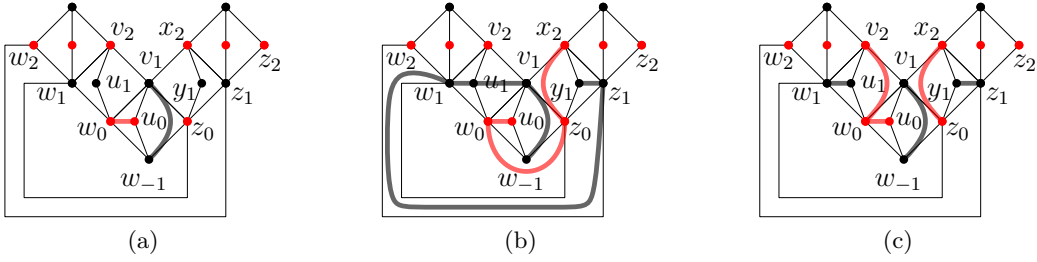

\centering
\subfloat[]{
\includegraphics[page=6,height=.18\textwidth]{lemma-two-red-edges.pdf}\label{fig:lemma-two-red-edges-NO-u_0z_0}
}\hfil
\subfloat[]{
\includegraphics[page=7,height=.18\textwidth]{lemma-two-red-edges.pdf}\label{fig:lemma-two-red-edges-NO-u_0z_0-AND-w_-1-endvertex}
}\hfil
\subfloat[]{
\includegraphics[page=8,height=.18\textwidth]{lemma-two-red-edges.pdf}\label{fig:lemma-two-red-edges-NO-u_0z_0-AND-v_1-endvertex}
}
\caption{Illustration for the proof of \cref{lem:two-red-edges}, when $(w_0,u_0) \in E({\cal C})$ and $(u_0,z_0) \notin E({\cal C})$. In (b), the vertex $w_{-1}$ is supposed to be a black end-vertex of~$\cal C$. In (c), the vertex $v_1$ is supposed to be a black end-vertex of~$\cal C$.}
\label{fig:lemma-two-red-edges-NO-u_0z_0-global}
\end{figure}

\begin{itemize}
	\item Suppose that $w_{-1}$ is a black end-vertex of~$\cal C$ and that $v_1$ is not; refer to \cref{fig:lemma-two-red-edges-NO-u_0z_0-AND-w_-1-endvertex}. Since $w_{-1}$ is a black end-vertex of~$\cal C$ and since the edge $(w_{-1},v_1)$ belongs to $\cal C$, the black saturating edge $(w_{-1},w_1)$ does not belong to $\cal C$. Therefore, by \cref{le:quadrangles}, $\cal C$ contains the red saturating edge $(w_0,z_0)$, which lies in the interior of the $4$-cycle $(w_{-1}, z_0, w_{1}, w_{0})$ in~$\cal E'$, as this cycle bounds a face of~$\cal E$.
	Also, since $w_0$ is adjacent to $u_0$ and $z_0$ in $\cal C$, we have that $\cal C$ does not contain the red saturating edge $(w_0,v_2)$. Thus, by \cref{le:quadrangles}, $\cal C$ contains the black saturating edges $(w_1,u_1)$ and $(u_1,v_1)$, which lie in the interior of the $4$-cycles $(w_0,w_1,v_2,u_1)$ and $(w_0,u_1,v_2,v_1)$ in~$\cal E'$, respectively, as these cycles bound faces of~$\cal E$. 
	This, in turn, implies that $v_1$ is adjacent to $u_1$ and $w_{-1}$ in $\cal C$, and thus $\cal C$ does not contain the black saturating edge $(v_1,y_1)$. Therefore, by \cref{le:quadrangles}, $\cal C$ contains the red saturating edge $(z_0,x_2)$, which lies in the interior of the $4$-cycle $(z_0,v_1,x_2,y_1)$ in~$\cal E'$, as this cycle bounds a face in~$\cal E$. 
	Then \cref{le:quadrangles} implies that $\cal C$ contains the black saturating edge $(y_1,z_1)$, which lies in the interior of the $4$-cycle $(z_0,y_1,x_2,z_1)$ in~$\cal E'$, as this cycle bounds a face in~$\cal E$. 
	Since the only black vertices $y_1$ shares a face of~$\cal E$ with are $v_1$ and $z_1$, we have that $y_1$ is a black end-vertex of~$\cal C$.
	Also, since $z_0$ is adjacent to $x_2$ and $w_0$ in $\cal C$, the red saturating edge $(z_0,w_2)$ does not belong $\cal C$, and thus by \cref{le:quadrangles}, $\cal C$ contains the black saturating edge $(z_1,w_1)$, which lies in the interior of the $4$-cycle $(z_0,z_1,w_2,w_1)$ in~$\cal E'$, as this cycle bounds a face of~$\cal E$.  However, the path $(w_{-1},v_1,u_1,w_1,z_1,y_1)$ of black saturating edges of $\cal C$ does not span all the black vertices, which contradicts the assumption that $\cal C$ is a saturator.
	\item In the case in which $v_{1}$ is a black end-vertex of~$\cal C$ and $w_{-1}$ is not, a contradiction can be derived exactly as in the case in which the same assumptions are satisfied with $(u_0,z_0) \in E({\cal C})$ and $(w_0,u_0) \notin E({\cal C})$; refer also to \cref{fig:lemma-two-red-edges-NO-u_0z_0-AND-v_1-endvertex}.
\end{itemize}

Since a contradiction has been obtained in every case, it follows that $\cal C$ contains both the red saturating edges $(w_0,u_0)$ and $(u_0,z_0)$. This concludes the proof of the lemma.
\end{proof}

We continue the investigation of the structure of a saturator of $F_k$ with the following.

\begin{lemma}\label{lem:end-vertices-and-saturators}
Let $\cal C$ be any saturator of the frame $F_k$. Then the following hold:
\begin{enumerate}[(a)]
	\item \label{end-vertices} the vertices $w_0$ and $z_{k+1}$ are the red end-vertices of~$\cal C$, and the vertices $w_{-1}$ and $w_{k+2}$ are the black end-vertices of~$\cal C$,
	\item \label{edges:1} $\cal C$ contains the black saturating edges $(w_{-1},w_1)$ and $(w_{k+2},z_k)$, and the red saturating edges $(z_0,w_2)$ and $(w_{k+1},z_{k-1})$, and
	\item \label{edges:2} $\cal C$ contains the saturating edges $(w_i,u_i)$, $(u_i,v_i)$, $(x_i,y_i)$, 
	$(y_i,z_i)$, $(z_i,w_{i+2})$, for $i=1,\dots, k$, where $x_1=v_1$ and $x_k = v_k$.
\end{enumerate}
\end{lemma}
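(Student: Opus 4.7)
The plan is to extend the propagation argument from \cref{lem:two-red-edges} by iteratively applying \cref{le:quadrangles} to the $4$-faces of the unique (up to a flip) planar embedding $\mathcal{E}$ of $F_k$, combined with the structural constraint that the red saturating path $R$ and the black saturating path $P$ forming $\mathcal{C}$ are simple paths spanning all red and all black vertices of $F_k$, respectively. By \cref{lem:two-red-edges} we have $(w_0,u_0), (u_0,z_0), (w_{k+1},u_{k+1}), (u_{k+1},z_{k+1}) \in \mathcal{C}$, so $u_0$ and $u_{k+1}$ are internal vertices of $R$. All arguments are carried out on the left side of the frame; the right side follows by symmetry.

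For part (a), the red vertices sharing a face of $\mathcal{E}$ with $w_0$ are exactly $u_0$, $v_2$, and $z_0$. Since $(w_0,u_0) \in \mathcal{C}$, if $w_0$ is not a red end-vertex then $\mathcal{C}$ contains either $(w_0,z_0)$ or $(w_0,v_2)$. The former would close a triangle $w_0,u_0,z_0$ in $R$, violating the fact that $R$ is a path. The latter is excluded by a case analysis paralleling that in \cref{lem:two-red-edges}: applying \cref{le:quadrangles} to the $4$-faces incident to $v_2$ propagates constraints that eventually force $P$ to have three distinct end-vertices. Hence $w_0$ is a red end-vertex of $\mathcal{C}$. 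By \cref{prop:edpointsAREclose}, a black end-vertex of $\mathcal{C}$ must share a face with $w_0$; the only candidates are $w_{-1}$, $v_1$, and $w_1$. Ruling out $v_1$ and $w_1$ by analogous case analyses on the $4$-faces incident to them (each time propagating with \cref{le:quadrangles} until a contradiction with $P$ being simple arises) gives that $w_{-1}$ is the black end-vertex. The symmetric argument on the right side gives $z_{k+1}$ and $w_{k+2}$ as the other two end-vertices.

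Part (b) then follows from part (a) by targeted applications of \cref{le:quadrangles}. The $4$-face bounded by $(w_{-1},w_0,w_1,z_0)$ has its red diagonal $(w_0,z_0) \notin \mathcal{C}$, because $w_0$ is a red end-vertex whose unique $R$-neighbor is $u_0$; therefore its black diagonal $(w_{-1},w_1)$ belongs to $\mathcal{C}$. Similarly, the $4$-face bounded by $(z_0,z_1,w_2,w_1)$ yields $(z_0,w_2) \in \mathcal{C}$ after ruling out $(z_1,w_1) \in \mathcal{C}$ via a short propagation: if $(z_1,w_1)\in\mathcal{C}$ then $w_1$ would already have two $P$-neighbors ($w_{-1}$ and $z_1$), forbidding any edge of the form $(w_1,u_1)$ or $(w_1,u_2)$ in $\mathcal{C}$, which combined with \cref{le:quadrangles} on the $4$-faces around $w_1$ yields a contradiction. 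By symmetry one obtains $(w_{k+2},z_k), (w_{k+1},z_{k-1}) \in \mathcal{C}$.

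Part (c) is then an iterative propagation along the frame. At each stage, one diagonal of the next $4$-face has just been excluded (either by the previous step or because adding it would create a cycle or a third end-vertex in $R$ or $P$), so \cref{le:quadrangles} forces the other diagonal; this advances the frontier by one ``rung'' and produces the edges $(w_i,u_i)$, $(u_i,v_i)$, $(x_i,y_i)$, $(y_i,z_i)$, and $(z_i,w_{i+2})$ claimed in the statement. The two propagations starting from $w_0,z_0$ and from $w_{k+1},z_{k+1}$ meet in the middle of the frame at the special subdivision vertices $u_{(k+1)/2}$ and $y_{(k+1)/2}$, where the only completion of $P$ and $R$ that avoids crossings and respects the path structure is exactly the one listed. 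The chief obstacle is the repeated case analysis of parts (a) and (b); it is structurally parallel to the one in \cref{lem:two-red-edges} but must be executed carefully for several distinct configurations, each ruling out a potential ``bad'' saturating edge.
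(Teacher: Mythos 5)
Your parts (b) and (c) are sound modulo part (a) (and, for part (c), note that the ``ladder'' edges $(w_i,u_i),(u_i,v_i),(x_i,y_i),(y_i,z_i)$ need no frontier propagation or meet-in-the-middle argument at all: once the end-vertices are known, each $u_i$ and $y_i$ has exactly two same-colour face-neighbours and is not an end-vertex, so both incident edges are forced; the only genuine induction is for $(z_i,w_{i+2})$, where assuming the edge absent forces $(z_{i+1},w_{i+1})$ via \cref{le:quadrangles} and gives $w_{i+1}$ three neighbours on the red path). The genuine gap is in part (a), and it is one of ordering: your plan hinges on first excluding $(w_0,v_2)\in\mathcal C$, and you assert that a local propagation around $v_2$ ``eventually forces $P$ to have three distinct end-vertices.'' That is not what happens in both drawing cases. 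If $(w_0,v_2)$ lies inside the face $(w_0,w_1,v_2,u_1)$, the only local consequence is $(w_1,u_1)\notin\mathcal C$, hence $(u_1,v_1)\in\mathcal C$ and $u_1$ is a black end-vertex --- a single forced end-vertex. To get to three (or to a spanning failure) you must additionally know that $w_{-1}$ and $w_{k+2}$ are black end-vertices; but in your plan those facts are derived only \emph{after} the red end-vertices, via \cref{prop:edpointsAREclose} anchored at $w_0$. As written, the central exclusion is therefore unsupported (or circular). The repair is exactly the paper's ordering: the two red edges $(w_0,u_0),(u_0,z_0)$ from \cref{lem:two-red-edges} occupy both faces shared by $w_{-1}$ and $v_1$, so $(w_{-1},v_1)\notin\mathcal C$ by planarity, whence $(w_{-1},w_1)\in\mathcal C$ and $w_{-1}$ (symmetrically $w_{k+2}$) is a black end-vertex essentially for free; only then does one invoke \cref{prop:edpointsAREclose} at $w_{-1}$, whose red face-neighbours are just $w_0,u_0,z_0$ with $u_0$ already saturated, reducing part (a) to a single propagation ruling out ``$z_0$ is an end-vertex and $w_0$ is not.''

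A second, smaller defect: after declaring $w_0$ a red end-vertex you list the candidate black end-vertices sharing a face with $w_0$ as $w_{-1}$, $v_1$, $w_1$, but $u_1$ also shares the faces $(w_0,u_1,v_2,v_1)$ and $(w_0,w_1,v_2,u_1)$ with $w_0$, so ruling out only $v_1$ and $w_1$ would not let you conclude that $w_{-1}$ is the black end-vertex even if those ``analogous case analyses'' were carried out; and those analyses, like the $(w_0,v_2)$ exclusion, are asserted rather than executed, each requiring its own multi-step propagation. So the toolkit (\cref{le:quadrangles}, \cref{prop:edpointsAREclose}, path-degree constraints, left--right symmetry) matches the paper's, and parts (b)--(c) essentially reproduce its arguments, but part (a) needs to be reorganised with the black end-vertices established first, or else the missing case analyses must be written out in full.
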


\begin{proof}
Let $\cal E'$ be any planar embedding of $F_k \cup E(\cal C)$. We first prove Properties~(a) and~(b).

%
%
%
%

\begin{itemize}
	\item The only black vertices the vertex $w_{-1}$ shares a face of~$\cal E$ with are $w_1$ and $v_1$. By \cref{lem:two-red-edges}, $\cal C$ contains the saturating red edges $(w_0,u_0)$ and $(u_0,z_0)$, hence it does not contain the edge $(w_{-1},v_1)$. It follows that $w_{-1}$ is a black end-vertex of $\cal C$ and that $\cal C$ contains the black saturating edge $(w_{-1},w_1)$, which lies in the interior of the $4$-cycle $(w_{-1},z_0,w_1,w_0)$, as this cycle bounds a face in $\cal E$. An analogous proof shows that $w_{k+2}$ is the other black end-vertex of $\cal C$ and that $\cal C$ contains the black saturating edge $(w_{k+2},z_k)$.
	\item Next, we show that the vertex $w_0$ is a red end-vertex of $\cal C$; the proof that $z_{k+1}$ is the other red end-vertex of $\cal C$ is analogous.
	
	Since $w_{-1}$ is a black end-vertex of~$\cal C$, since the only red vertices $w_{-1}$ shares a face of~$\cal E$ with are $w_0$, $u_0$, and $z_0$, and since, by \cref{lem:two-red-edges}, $u_0$ is adjacent to both $w_0$ and $z_0$ in $\cal C$, by \cref{prop:edpointsAREclose} we have that at least one of $w_0$ and $z_0$ is a red end-vertex of~$\cal C$.

	\begin{figure}[tb!]
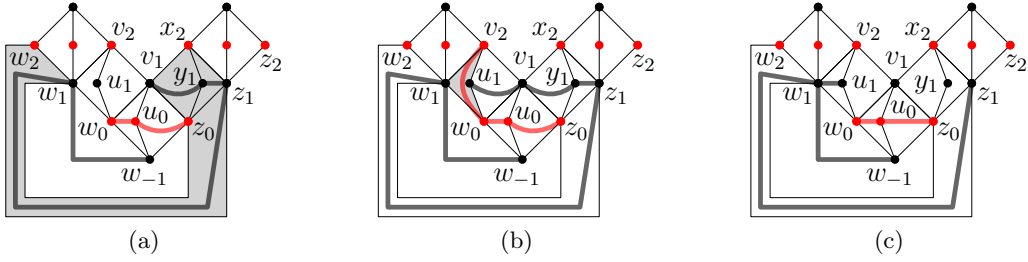

		\centering
		\subfloat[]{\includegraphics[page=6,height=.18\textwidth]{nuovo-gadget-rod.pdf}\label{fig:fact-end-vertices-1}}\hfil
		\subfloat[]{\includegraphics[page=8,height=.18\textwidth]{nuovo-gadget-rod.pdf}\label{fig:fact-end-vertices-2}}\hfil
		\subfloat[]{\includegraphics[page=5,height=.18\textwidth]{nuovo-gadget-rod.pdf}\label{fig:fact-end-vertices-4}}
		\caption{Illustrations for the proof of Properties~(a) and~(b) of \cref{lem:end-vertices-and-saturators}.}
		\label{fig:fact-end-vertices-global-2}
	\end{figure}
		
The vertices $w_0$ and $z_0$ cannot be both red end-vertices of~$\cal C$, as the path $(w_0,u_0,z_0)$ does not span all the red vertices. Suppose, for a contradiction, that $z_0$ is a red end-vertex of~$\cal C$ and that $w_0$ is not. Then $\cal C$ contains neither the red saturating edge $(z_0,x_2)$ nor the red saturating edge $(z_0,w_2)$. Therefore, by \cref{le:quadrangles}, we have that $\cal C$ contains the black saturating edges $(v_1,y_1)$, $(y_1,z_1)$, and $(z_1,w_1)$, which lie in the interior of the $4$-cycles $(z_0,v_1,x_2,y_1)$, $(z_0,y_1,x_2,z_1)$, and $(z_0,z_1,w_2,w_1)$ in~$\cal E'$ (shaded gray in \cref{fig:fact-end-vertices-1}), respectively, as these cycles bound faces of~$\cal E$. Further, since $w_1$ is adjacent to $w_{-1}$ and $z_1$ in $\cal C$, we have that $\cal C$ does not contain the black saturating edge $(w_1,u_1)$. Thus, by \cref{le:quadrangles}, $\cal C$ contains the red saturating edges $(w_0,v_2)$, which lies in the interior of the $4$-cycle $(w_0,w_1,v_2,u_1)$, as this cycle bounds a face of~$\cal E$ (shaded gray in \cref{fig:fact-end-vertices-2}). Then, by \cref{le:quadrangles}, $\cal C$ contains the black saturating edge $(u_1,v_1)$, which lies in the interior of the $4$-cycle $(w_0,u_1,v_2,v_1)$ in~$\cal E'$, as this cycle bounds a face of $\cal E$. Since the only black vertices $u_1$ shares a face of~$\cal E$ with are $w_1$ and $v_1$ and since $\cal C$ does not contain the black saturating edge $(w_1,u_1)$, it follows that $u_1$ is a black end-vertex of~$\cal C$. However, the path $(w_{-1},w_1,z_1,y_1,v_1,u_1)$ of black saturating edges of $\cal C$ does not span all the black vertices, which contradicts the assumption that $\cal C$ is a saturator. This contradiction proves that $w_0$ is a red end-vertex of $\cal C$.
	\item Finally, we show that $\cal C$ contains the red saturating edge $(z_0,w_2)$; the proof that $\cal C$ contains the red saturating edge $(z_{k-1},w_{k+1})$ is analogous; refer to \cref{fig:fact-end-vertices-4}. 
	Suppose, for a contradiction, that $\cal C$ does not contain the red saturating edge $(z_0,w_2)$. Then, by \cref{le:quadrangles}, we have that $\cal C$ contains the black saturating edge $(z_1,w_1)$, which lies in the interior of the $4$-cycle $(z_0,z_1,w_2,w_1)$ in~$\cal E'$, as this cycle bounds a face of $\cal E$. 
	Moreover, since $w_0$ is a red end-vertex of $\cal C$ and since, by \cref{lem:two-red-edges}, $w_0$ is adjacent to $u_0$ in $\cal C$, we have that $\cal C$ does not contain the red saturating edge $(w_0,v_2)$. \cref{le:quadrangles} then implies that $\cal C$ contains the black saturating edge $(w_1,u_1)$, which lies in the interior of the $4$-cycle $(w_0,w_1,v_2,u_1)$ in $\cal E'$, as this cycle bounds a face of $\cal E$. Therefore, we have that $w_1$ is adjacent to $u_1$, $z_1$, and $w_{-1}$ in $\mathcal C$, which contradicts the black saturating edges induce a path in $\cal C$. 
\end{itemize}

It remains to prove Property~(c) of the statement.

\begin{itemize}
	\item First, we show that $\cal C$ contains the saturating edges $(w_i,u_i)$, $(u_i,v_i)$, $(x_i,y_i)$, and $(y_i,z_i)$, for $i=1,2,\dots, k$, where $x_1=v_1$ and $x_k = v_k$. Observe that the only vertices of the color class of $u_i$ which share a face of $\cal E$ with $u_i$ are $w_i$ and $v_i$. Hence, $\cal C$ contains at least one of the edges $(w_i,u_i)$ and $(u_i,v_i)$. Further, if $\cal C$ contains just one of these two edges, then $u_i$ is an end-vertex of $\cal C$ (for its color class), while the end-vertices of $\cal C$ are $w_{-1}$, $w_0$, $z_{k+1}$, and $w_{k+2}$, by Property~(a). The proof that $\cal C$ contains both the edges $(x_i,y_i)$ and $(y_i,z_i)$, for $i=1,2,\dots, k$, is analogous.
	\item Second, we prove that $\cal C$ contains the edge $(z_i,w_{i+2})$, for $i=1,\dots, k$; refer to \cref{fig:proof-induction}. Assume that $(z_{i-1},w_{i+1}) \in E(\cal C)$, for some $i\in \{1,\dots,k\}$; we prove that $(z_{i},w_{i+2}) \in E(\cal C)$ as well. This is enough to prove that $\cal C$ contains all the edges $(z_1,w_{3}),(z_2,w_{4}),\dots,(z_k,w_{k+2})$, given that $(z_{0},w_{2}) \in E(\cal C)$, by Property~(b). Suppose, for a contradiction, that $(z_{i},w_{i+2}) \notin E(\cal C)$. By \cref{le:quadrangles}, we have that $\cal C$ contains the red saturating edge $(z_{i+1},w_{i+1})$, which lies in the interior of the $4$-cycle $(z_i,z_{i+1},w_{i+2},w_{i+1})$ in~$\cal E'$, as this cycle bounds a face of~$\cal E$. However, this implies that $w_{i+1}$ is adjacent to $z_{i+1}$, to $u_{i+1}$ (as proved in the first item of this list), and to $z_{i-1}$ (by assumption) in $\cal C$, a contradiction. 
\end{itemize}

	\begin{figure}[tb!]
		\centering
		\includegraphics[page=18,width=.35\textwidth]{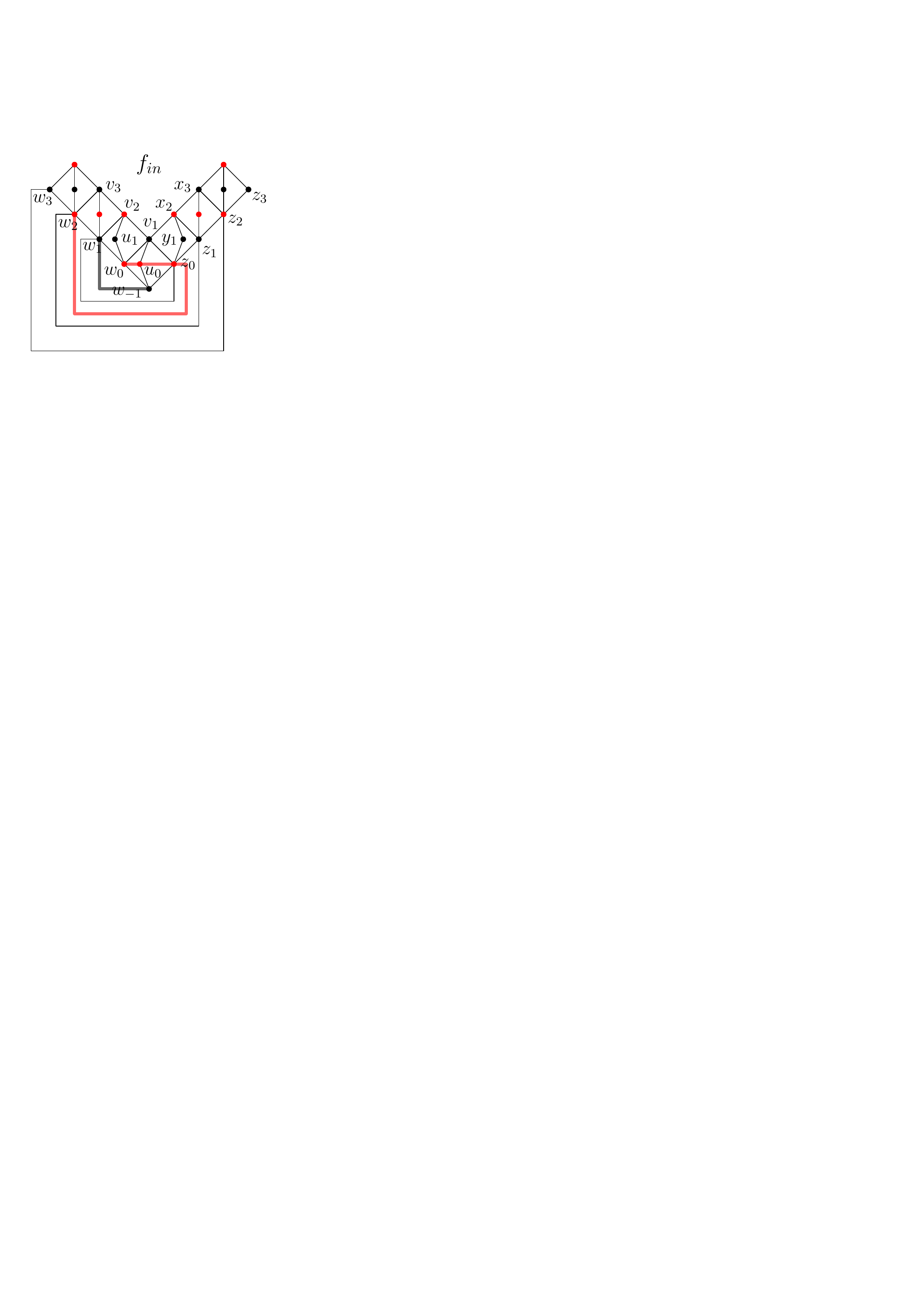}
		\caption{Illustrations for the proof of Property~(c) of \cref{lem:end-vertices-and-saturators}.}
		\label{fig:proof-induction}
	\end{figure}

This completes the proof of Property~(c) and hence of the lemma.
\end{proof}

We now turn our attention to the edges of the saturator that lie in $f_{in}$.

\begin{lemma}\label{lem:saturating-matching}
Any saturator $\cal C$ of the frame $F_k$ contains the saturating edges of the matching $M:=\{(v_i,x_i): 2 \leq i \leq k-1\}$; see \cref{fig:frame-b}. Moreover, all and only the saturating edges of~$M$ lie in the interior of the region delimited by the inner cycle~$c_{in}$ in any planar embedding of~$F_k \cup E(\cal C)$.
\end{lemma}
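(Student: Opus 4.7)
The proof plan proceeds in three main steps, followed by a short verification of the converse direction.

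\textbf{Step 1 (structure of the known saturating edges).} Starting from \cref{lem:end-vertices-and-saturators}, I would trace the paths formed by the already-known saturating edges of~$\mathcal{C}$. The black saturating edges decompose into a disjoint union of paths: a chain from~$w_{-1}$ through $w_1, u_1, v_1, y_1, z_1, w_3, u_3$ to~$v_3$; for each odd $j\in\{3,5,\dots,k-4\}$, a chain from~$x_j$ through $y_j, z_j, w_{j+2}, u_{j+2}$ to~$v_{j+2}$; and a final chain from~$x_{k-2}$ through $y_{k-2}, z_{k-2}, w_k, u_k, v_k, y_k, z_k$ to~$w_{k+2}$, using $v_k=x_k$. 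A symmetric enumeration applies to the red saturating edges, using \cref{lem:two-red-edges} and property~(b). Counting incidences, every vertex of~$F_k$ already has its correct degree in the currently-known saturating subgraph (degree~$2$, or degree~$1$ for the four listed end-vertices of~$P$ and~$R$) \emph{except} for the vertices $v_i$ and $x_i$ with $i\in\{2,\dots,k-1\}$, each of which is incident to exactly one known saturating edge.

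\textbf{Step 2 (missing saturating edges live inside~$f_{in}$).} Since~$\mathcal{C}$ is a Hamiltonian cycle, each deficient vertex $v_i$ or $x_i$ needs exactly one further saturating edge. In the unique planar embedding of~$F_k$, every such vertex is incident to~$f_{in}$ and to a small number of $4$-faces external to~$c_{in}$. A case analysis on these external faces shows that every same-color vertex with which $v_i$ (or~$x_i$) shares a non-$f_{in}$ face belongs to the outer structure (specifically, among the~$u_j$'s and~$w_j$'s) and is already saturated to its required degree by the edges listed in \cref{lem:end-vertices-and-saturators}. Therefore, the additional saturating edge incident to each $v_i$ (or~$x_i$) must be drawn as a chord of~$c_{in}$ inside~$f_{in}$. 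By planarity of $F_k\cup E(\mathcal{C})$ together with the degree count, the set~$M'$ of all such additional saturating edges is a pairwise non-crossing perfect matching on $\{v_i, x_i : 2\le i\le k-1\}$ in which every chord joins two vertices of the same color.

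\textbf{Step 3 (forcing $M'=M$).} This is the step I expect to be the main obstacle. Consider the red vertex~$v_2$ and suppose for contradiction that its $M'$-partner is some red vertex $y\neq x_2$. A case analysis on the strict color alternation along $c_{in}$ shows that no such chord $(v_2,y)$ allows a valid non-crossing same-color matching of the unsaturated vertices trapped on the ``short'' arc it cuts off: in every case, some unsaturated vertex on that arc has all of its same-color matching candidates on the complementary arc, violating the non-crossing condition of Step~2. Hence $(v_2,x_2)\in M'$. The chord $(v_2,x_2)$ cuts off the corner of $c_{in}$ containing the already fully saturated vertex $v_1=x_1$, and applying the same peeling argument inductively to the complementary sub-arc then forces $(v_3,x_3), (v_4,x_4), \dots, (v_{k-1},x_{k-1})\in M'$, yielding $M'=M$ and hence $M\subseteq\mathcal{C}$.

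\textbf{Step 4 (the ``all and only'' part).} Finally, any saturating edge of~$\mathcal{C}$ lying inside~$f_{in}$ must have both endpoints on~$c_{in}$. Inspection of the saturating edges explicitly listed in \cref{lem:two-red-edges,lem:end-vertices-and-saturators} shows that each has at least one endpoint off~$c_{in}$ and therefore does not lie inside~$f_{in}$. Combined with Step~3, this shows that the saturating edges of~$\mathcal{C}$ inside~$f_{in}$ are precisely the edges of~$M$, completing the proof.
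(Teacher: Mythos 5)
Your proposal is correct, and its skeleton is the same as the paper's: both start from the edge set $X$ forced by \cref{lem:two-red-edges,lem:end-vertices-and-saturators}, use a degree count to see that exactly the vertices $v_i,x_i$ with $2\le i\le k-1$ are each missing one monochromatic saturating edge, deduce that the missing edges form a non-crossing, same-color perfect matching drawn inside $c_{in}$, and then force that matching to be $M$. You differ only in how the two sub-steps are executed. For the localization step, the paper observes that adding $X$ to $F_k$ turns every face other than $f_{in}$ into a triangle (each $4$-face of $F_k$ receives exactly one edge of $X$), so no further saturating edge can be embedded outside $c_{in}$; you instead check that every same-color vertex sharing an external face with a deficient $v_i$ or $x_i$ is already at its required saturating degree, which reaches the same conclusion after the (only sketched, but routine) inspection of the external $4$-faces. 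For the identification step, the paper first rules out chords $(v_i,v_j)$ and $(x_i,x_j)$ via a minimal-gap argument and then forces $i=j$ in $(v_i,x_j)$ by counting internal vertices on the two arcs of $c_{in}$, whereas you peel off $(v_2,x_2)$ by parity (any other partner traps an odd number of unmatched deficient vertices, or a consecutive color-alternating block that admits no same-color non-crossing matching) and induct along $c_{in}$; both work, your peeling being somewhat more uniform at the price of tracking the non-deficient vertices $v_1=x_1$ and $v_k=x_k$ in the counts. For the ``only'' half, a full write-up should also note that the two mixed black--red edges of $\mathcal C$ join end-vertices among $w_{-1},w_0,w_{k+2},z_{k+1}$, none of which lies on $c_{in}$, so they cannot lie inside $f_{in}$ either; this is a one-line addition, not a real gap.
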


\begin{fullproof}
Let $\cal E'$ be any planar embedding of $F_k \cup E(\cal C)$. Let $X$ be the set of saturating edges belonging to~$\cal C$ listed in the statements of \cref{lem:two-red-edges,lem:end-vertices-and-saturators}. Namely, the set $X$ contains:
\begin{inparaenum}
\item  the red saturating edges $(w_0,u_0)$, $(u_0,z_0)$, $(w_{k+1},u_{k+1})$, and $(u_{k+1},z_{k+1})$ (from \cref{lem:two-red-edges}),~and 
\item the black saturating edges $(w_{-1},w_1)$ and $(w_{k+2},z_k)$, the red saturating edges $(z_0,w_2)$ and $(w_{k+1},z_{k-1})$, and the saturating edges $(w_i,u_i)$, $(u_i,v_i)$, $(x_i,y_i)$, $(y_i,z_i)$, $(z_i,w_{i+2})$, for $i=1,\dots, k$, where $x_1=v_1$ and $x_k = v_k$ (from \cref{lem:end-vertices-and-saturators}).
\end{inparaenum}

We first observe that, for any planar embedding $\mathcal E_k$ of $F_k \cup X$, there is a face of $\mathcal E_k$ which is delimited by $c_{in}$, while every other face of $\mathcal E_k$ is delimited by a $3$-cycle. This observation descends from the existence of a bijection between the edges of $X$ and the faces of $\cal E$ delimited by $4$-cycles. That is, for every edge $(a,b) \in X$ there is a unique face of $\cal E$ incident to both $a$ and $b$; further, such a face is delimited by a $4$-cycle. Conversely, for every face of $\cal E$ that is delimited by a $4$-cycle, there is a unique edge $(a,b) \in X$ whose end-vertices are both incident to $f$. Therefore, since all the faces of $\cal E$, except for $f_{in}$, are delimited by $4$-cycles, all the faces of $\mathcal E_k$ are delimited by $3$-cycles, except for the one delimited by the inner cycle $c_{in}$, which bounds $f_{in}$ in~$\cal E$.



Let $M$ be a set consisting of the red saturating edges of $\cal C$ and of the black saturating edges of $\cal C$ that are not in $X$. In the following, we show that:
\begin{enumerate}[(A)]
\item the edges in $M$ form a matching that, in any planar embedding of $F_k \cup E(\cal C)$, lies in the region delimited by the inner cycle $c_{in}$, and
\item $M=\{(v_i,x_i): 2 \leq i \leq k-1\}$.
\end{enumerate}

{\noindent \em Proof of (A)}. Since in any planar embedding $\mathcal E_k$ of $F_k \cup X$, there is a face of $\mathcal E_k$ which is delimited by $c_{in}$, while every other face of $\mathcal E_k$ is delimited by a $3$-cycle, it follows that the edges of $M$ lie in the interior of $c_{in}$ in $\cal E'$ and that, therefore, their end-vertices belong~to~$c_{in}$.

Now consider the vertices $v_i$'s and $x_i$'s. By \cref{lem:end-vertices-and-saturators}, we have that: 
\begin{enumerate}
	\item $v_1$ and $v_k$ are incident to two saturating edges in $X$, and
	\item for $i=2,\dots,k-1$, each of $v_i$ and $x_i$ is incident to one saturating edge in $X$.
\end{enumerate}
By \cref{lem:end-vertices-and-saturators}, we have that no vertex $v_i$ and no vertex $x_i$ is a black or red end-vertex of $\mathcal C$. Hence, $v_1$ and $v_k$ have degree $0$ in $M$, while $v_i$ and $x_i$ have degree $1$ in $M$, for $i=2,\dots,k-1$. This implies that the edges in $M$ form a matching.

\smallskip
{\noindent \em Proof of (B)}. Consider the subpaths $P_\ell=(v_2,v_3,\dots,v_{k-1})$ and $P_r=(x_2,x_3,\dots,x_{k-1})$ of~$c_{in}$.

First, we show that there exists no edge $(v_i,v_j)$ in $M$, with $2\leq i<j\leq k-1$; the proof that there exists no edge $(x_i,x_j)$ in $M$, with $2 \leq i<j \leq k-1$, is analogous. 
Suppose, for a contradiction, that $M$ contains an edge $(v_i,v_j)$ with $2 \leq i < j \leq k-1$. Assume, w.l.o.g., that $j-i$ is minimum among all the edges $(v_i,v_j)$ in $M$ with $2 \leq i < j \leq k-1$. Since $v_i$ and $v_j$ both belong to $V_r$ or both belong to $V_b$, and since $F_k$ is bipartite, we have that the subpath of $P_{\ell}$ between $v_i$ and $v_j$ contains at least one internal vertex, namely $v_{i+1}$. By the minimality of $j-i$, the edge in $M$ incident to $v_{i+1}$ is also incident to a vertex of $c_{in}$ that does not belong to the subpath of $P_{\ell}$ between $v_i$ and $v_j$. 
Since all the edges in $M$ lie in the interior of $c_{in}$ in $\cal E'$, such an edge crosses $(v_i,v_j)$, which contradicts the planarity of $\cal E'$.

It follows that every edge in $M$ is of the form $(v_i,x_j)$, with $2 \leq i \leq k-1$ and $2 \leq j \leq k-1$. We show that no edge $(v_i,x_j)$ in $M$ is such that $i \neq j$; this completes the proof that $M=\{(v_i,x_i): 2 \leq i \leq k-1\}$. Indeed, if an edge $(v_i,x_j)$  with $i \neq j$ is in $M$, then the subpath of $c_{in}$ connecting $v_i$ and $x_j$ and passing through $v_k$ contains a different number of internal vertices belonging to $P_{\ell}$ and to $P_r$. Therefore, $M$ contains either an edge crossing $(v_i,x_j)$ in $\cal E'$, contradicting the planarity of $\cal E'$, or an edge $(v_{i'},v_{j'})$ with $2 \leq i' < j' \leq k-1$, or an edge $(x_{i'},x_{j'})$ with $2 \leq i' < j' \leq k-1$, where the last two cases have been ruled out above. 
\end{fullproof}

Altogether \cref{lem:two-red-edges,lem:end-vertices-and-saturators,lem:saturating-matching} imply the following.

\begin{lemma}\label{thm:unique-saturator}
The frame $F_k$ admits a unique saturator $\cal C$ containing the set $M=\{(v_i,x_i): 2 \leq i \leq k-1\}$ of saturating edges. Further, in any planar embedding of $F_k \cup E(\cal C)$, all and only the saturating edges in $M$ lie in the interior of the region delimited by the inner cycle $c_{in}$.
\end{lemma}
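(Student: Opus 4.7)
The plan is to consolidate the three preceding lemmas into a single statement about any saturator $\mathcal{C}$ of $F_k$, and then to argue that the edges they force already constitute a full saturator, leaving no freedom.

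First, I would define $X$ as the union of the saturating edges whose membership in $\mathcal{C}$ is asserted by \cref{lem:two-red-edges,lem:end-vertices-and-saturators}, namely: the red edges $(w_0,u_0)$, $(u_0,z_0)$, $(w_{k+1},u_{k+1})$, $(u_{k+1},z_{k+1})$, $(z_0,w_2)$, $(w_{k+1},z_{k-1})$; the black edges $(w_{-1},w_1)$ and $(w_{k+2},z_k)$; and the edges $(w_i,u_i)$, $(u_i,v_i)$, $(x_i,y_i)$, $(y_i,z_i)$, $(z_i,w_{i+2})$ for each $i=1,\dots,k$, where $x_1=v_1$ and $x_k=v_k$. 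Together, those two lemmas give $X\subseteq E(\mathcal{C})$. By \cref{lem:saturating-matching}, we additionally have that $M=\{(v_i,x_i):2\leq i\leq k-1\}\subseteq E(\mathcal{C})$, and that the matching edges of $M$ are precisely the saturating edges of $\mathcal{C}$ drawn inside $c_{in}$.

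Second, I would verify by direct inspection that $X\cup M$ is itself a saturator of $F_k$. Concretely, I would enumerate the degrees in $X\cup M$ of each vertex and then show that the black saturating edges form a single Hamiltonian path on the black vertices, and the red saturating edges form a Hamiltonian path on the red vertices. Following the indices, the black path runs $w_{-1}, w_1, u_1, v_1, y_1, z_1, w_3, u_3, v_3, y_3, z_3, w_5, \dots$, alternating along the two sides of $c_{in}$ while climbing through the even-indexed vertices of the outer cycle, and terminates at $w_{k+2}$; the red path runs analogously from $w_0$ to $z_{k+1}$ through the odd-indexed vertices. The two endpoint-joining edges $(w_{k+2}, w_0)$ and $(z_{k+1}, w_{-1})$ complete the Hamiltonian cycle (alternatively stated, closing up the two paths into a single cycle that visits all black vertices consecutively and then all red vertices consecutively). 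Planarity of $F_k\cup (X\cup M)$ is immediate: every edge of $X$ can be placed inside a distinct $4$-face of the planar embedding $\cal E$ of $F_k$, and the edges of $M$ form a non-crossing matching inside the single $2k$-face $f_{in}$.

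Third, once $X\cup M$ is known to be both contained in every saturator and itself a saturator, uniqueness follows: any strict superset of $X\cup M$ that were still a saturator would violate the degree-$2$ constraint imposed on the non-end-vertices by the two monochromatic paths, so $\mathcal{C}=X\cup M$. The second assertion of the statement, that in any planar embedding of $F_k\cup E(\mathcal{C})$ the edges of $M$ are exactly the saturating edges drawn inside the region bounded by $c_{in}$, is precisely the location statement already established in \cref{lem:saturating-matching}.

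The main obstacle is not a conceptual one but rather the bookkeeping required in the second step: correctly listing the edges of $X\cup M$ incident to each vertex and checking that they assemble into the claimed two monochromatic Hamiltonian paths. Since all the difficult case analysis was absorbed by the previous three lemmas, this final lemma is essentially an aggregation, and the proof should be short.
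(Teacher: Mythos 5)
Your overall route is the same as the paper's: the paper proves this lemma by simply observing that it follows from \cref{lem:two-red-edges,lem:end-vertices-and-saturators,lem:saturating-matching}, i.e.\ exactly the aggregation you describe, and your extra explicit check that the forced edges assemble into one black and one red Hamiltonian path (so that no further saturating edge can be added without violating the degree constraints) is a reasonable way to make that implication explicit.

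Two concrete slips in your verification step, though. First, your closure of the cycle is wrong: the black path forced by the lemmas runs from $w_{-1}$ to $w_{k+2}$ and the red path from $w_0$ to $z_{k+1}$, and the two joining edges of the saturator must be $(w_{-1},w_0)$ and $(w_{k+2},z_{k+1})$ --- the pairs that are adjacent on the outer cycle and hence share faces. The pairing you propose, $(w_{k+2},w_0)$ and $(z_{k+1},w_{-1})$, cannot be drawn planarly: every face of $F_k\cup X\cup M$ is a triangle except the faces inside $c_{in}$, and neither pair of vertices shares a face, so $F_k\cup E(\mathcal C)$ would not be planar with your joins, and your claim that $X\cup M$ (so closed) is a saturator would fail as written. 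Since the uniqueness assertion concerns the whole cycle $\mathcal C$, including the two joining edges, this is exactly the part of the statement your degree argument does not cover and where the pairing must be justified (by planarity/face-sharing), so it should be stated correctly. Second, a minor bookkeeping point: your listed black path omits the vertices $x_i$; for odd $i$ with $2\le i\le k-1$ the path passes $\dots,u_i,v_i,x_i,y_i,\dots$ using the matching edge $(v_i,x_i)\in M$ (and symmetrically the red path uses $(v_i,x_i)$ for even $i$), which is precisely why the edges of $M$ are needed to make the two monochromatic paths Hamiltonian.
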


We are ready to prove the following main result of this section.

\begin{theorem}\label{th:btpbeP-NPC}
The \btpbeP problem is \NPC.
\end{theorem}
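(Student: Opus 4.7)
The plan is to execute the reduction outlined in the proof strategy paragraph preceding this theorem. Membership in \NP{} is trivial (guess a spine ordering and page assignment, verify planarity in linear time), so I focus on \NPHN{} via a polynomial-time reduction from {\sc Odd-Leveled Planarity}. Given an instance $\langle H, k, v_1, v_k\rangle$ with $k$ odd, I would construct $G$ as the union of $H$ and $F_k$ obtained by identifying $v_1$ and $v_k$ with the correspondingly-named vertices of the inner cycle $c_{in}$; the parity of $k$ makes the bipartitions of $F_k$ and $H$ compatible at the identified vertices, so $G$ is bipartite. By \cref{le:characterization-btpbe}, it then suffices to prove that $\langle H,k,v_1,v_k\rangle$ admits a leveled planar drawing if and only if $G$ admits a saturator.

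For the $(\Rightarrow)$ direction, starting from a leveled planar drawing $\langle k,\gamma,\Sigma\rangle$ of $H$, I would take the unique saturator $\cal C$ of $F_k$ guaranteed by \cref{thm:unique-saturator} and, for each $i\in\{2,\dots,k-1\}$, replace the matching edge $(v_i,x_i)\in M$ by a path $\pi_i$ traversing the vertices of $\gamma^{-1}(i)$ in the order $\xi_i$. The result is still a Hamiltonian cycle on $V(G)$ placing the black vertices (and hence the red vertices) consecutively, since the replacements are color-respecting. Planarity of $G\cup E(\mathcal C_G)$ follows by drawing $F_k\cup E(\cal C)$ unchanged outside $f_{in}$ and using the leveled planar drawing of $H$ inside $f_{in}$: the path $\pi_i$ is routed along level $i$, and edges of $H$ between levels $i$ and $i+1$ can be routed without crossings exactly by \cref{cond:orders} of \cref{def:leveled-planar}.

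For the $(\Leftarrow)$ direction, given a saturator $\mathcal C_G$ of $G$, I would argue that the subgraph $F_k\cup (\mathcal C_G\cap (V(F_k)\times V(F_k)))$ equals $F_k\cup E(\cal C)$, where $\cal C$ is the unique saturator of $F_k$ from \cref{thm:unique-saturator}. The crucial point is that $F_k$ is a subdivision of a triconnected graph and hence has a unique planar embedding up to a flip, which forces $V(H)\setminus\{v_1,v_k\}$ to lie inside $f_{in}$ in any planar drawing of $G\cup E(\mathcal C_G)$. Consequently, no vertex of $H$ is incident to any face of $F_k$ other than $f_{in}$, so the local structural arguments of \cref{lem:two-red-edges,lem:end-vertices-and-saturators,lem:saturating-matching}—all of which reason via $4$-faces of $F_k$ outside $f_{in}$ and \cref{le:quadrangles}—go through verbatim and force the saturating edges of $\mathcal C_G$ between $F_k$-vertices to be exactly those of $\cal C$. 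The remaining saturating edges of $\mathcal C_G$ therefore form vertex-disjoint paths $\pi_2,\dots,\pi_{k-1}$ inside $f_{in}$ subdividing $M$ by the vertices of $V(H)\setminus\{v_1,v_k\}$. Defining $\gamma(u):=i$ when $u$ lies on $\pi_i$ (with $\gamma(v_1):=1$, $\gamma(v_k):=k$) and $\xi_i$ as the order of level-$i$ vertices along $\pi_i$, I recover a candidate triple $\langle k,\gamma,\Sigma\rangle$. The bipartiteness of $G$ and the black/red alternation of consecutive vertices along any saturator (from the \btpbe structure) ensure that each edge of $H$ connects adjacent levels, and planarity of $G\cup E(\mathcal C_G)$ inside $f_{in}$ yields \cref{cond:orders} of \cref{def:leveled-planar}.

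The main obstacle is the backward direction, specifically justifying that Lemmas~\ref{lem:two-red-edges}--\ref{lem:saturating-matching} still apply to the restricted saturator of $F_k$ inside $G$. The intuition is clear—vertices of $H$ are confined to $f_{in}$—but the argument requires spelling out that the triconnectivity of the underlying graph of $F_k$, together with the identification of $v_1$ and $v_k$ inside $c_{in}$, leaves every face of $F_k$ outside $f_{in}$ untouched by $H$, so none of the local $4$-face arguments of \cref{le:quadrangles} can be subverted by the presence of $H$-vertices.
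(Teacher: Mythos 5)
Your proposal is essentially the paper's reduction: same frame $F_k$, same gluing at $v_1,v_k$, membership in \NP{} dismissed as trivial, the forward direction by subdividing the matching $M$ of the unique saturator from \cref{thm:unique-saturator}, and the backward direction by confining $V(H)\setminus\{v_1,v_k\}$ to $f_{in}$ and extracting level paths from $\mathcal C_G$.

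The one substantive divergence is how the backward direction exploits \cref{thm:unique-saturator}. You assert that \cref{lem:two-red-edges,lem:end-vertices-and-saturators,lem:saturating-matching} apply ``verbatim'' to $\mathcal C_G$; strictly they do not, since those lemmas are statements about saturators of $F_k$, and the restriction of $\mathcal C_G$ to $F_k$-vertices is not a saturator of $F_k$ but a union of paths (indeed your own phrasing that the $F_k$-edges of $\mathcal C_G$ ``are exactly those of $\cal C$'' cannot be literally true, as the edges of $M$ are absent). The paper bridges this with a flattening device: it first shows that the $H$-portions of $\mathcal C_G$ form subpaths with both end-vertices on $c_{in}$ (using that a saturator contains at most two bichromatic edges), contracts each such subpath to a single edge, observes that the result is a genuine saturator of $F_k$, and only then invokes \cref{thm:unique-saturator} as a black box; the monochromaticity of each path $P_i$ (hence that $\mathcal C_G$ contains no edge of $H$, and that each level is monochromatic) then follows because $v_i$ and $x_i$ have the same color and each path carries at most one bichromatic edge. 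Your alternative---re-running the local $4$-face and end-vertex-counting arguments directly on $\mathcal C_G$, justified by the uniqueness of $F_k$'s embedding and the confinement of $H$ to $f_{in}$---can be made to work, since all those arguments live in faces of $F_k$ other than $f_{in}$ and the end-vertex counts are unaffected, but it requires redoing the case analyses of the lemmas rather than citing them, and you still need the explicit step ruling out edges of $H$ in $\mathcal C_G$ (equivalently, that the levels you define are monochromatic), which your sketch only gestures at. With that step spelled out, your argument closes the same way as the paper's, including the $K_{3,3}$-free/planarity argument for \cref{cond:orders}.
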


\begin{fullproof}
Let $\langle H, k, v_1, v_k \rangle$ be an instance of {\sc Odd-Leveled Planarity}, where $H$ is a connected bipartite graph, $k$ is an odd integer, and $v_1, v_k \in V(H)$.
We show how to construct in polynomial time a bipartite graph $G=(V_b,V_r,E)$ that admits a \btpbe if and only if $\langle H, k, v_1, v_k \rangle$ admits a leveled planar drawing $\langle k,\gamma,\Sigma\rangle$ such that $\gamma^{-1}(1)=\{v_1\}$ and $\gamma^{-1}(k)=\{v_k\}$. 

We construct the graph $G$ as follows. First, we initialize $G$ to the union of the frame~$F_k$ and of the graph $H$; note that $G$ now contains two copies of the vertex $v_1$ and two copies of the vertex $v_k$ (where a copy of each vertex belongs to $F_k$ and another copy to $H$). Then we identify the two copies of $v_1$ and we identify the two copies of $v_k$.
This concludes the construction of~$G$, which can clearly be performed in polynomial time.

Note that $G$ is a bipartite graph. This is due to the fact that $F_k$ and $H$ are bipartite and that, since $k$ is odd, the vertices $v_1$ and $v_k$ belong to the same part of the bipartition of the vertex set of each of $F_k$ and $H$.
We let $V_b$ consist of all the vertices of $H$ that belong to the same part as $v_1$ (and $v_k$), and we let $V_r$ consist of the remaining vertices of $H$.

We now show the equivalence between $\langle H, k, v_1, v_k \rangle$ and $G=(V_b,V_r,E)$.

($\Longrightarrow$) Suppose first that $\langle H, k, v_1, v_k \rangle$ admits a leveled planar drawing $\langle k,\gamma,\Sigma\rangle$ such that $\gamma^{-1}(1)=\{v_1\}$ and $\gamma^{-1}(k)=\{v_k\}$.
We show that $G$ admits a saturator $\mathcal C_G$, and thus, by \cref{le:characterization-btpbe}, it~admits~a~\btpbe{}.

We construct $\mathcal C_G$ as follows. 
By \cref{thm:unique-saturator}, we have that $F_k$ admits a saturator $\cal C$ containing the edges of the matching $M=\{(v_i,x_i): 2 \leq i \leq k-1\}$.
We initialize $\mathcal C_G$ to the edges of~$\cal C$, except for the edges of $M$.
For $i=2,\dots,k-1$, let $\alpha^i_1,\alpha^i_2,\dots,\alpha^i_{|U_i|}$ be the vertices in the level $i$ of $\langle k,\gamma,\Sigma\rangle$, in the linear order determined by $\xi_i\in \Sigma$; that is, for each $j=1,\dots,|U_i|$, we have $\xi_i(\alpha^i_j)=j$.
We add the path of saturating edges $P_i = (v_i,\alpha^i_1,\alpha^i_2,\dots,\alpha^i_{|U_i|},x_i)$~to~$\mathcal C_G$. 

In the following, we show that $\mathcal C_G$ is a saturator of $G$.

We start by proving that~$\mathcal C_G$ is a cycle that traverses all the vertices of $V_r$ (and all the vertices of $V_b$) consecutively. 
Note that $\mathcal C_G$ is obtained from the saturator~$\mathcal C$ of $F_k$ by subdividing each edge $(v_i,x_i)$ of $M$ with the vertices in $U_i$, for $i=2,\dots,k-1$. This implies that $\mathcal C_G$ is a cycle that spans all the vertices of $G$.
Thus, it remains to show that the vertices $v_i$ and $x_i$ belong to the same part of the bipartition of $V(G)$ as the vertices of $U_i$.
First, we note that the vertices in each set $U_i$ with $i$ odd (resp., with $i$ even) all belong to $V_b$ (resp., to $V_r$), given that $v_1$ belongs to $V_b$, by construction. Second, the vertices $v_i$ and $x_i$ of $F_k$ belong to $V_b$ if $i$ is odd, or to $V_r$ otherwise, by the construction of $F_k$.
%

Next, we prove that $G \cup E(\mathcal C_G)$ is planar. To this aim, we construct a planar drawing $\Gamma$ of it in three steps as follows.

First, we initialize $\Gamma$ to any planar drawing of $F_k \cup E(\mathcal C)$; see \cref{fig:frame-b}.
This is possible since the graph $F_k \cup E(\mathcal C)$ is planar, due to the fact that $\cal C$ is a saturator of $F_k$. By \cref{thm:unique-saturator}, all and only the saturating edges in $M$ lie in the interior of the region delimited by the inner cycle $c_{in}$ in $\Gamma$. In fact, these edges ``split'' the interior of the cycle $c_{in}$ into $k-1$ faces. In particular, for $i=2,\dots,k-2$, the $4$-cycle $c_i=(v_i,v_{i+1},x_{i+1},x_{i})$ bounds a face of~$\Gamma$, and the $3$-cycles $c_1 = (v_1,v_2,x_2)$ and $c_{k-1} = (v_k,x_{k-1},v_{k-1})$ bound two faces of~$\Gamma$.


Second, we extend $\Gamma$ to a  planar drawing of $F_k \cup E(\mathcal C_G)$ as follows.
For $i=2,\dots,k-1$, we replace the drawing of the edge $(v_i,x_i)$ in $\Gamma$ with a drawing of the path $P_i = (v_i,\alpha^i_1,\alpha^i_2,\dots,\alpha^i_{|U_i|},x_i)$, by subdividing the curve representing $(v_i,x_i)$ in $\Gamma$ with the vertices $\alpha^i_1,\alpha^i_2,\dots,\alpha^i_{|U_i|}$, in this order from $v_i$ to $x_i$. Clearly, $\Gamma$ remains planar after this modification. For $i=1,\dots,k-1$, let $c'_i$ be the cycle of $F_k \cup E(\mathcal C_G)$ corresponding to the cycle $c_i$ of $F_k \cup E(\mathcal C)$; that is, $c'_i$ is the cycle obtained from $c_i$ by subdividing its edges as described above.


Third, we extend $\Gamma$ to a  planar drawing of $G \cup E(\mathcal C_G)$ by inserting the edges of $H$ in~$\Gamma$ without introducing any crossings. For $i=1,\dots,k-1$, we draw the edges of~$H$ with an end-vertex in $U_i$ and an end-vertex in $U_{i+1}$ in the interior of the face of $\Gamma$ bounded by the cycle $c'_i$.
Clearly, the edges incident to $v_1$ and the edges incident to $v_k$ can easily be drawn in the interior of~$c'_1$ and of~$c'_{k-1}$, respectively, without introducing any crossings, as these edges are all incident to a common vertex.
For $i=2,\dots,k-2$, the edges in $E(H) \cap (U_i \times U_{i+1})$ can also be drawn without crossings in the interior of $c'_i$, as for any 
two edges~$(\alpha^i_a,\alpha^{i+1}_b)$ and $(\alpha^i_c,\alpha^{i+1}_d)$, we have that $a < c$ if and only if $b < d$, i.e., the end-vertices of $(\alpha^i_a,\alpha^{i+1}_b)$ and $(\alpha^i_c,\alpha^{i+1}_d)$ do not alternate in the cyclic ordering of the vertices of $c'_i$, given that $\langle k,\gamma,\Sigma\rangle$ is a leveled planar drawing of $\langle H, k, v_1, v_k \rangle$. This concludes the proof that $\mathcal C_G$ is a saturator of $G$.


($\Longleftarrow$) Suppose now that $G$ admits a \btpbe{}, and thus, by \cref{le:characterization-btpbe}, it admits a saturator $\mathcal C_G$. We show that $\langle H, k, v_1, v_k \rangle$ admits a leveled planar drawing $\langle k,\gamma,\Sigma\rangle$ such that $\gamma^{-1}(1)=\{v_1\}$ and $\gamma^{-1}(k)=\{v_k\}$.


Consider any planar drawing $\Gamma$ of $G \cup E(\mathcal C_G)$, which exists since $\mathcal C_G$ is a saturator of $G$; see \cref{fig:frame-c}. Recall that $F_k$ admits a unique (up to a flip) planar embedding $\cal E$ and that a face $f_{in}$ of $\cal E$ is bounded by the inner cycle $c_{in}$. Since $G$ is a supergraph of $F_k$, we have that restriction of $\Gamma$ to $F_k$ also contains a face bounded by the inner cycle $c_{in}$, which we again call  $f_{in}$. 
Further, since $H$ is connected, since $v_1,v_k \in V(H)$, and since $v_1$ and $v_k$ only share the face $f_{in}$ of $\cal E$, we have that the 
all the vertices in $V(H) \setminus \{v_1,v_k\}$ lie in the interior of $c_{in}$ in $\Gamma$.

We now prove that $\mathcal C_G$ contains a set of vertex-disjoint paths $P_2,\dots,P_{k-1}$ spanning $V(H)$ where, for $i=2,\dots,k-1$, the path $P_i$ satisfies the following properties: (i) the end-vertices of $P_i$ are $v_i$ and $x_i$, and every internal vertex of $P_i$ is in $V(H)$; and (ii) $P_i$ is either composed of all red saturating edges or of all black saturating edges. 

By definition of saturator and since every edge of $H$ connects two vertices in different color classes of $G$, it follows that $\mathcal C_G$ contains at most two edges of $H$ (in fact, we will argue later that $\mathcal C_G$ contains no edge of $H$). Since $\mathcal C_G$ is a cycle spanning the vertices of $G$, it follows that the vertices of $H$, together with their incident edges in $\mathcal C_G$, define a set of subpaths $Q_1,Q_2,\dots$ of $\mathcal C_G$ satisfying the following properties: (a) the end-vertices of each path $Q_i$ are two vertices of $c_{in}$; (b) the internal vertices of each path $Q_i$ are vertices of $H$; (c) the paths $Q_1,Q_2,\dots$ span $V(H)$; and (d) any two distinct paths $Q_i$ and $Q_j$ do not share any internal vertex. Note that Properties (a)--(d) do not exclude that two distinct paths $Q_i$ and $Q_j$ share an end-vertex (although we will prove later that they never do). 

Since all the vertices in a color class of $G$ appear consecutively in $\mathcal C_G$, it follows that each path $Q_i$ contains at most one edge that is neither a red saturating edge nor a black saturating edge (i.e., an edge of $H$). Hence, by removing from $\Gamma$ all the edges of $H$ that are not in $\mathcal C_G$ and by then ``flattening'' each path $Q_i$ into an edge $q_i$ (i.e., by removing every internal vertex of $Q_i$ from $\Gamma$ and by interpreting the drawing of $Q_i$ in $\Gamma$ as the drawing of an edge $q_i$ between its end-vertices), we obtain a planar drawing of $F_k \cup E(\mathcal  C)$, where $\mathcal  C$ is a saturator of $F_k$. 


By \cref{thm:unique-saturator}, we have that there are $k-2$ edges $q_i$; we rename these edges to $e_2,\dots,e_{k-1}$ where (again by \cref{thm:unique-saturator}) for $i=2,\dots,k-1$, the edge $e_i$ is incident to $v_i$ and $x_i$. For $i=2,\dots,k-1$, we also rename to $P_i$ the path that has been flattened in order to obtain $e_i$. By Properties~(b)--(d) of the paths $Q_1,Q_2,\dots$, now renamed to $P_2,\dots,P_{k-1}$, and since the end-vertices of $P_i$ are $v_i$ and $x_i$, for $i=2,\dots,k-1$, it follows that the paths  $P_2,\dots,P_{k-1}$ are vertex-disjoint and span $V(H)$. Further, the end-vertices of $P_i$ are $v_i$ and $x_i$, and every internal vertex of $P_i$ is in $V(H)$, thus satisfying (i). Finally, since $v_i$ and $x_i$ belong to the same color class of $G$ and since every path $P_i$ contains at most one edge between vertices in different color classes of $G$, it follows that $P_i$ actually contains no edge between vertices in different color classes of $G$. Hence, either every edge in $P_i$ is a red saturating edge or every edge in $P_i$ is a black saturating edge, thus $P_i$ satisfies (ii). Note that the latter implies that $\mathcal C_G$ contains no edge of $H$.

We define $\gamma$ and the sequence of orders $\Sigma=\langle\xi_1,\xi_2,\dots,\xi_k\rangle$ as follows. First, we set $\gamma(v_1)=1$ and $\gamma(v_k)=k$. Then, for each vertex $w \in V(H) \setminus \{v_1,v_k\}$, we set $\gamma(w)=i$ if and only if $w$ belongs to $P_i$. 
We set $\xi_1(v_1)=\xi_k(v_k)=1$. Also, for $i=2,\dots,k-1$, we define the ordering $\xi_i$ of the vertices in $\gamma^{-1}(i)$ according to the order in which such vertices are encountered when traversing $P_i$ from $v_i$ to $x_i$. Since the paths $P_2,\dots,P_{k-1}$ span $V(H)$ and are vertex-disjoint, we have that $\gamma$ and $\Sigma$ are well-defined.

We now show that $\langle k,\gamma,\Sigma\rangle$ is a leveled planar drawing of $ \langle H, k, v_1, v_k \rangle$. Note that $\gamma^{-1}(1)=\{v_1\}$ and $\gamma^{-1}(k)=\{v_k\}$, by construction. 

We prove that \cref{con:i-leveled} of \cref{def:leveled-planar} holds. By Property~(ii) of the path $P_i$ we have that all the vertices in $P_i$, and thus all the vertices in $\gamma^{-1}(i)$, belong to the same color class of $G$, hence no edge $(u,v)$ of $H$ is such that $\gamma(u)=\gamma(v)$. Suppose, for a contradiction, that there exists an edge $(u,v)$ of $H$ such that $\gamma(u)=i$ and $\gamma(v)=i+h$ with $h>1$. Recall that $\Gamma$ is a planar drawing of $G\cup E(\mathcal C_G)$. Then the path $P_{i+1}$ splits the region of $\Gamma$ delimited by $c_{in}$ into two subregions $R_u$ and $R_v$ containing $P_i$ (and thus $u$) and $P_{i+h}$ (and thus $v$) in their interior, respectively. This implies that $(u,v)$ crosses $P_{i+1}$, which contradicts the planarity of $\Gamma$.

\begin{figure}[h!]
\centering
\includegraphics[page=4,height=.3\textwidth]{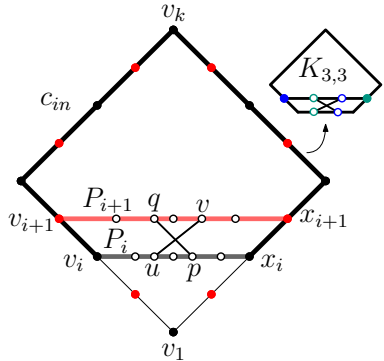}
\caption{Illustration for the proof of \cref{th:btpbeP-NPC}: If there exist two edges $(u,v),(p,q) \in E(H)$ with $i=\gamma(u)=\gamma(p)$ and $i+1=\gamma(v)=\gamma(q)$, such that $\xi_i(u) < \xi_i(p)$ and $\xi_{i+1}(v) > \xi_{i+1}(q)$, then $G \cup E(\mathcal C_G)$ contains a subdivision of $K_{3,3}$.}
\label{fig:k33}
\end{figure}
Suppose, for a contradiction, that \cref{cond:orders} of \cref{def:leveled-planar} does not hold, i.e., there exist two edges $(u,v),(p,q) \in E(H)$ with $i=\gamma(u)=\gamma(p)$ and $i+1=\gamma(v)=\gamma(q)$, such that $\xi_i(u) < \xi_i(p)$ and $\xi_{i+1}(v) > \xi_{i+1}(q)$. Then, the subgraph of $G \cup E(\mathcal C_G)$ consisting of the subpath of $c_{in}$ between $v_i$ and $x_i$ containing $v_k$, of the paths $P_i$ and $P_{i+1}$, and of the edges $(u,v)$ and $(p,q)$ is a subdivision of $K_{3,3}$; refer to \cref{fig:k33}. This contradicts the fact that $G \cup E(\mathcal C_G)$ \mbox{is planar and concludes the proof.}\end{fullproof}

\cref{th:btpbeP-NPC}, together with~\cref{cor:equivalence-variable}, allows us to answer, in \cref{cor:two-two}, an open question by Dujmovi\'c, P{\'{o}}r and Wood~\cite{dpw-tlg-04}, and to provide, in \cref{cor:quasi}, the first \NPCN proof for a natural constrained version of the {\sc Quasi-Planarity} problem.

\begin{corollary}\label{cor:two-two}
	The {\sc $(2,2)$-Track Graph Recognition} problem is \NPC.
\end{corollary}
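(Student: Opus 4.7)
The plan is to derive the statement as a direct consequence of two results established earlier in the paper, namely \cref{th:btpbeP-NPC}, which proves that \btpbeP is \NPC, and \cref{cor:equivalence-variable}, which establishes a linear-time equivalence between \btpbeP and {\sc $(2,2)$-Track Graph Recognition}. There is no new technical content to produce here, so the proof will just record the implication carefully.

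First, I would argue membership in \NP. Given an instance $G$, a candidate certificate consists of a vertex $2$-coloring $\gamma$, two total orders $\xi_1,\xi_2$ of the color classes, and an edge $2$-coloring $\omega$. Verifying that $\langle \gamma,\langle\xi_1,\xi_2\rangle,\omega\rangle$ is a $(2,2)$-track layout amounts to checking, for every monochromatic pair of edges with endpoints in the same two color classes, that they do not form the forbidden alternation; this can be done in polynomial time, so {\sc $(2,2)$-Track Graph Recognition} is in \NP.

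For \NP-hardness, I would invoke \cref{th:btpbeP-NPC} to conclude that \btpbeP is \NP-hard, and then apply the linear-time equivalence between \btpbeP and {\sc $(2,2)$-Track Graph Recognition} provided by \cref{cor:equivalence-variable} (specifically items~(\ref{i:btpbe}) and~(\ref{i:track})). Since \btpbeP reduces in linear time to {\sc $(2,2)$-Track Graph Recognition}, and since a linear-time reduction is in particular a polynomial-time reduction, {\sc $(2,2)$-Track Graph Recognition} inherits the \NP-hardness of \btpbeP. Combined with membership in \NP, this yields \NPCN. There is no real obstacle in this argument, as all the heavy lifting has already been done in the proof of \cref{th:btpbeP-NPC} and in the structural lemmas underlying \cref{cor:equivalence-variable}.
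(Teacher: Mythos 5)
Your proposal is correct and matches the paper's route: the corollary is obtained exactly by combining the \NPCN{} of \btpbeP{} (\cref{th:btpbeP-NPC}) with the linear-time equivalence of \cref{cor:equivalence-variable}, with \NP-membership being the easy certificate check you describe. Nothing is missing.
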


\begin{corollary}\label{cor:quasi}
	The {\sc 2-Level Quasi-Planarity} problem is \NPC.
\end{corollary}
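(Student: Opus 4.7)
The proof is essentially a one-line corollary of previously established results, so my plan is brief. I would combine \cref{th:btpbeP-NPC}, which establishes that \btpbeP is \NPC, with the linear-time equivalence between \btpbeP and {\sc 2-Level Quasi-Planarity} asserted by \cref{cor:equivalence-variable}. Since any linear-time many-one reduction is in particular a polynomial-time reduction, and since \NPHN transfers under composition of polynomial-time reductions, the \NPHN of {\sc 2-Level Quasi-Planarity} follows at once.

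For the membership in \NP, I would exhibit a polynomial-size combinatorial certificate. The natural choice is a pair of total orders $\xi_1,\xi_2$ of $V_b$ and $V_r$ along the two parallel lines $\ell_b$ and $\ell_r$, respectively: once these orders are guessed, one realizes the edges of $G$ as straight-line segments in the strip delimited by $\ell_b$ and $\ell_r$, and quasi-planarity can be verified in polynomial time by iterating over all triples of edges and checking that at least one pair in each triple does not cross. Equivalently, appealing to \cref{le:structural-equivalence-1}, the certificate can be taken to be $\langle \xi_1, \xi_2, \omega\rangle$, where $\omega: E\to\{1,2\}$ is an edge $2$-coloring, and the verifier need only check that $\langle \gamma,\langle \xi_1,\xi_2\rangle,\omega\rangle$ is a valid $(2,2)$-track layout, which is a purely combinatorial condition on pairs of same-colored edges.

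The only subtlety worth flagging is the standard justification that a polynomial-size witness suffices, which amounts to the observation that the problem is topological in nature and thus does not require specifying actual geometric coordinates, so the two linear orders (and, if one prefers the second formulation, the edge coloring) capture all the relevant information. With this in hand, there is no real technical obstacle, and the corollary follows immediately from \cref{th:btpbeP-NPC} and \cref{cor:equivalence-variable}.
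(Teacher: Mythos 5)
Your proposal is correct and matches the paper's argument: the corollary is obtained exactly by combining \cref{th:btpbeP-NPC} with the linear-time equivalence of \cref{cor:equivalence-variable}, with \NP-membership being routine (the paper dismisses it as trivial, and your certificate via the two vertex orders and an edge $2$-coloring, checked through \cref{le:structural-equivalence-1}, is the standard way to make that explicit).
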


\section{Graphs with a Fixed Planar Embedding}\label{se:fixed-embedding}

From this section on, we work on the \btpbefP (\btpbef) problem. Recall that the input of the problem is a pair $\langle G, \pi_b\rangle$, where
$G=(V_b,V_r,E)$ is a bipartite planar graph 
and 
$\pi_b=\langle b_1,\dots,b_m\rangle$ is a prescribed linear ordering of the vertices in $V_b$. If we connect with an edge every pair $\{b_i,b_{i+1}\}$ of black vertices, for $i=1,\dots,m-1$, we obtain a path $P$, which we call the \emph{black path} of $\langle G, \pi_b\rangle$.
We denote by $H$ the graph $(V_b,V_r,E \cup E(P))$, which we call the \emph{black saturation} of $\langle G, \pi_b\rangle$. 
The notion of black and red vertices extends to $H$. 
We have the following simple observation.

\begin{observation}\label{obs:gprime-planar}
	If $H$ is not planar, then $\langle G,\pi_b\rangle$ is a negative instance of \btpbef.
\end{observation}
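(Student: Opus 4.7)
The plan is to prove the observation by contrapositive: I will assume that $\langle G, \pi_b\rangle$ is a positive instance of \btpbef and derive that $H$ must be planar. Since $H$ differs from $G$ only by the edges of the black path $P=(b_1,b_2,\dots,b_m)$, whose edges connect consecutive vertices in $\pi_b$, the goal is to show that a bipartite $2$-page book embedding of $\langle G,\pi_b\rangle$ can be extended into a planar drawing of $H$ by drawing each edge of $P$ without introducing crossings.

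More concretely, let $\Gamma$ be a \btpbef of $\langle G,\pi_b\rangle$, and let $\ell$ be its spine. By definition of \btpbef, the vertices in $V_b$ appear consecutively along $\ell$, in the order $b_1, b_2, \dots, b_m$. For each $i=1,\dots,m-1$, the vertices $b_i$ and $b_{i+1}$ are therefore consecutive along $\ell$, meaning that no other vertex of $G$ lies between them on $\ell$. I would then draw each edge $(b_i,b_{i+1})$ of $P$ as a Jordan arc that tightly hugs the portion of $\ell$ between $b_i$ and $b_{i+1}$, placing the arc inside (say) the right page of $\Gamma$ and sufficiently close to $\ell$ that it does not cross any edge of $G$ drawn in that page. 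Since consecutive arcs of $P$ meet only at a common endpoint on $\ell$ and all other vertices of $G$ lie outside the tiny strip between $\ell$ and the newly drawn arcs, the addition of the edges of $P$ to $\Gamma$ introduces no crossings. The resulting drawing is a planar drawing of $H$, contradicting the assumption that $H$ is non-planar.

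There is no substantial obstacle here; the only delicate point is verifying that the strip along $\ell$ between $b_i$ and $b_{i+1}$ is truly free of other vertices and edges in one of the two pages, but this is immediate from the definition of a $2$-page book embedding (edges are drawn inside pages and vertices lie on the spine) together with the consecutiveness guaranteed by $\pi_b$. Thus the observation follows by contraposition.
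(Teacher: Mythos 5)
Your proof is correct and follows essentially the same route as the paper, which also observes that the edges of $P$ can be drawn along the spine of a \btpbef of $\langle G,\pi_b\rangle$ without creating crossings, yielding a planar drawing of $H$. Your version merely spells out the spine-hugging construction in more detail.
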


Indeed, if $\langle G,\pi_b\rangle$ admits a \btpbef $\Gamma_G$, then the edges of $P$ can be drawn along the spine of $\Gamma_G$ without affecting its planarity, yielding a planar drawing $\Gamma_H$ of $H$. We say that the planar embedding of $H$ induced by $\Gamma_H$ is \emph{associated with} $\Gamma_G$.

By \cref{obs:gprime-planar}, we can assume that $H$ is planar. Next, we present some additional simplifying assumptions on $H$, all of which can be made without loss of generality.

\begin{enumerate}[({A}1)]
\item \label{A1} 
Every black vertex of $H$ has at least one red neighbor and vice versa. In fact, a black vertex of $H$ with no red neighbor corresponds to an isolated vertex of $G$, hence its presence does not affect the existence of a \btpbef of $\langle G,\pi_b\rangle$, and we can safely remove it from $G$. The same argument holds for the red vertices.

\item  \label{A2} 
The graph $H$ contains at least three black vertices and at least three red vertices.
In fact, if there are at most two black vertices in $G$, then a \btpbef of $\langle G,\pi_b\rangle$ can be constructed by placing all the red vertices on the spine in any order, and by embedding all the edges incident to $b_1$ in one page and all the edges incident to $b_2$, if such a vertex exists, in the other page.
The same argument holds for the red vertices.

%
\end{enumerate}

Notice that Assumption \blue{A\ref{A1}} implies that $H$ is connected.

The aim of this section is to characterize the existence of a \btpbef of $\langle G,\pi_b\rangle$  based on the existence of a planar embedding of $H$ that satisfies certain topological properties. Hence, in the following, we consider $H$ as equipped with a planar embedding $\mathcal E$, and we study which properties of $\cal E$ allow for the construction of a \btpbef $\Gamma_G$ of $\langle G,\pi_b\rangle$ such that $\mathcal E$ is associated with $\Gamma_G$.

A face of $\mathcal E$ is \emph{red} if it is incident to at least two red vertices. We construct an auxiliary bipartite graph $A({\mathcal E})$ as follows; refer to~\cref{fig:a-di-e}. The vertex set of $A({\mathcal E})$ contains a vertex for each red vertex of $H$ and a vertex for each red face of $\cal E$. The edge set of $A({\mathcal E})$ contains an edge $(v,f)$ for each red vertex $v$ of $H$ incident to a red face $f$ of $\cal E$. 
The following characterization is the basis of our linear-time algorithm to test for the existence of a \btpbef of $\langle G,\pi_b\rangle$.

\begin{figure}[t!]
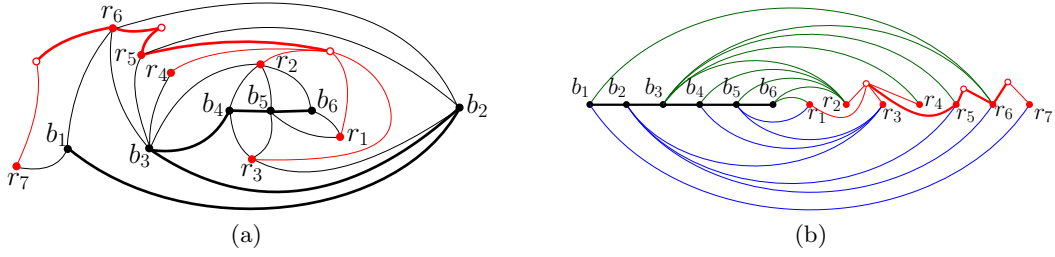

	\centering
	\subfloat[]
	{\includegraphics[width=.4\textwidth,page=5]{problemi}}
	\hfil
	\subfloat[]
	{\includegraphics[width=.4\textwidth,page=6]{problemi}}
	\caption{Two different drawings of the auxiliary bipartite graph $A(\mathcal{E})$ for a planar embedding $\mathcal E$ of the graph in \cref{fig:problems}. Vertices of $A(\mathcal{E})$ corresponding to red faces of $\cal E$ are filled white; edges of $A(\mathcal E)$ are red. The black path $P$ and the backbone of $A(\mathcal{E})$ are thick.}
	\label{fig:a-di-e}
\end{figure}

\begin{lemma} \label{le:characterization-triconnected-onesidefixed}
Let $\langle G,\pi_b\rangle$ be an instance of the \btpbefP problem whose black saturation $H$ is a planar graph satisfying Assumptions \blue{A\ref{A1}} and \blue{A\ref{A2}}. Further, let $\cal E$ be a planar embedding of~$H$. 
There exists a \btpbef $\Gamma_G$ of $\langle G,\pi_b\rangle$
such that the planar embedding of $H$ associated with $\Gamma_G$ is $\cal E$ if and only if:
\begin{enumerate}[(C1)]
	\item\label{condition:caterpillar}  $A({\mathcal E})$ is a caterpillar whose backbone $\mathcal B=(f_1,v_2,f_2,\dots,v_k,f_k)$ spans all the red faces of $\cal E$.
	\item\label{condition:endvertices} There exist two distinct red vertices $r'$ and $r''$ that are leaves of $A({\mathcal E})$, whose neighbors in $A({\mathcal E})$ are $f_1$ and $f_k$, respectively, and such that $r'$ and $b_1$ are incident to the same face of $\cal  E$, and $r''$ and $b_m$ are incident to the same face of $\cal  E$.
\end{enumerate}
\end{lemma}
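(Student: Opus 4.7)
The plan is to prove both directions; the reverse direction is constructive while the forward direction requires a topological argument about how the spine traverses the faces of $\cal E$.

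For $(\Leftarrow)$, I construct a \btpbef explicitly from C1 and C2. Order the red vertices along the spine as follows: begin at $r_1 := r''$, visit within $f_k$ all other red vertices attached as leaves of $A(\mathcal E)$ to $f_k$ in some order, end at $v_k$; then pass through $v_k$ into $f_{k-1}$ and visit its attached leaves from $v_k$ to $v_{k-1}$; iterate through the backbone; within $f_1$ visit the remaining leaves from $v_2$ ending at $r_p := r'$. The order within each face $f_j$ is chosen so that the chord path on $f_j$'s red vertices from its prescribed entry to its prescribed exit is non-crossing inside the simply-connected region $f_j$; such an ordering exists for any specified two endpoints on the boundary of a disk, since a convex polygon admits a non-crossing Hamiltonian path between any two boundary vertices. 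The red-path edges are then drawn as these chord paths, and the connector edges $(b_m, r'')$ and $(b_1, r')$ are drawn in the faces of $\cal E$ in which $r''$ (resp.\ $r'$) shares with $b_m$ (resp.\ $b_1$), provided by C2. The resulting planar supergraph of $H$ extends $\cal E$, and the Hamiltonian cycle $C = P \cup R \cup \{(b_m,r''), (b_1,r')\}$ serves as the spine of a \btpbef of $\langle G, \pi_b\rangle$ whose associated embedding is $\cal E$.

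For $(\Rightarrow)$, let $\Gamma_G$ be a \btpbef with associated embedding $\cal E$ and let $r_1, \dots, r_p$ be the spine order of red vertices (with $b_m$ adjacent to $r_1$ and $r_p$ adjacent to $b_1$ along the spine). In $\Gamma_H$ (obtained by adding $P$ to $\Gamma_G$) the ``red portion'' of the spine is a Jordan arc $\beta$ from $b_m$ to $b_1$ through $r_1, \dots, r_p$, crossing no edge of $H$. Decompose $\beta$ into sub-arcs $\alpha_0, \dots, \alpha_p$ between consecutive spine vertices, each lying in some face $h_i$ of $\cal E$; for $1 \leq i \leq p-1$, the face $h_i$ is red since $\alpha_i$ has both endpoints at red vertices. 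The key technical claim is that no red face appears non-consecutively in the sequence $(h_0, h_1, \dots, h_p)$, so that after collapsing consecutive duplicates one obtains a sequence of distinct red faces, with transitions happening at the red vertices $v_j$ where the face changes. This is proved by a Jordan-curve argument: if $\beta$ were to leave a red face $f$ at some vertex $x$ and re-enter it at a later vertex $y$, the portion of $\beta$ outside $f$ between $x$ and $y$, together with a suitable arc inside $f$ between $x$ and $y$, would form a closed curve partitioning the plane; one can then argue that the red vertices needed to complete $\beta$ would be forced into both regions in a way incompatible with the simplicity of $\beta$ and planarity. Combined with the observation that every red face must be visited (a red face with $\geq 2$ red vertices on its boundary never entered by $\beta$ would be isolated from the rest of $\beta$'s structure in $A(\mathcal E)$ in a way incompatible with the Hamiltonicity of $\beta$), this yields the backbone $(f_1, v_2, \dots, v_k, f_k)$, oriented so that $\beta$ traverses red faces in order $f_k, \dots, f_1$. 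Red vertices visited strictly inside a single backbone-face block are then leaves of $A(\mathcal E)$ attached to that face, establishing C1. The endpoints of $\beta$ provide the desired leaves: $r'' := r_1$ is incident only to $f_k$ among red faces (otherwise, having no predecessor on $\beta$, it could not be a transition vertex and would contradict C1 applied to any additional red face it lies on) and shares face $h_0$ with $b_m$; symmetrically, $r' := r_p$ is a leaf of $f_1$ sharing face $h_p$ with $b_1$, giving C2.

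The main obstacle is the key no-revisit claim in the forward direction. Intuitively the simplicity of the Jordan arc $\beta$ combined with the disk-like structure of each face should prevent meaningful revisits, but making this fully rigorous requires carefully applying the Jordan curve theorem to the hybrid closed curve built from $\beta$ and $\partial f$, and tracking how the remaining red vertices of $H$ must be distributed with respect to the resulting regions.
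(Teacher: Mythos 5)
Your overall architecture (construct the saturating cycle face by face for sufficiency; follow the red part of the spine through the faces for necessity) is the same as the paper's, but both directions have genuine gaps. In the ($\Leftarrow$) direction, the interaction between the connector edge $(b_1,r')$ and the red path is not handled, and your construction can fail when the only face shared by $b_1$ and $r'$ is the red face $f_1$ itself. Schematically, let the boundary of $f_1$ read, in cyclic order, $r',b_a,r_1,b_1,r_2,b_c,v_2,b_d$, where $r_1,r_2$ are leaves of $A(\mathcal E)$ adjacent to $f_1$ and $v_2$ is the backbone vertex. Any curve for $(b_1,r')$ inside $f_1$ separates $r_1$ from $\{r_2,v_2\}$; the red path must enter at $v_2$, visit $r_1$ and $r_2$ (whose red-path edges are forced to lie in $f_1$, since $f_1$ is the only red face incident to them), and end at $r'$, so it must contain an edge joining $r_1$ to $r_2$ or to $v_2$ (it may use $r'$ only once, as its last vertex, and may not pass through the black vertex $b_1$), and any such edge crosses the connector. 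The ``non-crossing Hamiltonian path between two prescribed boundary vertices of a convex polygon'' fact applies to the red path alone, not to the red path together with the chord $(b_1,r')$. This is exactly why the paper first merges $A(\mathcal E)$ into $\mathcal E$ and then, when $b_1$ and $r'$ share a red face, \emph{redefines} $r'$ to be a leaf incident to a subface next to $b_1$ (similarly for $r''$, with an extra argument in the degenerate case $f_1=f_k$ where $b_1,b_m,r',r''$ all lie on the unique red face). A correct proof must either redefine the witnesses of Condition (C2) as the paper does, or choose them and the within-face orderings jointly; keeping the given $r',r''$ fixed does not work.

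In the ($\Rightarrow$) direction there are two problems. First, the ``no-revisit'' claim, which you yourself flag as unproven, is the crux; the paper proves the equivalent statement (the red vertices incident to a red face are consecutive along the spine) with a concrete separating cycle, namely the cycle formed by two edges of a red vertex lying in different pages together with the subpath of $P$ between their black endpoints, and it proves acyclicity of $A(\mathcal E)$ by showing a cycle of $A(\mathcal E)$ would have black vertices on both sides and hence be crossed by $P$. Your sketch names neither the curve nor the source of the contradiction, so the key step is missing. Second, even granting no-revisit, your argument only controls the faces traversed by the spine, whereas $A(\mathcal E)$ records \emph{all} red-vertex/red-face incidences. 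To conclude that a red vertex visited strictly inside a block is a leaf of $A(\mathcal E)$, you still need the observation (the paper's degree claim) that a face incident to a red vertex whose local neighborhood contains no spine is bounded by two edges in the same page and a subpath of $P$, hence contains no second red vertex and is not red; without it, additional edges of $A(\mathcal E)$ could destroy the caterpillar structure or add red faces outside your claimed backbone, and the assertion that every red face is entered by the spine (in the paper a consequence of connectivity plus acyclicity of $A(\mathcal E)$) is not established by the appeal to ``Hamiltonicity of $\beta$.''
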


\begin{fullproof}	
$(\Longrightarrow)$ Assume that $\langle G,\pi_b\rangle$ admits a \btpbef $\Gamma_G$ such that the planar embedding of $H$ associated with $\Gamma_G$ is $\cal E$. We prove that \blue{Conditions C\ref{condition:caterpillar}} and \blue{C\ref{condition:endvertices}} are satisfied. Let $\pi_r=\langle r_1,\dots,r_p\rangle$ be the linear ordering of the red vertices in $\Gamma_G$. 

We start with \blue{Condition C\ref{condition:caterpillar}}. 
\begin{figure}[t!]
	\centering
	\subfloat[\label{fi:FixedEmbedding-NecessityCycles}]
	{\includegraphics[scale=0.8]{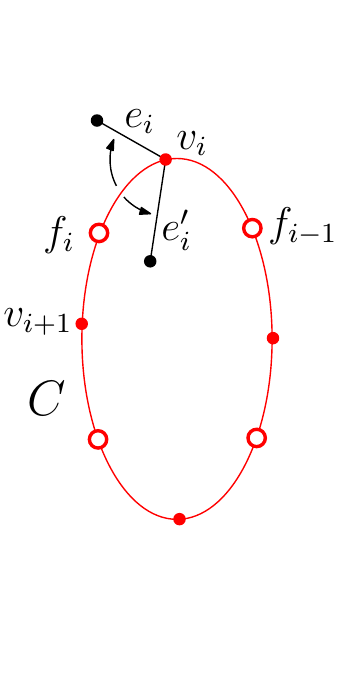}}
	\hfil
	\subfloat[\label{fi:FixedEmbedding-NecessityCaterpillar1}]
	{\includegraphics[scale=0.8]{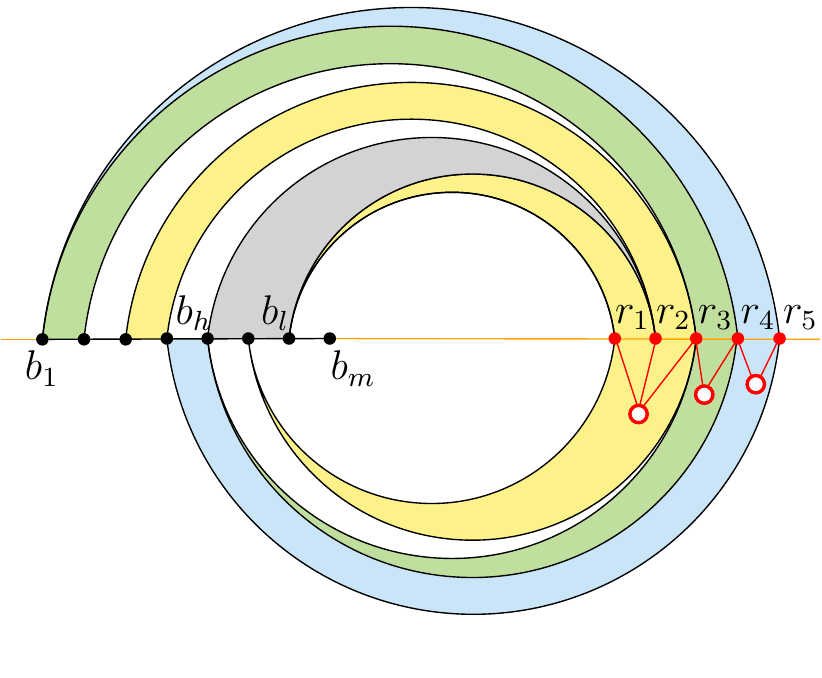}}
	\hfil
	\subfloat[\label{fi:FixedEmbedding-NecessityCaterpillar2}]
	{\includegraphics[scale=0.8]{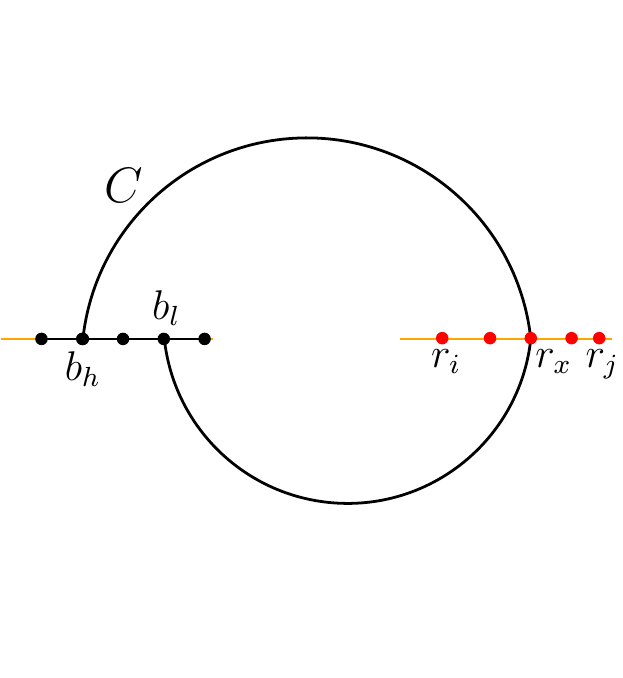}}
	\caption{(a) Illustration for the proof that $A({\mathcal E})$ is acyclic. 
	Illustrations for the proofs of (b) \cref{cl:red-degree} and of (c) \cref{cl:incident-consecutive}, respectively.}
	\label{FixedEmbedding-Necessity}
\end{figure}
First, we prove that $A({\mathcal E})$ is acyclic. Suppose, for a contradiction, that $A({\mathcal E})$ contains a cycle $C$. Since $A(\mathcal{E})$ is bipartite, we have that $C=(v_1,f_1,v_2,f_2,\dots,v_h,f_h)$, where $v_i$ is a red vertex and $f_i$ is a red face, for $i=1,\dots,h$; refer to \cref{fi:FixedEmbedding-NecessityCycles}. 
We planarly embed $C$ into $\cal E$ as follows. 
For $i=1,\dots,h$, we insert in the interior of the face $f_i$ a point representing $f_i$ and two curves representing the edges $(v_i,f_i)$ and $(f_i,v_{i+1})$, where $v_{h+1}=v_1$; this can be done since both $v_i$ and $v_{i+1}$ are incident to $f_i$. 
Now, for any $i \in \{1,\dots,h\}$, consider the two edges $e_i$ and $e'_i$ of $H$ that are incident to $v_i$ and that immediately follow the edge $(v_i,f_i)$ around $v_i$ in clockwise and counter-clockwise direction, respectively. Such edges exist and are distinct, given that $f_i\neq f_{i-1}$, and they lie on different sides of $C$, by construction. Furthermore, the end-vertices of $e_i$ and $e'_i$ different from $v_i$ are black, given that there are no edges between red vertices in $H$. It follows that $C$ has black vertices on both sides; by the Jordan curve theorem, there is a crossing between an edge of the black path $P$ of $H$ and an edge of $C$, which contradicts the fact that $C$ is planarly embedded in $\cal E$.

Second, we prove that $A({\mathcal E})$ is connected. For $i=1,\dots,p-1$, the vertices $r_i$ and $r_{i+1}$ are both incident to a red face $f_j$ in ${\mathcal E}$ (the face that contains the part of the spine between $r_{i}$ and $r_{i+1}$ in $\Gamma_G$), hence $r_{i}$ and $r_{i+1}$ are both connected to $f_j$ in $A({\mathcal E})$. This implies that all the red vertices of $H$ belong to the same connected component of $A({\mathcal E})$; since each red face of ${\mathcal E}$ is adjacent to at least two red vertices in $A({\mathcal E})$, the connectivity of $A({\mathcal E})$ follows. 

We now prove that $A({\mathcal E})$ is a caterpillar. We exploit the following claim; refer to \cref{fi:FixedEmbedding-NecessityCaterpillar1}.

\begin{claim} \label{cl:red-degree}
Every red vertex $r_i$ has degree either $1$ or $2$ in $A(\mathcal E)$. In particular, three cases are possible:
\begin{enumerate}[(i)]
\item\label{degree-i} If $i=1$ or $i=p$, then $r_i$ has degree $1$ in $A(\mathcal E)$ (see $r_1$ and $r_5$ in \cref{fi:FixedEmbedding-NecessityCaterpillar1}). 
\item\label{degree-ii} If $1 < i < p$  and all the edges of $G$ incident to $r_i$ lie in the same page of $\Gamma_G$, then $r_i$ has degree $1$ in $A(\mathcal E)$ (see $r_2$ in \cref{fi:FixedEmbedding-NecessityCaterpillar1}). 
\item\label{degree-iii} If $1 < i < p$ and $r_i$ has incident edges of $G$ lying in both the pages of $\Gamma_G$, then $r_i$ has degree $2$ in $A(\mathcal{E})$ (see $r_3$ and $r_4$ in \cref{fi:FixedEmbedding-NecessityCaterpillar1}).
\end{enumerate}
\end{claim}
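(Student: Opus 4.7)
The plan is to read off the structure of the faces of $\mathcal E$ incident to $r_i$ directly from the book embedding $\Gamma_G$. Since $H$ contains no edge between two red vertices, every edge of $H$ incident to $r_i$ goes to a black vertex, and in $\Gamma_G$ it is drawn in one of the two pages. Hence the cyclic order of the edges at $r_i$ in $\mathcal E$ consists of a contiguous arc of edges in one page followed by a contiguous arc of edges in the other page (either of which may be empty), with two distinguished \emph{spine-wedges} separating the two arcs: the wedge that contains the spine direction from $r_i$ toward its spine-predecessor (which is $b_m$ if $i=1$ and $r_{i-1}$ if $i>1$) and the wedge that contains the spine direction toward its spine-successor ($b_1$ if $i=p$ and $r_{i+1}$ if $i<p$).

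I would then use the following topological observation to localize the red faces. Each non-$H$-edge segment of the spine (in particular, the $p-1$ segments between consecutive red vertices and the two segments $(b_m,r_1)$ and $(r_p,b_1)$) does not cross any edge of $H$, so it lies inside a single face of $\mathcal E$, and both of its endpoints lie on the boundary of that face. It follows that each spine-wedge at $r_i$ coincides with the face of $\mathcal E$ containing the adjacent non-$H$-edge spine segment, and this face is red if and only if the corresponding spine-neighbor of $r_i$ is red (i.e., it is $r_{i-1}$ or $r_{i+1}$ and not $b_m$ or $b_1$).

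Next, I would argue that an \emph{interior} wedge at $r_i$, bounded locally by two consecutive edges drawn in the same page, cannot belong to a red face. The reason is that the face containing such a wedge lies entirely in that page of $\Gamma_G$: the only way for a face of $\mathcal E$ to span both pages is to contain a non-$H$-edge spine segment in its interior, and no such segment is accessible from an interior wedge at $r_i$. Any red vertex $r_l$ on the boundary of this face would force the face to occupy, at $r_l$, a wedge extending into the opposite page, contradicting the fact that the face is confined to one page.

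Finally, I would conclude by case analysis. In case~(i), exactly one of the two spine-wedges at $r_i$ points to another red vertex while the other points to $b_m$ or $b_1$, so only one red face is incident to $r_i$. In case~(ii), all edges of $r_i$ are drawn in one page, so around $r_i$ the two spine directions lie in a single \emph{big} wedge on the opposite page, which is the unique red face incident to $r_i$. In case~(iii), both spine-wedges point to red vertices and are locally separated at $r_i$ by edges of $H$ on each of the two pages, so they belong to distinct faces and $r_i$ is incident to exactly two red faces. The main obstacle will be this last distinctness claim---ruling out that a single face of $\mathcal E$ could visit $r_i$ twice and realize both spine-wedges as two incidences of the same face---which I plan to resolve by a careful boundary-tracing argument exploiting the planarity of $\Gamma_G$ and the fact that the edges of $H$ incident to $r_i$ strictly separate, in both pages, the two spine-wedges locally at $r_i$.
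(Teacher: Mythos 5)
Your wedge dichotomy is exactly the paper's distinction between faces of the ``first type'' and ``second type'' incident to $r_i$, and your final case analysis matches its proof, but two of the assertions you treat as established are not. First, the biconditional ``the face met through a spine-wedge at $r_i$ is red if and only if the corresponding spine-neighbor is red'' is false as stated: when all edges of $G$ at $r_i$ lie in one page, the two spine-wedges at $r_i$ belong to a single face, and that face is red as soon as either spine-neighbor is red (for instance, for $r_1$ with all edges in one page, the face containing the spine segment between $b_m$ and $r_1$ also contains the segment between $r_1$ and $r_2$, hence is incident to $r_2$ and is red). This happens not to damage the degree count there, but the direction you actually need---for $i\in\{1,p\}$ with edges in both pages, the face reached through the gap toward $b_m$ (resp.\ $b_1$) is \emph{not} red---does not ``follow'' from the observation that each spine gap lies in a single face; as written it is an unproven step. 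It can be repaired by describing that face's boundary explicitly: it consists of the edge of $r_1$ in each page whose black endpoint is nearest to $b_m$ along the spine, together with subpaths of the black path $P$, so $r_1$ is the only red vertex on it.

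Second, your justification that a face met through an interior wedge at $r_i$ cannot be red is circular: ``no spine segment is accessible from an interior wedge'' is essentially the claim to be proved, and even granting that the face lies in one page, it does not follow that a red vertex $r_l$ on its boundary ``would force the face to occupy a wedge extending into the opposite page''---a priori the face could meet $r_l$ through an interior wedge at $r_l$, which stays in that page. The statement is true, and the clean argument (the one the paper uses) is again an explicit boundary description: the face met through the interior wedge between two consecutive same-page edges $(r_i,b_h)$ and $(r_i,b_l)$ is bounded by these two edges and the subpath of $P$ between $b_h$ and $b_l$ (nothing else can intrude, since every red vertex other than $r_i$ lies on the other spine arc and a further edge of $r_i$ inside the region would contradict consecutiveness), so $r_i$ is its only red vertex. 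Finally, the distinctness of the two spine-wedge faces in case~(iii), which you single out as the main obstacle, has a one-line resolution you may prefer to ad hoc boundary tracing: the cycle formed by one edge of $r_i$ in each page together with the subpath of $P$ between their black endpoints separates the two spine gaps adjacent to $r_i$, so they lie in different faces; this is precisely the device used in the paper's proof of \cref{cl:incident-consecutive}.
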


\begin{proof} 
By Assumption \blue{A\ref{A2}} and by the connectivity of $A(\mathcal E)$, we have that $r_i$ has degree at least~$1$ in $A(\mathcal E)$.

Consider any red vertex $r_i$ of $H$. There might be two types of faces of $\mathcal E$ incident to $r_i$. A face of $\mathcal E$ incident to $r_i$ is of the \emph{first type} if the intersection of its interior with a sufficiently small disk centered at $r_i$ does not contain any part of the spine of $\Gamma_G$. Conversely, a face of $\mathcal E$ incident to $r_i$ is of the \emph{second type} if the intersection of its interior with any disk centered at $r_i$ contains a part of the spine of $\Gamma_G$. Any face $f$ of the first type is delimited by (a) two edges $e_h=(r_i,b_h)$ and $e_l=(r_i,b_l)$ that are incident consecutively around $r_i$ and that lie in the same page of $\Gamma_G$, and by (b) the subpath of $P$ between $b_h$ and $b_l$ (see the gray face incident to $r_2$ in \cref{fi:FixedEmbedding-NecessityCaterpillar1}). Any face of the first type is not incident to any red vertex of $H$ other than $r_i$, hence it is not red. 

If $1<i<p$ and all the edges of $G$ incident to $r_i$ lie in the same page of $\Gamma_G$, then there exists exactly one face of $\cal E$ incident to $r_i$ of the second type (see the yellow face incident to $r_2$ in \cref{fi:FixedEmbedding-NecessityCaterpillar1}). This proves that $r_i$ has degree $1$ in $A(\mathcal E)$, which implies Case \blue{(\ref{degree-ii})}. 

If $1 < i < p$ and $r_i$ is incident to edges that lie in both the pages of $\Gamma_G$, then there exist two faces of $\cal E$ incident to $r_i$ of the second type (the faces ``to the left'' and ``to the right'' of $r_i$). Both such faces are red; in fact, one of them contains both $r_i$ and $r_{i-1}$, while the other one contains both $r_i$ and $r_{i+1}$ (see, for example, the green and blue faces incident to $r_4$ in \cref{fi:FixedEmbedding-NecessityCaterpillar1}). This proves that $r_i$ has degree $2$ in $A(\mathcal E)$, which implies Case \blue{(\ref{degree-iii})}. 

If all the edges incident to $r_1$ lie in the same page of $\Gamma_G$, then $r_1$ has degree $1$ in $A(\mathcal E)$; this can be proved exactly as in the case in which $1<i<p$. Suppose that $r_1$ is incident to edges that lie in both the pages of $\Gamma_G$. Then one of the faces of the second type incident to $r_1$ also contains $r_2$ (see the yellow face in \cref{fi:FixedEmbedding-NecessityCaterpillar1}), whereas the other one contains no red vertex other than $r_1$, and thus it is not red. Hence $r_1$ has degree $1$ in $A(\mathcal E)$ also in this case. The proof that $r_p$ has degree $1$ in $A(\mathcal E)$ is symmetric. This proves Case~\blue{(\ref{degree-i})} and concludes the proof.
\end{proof} 
	
In order to prove that $A(\mathcal{E})$ is a caterpillar, it remains to prove that every vertex of~$A({\cal E})$ corresponding to a red face of $\mathcal E$ has at most two neighbors of degree greater than $1$. The next claim proves an even stronger property.

\begin{claim} \label{cl:incident-consecutive}
Let $f$ be any red face of $\cal E$. Then the red vertices incident to $f$ 
\begin{enumerate}[(a)]
\item \label{cond:a} are consecutive vertices in $\pi_r$ and 
\item \label{cond:b} have degree~$1$ in $A(\mathcal{E})$ except, possibly, for the leftmost and the rightmost of them in $\pi_r$.
\end{enumerate}
\end{claim}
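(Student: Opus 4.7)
The plan is to study $f$ as a region in the planar drawing $\Gamma_H$ obtained from $\Gamma_G$ by drawing $P$ along the black portion of the spine, and to parametrize the red faces of $\mathcal E$ by the ``gaps'' of the red portion of the spine. Concretely, for each $k\in\{1,\dots,p-1\}$, no edge of $H$ covers the spine segment strictly between $r_k$ and $r_{k+1}$, so every point in the interior of that segment lies in the interior of a single well-defined face of $\mathcal E$; I denote this face by $F_k$. Since both $r_k$ and $r_{k+1}$ lie on its boundary, $F_k$ is automatically red. My first step would be to verify that every red face of $\mathcal E$ is of the form $F_k$: by the face classification already used in the proof of \cref{cl:red-degree}, only the two faces of the ``second type'' at any red vertex $r_l$ can contain further red vertices, and these two faces are exactly $F_{l-1}$ and $F_l$ (adopting the convention that $F_0$ denotes the face lying to the left of $r_1$ when it happens to be red, and similarly for $F_p$).

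The core of the argument is the equivalence
\[
F_{k}=F_{k+1}\ \Longleftrightarrow\ \text{all edges incident to } r_{k+1}\text{ lie in the same page of } \Gamma_G,
\]
valid for every $k\in\{1,\dots,p-2\}$. For the ``$\Leftarrow$'' direction, if all edges at $r_{k+1}$ lie in the upper page (say), then a point of $F_k$ just to the left of $r_{k+1}$ can be joined to a point of $F_{k+1}$ just to its right by a curve in the lower half-plane that does not cross any edge of $\Gamma_H$, so $F_k=F_{k+1}$. For the ``$\Rightarrow$'' direction, if $r_{k+1}$ has edges in both pages, then any such curve must cross an edge at $r_{k+1}$, showing the two faces are distinct.

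Given a red face $f$, I would then let $[a,b]$ be the maximal interval of indices with $F_a=F_{a+1}=\dots=F_b=f$. The set of red vertices on $\partial f$ is exactly $\{r_a,r_{a+1},\dots,r_{b+1}\}$, which is consecutive in $\pi_r$, giving~(a). For every $l$ with $a<l\le b$, maximality forces $F_{l-1}=F_l$, so by the equivalence $r_l$ has all its edges in a single page, and Case~(ii) of \cref{cl:red-degree} yields $\deg_{A(\mathcal E)}(r_l)=1$. The only red vertices incident to $f$ that may have degree~$2$ are thus the leftmost, $r_a$, and the rightmost, $r_{b+1}$, which is precisely~(b).

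The main subtlety I expect is in unifying the treatment of the ``boundary'' gap faces $F_0$ and $F_p$ with the generic ones $F_1,\dots,F_{p-1}$, especially in degenerate configurations in which $f$ is the outer face of $\mathcal E$ and wraps around to touch both ends of the red spine. In such cases one has to argue carefully about how the interval $[a,b]$ is delimited and fall back on Case~(i) of \cref{cl:red-degree} to dispose of the extreme red vertices $r_1$ and~$r_p$.
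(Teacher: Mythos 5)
Your gap-face parametrization is a sensible reformulation, and parts of it are sound: every red face is indeed one of the faces $F_k$ containing a spine gap, and the ``$\Leftarrow$'' half of your equivalence (all edges of $r_{k+1}$ in one page implies $F_k=F_{k+1}$) is correct. The gap lies in the other half and, more seriously, in the step ``let $[a,b]$ be the maximal interval with $F_a=\dots=F_b=f$; then the red vertices incident to $f$ are exactly $\{r_a,\dots,r_{b+1}\}$''. Your justification ``any such curve must cross an edge at $r_{k+1}$'' is purely local: a curve witnessing $F_k=F_{k+1}$ need not pass anywhere near $r_{k+1}$, so this does not show the two faces are globally distinct. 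What actually makes them distinct is a separation argument: the two edges of $r_{k+1}$ lying in different pages, together with the subpath of the black path $P$ between their black end-vertices, form a cycle (the connectivity of $P$, which your argument never uses, is essential here), and by the Jordan curve theorem this cycle has the gap $(r_k,r_{k+1})$ on one side and $(r_{k+1},r_{k+2})$ on the other. For this adjacent-gap statement you could alternatively just cite Case~(iii) of \cref{cl:red-degree}, since degree $2$ forces the two second-type faces at $r_{k+1}$ to be distinct.

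The deeper problem is that even with the equivalence in hand, your ``maximal interval'' step silently assumes that $\{k:F_k=f\}$ is contiguous, which is essentially statement (a) itself. The equivalence only compares adjacent gaps; it cannot exclude $F_j=F_k=f$ with some $F_l\neq f$ for $j<l<k$, nor the wrap-around $F_0=F_p$ through the region beyond the ends of the red spine, and in such a configuration the red vertices on $f$ would not be consecutive and $r_a,r_{b+1}$ would not be the only candidates for degree $2$. Excluding this requires the global form of the separation: for a red vertex $r_l$ with edges in both pages, the cycle above through $r_l$ separates every gap with index $<l$ from every gap with index $\geq l$, so no single face can contain gaps on both sides of $r_l$. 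This Jordan-curve-plus-black-path argument is precisely the paper's proof of the claim (an intermediate red vertex of degree $2$ would separate $r_i$ from $r_j$); once you add it, your scheme, including the boundary faces $F_0,F_p$ you flagged, does go through.
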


\begin{proof}
We prove the statement by showing that, for any two vertices $r_i$ and $r_j$ with $i<j$ that are incident to the same red face $f$ of $\cal E$, it holds true that each of $r_{i+1},r_{i+2},\dots,r_{j-1}$ has degree $1$ in $A(\mathcal{E})$ and is incident to $f$. 

Suppose, for a contradiction, that some vertex $r_x$ with $i<x<j$ has degree $2$ in $A(\cal E)$. By Cases~\blue{(\ref{degree-ii})} and~\blue{(\ref{degree-iii})} of \cref{cl:red-degree}, $r_x$ has two incident edges $(r_x,b_h)$ and $(r_x,b_l)$ in different pages of $\Gamma_G$. Consider the cycle $C$ composed of the edges $(r_x,b_h)$ and $(r_x,b_l)$ and of the subpath of $P$ between $b_h$ and $b_l$; see \cref{fi:FixedEmbedding-NecessityCaterpillar2}. The vertices $r_{i}$ and $r_{j}$ are on different sides of $C$ in $\mathcal E$, which implies that they are not incident to a common red face $f$, a contradiction. 

We can now assume that each of $r_{i+1},r_{i+2},\dots,r_{j-1}$ has degree~$1$ in $A(\cal E)$. By Cases~\blue{(\ref{degree-ii})} and~\blue{(\ref{degree-iii})} of \cref{cl:red-degree}, all the edges incident to each of such vertices lie in the same page of $\Gamma_G$. Therefore, $r_i,r_{i+1},r_{i+2},\dots,r_{j-1},r_j$ are all incident to the same red face of $\cal E$.
\end{proof}

Property~\blue{(\ref{cond:b})} of \cref{cl:incident-consecutive} concludes the proof that $A(\cal E)$ is a caterpillar. 
Since every red face of $\cal E$ has degree at least $2$ in $A(\mathcal E)$ by definition, it follows that all the leaves of $A(\mathcal E)$ correspond to red vertices of $H$, hence the backbone $\mathcal B=(f_1,r_2,f_2,\dots,r_k,f_k)$ of $A(\mathcal E)$ spans all the red faces of $\mathcal E$. This proves \blue{Condition C\ref{condition:caterpillar}}. 



In the following, we prove that $r'=r_p$ and $r''=r_1$ satisfy \blue{Condition~C\ref{condition:endvertices}}. By Case \blue{(\ref{degree-i})} of \cref{cl:red-degree}, we have that $r_1$ and $r_p$ are leaves of $A(\mathcal E)$. Further, by construction, $r_p$ and $b_1$ are both incident to the face of $\cal E$ that contains the part of the spine between them in $\Gamma_G$. Similarly, $r_1$ and $b_m$ are both incident to the face of $\cal E$ that contains the part of the spine between them in $\Gamma_G$. It remains to prove that $r_1$ is incident to $f_1$; the proof that $r_p$ is incident to $f_k$ is analogous. 

Consider the red face $f$ of $\cal E$ that is incident to $r_1$. By Property~\blue{(\ref{cond:a})} of \cref{cl:incident-consecutive}, the red vertices incident to $f$ form a prefix of $\pi_r$. Further, by Case \blue{(\ref{degree-i})} of \cref{cl:red-degree} and by Property~\blue{(\ref{cond:b})} of \cref{cl:incident-consecutive}, each of such red vertices, except possibly for the last one, has degree $1$ in $A(\mathcal E)$. It follows that $f$ has at most one neighbor of degree larger than $1$ in $A(\mathcal E)$, hence $f \in \{f_1,f_k\}$. The proof that $r_p$ is incident to one of $f_1$ and $f_k$ is analogous, with the only difference that it exploits a suffix of $\pi_r$ rather than a prefix. This implies that $r_1$ and $r_p$ are incident to different faces, when $f_1 \neq f_k$.


$(\Longleftarrow)$ Given the planar embedding $\mathcal E$ of $H$, suppose that $A({\mathcal E})$ satisfies \blue{Conditions C\ref{condition:caterpillar}} and \blue{C\ref{condition:endvertices}}. In order to prove that $\langle G,\pi_b\rangle$ admits a \btpbef $\Gamma_G$ in which the planar embedding of $H$ associated with $\Gamma_G$ is $\mathcal E$, we show how to add edges to $\mathcal E$ so to augment it to an embedded planar graph $\mathcal{E}_R$ that contains a Hamiltonian cycle passing consecutively through the vertices in $V_b$, in the ordering $\pi_b$, and then passing consecutively through the vertices in $V_r$. Recall that, by \blue{Condition~C\ref{condition:caterpillar}}, we have that $A({\mathcal E})$ is a caterpillar whose backbone $\mathcal B=(f_1,v_2,f_2,\dots,v_k,f_k)$ spans all the red faces of~$\cal E$.

First, we augment $\mathcal E$ to the embedded planar graph $\mathcal{E}^+$ obtained by merging $\mathcal E$ and $A(\cal E)$. In order to do that, we insert each vertex $f_i$ of $\mathcal B$ into the corresponding face of $\mathcal E$; further, we connect the vertex $f_i$ with the red vertices incident to the face $f_i$ by means of non-crossing arcs lying in the interior of the face. 

Second, we show how to augment $\cal E^+$ by introducing the edges $(b_1,r')$ and $(b_m,r'')$, in such a way that the resulting embedded graph (which we still denote by $\cal E^+$) is planar; in some cases, this requires redefining $r'$ and $r''$ while ensuring that \blue{Condition~C\ref{condition:endvertices}} still holds (refer to \cref{fi:add-b1-r1}). 

Suppose first that $b_1$ and $r'$ are both incident to a face $f$ of $\cal  E$ that is not red (see \cref{fi:add-b1-r1-A}). Since $f$ is not red, no edge and no vertex of $A(\cal E)$ has been added inside $f$ during the merge of $\mathcal E$ and $A(\cal E)$; hence, $f$ is also a face of $\cal E^+$ and we can embed the edge $(b_1,r')$ inside $f$ planarly. Analogously, if $b_m$ and $r''$ are both incident to a face $f$ of $\cal  E$ that is not red, then we can embed the edge $(b_m,r'')$ inside $f$ planarly.

Suppose next that $b_1$ and $r'$ are both incident to a red face $f$ of $\cal  E$, and that $b_m$ and $r''$ are both incident to a face different from $f$ (see \cref{fi:add-b1-r1-B}). Note that $r'$ is not incident to any other red face of $\cal  E$, given that it is a leaf of $A(\cal E)$, hence, by \blue{Condition~C\ref{condition:endvertices}}, we have $f=f_1$. The face $f_1$ has been split into at least two faces of $\cal E^+$ by the merge of $\mathcal E$ and $A(\cal E)$. Let $f^{\triangle}_1$ be any of those faces that is incident to $b_1$. We redefine $r'$ as any red vertex that is incident to $f^{\triangle}_1$ and that is a leaf of $A(\cal E)$; such a vertex always exists, since there are two red vertices incident to $f^{\triangle}_1$ and at most one of them is not a leaf of $A(\cal E)$, given that $f_1$ is an end-vertex of $\mathcal B$. We can now embed the edge $(b_1,r')$ inside $f^{\triangle}_1$. Analogously, if $b_m$ and $r''$ are both incident to a red face $f$ of $\cal  E$, and  $b_1$ and $r'$ are both incident to a face different from $f$, then we have $f=f_k$ and  the edge $(b_m,r'')$ can be planarly embedded inside a face $f^{\triangle}_k$ that has been obtained from $f_k$ by the merge of $\mathcal E$ and $A(\cal E)$, possibly after a redefinition of $r''$. 

The only situation that remains to discuss is the one in which $b_1$ and $r'$ are both incident to a red face $f$ of $\cal  E$, and $b_m$ and $r''$ are also incident to $f$. By \blue{Condition~C\ref{condition:endvertices}}, we have that $r'$ and $r''$ are distinct vertices, hence $f$ is red. It follows that $f=f_1=f_k$ is the unique red face of $\mathcal E$, and $b_1$, $b_m$, $r'$ and $r''$ are all incident to it. If $b_1$ and $b_m$ are respectively incident to distinct faces $f^{\triangle}_1$ and $f^{\triangle}_k$ of $\cal E^+$ obtained from $f$ by the merge of $\mathcal E$ and $A(\cal E)$ (see \cref{fi:add-b1-r1-C}), then $r'$ can be redefined as any red vertex incident to $f^{\triangle}_1$ and $r''$ can be redefined as any red vertex incident to $f^{\triangle}_k$ and different from $r'$; this is indeed possible since there are two red vertices incident to $f^{\triangle}_k$; moreover, \blue{Condition~C\ref{condition:endvertices}} still holds with the new choice of $r'$ and $r''$, since every red vertex of $H$ is a leaf of $A(\cal E)$ adjacent to $f_1=f_k$. Now the edges $(b_1,r')$ and $(b_m,r'')$ can be planarly embedded inside $f^{\triangle}_1$ and $f^{\triangle}_k$, respectively. Finally, $b_1$ and $b_m$ cannot be incident to the same face $f^{\triangle}$ of $\cal E^+$ obtained from $f$ by the merge of $\mathcal E$ and $A(\cal E)$. Indeed, if they were, the part of the boundary of $f^{\triangle}$ between them that does not contain the vertex $f_1=f_k$ would be a path whose vertices are all black, hence it would be the entire black path $P$. However, this would imply that the boundary of $f$ contains at least one edge between two red vertices, which is not possible. 

\begin{figure}[tb!]
	\centering
	\subfloat[\label{fi:add-b1-r1-A}]
	{\includegraphics[page=1,scale=.45]{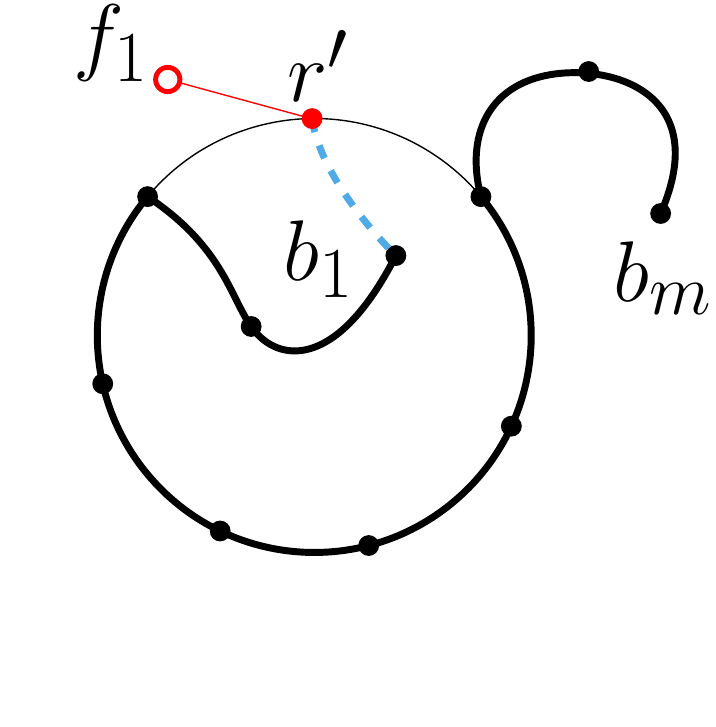}}
	\hfil
	\subfloat[\label{fi:add-b1-r1-B}]
	{\includegraphics[page=2,scale=.45]{add-b1-r1.pdf}}
	\hfil
	\subfloat[\label{fi:add-b1-r1-C}]
	{\includegraphics[page=3,scale=.45]{add-b1-r1.pdf}}
	\caption{Illustration for the insertion of the edge $(b_1,r')$ in $\mathcal E^+$. (a) $r'$ and $b_1$ share a face $f$ that is not red; (b) $r'$ and $b_1$ share a face $f$ that is red, and $b_m$ and $r''$ share a face different from $f$; (c) $b_1$, $b_m$, $r'$ and $r''$ are all incident to the unique red face $f_1=f_k$ of $\mathcal E$.}
	\label{fi:add-b1-r1}
\end{figure}

Third, we remove from $\mathcal{E}^+$ each vertex $f_i$ of $A(\mathcal E)$ corresponding to a red face of $\cal E$ and its incident edges; denote by $\mathcal{E}'$ the resulting embedded planar graph. For each $i=1,\dots,k$, we denote by $f^*_i$ the face of $\mathcal{E}'$ that used to contain the vertex $f_i$. Observe that $f^*_i$ is delimited by the same walk as the face $f_i$ of $\cal E$,  with the possible exceptions of $f^*_1$ and $f^*_k$; indeed, $f^*_1$ and $f^*_k$ might be incident to the edges $(b_1,r')$ and $(b_m,r'')$, respectively, which are not in $\cal E$. A key point is that, even if $(b_1,r')$ is incident to $f^*_1$ and/or $(b_m,r'')$ is incident to $f^*_k$, all the red vertices of $H$ adjacent to $f_1$ in $\mathcal A(\mathcal E)$ are incident to $f^*_1$ and all the red vertices of $H$ adjacent to $f_k$ in $\mathcal A(\mathcal E)$ are incident to $f^*_k$; in fact, the introduction of $A(\mathcal E)$ on top of $\cal E$ before introducing the edges $(b_1,r')$ and $(b_m,r'')$ served this purpose.

\begin{figure}[tb!]
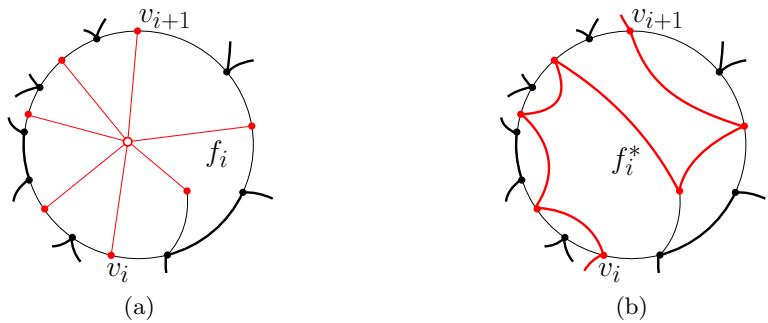

	\centering
	\subfloat[]
	{\includegraphics[page=5,scale=.45]{add-b1-r1.pdf}}
	\hfil
	\subfloat[]
	{\includegraphics[page=6,scale=.45]{add-b1-r1.pdf}}
	\caption{Illustration for the insertion of the path $R$ in $\mathcal E'$. (a) A face $f_i$ with the vertex $f_i$ of $A(\mathcal E)$ and its incident edges drawn inside the face. (b) The path $R_i$ inserted inside $f^*_i$.}
	\label{fi:add-path}
\end{figure}

Fourth, we show how to add to $\mathcal{E}'$ a path $R$ from $r'$ to $r''$ spanning all the red vertices of $H$, so that the resulting embedded graph $\mathcal{E}_R$ is planar. The augmentation can be performed one face $f^*_i$ at a time; refer to~\cref{fi:add-path}. Let $v_1=r'$ and let $v_{k+1}=r''$. For $i=1,\dots,k$, we embed a path $R_i$ inside $f^*_i$, so that $R_i$ starts at $v_i$, touches every red vertex incident to $f^*_i$, and ends at $v_{i+1}$, and so that $R_i$ does not cross itself. This can be done by first visiting all the internal red vertices incident to one of the two walks between $v_i$ and $v_{i+1}$ delimiting $f^*_i$ and then vising all the internal red vertices incident to the other walk. Let $R$ be the union of the paths $R_i$, for $i= 1,\dots,k$. Since every red vertex is incident to at least one face $f_i$, and thus also to the face $f^*_i$ (this is trivial if $2\leq i\leq k-1$ and comes from the above argument if $i=1$ or $i=k$), and since $A(\mathcal{E})$ is connected, we get that $R$ is a connected graph spanning all the red vertices of $H$. Further, we have that $R$ is a path. Namely, for each red vertex $r$ of $H$, we have that $r$ is either incident to exactly one face $f^*_i$ or to two faces $f^*_i$ and $f^*_{i+1}$. In the first case, $r$ has degree exactly $2$ in $R_i$ (and, thus, in~$R$) with the exception of $r'$ and $r''$, which have degree $1$. In the second case, $r$ has degree $1$ in both $R_i$ and $R_{i+1}$, and thus it has degree exactly $2$ in $R$. Finally, $R$ is embedded planarly in $\mathcal{E}$, since the faces $f^*_i$ are disjoint from one another, except along their boundaries, and since each sub-path $R_i$ does not cross itself.

Let $\cal C$ be the cycle formed by the black path $P=(b_1,b_2,\dots,b_m)$, by the edge $(b_m,r'')$, by the path $R$, and by the edge $(r',b_1)$. We have that $\mathcal E_R$ is an embedded planar graph whose underlying graph is $G \cup E(\mathcal C)$. Further, $\cal C$ traverses all the vertices in $V_r$ consecutively and all the vertices in $V_b$ consecutively in the ordering $\pi_b$; interpreting $\cal C$ as the spine of a book embedding proves that $\langle G, \pi_b \rangle$ admits a \btpbef{}.
\end{fullproof}

In the following we call \emph{good} an embedding of $H$ satisfying the characterization of \cref{le:characterization-triconnected-onesidefixed}.

\begin{lemma} \label{le:book-from-good}
Given a good embedding of the black saturation $H$ of an instance $\langle G,\pi_b\rangle$ of the \btpbefP problem, it is possible to construct in $O(|G|)$ time a \btpbef of $\langle G,\pi_b\rangle$. 
\end{lemma}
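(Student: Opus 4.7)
The plan is to show that the construction in the sufficiency direction of \cref{le:characterization-triconnected-onesidefixed} can be implemented in linear time. First I would explicitly build the auxiliary graph $A(\mathcal E)$ using a standard doubly-connected edge list (or similar face-incidence data structure) for $\mathcal E$: one traversal of the face list identifies the red faces (those incident to at least two red vertices) and, during the same traversal, for every incidence between a red vertex $v$ and a red face $f$ one adds the edge $(v,f)$. Since the total number of vertex-face incidences is linear in $|H|$, and since $|H|=O(|G|)$ (as $H$ is planar and obtained from $G$ by adding the path $P$), this phase takes $O(|G|)$ time. Because $A(\mathcal E)$ is a caterpillar (by \blue{Condition C\ref{condition:caterpillar}}), its backbone $\mathcal B=(f_1,v_2,f_2,\dots,v_k,f_k)$ can be identified in $O(|A(\mathcal E)|)=O(|G|)$ time by a straightforward pruning of the degree-$1$ red vertices.

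Next I would merge $\mathcal E$ and $A(\mathcal E)$ into the embedded planar graph $\mathcal E^+$: for every red face $f_i$ one inserts the vertex $f_i$ and the edges $(f_i,v)$ to its incident red vertices $v$, locally updating the rotation at each affected vertex; again this costs time proportional to the total incidence count, hence $O(|G|)$. Then I would choose and, if necessary, redefine $r'$ and $r''$ exactly as prescribed in the proof of \cref{le:characterization-triconnected-onesidefixed}: inspect the face of $\mathcal E^+$ containing $b_1$ (resp.\ $b_m$) and pick an incident red vertex adjacent to $f_1$ (resp.\ $f_k$) that is a leaf of $A(\mathcal E)$. This lookup takes $O(|G|)$ time overall. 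The edges $(b_1,r')$ and $(b_m,r'')$ can then be inserted into the appropriate faces of $\mathcal E^+$ in constant time each.

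After this, I would remove from $\mathcal E^+$ the vertices of $A(\mathcal E)$ corresponding to red faces of $\mathcal E$, obtaining $\mathcal E'$, and then, for each $i=1,\dots,k$, insert inside the face $f^*_i$ of $\mathcal E'$ a simple path $R_i$ from $v_i$ to $v_{i+1}$ (with $v_1:=r'$ and $v_{k+1}:=r''$) that visits every red vertex incident to $f^*_i$. Concretely, $R_i$ is built by walking along one of the two walks between $v_i$ and $v_{i+1}$ bounding $f^*_i$, attaching to $R_i$ each encountered internal red vertex, and then doing the same on the second walk. Since each face is processed in time proportional to its boundary length and faces are edge-disjoint along their interiors, the total cost of constructing $R=R_1\cup\dots\cup R_k$ is $O(|G|)$.

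Finally, the cycle $\mathcal C$ consisting of the black path $P$, the edge $(b_m,r'')$, the red path $R$, and the edge $(r',b_1)$ is read off in $O(|G|)$ time, and interpreting $\mathcal C$ as the spine of a book embedding yields the required \btpbef\ of $\langle G,\pi_b\rangle$; the page assignment of each edge of $G$ is simply its side of $\mathcal C$ in the already-computed planar embedding $\mathcal E_R$, which can be read off in constant time per edge. The only step that requires care is the construction of each $R_i$, because one must enumerate exactly the red vertices incident to $f^*_i$ in an order compatible with its boundary; this is, however, immediate from the face list of $\mathcal E'$, which is why the linear time bound goes through. Summing the costs of all phases gives the claimed $O(|G|)$ running time.
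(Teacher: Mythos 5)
Your proof is correct and follows essentially the same route as the paper: it refines the constructive sufficiency proof of \cref{le:characterization-triconnected-onesidefixed} into a linear-time procedure by building and embedding $A(\mathcal E)$ via face-boundary traversals, inserting $(b_1,r')$ and $(b_m,r'')$ (redefining $r',r''$ when needed), removing the face-vertices, threading the paths $R_i$ along the face boundaries, and reading the spine order and page assignment off the resulting Hamiltonian cycle. The only difference is that you spell out the data-structure bookkeeping (DCEL, backbone pruning) slightly more explicitly than the paper does, which does not change the argument.
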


\begin{proof}
The proof of sufficiency for the characterization in \cref{le:characterization-triconnected-onesidefixed} is constructive and can be refined into an $O(|G|)$-time algorithm. Namely, given a good embedding $\mathcal E$ of $H$, the caterpillar $A(\mathcal E)$ can be constructed and embedded into $\mathcal E$ in $O(|H|)\subseteq O(|G|)$ time by traversing the boundary of each face of $\mathcal E$, while counting the number of encountered red vertices; this results in an embedded graph $\cal E^+$. The insertion of the edge $(b_1,r')$ into $\cal E^+$ can clearly be done in time linear in the number of vertices of the face $b_1$ and $r'$ share, and likewise for the insertion of the edge $(b_m,r'')$ into $\cal E^+$. The removal of the edges of $A(\mathcal E)$ and of the vertices of $A(\mathcal E)$ corresponding to red faces of $\mathcal E$ is also easily done in linear time; this results in an embedded graph $\cal E'$. Furthermore, the insertion of the path $R$ into $\mathcal E'$ only requires to traverse the boundary of each face $f^*_i$ in order to define the path $R_i$, hence it can be performed in total linear time; this results in an embedded graph $\mathcal{E}_R$. Finally, a \btpbef $\Gamma_G$ of $\langle G,\pi_b\rangle$ can be directly recovered from $\mathcal{E}_R$. Namely, the order of the vertices along the spine of $\Gamma_G$  coincides with the order in which such vertices occur along the Hamiltonian cycle $\mathcal C=P\cup (b_m,r'')\cup R\cup (b_1,r')$ of $\mathcal{E}_R$, where $P$ is the black path of $H$. Further, the page assignment for the edges incident to each vertex can be derived from the clockwise order of the edges incident to the vertex in $\mathcal{E}_R$.
\end{proof}

\section{Simply-Connected Graphs}\label{se:simply}


The goal of this section is to prove that the black saturation $H$ of an instance $\langle G,\pi_b\rangle$ of \btpbef can be assumed to be ``almost'' biconnected; this greatly simplifies the search for a good embedding of $H$. Namely, although it is not always true that $H$ admits a good embedding if and only if its biconnected components admit good embeddings (with respect to the sub-instance of $\langle G,\pi_b\rangle$ they represent), we will prove that the existence of a good embedding of $H$ is equivalent to the existence of a good embedding for each biconnected component of $H$ augmented with some edges, called \emph{$rb$-trivial components}. In the following we make this argument precise. 

The \emph{block-cut-vertex tree} $T$ of a connected graph $G$~\cite{h-gt-69,ht-eagm-73} is the tree whose nodes are the blocks and the cut-vertices of $G$; a block is adjacent in $T$ to all the cut-vertices it contains. 

By Assumption \blue{A\ref{A1}}, we have that $H$ is connected. Consider the blocks of $H$ (see \cref{fig:block-cut-vertex-tree-biconnected}). Let $h_1,\dots,h_p$ be the \emph{$rb$-trivial} components, i.e., the blocks consisting of a single edge between a red and a black vertex, and let $H^-_1,\dots,H^-_q$ be the other \emph{non-$rb$-trivial} blocks. 
Recall that $P=(b_1,\dots,b_m)$. 
We present the following structural results (see \cref{fig:block-cut-vertex-tree-tree}). 

\begin{figure}[tb]\tabcolsep=4pt
	\centering
	\subfloat[\label{fig:block-cut-vertex-tree-biconnected}]{\includegraphics[scale=0.9]{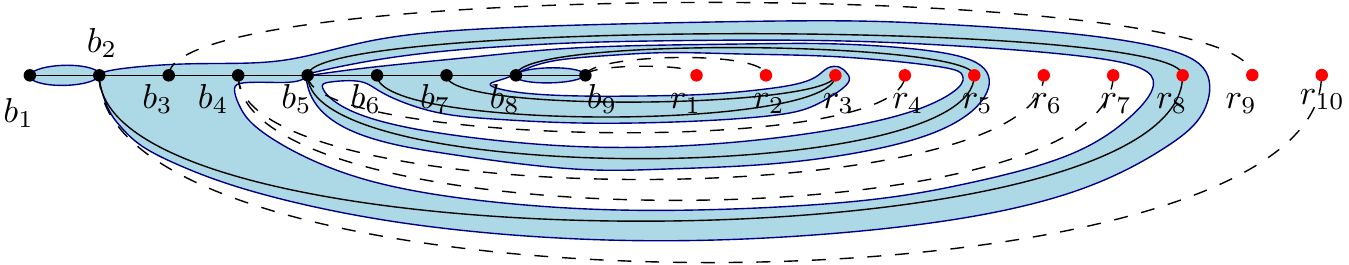}}\hfil
	\subfloat[\label{fig:block-cut-vertex-tree-tree}]{\includegraphics[scale=0.9]{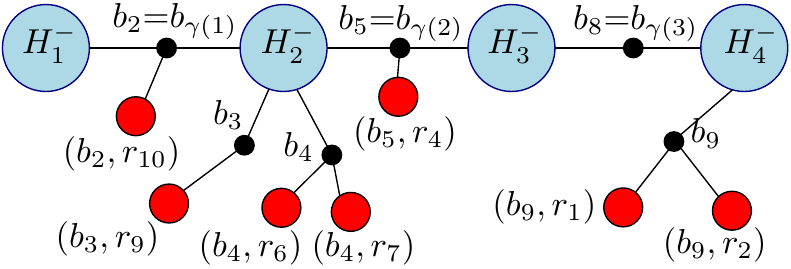}} \hfil
	\subfloat[\label{fig:block-cut-vertex-tree-augmented}]{\includegraphics[scale=0.9]{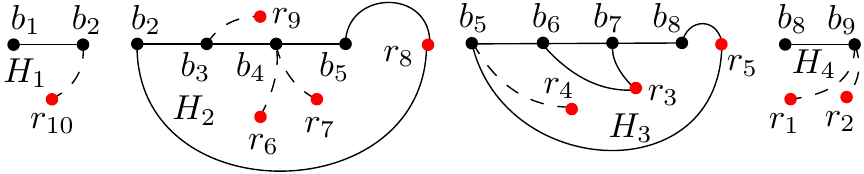}}
	\caption{(a) A planar embedding of $H$ associated with a \btpbef of $\langle G,\pi_b\rangle$. The blue regions enclose the non-$rb$-trivial components of $H$, while the dashed lines represent the $rb$-trivial components of $H$. (b) The block-cut-vertex tree $T$ of $H$. Blue, red, and black disks represent non-$rb$-trivial components, $rb$-trivial components, and cut-vertices of $H$, respectively. (c) The $rb$-augmented components of $H$.}\label{fig:block-cut-vertex-tree}
\end{figure}

\begin{observation}\label{obs:cutvertex-h-black}
Every cut-vertex of $H$ is a black vertex.	
\end{observation}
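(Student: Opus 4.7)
The plan is to derive a contradiction from assuming that some red vertex $r$ is a cut-vertex of $H$. The key observations are that (i) $H$ is bipartite and the only edges added to $G$ when constructing $H$ are the edges of the black path $P=(b_1,\dots,b_m)$, which are all between black vertices, so in $H$ there are no edges between two red vertices; and (ii) by Assumption \blue{A\ref{A1}}, every red vertex of $H$ has at least one black neighbor.

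Suppose, for a contradiction, that some red vertex $r$ is a cut-vertex of $H$. Consider the graph $H-r$; we will show that $H-r$ is connected, which contradicts the assumption that $r$ is a cut-vertex. First, the black path $P$ is still entirely contained in $H-r$, since $r$ is not a vertex of $P$. In particular, all black vertices of $H$ lie in a single connected component $C$ of $H-r$. Second, consider any red vertex $r'\neq r$ of $H$. By observation (ii) above, $r'$ has a black neighbor $b$ in $H$; since the edge $(r',b)$ is incident to no red vertex other than $r'$, it survives in $H-r$. Hence $r'$ is adjacent in $H-r$ to a vertex of $C$, and therefore $r'$ itself belongs to $C$. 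Thus every vertex of $H-r$ lies in $C$, so $H-r$ is connected, contradicting the assumption that $r$ is a cut-vertex.

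The only step that deserves care is ruling out red-red edges in $H$, but this is immediate from the construction of the black saturation together with the bipartiteness of $G$. Thus no red vertex can be a cut-vertex of $H$, and every cut-vertex of $H$ must be black.
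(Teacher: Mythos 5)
Your proof is correct and follows essentially the same route as the paper, which simply notes that the black vertices induce a connected subgraph (the black path $P$); your argument just spells out the resulting contradiction, using that every red vertex attaches only to black vertices (Assumption A1 plus the absence of red--red edges). No gaps.
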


\begin{proof}
The observation descends from the fact that the black vertices induce a connected subgraph of $H$, namely the black path $P$.
\end{proof}

\begin{lemma}\label{le:bc-tree}
The block-cut-vertex tree $T$ of $H$ consists of:

\begin{itemize}
	\item a path $Q=(H^-_1,b_{\gamma(1)},H^-_2,b_{\gamma(2)},\dots,H^-_{q-1},b_{\gamma(q-1)},H^-_{q})$, where $1<\gamma(1)<\gamma(2)<\dots<\gamma(q-1)<m$, that contains all the non-$rb$-trivial components $H^-_j$ of $H$ and no $rb$-trivial component of $H$; in particular:
	\begin{itemize}
	\item $H^-_1$ contains the subpath of $P$ between $b_1$ and $b_{\gamma(1)}$; 
	\item for $j=2,\dots,q-1$, $H^-_j$ contains the subpath of $P$ between $b_{\gamma(j-1)}$ and $b_{\gamma(j)}$; and
	\item $H^-_q$ contains the subpath of $P$ between $b_{\gamma(q-1)}$ and $b_m$;
	\end{itemize}
	\item a set of leaves representing the $rb$-trivial components $h_1,\dots,h_p$; for each $rb$-trivial component $h_x=(b_i,r_h)$, the black vertex $b_i$ either belongs to a single non-$rb$-trivial component $H^-_j$ (and then $T$ contains a node $b_i$ adjacent to $H^-_j$ and the leaf representing $h_x$ is adjacent to $b_i$), or belongs to two non-$rb$-trivial components $H^-_j$ and $H^-_{j+1}$ (and then the leaf representing $h_x$ in $T$ is adjacent to $b_i=b_{\gamma(j)}$).  
\end{itemize} 
\end{lemma}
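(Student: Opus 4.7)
The plan is to read off the structure of $T$ directly from how the black path $P$ distributes over the blocks of $H$, after first disposing of the $rb$-trivial blocks.

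First I would analyze $rb$-trivial blocks. An $rb$-trivial block is a single edge $(r,b)$ where $r$ is red and $b$ is black. For such an edge to be a block, no cycle in $H$ can contain it; in particular, $r$ must have degree $1$ in $H$ (otherwise, $r$ would have some other black neighbor $b'$, and the path $P$ together with the edges $(r,b),(r,b')$ would create a cycle using $(r,b)$, contradicting maximality of $\{(r,b)\}$). Consequently $r$ is not a cut-vertex of $H$, while $b$ is, because removing $b$ isolates~$r$. Hence each $rb$-trivial block is a leaf of $T$, attached to its black endpoint. This already yields the second bullet of the lemma, modulo determining which $b_i$ an $rb$-trivial block is attached to (handled at the end).

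Next I would show that every non-$rb$-trivial block $H^-_j$ contains at least two black vertices, and in fact contains a contiguous subpath of $P$. If $H^-_j$ is a single edge, it must be black-black (by the previous step), thus an edge of $P$ with two black endpoints. Otherwise $H^-_j$ is biconnected with at least one cycle; since there are no red-red edges in $H$, a block with only one black vertex would be a star centered at that vertex, whose center is a cut-vertex, contradicting biconnectedness. So $H^-_j$ has black vertices $b_{i_1}<\dots<b_{i_s}$ for some $s\ge 2$. The key sub-claim is: \emph{for any two black vertices $b_a, b_c$ of $H^-_j$ with $a<c$, every edge of the subpath of $P$ from $b_a$ to $b_c$ belongs to $H^-_j$.} To prove this, since $H^-_j$ is biconnected, there exist two internally vertex-disjoint paths between $b_a$ and $b_c$ inside $H^-_j$; together with either half of the $P$-subpath split at an arbitrary $P$-edge $e$ on it, this creates a simple cycle through $e$ and through edges of $H^-_j$. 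Since two edges lie in the same block if and only if they lie on a common simple cycle (equivalently, by the standard block equivalence relation), $e$ belongs to $H^-_j$. Applying the sub-claim to the consecutive pairs $b_{i_l}, b_{i_{l+1}}$ shows that $H^-_j$ contains the entire subpath of $P$ from $b_{i_1}$ to $b_{i_s}$, which is contiguous.

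Finally I would assemble $T$. Every edge of $P$ lies in exactly one block, and by the previous step that block is non-$rb$-trivial and the $P$-edges inside each non-$rb$-trivial block form a contiguous subpath of $P$. Hence the subpaths $\{\text{$P$-edges in }H^-_j\}$ partition $E(P)$ into consecutive subpaths, say with boundary vertices $b_1 = b_{\gamma(0)} < b_{\gamma(1)} < \dots < b_{\gamma(q-1)} < b_{\gamma(q)} = b_m$. Re-indexing so that $H^-_j$ contains the subpath from $b_{\gamma(j-1)}$ to $b_{\gamma(j)}$ yields exactly the path $Q=(H^-_1, b_{\gamma(1)}, H^-_2, \dots, H^-_q)$: consecutive blocks $H^-_j$ and $H^-_{j+1}$ share the black cut-vertex $b_{\gamma(j)}$ (and only that vertex, since two blocks share at most one vertex), while no non-consecutive $H^-_j, H^-_k$ share a vertex (otherwise $T$ would contain a cycle). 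By \cref{obs:cutvertex-h-black} the internal nodes of $Q$ are black, as claimed. For the remaining $rb$-trivial leaves: each is attached to some black cut-vertex~$b_i$ of $H$; if $b_i$ belongs to a single non-$rb$-trivial component $H^-_j$, then in $T$ the node $b_i$ is a cut-vertex hanging off $H^-_j$ with the leaf attached, and if $b_i$ belongs to two, it must be one of the $b_{\gamma(j)}$'s and the leaf is attached directly there.

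The main obstacle will be Step 2, namely the cycle-chasing argument showing that a $P$-edge between two black vertices of a block must itself lie in the block. Everything else is bookkeeping built on that structural fact and on the uniqueness of the block containing each edge.
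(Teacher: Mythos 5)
Your overall route is the same as the paper's: show that the black vertices of each non-$rb$-trivial block form a contiguous stretch of $P$, deduce that the edges of $P$ are partitioned into consecutive subpaths, one per $H^-_j$, and attach the $rb$-trivial blocks as leaves at black cut-vertices. The bookkeeping is sound: the degree-one argument for the red endpoint of an $rb$-trivial block, the star argument showing a non-$rb$-trivial block has at least two black vertices, the assembly of $Q$, and the placement of the leaves. The only point you leave unstated is that a black vertex can lie in at most two non-$rb$-trivial blocks (needed for the dichotomy in the second bullet), but this is immediate from the contiguity of the subpaths, so it is not a real issue.

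The gap is exactly where you predicted it, in the cycle construction for the key sub-claim. As written it does not yield a simple cycle through $e$: a ``half'' of the $P$-subpath split at $e$ does not contain $e$ at all, and if instead you glue the whole $P$-subpath between $b_a$ and $b_c$ to one of the two internally disjoint paths of $H^-_j$, the result need not be simple, because that $P$-subpath may re-enter $H^-_j$ and pass through internal vertices of the chosen paths --- indeed, the conclusion of the sub-claim is precisely that it lies entirely inside $H^-_j$, so you cannot assume disjointness. The repair is a shortcutting step: write $e=(b_t,b_{t+1})$, let $x=b_s$ with $s\leq t$ maximal such that $b_s\in H^-_j$ and $y=b_{s'}$ with $s'\geq t+1$ minimal such that $b_{s'}\in H^-_j$ (both exist since $b_a,b_c\in H^-_j$, and $x\neq y$). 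The subpath $R$ of $P$ from $x$ to $y$ contains $e$ and has no internal vertex in $H^-_j$; then either $R$ is the single edge $e$ between two vertices of $H^-_j$, or $R$ is an ear attached to $H^-_j$ at two distinct vertices, and in both cases $H^-_j\cup R$ is biconnected, contradicting the maximality of $H^-_j$ unless $e\in H^-_j$. This is essentially the paper's own argument, which bypasses the common-cycle characterization and directly adds the missing portion of $P$ to $H^-_j$ to contradict maximality; your common-cycle version also works if you shortcut $R$ so that its endpoints lie on a cycle of $H^-_j$ through $b_a$ and $b_c$ and no internal vertex does. With that repair the proof goes through.
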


\begin{proof}
First note that, if $H$ is biconnected, then its block-cut-vertex tree $T$ consists of a single node, and the statement follows. Assume hence that $H$ is not biconnected. 

Let $b_i$ and $b_h$, with $i<h$, be any two black vertices of $H$ that belong to the same non-$rb$-trivial component $H^-_j$. Suppose, for a contradiction, that some vertices among $b_{i+1}, b_{i+2},\dots,b_{h-1}$ do not belong to $H^-_j$. Consider the subpath $P_{i,h}$ of the black path $P$ between $b_i$ and $b_h$. Then the subgraph of $H$ composed of $H^-_j$ and of the vertices and edges of $P_{i,h}$ that are not in $H^-_j$ is biconnected. However, this contradicts the fact that $H^-_j$ is a maximal biconnected subgraph of $H$ and proves that all the vertices $b_{i+1}, b_{i+2},\dots,b_{h-1}$ also belong to $H^-_j$. 

It follows that there exist subpaths $P_1,\dots,P_q$ of $P$ such that: 

\begin{enumerate}[(i)]
	\item the path $P_1$ is the subpath of $P$ between $b_1$ and a vertex $b_{\gamma(1)}$ with $\gamma(1)>1$;
	\item for $j=2,\dots,q-1$, the path $P_j$ is the subpath of $P$ between $b_{\gamma(j-1)}$ and a vertex $b_{\gamma(j)}$ with $\gamma(j)>\gamma(j-1)$; 
	\item the path $P_q$ is the subpath of $P$ between $b_{\gamma(q-1)}$ and $b_m$, where $m>\gamma(q-1)$;
	\item for $j=1,\dots,q$, the path $P_j$ belongs to a non-$rb$-trivial component $H^-_j$ of $H$; and 
	\item for any two distinct indices $i$ and $j$ in $\{1,\dots,q\}$, the non-$rb$-trivial components $H^-_i$ and $H^-_j$ are distinct.
\end{enumerate}

Note that, for $j=1,\dots,q-1$, the vertex $b_{\gamma(j)}$ is shared by $H^-_j$ and $H^-_{j+1}$, hence it is a cut-vertex. This implies that $T$ contains a path $Q=(H^-_1,b_{\gamma(1)},H^-_2,b_{\gamma(2)},\dots,H^-_{q-1},b_{\gamma(q-1)},H^-_{q})$ such that all the components $H^-_j$ of $H$ are non-$rb$-trivial. 

Consider any block $h_x$ of $H$ that is not among $H^-_1,H^-_2,\dots,H^-_{q}$. We prove that $h_x$ contains exactly one black vertex. Suppose, for a contradiction, that $h_x$ contains (at least) two black vertices $b_i$ and $b_l$; assume, w.l.o.g.\ that $i\leq l$. If $b_i$ and $b_l$ both belong to the same non-$rb$-trivial component $H^-_j$, then the subgraph of $H$ composed of $h_x$ and $H^-_j$ is biconnected, contradicting the fact that $H^-_j$ is a maximal biconnected subgraph of $H$. Otherwise, let $j$ be the largest index such that $b_i$ belongs to $H^-_j$ and let $k$ be the smallest index such that $b_l$ belongs to $H^-_k$; note that $j<k$ given that $i<l$ and given that $b_i$ and $b_l$ do not belong to the same non-$rb$-trivial component of $H$. Then the subgraph of $H$ composed of $h_x$ and of $H^-_j,H^-_{j+1},\dots,H^-_k$ is biconnected, contradicting the fact that $H^-_j$ is a maximal biconnected subgraph of $H$. This proves that $h_x$ contains exactly one black vertex $b_i$; since $H$ contains no edge between red vertices, it follows that $h_x$ is an edge $(b_i,r_h)$, where $r_h$ is a red vertex, i.e., $h_x$ is an $rb$-trivial component. This implies that $Q$ contains all the non-$rb$-trivial components of $H$. Further, by \cref{obs:cutvertex-h-black}, we have that $r_h$ is not a cut-vertex of $H$, and hence $b_i$ is. Finally, we have that $b_i$ belongs either to a single non-$rb$-trivial component $H^-_j$ or to two non-$rb$-trivial components $H^-_j$ and $H^-_{j+1}$ of $H$; in the former case, $T$ contains a node $b_i$ adjacent to $H^-_j$ and the leaf representing $h_x$ is adjacent to $b_i$, while in the latter case the leaf representing $h_x$ in $T$ is adjacent to  $b_i=b_{\gamma(j)}$. This concludes the proof.
\end{proof}


Before proceeding to the decomposition of $H$ into its ``almost'' biconnected components, we prove that the following simplification on the structure of $H$ can be assumed. 

\begin{lemma} \label{le:one-trivial}
For each black vertex $b_i$ of $G$, let $n_i$ be the number of incident $rb$-trivial components; further, if $n_i>0$, let $(b_i,r_{\lambda_i(1)}),\dots,(b_i,r_{\lambda_i(n_i)})$ be the $rb$-trivial components incident to $b_i$. Let $G'$ be the graph obtained from $G$ by removing the red vertices $r_{\lambda_i(2)},\dots,r_{\lambda_i(n_i)}$ together with their incident edges. Then $\langle G,\pi_b \rangle$ and $\langle G',\pi_b \rangle$ are equivalent instances of \btpbef. Moreover, given a \btpbef of $\langle G',\pi_b \rangle$, a \btpbef of $\langle G,\pi_b \rangle$ can be computed in $O(|G|)$ time.
\end{lemma}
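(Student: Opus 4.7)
The plan is to prove the two directions of the equivalence separately and then to verify the linear-time construction. The forward implication ($\Longrightarrow$) is immediate: if $\Gamma$ is a \btpbef of $\langle G,\pi_b\rangle$, deleting from $\Gamma$ the vertices $r_{\lambda_i(2)},\dots,r_{\lambda_i(n_i)}$ together with their incident edges (for every $b_i$ with $n_i>0$) yields a \btpbef of $\langle G',\pi_b\rangle$ that still respects $\pi_b$; vertex and edge removal cannot create crossings, cannot alter the relative spine order of the surviving vertices, and cannot destroy the consecutivity of the two colour classes along the spine.

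For the reverse implication ($\Longleftarrow$), the plan is to turn a \btpbef $\Gamma'$ of $\langle G',\pi_b\rangle$ into a \btpbef of $\langle G,\pi_b\rangle$ by a purely local reinsertion. Specifically, for every $b_i$ with $n_i>1$ and every $j\in\{2,\dots,n_i\}$, I would place $r_{\lambda_i(j)}$ on the spine in a position arbitrarily close to $r_{\lambda_i(1)}$ but still inside the red segment, and assign the edge $(b_i,r_{\lambda_i(j)})$ to the same page as $(b_i,r_{\lambda_i(1)})$ in $\Gamma'$. The correctness of this construction hinges on two observations. First, each removed red vertex $r_{\lambda_i(j)}$ has degree exactly $1$ in $G$: the $rb$-trivial component containing it is, by definition, just the edge $(b_i,r_{\lambda_i(j)})$, and by \cref{obs:cutvertex-h-black} the red vertex $r_{\lambda_i(j)}$ is not a cut-vertex of $H$, so it cannot belong to any further block. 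Hence only the single edge $(b_i,r_{\lambda_i(j)})$ has to be reintroduced at $r_{\lambda_i(j)}$. Second, since $r_{\lambda_i(j)}$ is placed arbitrarily close to $r_{\lambda_i(1)}$ on the spine, the pair $\{b_i,r_{\lambda_i(j)}\}$ interleaves on the spine with exactly the same pairs of endpoints as $\{b_i,r_{\lambda_i(1)}\}$ does; since $(b_i,r_{\lambda_i(1)})$ is crossing-free in its page in $\Gamma'$, the newly inserted edge is crossing-free in the same page. The order $\pi_b$ and the consecutivity of the two colour classes are clearly preserved.

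For the complexity claim, the total number of reinsertions is $\sum_i(n_i-1)\le |V_r|\in O(|G|)$, and each reinsertion amounts to splicing one vertex into a linked list representing the spine and one edge into a linked list representing the page, which I would implement in $O(1)$ time each. I do not foresee a real obstacle in carrying out this plan; the only point requiring care is the invocation of \cref{obs:cutvertex-h-black} to conclude that the removed red vertices have degree exactly $1$ in $G$, which is precisely what makes the reinsertion both local and trivially crossing-free.
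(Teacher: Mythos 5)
Your proposal is correct and follows essentially the same route as the paper: delete the extra degree-one red vertices for one direction, and for the converse place each removed $r_{\lambda_i(j)}$ on the spine next to $r_{\lambda_i(1)}$ and assign $(b_i,r_{\lambda_i(j)})$ to the page of $(b_i,r_{\lambda_i(1)})$, in overall linear time. Your added justification (degree-one via \cref{obs:cutvertex-h-black} and the interleaving argument) only makes explicit what the paper leaves as "easily done".
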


\begin{proof}
From a \btpbef of $\langle G,\pi_b \rangle$, a \btpbef of $\langle G',\pi_b \rangle$ is obtained by removing, for each black vertex $b_i$ of $G$ with $n_i>1$, the red vertices $r_{\lambda_i(2)},\dots,r_{\lambda_i(n_i)}$ together with their incident edges. Conversely, from a \btpbef of $\langle G',\pi_b \rangle$, a \btpbef of $\langle G,\pi_b \rangle$ is obtained by placing, for each black vertex $b_i$ of $G$ with $n_i>1$, the red vertices $r_{\lambda_i(2)},\dots,r_{\lambda_i(n_i)}$ immediately to the right of $r_{\lambda_i(1)}$ and by drawing the edges $(b_i,r_{\lambda_i(2)}),\dots,(b_i,r_{\lambda_i(n_i)})$ on the same page as $(b_i,r_{\lambda_i(1)})$; this can easily be done in $O(|G|)$ time.
\end{proof}

In view of \cref{le:one-trivial}, in the rest of the paper we assume that each black vertex of $H$ has at most one incident $rb$-trivial component. 

By \cref{le:bc-tree}, the black vertex $b_i$ of each $rb$-trivial component $h_x=(b_i,r_h)$ either belongs to a single non-$rb$-trivial component $H^-_j$, or belongs to two non-$rb$-trivial components $H^-_j$ and $H^-_{j+1}$. In the former case, we assign $h_x$ to $H^-_j$, while in the latter case we arbitrarily assign $h_x$ to $H^-_j$ or $H^-_{j+1}$. For $j=1,\dots,q$, denote by $H_j$ the subgraph of $H$ which consists of $H^-_j$ and of the $rb$-trivial components of $H$ that have been assigned to $H^-_j$. We call \emph{$rb$-augmented components} of $H$ the subgraphs $H_1,H_2,\dots,H_q$ (refer to \cref{fig:block-cut-vertex-tree-augmented}). For $j=1,\dots,q$, let $G_j$ be the subgraph of $G$ induced by the vertices in $H_j$ (that is, $G_j$ is the graph obtained from $H_j$ by removing the edges that belong to the black path $P$). Finally, for $j=1,\dots,q$, let $\pi^j_b$ be the restriction of $\pi_b$ to the black vertices in $G_j$ (that is, $\pi^1_b=\langle b_1,\dots,b_{\gamma(1)}\rangle$, $\pi^j_b=\langle b_{\gamma(j-1)},\dots,b_{\gamma(j)}\rangle$ for each $j=2,\dots,q-1$, and $\pi^q_b=\langle b_{\gamma(q-1)},\dots,b_m\rangle$). We have the following.

\begin{lemma} \label{le:rb-augmented}
$\langle G,\pi_b\rangle$ is a positive instance of \btpbef if and only if $\langle G_j,\pi^j_b\rangle$ is a positive instance of \btpbef, for every $j=1,\dots,q$.
\end{lemma}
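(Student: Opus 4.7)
The plan is to prove the two implications separately.

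For the forward direction, I would take a \btpbef $\Gamma$ of $\langle G, \pi_b\rangle$ and, for each $j$, extract the sub-drawing $\Gamma_j$ of $\Gamma$ induced by the vertices and edges of $G_j$. Since the black vertices of $G_j$ form a consecutive subsequence of $\pi_b$, they lie consecutively along the spine of $\Gamma_j$ in the prescribed order $\pi^j_b$. Since the red vertices of $G_j$ form a subset of the red vertices of $G$ (which are consecutive on the spine of $\Gamma$), and since no vertex of $G$ not in $G_j$ appears in $\Gamma_j$, the reds of $G_j$ are themselves consecutive on the spine of $\Gamma_j$. Planarity and the page assignment are inherited from $\Gamma$, so $\Gamma_j$ is a \btpbef of $\langle G_j, \pi^j_b\rangle$. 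This direction is essentially immediate from the definitions.

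For the backward direction, I would proceed by induction on $q$. The base case $q=1$ is trivial. For the inductive step, let $G' = G_2 \cup \cdots \cup G_q$ with induced black order $\pi'_b = \langle b_{\gamma(1)}, \ldots, b_m\rangle$; by the inductive hypothesis (applied to the instance $\langle G', \pi'_b\rangle$, whose $rb$-augmented components are $H_2, \ldots, H_q$), this instance admits a \btpbef $\Gamma'$. Together with the given \btpbef $\Gamma_1$ of $\langle G_1, \pi^1_b\rangle$, I would combine them into a \btpbef of $\langle G, \pi_b\rangle$ by gluing along the common cut-vertex $b_{\gamma(1)}$. View each of $\Gamma_1$ and $\Gamma'$ as embedded in a closed topological disk whose boundary circle plays the role of the spine, with blacks on one arc, reds on the complementary arc, and the two open half-disks playing the role of the two pages. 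The combined drawing $\Gamma$ is obtained by inserting the disk of $\Gamma'$ into a face of $\Gamma_1$ incident to $b_{\gamma(1)}$, identifying the two copies of this vertex; with appropriate choices of the insertion face of $\Gamma_1$ and of the flip of $\Gamma'$, the two boundary circles concatenate into a single spine cycle that traverses the blacks in the order $b_1, \ldots, b_m$ followed by all the reds of $G'$ and then all the reds of $G_1$, consecutively.

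The main obstacle will be verifying that this gluing yields a planar drawing whose spine realises the desired order. Planarity follows from the fact that $b_{\gamma(1)}$ is a cut-vertex of $G$: the edges of $G_1$ and those of $G'$ lie in disjoint regions of the plane separated by $b_{\gamma(1)}$, so embedding $\Gamma'$ inside a chosen face of $\Gamma_1$ introduces no crossings with edges of $G_1$. The correct vertex order along the spine is achieved by noting that at $b_{\gamma(1)}$ each of the two drawings has a face incident to both its black arc and its red arc (namely, either of its two pages); by inserting $\Gamma'$ into the face of $\Gamma_1$ lying immediately between the last black portion of $\Gamma_1$'s spine at $b_{\gamma(1)}$ and the first red vertex on that spine, the black arcs of $\Gamma_1$ and $\Gamma'$ merge into the single black arc $b_1, \ldots, b_m$, while the red portions line up on the opposite side to form a single consecutive red arc.
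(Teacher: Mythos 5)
Your forward direction coincides with the paper's (restrict $\Gamma_G$ to each $G_j$), and your converse ultimately builds the very same layout as the paper: each $G_j$ nested between the black run and the red run of the earlier components, so the spine reads $\pi^1_b,\pi^2_b,\dots,\pi^q_b$ followed by the red orders in reverse. The difference is how you get there and how you certify planarity: you use induction on $q$ plus a topological gluing at the cut-vertex $b_{\gamma(1)}$, whereas the paper writes down the global spine order and page assignment in one shot and checks combinatorially that two same-page edges from distinct components have nested end-vertices (black end of $e$, black end of $e'$, red end of $e'$, red end of $e$), hence cannot cross. Your route is sound, but it trades that two-line check for topological bookkeeping that your sketch leaves informal. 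First, the setup is inconsistent as stated: if the boundary circle of the disk is the spine, there are no ``two open half-disks'' serving as pages; you want either the spine as a chord with the two half-disks as pages, or the spine as a circle on the sphere with the two complementary disks as pages. Second, ``inserting the disk of $\Gamma'$ into a face of $\Gamma_1$'' really means detouring the spine of $\Gamma_1$ into one of the two faces incident to the spine segment between $b_{\gamma(1)}$ and the first red vertex of $G_1$; after the reroute one must still verify that every edge lies entirely in a single region of the new spine. This does work out --- one page of $\Gamma'$ merges with one page of $\Gamma_1$ and the other page of $\Gamma'$ ends up in the pocket that the vacated spine segment connects to the opposite page of $\Gamma_1$ --- but it deserves an explicit argument rather than the assertion that the cut-vertex ``separates'' the two drawings. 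Third, the induction silently uses that the $rb$-augmented components of the sub-instance $\langle G_2\cup\dots\cup G_q,\pi'_b\rangle$ are exactly $H_2,\dots,H_q$; this is true (the blocks of the restricted black saturation and the assignment of $rb$-trivial components restrict properly), but it should be stated. What the paper's direct construction buys is precisely the avoidance of these topological and inductive side conditions; what your version buys is a more geometric picture of why the nesting is forced by the cut-vertex structure.
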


\begin{proof}
One direction is trivial. Namely, if $\langle G,\pi_b\rangle$ admits a \btpbef $\Gamma_G$, then, for every $j=1,\dots,q$, the restriction of $\Gamma_G$ to the vertices and edges of $G_j$ is a \btpbef of $\langle G_j,\pi^j_b\rangle$.

Next, assume that $\langle G_j,\pi^j_b\rangle$ admits a \btpbef $\Gamma_{G_j}$, for every $j=1,\dots,q$; denote by $\pi^j_r$ the order of the red vertices in $\Gamma_{G_j}$.  We construct a \btpbef $\Gamma_G$ of $\langle G,\pi_b\rangle$ as follows. 

First, the ordering of the vertices along the spine of $\Gamma_G$ is $\langle\pi^1_b,\pi^2_b,\dots,\pi^q_b,\pi^q_r,\pi^{q-1}_r,\dots,\pi^1_r\rangle$. This ordering is well-defined once the last vertex of $\pi^j_b$ is identified with the first vertex of $\pi^{j+1}_b$, for $j=1,\dots,q-1$. Indeed, any two graphs $G_j$ and $G_k$ with $k>j$ share vertices if and only if $k=j+1$; moreover, if $k=j+1$ then the only vertex shared by $G_j$ and $G_{j+1}$ is $b_{\gamma(j)}$, which is the last element of $\pi^j_b$ and the first element of $\pi^{j+1}_b$, by construction. 

Second, an edge of $G$ is assigned to the first (second) page of $\Gamma_G$ if and only if it is assigned to the first (second) page of the book embedding $\Gamma_{G_j}$ of the graph $G_j$ it belongs to. This assignment is well-defined, since each edge of $G$ belongs to one of the graphs $G_1,G_2,\dots,G_q$.

By \cref{le:bc-tree} and by construction, the ordering $\langle\pi^1_b,\pi^2_b,\dots,\pi^q_b\rangle$ coincides with $\pi_b$. Hence, we only need to prove that no two edges $e$ and $e'$ in the same page of $\Gamma_G$ cross; we assume that $e$ and $e'$ do not share end-vertices, as otherwise they do not cross. If $e$ and $e'$ belong to the same graph $G_j$, then they do not cross in $\Gamma_G$ as they do not cross in $\Gamma_{G_j}$. Suppose next that $e$ belongs to a graph $G_j$, while $e'$ belongs to a graph $G_k$ with $k>j$. By construction, the order of the end-vertices of $e$ and $e'$ along the spine is: the black end-vertex of $e$ first, then the black end-vertex of $e'$, then the red end-vertex of $e'$, and finally the red end-vertex of $e$; hence, $e$ and $e'$ do not cross. This concludes the proof of the lemma.
\end{proof}

We obtain the following.

\begin{corollary} \label{cor:rb-augmented-good-embeddings}
The black saturation $H$ admits a good embedding if and only if the black saturation $H_j$ of $\langle G_j,\pi^j_b\rangle$ admits a good embedding, for every $j=1,\dots,q$. Further, a good embedding of $H$ can be constructed in $O(|H|)$ time from good embeddings of $H_1,\dots,H_q$.
\end{corollary}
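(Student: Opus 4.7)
The plan is to deduce the corollary by chaining three previously established results, without introducing any new combinatorial argument. First, I would apply \cref{le:characterization-triconnected-onesidefixed} twice: once to translate ``$H$ admits a good embedding'' into ``$\langle G,\pi_b\rangle$ admits a \btpbef'', and once more, this time for each index $j=1,\dots,q$, to translate ``$H_j$ admits a good embedding'' into ``$\langle G_j,\pi^j_b\rangle$ admits a \btpbef''. The missing bridge is then exactly \cref{le:rb-augmented}, which states that $\langle G,\pi_b\rangle$ is a positive instance of \btpbef if and only if every $\langle G_j,\pi^j_b\rangle$ is. Composing these three equivalences yields the desired biconditional. Before invoking \cref{le:characterization-triconnected-onesidefixed} on each $H_j$, I would briefly verify that each $\langle G_j,\pi^j_b\rangle$ still satisfies Assumptions \blue{A\ref{A1}} and \blue{A\ref{A2}}; this is immediate from Assumptions \blue{A\ref{A1}}--\blue{A\ref{A2}} on $\langle G,\pi_b\rangle$ together with the definition of $rb$-augmented component.

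For the constructive part, I would proceed in three linear-time stages. In the first stage, apply \cref{le:book-from-good} to each good embedding of $H_j$ to obtain a \btpbef of $\langle G_j,\pi^j_b\rangle$ in $O(|G_j|)$ time. In the second stage, combine these book embeddings into a \btpbef of $\langle G,\pi_b\rangle$ exactly as described in the proof of \cref{le:rb-augmented}: identify the last vertex of $\pi^{j}_b$ with the first vertex of $\pi^{j+1}_b$ for each $j=1,\dots,q-1$, so that the concatenation $\langle \pi^1_b,\dots,\pi^q_b\rangle$ equals $\pi_b$; place the red-vertex orderings along the spine in the reverse order $\langle \pi^q_r,\pi^{q-1}_r,\dots,\pi^1_r\rangle$; and assign each edge to the page it occupies in the \btpbef of the graph $G_j$ it belongs to. In the third stage, read off a good embedding of $H$ from the resulting \btpbef of $\langle G,\pi_b\rangle$ by drawing the edges of the black path $P$ along the spine and recording the clockwise order of the edges around each vertex; the embedding so obtained is good, thanks once again to \cref{le:characterization-triconnected-onesidefixed}.

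The total running time is $O\bigl(\sum_{j=1}^q |H_j|\bigr) = O(|H|)$, since each cut-vertex of $H$ is shared by at most two $rb$-augmented components and each edge of $H$ belongs to exactly one $H_j$. I do not foresee any genuine obstacle in carrying out this plan: all three stages are directly justified by cited lemmas, and the only modest bookkeeping concerns ensuring that the identifications of shared cut-vertices in stage two are performed consistently and that the per-stage linear-time bounds compose into the claimed overall $O(|H|)$ bound, both of which are routine.
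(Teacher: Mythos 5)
Your proposal is correct and follows essentially the same route as the paper: the equivalence is obtained by chaining \cref{le:characterization-triconnected-onesidefixed} (applied to $H$ and to each $H_j$) with \cref{le:rb-augmented}, and the construction applies \cref{le:book-from-good} to each $H_j$, merges the resulting book embeddings as in the sufficiency proof of \cref{le:rb-augmented}, and reads off the embedding of $H$ by routing the black path $P$ along the spine, all within $O(|H|)$ time. The only differences are cosmetic: the paper additionally reinserts (via \cref{le:one-trivial}) any $rb$-trivial components removed by the earlier simplification, and it does not bother checking Assumptions A\ref{A1}--A\ref{A2} for the sub-instances, whereas you do (note that your claim that A\ref{A2} is ``immediate'' for each $\langle G_j,\pi^j_b\rangle$ is not quite automatic, but this does not affect the argument as carried out in the paper).
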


\begin{proof}
By \cref{le:characterization-triconnected-onesidefixed}, we have that $H$ admits a good embedding if and only if $\langle G,\pi_b\rangle$ is a positive instance of \btpbef, and we have that $H_j$ admits a good embedding if and only if $\langle G_j,\pi^j_b\rangle$ is a positive instance of \btpbef, for $j=1,\dots,q$. By \cref{le:rb-augmented}, we have that $\langle G,\pi_b\rangle$ is a positive instance of \btpbef if and only if $\langle G_j,\pi^j_b\rangle$ is a positive instance of \btpbef, for $j=1,\dots,q$. The first part of the statement follows.

By \cref{le:book-from-good}, for $j=1,\dots,q$, a \btpbef $\Gamma_j$ of $\langle G_j,\pi^j_b\rangle$ can be constructed in $O(|G_j|)$ time, and hence in $O(|G|)$ time over all the $rb$-augmented components $H_1,\dots,H_q$. The algorithm described in the proof of sufficiency of \cref{le:rb-augmented} constructs a \btpbef $\Gamma$ of $\langle G,\pi_b\rangle$ starting from the \btpbef $\Gamma_1,\dots,\Gamma_q$ of $H_1,\dots,H_q$ in $O(|G|)$ time. 

By \cref{le:one-trivial}, we can insert in $\Gamma$ in total $O(|G|)$ time all the $rb$-trivial components that have been possibly removed from $H$ because of the existence of other $rb$-trivial components incident to the same black vertices, while maintaining  $\Gamma$ a \btpbef.

Finally, we can draw in $O(|G|)$ time the edges of the black path $P$ along the spine of $\Gamma$. This results in a planar embedding of $H$ which is good, as it is the one associated with $\Gamma$. The second part of the statement follows by observing that $O(|G|)\subseteq O(|H|)$.
\end{proof}

\section{Properties and Classification of rb-augmented Components} \label{se:biconnected}

We now provide a linear-time algorithm that decides whether an $rb$-augmented component $H$ admits a good embedding and, in case it does, constructs one such embedding. This, together with \cref{le:book-from-good,le:rb-augmented}, results in a linear-time testing and embedding algorithm for the  \btpbefP problem.

Let $H^-$ be the biconnected graph obtained from $H$ by removing the degree-$1$ vertices (and their incident $rb$-trivial components). Observe that all these degree-$1$ vertices are red, which implies that $H$ and $H^-$ contain the same set of black vertices. The algorithm is based on a bottom-up traversal of the SPQR-tree $\mathcal{T}$ of~$H^-$, rooted at the Q-node $\rho$ corresponding to the first edge $(b_1,b_2)$ of $P$. 

Consider a node $\mu \in \mathcal{T}$ different from $\rho$. Observe that the poles of $\mu$ are a split pair not only in $H^-$, but also in $H$. We will denote by $H^-_\mu$ the pertinent graph of $\mu$, and by $H_\mu$ the subgraph of $H$ obtained from $H^-_\mu$ by adding each degree-$1$ vertex of $H$ that is adjacent to a vertex of $H^-_\mu$. Again, observe that $H_\mu$ and $H^-_\mu$ contain the same set of black vertices. Thus, when clear from the context, we refer also to $\pert{\mu}$ as the pertinent graph of $\mu$.
Finally, we denote by $\overline{H}_\mu$ (by $\overline{H}_\mu^-$) the subgraph of $H$ (of $H^-$) induced by the edges not in $H_\mu$ (not in $H_\mu^-$).

In the following subsections, we provide several concepts and tools.

In \cref{sse:node-classification}, we present a classification of the nodes $\mu \in \mathcal{T}$ into six \emph{node types}, based on the possible interactions between the black path $P$ and the graph $H_\mu$; refer to \cref{fig:Nodetypes}. Namely, 
$\pert{\mu}$ can either 
be ``touched once'' by $P$, when $P$ shares with $\pert{\mu}$ only one of the poles of $\mu$ and no edges,
or
be ``entered'' by $P$, when $P$ traverses just one of the poles of $\mu$ and ends in the interior of $\pert{\mu}$,
or
be ``touched twice'' by $P$, when $P$ shares with $\pert{\mu}$ both the poles of $\mu$ and no edges,
or
be ``traversed'' by $P$, when $P$ enters from a pole of $\mu$, exists from the other pole, and shares at least one edge with $\pert{\mu}$,
or
be ``bi-entered'' by $P$, when $P$ enters from a pole of $\mu$,
touches the other pole, and ends in the interior of $\pert{\mu}$, 
or
be ``touched twice and entered'' by $P$, when $P$ touches a pole of $\mu$ but contains no edge incident to it, enters from the other pole of $\mu$, and ends in the interior of~$\pert{\mu}$.

In \cref{sse:properties}, we present structural properties of the nodes of $\mathcal{T}$ based on their type, and study the
possible types and arrangements of the nodes having the same parent.

In \cref{sse:neat-embeddings}, we define the concept of {\em neat} embedding, which is a good embedding with additional properties. Informally, a good embedding is neat if
each $rb$-trivial component lies in a face that corresponds to a face of the embedding of the skeleton of the proper allocation node of its black vertex (see \cref{def:neat}). Neat embeddings have three important properties:
\begin{inparaenum}
	\item They are good.
	\item They are not restrictive, in the sense that if an $rb$-augmented component admits a good embedding, then it also admits a neat embedding.
	\item They decrease the degrees of freedom when embedding $rb$-trivial components incident to the poles of a node of $\mathcal T$. 
\end{inparaenum}

In \cref{sse:embedding-classification}, we classify the embeddings of $\pert{\mu}$ that occur in a neat embedding $\mathcal E$ of $H$.
Preliminarily, we introduce two important concepts:
\begin{inparaenum}[(i)]
\item the {\em extensibility} of an embedding and
\item the {\em auxiliary graph} of the embedding of a pertinent graph.
\end{inparaenum}
Let $\mu$ be a node of $\mathcal T$.
An embedding $\mathcal E$ of $H$ {\em extends} an embedding $\mathcal E_\mu$ of $\pert{\mu}$ if the restriction of $\cal E_\mu$ to $\pert{\mu}$ yields $\mathcal E_\mu$. 
An embedding ${\cal E}_\mu$ of $\pert{\mu}$ is {\em extensible} if there is a neat embedding of $H$ that extends ${\cal E}_\mu$.
Finally, the auxiliary graph $A(\mathcal E_\mu)$ is ``essentially'' the restriction of the auxiliary graph $A(\mathcal E)$ to an embedding $\mathcal E_\mu$ of $\pert{\mu}$.
After these preliminary definitions, for each of the six node types introduced in \cref{sse:node-classification}, we classify the extensible embeddings of $\pert{\mu}$ into a constant number of equivalence classes, called {\em embedding types}, based on several features of $A(\mathcal E_\mu)$. The three major features are the number of caterpillars of $A(\mathcal E_\mu)$, the number of outer faces of $\mathcal E_\mu$ that belong to $A(\mathcal E_\mu)$, and the existence of at least one internal face of $\mathcal E_\mu$ in $A(\mathcal E_\mu)$.

Finally, in \cref{sse:handling-rel-emb}, we first define as {\em relevant} the embeddings of $\pert{\mu}$ whose types are those determined in \cref{sse:embedding-classification}, and we argue that all the extensible embeddings are relevant. This proves that our classification of the embedding types is complete. The second part of the subsection is devoted to study the relationship between 
the type of the embedding of the pertinent graph of a node and the type of the embeddings of the pertinent graphs of its children. 

We start by showing that restricting a relevant embedding of $\pert{\mu}$ to $\pert{\lambda}$, where $\lambda$ is any child of $\mu$, yields again a relevant embedding. 
%
Then, we show a result that is fundamental for our algorithmic approach. Let $t$ be the type of a relevant  embedding $\mathcal E_\mu$ of $\pert{\mu}$ and let $\mathcal E_\lambda$ be the embedding of $\pert{\lambda}$ determined by $\mathcal E_\mu$.
Replace $\pert{\lambda}$ in $\mathcal E_\mu$ with an embedded graph $D$, possibly $D \neq \pert{\lambda}$, whose type and ``flip'' are the same of $\mathcal E_\lambda$. Let $\mathcal E^*_\mu$ be the embedding of the resulting embedded graph. We prove that $\mathcal E^*_\mu$ has type $t$.
In order to prove this result, we introduce the key concepts of {\em replacement graph} and {\em embedding-replacement}.
From an algorithmic perspective, this opens the possibility of replacing the embedding of an arbitrarily large subgraph with an embedding of a constant-size graph of the same type without altering the fact that the a given instance is positive or negative.

\subsection{Node Classification}\label{sse:node-classification}

\begin{figure}[tb]\tabcolsep=4pt
	\centering
\subfloat[RE\label{fig:Nodetypes-RE}]{\includegraphics[height=.2\textwidth,page=1]{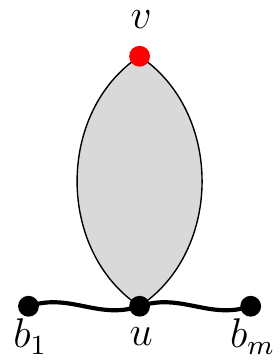}}\hfil
\subfloat[RF\label{fig:Nodetypes-RF}]{\includegraphics[height=.2\textwidth,page=2]{node-types}}\hfil
\subfloat[BE\label{fig:Nodetypes-BE}]{\includegraphics[height=.2\textwidth,page=3]{node-types}}\hfil
\subfloat[BP\label{fig:Nodetypes-BP}]{\includegraphics[height=.2\textwidth,page=4]{node-types}}\hfil
\subfloat[BB\label{fig:Nodetypes-BB}]{\includegraphics[height=.2\textwidth,page=6]{node-types}}\hfil
\subfloat[BF\label{fig:Nodetypes-BF}]{\includegraphics[height=.2\textwidth,page=5]{node-types}}
\caption{Taxonomy of the nodes in the SPQR-tree of $H^-$ based on the position of the black path.}\label{fig:Nodetypes}
\end{figure}

We subdivide the nodes of $\mathcal T$ into six classes. The first parameter for the classification is the color of the two poles of the node. In this respect, we observe the following.

\begin{lemma} \label{le:no-red-red}
There is no node in $\mathcal{T}$ whose poles are both red.
\end{lemma}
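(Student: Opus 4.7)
The plan is to argue by contradiction. Suppose some node $\mu \in \mathcal T$ has both poles $u, v$ colored red. The guiding intuition is that the all-black path $P = (b_1,\dots,b_m)$ cannot ``straddle'' the separation that $\mu$ induces in $H^-$: crossing from one side to the other would require passing through a pole, but $P$ is all-black while the poles are all-red.

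First I would make the dichotomy precise: $P$ lies entirely inside $H^-_\mu$ or entirely inside $\overline H^-_\mu$. Indeed, the only vertices shared by $H^-_\mu$ and $\overline H^-_\mu$ are the poles $\{u,v\}$, so any vertex of $P$ belonging to both sides would have to be a pole; but every vertex of $P$ is black whereas $u,v$ are red, so no such shared vertex exists, and by connectedness $P$ sits on one side.

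Next I would rule out $P \subseteq H^-_\mu$ using the rooting convention of $\mathcal T$. Since $\mathcal T$ is rooted at the Q-node of the reference edge $(b_1,b_2)$, for every non-root node $\mu$ this edge lies in $\overline H^-_\mu$, and therefore $b_1, b_2 \in \overline H^-_\mu$. Combining this with $b_1, b_2 \in P \subseteq H^-_\mu$ forces $b_1$ and $b_2$ to be shared between the two sides, hence to be poles of $\mu$. That contradicts the assumption that the poles are red.

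Finally I would rule out $P \subseteq \overline H^-_\mu$ by a bipartiteness argument. In that case $H^-_\mu$ contains no black vertex: every black vertex of $H^-$ lies on $P$ and hence outside $H^-_\mu$, while the only vertices of $\overline H^-_\mu$ that can also be in $H^-_\mu$ are the poles, which are red. Since $H^-$ is bipartite, a monochromatic $H^-_\mu$ is edgeless. But $H^-_\mu$ is $uv$-biconnectible with $u \ne v$: if $\mu$ is a Q-node then $H^-_\mu$ is the edge $(u,v)$, which would be a red-red edge violating bipartiteness; if $\mu$ is an S-, P-, or R-node then $H^-_\mu$ contains a non-pole vertex $w$, necessarily red, whose incident edges in the biconnectible $H^-_\mu$ would require black neighbors inside $H^-_\mu$, which we just excluded. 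Either sub-case contradicts the setup, completing the proof. The main obstacle is recognizing that the rooting of $\mathcal T$ at the Q-node of $(b_1,b_2)$ is what makes the first case collapse; without invoking the reference edge, bipartiteness alone would not force the contradiction.
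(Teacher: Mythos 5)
Your proof is correct, but it takes a genuinely different and somewhat longer route than the paper's. The paper argues in one stroke: since $H$ contains no edge between two red vertices, every edge of $H$ has a black endpoint, so both $H_\mu$ and $\overline{H}_\mu$ contain a black vertex, and neither of these can be a pole (the poles are assumed red); because the black vertices induce the connected path $P$, these two black vertices are joined by an all-black path, and any path between a non-pole vertex of $H_\mu$ and a vertex of $\overline{H}_\mu$ must pass through $u$ or $v$, which are red --- a contradiction. In particular, no case distinction and no appeal to the rooting of $\mathcal T$ are needed. You instead first establish that $P$ lies entirely on one side of the split pair, eliminate the case $P \subseteq H^-_\mu$ by using the rooting at the Q-node of $(b_1,b_2)$ to force $b_1,b_2$ to be poles, and eliminate the case $P \subseteq \overline{H}^-_\mu$ by showing that an all-red pertinent graph cannot exist. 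Both cases are sound (and using the fixed rooting is legitimate here), but your closing remark that the contradiction cannot be reached without invoking the reference edge is not accurate: the paper's shorter argument uses neither the reference edge nor the rooting.

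One imprecision you should fix: you twice invoke ``bipartiteness'' of $H^-$, but $H$ (and hence $H^-$) is in general not bipartite with respect to the red/black classes --- it contains the black--black edges of $P$, and as an abstract graph it may even have odd cycles (e.g.\ $b_i$, $b_{i+1}$, and a common red neighbor). The fact you actually need, and which does hold, is that the red vertices form an independent set in $H$ (there is no red--red edge), because $G$ is bipartite with parts $V_b,V_r$ and the black saturation only adds black--black edges. With that phrasing, your ``monochromatic hence edgeless'' step and the Q-node sub-case go through as intended.
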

\begin{fullproof}
Suppose, for a contradiction, that there exists a node $\mu \in \mathcal{T}$ with two red poles. Since $H$ contains no edge between two red vertices, there is at least a black vertex in $H_\mu$ and at least a black vertex in $\overline{H}_\mu$. However, this contradicts the fact that the graph induced by the black vertices is a path, since any path in $H$ between these two black vertices contains one of the two poles of $\mu$, which are both red.
\end{fullproof}

Let $u,v$ be the poles of $\mu$. By~\cref{le:no-red-red}, we can assume that $u$ is black. We distinguish two cases, based on the color of $v$.

Suppose that $v$ is red.  We distinguish two subcases, based on whether the subpaths of $P$ that are separated by $u$ belong to $\pert{\mu}$ or not. In particular, let $u = b_i$, with $1 \leq i \leq m$. We consider the two subpaths $P_u = (b_1,\dots,b_i)$ and $Q_u=(b_i,\dots,b_m)$ of $P$. 
Observe that, due to the choice of the root of $\mathcal{T}$, the first edge $(b_1,b_2)$ of $P$ belongs to $\rest{\mu}$, and thus $P_u$ entirely belongs to $\rest{\mu}$.
If $Q_u$ belongs to $\rest{\mu}$, as in \cref{fig:Nodetypes-RE}, then we say that $\mu$ is of \textbf{type \atype{}}, where \textbf{R} stands for \emph{red}, meaning that $v$ is red, and \textbf{E} stands for \emph{empty}, meaning that $\mu$ does not contain any non-pole black vertex. 
If $Q_u$ belongs to $\pert{\mu}$, as in \cref{fig:Nodetypes-RF}, then $\mu$ is of \textbf{type \dtype{}}, where \textbf{F} stands for \emph{finishing}, meaning that $P$ ends in $\pert{\mu}$.

Suppose now that $v$ is black. We distinguish four subcases, based on whether the subpaths of $P$ that are separated by $u$ and $v$ belong to $\pert{\mu}$ or not. In particular, let $u = b_i$ and $v = b_j$, with $1 \leq i,j \leq m$; assume without loss of generality that $i < j$. Then $u$ and $v$ split $P$ into three subpaths (possibly composed of single vertices) $P_u = (b_1,\dots,b_i)$, $P_{uv} = (b_i,\dots,b_j)$, and $P_v = (b_j,\dots,b_m)$, each of which entirely belongs to either $\pert{\mu}$ or to $\rest{\mu}$. 
As in the previous case, $P_u$ entirely belongs to $\rest{\mu}$, due to the choice of the root of $\mathcal{T}$.
On the other hand, $P_{uv}$ and $P_v$ can independently belong to either $\pert{\mu}$ or $\rest{\mu}$, which defines the four subcases.

If both $P_{uv}$ and $P_v$ belong to $\rest{\mu}$, as in \cref{fig:Nodetypes-BE}, then $\mu$ is of \textbf{type BE}, where \textbf{B} stands for \emph{black}, meaning that $v$ is black, and \textbf{E} stands for \emph{empty}. 
If $P_{uv}$ belongs to $\pert{\mu}$ and $P_v$ belongs to $\rest{\mu}$, as in \cref{fig:Nodetypes-BP}, then $\mu$ is of \textbf{type BP}, where \textbf{P} stands for \emph{passing}, meaning that $P$ passes through $\pert{\mu}$ from pole to pole.
If both $P_{uv}$ and $P_v$ belong to $\pert{\mu}$, as in \cref{fig:Nodetypes-BB}, then $\mu$ is of \textbf{type BB}, where the second \textbf{B} stands for \emph{both}.
Finally, if $P_{uv}$ belongs to $\rest{\mu}$ and $P_v$ belongs to $\pert{\mu}$, as in \cref{fig:Nodetypes-BF}, then $\mu$ is of \textbf{type BF}, where \textbf{F} stands for \emph{finishing}.


\subsection{Properties of the nodes of $\mathcal T$} \label{sse:properties}

We start with a simple lemma concerning the structure of type~\RE-RE nodes.

\begin{lemma}\label{le:structure-node-RE} 
	Suppose that $\mu$ is of type~\RE-RE. Then, $H^-_\mu$ consists of an edge between the poles of $\mu$, and $H_\mu$ consists of a star centered at $u$ with at most two leaves,~which~are~red. 
\end{lemma}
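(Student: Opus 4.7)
The plan is to prove the two assertions in turn, leveraging the defining property of type~\RE-RE nodes together with the bipartite-like structure of $H$.

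First, I would pin down the vertex set of $H^-_\mu$. Since $\mu$ is of type~\RE-RE, both subpaths $P_u$ and $Q_u$ of the black path $P$ incident to $u$ lie in $\rest{\mu}$, so $H^-_\mu$ contains no edge of $P$. Because every black vertex of $H$ lies on $P$, the only black vertex of $H^-_\mu$ is the pole $u$; all its remaining vertices are red. Next, I would use that $H$ contains no edge between two red vertices (the edges of $H$ are those of the bipartite graph $G$ together with those of $P$), so every edge of $H^-_\mu$ must be incident to $u$. Hence $H^-_\mu$ is a star centered at $u$ whose leaves are all red and include the pole $v$.

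To cut this star down to a single edge, I would invoke the fact that $H^-_\mu$ is a simple $uv$-biconnectible graph: any leaf of the star distinct from $v$ would remain of degree one even after adding the edge $(u,v)$, contradicting biconnectivity. This forces the star to have exactly one leaf, namely $v$, so that $H^-_\mu$ is the single edge $(u,v)$, establishing the first claim.

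For the second claim, I would pass from $H^-_\mu$ to $H_\mu$, which by definition is obtained from $H^-_\mu$ by reattaching all degree-$1$ vertices of $H$ adjacent to its vertices; as recalled at the beginning of the section, these degree-$1$ vertices are necessarily red. No such vertex can be adjacent to $v$, since $v$ is red and $H$ has no red-red edges; and by the simplification adopted after~\cref{le:one-trivial}, $u$ has at most one incident $rb$-trivial component, and hence at most one degree-$1$ red neighbor. Therefore $H_\mu$ is a star centered at $u$ with at most two red leaves, as claimed.

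The only step that is not a direct unpacking of definitions is the biconnectibility argument in the second paragraph, where the simplicity of $H^-$ has to be combined with the definition of a $uv$-biconnectible graph to rule out any parasitic leaf; all remaining steps are routine bookkeeping from the definition of type~\RE-RE and the construction of $H_\mu$ from $H^-_\mu$.
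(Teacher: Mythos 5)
Your proof is correct and takes essentially the same route as the paper's (much terser) argument: the only black vertex of $H^-_\mu$ is the pole $u$, there are no red--red edges, and the possibility of further red neighbors of $u$ inside $H^-_\mu$ is ruled out — you do this explicitly via the $uv$-biconnectibility of the pertinent graph, a step the paper leaves implicit. Your handling of $H_\mu$, using that degree-$1$ vertices are red and that each black vertex carries at most one $rb$-trivial component, matches the paper's proof of the second part.
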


\begin{fullproof}
	The first part of the statement follows from the observations that the red pole $v$ of $\mu$ is only connected to black vertices in $H$, and that the only black vertex in $\pert{\mu}$ is the pole $u$, by definition of {\bf type RE}. The second part of the statement then follows from the fact that the edge incident to $u$ other than $(u,v)$, if any, is an $rb$-trivial component of $H$.
\end{fullproof}

In the following, we discuss how nodes of different types can appear in the SPQR-tree $\mathcal{T}$ of~$H^-$. We start with a lemma that is a consequence of the fact that $\cal T$ is rooted at the Q-node~$\rho$ corresponding to the first edge $(b_1,b_2)$ of $P$.

\begin{lemma}\label{le:rooting-at-b1b2}
Let $\mu \neq \rho$ be a node of $\cal T$ such that $b_1 \in \pert{\mu}$. Then, $b_1$ is a pole of $\mu$. Also, $\mu$ is of type either \BE-BE, or \BF-BF, or \RE-RE. Finally, if $\mu$ is adjacent to $\rho$, then $\mu$ is of type \BF-BF.
\end{lemma}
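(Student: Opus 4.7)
The plan is to prove the three claims in order, using the special role of $\rho$ as the Q-node of the edge $(b_1,b_2)$ and the definitions of pertinent graph and node type from \cref{sse:node-classification}.

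For the first claim, I would argue by contradiction. Suppose $b_1 \in \pert{\mu}$ but $b_1$ is not a pole of $\mu$; then $b_1$ is an internal vertex of $\pert{\mu}$, which forces \emph{every} edge of $H^-$ incident to $b_1$ to lie in $\pert{\mu}$. However, since $\rho$ is the Q-node corresponding to the reference edge $(b_1,b_2)$ and $\mu \neq \rho$, the edge $(b_1,b_2)$ belongs to $\overline{H}^-_\mu$ and not to $\pert{\mu}$, a contradiction. Hence $b_1$ is a pole of $\mu$.

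For the second claim, I would apply the classification of \cref{sse:node-classification} with $u=b_1$. Since $u$ is black, $\mu$ is of some type among~\RE, \dtype{}, \BE, \BP, \BB, and \BF. The key observation is that $(b_1,b_2) \in \overline{H}^-_\mu$ implies that the first edge of $P$ incident to $u=b_1$ belongs to $\overline{H}^-_\mu$. When the other pole $v$ is red, this is the first edge of the subpath $Q_u$; since $Q_u$ must lie entirely on one side, it belongs to $\overline{H}^-_\mu$, so $\mu$ is of type~\RE (and type \dtype{} is ruled out). When $v$ is black with $v=b_j$, $j>1$, the first edge of the subpath $P_{uv}$ is exactly $(b_1,b_2)$, so $P_{uv} \subseteq \overline{H}^-_\mu$; this leaves only types \BE and \BF, and rules out \BP and \BB.

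For the third claim, note that the only node of $\cal T$ adjacent to $\rho$ is its unique child $\tau$, whose pertinent graph is $H^-$ minus the edge $(b_1,b_2)$ and whose poles are $b_1$ and $b_2$. Hence $\pert{\tau}$ contains every vertex of $H^-$, including $b_m$, and contains every edge of $P$ except $(b_1,b_2)$. Therefore $P_{uv}=(b_1,b_2) \subseteq \overline{H}^-_\tau$ while $P_v=(b_2,\dots,b_m) \subseteq \pert{\tau}$, which is precisely the definition of type~\BF. No step here presents a serious obstacle: the whole argument is a careful bookkeeping of which edges of $P$ lie on which side of $\mu$, driven by the single fact that the reference edge $(b_1,b_2)$ is exclusively associated with the root $\rho$.
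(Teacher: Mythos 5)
Your handling of the first two claims is correct and is essentially the paper's own argument: everything follows from the fact that, because $\mathcal T$ is rooted at the Q-node of $(b_1,b_2)$, this edge lies in $\overline{H}_\mu$ for every non-root node $\mu$, so $b_1$ must be a pole, and the subpath of $P$ leaving $b_1$ (which starts with the edge $(b_1,b_2)$) lies entirely in $\overline{H}_\mu$, ruling out types RF, BP and BB. The paper phrases this second step slightly differently (in those types the pole $u$ would need a black neighbour inside $H_\mu$, while the only black neighbour of $b_1$ is $b_2$ and $(b_1,b_2)\in\overline{H}_\mu$), but it is the same idea.

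There is, however, a small but genuine gap in your proof of the last claim. From ``$P_{uv}=(b_1,b_2)\subseteq\overline{H}_\tau$ and $P_v=(b_2,\dots,b_m)\subseteq H_\tau$'' you conclude that $\tau$ is of type BF, but this only separates BF from BE when $P_v$ actually contains an edge, i.e.\ when $m\ge 3$, so that $P$ ends at a non-pole vertex of $H_\tau$. If $H$ had only the two black vertices $b_1$ and $b_2$, then $P_v$ would be the single vertex $b_2$, $H_\tau$ would share with $P$ both poles and no edge, and $\tau$ would be of type BE, making the claim false; so the degenerate case must be excluded, not ignored. The paper does exactly this by invoking Assumption (A2): if $\tau$ were of type BE, then $H$ would contain no black vertices other than $b_1$ and $b_2$, contradicting (A2). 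Your argument needs this appeal to (A2) (equivalently, the observation that $b_m\neq b_2$ is a non-pole vertex of $H_\tau$) to be complete.
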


\begin{proof}
Since $b_1 \in \pert{\mu}$ and since $(b_1,b_2)$ is an edge of $\rest{\mu}$, due to the rooting of $\mathcal{T}$, we have that $b_1$ is a pole of $\mu$; in particular, $b_1 = u$, by definition. The fact that $\mu$ is of type neither \RF-RF, nor \BP-BP, nor \BB-BB descends from the fact that in any of these types, the pole $u$ has one black neighbor in $\pert{\mu}$, while the only black neighbor of $b_1$ is $b_2$, and $(b_1,b_2)$ is an edge of $\rest{\mu}$.

Suppose that $\mu$ is adjacent to $\rho$. Thus, the poles of $\mu$ are $b_1$ and $b_2$, and hence $\mu$ is of type either \BE-BE or \BF-BF. However, if $\mu$ is of type \BE-BE, then $H$ does not contain black vertices other than $b_1$ and $b_2$. This contradicts Assumption (A2) and proves the statement.
\end{proof}

Next, we describe some properties of the skeletons of the nodes of $\mathcal T$. Consider any node $\mu\neq \rho$ of $\mathcal T$. In order to simplify the discussion, we extend the notion of {\em type} to the virtual edges of $\skel(\mu)$. Namely, any virtual edge $e$ in $\skel(\mu)$ corresponds to a child $\nu$ of $\mu$; then, the type of $e$ is the type of $\nu$. We have the following.

\begin{lemma} \label{le:final-edges}
	There is at most one virtual edge of $\skel(\mu)$ that is of type \RF-RF, \BF-BF, or \BB-BB. 
	\remove{Furthermore, if such a virtual edge exists, then it is incident to the end-vertex $y$ of~$\mathcal P(\mu)$.}
\end{lemma}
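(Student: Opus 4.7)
The plan is to pin down a single real edge of $H^-$ whose presence in $\pert{\nu}$ is forced whenever a child $\nu$ of $\mu$ has type \dtype{}, \etype{}, or \gtype{}. I would then invoke the standard SPQR-tree fact that every real edge of $H^-$ lies in $\pert{\nu}$ for exactly one child $\nu$ of $\mu$, from which the lemma follows at once. The distinguished edge I single out is the terminal edge $e^\star := (b_{m-1}, b_m)$ of the black path $P$: by Assumption~A\ref{A2} we have $m \geq 3$, so $e^\star$ is well defined and is distinct from the reference edge $(b_1,b_2)$ at $\rho$; moreover, since both its endpoints are black, $e^\star$ is an edge of $H^-$.

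I would then check, by unwinding the three definitions in turn, that $e^\star \in \pert{\nu}$ whenever $\nu$ is a child of $\mu$ of type \dtype{}, \etype{}, or \gtype{}. In each case the definition prescribes a specific suffix of $P$ ending at $b_m$ to lie entirely inside $\pert{\nu}$: namely $Q_{u_\nu}$ for type \dtype{}, and $P_{v_\nu}$ for both type \etype{} and type \gtype{}, where $u_\nu$ and $v_\nu$ denote the poles of $\nu$. The corresponding finishing pole ($u_\nu$ in the first case, $v_\nu$ in the other two) must strictly precede $b_m$ in $\pi_b$: otherwise the prescribed suffix would degenerate to the single vertex $b_m$, carrying no non-pole black vertex and no edge of $P$ into $\pert{\nu}$, and the classification would then naturally assign to $\nu$ the corresponding empty or passing counterpart \atype{}, \btype{}, or \ctype{} rather than a finishing type. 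Hence the prescribed suffix is a nontrivial subpath of $P$, and in particular it contains $e^\star$.

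Finally I would apply the standard correspondence between real edges of $H^-$ and the Q-node leaves of $\mathcal T$: the edge $e^\star$ is associated with a unique Q-node $\lambda$, and consequently belongs to $\pert{\nu}$ for a unique child $\nu$ of $\mu$, namely the child whose subtree of $\mathcal T$ contains $\lambda$. Combining this with the previous paragraph, no two distinct children of $\mu$ can both be of type \dtype{}, \etype{}, or \gtype{}, which is precisely the bound on the virtual edges of $\skel(\mu)$ asserted by the lemma. The only subtle point in the argument is the boundary case where a child's finishing pole coincides with $b_m$; this I dispose of by the observation above, namely that in such a degenerate situation the defining suffix contributes no edge of $P$ to $\pert{\nu}$, so the child is naturally reclassified under the appropriate empty or passing type, preserving the mutual exclusivity of the six classes in \cref{sse:node-classification}.
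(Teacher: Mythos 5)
Your proof is correct and follows essentially the same route as the paper: the paper's one-line argument observes that, by definition, every child of type \dtype{}, \etype{}, or \gtype{} contains $b_m$ as a non-pole vertex, and a non-pole vertex belongs to the pertinent graph of exactly one child, whereas you anchor the same idea on the terminal edge $(b_{m-1},b_m)$ of $P$ and on the fact that a real edge lies in a unique child's pertinent graph. Your handling of the degenerate case (finishing pole coinciding with $b_m$) simply makes explicit the paper's convention that the finishing types require $P$ to end at a non-pole vertex, so nothing further is needed.
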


\begin{proof}
	The statement follows from the fact that each virtual edge of type \RF-RF, \BF-BF, or \BB-BB, contains $b_m$ as a non-pole vertex, by definition. 
	\remove{
		For the second part, suppose for a contradiction that there is a virtual edge $e$ whose type is \RF-RF, \BF-BF, or \BB-BB that is not incident to $y$. 
		
		If $e$ is not incident to any vertex of $\mathcal P(\mu)$, then there are two sub-paths of the black path $P$ in $\pert{\mu}$, each with at least one edge, which implies that both $b_1$ and $b_m$ are non-pole vertices of $\pert{\mu}$, contradicting the rooting of $\cal T$. Similarly, if $e$ is of type~\BF-BF and its pole that has no black neighbor in $\pert{e}$ is an internal vertex of $\mathcal P(\mu)$, then $\pert{\mu}$ contains two sub-paths of $P$, each with at least one edge, and a contradiction is derived as above.
		
		If $e$ is of type \RF-RF and it is incident to an internal vertex of $\mathcal P(\mu)$, then there is a black vertex with at least three black neighbors, a contradiction. Similarly, if $e$ is of type \BF-BF and its pole that has one black neighbor in $\pert{e}$ is an internal vertex of $\mathcal P(\mu)$, then there is a vertex with at least three black neighbors, a contradiction. 
		
		Finally, if $e$ is of type \BB-BB and is not the last edge of $\mathcal P(\mu)$, then its pole that has two black neighbors in $\pert{e}$ is also adjacent to a third black neighbor in the pertinent graph of the other virtual edge in $\mathcal P(\mu)$ incident to it, a contradiction.}
\end{proof}

\begin{lemma} \label{le:type-C-path}
	All the virtual edges of type \BP-BP or \BB-BB form a path 
	in $\skel(\mu)$ that: 
	(i) starts at a pole of $\mu$; 
	(ii) ends at a vertex $y$ of $\skel(\mu)$; and 
	(iii) contains all the black vertices of $\skel(\mu)$, except, possibly, for the other pole of $\mu$.
\end{lemma}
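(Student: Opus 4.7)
The plan is to trace the sub-path $P_\mu := P \cap \pert{\mu}$ through $\skel(\mu)$ and read off the types of the virtual edges it visits. By the node classification of \cref{sse:node-classification}, $P_\mu$ is either empty (when $\mu$ has type BE or RE), a sub-path of $P$ joining the two poles of $\mu$ (when $\mu$ has type BP), or a sub-path from one pole $w$ of $\mu$ to $b_m$ with $b_m$ an internal vertex of $\pert{\mu}$ (when $\mu$ has type BB, BF, or RF).

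First I would dispatch the empty case. If $P_\mu = \emptyset$, then no child of $\mu$ contains an edge of $P$ in its pertinent graph, so $\skel(\mu)$ has no virtual edge of type BP, BB, BF, or RF. Moreover, any non-pole black vertex $b_k$ of $\skel(\mu)$ would be an internal vertex of $\pert{\mu}$, forcing its two $P$-neighbors $b_{k\pm 1}$ to also lie in $\pert{\mu}$, contradicting $P_\mu = \emptyset$. Hence, the only black vertices of $\skel(\mu)$ are among the poles of $\mu$, and the statement holds by taking the trivial path reduced to one of the poles.

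In the non-empty cases, I would trace $P_\mu$ as it enters and exits the pertinent subgraphs $\pert{e_1}, \ldots, \pert{e_k}$ of the virtual edges of $\skel(\mu)$ it visits. Each transition from $\pert{e_i}$ to $\pert{e_{i+1}}$ occurs only at a vertex of $\skel(\mu)$ shared by both (a common pole) and, since $P_\mu$ is simple, the sequence $(e_1, \ldots, e_k)$ forms a simple path in $\skel(\mu)$. For each $i<k$, $P_\mu \cap \pert{e_i}$ enters at one pole and exits at the other without ending inside, so $e_i$ has type BP. For the last edge $e_k$: either $\mu$ has type BP and $P_\mu$ exits at the other pole of $\mu$, making $e_k$ of type BP too; or $P_\mu$ ends at $b_m$ inside $\pert{e_k}$, making $e_k$ of type BB (when both poles of the child associated with $e_k$ are visited by $P_\mu$), of type BF (when only one black pole is visited), or of type RF (when only the black pole is visited and the other pole is red). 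Consequently, the BP/BB virtual edges of $\skel(\mu)$ are $e_1, \ldots, e_{k-1}$ together with $e_k$ when $e_k$ has type BP or BB; they form a path starting at the pole of $\mu$ where $P_\mu$ begins and ending at a vertex $y$ of $\skel(\mu)$, namely the other pole of $\mu$ (type BP), the second visited pole of $e_k$ (type BB), or the unique pole of $e_k$ visited by $P_\mu$ (types BF, RF).

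Finally, to verify condition (iii) I would argue as in the empty case that every non-pole black vertex of $\skel(\mu)$ is internal to $\pert{\mu}$ and hence lies on $P_\mu$, so it appears as a transition vertex between two consecutive virtual edges in the projected path. Its two $P$-incident edges lie in two distinct sibling pertinents; in particular, such a vertex cannot be ``hidden'' inside a final BF or RF virtual edge $e_k$, which is entered by $P_\mu$ at only one pole, and thus it lies on the BP/BB prefix. The sole black vertex of $\skel(\mu)$ that may be absent from the path is the pole of $\mu$ not visited by $P_\mu$, namely $u_\mu$ when $\mu$ has type BF (and trivially in the empty cases). The main obstacle in this plan is the careful case analysis on $e_k$ to identify its type and the endpoint $y$, while ensuring that no internal black vertex of $\skel(\mu)$ is missed by the BP/BB portion of the projected path.
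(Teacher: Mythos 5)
Your proposal is correct, but it is organized quite differently from the paper's proof. You reconstruct the lemma dynamically: you take the portion $P_\mu$ of the black path inside $\pert{\mu}$, whose shape is already pinned down by the node classification of \cref{sse:node-classification}, project it onto $\skel(\mu)$, and read off the types of the visited virtual edges (all intermediate ones of type \BP-BP, the last one possibly \BB-BB, \BF-BF or \RF-RF). The paper instead argues statically on the subgraph $\skel_b(\mu)$ of $\skel(\mu)$ induced by the black vertices and the type \BP-BP/\BB-BB virtual edges: every vertex there has degree at most two (a third incident edge would give a black vertex of $H$ with three black neighbors), there is no cycle (it would yield a black cycle in $H$), so $\skel_b(\mu)$ is a union of paths, and a two-non-trivial-component configuration is excluded because it would force both $b_1$ and $b_m$ to be non-pole vertices of $\pert{\mu}$, contradicting the rooting of $\mathcal T$ at $(b_1,b_2)$. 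Your route buys a more explicit description of where the path starts and ends (and of the type of the terminal edge), at the price of a heavier case analysis on the type of $\mu$ and of the last traversed edge; the paper's route gets maximum degree two, acyclicity and connectedness almost for free, but leans on the rooting argument where you lean on the classification (which encodes the same fact, namely that $P\cap\pert{\mu}$ has a single non-trivial component). One small slip in your write-up: it is not true that every non-pole black vertex of $\skel(\mu)$ has its two $P$-edges in two \emph{distinct} sibling pertinents — if such a vertex is the far pole of a type \BB-BB virtual edge, both $P$-edges lie in that single child's pertinent graph. This does not hurt the conclusion, since in that case the vertex is precisely the endpoint $y$ of the final \BB-BB edge and hence still lies on the path, but the justification of condition (iii) should be phrased as ``junction between two consecutive traced edges, or endpoint of the final \BB-BB edge'' rather than only the former.
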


\begin{proof}
	Consider the subgraph $\skel_b(\mu)$ of $\skel(\mu)$ composed of the black vertices and of the type \BP-BP or \BB-BB virtual edges of $\skel(\mu)$. First, we have that every vertex of $\skel_b(\mu)$ has degree at most $2$; indeed, a vertex of $\skel_b(\mu)$ with three incident edges is a vertex of $H$ with at least three black neighbors, which is not possible. Second, we have that $\skel_b(\mu)$ contains no cycle, as such a cycle would correspond to a cycle of black vertices in $H$, which is not possible. It follows that $\skel_b(\mu)$ is a set of paths. Finally, if $\skel_b(\mu)$ has at least two connected components both of which are different from a single pole of $\mu$, then $\pert{\mu}$ would contain as non-pole vertices both the end-vertices $b_1$ and $b_m$ of the black path $P$, which is impossible because of the rooting of $\mathcal T$. 
\end{proof}

\subsection{Neat Embeddings}\label{sse:neat-embeddings}

We start with the following definition.

\begin{definition}\label{def:neat}
A good embedding $\mathcal E$ of $H$ is \emph{neat} if, for every $rb$-trivial component $(r,b)$ with $b$ in $H^-$, it  satisfies the following property: 
Let $\mu$ be the proper allocation node of $b$ in $\cal T$; then, the face of~$\mathcal E$ vertex $r$ is incident to corresponds to a face of the embedding of $\skel(\mu)$ determined by $\mathcal E$.
\end{definition}

We show in \cref{le:neat-embeddings} that $H$ admits a good embedding if and only if it admits a neat embedding. Thus, in the remainder of the section, we will focus our attention on~neat~embeddings.

%

\begin{lemma}\label{le:neat-embeddings}
The graph $H$ admits a good embedding if and only if it admits a neat embedding.
\end{lemma}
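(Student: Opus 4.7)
The forward direction is immediate since, by \cref{def:neat}, every neat embedding is good. For the backward direction, given a good embedding $\mathcal E$ of $H$, I will construct a neat good embedding $\mathcal E'$ by keeping the embedding of $H^-$ induced by $\mathcal E$ unchanged and only rerouting the pendant edges of $rb$-trivial components one at a time. Since $rb$-trivial components are pairwise disjoint leaves of $H$, this reduces the problem to showing that a single $rb$-trivial component placed in a non-neat position can be moved to a neat position while preserving goodness.

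Fix an $rb$-trivial component $(r,b)$ that violates neatness, and let $\mu$ be the proper allocation node of $b$. The angular sectors around $b$ in the embedding of $H^-$ induced by $\mathcal E$ partition into \emph{neat sectors}, each lying between two consecutive virtual edges of $\skel(\mu)$ incident to $b$ and hence corresponding to a face of $\skel(\mu)$, and \emph{bad sectors}, each lying inside the pertinent graph $\pert{\nu}$ of some virtual edge of $\skel(\mu)$ incident to $b$. The bad sectors coming from a given $\nu$ form a contiguous arc of the rotation at $b$ flanked by two neat sectors $f'_1,f'_2$, namely the ones corresponding to the two faces of $\mathcal E$ that contain the left and right outer faces of $\pert{\nu}$. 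The plan is to move $r$ from its current (bad) face $f$ into one of $f'_1,f'_2$, and verify that this rerouting preserves the conditions in \cref{le:characterization-triconnected-onesidefixed}.

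Because $r$ is a degree-one red vertex, it is a leaf of $A(\mathcal E)$ whose only neighbor is the red face $f$; hence $f$ retains at least one red vertex after deleting $r$, so the caterpillar survives the deletion (possibly losing $f$ as one of its leaf-vertices). Moreover, since the backbone $\mathcal B$ of $A(\mathcal E)$ spans every red face of $\mathcal E$, it must connect the red faces inside $\pert{\nu}$ (in particular $f$) to red faces outside $\pert{\nu}$, and any such connection crosses the boundary of $\pert{\nu}$ only via its two outer faces, which are precisely $f'_1$ and $f'_2$. This forces at least one of $f'_1,f'_2$ to already be a red face of $\mathcal E$ and to already appear on the backbone. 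Taking $f'$ to be such a neat sector and redrawing $(b,r)$ inside $f'$ attaches a single new leaf to the updated caterpillar $A(\mathcal E')$ without altering its backbone, so \blue{Condition~C\ref{condition:caterpillar}} is maintained.

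The main obstacle is to simultaneously maintain \blue{Condition~C\ref{condition:endvertices}} (about the two designated leaves $r'$ and $r''$ sharing faces with $b_1$ and $b_m$) across the entire sequence of rewirings. I handle this by performing the moves in a bottom-up traversal of $\mathcal T$: each individual move is then confined to the interior of a single pertinent graph, and it does not affect faces incident to $b_1$ or $b_m$ unless $b\in\{b_1,b_m\}$; in the latter case, a valid pair $(r',r'')$ can always be re-selected from the leaves of the updated caterpillar adjacent to its endpoints, whose existence is guaranteed by \cref{le:characterization-triconnected-onesidefixed}.
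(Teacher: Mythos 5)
There is a genuine gap in the step where you argue that one of the two flanking neat sectors $f'_1,f'_2$ (the faces of $\mathcal E$ containing the outer faces of $\pert{\nu}$) ``must already be a red face of $\mathcal E$ and already appear on the backbone.'' This is false in two situations that the paper's proof has to treat explicitly. First, all red vertices of $H$ may lie inside $\pert{\nu}$ (e.g.\ when $b_1$ is a pole of $\nu$ and $H\setminus\pert{\nu}$ is black), so there is no red face outside $\pert{\nu}$ for the backbone to reach, and indeed neither flanking face need be red; in that case the move can still work, but only because the newly reddened face attaches at an \emph{end} of the backbone via a leaf ($u_\ell$ or $u_r$) that is adjacent to an end-vertex of $A(\mathcal E)$ and shares a face with $b_1$ --- an argument that uses \blue{Condition~C\ref{condition:endvertices}}, not the connectivity argument you give, and it also contradicts your claim that the rerouting ``attaches a single new leaf without altering the backbone.'' Second, when red vertices do exist outside $\pert{\nu}$, the path of $A(\mathcal E)$ from $f$ to them need not pass through the outer faces of $\pert{\nu}$ at all: it can pass through a \emph{red pole} $v_e$ of $\nu$ (the pole is incident to an internal red face of $\pert{\nu}$ and to a red face outside). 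In that configuration neither $f'_1$ nor $f'_2$ is red, and placing $r$ in either of them creates a new red face adjacent to $v_e$, which already has degree $2$ on the backbone; the new face then hangs off a middle backbone vertex and is not spanned by the backbone, violating \blue{Condition~C\ref{condition:caterpillar}}.

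This second configuration also defeats your global strategy of ``keeping the embedding of $H^-$ induced by $\mathcal E$ unchanged and only rerouting the pendant edges'': the paper's proof resolves it by rerouting the edge $(u_e,v_e)$ of $\pert{e}\subseteq H^-$ so as to merge the internal red face incident to that edge with an outer face of $\pert{e}$, and only then inserting $(b,r)$ into the merged face. So an additional modification of the embedding of $H^-$ (not just of pendant edges) is used, and your proposal contains no substitute for it. Finally, the concluding paragraph about maintaining \blue{Condition~C\ref{condition:endvertices}} across all moves is asserted rather than proved; in the paper this is exactly where the careful re-selection of $r'$ and $r''$ (and the case analysis on whether $f_r$ stays red after deleting $r$) is carried out.
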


\begin{proof}
The necessity is trivial. In the following, we prove the sufficiency. Let $\cal E^*$ be a good embedding of $H$. 
Suppose that there exists (otherwise, there is nothing to prove) an $rb$-trivial component $(r,b)$, with $b$ in $H^-$, 
such that the face $f_r$ of $\cal E^*$ vertex $r$ is incident to does not correspond to any face of the embedding of $\skel(\mu)$ determined by $\cal E^*$, where $\mu$ is the proper allocation node of $b$ in $\cal T$. 
This implies that $f_r$ is an internal face of the embedding $\mathcal E_e$ of $\pert{e}$ determined by $\cal E^*$, where $e=(u_e=b,v_e)$ is a virtual edge of $\skel(\mu)$ incident to $b$. 

We show how to obtain a good embedding $\mathcal E'$ of $H$ in which the vertex $r$ is incident to a face of $\mathcal E'$ that corresponds to a face of the embedding of $\skel(\mu)$ determined by $\mathcal E'$.
We obtain $\mathcal E'$ from~$\mathcal E^*$, by placing $r$ inside a different face, while maintaining the rest of $\mathcal E^*$ unchanged (except, possibly, for the routing of a single edge belonging to $\pert{e}$). 
First, we show {\bf (i)} that removing $r$ and $(b,r)$ from $f_r$ yields a good embedding $\mathcal E^\circ$ of the resulting instance; see, e.g., \cref{fig:neat-2-a,fig:neat-2-b}.
Second, we show {\bf (ii)} how to reinsert $r$ and $(b,r)$ into $\mathcal E^\circ$ to obtain $\mathcal E'$, after possibly rerouting a single edge belonging~to~$\pert{e}$; see, e.g.,  \cref{fig:neat-2-c}.

Consider the auxiliary graph $A({\mathcal E^*})$ of $\mathcal E^*$. By \cref{le:characterization-triconnected-onesidefixed}, we have that $A({\mathcal E^*})$ is a caterpillar whose backbone $\mathcal B=(f_1,v_2,f_2,\dots,v_k,f_k)$ spans all the red faces of $\mathcal E^*$. Furthermore, there exist two distinct red vertices $r'$ and $r''$ that are leaves of $A({\mathcal E^*})$, whose neighbors in $A({\mathcal E^*})$ are $f_1$ and $f_k$, respectively, and such that $r'$ and $b_1$ are incident to the same face of $\mathcal  E^*$, and $r''$ and $b_m$ are incident to the same face of $\mathcal  E^*$.

\begin{figure}[t!]
	\centering
	\subfloat[$\mathcal E^*$]{
		\includegraphics[height=.3\textwidth,page=4]{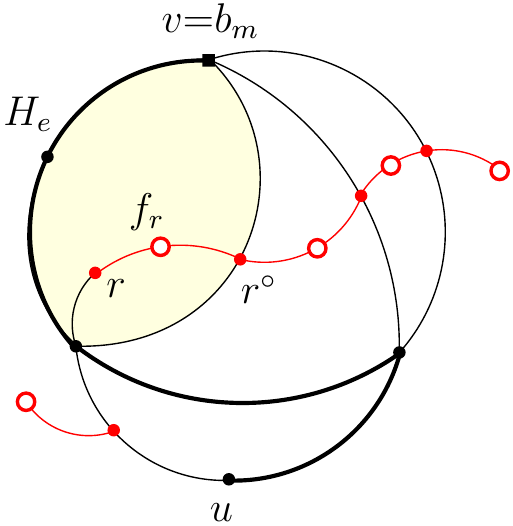} 
		\label{fig:neat-2-a}
	}\hfil
	\subfloat[$\mathcal E^\circ$]{
		\includegraphics[height=.3\textwidth,page=5]{neat} 
		\label{fig:neat-2-b}
	}\hfil
	\subfloat[$\mathcal E$]{
		\includegraphics[height=.3\textwidth,page=6]{neat} 
		\label{fig:neat-2-c}
	}
	\caption{Illustration for the proof of \cref{le:neat-embeddings} when $f_r$ is also red in $\mathcal E^\circ$ after the removal~of~$r$.}
	\label{fig:neat-2}
\end{figure}

We prove {\bf (i)}. Since $A({\mathcal E^*})$ is connected and since $r$ is only incident to $f_r$, we have that $f_r$ is red and that $r$ is a leaf of $A({\mathcal E^*})$ adjacent to $f_r$. Therefore, $A({\mathcal E^*})$ remains a caterpillar after the removal of $r$. 
Suppose first that $f_r$ remains red after the removal of $r$; refer to \cref{fig:neat-2-a,fig:neat-2-b}. Then
$A({\mathcal E^\circ})$ coincides with $A({\mathcal E^*}) \setminus r$, which implies that \cref{condition:caterpillar} of \cref{le:characterization-triconnected-onesidefixed} is satisfied.
If $r \notin \{r',r''\}$, then $A({\mathcal E^\circ})$ satisfies \cref{condition:endvertices} of \cref{le:characterization-triconnected-onesidefixed}.
If $r = r''$ (the case in which $r=r'$ is analogous), then~$f_r$ is an end-vertex of $A({\mathcal E^\circ})$ and, since $r$ is only incident to~$f_r$, it holds that $b_m$ is incident to~$f_r$. Observe that, since~$f_r$ remains red after the removal of~$r$, there exist at least two other red vertices different from $r$ incident to~$f_r$; further, all but at most one of such vertices are leaves of $A({\mathcal E^\circ})$ (note that, if the backbone of $A({\mathcal E^\circ})$ is not the single vertex $f_r$, then the neighbor of $f_r$ in the backbone of $A({\mathcal E^\circ})$ is the only red vertex incident to~$f_r$ that is not a leaf of $A({\mathcal E^\circ})$). Therefore, at least one red vertex different from $r$ and incident to~$f_r$ is a leaf of $A({\mathcal E^\circ})$; further, such a leaf shares the face $f_r$ with $b_m$, given that $b_m$ is incident to $f_r$, and thus it can be selected to play the role of $r''$ to satisfy \cref{condition:endvertices} of \cref{le:characterization-triconnected-onesidefixed}. 
If~$f_r$ is not red after the removal of $r$; refer to \cref{fig:neat-a,fig:neat-b}. Then there exists exactly one red vertex $r^\circ$ different from $r$ incident to~$f_r$. Since $r$ is a leaf of $A({\mathcal E^*})$, we have that~$f_r$ is an end-vertex of the backbone of $A({\mathcal E^*})$, that $r \in \{r',r''\}$, say $r=r''$, and that $b_m$ is incident to~$f_r$.
Thus, we have that $A({\mathcal E^\circ})$ coincides with $A({\mathcal E^*}) \setminus \{f_r,r\}$, and that it is a caterpillar, which 
satisfies \cref{condition:caterpillar} of \cref{le:characterization-triconnected-onesidefixed}. 
Further, the other neighbor of $r^\circ$ in $A({\mathcal E^*})$ is an end-vertex of the backbone of $A({\mathcal E^\circ})$.
Therefore, setting $r''= r^\circ$ satisfies \cref{condition:endvertices} of \cref{le:characterization-triconnected-onesidefixed}. 

\begin{figure}[h]
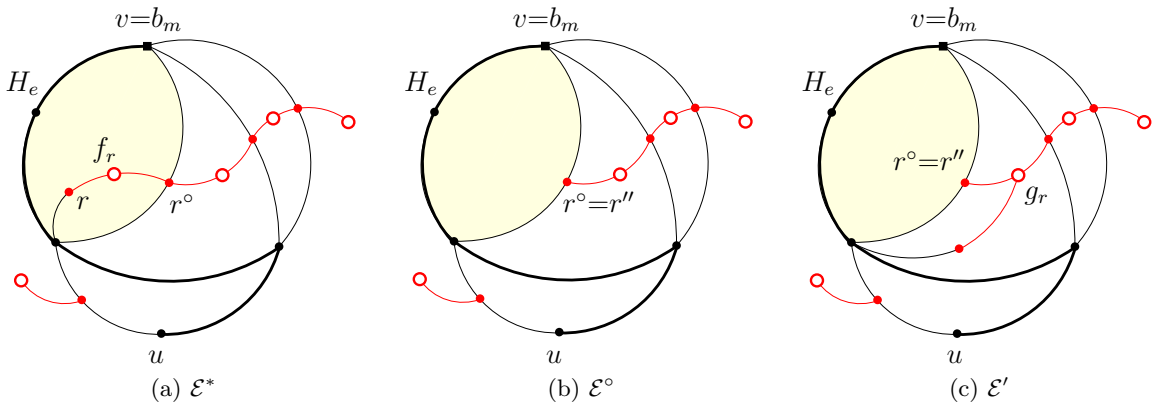

	\centering
	\subfloat[$\mathcal E^*$]{
		\includegraphics[height=.3\textwidth,page=1]{neat} 
		\label{fig:neat-a}
	}\hfil
	\subfloat[$\mathcal E^\circ$]{
		\includegraphics[height=.3\textwidth,page=2]{neat} 
		\label{fig:neat-b}
	}\hfil
	\subfloat[$\mathcal E'$]{
		\includegraphics[height=.3\textwidth,page=3]{neat} 
		\label{fig:neat-c}
	}
	\caption{Illustration for the proof of \cref{le:neat-embeddings} when $f_r$ is not red in $\mathcal E^\circ$ after the removal of $r$.}
	\label{fig:neat}
\end{figure}

We prove {\bf (ii)}. 
Since $r$ is only incident to~$f_r$ in $\mathcal E^*$, we have that each face of $\mathcal E^*$ different from~$f_r$ is also a face of $\mathcal E^\circ$; in particular, 
each red face of $\mathcal E^*$ different from~$f_r$ is also a red face of $\mathcal E^\circ$.
Consider $\ell(\mathcal E_e)$ and $r(\mathcal E_e)$, and let $g_\ell$ and $g_r$ be the two faces of $\mathcal E^*$ corresponding to $\ell(\mathcal E_e)$ and $r(\mathcal E_e)$, respectively.
First, if one of $g_\ell$ and $g_r$ is red in $\mathcal E^*$ (and, thus, in $\mathcal E^\circ$, given that $f_r \neq g_\ell,g_r$), placing $r$ and $(b,r)$ inside such a face yields a good embedding of $H$; refer to \cref{fig:neat-2-c,fig:neat-c}. 
Thus, in the following we will assume that none of $g_\ell$ and $g_r$ is red. 

To continue the proof, we distinguish several cases based on the structure of $A(\mathcal E^*)$.

Suppose first that $H \setminus \pert{e}$ contains no red vertex.
We distinguish two cases based on whether~$b_1 \in \pert{e}$ or not.
Assume first that $b_1$ belongs to $\pert{e}$. By \cref{le:rooting-at-b1b2}, we have that $b_1$ is a pole of $e$ (possibly, $b_1 = b$), and $e$ is of type either \BE-BE, or \BF-BF, or \RE-RE. By \cref{le:structure-node-RE} and since~$f_r$ is an internal red face of $\mathcal E_e$, we have that $e$ is not of type \RE-RE.
By definition of type \BE-BE and \BF-BF nodes, we have that all the neighbors of $b_1$ in $\pert{e}$ are red. Let $u_\ell$ and $u_r$ be the neighbors of $b_1$ incident to $g_\ell$ and $g_r$, respectively.
We have that the internal faces of $\mathcal E_e$ incident to $b_1$ form a subpath of the backbone of $A(\mathcal E^\circ)$ connecting $u_\ell$ and $u_r$. No red vertex different from  $u_\ell$ and $u_r$ is incident to $g_\ell$ and $g_r$, as otherwise one of $g_\ell$ and $g_r$ would be red. Therefore, since $\mathcal E^\circ$ is a good embedding, we have that either $r'=u_\ell$ or $r'=u_r$, say $r'=u_\ell$, that is, $u_\ell$ is a leaf of $A(\mathcal E^\circ)$ adjacent to an end-vertex of the backbone of $A(\mathcal E^\circ)$ sharing a face with $b_1$. Therefore, by placing $r$ and $(b,r)$ inside $g_\ell$, we obtain the embedding $\mathcal E'$ of $H$, whose auxiliary graph is a caterpillar obtained from $A(\mathcal E^\circ)$ by adding the edges $(u_\ell,g_\ell)$ and $(g_\ell,r)$. This implies that \cref{condition:caterpillar} of \cref{le:characterization-triconnected-onesidefixed} is satisfied.
Further, by setting $r'=r$, we have that \cref{condition:endvertices} of \cref{le:characterization-triconnected-onesidefixed} is satisfied, since $b_1$ is incident to $g_\ell$.
Assume now that $b_1$ does not belong to $\pert{e}$. Since $H \setminus \pert{e}$ contains no red vertex and since $b_1$ has at least one red neighbor in $H^-$, we have that $b_1$ is adjacent to a red pole $v_e$ of $e$. Moreover, since $g_\ell$ and $g_r$ do not contain any other red vertex, as otherwise they would be red, we have that $v_e$ is a leaf of $A(\mathcal E^\circ)$ adjacent to an end-vertex of its backbone, given that $\mathcal E^\circ$ satisfies \cref{condition:caterpillar} of \cref{le:characterization-triconnected-onesidefixed}. By placing $r$ and $(b,r)$ inside one of the two outer faces of $\mathcal E_e$, we obtain the embedding $\mathcal E'$ of $H$ whose auxiliary graph satisfies the two conditions~of~\cref{le:characterization-triconnected-onesidefixed}.

Finally, consider the case that $H \setminus \pert{e}$ contains at least one red vertex.
Consider the path of $A(\mathcal E^*)$ connecting $r$ and a red vertex of $H \setminus \pert{e}$. Since none of $g_\ell$ and $g_r$ is red, we have that such a path must include a pole of $e$. Therefore, we can assume that the pole $v_e$ of $e$ different from $b$ is red, which implies that $e$ is of type \RF-RF, since $e$ is not of type \RE-RE as discussed above; refer to \cref{fig:neat2-a}. Further,~$v_e$ belongs to the backbone of $A(\mathcal E^*)$, and it is incident to a red face $f_{in}$ internal to $\mathcal E_e$ and to a red face $f_{out}$ not belonging to $\mathcal E_e$.
First, we claim that $\pert{e}$ contains the edge $(u_e,v_e)$.
Note that, since $b=u_e$ is incident to~$f_r$, there exists at least an internal face of $\mathcal E_e$ that is incident to $u_e$. This implies that there exist two distinct neighbors $u_\ell$ and $u_r$ of $u_e$ incident to~$g_\ell$ and $g_r$, respectively.
Since $u_e$ has exactly one black neighbor belonging to $\pert{e}$, due to the fact that $e$ is of type \RF-RF, at least one of $u_\ell$ and $u_r$ is red, say $u_r$. We have that $u_r$ must coincide with~$v_e$, as otherwise $g_r$ would be red, which proves the above claim. 
By the discussion above and since $H$ does not contain parallel edges, we have that $u_\ell$ is black. 
Let $f$ be the internal face of~$\mathcal E_e$ incident to $(u_e,v_e)$.
Note that $f$ is red in $\mathcal E^*$. In fact, the neighbor of $u_e$ preceding $v_e$ in $\mathcal E_e$ in clockwise order around $u_e$ is incident to $f$ and is red; the latter descends from the fact that this vertex is different from $u_\ell$, since $r$ appears between $u_\ell$ and $u_r$ in the clockwise order of the neighbors of $u_e$ in $\mathcal E_e$.
Since $f_{in}$ is the unique red face of $\mathcal E_e$ incident to $v_e$ and since $f$ is incident to $v_e$, we have $f=f_{in}$.
Suppose first that $f$ is still red in $\mathcal E^\circ$. 
\begin{figure}[t!]
	\centering
	\subfloat[$\mathcal E^\circ$]{
		\includegraphics[height=.3\textwidth,page=1]{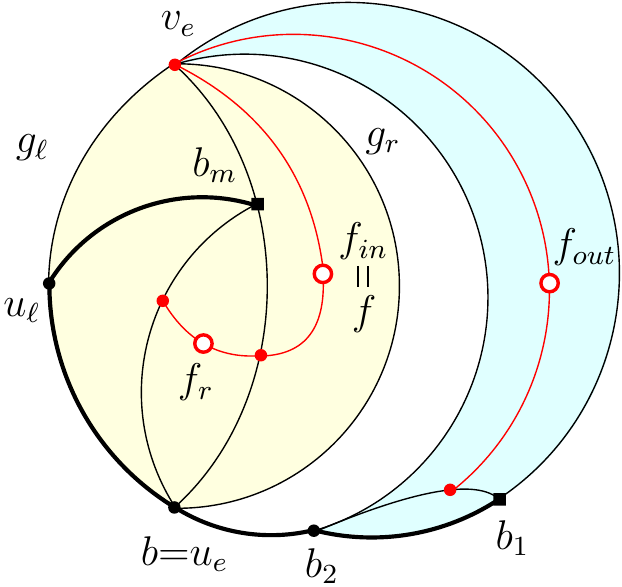} 
		\label{fig:neat2-a}
	}\hfil
	\subfloat[$\mathcal E'$]{
		\includegraphics[height=.3\textwidth,page=2]{neat2} 
		\label{fig:neat2-b}
	}
	\caption{Illustration for the proof of \cref{le:neat-embeddings} when $H \setminus \pert{e}$ contains at least one red vertex, none of $g_\ell$ and $g_r$ is red, and $f_r$ is red in $\mathcal E^\circ$.}
	\label{fig:neat2}
\end{figure}
We obtain $\mathcal E'$ as follows; refer to \cref{fig:neat2-b}. First, we reroute the edge $(u_e,v_e)$ in the interior of $g_\ell$, that is, we let $(u_e,v_e)$ be the edge immediately preceding $(u_e,u_\ell)$ clockwise around $u_e$. This merges $f$ with $g_r$ into a new red face $g'_r$ containing the same red vertices as $f$ in $\mathcal E^\circ$, which is now the right outer face of the resulting embedding of $\pert{e}$.
Also, it splits $g_\ell$, which was not red in $\mathcal E^\circ$, into two faces that are not red. This implies that the auxiliary graph of the resulting embedding is a caterpillar isomorphic to $A(\mathcal E^\circ)$ that satisfies the conditions of \cref{le:characterization-triconnected-onesidefixed}.
Second, we place $r$ and $(b,r)$ inside $g'_r$, thus obtaining a good embedding $\mathcal E'$ of $H$.
Finally, suppose that $f$ is not red in $\mathcal E^\circ$. This implies that $f=f_{in}=f_r$ and that $v_e$ is not incident to any internal red face of the embedding of $\pert{e}$ in~$\mathcal E^\circ$. Moreover, $f=f_r$ is an end-vertex of the backbone of $A(\mathcal E^*)$ incident to $v_e$ and to $r$, where $r''=r$ and $b_m$ is incident to~$f_r$. Therefore, $A(\mathcal E^\circ)$ coincides with $A(\mathcal E^*) \setminus \{f_r,r\}$ and has $r''=v_e$.
We obtain $\mathcal E'$ as follows. First, we reroute the edge $(u_e,v_e)$ in the interior of $g_\ell$. This merges $f$ with $g_r$ into a new face $g'_r$ incident to $b_m$ whose unique red vertex is $v_e$, which is now the right outer face of the resulting embedding of $\pert{e}$.
Also, it splits $g_\ell$, which was not red in $\mathcal E^\circ$, into two faces that are not red. This implies that the auxiliary graph of the resulting embedding is a caterpillar isomorphic to $A(\mathcal E^\circ)$ that satisfies the conditions of \cref{le:characterization-triconnected-onesidefixed}.
Second, we complete the construction of $\mathcal E'$, by placing $r$ and $(b,r)$ inside $g'_r$. We have that $A(\mathcal E')$ is a caterpillar, obtained from $A(\mathcal E^\circ)$, by adding the edges $(v_e,g'_r)$ and $(g'_r,r)$. This implies that \cref{condition:caterpillar} of \cref{le:characterization-triconnected-onesidefixed} is satisfied.
Further, by setting $r''=r$, we have that \cref{condition:endvertices} of \cref{le:characterization-triconnected-onesidefixed} is satisfied, since $b_m$ is incident to $g'_r$. This concludes the proof.
\end{proof}

\subsection{Embedding Classification}\label{sse:embedding-classification}

Consider a node $\mu$ of $\mathcal{T}$ different from $\rho$. We proceed with two definitions.

\begin{definition} \label{def:extends}
	An embedding  $\cal E$ of $H$ \emph{extends} an embedding ${\cal E}_\mu$ of $H_\mu$ if $\cal E$ coincides with ${\cal E}_\mu$ when restricted to the vertices and edges of $H_\mu$.
\end{definition}

\begin{definition} \label{def:extensible}
An embedding ${\cal E}_\mu$ of $H_\mu$ is called \emph{extensible} if there is a neat embedding of $H$ that extends ${\cal E}_\mu$.
\end{definition}

Let $\cal E$ be any embedding of $H$. Consider the embedding ${\cal E}_\mu$ of $H_\mu$ in $\cal E$, and the corresponding embedding ${\cal E}_\mu^-$ of $H_\mu^-$. The \emph{left} and \emph{right} outer faces $\ell(\mathcal E_\mu)$ and $r(\mathcal E_\mu)$, respectively, are the faces corresponding to the left and right outer face of ${\cal E}_\mu^-$, respectively. Often, we talk about the left and right outer faces of ${\cal E}_\mu$ even when an embedding $\cal E$ of $H$ is not specified or determined. 
Note that, when considering an embedding ${\cal E}_\mu$ of $H_\mu$, the vertices and edges that are incident to $\ell({\mathcal E}_\mu)$ and $r({\mathcal E}_\mu)$ are determined, with the exception of the (at most two) $rb$-trivial components that are incident to the outer face of ${\cal E}_\mu$ and to the poles $u_\mu$ and $v_\mu$ of $\mu$. In particular, if $\mathcal E$ is a neat embedding of $H$ extending ${\cal E}_\mu$, each of such components is incident either to $\ell({\mathcal E}_\mu)$, or to $r({\mathcal E}_\mu)$, or to an internal face of $\rest{\mu}$. We call \emph{undecided} such $rb$-trivial components. 

In the following, we present a classification of the possible embeddings~${\cal E}_\mu$ of~$H_\mu$ that occur in a neat embedding~$\cal E$ of~$H$, i.e., the extensible embeddings of $H_\mu$. Let $A(\cal E)$ be the auxiliary graph of $\cal E$. 
Ideally, we would like to classify such embeddings of $H_\mu$ based on the structure of the subgraph of $A(\cal E)$ induced by: 
(i) the vertices of $A(\mathcal E)$ corresponding to red vertices of $H_\mu$; and 
(ii) the vertices that correspond to faces of ${\mathcal E}_\mu$, including its left and right outer faces.
Then, when visiting a node $\mu$ in the bottom-up traversal of the SPQR-tree of $H^-$, we aim at testing which embedding types are possible for $H_\mu$.

There is one complication to this plan, though. 
Namely, an embedding ${\cal E}_\mu$ of $H_\mu$ does not always fully determine whether its left and right outer faces $\ell({\mathcal E}_\mu)$ and $r({\mathcal E}_\mu)$ are going to be red in a neat embedding $\mathcal E$ of $H$ extending ${\cal E}_\mu$.
For example, ${\cal E}_\mu$ might contain exactly one red vertex incident to $r({\mathcal E}_\mu)$; then $r({\mathcal E}_\mu)$ is red in an embedding $\cal E$ of $H$ extending ${\cal E}_\mu$ if and only if there is
either
a red vertex of $\rest{\mu}$
or
an undecided $rb$-trivial component of $\mu$ that is incident to $r({\mathcal E}_\mu)$ in~$\cal E$.
However, whether any of the above two cases occurs is not determined by the embedding ${\cal E}_\mu$ of~$\pert{\mu}$.
On the other hand, the existence of such a red vertex is guaranteed under some conditions, as described in the following lemma.

\begin{lemma} Suppose that $\mu$ is of type \ctype{} or \gtype{}. Then in any embedding $\cal E$ of $H$, at least one red vertex of $\rest{\mu}$ is incident to $\ell({\mathcal E}_\mu)$ and at least one red vertex of $\rest{\mu}$ is incident to~$r({\mathcal E}_\mu)$.
\end{lemma}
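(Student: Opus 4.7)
The plan is to argue by contradiction, tracing the boundary walk $B$ of $\ell(\mathcal E_\mu)$ and exploiting the bipartite structure of $H$. The key structural observation is that the only black-black edges of $H$ lie on the path $P$ and that, for both \ctype{} and \gtype{}, the subpath $P_{uv}$ is contained in $\pert{\mu}$. Hence, inside $\rest{\mu}$, the subgraph induced by the black vertices splits into two components: one containing the path $P_u = (b_1,\dots,u)$ and the other containing either $P_v = (v,\dots,b_m)$ (for \ctype{}) or just the vertex $v$ (for \gtype{}); in either case $u$ and $v$ lie in distinct components of this black subgraph. Moreover, the rooting of $\mathcal T$ at the Q-node of $(b_1,b_2)$ forces $u \neq b_1$, so $P_u$ contains at least the edge $(b_{i-1}, u) \in \rest{\mu}$, and by Assumption \blue{A\ref{A1}} both $b_1$ and (for type \ctype{}) $b_m$ have a red neighbor in $\rest{\mu}$.

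Next I would suppose, for a contradiction, that no red vertex of $\rest{\mu}$ is incident to $\ell(\mathcal E_\mu)$; this forces every $\rest{\mu}$-edge on $B$ to have two black endpoints, hence to be a $P$-edge. Starting the trace from $v$ after traversing $\pi_\ell$, I would split into two cases. If $v$ has an $\rest{\mu}$-edge on the $\ell(\mathcal E_\mu)$-side (which requires type \ctype{} with $v \neq b_m$), then $B$ enters $P_v$ via $(v,b_{j+1})$ and traverses $P_v$ all the way to $b_m$; since $b_m$'s only $P$-edge is the one just used and $b_m$ has a red neighbor in $\rest{\mu}$, the next edge of $B$ must be a red edge, a contradiction. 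Otherwise $v$ has no $\rest{\mu}$-edge on the $\ell(\mathcal E_\mu)$-side, which forces $\ell(\mathcal E_\mu) = r(\mathcal E_\mu)$; then $B$ wraps via $\pi_r^{-1}$ back to $u$, enters $\rest{\mu}$ at $u$ through $(u,b_{i-1})$, traverses $P_u$ to $b_1$, and is forced at $b_1$ onto a red edge, again a contradiction.

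The main technical point is to justify that $B$ actually reaches the endpoint $b_m$ (resp.\ $b_1$) rather than reversing at some intermediate vertex $b_k$. This relies on the planarity fact that, at any vertex of degree at least $2$, the face-sector bounding $\ell(\mathcal E_\mu)$ is delimited by two distinct edges, so $B$ cannot enter and exit $b_k$ along the same edge. Since Assumption \blue{A\ref{A1}} guarantees that every non-pole black vertex $b_k$ of $\rest{\mu}$ has a red neighbor in $\rest{\mu}$ (as all its edges lie in $\rest{\mu}$) and hence degree $\geq 2$ there, $B$ must advance beyond $b_k$; by the contradiction hypothesis it must advance along the next $P$-edge. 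Iterating along $P_u$ or $P_v$ then yields the claimed endpoint. The argument for $r(\mathcal E_\mu)$ is entirely symmetric.
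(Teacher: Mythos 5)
Your proposal is sound, but it reaches the contradiction by a genuinely different (and much heavier) route than the paper. The paper's proof is essentially two lines: if no red vertex of $\rest{\mu}$ touches $\ell(\mathcal E_\mu)$, then the portion of that face's boundary lying in $\rest{\mu}$ is a path of black vertices joining the poles $u$ and $v$; since $\mu$ is of type \ctype{} or \gtype{}, the path $P_{uv}$ inside $\pert{\mu}$ joins the same poles, so the two paths form a cycle of black vertices, contradicting that the black vertices induce the path $P$. In other words, the paper cashes in exactly your ``key structural observation'' (that $u$ and $v$ lie in different components of the black subgraph of $\rest{\mu}$) in a single global step, needing neither Assumption~(A1), nor $u\neq b_1$, nor any face tracing. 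You instead trace the boundary walk of $\ell(\mathcal E_\mu)$ and force the contradiction locally at an endpoint of $P$ ($b_m$ in your first case, $b_1$ in your second), which works but requires the extra ingredients (A1, the rooting fact $u\neq b_1$, the no-reversal argument at vertices of degree at least $2$) and obliges you to justify two steps you currently assert: first, the claim that ``no $\rest{\mu}$-edge of $v$ on the $\ell(\mathcal E_\mu)$-side forces $\ell(\mathcal E_\mu)=r(\mathcal E_\mu)$'' needs the fact that the edges of $\pert{\mu}$ at a pole are consecutive in the rotation (i.e., $\rest{\mu}$ lies in the outer face of $\mathcal E_\mu$), so that the walk, failing to enter $\rest{\mu}$ at $v$, must continue along the right boundary path; second, the $rb$-trivial components attached to the poles belong to $H_\mu$, not $\rest{\mu}$, may be interleaved with the $\rest{\mu}$-edges in the rotation, and the walk can bounce through them before entering $\rest{\mu}$ — harmless, but your trace should account for it (at non-pole vertices of $P_u$ and $P_v$ the pendant components do lie in $\rest{\mu}$, so hitting them already yields the contradiction). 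What your route buys is an explicit handling of the degenerate possibility that the face meets no $\rest{\mu}$-edge at $v$ (which the paper implicitly excludes, as it asserts $\ell(\mathcal E_\mu)\neq r(\mathcal E_\mu)$ and in fact $v$ must have an edge in $\overline{H}^-_\mu$ by biconnectivity); what the paper's route buys is brevity and independence from A1 and the rooting, since it only uses acyclicity of the black path and $P_{uv}\subseteq\pert{\mu}$.
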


\begin{fullproof}
Let $\cal E$ be any embedding of $H$. Suppose, for a contradiction, that there exists no red vertex of $\rest{\mu}$ incident to $\ell({\mathcal E}_\mu)$ in $\cal E$. This implies that the portion of $\ell({\mathcal E}_\mu)$ that contains all the vertices of $\rest{\mu}$ incident to it induces a path of black vertices connecting the poles $u$ and $v$ of $\mu$. Since $\mu$ is of type BP or BB, there exists a path with the same property also in $\pert{\mu}$, namely $P_{u,v}$. Since these two paths create a cycle composed of black vertices, we have a contradiction. Hence, at least one red vertex of $\rest{\mu}$ is incident to $\ell({\mathcal E}_\mu)$ in $\cal E$. By a symmetric argument, at least one red vertex of $\rest{\mu}$ is incident to $r({\mathcal E}_\mu)$ in $\cal E$.
\end{fullproof}

In view of the above discussion, we classify the extensible embeddings ${\cal E}_\mu$ of $H_\mu$ based on the following auxiliary graph associated with $\mathcal E_\mu$.

\begin{definition}\label{def:auxiliary-graph-E-mu}
Let $\mathcal E_\mu$ be a extensible embedding of $\pert{\mu}$. The graph \emph{$A({\mathcal E}_\mu)$} contains a vertex for each red vertex of $\pert{\mu}$ that does not belong to an undecided $rb$-trivial component; furthermore, it contains a vertex for each internal red face of ${\mathcal E}_\mu$; finally, it contains a vertex for $\ell({\mathcal E}_\mu)$ (resp.\ for $r({\mathcal E}_\mu)$) if $\ell({\mathcal E}_\mu)$ (resp.\ $r({\mathcal E}_\mu)$) has at least two incident\footnote{For the sake of this definition, red vertices that belong to undecided $rb$-trivial components are not considered to be incident to $\ell({\mathcal E}_\mu)$ or $r({\mathcal E}_\mu)$.} red vertices of $\pert{\mu}$, or if $\ell({\mathcal E}_\mu)$ (resp.\ $r({\mathcal E}_\mu)$) has one incident red vertex of $\pert{\mu}$ and $\mu$ is of type \BP-BP or \BB-BB. The edge set of $A({\mathcal E}_\mu)$ contains an edge $(v,f)$ for each two vertices $v$ and $f$ that belong to the vertex set of $A({\mathcal E}_\mu)$ and such that $v$ is a vertex of $\pert{\mu}$ incident to a face $f$ of ${\mathcal E}_\mu$. 
\end{definition}

We give the following lemma about the structure~of~$A({\mathcal E}_\mu)$.

\begin{lemma}[Structure of $\mathbf{A({\mathcal E}_\mu)}$]\label{le:structure-A-mu}
	The graph $A({\mathcal E}_\mu)$ is composed of at most two caterpillars.
\end{lemma}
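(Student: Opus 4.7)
The plan is to take a neat embedding $\mathcal E$ of $H$ extending $\mathcal E_\mu$ (which exists by extensibility), apply \cref{le:characterization-triconnected-onesidefixed} to conclude that $A(\mathcal E)$ is a caterpillar, and then relate $A(\mathcal E_\mu)$ to $A(\mathcal E)$. I denote by $f_\ell$ and $f_r$ the faces of $\mathcal E$ that correspond to the left and right outer faces of $\mathcal E_\mu$, respectively.

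First I would identify $A(\mathcal E_\mu)$ with an induced subgraph of $A(\mathcal E)$. Each red vertex of $\pert{\mu}$ in $A(\mathcal E_\mu)$ is itself a vertex of $A(\mathcal E)$; each internal red face of $\mathcal E_\mu$ is a red face of $\mathcal E$ and hence a vertex of $A(\mathcal E)$; and whenever $\ell(\mathcal E_\mu)$ or $r(\mathcal E_\mu)$ is a vertex of $A(\mathcal E_\mu)$, the conditions of \cref{def:auxiliary-graph-E-mu} together with the preceding lemma for types \BP-BP and \BB-BB guarantee that the corresponding face $f_\ell$ or $f_r$ is red in $\mathcal E$ and therefore a vertex of $A(\mathcal E)$. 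Vertex--face incidences in $\mathcal E_\mu$ and in $\mathcal E$ coincide on these pairs, so the edges of $A(\mathcal E_\mu)$ are exactly those of $A(\mathcal E)$ between such vertices. Since any connected induced subgraph of a caterpillar is itself a caterpillar, every connected component of $A(\mathcal E_\mu)$ is a caterpillar, and $A(\mathcal E_\mu)$ is a disjoint union of caterpillars.

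The main and hardest step is to bound the number of connected components of $A(\mathcal E_\mu)$ by two. My plan is to analyze how the backbone of $A(\mathcal E)$ transitions between $\pert{\mu}$-elements (internal red faces of $\mathcal E_\mu$ and red vertices of $\pert{\mu}$) and $\rest{\mu}$-elements (internal red faces of $\rest{\mu}$ and red vertices of $\rest{\mu}$). Any such transition on the backbone must pass through one of $f_\ell$ or $f_r$, since no internal red face of $\mathcal E_\mu$ is incident to a vertex of $\rest{\mu}$ and vice versa. Because $f_\ell$ and $f_r$ each appear at most once on the backbone, at most two such transitions can occur, so the backbone has at most two maximal $\pert{\mu}$-subsegments; each such subsegment, together with the leaves of $A(\mathcal E)$ in $\pert{\mu}$ attached to its kept face vertices, induces a caterpillar component of $A(\mathcal E_\mu)$. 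The main obstacle is to rule out additional components arising from red vertices of $\pert{\mu}$ that are leaves of $A(\mathcal E)$ attached to an unkept $f_\ell$ or $f_r$; I would handle this by observing that, in the two-subsegment case, such an unkept outer face sits at a transition point of the backbone and has exactly one $\pert{\mu}$-incident backbone-neighbor, so any additional $\pert{\mu}$-leaf attached to it would push the count of its $\pert{\mu}$-incident red vertices to at least two and thereby force it to be kept by \cref{def:auxiliary-graph-E-mu}, a contradiction; analogous arguments dispose of the one-subsegment and zero-subsegment cases, yielding the bound of two caterpillar components.
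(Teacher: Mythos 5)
Your first step (identifying $A(\mathcal E_\mu)$ with a subgraph of $A(\mathcal E)$ for a neat embedding $\mathcal E$ extending $\mathcal E_\mu$, so that every component is a subtree of a caterpillar and hence a caterpillar) is sound and is exactly how the paper argues that each component is a caterpillar. The gap is in your main step. Your key claim --- that every backbone transition between $\pert{\mu}$-elements and $\rest{\mu}$-elements must pass through $f_\ell$ or $f_r$, ``since no internal red face of $\mathcal E_\mu$ is incident to a vertex of $\rest{\mu}$ and vice versa'' --- fails precisely at the poles, which belong to both $\pert{\mu}$ and $\rest{\mu}$. When $\mu$ is of type \RE-RE or \RF-RF, the red pole $v$ is incident both to internal faces of $\mathcal E_\mu$ and to faces of $\mathcal E$ that lie entirely outside $\pert{\mu}$ (faces between two consecutive edges of $\rest{\mu}$ around $v$), and such an outside face can be red. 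So the backbone of $A(\mathcal E)$ can switch from an internal red face of $\mathcal E_\mu$ to an outside red face through the vertex $v$, bypassing $f_\ell$ and $f_r$ entirely. With three possible transition loci instead of two, your counting of ``at most two maximal $\pert{\mu}$-subsegments'' no longer follows; moreover, a kept outer face or the pole $v$ can each sit between two $\rest{\mu}$-elements on the backbone, so the number of $\pert{\mu}$-runs is not bounded by the number of loci in the naive way. (Your treatment of leaf-only components at an unkept outer face in the one- and zero-subsegment cases is also only sketched, but that is secondary.)

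The paper closes exactly this hole with a different, and simpler, argument: any path in $A(\mathcal E)$ from a vertex of $\pert{\mu}$ to a vertex of $\rest{\mu}$ must pass through $\ell(\mathcal E_\mu)$, $r(\mathcal E_\mu)$, \emph{or the red pole of $\mu$ (if it exists)}; hence a component of $A(\mathcal E_\mu)$ containing none of these three would be a whole connected component of $A(\mathcal E)$, contradicting the connectivity of the caterpillar $A(\mathcal E)$. So each component contains one of the three, and three components are impossible because the red pole is incident to both $\ell(\mathcal E_\mu)$ and $r(\mathcal E_\mu)$, so whenever those outer faces are vertices of $A(\mathcal E_\mu)$ they are adjacent to the pole and the corresponding components merge. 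To repair your proof you would need to add $v$ as a third separator and then exploit this adjacency of $v$ to the outer faces; at that point you essentially recover the paper's argument, and the backbone-transition bookkeeping becomes unnecessary.
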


\begin{fullproof}
Since ${\mathcal E}_\mu$ is extensible, there exists a neat embedding $\cal E$ of $H$ that extends ${\mathcal E}_\mu$. Observe that each connected component of $A({\mathcal E}_\mu)$ is a caterpillar, since every edge of $A({\mathcal E}_\mu)$ also belongs to $A({\mathcal E})$ and since $A({\mathcal E})$ is a caterpillar, by~\cref{le:characterization-triconnected-onesidefixed}.

We now prove that there exist at most two connected components in $A({\mathcal E}_\mu)$. Suppose, for a contradiction, that there exist at least three such components. We first observe that, if there exists a component $C$ of $A({\mathcal E}_\mu)$ that contains neither $\ell({\mathcal E}_\mu)$, nor $r({\mathcal E}_\mu)$, nor the red pole of $\mu$ (if it exists), then $C$ is also a connected component of $A({\mathcal E})$, since any path in $A({\mathcal E})$ connecting a vertex of $\pert{\mu}$ with a vertex of $\rest{\mu}$ must contain one of these three vertices. Thus, we may assume that there exist exactly three connected components in $A({\mathcal E}_\mu)$, each containing one of these three vertices. However, this is not possible, since the red pole of $\mu$ is incident to both $\ell({\mathcal E}_\mu)$ and $r({\mathcal E}_\mu)$, and thus the three components would be merged into one. The statement follows.
\end{fullproof}

The \emph{backbones} of $A({\mathcal E}_\mu)$ are the paths obtained from the connected components of $A({\mathcal E}_\mu)$ by removing the leaves that correspond to red vertices of $\pert{\mu}$.

In the following subsections, we classify the extensible embeddings ${\mathcal E}_\mu$ of $\pert{\mu}$ in terms of the way in which $A({\mathcal E}_\mu)$ can participate to $A({\mathcal E})$, where $\mathcal E$ is a neat embedding of $H$ that extends ${\mathcal E}_\mu$.
Our classification is based on the following features:
\begin{enumerate}[(F1)]
\item the number of caterpillars $A({\mathcal E}_\mu)$ consists of;
\item the number of outer faces of $\mathcal E_\mu$ that belong to $A(\mathcal E_\mu)$; and
\item whether ${\cal E}_\mu$ has at least one internal red face $f$ or not.
\end{enumerate}


We are going to exploit the following lemma for the classification.

\begin{lemma}\label{le:internal-red-vertex}
If ${\mathcal E}_\mu$ contains an internal red vertex, then it also contains an internal red face and a red vertex incident to the outer face.
\end{lemma}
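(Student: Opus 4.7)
My plan is to prove the two conclusions separately, leveraging the extensibility of $\mathcal{E}_\mu$. Let $\mathcal{E}$ be a neat embedding of $H$ that extends $\mathcal{E}_\mu$ (which exists by extensibility); by \cref{le:characterization-triconnected-onesidefixed}, $A(\mathcal{E})$ is a connected caterpillar. Let $r$ denote the hypothesized internal red vertex of $\mathcal{E}_\mu$.

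For the existence of an internal red face, I would observe that $r$ belongs to $A(\mathcal{E})$, so it has at least one neighbor $f'$ in $A(\mathcal{E})$, and $f'$ is a red face of $\mathcal{E}$ incident to $r$. Because $r$ is internal to $\mathcal{E}_\mu$, every face of $\mathcal{E}$ incident to $r$ must already be an internal face of $\mathcal{E}_\mu$: internal faces of $\mathcal{E}_\mu$ are preserved when extending to $\mathcal{E}$, whereas faces containing the outer face of $\mathcal{E}_\mu$ involve only vertices on its boundary. Hence $f'$ is an internal face of $\mathcal{E}_\mu$, and the red vertices of $\mathcal{E}$ incident to $f'$ are precisely the red vertices of $\pert{\mu}$ incident to $f'$; since $f'$ is red in $\mathcal{E}$, it has at least two such red vertices, making it an internal red face of $\mathcal{E}_\mu$ as desired.

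For the existence of a red vertex on the outer face, I plan to argue by contradiction: suppose no red vertex of $\pert{\mu}$ is incident to the outer face of $\mathcal{E}_\mu$. This already excludes node types \atype{} and \dtype{}, because a red pole is always on the outer face. Under the assumption, every vertex on the outer face boundary is black, and since the only black-black edges of $H$ lie on the path $P$, the outer face boundary is a closed walk inside $P$. In types \btype{} and \etype{}, the black $P$-restriction of $\pert{\mu}$ does not connect the poles $u$ and $v$, so no boundary walk from $u$ to $v$ exists there and the contradiction is immediate. In types \ctype{} and \gtype{}, using that $\pert{\mu}^-+(u,v)$ is biconnected, I would show that the outer face of $\mathcal{E}_\mu$ is bounded by two simple paths $W_\ell$ and $W_r$ from $u$ to $v$; both being $u$-$v$ paths in the tree $P$, both must coincide with the unique $u$-$v$ subpath $P_{u,v}$ of $P$. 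Hence the outer face is bounded by $P_{u,v}$ traversed on both sides, a null-homotopic walk which forces $\pert{\mu}^-$ to reduce to $P_{u,v}$ itself, and therefore contains no red vertex, contradicting the existence of $r$.

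The main obstacle is making the final topological step airtight. I would make it precise by observing that when the outer face boundary is $P_{u,v}$ traversed on both sides, the only planar region available outside this simple path is the outer face itself; any vertex or edge of $\pert{\mu}^-$ other than $P_{u,v}$ would then lie in the outer face and thus appear on its boundary, contradicting the assumption that the boundary consists of only $P_{u,v}$. Consequently $\pert{\mu}^- = P_{u,v}$, a purely black subpath of $P$, which closes the contradiction.
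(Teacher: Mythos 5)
Your proof is correct in substance, but it takes a genuinely different and considerably heavier route than the paper's. The paper's argument is purely local and uses only the fact that the black vertices of $H$ induce a path (hence no black cycle exists): for the first claim, if no face incident to the internal red vertex $w$ were red, then deleting $w$ merges its incident faces into one whose boundary walk is all black, yielding a black cycle; for the second claim, the boundary walk of the outer face contains a simple cycle enclosing the internal vertex $w$, and if no red vertex lay on the outer face this cycle would again be all black. Your first part instead invokes extensibility, passes to a neat embedding $\mathcal E$ of $H$, and pulls an internal red face of $\mathcal E_\mu$ back from the connected caterpillar $A(\mathcal E)$ of \cref{le:characterization-triconnected-onesidefixed}; your second part replaces the one-line cycle argument by a case analysis over the six node types, using that the all-black outer boundary would have to be a $u$--$v$ walk in the black subgraph of $\pert{\mu}$. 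Both of your arguments go through, and your handling of the \BP/\BB case (both outer boundary paths forced to equal $P_{u,v}$, hence $\pert{\mu}^-=P_{u,v}$) is sound once made precise as you indicate. Two caveats are worth noting. First, your reliance on extensibility proves a weaker statement than the paper needs: the lemma is later applied to relevant embeddings whose extensibility is not yet known (e.g., inside the proof of \cref{le:some-admissible}, where extensibility of $\mathcal E_\mu$ is exactly what is being established), so the extensibility-free argument of the paper is not just shorter but necessary for those uses; your part~2 already avoids extensibility, and part~1 could be fixed in the same elementary way. Second, in the final contradiction for \BP/\BB you conclude that $\pert{\mu}^-$ contains no red vertex, but the hypothesized internal red vertex might be a pendant red vertex of $\pert{\mu}\setminus\pert{\mu}^-$; the fix is immediate, since $\pert{\mu}^-=P_{u,v}$ forces the embedding of $\pert{\mu}$ to have a single face, so no internal vertex can exist at all.
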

\begin{fullproof}
Let $w$ be an internal red vertex of ${\mathcal E}_\mu$. Suppose, for a contradiction, that no face incident to $w$ is red. Then the closed walk delimiting the face of ${\mathcal E}_\mu - w$ that used to contain $w$ is composed of black vertices only. However, this contradicts the fact that the black vertices induce a path in $H$. This proves the first part of the statement. 
For the second part, denote by $W$ the closed walk delimiting the outer face of ${\mathcal E}_\mu$. Note that $W$ contains at least a simple cycle, which contains $w$ in its interior. Thus, if there exists no red vertex incident to the outer face, then $W$ is composed of black vertices only, which again contradicts the fact that the black vertices induce a path in $H$.
\end{fullproof}

\subsubsection{type \protect\RE-RE Nodes}

\begin{wrapfigure}{r}{.2\textwidth}\tabcolsep=4pt
	\centering
	\vspace{-18mm}
	\includegraphics[page=1]{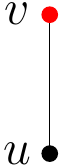}
	\caption{The pertinent graph of a type-\protect\RE-RE node.}\label{fig:Nodetypesw-gadgets}	
\end{wrapfigure}

type~\RE-RE nodes have a very simple structure, as in the following lemma.

\begin{lemma} \label{le:relevant-in-extensible-RE}
 	If $\mu$ is of type \RE-RE, then $\pert{\mu}$ has a unique extensible embedding (up to a relabeling of the neighbors of $u$). Furthermore, $A(\mathcal E_\mu)$ consists of a single red vertex.
 \end{lemma}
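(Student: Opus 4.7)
The plan is to unwind the definitions using the structural information already provided by \cref{le:structure-node-RE}, which pins down $H_\mu$ when $\mu$ is of type \RE-RE.

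First, I would apply \cref{le:structure-node-RE} to conclude that $H_\mu$ is a star centered at the black pole $u$ whose leaves are the red pole $v$ and possibly one additional red vertex $r$ coming from an $rb$-trivial component incident to $u$. In particular, $H_\mu$ is a tree, so ${\mathcal E}_\mu$ has no internal faces; and the only planar-embedding choice is the cyclic order of the (at most two) neighbors of $u$. If $u$ has a single neighbor the embedding is trivially unique; if $u$ has two neighbors, the two possible cyclic orders differ only by swapping which neighbor is encountered first when walking clockwise around $u$, which is exactly the freedom captured by the ``relabeling of the neighbors of $u$'' in the statement. To conclude the first part, I would argue extensibility: by Assumption \blue{A\ref{A1}} and the fact that $\mu\neq\rho$, at least one neat embedding of $H$ exists whose restriction to $\pert{\mu}$ realizes this embedding, so it is indeed extensible (and any extensible embedding must coincide with it, up to relabeling).

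Second, to compute $A({\mathcal E}_\mu)$ I would apply \cref{def:auxiliary-graph-E-mu} item by item. The red vertices of $\pert{\mu}$ are $v$ and possibly $r$; but $r$, when present, belongs to an $rb$-trivial component incident to a pole of $\mu$, hence is undecided, and therefore contributes no vertex to $A({\mathcal E}_\mu)$. The vertex $v$ is not part of an undecided $rb$-trivial component, and does contribute a vertex. Because $H_\mu$ is a tree, there are no internal faces, and in particular no internal red faces to contribute vertices. For the outer faces, the footnote of \cref{def:auxiliary-graph-E-mu} forbids counting $r$ toward incidence with $\ell({\mathcal E}_\mu)$ or $r({\mathcal E}_\mu)$, so each outer face has at most one incident red vertex of $\pert{\mu}$, namely $v$; moreover, $\mu$ is of type \RE-RE and not of type \BP-BP or \BB-BB, so the second clause of the definition does not apply. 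Hence neither outer face contributes a vertex, and $A({\mathcal E}_\mu)$ reduces to the single vertex corresponding to $v$, with no incident edges.

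This is a short proof with essentially no obstacle: the only point that requires care is the subtle bookkeeping for the outer faces, where one must simultaneously invoke the footnote (to exclude $r$ from incidence) and the fact that the second clause of the outer-face rule in \cref{def:auxiliary-graph-E-mu} applies only to \BP-BP and \BB-BB nodes, in order to prevent $\ell({\mathcal E}_\mu)$ or $r({\mathcal E}_\mu)$ from contributing spurious vertices.
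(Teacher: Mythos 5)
Your argument follows the same route as the paper: invoke \cref{le:structure-node-RE} to pin down $\pert{\mu}$ as a star centered at $u$ with red leaves, observe the embedding is unique up to the stated relabeling, and then read off from \cref{def:auxiliary-graph-E-mu} that $A(\mathcal E_\mu)$ is the single vertex $v$ (the red leaf of an $rb$-trivial component being undecided, and the outer faces being excluded since each sees only the one red vertex $v$ and $\mu$ is not of type \BP or \BB). That bookkeeping is exactly right and, if anything, more explicit than the paper's two-line proof.

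The one step that does not hold up is your extensibility claim: ``by Assumption (A1) and the fact that $\mu\neq\rho$, at least one neat embedding of $H$ exists whose restriction to $\pert{\mu}$ realizes this embedding.'' Extensibility is defined with respect to neat embeddings of the whole graph $H$, and Assumption (A1) gives no guarantee that $H$ admits any neat (equivalently, good) embedding at all --- for a negative instance it does not, in which case $\pert{\mu}$ has no extensible embedding whatsoever. So existence cannot be proved here, and indeed neither the paper's proof nor the way the lemma is later used (in \cref{le:relevant-in-extensible}, purely as a classification statement: \emph{any} extensible embedding is the type-\RE{} one and its auxiliary graph is a single red vertex) requires it. Dropping that sentence, or replacing it with the observation that uniqueness up to relabeling is all that is claimed about extensible embeddings, makes your proof match the intended statement.
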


\begin{fullproof}
By \cref{le:structure-node-RE}, $H^-_\mu$ consists of an edge between the poles of $\mu$. Therefore, $A({\mathcal E}_\mu)$ contains just one red vertex, which corresponds to the red pole $v$ of $\mu$. 
\end{fullproof}

By \cref{le:relevant-in-extensible-RE}, we say that the unique extensible embedding of $\pert{\mu}$ a type \RE-RE node $\mu$ is of type \RE-RE; refer to \cref{fig:Nodetypesw-gadgets}.

\subsubsection{Type \protect\RF-RF Nodes}

We now discuss the case in which $v$ is red and some edges of $P$ belong to $\pert{\mu}$, that is, $\mu$ is of type~\RF-RF.

\begin{lemma}\label{le:structure-A-mu-RF}
	Suppose that $\mu$ is of type~\RF-RF. Then $A({\mathcal E}_\mu)$ is a caterpillar. Further, if $A({\mathcal E}_\mu)$ contains both $\ell(\mathcal E_\mu)$ and $r(\mathcal E_\mu)$, then the path between $\ell(\mathcal E_\mu)$ and $r(\mathcal E_\mu)$ in $A({\mathcal E}_\mu)$ is $(\ell(\mathcal E_\mu),v,r(\mathcal E_\mu))$.
\end{lemma}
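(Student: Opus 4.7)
The plan is to specialize the general structural result (that $A(\mathcal E_\mu)$ consists of at most two caterpillars, by \cref{le:structure-A-mu}) to the type-\RF-RF setting. First I would verify that the red pole $v$ is itself a vertex of $A({\mathcal E}_\mu)$: it is a red vertex of $\pert{\mu}$ that, being a pole, cannot belong to an undecided $rb$-trivial component, so by \cref{def:auxiliary-graph-E-mu} it is in $A({\mathcal E}_\mu)$. Further, as a pole, $v$ lies on the boundary of both outer faces $\ell(\mathcal E_\mu)$ and $r(\mathcal E_\mu)$. Whenever the vertex $\ell(\mathcal E_\mu)$ (resp.\ $r(\mathcal E_\mu)$) appears in $A({\mathcal E}_\mu)$, the edge joining $v$ to it is automatically present, because per \cref{def:auxiliary-graph-E-mu} an outer face is added together with the edges to all its incident red vertices (and for such a face to be present at all it must have at least two incident red vertices, $v$ being one of them).

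For the first part of the statement, I would invoke the argument in the proof of \cref{le:structure-A-mu}: any connected component of $A(\mathcal E_\mu)$ that contains neither $\ell(\mathcal E_\mu)$, nor $r(\mathcal E_\mu)$, nor the red pole of $\mu$ would also be a connected component of $A(\mathcal E)$, contradicting the fact that $A(\mathcal E)$ is a single caterpillar (by \cref{le:characterization-triconnected-onesidefixed}, since $\mathcal E$ is a neat—hence good—embedding extending $\mathcal E_\mu$). In the \RF-RF case these three distinguished vertices collapse into a single component: $v$ belongs to $A(\mathcal E_\mu)$, and by the preceding observation each of $\ell(\mathcal E_\mu)$ and $r(\mathcal E_\mu)$, whenever present, is directly adjacent to $v$ in $A(\mathcal E_\mu)$. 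Hence $A(\mathcal E_\mu)$ has exactly one connected component, which by \cref{le:structure-A-mu} is a caterpillar.

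For the second part, I would simply use that a caterpillar is a tree, so the path between any two of its vertices is unique. If both $\ell(\mathcal E_\mu)$ and $r(\mathcal E_\mu)$ appear in $A(\mathcal E_\mu)$, then both edges $(\ell(\mathcal E_\mu),v)$ and $(v,r(\mathcal E_\mu))$ lie in $A(\mathcal E_\mu)$ by the opening observation; thus $(\ell(\mathcal E_\mu),v,r(\mathcal E_\mu))$ is a path in $A(\mathcal E_\mu)$, and by uniqueness of tree paths it is \emph{the} path connecting them.

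The main obstacle I expect is essentially bookkeeping: carefully checking the incidence conditions in \cref{def:auxiliary-graph-E-mu} to guarantee that the edges $(v,\ell(\mathcal E_\mu))$ and $(v,r(\mathcal E_\mu))$ really are in $A(\mathcal E_\mu)$ whenever the outer faces are. This is handled by the fact that $v$ itself supplies one of the required red-vertex incidences to each outer face, so no separate case analysis is needed.
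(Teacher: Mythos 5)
Your proof is correct and follows essentially the same route as the paper: it rests on \cref{le:structure-A-mu} together with the observation that the red pole $v$ belongs to $A(\mathcal E_\mu)$ and is adjacent to each of $\ell(\mathcal E_\mu)$ and $r(\mathcal E_\mu)$ whenever these are vertices of $A(\mathcal E_\mu)$, which forces connectivity, and the second claim then follows from uniqueness of paths in a tree. The only differences are presentational (you avoid the paper's explicit two-case analysis and make the tree-path uniqueness step explicit), so there is nothing to fix.
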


\begin{fullproof}
Since ${\mathcal E}_\mu$ is extensible, by \cref{le:structure-A-mu}, we have that $A({\mathcal E}_\mu)$ is composed of at most two caterpillars. 
Suppose, for a contradiction, that $A({\mathcal E}_\mu)$ actually consists of two caterpillars. If one of them, say $C$, contains neither $\ell({\mathcal E}_\mu)$ nor $r({\mathcal E}_\mu)$, then, as in the proof of \cref{le:structure-A-mu}, we have that $C$ is also a connected component of $A({\mathcal E})$, a contradiction to the fact that $A({\mathcal E})$ is connected. Otherwise, one caterpillar contains $\ell({\mathcal E}_\mu)$ and the other one contains $r({\mathcal E}_\mu)$. However, $v$ is incident to both $\ell({\mathcal E}_\mu)$ and $r({\mathcal E}_\mu)$ in ${\mathcal E}_\mu$, hence it is adjacent to both $\ell({\mathcal E}_\mu)$ and $r({\mathcal E}_\mu)$ in $A({\mathcal E}_\mu)$, which implies that $\ell({\mathcal E}_\mu)$ and $r({\mathcal E}_\mu)$ are in the same connected component of $A({\mathcal E}_\mu)$, a contradiction that proves that $A({\mathcal E}_\mu)$ is a connected caterpillar. As just observed, if both $\ell(\mathcal E_\mu)$ and $r(\mathcal E_\mu)$ are nodes of $A({\mathcal E}_\mu)$, then $A({\mathcal E}_\mu)$ contains the path $(\ell(\mathcal E_\mu),v,r(\mathcal E_\mu))$.
\end{fullproof}

We now classify the possible types of ${\cal E}_\mu$ based on Features (F2)--(F3);
refer to~\cref{fig:RF} for a complete schematization and to~\cref{fig:RF-gadgets} for examples. Observe that the classification with respect to Feature (F1) is unique, as stated by \cref{le:structure-A-mu-RF}. There exist three possible options for Feature (F2), namely ${\cal E}_\mu$ can have $0$, $1$, or $2$ outer faces which are red. Further, there exist two possible options for Feature (F3), namely ${\cal E}_\mu$ can have an internal red face $f$ or not. This defines six types of embeddings for $\pert{\mu}$, called~\RFNz-{\bf RFN0},~\RFNi-{\bf RFN1}, \hbox{\RFNii-{\bf RFN2}}, \hbox{\RFIz-{\bf RFI0}}, \hbox{\RFIi-{\bf RFI1}}, and \hbox{\RFIii-{\bf RFI2}}, where the third character indicates whether ${\cal E}_\mu$ contains an internal red face (I) or not (N), and the fourth character indicates the number of red outer faces of ${\cal E}_\mu$. In the following, we refine the above classification. 

\begin{figure}[htb]\tabcolsep=6pt
	\centering
	\begin{tabular}{c c c c c c c}
		\includegraphics[height=.2\textwidth,page=15]{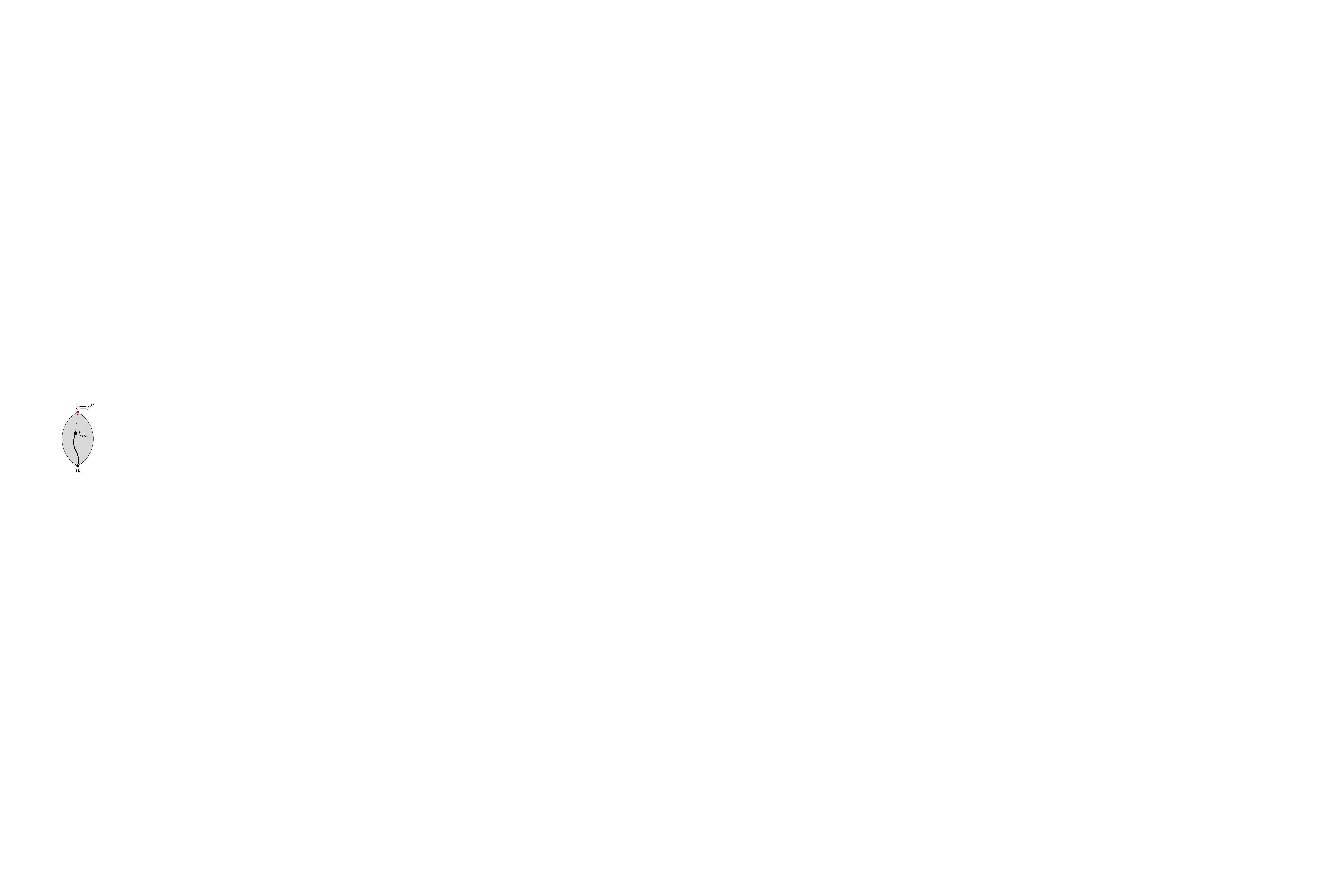}  &
		\includegraphics[height=.2\textwidth,page=14]{RF}  &
		\includegraphics[height=.2\textwidth,page=16]{RF}  &
		\includegraphics[height=.2\textwidth,page=17]{RF}  &
		\includegraphics[height=.2\textwidth,page=19]{RF} &
s		\includegraphics[height=.2\textwidth,page=18]{RF}  &
		\includegraphics[height=.2\textwidth,page=20]{RF} \\
		\protect\RFNza-RFN0A & \protect\RFNzb-RFN0B & \protect\RFNia-RFN1A & \protect\RFNib-RFN1B & \protect\RFNic-RFN1C & \protect\RFNid-RFN1D  & \protect\RFNii-RFN2
	\end{tabular}
	\begin{tabular}{c c c c}
		\includegraphics[height=.2\textwidth,page=6]{RF}  &
		\includegraphics[height=.2\textwidth,page=8]{RF} &
		\includegraphics[height=.2\textwidth,page=9]{RF}  &
		\includegraphics[height=.2\textwidth,page=12]{RF} \\
		\protect\RFIz-RFI0 & \protect\RFIia-RFI1A & \protect\RFIib-RFI1B  & \protect\RFIii-RFI2
	\end{tabular}
	\caption{Embedding types for type~\protect\RF-RF nodes, without taking into account the existence of an undecided $rb$-trivial component incident to $u$.}\label{fig:RF}
\end{figure}

\begin{figure}[htb]\tabcolsep=6pt
	\centering
	\begin{tabular}{c c c c c c c}
		\includegraphics[height=.2\textwidth,page=15]{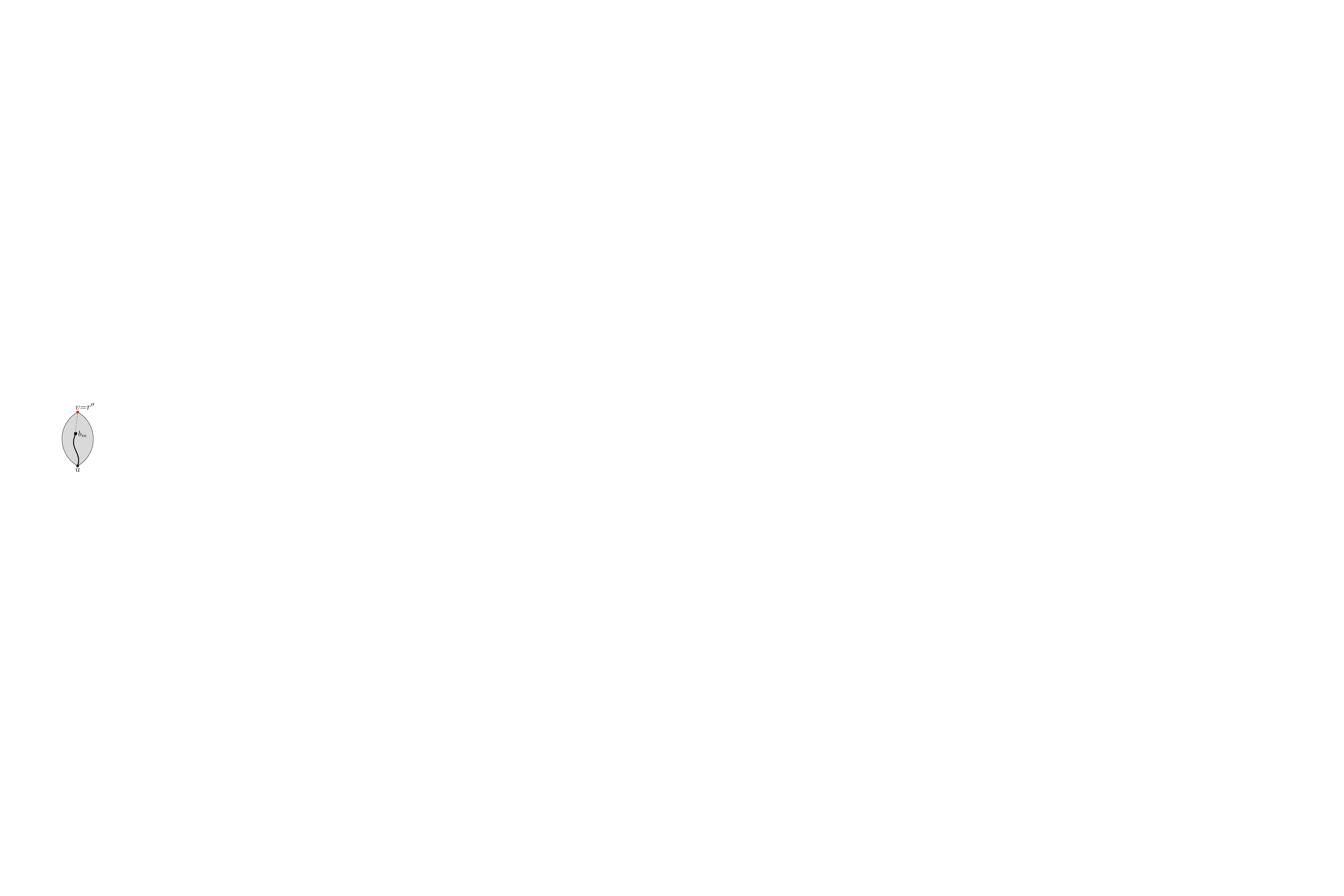}  &
		\includegraphics[height=.2\textwidth,page=14]{RF-gadgets}  &
		\includegraphics[height=.2\textwidth,page=16]{RF-gadgets}  &
		\includegraphics[height=.2\textwidth,page=17]{RF-gadgets}  &
		\includegraphics[height=.2\textwidth,page=19]{RF-gadgets}  &
		\includegraphics[height=.2\textwidth,page=18]{RF-gadgets}  &
		\includegraphics[height=.2\textwidth,page=20]{RF-gadgets} \\
		\protect\RFNza-RFN0A & \protect\RFNzb-RFN0B & \protect\RFNia-RFN1A & \protect\RFNib-RFN1B & \protect\RFNic-RFN1C & \protect\RFNid-RFN1D  & \protect\RFNii-RFN2
	\end{tabular}
	\begin{tabular}{c c c c}
		\includegraphics[height=.2\textwidth,page=6]{RF-gadgets}  &
		\includegraphics[height=.2\textwidth,page=8]{RF-gadgets} &
		\includegraphics[height=.2\textwidth,page=9]{RF-gadgets}  &
		\includegraphics[height=.2\textwidth,page=12]{RF-gadgets} \\
		\protect\RFIz-RFI0 & \protect\RFIia-RFI1A & \protect\RFIib-RFI1B  & \protect\RFIii-RFI2
	\end{tabular}
	\caption{Embeddings of pertinent graphs of type \protect\RF-RF.}\label{fig:RF-gadgets}
\end{figure}

We first deal with embeddings with no internal red faces. We start with the following lemma.

\begin{lemma}\label{le:rf-endvertex-backbone-no-internal}
Suppose that $\mathcal E_\mu$ does not contain any internal red face. If $\pert{\mu}$ contains at least one red vertex different from $v$, then $r(\mathcal{E}_\mu)$ or $\ell(\mathcal{E}_\mu)$ is an end-vertex of the backbone of $A(\mathcal{E})$. 
\end{lemma}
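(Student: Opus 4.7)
The plan is to fix a neat embedding $\mathcal E$ of $H$ extending $\mathcal E_\mu$, which exists since $\mathcal E_\mu$ is extensible, and to denote by $\ell^*$ and $r^*$ the faces of $\mathcal E$ corresponding to $\ell(\mathcal E_\mu)$ and $r(\mathcal E_\mu)$, respectively. By \cref{le:characterization-triconnected-onesidefixed}, $A(\mathcal E)$ is a caterpillar whose backbone $\mathcal B=(f_1,v_2,\dots,v_k,f_k)$ spans all red faces of $\mathcal E$, and contains a red leaf $r''$ adjacent only to $f_k$ that shares a face $g$ with $b_m$. Since $\mu$ is of type~\RF-RF, $b_m$ lies in $\pert{\mu}$ and is not a pole of $\mu$. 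I will prove $f_k\in\{\ell^*,r^*\}$, which is exactly the desired conclusion.

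The key observation I would establish first is that every red face of $\mathcal E$ incident to a non-pole red vertex $w$ of $\pert{\mu}$ must be $\ell^*$ or $r^*$: the faces of $\mathcal E$ incident to $w$ coincide with those of $\mathcal E_\mu$ incident to $w$, and any internal face of $\mathcal E_\mu$ has the same boundary in both embeddings, so if it were red in $\mathcal E$ it would contradict the hypothesis that $\mathcal E_\mu$ has no internal red face. The same argument rules out red faces of $\mathcal E$ that lie inside $\mathcal E_\mu$ and are incident to $v$. With this in hand, I would split into three cases by the location of $r''$. If $r''\in\pert{\mu}\setminus\{v\}$, then the key observation immediately gives $f_k\in\{\ell^*,r^*\}$, since $f_k$ is $r''$'s unique red face neighbor in $A(\mathcal E)$. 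If $r''\in\rest{\mu}\setminus\{v\}$, then since $b_m\in\pert{\mu}$ the shared face $g$ must cross the boundary of $\pert{\mu}$, forcing $g\in\{\ell^*,r^*\}$; the boundary of $g$ then contains both $v$ and $r''$ as distinct red vertices, so $g$ is red and adjacent to $r''$ in $A(\mathcal E)$, whence $f_k=g$ since $r''$ is a leaf.

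The main obstacle is the case $r''=v$: because $v$ is a pole, it may a priori be incident to red faces lying entirely in $\rest{\mu}$, so nothing above forces $f_k\in\{\ell^*,r^*\}$ directly. This is precisely where the hypothesis that $\pert{\mu}$ contains a red vertex $r_0\neq v$ becomes essential. Applying the key observation to $r_0$, together with the fact that $r_0$ is a vertex of the connected graph $A(\mathcal E)$, implies that $r_0$ is adjacent in $A(\mathcal E)$ to $\ell^*$ or $r^*$; hence the corresponding outer face --- say $\ell^*$ --- is red. Since $v$ lies on $\ell^*$'s boundary, $v$ is adjacent to $\ell^*$ in $A(\mathcal E)$, and being a leaf with unique red face neighbor $f_k$, it follows that $f_k=\ell^*$. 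In all cases $f_k\in\{\ell^*,r^*\}$, so $\ell(\mathcal E_\mu)$ or $r(\mathcal E_\mu)$ is an end-vertex of the backbone of $A(\mathcal E)$.
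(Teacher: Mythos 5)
Your proof is correct and takes essentially the same route as the paper's: you invoke \cref{le:characterization-triconnected-onesidefixed} to obtain the leaf $r''$ adjacent to an end-vertex of the backbone and sharing a face with $b_m$, split on whether $r''$ is a non-pole red vertex of $\pert{\mu}$, a vertex of $\rest{\mu}$, or equals $v$, and use the extra red vertex of $\pert{\mu}$ exactly in the case $r''=v$ to force an outer face to be red. The only (cosmetic) difference is that the paper first rules out internal red vertices of $\mathcal E_\mu$ via \cref{le:internal-red-vertex}, whereas you argue directly that no internal face of $\mathcal E_\mu$ can be red in $\mathcal E$; the content is the same.
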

\begin{fullproof}
Since $\mu$ is of type \RF-RF, we have that $b_m$ is a non-pole vertex of $\pert{\mu}$. By \cref{le:internal-red-vertex} and by the assumption that $\mathcal E_\mu$ does not contain any internal red face, we have that $\mathcal E_\mu$ does not contain any internal red vertex. By \cref{le:characterization-triconnected-onesidefixed}, we have that $A(\cal E)$ contains a leaf $r''$ that is adjacent to an end-vertex of the backbone of $A(\mathcal E)$ and that shares a face of $\mathcal E$ with $b_m$. 

Hence, either (1) $r''$ is a vertex of $\rest{\mu}$ incident to $r(\mathcal E_\mu)$ or $\ell(\mathcal E_\mu)$; or (2) $r''$ is a non-pole vertex of $\pert{\mu}$ incident to $r(\mathcal E_\mu)$ or $\ell(\mathcal E_\mu)$; or (3) $r''=v$. In cases (1) and (2), we have that $r''$ is (only) adjacent to $r(\mathcal E_\mu)$ or $\ell(\mathcal E_\mu)$ in $A(\mathcal{E})$, hence $r(\mathcal E_\mu)$ or $\ell(\mathcal E_\mu)$ is an end-vertex of the backbone of $A(\mathcal E)$. In case (3), by the assumption that $\pert{\mu}$ contains at least one red vertex different from $v$ and since such a vertex is incident to either $r(\mathcal E_\mu)$ or $\ell(\mathcal E_\mu)$, we have that $r(\mathcal E_\mu)$ or $\ell(\mathcal E_\mu)$ is red. Hence, $r''=v$ is (only) adjacent to $r(\mathcal E_\mu)$ or $\ell(\mathcal E_\mu)$ in $A(\mathcal{E})$, and again $r(\mathcal E_\mu)$ or $\ell(\mathcal E_\mu)$ is an end-vertex of the backbone of $A(\mathcal E)$.
\end{fullproof}

We refine the classification for the embedding types \RFNz-RFN0,~\RFNi-RFN1, and \RFNii-RFN2, according to the position of $b_m$ with respect to the red vertices of $\pert{\mu}$. Observe that, by \cref{le:internal-red-vertex}, all the red vertices are incident to the outer face of $\mathcal{E}_\mu$. Further, $b_m$ is adjacent to at least one red vertex in $\pert{\mu}^-$, since $b_m$ has only one black neighbor and it has degree at least~$2$~in~$\pert{\mu}^-$.

Suppose first that $\mathcal{E}_\mu$ is of type \RFNz-RFN0. In this case, $v$ is the only red vertex of $\pert{\mu}^-$, and thus $b_m$ is adjacent to $v$. If $b_m$ is an internal vertex of $\mathcal{E}_\mu$, then we say that $\mathcal{E}_\mu$ is of type \RFNza-{\bf RFN0A}, otherwise it is of type \RFNzb-{\bf RFN0B}. Note that, if $\mathcal{E}_\mu$ is of type \RFNza-RFN0A, then $v$ must coincide with the leaf $r''$ of $A(\mathcal{E})$ that shares a face with $b_m$ and that is adjacent to an end-vertex of the backbone of $A(\mathcal{E})$. 

Suppose next that $\mathcal{E}_\mu$ is of type \RFNi-RFN1 and assume without loss of generality that $r(\mathcal{E}_\mu)$ is red. We further distinguish four subtypes \RFNia-{\bf RFN1A}, \RFNib-{\bf RFN1B}, \RFNic-{\bf RFN1C}, and \RFNid-{\bf RFN1D}, based on whether $b_m$ is incident to $\ell(\mathcal{E}_\mu)$ (\RFNia-RFN1A and \RFNib-RFN1B) or not (\RFNic-RFN1C and \RFNid-RFN1D), and based on whether $b_m$ shares a face with at least one red vertex different from $v$ (\RFNia-RFN1A and \RFNic-RFN1C) or not (\RFNib-RFN1B and \RFNid-RFN1D).
Observe that, if $\mathcal{E}_\mu$ is of type \RFNid-RFN1D, then $v$ must coincide with the leaf $r''$ of $A(\mathcal{E})$ that shares a face with $b_m$ and that is adjacent to an end-vertex of the backbone of $A(\mathcal{E})$. Further, if $\mathcal{E}_\mu$ is of type either \RFNic-RFN1C or \RFNid-RFN1D, then $r(\mathcal{E}_\mu)$ must be an end-vertex of the backbone of $A(\mathcal{E})$. 

Suppose that $\mathcal{E}_\mu$ is of type \RFNii-RFN2. In this case, we do not perform any further refinement. Observe that $b_m$ must be adjacent to at least one red vertex that is different from $v$, since $v$ belongs to the backbone of $A(\mathcal{E})$ and thus $v \neq r''$.

We now turn our attention to embeddings with internal red faces. We have the following lemma.
 
\begin{lemma}\label{le:rf-endvertex-backbone}
Suppose that $\mathcal E_\mu$ contains an internal red face, and let $f^*$ and $f^\diamond$ be the end-vertices of the backbone $\mathcal B_\mu$ of the caterpillar $A(\mathcal E_\mu)$. We have that:
\begin{enumerate}
\item \label{le:rf-endvertex-backbone-bm} Let $\mathcal B$ be the backbone of $A(\mathcal E)$ and let $f^{\circ}$ be the end-vertex of $\mathcal B$ that is adjacent in $A(\mathcal E)$ to a leaf $r''$ of $A(\mathcal E)$ that shares a face with $b_m$. Then $f^{\circ}$ is one of $f^*$ and $f^\diamond$, say $f^{\circ}=f^*$; further, $f^{\circ}$ corresponds to an internal face of $\mathcal E_\mu$.
\item If $\mathcal E_\mu$ is of type \RFIz-RFI0, then $f^\diamond$ corresponds to an internal face of $\mathcal E_\mu$ incident to the red pole $v$, which is a leaf of $A(\mathcal E_\mu)$ (note that, in this case, $f^\diamond=f^*$ may hold).
\item If $\mathcal E_\mu$ is of type \RFIi-RFI1 or \RFIii-RFI2, then $f^\diamond$ corresponds to an outer face of $\mathcal E_\mu$.
\end{enumerate}
\end{lemma}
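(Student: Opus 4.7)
Throughout the proof, we fix a neat embedding $\mathcal{E}$ of $H$ that extends $\mathcal{E}_\mu$ and apply \cref{le:characterization-triconnected-onesidefixed} to $\mathcal{E}$ to identify the caterpillar structure of $A(\mathcal{E})$ with its backbone $\mathcal{B}$; by \cref{le:structure-A-mu-RF}, $A(\mathcal{E}_\mu)$ is itself a single connected caterpillar with backbone $\mathcal{B}_\mu$. The starting observation, common to all three parts, is that since $\mu$ is of type \RF-RF and $v$ is red, the black-path endpoint $b_m$ is a non-pole vertex of $\pert{\mu}$; consequently $b_m$ lies in the interior of $\mathcal{E}_\mu$ and every face of $\mathcal{E}$ incident to $b_m$ corresponds to an internal face of $\mathcal{E}_\mu$.

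For Part 1, since $f^\circ$ is a red face of $\mathcal{E}$ incident to $b_m$, the above observation yields that $f^\circ$ corresponds to an internal face of $\mathcal{E}_\mu$ and hence is a vertex of $A(\mathcal{E}_\mu)$. It remains to show that $f^\circ$ is an endpoint of $\mathcal{B}_\mu$. To this end, we verify that the red vertices incident to $f^\circ$ are the same in $A(\mathcal{E})$ and in $A(\mathcal{E}_\mu)$: no red vertex of $\rest{\mu}$ is incident to $f^\circ$ (as $f^\circ$ is internal to $\mathcal{E}_\mu$), and no red vertex in an undecided $rb$-trivial component is incident to $f^\circ$ (as such components are attached to the black pole $u$ and their red endpoints lie on the outer face of $\mathcal{E}_\mu$). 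Being a backbone endpoint of $\mathcal{B}$, $f^\circ$ has at most one neighbor of degree $\geq 2$ in $A(\mathcal{E})$. We then check that each neighbor of $f^\circ$ that is a leaf of $A(\mathcal{E})$ remains a leaf of $A(\mathcal{E}_\mu)$, using the fact that the degree of a red vertex changes from $A(\mathcal{E})$ to $A(\mathcal{E}_\mu)$ only when the transition removes or adds outer-face vertices on its side; a careful case analysis shows that at most one non-leaf neighbor of $f^\circ$ persists into $A(\mathcal{E}_\mu)$, so $f^\circ$ is a backbone endpoint of $\mathcal{B}_\mu$, i.e., $f^\circ \in \{f^*, f^\diamond\}$.

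For Parts 2 and 3, we examine the other backbone endpoint. Since $A(\mathcal{E}_\mu)$ is a bipartite caterpillar with at least one face-vertex (as $\mathcal{E}_\mu$ has an internal red face), its backbone endpoints correspond to faces. In Part 2 (type \RFIz-RFI0), neither outer face of $\mathcal{E}_\mu$ belongs to $A(\mathcal{E}_\mu)$, so $f^\diamond$ must correspond to an internal red face. We then show $v$ is a leaf of $A(\mathcal{E}_\mu)$: if $v$ had two incident internal red faces in $\mathcal{E}_\mu$, each would be red in $\mathcal{E}$, and $v$'s degree being at most $2$ in $A(\mathcal{E})$ would force both $g_\ell$ and $g_r$ (the faces of $\mathcal{E}$ corresponding to $\ell(\mathcal{E}_\mu)$ and $r(\mathcal{E}_\mu)$) to be non-red in $\mathcal{E}$; but then the red vertices in $\rest{\mu}$, which must exist since \cref{le:rooting-at-b1b2} implies $u \neq b_1$ and thus forces interior black vertices in $\rest{\mu}$, each needing a red neighbor by Assumption \blue{A\ref{A1}}, would be disconnected from $v$ in $A(\mathcal{E})$, contradicting its connectivity. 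Consequently $v$ has a unique incident internal red face $f_v$; the pendant $v$ together with the caterpillar structure of $\mathcal{B}_\mu$ then forces $f_v = f^\diamond$, yielding Part 2. In Part 3 (types \RFIi-RFI1 and \RFIii-RFI2), the argument mirrors Part 2 but in reverse: at least one outer face, say $\ell(\mathcal{E}_\mu)$, is a vertex of $A(\mathcal{E}_\mu)$ and is incident to $v$, and by \cref{le:structure-A-mu-RF}, when both outer faces belong to $A(\mathcal{E}_\mu)$ their unique connecting path is $(\ell(\mathcal{E}_\mu), v, r(\mathcal{E}_\mu))$. We then argue that the outer-face vertex on the opposite side of the backbone from $f^\circ$ is $f^\diamond$, by showing that its non-$v$ red-vertex neighbors in $\pert{\mu}$ are all leaves of $A(\mathcal{E}_\mu)$, making this outer-face vertex a backbone endpoint.

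The main obstacle is the delicate comparison between the structure of $A(\mathcal{E})$ and $A(\mathcal{E}_\mu)$ near the boundary of $\pert{\mu}$ (namely near $v$, $\ell(\mathcal{E}_\mu)$, and $r(\mathcal{E}_\mu)$): the degree of a red vertex of $\pert{\mu}$ incident to an outer face of $\mathcal{E}_\mu$ may gain neighbors in $A(\mathcal{E})$ (if $g_\ell$ or $g_r$ become red due to red vertices in $\rest{\mu}$) but lose them in $A(\mathcal{E}_\mu)$ (if the outer face is not included there), and vice versa. Carefully bookkeeping these changes, and exploiting the fact that both $A(\mathcal{E})$ and $A(\mathcal{E}_\mu)$ are caterpillars in which red vertices have degree at most $2$, is the crux of the argument.
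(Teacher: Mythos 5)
There is a genuine gap, and it starts with your ``common starting observation.'' The fact that $b_m$ is a non-pole vertex of $\pert{\mu}$ does \emph{not} imply that $b_m$ is an internal vertex of $\mathcal E_\mu$: a non-pole vertex of the pertinent graph can perfectly well lie on the outer boundary, incident to $\ell(\mathcal E_\mu)$ or $r(\mathcal E_\mu)$ (the paper's own subtypes, e.g.\ \RFNza-RFN0A versus \RFNzb-RFN0B, are distinguished precisely by whether $b_m$ is internal or not). Moreover, even granting that, your Part~1 needs ``$f^{\circ}$ is a face incident to $b_m$,'' which is not what the characterization gives you: the leaf $r''$ shares \emph{some} face of $\mathcal E$ with $b_m$, and that shared face need not be red, let alone equal to $f^{\circ}$. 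So the one-line deduction that $f^{\circ}$ corresponds to an internal face of $\mathcal E_\mu$ collapses. The paper instead has to work for this: it first shows (via \cref{le:structure-A-mu-RF} and a cycle argument) that at least one of $f^*,f^\diamond$ is an internal red face, then that some end-vertex of $\mathcal B$ is an internal face all of whose adjacent leaves are internal vertices of $\mathcal E_\mu$, and only then identifies it with $f^{\circ}$ using $b_1\notin\pert{\mu}$ (\cref{le:rooting-at-b1b2}) together with \cref{condition:endvertices} of \cref{le:characterization-triconnected-onesidefixed}.

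That last mechanism is exactly what is missing from your Parts~2 and~3 as well. The claims you need there --- that the unique red face incident to the pendant $v$ is the backbone end $f^\diamond$ (Part~2), and that all non-$v$ red vertices incident to the red outer face are leaves of $A(\mathcal E_\mu)$ so that the outer face is a backbone end (Part~3) --- are not consequences of the caterpillar structure of $A(\mathcal E_\mu)$ alone; for an arbitrary (non-extensible) embedding they can fail, since a red vertex on the outer boundary may also bound an internal red face, pushing the backbone end into the interior. They hold only because an end-vertex of $\mathcal B$ whose adjacent leaves are all internal to $\mathcal E_\mu$ cannot supply the leaf $r'$ sharing a face with $b_1$ (again since $b_1\notin\pert{\mu}$), while the $b_m$-end is already taken by $f^{\circ}=f^*$; this contradiction with \cref{condition:endvertices} is never invoked in your write-up. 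Finally, your connectivity argument in Part~2 presupposes that $\rest{\mu}$ contains red vertices other than $v$; this need not be true (every black vertex of $\rest{\mu}$ may have the pole $v$ as its only red neighbor, which is compatible with Assumption~A1), so even that branch of your case analysis does not close without the $b_1$-based argument.
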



\begin{fullproof}
Assume that $\mathcal E_{\mu}$ contains an internal red face $f$, as otherwise there is nothing to prove. The proof distinguishes the case in which $f^*=f^\diamond$ from the one in which $f^*\neq f^\diamond$.

We first discuss the case in which $f^*=f^\diamond$. In this case $\mathcal E_{\mu}$ contains a single red face, which is adjacent to all the red vertices of $H^-_{\mu}$, including $v$. Since $f$ is an internal red face, we have that $f^*=f^\diamond=f$ and hence each of the outer faces of $\mathcal E_{\mu}$ is not red (that is, $\mathcal E_\mu$ is of type \RFIz-RFI0). It follows that the only red vertex that is incident to the outer face of $\mathcal E^-_{\mu}$ is $v$, which is a leaf given that $\mathcal E_{\mu}$ contains a single red face. This proves Item 2. We now prove that the choice $f^\circ=f^*=f^\diamond$ satisfies the requirements of Item 1. First, $v$ is the only neighbor of $f^\circ$ in $A(\mathcal E_\mu)$ which might be adjacent to a vertex different from $f^\circ$ in $A(\mathcal E)$, as all the other red vertices of $H^-_{\mu}$ are internal to $\mathcal E_{\mu}$. Hence, $f^\circ$ is also an end-vertex of $\mathcal B$. By \cref{le:characterization-triconnected-onesidefixed}, we have that $f^\circ$ is adjacent to a leaf that shares a face with either $b_1$ or $b_m$. If $f^\circ$ is adjacent to a leaf $r''$ that shares a face with $b_m$, then we are done, so assume that $f^\circ$ is adjacent to a leaf $r'$ that shares a face with $b_1$. By \cref{le:rooting-at-b1b2}, we have that $b_1$ is not a vertex of $\pert{\mu}$, hence $r'$ cannot be an internal vertex of $\mathcal E_{\mu}$, that is, we have $r'=v$. This implies that $\mathcal B$ consists of the vertex $f^\circ$ only, hence $f^\circ$ is also adjacent to a leaf $r''$ that shares a face with $b_m$, and the proof of Item 1 is completed. The statement of Item 3 is vacuous, given that $\mathcal E_\mu$ is of type \RFIz-RFI0.

In the remainder of the proof we assume that $f^*\neq f^\diamond$.

We first prove that (at least) one of $f^*$ and $f^\diamond$ corresponds to an internal red face of $\mathcal E_{\mu}$. Suppose, for a contradiction, that neither of $f^*$ and $f^\diamond$ corresponds to an internal red face of $\mathcal E_{\mu}$. Hence, ${\mathcal B}_{\mu}=(f^*,\dots,f,\dots,f^\diamond)$, where $f^*,f^\diamond\in \{\ell(\mathcal E_{\mu}),r(\mathcal E_{\mu})\}$. However, ${\mathcal B}_{\mu}$, together with the path $(\ell(\mathcal E_{\mu}),v,r(\mathcal E_{\mu}))$ which belongs to $A(\mathcal E_{\mu})$ by \cref{le:structure-A-mu-RF}, forms a cycle, contradicting the fact that $A(\mathcal E_{\mu})$ is a caterpillar. It follows that (at least) one of $f^*$ and $f^\diamond$, say $f^*$, corresponds to an internal red face of $\mathcal E_{\mu}$. 

We now argue that $\mathcal B$ has an end-vertex $f^\circ$ that corresponds to an internal red face of $\mathcal E_\mu$ and whose every adjacent leaf in $A(\mathcal E)$ is an internal vertex of $\mathcal E_\mu$. If $f^*$ is an end-vertex of $\mathcal B$ and every leaf adjacent to $f^*$ in $A(\mathcal E)$ is an internal vertex of $\mathcal E_\mu$, then we are done with $f^\circ=f^*$. Assume to the contrary that there exists a leaf $w$ of $A(\mathcal E)$ that is adjacent to $f^*$ in $A(\mathcal E)$ and that is incident to the outer face of $\mathcal E^-_\mu$. Since $f^*$ is an internal face of $\mathcal E_\mu$, it follows that $w$ is also a leaf of $A(\mathcal E_\mu)$ that is adjacent to $f^*$ in $A(\mathcal E_\mu)$. If $w\neq v$ and $w$ is incident to $\ell(\mathcal E_\mu)$ (to $r(\mathcal E_\mu)$), then $\ell(\mathcal E_\mu)$ (resp.\ $r(\mathcal E_\mu)$) belongs to $A(\mathcal E_\mu)$, and hence $w$ is adjacent to both $f^*$ and $\ell(\mathcal E_\mu)$ (resp.\ $r(\mathcal E_\mu)$) in $A(\mathcal E_\mu)$, which is not possible since $w$ has degree $1$ in $A(\mathcal E_\mu)$. If $w=v$, then, since $v$ is adjacent to $f^*$ and has degree $1$ in $A(\mathcal E_\mu)$, it follows that each of $\ell(\mathcal E_\mu)$ and $r(\mathcal E_\mu)$ is not red, and thus no vertex different from $v$ and incident to the outer face of $\mathcal E^-_\mu$ is red. Hence, $f^\diamond\neq f^*$ is an internal red face of $\mathcal E_\mu$; further, $f^\diamond$ has no adjacent leaf in $A(\mathcal E_\mu)$ that is incident to the outer face of $\mathcal E^-_\mu$, hence $f^\circ=f^\diamond$ is an end-vertex of $\mathcal B$ that corresponds to an internal red face of $\mathcal E_\mu$ and whose every adjacent leaf in $A(\mathcal E)$ is an internal vertex of $\mathcal E_\mu$, as requested.

We are now ready to prove Item 1. As proved above, $f^{\circ}\in \{f^*,f^\diamond\}$ is an end-vertex of $\mathcal B$ that corresponds to an internal red face of $\mathcal E_\mu$ and whose every adjacent leaf in $A(\mathcal E)$ is an internal vertex of $\mathcal E_\mu$. By \cref{le:rooting-at-b1b2}, we have that $b_1$ is not a vertex of $\pert{\mu}$, hence it cannot share a face of $\mathcal E$ with any leaf $r'$ of $A(\mathcal E)$ that is adjacent to $f^\circ$ in $A(\mathcal E)$. By \cref{le:characterization-triconnected-onesidefixed}, it follows that $f^{\circ}$ is adjacent in $A(\mathcal E)$ to a leaf $r''$ of $A(\mathcal E)$ that shares a face with $b_m$.

%

In the following assume, w.l.o.g., that $f^{\circ} = f^*$.

In order to prove Item~2, assume that $\mathcal E_{\mu}$ is of type \RFIz-RFI0. Suppose, for a contradiction, that $f^\diamond$ does not correspond to an internal face of $\mathcal E_\mu$ incident to the red pole $v$. By the definition of type  \RFIz-RFI0, the outer faces of $\mathcal E_{\mu}$ are not red. It follows that $f^\diamond$ corresponds to an internal face of $\mathcal E_\mu$; if $f^\diamond$ is not incident to $v$, then every red vertex incident to $f^\diamond$ is an internal vertex of $\mathcal E_{\mu}$. Thus, $f^\diamond$ is also an end-vertex of the backbone of $A(\mathcal E)$. However, by \cref{le:rooting-at-b1b2}, we have that $b_1$ is not a vertex of $\pert{\mu}$, hence it cannot share a face of $\mathcal E$ with any leaf $r'$ of $A(\mathcal E)$ that is adjacent to $f^\diamond$ in $A(\mathcal E)$, a contradiction to \cref{le:characterization-triconnected-onesidefixed}. This proves that $f^\diamond$ corresponds to an internal face of $\mathcal E_\mu$ that is incident to $v$. If $v$ is not a leaf of $A(\mathcal E_{\mu})$, we again have that $f^\diamond$ is an end-vertex of the backbone of $A(\mathcal E)$ and that any leaf $r'$ of $A(\mathcal E)$ that is adjacent to $f^\diamond$ in $A(\mathcal E)$ is an internal vertex of $\mathcal E_{\mu}$, hence it cannot share a face of $\mathcal E$ with $b_1$, a contradiction to \cref{le:characterization-triconnected-onesidefixed}. This proves that $v$ is a leaf of $A(\mathcal E_{\mu})$ and hence concludes the proof of Item 2.

In order to prove Item 3, assume that $\mathcal E_{\mu}$ is of type \RFIi-RFI1. Assume, w.l.o.g.\ up to symmetry, that the red outer face of $\mathcal E_{\mu}$ is $r(\mathcal E_{\mu})$. Suppose, for a contradiction, that $f^\diamond$ corresponds to an internal face of $\mathcal E_\mu$. Then the backbone of $A(\mathcal E_{\mu})$ is a path ${\mathcal B}_{\mu}=(f^*,\dots,r(\mathcal E_{\mu}),\dots,f^\diamond)$. Every red vertex that is incident to the outer face of $\mathcal E^-_{\mu}$ is a neighbor of $r(\mathcal E_{\mu})$ in $A(\mathcal E_{\mu})$, and thus it is not a leaf of $A(\mathcal E_{\mu})$ adjacent to $f^\diamond$. It follows that $f^\diamond$ is also an end-vertex of the backbone of $A(\mathcal E)$. Thus, any leaf $r'$ of $A(\mathcal E)$ that is adjacent to $f^\diamond$ is an internal vertex of $\mathcal E_{\mu}$, hence by \cref{le:rooting-at-b1b2} it cannot share a face of $\mathcal E$ with $b_1$, a contradiction to \cref{le:characterization-triconnected-onesidefixed}. Similarly, assume that $\mathcal E_{\mu}$ is of type \RFIii-RFI2 and consider any good embedding $\mathcal E$ of $H$ extending $\mathcal E_{\mu}$. If $f^\diamond$ corresponds to an internal face of $\mathcal E_\mu$, then the backbone of $A(\mathcal E_{\mu})$ is a path ${\mathcal B}_{\mu}$ that has $\ell(\mathcal E_{\mu})$ and $r(\mathcal E_{\mu})$ as internal vertices. Every red vertex that is incident to the outer face of $\mathcal E^-_{\mu}$ is a neighbor of $\ell(\mathcal E_{\mu})$ or $r(\mathcal E_{\mu})$ in $A(\mathcal E_{\mu})$, and thus it is not a leaf of $A(\mathcal E_{\mu})$ adjacent to $f^\diamond$. It follows that $f^\diamond$ is also an end-vertex of the backbone of $A(\mathcal E)$. Thus, any leaf $r'$ of $A(\mathcal E)$ that is adjacent to $f^\diamond$ is an internal vertex of $\mathcal E_{\mu}$, hence by \cref{le:rooting-at-b1b2} it cannot share a face of $\mathcal E$ with $b_1$, a~contradiction~to~\cref{le:characterization-triconnected-onesidefixed}.\end{fullproof}


If $\mu$ is of type \RFIi-RFI1, then consider the internal red face $f^*$ of $\mathcal E_\mu$ that corresponds to an end-vertex of the backbone of $A(\mathcal E_\mu)$; this exists by \cref{le:rf-endvertex-backbone}. By \cref{le:structure-A-mu-RF}, we have that $A({\cal E}_\mu)$ is a caterpillar; then there exists exactly one path in $A(\mathcal E_\mu)$ connecting $f^*$ and $v$. Two cases are possible: Either this path contains the red outer face of ${\cal E}_\mu$ (type \hbox{\RFIia-{\bf RFI1A}}) or not (type \hbox{\RFIib-{\bf RFI1B}}). Note that, if $\mu$ is of type \RFIz-RFI0 or \RFIii-RFI2, then this further distinction is not meaningful. In fact, if $\mu$ is of type \RFIz-RFI0, then the outer faces are not red; while if $\mu$ is of type \RFIii-RFI2, then both the outer faces are red, and thus the path connecting $f^*$ and $v$ must contain one of such faces, as otherwise $v$ would be incident to three red faces, contradicting the fact that $A(\mathcal E_\mu)$ is a caterpillar.
The above discussion leads to the following.

\begin{lemma} \label{le:relevant-in-extensible-RF}
	If $\mu$ is of type \RF-RF, any extensible embedding of $\pert{\mu}$ is of one of the types \RFNza-RFN0A,~\RFNzb-RFN0B,~\RFNia-RFN1A,~\RFNib-RFN1B,~\RFNic-RFN1C,~\RFNid-RFN1D, \RFNii-RFN2, \RFIz-RFI0, \RFIia-RFI1A, \RFIib-RFI1B, and \RFIii-RFI2.
\end{lemma}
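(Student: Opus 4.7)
The plan is to verify that the classification introduced in this subsection is exhaustive, by running a simple case analysis on the features (F1)--(F3) and then checking that the refinements handle every remaining subcase.

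First, I would fix any extensible embedding $\mathcal E_\mu$ of $\pert{\mu}$ and apply \cref{le:structure-A-mu-RF} to conclude that $A(\mathcal E_\mu)$ is a (single) caterpillar, so Feature (F1) is determined. Feature (F2) has three possible values (zero, one, or two red outer faces) and Feature (F3) has two (presence or absence of an internal red face), giving the six base types \RFNz-RFN0, \RFNi-RFN1, \RFNii-RFN2, \RFIz-RFI0, \RFIi-RFI1, and \RFIii-RFI2. It then remains to check, in each base type, that the refinements described in the text cover every possibility.

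For types without an internal red face, \cref{le:internal-red-vertex} implies that every red vertex of $\pert{\mu}$ is incident to the outer face of $\mathcal E_\mu$, and I would exploit this to locate $b_m$ with respect to $v$ and the other red vertices. If $\mathcal E_\mu$ is of type \RFNz-RFN0, then $v$ is the only red vertex of $\pert{\mu}^-$, and the only choice left is whether $b_m$ is an internal vertex of $\mathcal E_\mu$ or not, which matches the partition \RFNza-RFN0A / \RFNzb-RFN0B. If $\mathcal E_\mu$ is of type \RFNi-RFN1, then exactly one outer face is red and I would split on two independent binary criteria—whether $b_m$ is incident to the non-red outer face $\ell(\mathcal E_\mu)$ and whether $b_m$ shares a face with a red vertex different from $v$—obtaining the four subtypes \RFNia-RFN1A through \RFNid-RFN1D. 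For \RFNii-RFN2 both outer faces are red and no further refinement is meaningful, so the unique type covers all embeddings.

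For types containing an internal red face, I would invoke \cref{le:rf-endvertex-backbone}. This lemma provides an end-vertex $f^*$ of the backbone of $A(\mathcal E_\mu)$ that is an internal red face and, together with \cref{le:structure-A-mu-RF}, fixes the unique path between $f^*$ and the red pole $v$. For \RFIz-RFI0 no refinement is needed (both outer faces are not red). For \RFIi-RFI1 this path either contains the unique red outer face or does not, yielding \RFIia-RFI1A and \RFIib-RFI1B. For \RFIii-RFI2 both outer faces are red: since $A(\mathcal E_\mu)$ is a caterpillar, $v$ cannot be simultaneously adjacent to $\ell(\mathcal E_\mu)$, $r(\mathcal E_\mu)$, and an internal red face together with being off the backbone, so the $f^*$-to-$v$ path in $A(\mathcal E_\mu)$ is forced to contain one of the two red outer faces and no further distinction is needed.

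The main obstacle I expect is purely bookkeeping: verifying that in each subtype the refining criterion is well-defined and that the listed subtypes are pairwise disjoint. In particular, for \RFNi-RFN1 and the RFI-types I would need to confirm using \cref{le:rf-endvertex-backbone-no-internal} and \cref{le:rf-endvertex-backbone} that the backbone end-vertex $f^\circ$ adjacent to the leaf $r''$ sharing a face with $b_m$ is uniquely located (on an outer face in the RFN-cases and on an internal red face $f^*$ in the RFI-cases), which is exactly what guarantees that the refinements are canonical. Once these are verified, the union of the eleven listed types covers every extensible embedding of $\pert{\mu}$, proving the lemma.
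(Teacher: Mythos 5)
Your proposal is correct and follows essentially the same route as the paper: the paper's "proof" is precisely the preceding classification discussion, which determines Feature (F1) via \cref{le:structure-A-mu-RF}, enumerates the six base types from Features (F2)--(F3), and then refines the RFN0/RFN1 cases using \cref{le:internal-red-vertex} and \cref{le:rf-endvertex-backbone-no-internal} and the RFI cases using \cref{le:rf-endvertex-backbone}, exactly as you outline (including the caterpillar argument showing RFI0 and RFI2 need no further split).
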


\subsubsection{Type \protect\BE-BE Nodes}

First, we prove that the type~\BE-BE nodes have a simple structure.

\begin{lemma}\label{le:structure-node-BE}
	Suppose that $\mu$ is of type~\BE-BE. Then the graph $\pert{\mu}^-$ consists of a set of length-$2$ paths between the poles of $\mu$. The middle vertices of these paths are red. Further, the graph $\pert{\mu}$ is the same as $\pert{\mu}^-$ plus, for each pole of $\mu$, at most one $rb$-trivial component incident to it.
\end{lemma}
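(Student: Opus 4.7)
The plan is to proceed in two stages: first show the structure of $\pert{\mu}^-$, and then add back the $rb$-trivial components to obtain $\pert{\mu}$.

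For $\pert{\mu}^-$, I would begin by exploiting the definition of a type~\BE-BE node. Namely, both subpaths $P_{uv}$ and $P_v$ of $P$ lie in $\rest{\mu}$, and $P_u$ lies in $\rest{\mu}$ by the choice of the root of $\mathcal T$. Thus $\pert{\mu}$ contains no edge of the black path $P$, and since $(u,v)$ is the only potential edge of $H$ between $u$ and $v$ (as $G$ is bipartite and $P$ is the only source of black-black edges), we conclude in particular that no non-pole vertex of $\pert{\mu}^-$ is black, and that $(u,v) \notin E(\pert{\mu}^-)$. Hence every non-pole vertex of $\pert{\mu}^-$ is red.

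Next, I would analyze the neighborhood of an arbitrary non-pole red vertex $r$ of $\pert{\mu}^-$. Since $H$ is bipartite, every neighbor of $r$ is black; since the only black vertices of $\pert{\mu}^-$ are the poles $u$ and $v$, the neighbors of $r$ in $\pert{\mu}^-$ are contained in $\{u,v\}$. The key step now is to argue that $r$ is in fact adjacent to both $u$ and $v$. On the one hand, $\pert{\mu}^- \cup \{(u,v)\}$ is biconnected (by the definition of the SPQR-tree), so $r$ has degree at least $2$ in $\pert{\mu}^- \cup \{(u,v)\}$, and since $(u,v)$ is not incident to $r$, also in $\pert{\mu}^-$. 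On the other hand, since $H$ is a simple graph, $r$ has at most one edge to $u$ and at most one edge to $v$. Combining, $r$ is adjacent to exactly $u$ and $v$ and forms, together with these edges, a length-$2$ path between the poles. This yields the first and second parts of the statement, since $\pert{\mu}^-$ is then the union of such length-$2$ paths (each passing through a distinct red middle vertex, again because $H$ is simple).

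Finally, to handle $\pert{\mu}$, I would recall that $\pert{\mu}$ is obtained from $\pert{\mu}^-$ by adding every degree-$1$ vertex of $H$ that is adjacent to a vertex of $\pert{\mu}^-$. Such degree-$1$ vertices are red (by \cref{obs:cutvertex-h-black}, every cut-vertex of $H$ is black, and an $rb$-trivial component contributes a degree-$1$ red leaf) and each of them is adjacent to a black vertex of $\pert{\mu}^-$, hence to $u$ or $v$. By the simplification from \cref{le:one-trivial}, we may assume that every black vertex of $H$ has at most one incident $rb$-trivial component. Thus the degree-$1$ vertices added to $\pert{\mu}^-$ amount to at most one $rb$-trivial component incident to $u$ and at most one incident to $v$, which is precisely the claimed structure of $\pert{\mu}$.

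I do not expect any serious obstacle in this proof: the argument is essentially a direct unfolding of the definition of a type~\BE-BE node together with the biconnectivity property of pertinent graphs and the bipartiteness and simplicity of $H$. The only subtlety is being careful that degree arguments are applied to $\pert{\mu}^-$ (in which $(u,v)$ need not exist) rather than to $\pert{\mu}^- \cup \{(u,v)\}$.
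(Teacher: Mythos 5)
Your proposal is correct and takes essentially the same route as the paper's proof: by the definition of type~\BE-BE all non-pole vertices of $\pert{\mu}^-$ are red, the absence of red--red edges together with the $uv$-biconnectibility of the pertinent graph forces each of them to be adjacent to both poles, the edge $(u,v)$ would coincide with $P_{uv}$ and hence lies in $\rest{\mu}$, and the claim about $\pert{\mu}$ follows from its definition together with the simplification of \cref{le:one-trivial}. The only quibble is that the redness of the degree-$1$ vertices of $H$ is not a consequence of \cref{obs:cutvertex-h-black} but of Assumption \blue{A\ref{A1}} (every black vertex lies on $P$ and has a red neighbor, so it has degree at least $2$), a fact already recorded when $H^-$ is introduced; this does not affect the validity of your argument.
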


\begin{fullproof}
	By definition of type~\BE-BE, all the non-pole vertices of $\pert{\mu}$ and of $\pert{\mu}^-$ are red. Since there exists no edge between two red vertices, each non-pole vertex of $\pert{\mu}^-$ is connected to both~$u$ and~$v$. Further, the two poles $u$ and $v$ are not connected by an edge in $\pert{\mu}$. In fact, if edge $(u,v)$ exists in $H$, then it coincides with path $P_{u,v}$, and thus it belongs to $\rest{\mu}$. Hence, there exists at least one non-pole vertex, and the statement for $\pert{\mu}^-$ follows. The statement for $\pert{\mu}$ descends from its definition.
\end{fullproof}

If there exists only one length-$2$ path in $\pert{\mu}$ between the poles of $\mu$, then we say that $\mu$ is of \emph{type~\BE-BE slim}, while if there exists more than one of such paths, then $\mu$ is of \emph{type~\BE-BE fat}; refer to \cref{fig:BE-gadgets} for an example.

 \begin{figure}[tb]
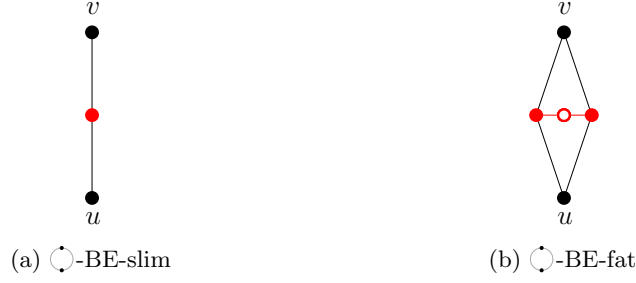
\tabcolsep=4pt
 	\centering
 	\subfloat[\BE-BE-slim]{
 	\includegraphics[height=.2\textwidth,page=7]{node-types-gadgets}
 }\hfil
 	\subfloat[\BE-BE-fat]{
	\includegraphics[height=.2\textwidth,page=8]{node-types-gadgets}
}
 	\caption{Embeddings of pertinent graphs of type \protect\BE-BE.}\label{fig:BE-gadgets}	
 \end{figure}

We now describe the structure of $A({\mathcal E}_\mu)$ for a node $\mu$ of type~\BE-BE.

\begin{lemma}\label{le:structure-A-mu-BE}
Suppose that $\mu$ is of type~\BE-BE. Let $u_{\ell}$ and $u_r$ be the red neighbors of $u$ in $\pert{\mu}^-$ that are incident to $\ell(\mathcal E_\mu)$ and $r(\mathcal E_\mu)$, respectively. Then $A({\mathcal E}_\mu)$ is a path between $u_\ell$ and $u_r$ that contains neither $\ell(\mathcal E_\mu)$ nor $r(\mathcal E_\mu)$. 
\end{lemma}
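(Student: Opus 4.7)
\smallskip
\noindent\textbf{Proof plan.} The plan is to derive the statement directly from the very rigid structure of $\pert{\mu}$ given by \cref{le:structure-node-BE}. First, I would fix the embedding $\mathcal E_\mu$ and enumerate its faces. By \cref{le:structure-node-BE}, the graph $\pert{\mu}^-$ is the union of $k\geq 1$ internally disjoint length-$2$ paths $P_1,\ldots,P_k$ between the black poles $u$ and $v$, whose middle vertices $r_1,\ldots,r_k$ are red; up to relabeling, I can assume that these paths appear in this order around $u$ in $\mathcal E_\mu$, so that $r_1=u_\ell$ and $r_k=u_r$. Consequently, for $i=1,\ldots,k-1$, the two consecutive paths $P_i$ and $P_{i+1}$ bound a $4$-face $f_i$ with boundary cycle $(u,r_i,v,r_{i+1})$, and these $f_1,\ldots,f_{k-1}$ are precisely the internal faces of $\mathcal E^-_\mu$. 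The possible $rb$-trivial components incident to $u$ and $v$ live in some of these faces (together with the outer faces) but, being undecided, do not contribute red vertices to $A(\mathcal E_\mu)$ by \cref{def:auxiliary-graph-E-mu}.

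Next I would identify which vertices belong to $A(\mathcal E_\mu)$. The red vertices of $\pert{\mu}$ that are not in undecided $rb$-trivial components are exactly $r_1,\ldots,r_k$, so each of them is a vertex of $A(\mathcal E_\mu)$. Each internal face $f_i$ is incident to the two red vertices $r_i$ and $r_{i+1}$, hence it is red and is therefore a vertex of $A(\mathcal E_\mu)$. For the outer faces, observe that the only red vertex of $\pert{\mu}$ incident to $\ell(\mathcal E_\mu)$ (resp.\ $r(\mathcal E_\mu)$) is $u_\ell=r_1$ (resp.\ $u_r=r_k$), since the only other candidates would be middle vertices of length-$2$ paths or vertices of undecided $rb$-trivial components, the former being internal and the latter being excluded by \cref{def:auxiliary-graph-E-mu}. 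Because $\mu$ is of type \BE-BE (neither \BP-BP nor \BB-BB), the condition in \cref{def:auxiliary-graph-E-mu} for an outer face to belong to $A(\mathcal E_\mu)$ requires at least two incident red vertices of $\pert{\mu}$, which fails here; thus neither $\ell(\mathcal E_\mu)$ nor $r(\mathcal E_\mu)$ is a vertex of $A(\mathcal E_\mu)$.

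Finally I would determine the edges of $A(\mathcal E_\mu)$. By \cref{def:auxiliary-graph-E-mu}, the edges are exactly the pairs (red vertex, red face) with the two being incident in $\mathcal E_\mu$. Since $r_i$ is incident only to the internal faces $f_{i-1}$ and $f_i$ (with the conventions $f_0=\ell(\mathcal E_\mu)$ and $f_k=r(\mathcal E_\mu)$ which are both absent from $A(\mathcal E_\mu)$), the edges of $A(\mathcal E_\mu)$ are precisely $(r_i,f_i)$ and $(f_i,r_{i+1})$ for $i=1,\ldots,k-1$. These edges form the path $(u_\ell=r_1,f_1,r_2,f_2,\ldots,f_{k-1},r_k=u_r)$, which contains neither $\ell(\mathcal E_\mu)$ nor $r(\mathcal E_\mu)$, as required. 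The degenerate slim case $k=1$ collapses this path to the single vertex $u_\ell=u_r$, consistent with the statement.

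The only subtle point, and hence the main thing to be careful about, is the handling of the undecided $rb$-trivial components at $u$ and $v$: they can sit inside the outer faces and one might worry that this affects whether those faces should appear in $A(\mathcal E_\mu)$; the argument crucially uses the fact that \cref{def:auxiliary-graph-E-mu} explicitly discounts such red vertices and tightens the threshold to two incident red vertices whenever $\mu$ is not of type \BP-BP or \BB-BB.
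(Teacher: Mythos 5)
Your proof is correct and follows essentially the same route as the paper's: it invokes \cref{le:structure-node-BE} to pin down the structure of $\pert{\mu}$, excludes the outer faces from $A(\mathcal E_\mu)$ because each has only one incident red vertex (undecided $rb$-trivial components being discounted per \cref{def:auxiliary-graph-E-mu} and $\mu$ not being of type \BP-BP or \BB-BB), and then reads off the alternating red-vertex/internal-face path, with the slim case degenerating to a single vertex. Your version merely makes explicit the cyclic ordering of the length-$2$ paths around $u$ and the face enumeration that the paper leaves implicit.
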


\begin{fullproof}
By \cref{le:structure-node-BE}, we have that $u_{\ell}$ and $u_r$ are the only red vertices of $\pert{\mu}^-$ that are incident to $\ell({\mathcal E}_\mu)$ and $r({\mathcal E}_\mu)$, respectively. Since $\ell({\mathcal E}_\mu)$ and $r({\mathcal E}_\mu)$ are not incident to any other red vertex of $\pert{\mu}$ (recall that the red vertices belonging to undecided $rb$-trivial components are incident neither to $\ell({\mathcal E}_\mu)$ nor to $r({\mathcal E}_\mu)$), we have that neither $\ell({\mathcal E}_\mu)$ nor $r({\mathcal E}_\mu)$ belongs to $A({\mathcal E}_\mu)$.

Thus, if $\mu$ is of type~\BE-BE~slim, then $A({\mathcal E}_\mu)$ only contains the middle red vertex $u_{\ell}=u_r$ of the unique path between the poles. Further, if $\mu$ is of type~\BE-BE~fat, then every internal face of ${\mathcal E}_\mu$ is incident to exactly two red vertices of $\pert{\mu}$, while each of these vertices is incident to exactly two internal faces of ${\mathcal E}_\mu$, except for $u_\ell$ and $u_r$, which are incident only to one internal face; this constitutes the path $A({\mathcal E}_\mu)$. The statement follows.
\end{fullproof}

\cref{le:structure-A-mu-BE} shows that the structure of $A(\mathcal E_\mu)$ is the same in any planar embedding $\mathcal E_\mu$ of $\pert{\mu}$. In fact, concerning Feature (F1), $A({\mathcal E}_\mu)$ consists of only one caterpillar; concerning Feature (F2), none of the outer faces is red; concerning Feature (F3), there exists an internal red face of  ${\cal E}_\mu$ if and only if $\mu$ is of type \BE-BE fat. 
We state the following.

\begin{lemma} \label{le:relevant-in-extensible-BE}
 	If $\mu$ is of type \BE-BE, then $\pert{\mu}$ has a unique extensible embedding (up to a relabeling of the neighbors of $u$ and of the neighbors of $v$).
 \end{lemma}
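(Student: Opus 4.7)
By \cref{le:structure-node-BE}, the graph $\pert{\mu}$ consists of $k\ge 1$ length-$2$ paths $u\mbox{-}r_i\mbox{-}v$ through red middle vertices $r_1,\dots,r_k$, together with at most one $rb$-trivial component $(u,r_u)$ incident to $u$ and at most one $rb$-trivial component $(v,r_v)$ incident to $v$. The plan is to show that all extensible embeddings of $\pert{\mu}$ lie in a single orbit under the group $G$ of graph automorphisms of $\pert{\mu}$ that fix the poles pointwise and hence act as relabelings of the neighbors of $u$ and of the neighbors of $v$.

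First, since each middle vertex is adjacent to both poles, planarity forces the cyclic order of the middle vertices around $v$ to be the reverse of their cyclic order around $u$, so an embedding $\mathcal E_\mu$ is fully determined by the cyclic order of neighbors at $u$ together with (if present) the insertion slot of $r_v$ in the rotation at $v$. The structural interchangeability of $r_1,\dots,r_k$ means that the symmetric group $S_k$ acts on $\pert{\mu}$ by automorphisms fixing $u$ and $v$; using this action, I would normalize the cyclic order at $u$ on the middle vertices to $(r_1,\dots,r_k)$, which in turn forces the cyclic order at $v$ on the middle vertices to $(r_k,\dots,r_1)$.

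Next, extensibility pins down the positions of the trivial components. By \cref{remark:pan}, since $u$ is a pole of $\mu$, the proper allocation node $\mu'$ of $u$ is a strict ancestor of $\mu$ in $\mathcal T$, and so by the neatness requirement of \cref{def:neat} the face of the extending neat embedding incident to $r_u$ must correspond to a face of $\skel(\mu')$; this rules out placing $r_u$ in an internal face of $\mathcal E_\mu$ (such a face would correspond to a face of the skeleton of $\mu$ itself or of a descendant) and forces $r_u$ to appear in the rotation at $u$ either immediately before $r_1$ or immediately after $r_k$ in the canonical order. The same reasoning restricts $r_v$ to the two analogous outer slots at $v$. The main obstacle is to verify that the residual action of $G$ identifies all the remaining feasible configurations of $(r_u,r_v)$ into a single orbit: the reflection $r_i\mapsto r_{k+1-i}$ is an element of $G$ that simultaneously interchanges the two outer slots at both $u$ and $v$, and, combined with appropriate cyclic shifts of the middle-vertex indices, it identifies any two feasible choices. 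The argument reduces to a short case analysis on which of $r_u$ and $r_v$ exist and on the slim/fat dichotomy ($k=1$ or $k\ge 2$), but no new ideas beyond planarity, interchangeability of middle vertices, and extensibility are needed.
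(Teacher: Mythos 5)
Your proof is correct, and it takes the route the paper leaves implicit: the lemma is stated there without an explicit proof, as an immediate consequence of \cref{le:structure-node-BE} together with the fact (built into the definitions of neat embeddings and undecided components) that in a neat extension the $rb$-trivial components attached to the poles cannot lie in internal faces of $\mathcal{E}_\mu$, which is exactly the argument you spell out via \cref{remark:pan} and \cref{def:neat}. One small simplification: after normalizing the middle vertices by a relabeling, the ``two outer slots'' at $u$ (and likewise at $v$) are in fact a single cyclic gap between the two extreme middle edges, so the extensible rotation system is literally unique and your concluding reflection/case analysis is not needed.
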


By \cref{le:relevant-in-extensible-BE}, we say that the unique extensible embedding of $\pert{\mu}$ is of type \BE-BE slim (resp.\ fat) if $\mu$ is of type \BE-BE slim (resp.\ fat).

\subsubsection{Type \protect\BP-BP Nodes and type \protect\BB-BB Nodes}

We now discuss the case in which both $u$ and $v$ are black and the path $P_{uv}$ belongs to $\pert{\mu}$. This corresponds to the types~\BP-BP and \BB-BB. We start with the following lemma on the structure of $A({\mathcal E}_\mu)$.

\begin{lemma}\label{le:structure-A-mu-BP-BB}
	Suppose that $\mu$ is of type~\BP-BP or \BB-BB, then the following hold: 
	\begin{enumerate}
		\item If $\mu$ is of type~\BB-BB, then $A(\mathcal E_\mu)$ consists of either one or two caterpillars, while if $\mu$ is of type~\BP-BP, then $A(\mathcal E_\mu)$ consists of either zero, one, or two caterpillars.
		\item \label{le:structure-A-mu-BP-BB-item-no-traversing} If $A(\mathcal E_\mu)$ consists of one caterpillar, then its backbone starts at either $\ell({\mathcal E}_\mu)$ or $r({\mathcal E}_\mu)$; also, $A(\mathcal E_\mu)$ does not contain both $\ell({\mathcal E}_\mu)$ and $r({\mathcal E}_\mu)$.
		\item \label{le:structure-A-mu-BP-BB-outer-face} If $A(\mathcal E_\mu)$ consists of two caterpillars, then the backbone of one caterpillar starts at $\ell({\mathcal E}_\mu)$ and the backbone of the other one starts at $r({\mathcal E}_\mu)$; further, the backbone of one of the two caterpillars is a single vertex. 
		\item \label{le:structure-A-mu-BP-BB-item-bm-in-pertinent} If one of the caterpillars composing $A(\mathcal E_\mu)$, say $C$, contains a vertex corresponding to an internal face of $\cal E_\mu$, then $b_m$ belongs to $\pert{\mu}$ (in particular, $b_m=v$ if $\mu$ is of type~\BP-BP, while $b_m\notin\{u,v\}$ if $\mu$ is of type~\BB-BB); further, the end-vertex of the backbone of $C$ that corresponds to an internal face of $A(\mathcal E_\mu)$ is adjacent to a leaf $r''$ of $A(\mathcal E_\mu)$ that shares a face with $b_m$.
		\item If $\mu$ is of type~\BB-BB, then an end-vertex of the backbone of a caterpillar composing $A(\mathcal E_\mu)$ is adjacent to a leaf that shares a face with $b_m$.
	\end{enumerate}  
\end{lemma}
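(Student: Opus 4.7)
The proof rests on a single geometric observation, from which all five items will follow in cascade. Fix an extensible embedding $\mathcal{E}_\mu$ of $\pert{\mu}$ and a good embedding $\mathcal{E}$ of $H$ that extends it. For type BP, let $\Pi := P_{uv}$ and for type BB let $\Pi := P_{uv}\cup P_v$. Since $\mu$ is of type BP or BB, the only black vertices of $\pert{\mu}$ are those on $\Pi$ (together with the poles), so every red vertex of $\pert{\mu}$ has all its neighbors on $\Pi$. The initial subpath $P_{uv}$ of $\Pi$ is drawn in $\mathcal{E}_\mu$ as a simple curve between the two poles $u$ and $v$, which both lie on the outer face of $\pert{\mu}^-$; hence $P_{uv}$ cuts the closed disk delimited by the outer face of $\pert{\mu}^-$ into two closed sub-regions $D_L$ and $D_R$ (adjacent to $\ell(\mathcal{E}_\mu)$ and $r(\mathcal{E}_\mu)$, respectively). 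For type BB the remaining subpath $P_v$ from $v$ to $b_m$ is a pendant path lying entirely inside one of $D_L,D_R$.

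Since $H$ has no red-red edge, $\Pi$ contains no red vertex, so every red vertex of $\pert{\mu}$, every internal red face of $\mathcal{E}_\mu$, and $\ell(\mathcal{E}_\mu)$ (resp.\ $r(\mathcal{E}_\mu)$) lies in the interior of $D_L$ or $D_R$ accordingly. Therefore the vertex set of $A(\mathcal{E}_\mu)$ partitions into a \emph{left} part (inside $D_L$) and a \emph{right} part (inside $D_R$), and no edge of $A(\mathcal{E}_\mu)$ crosses this partition. Combining this observation with \cref{le:structure-A-mu} (which bounds the number of caterpillars by two) immediately yields Item~1, where the lower bound of one caterpillar for type BB follows from Assumption \blue{A\ref{A1}} applied to $b_m\in\pert{\mu}$, whereas for type BP it is possible that $\pert{\mu}$ has no red vertex at all (namely when $\pert{\mu}$ reduces to the path $\Pi$), giving zero caterpillars.

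For Items~2 and~3, I will exploit the identification of $A(\mathcal{E}_\mu)$ as a subgraph of $A(\mathcal{E})$ under the renaming of $\ell(\mathcal{E}_\mu)$ and $r(\mathcal{E}_\mu)$ to the faces $\hat{\ell},\hat{r}$ of $\mathcal{E}$ to which they correspond. The crucial fact is that an internal red face $f^*$ of $\mathcal{E}_\mu$ is also a red face of $\mathcal{E}$ whose boundary lies entirely in $\pert{\mu}$; consequently its set of red neighbors in $A(\mathcal{E})$ coincides with that in $A(\mathcal{E}_\mu)$. Assume a backbone of $A(\mathcal{E}_\mu)$ ends at such an $f^*$. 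By \cref{le:characterization-triconnected-onesidefixed}, the two endpoints of the backbone of $A(\mathcal{E})$ are faces $f_1$, $f_k$ adjacent to leaves sharing a face with $b_1$, $b_m$, respectively; by \cref{le:rooting-at-b1b2}, $b_1\notin\pert{\mu}$, so no leaf of $A(\mathcal{E}_\mu)$ whose only incident face lies in $\pert{\mu}$ can share a face with $b_1$. A careful case analysis on the position of $f^*$ in the backbone of $A(\mathcal{E})$ then forces $f^*=f_k$, and the leaf $r''$ of $A(\mathcal{E}_\mu)$ adjacent to $f^*$ to coincide with the leaf adjacent to $f_k$ in $A(\mathcal{E})$; this is exactly Item~4, and it implies $b_m\in\pert{\mu}$. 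The second claim of Item~2 ($A(\mathcal{E}_\mu)$ does not contain both outer faces when there is a single caterpillar) is immediate from the partition into $D_L$ and $D_R$, and the remaining ``starts at $\ell(\mathcal{E}_\mu)$ or $r(\mathcal{E}_\mu)$'' part of Item~2 follows because otherwise \emph{both} backbone endpoints would be internal red faces of $\mathcal{E}_\mu$, yet by the previous argument at most one of them can coincide with $f_k$. For Item~5 the argument is analogous and uses $b_m\in\pert{\mu}$ (which is guaranteed for type BB) directly.

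The main obstacle will be the last clause of Item~3, namely that \emph{one} of the two caterpillars has a backbone consisting of a single vertex. The plan is: each of the two caterpillars' backbones has two endpoints, each of which is either an outer face ($\ell(\mathcal{E}_\mu)$ or $r(\mathcal{E}_\mu)$) or an internal red face of $\mathcal{E}_\mu$; by the argument above for Item~4, any internal red face endpoint of either backbone must equal $f_k$, and since $A(\mathcal{E})$ has a single backbone with a single $f_k$, \emph{at most one} of the four endpoints (two per caterpillar) across the two caterpillars can be internal. Hence one caterpillar has both endpoints at outer faces, which (since there is exactly one outer face per side) forces its backbone to collapse to a single vertex. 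The delicate part is ruling out the configuration where the backbone of $A(\mathcal{E})$ ``threads through $\rest{\mu}$'' in a way that would allow two internal red faces of $\mathcal{E}_\mu$ (one per side) to both serve as endpoints: this requires using that the only red faces of $\mathcal{E}$ incident simultaneously to red vertices of $\pert{\mu}$ and of $\rest{\mu}$ are $\hat{\ell}$ and $\hat{r}$, which act as the sole ``connectors'' between the left and right caterpillars in $A(\mathcal{E})$.
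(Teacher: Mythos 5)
Your overall route is the paper's own: \cref{le:structure-A-mu} bounds the number of caterpillars, the black path $P_{uv}$ separates $\ell(\mathcal E_\mu)$ from $r(\mathcal E_\mu)$, and \cref{condition:endvertices} of \cref{le:characterization-triconnected-onesidefixed} together with $b_1\notin\pert{\mu}$ (\cref{le:rooting-at-b1b2}) pins any internal-red-face backbone endpoint of $A(\mathcal E_\mu)$ to the $b_m$-end of the backbone of $A(\mathcal E)$, which is how the paper also gets Items~4 and~5. The gap is in how you derive the first clauses of Items~2 and~3, i.e.\ that every caterpillar of $A(\mathcal E_\mu)$ actually contains (and hence its backbone starts at) an outer face. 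Your counting argument (``each internal-face endpoint must equal $f_k$, and there is only one $f_k$'') only excludes a caterpillar whose backbone has \emph{two distinct} internal-face endpoints; it does not exclude a caterpillar that contains no outer face at all and whose backbone is a \emph{single} internal red face with internal red-vertex leaves. In that configuration your plan yields no contradiction, yet the claim that one backbone starts at $\ell(\mathcal E_\mu)$ and the other at $r(\mathcal E_\mu)$ (and, in the one-caterpillar case, that the backbone starts at an outer face) would be exactly what fails.

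Closing it needs what the paper establishes before the endpoint analysis. In the one-caterpillar case: $\pert{\mu}$ has a red vertex, so by \cref{le:internal-red-vertex} it has one incident to the outer face, and for a node of type \BP-BP or \BB-BB that outer face then belongs to $A(\mathcal E_\mu)$ by \cref{def:auxiliary-graph-E-mu}, hence to the unique caterpillar. In the two-caterpillar case: a caterpillar containing neither $\ell(\mathcal E_\mu)$ nor $r(\mathcal E_\mu)$ consists solely of internal faces and internal red vertices of $\mathcal E_\mu$, whose neighborhoods in $A(\mathcal E)$ coincide with those in $A(\mathcal E_\mu)$, so it would be an entire connected component of $A(\mathcal E)$, contradicting that $A(\mathcal E)$ is a (connected) caterpillar. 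You do state the relevant fact --- the faces of $\mathcal E$ corresponding to $\ell(\mathcal E_\mu)$ and $r(\mathcal E_\mu)$ are the only possible connectors between $A(\mathcal E_\mu)$ and the rest of $A(\mathcal E)$ --- but you deploy it only against the ``backbone threading through $\rest{\mu}$'' scenario, not where it is actually load-bearing. A smaller point in the same vein: your step that an internal-face endpoint of a backbone of $A(\mathcal E_\mu)$ is an endpoint of the backbone of $A(\mathcal E)$ needs not only that its neighborhood is unchanged, but that its adjacent leaves remain leaves in $A(\mathcal E)$; this holds because for \BP-BP/\BB-BB nodes an outer face with even one incident red vertex of $\pert{\mu}$ already belongs to $A(\mathcal E_\mu)$, so a leaf adjacent to an internal face cannot lie on the outer face --- worth making explicit, since it is precisely where the special clause of \cref{def:auxiliary-graph-E-mu} enters.
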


\begin{fullproof}
By \cref{le:structure-A-mu}, we have that $A({\mathcal E}_\mu)$ is composed of at most two caterpillars. 

To prove the first item of the statement, we prove that if $\mu$ is of type~\BB-BB, then $A(\mathcal E_\mu)$ is not empty. Namely, if $\pert{\mu}$ contains no red vertex, then it is a black path, hence $\mu$ is of type~\BP-BP. It follows that, if $\mu$ is of type~\BB-BB, then $\pert{\mu}$ contains at least one red vertex and hence $A({\mathcal E}_\mu) \neq \emptyset$.

In the following, we assume, without loss of generality, that $A({\mathcal E}_\mu)$ contains at least one caterpillar, as otherwise there is nothing more to prove. 
Thus, $\pert{\mu}$ contains at least one red vertex, which implies that $\pert{\mu}$ also contains a red vertex that is incident to one of $\ell({\mathcal E}_\mu)$ and $r({\mathcal E}_\mu)$, by \cref{le:internal-red-vertex}. Hence, such an outer face belongs to $A({\mathcal E}_\mu)$, by \cref{def:auxiliary-graph-E-mu}.

Next, we prove the second item of the statement.
If $A({\mathcal E}_\mu)$ consists of one caterpillar $C$, then $C$ contains at least one of the outer faces, as discussed above. Further, $C$ does not contain both $\ell({\mathcal E}_\mu)$ and $r({\mathcal E}_\mu)$, since $P_{u,v}$ ``separates'' $\ell({\mathcal E}_\mu)$ from $r({\mathcal E}_\mu)$. It follows that $A({\mathcal E}_\mu)$ contains exactly one of $\ell({\mathcal E}_\mu)$ and $r({\mathcal E}_\mu)$, say $r({\mathcal E}_\mu)$. We show that $r({\mathcal E}_\mu)$ is an end-vertex of the backbone of $C$.
Suppose, for a contradiction, that $r({\mathcal E}_\mu)$ is not an end-vertex of $C$. Then both the end-vertices of $C$ correspond to internal red faces of $\mathcal E_\mu$. However, by \cref{condition:endvertices} of \cref{le:characterization-triconnected-onesidefixed}, one of them must be adjacent to a leaf $r'$ that shares a face with $b_1$. We claim that, under this condition, $b_1$ belongs to $\pert{\mu}$. Namely, if $b_1$ does not belong to $\pert{\mu}$, then the only face that can be shared by $b_1$ and $r'$ is $r({\mathcal E}_\mu)$, given that $r'$ is a non-pole vertex of $\pert{\mu}$. However, this is not possible as otherwise $r({\mathcal E}_\mu)$, which is the only red face incident to $r'$, would be an end-vertex of the backbone of $C$. Thus, $b_1$ belongs to $\pert{\mu}$, contradicting \cref{le:rooting-at-b1b2}. This proves the second item of the statement.

Assume now that $A({\mathcal E}_\mu)$ consists of two caterpillars $C$ and $C'$. Suppose, for a contradiction, that one of them, say $C$, contains neither $\ell({\mathcal E}_\mu)$ nor $r({\mathcal E}_\mu)$. Then we have that $C$ is also a connected component of $A({\mathcal E})$, a contradiction to the fact that $A({\mathcal E})$ is connected. It follows that each caterpillar composing $A({\mathcal E}_\mu)$ contains exactly one of $\ell({\mathcal E}_\mu)$ and $r({\mathcal E}_\mu)$. 
To prove the third item, observe that if one of the caterpillars does not start at an outer face, or if both caterpillars have at least one vertex on their backbone that corresponds to an internal red face of $\pert{\mu}$, then there exist two internal red faces of $\pert{\mu}$ that correspond to end-vertices of the backbone of a caterpillar. With the same arguments as in the case in which $A({\mathcal E}_\mu)$ consists of one caterpillar, we can prove that this implies that $b_1$ belongs to $\pert{\mu}$, which contradicts \cref{le:rooting-at-b1b2}.

We now prove the fourth item. Let $C$ be a caterpillar composing $A(\mathcal E_\mu)$ that contains a vertex corresponding to an internal face of $\cal E_\mu$. Then one of the end-vertices of the backbone of $C$ corresponds to an internal face $f$ of $\cal E_\mu$, due to the second and the third item of this lemma. Further, $f$ is also an end-vertex of the backbone of $A(\mathcal E)$. In particular, $f$ is the end-vertex of the backbone of $A(\mathcal E)$ that is adjacent to a leaf $r''$ sharing a face with $b_m$, as otherwise $b_1$ would belong to $\pert{\mu}$, as proved above, which would contradict \cref{le:rooting-at-b1b2}. Note that $r'' \in \pert{\mu}$, since $f$ is a face of $\cal E_\mu$, and thus $r'' \in A(\cal E_\mu)$. Finally, with the same argument used to prove that $r' \in \pert{\mu}$ implies $b_1 \in \pert{\mu}$, we can prove that $r'' \in \pert{\mu}$ implies $b_m \in \pert{\mu}$. This concludes the proof of the fourth item.

To prove the fifth item, observe that if $\mu$ is of type \BB-BB, then $b_m$ is a non-pole vertex of $\pert{\mu}$. If $\cal E_\mu$ contains at least an internal red face, then the proof of the statement follows from the fourth item of this lemma. If $\cal E_\mu$ does not contain any internal red face, then $A(\mathcal{E}_\mu)$ is composed of either one or two caterpillars, each consisting of a star centered at one of the outer faces of $\cal E_\mu$. Since there exists at least a red vertex of $\pert{\mu}$ sharing a face with $b_m$, and since every red vertex is a leaf of $A(\mathcal{E}_\mu)$ incident to an outer face, the statement follows.
\end{fullproof}

\begin{figure}[htb]\tabcolsep=4pt
	\centering
	\begin{tabular}{c c c c c}
		\includegraphics[height=.2\textwidth,page=1]{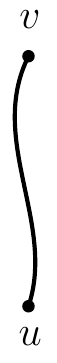} &
		\includegraphics[height=.2\textwidth,page=3]{BP} &
		\includegraphics[height=.2\textwidth,page=2]{BP} &
		\includegraphics[height=.2\textwidth,page=5]{BP} &
		\includegraphics[height=.2\textwidth,page=4]{BP}
		\\
		{\protect \BPi-BP1} & {\protect \BPii-BP2} & {\protect \BBii-BB2} & {\protect \BPiii-BP3} & {\protect \BBiii-BB3}
	\end{tabular}
	\begin{tabular}{c c c c}
		\includegraphics[height=.2\textwidth,page=7]{BP} &
		\includegraphics[height=.2\textwidth,page=6]{BP} &
		\includegraphics[height=.2\textwidth,page=9]{BP} &
		\includegraphics[height=.2\textwidth,page=8]{BP}
		\\
		{\protect \BPiv-BP4} & {\protect \BBiv-BB4} & {\protect \BPv-BP5} & {\protect \BBv-BB5}
	\end{tabular}
	\caption{Embedding types for type~\protect\BP-BP and~\protect\BB-BB nodes. The order of the figures has been chosen so to highlight the correspondence between the embedding types for type~\protect\BP-BP and type~\protect\BB-BB nodes, based on the features of our classification.}\label{fig:BPBB}
\end{figure}

\begin{figure}[htb]\tabcolsep=4pt
	\centering
	\begin{tabular}{c c c c c}
		\includegraphics[height=.2\textwidth,page=1]{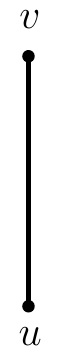} &
		\includegraphics[height=.2\textwidth,page=3]{BP-gadgets} &
		\includegraphics[height=.2\textwidth,page=2]{BP-gadgets} &
		\includegraphics[height=.2\textwidth,page=5]{BP-gadgets} &
		\includegraphics[height=.2\textwidth,page=4]{BP-gadgets}
		\\
		{\protect \BPi-BP1} & {\protect \BPii-BP2} & {\protect \BBii-BB2} & {\protect \BPiii-BP3} & {\protect \BBiii-BB3}
	\end{tabular}
	\begin{tabular}{c c c c}
		\includegraphics[height=.2\textwidth,page=7]{BP-gadgets} &
		\includegraphics[height=.2\textwidth,page=6]{BP-gadgets} &
		\includegraphics[height=.2\textwidth,page=9]{BP-gadgets} &
		\includegraphics[height=.2\textwidth,page=8]{BP-gadgets}
		\\
		{\protect \BPiv-BP4} & {\protect \BBiv-BB4} & {\protect \BPv-BP5} & {\protect \BBv-BB5}
	\end{tabular}
	\caption{Embeddings of pertinent graphs of type \protect\BB-BB.}\label{fig:BPBB-gadgets}
\end{figure}

We now classify the embedding types for type~\protect\BP-BP and~\protect\BB-BB nodes; 
refer to~\cref{fig:BPBB} for a complete schematization and to~\cref{fig:BPBB-gadgets} for examples.

Suppose first that $\mu$ is of type~\BP-BP. There exist three possible options for Feature (F1); namely, by~\cref{le:structure-A-mu-BP-BB}, we have that $A({\mathcal E}_\mu)$ consists of either zero, or one, or two caterpillars. 

If $A({\mathcal E}_\mu)$ does not contain any caterpillar, then we say that ${\cal E}_{\mu}$ is of type \BPi-{\bf BP1}. Hence, ${\mathcal E}_\mu$ does not have any red outer or internal face (so Features (F2) and (F3) are uniquely determined). We further observe the following.

\begin{observation}\label{obs:bp1-black-path}
If $\pert{\mu}$ admits an embedding of type \BPi-BP1, then $H^-_{\mu}$ is a path composed of black vertices between the poles of $\mu$ whose internal vertices are not incident to any $rb$-trivial component.
\end{observation}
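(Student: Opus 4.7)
The plan is to show that an embedding of type \BPi-BP1 forces $A(\mathcal E_\mu)$ to be empty, and then read off the structure of $H^-_\mu$ from this emptiness. By definition of type \BPi-BP1, the auxiliary graph $A(\mathcal E_\mu)$ contains no caterpillar, so it has no vertex at all. Consulting \cref{def:auxiliary-graph-E-mu} and using that $\mu$ is of type \BP-BP (so that a single incident red vertex of $\pert{\mu}$ already suffices to put an outer face into $A(\mathcal E_\mu)$), this emptiness is equivalent to: (i) $\pert{\mu}$ contains no red vertex other than those lying in undecided $rb$-trivial components; (ii) no internal face of $\mathcal E_\mu$ is red; and (iii) neither outer face of $\mathcal E_\mu$ is incident to a red vertex of $\pert{\mu}$.

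Next, I would use the definition of undecided $rb$-trivial component, which by construction is always incident to a pole of $\mu$. Hence every red vertex of $\pert{\mu}$ incident to a non-pole vertex of $H^-_\mu$, whether through an ordinary edge of $H^-_\mu$ or through an $rb$-trivial component of $H_\mu$, must contribute a vertex to $A(\mathcal E_\mu)$. Together with (i), this yields at once two conclusions: no non-pole vertex of $H^-_\mu$ is red, and no non-pole vertex of $H^-_\mu$ is incident to an $rb$-trivial component.

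Finally, since $\mu$ is of type \BP-BP, the subpath $P_{uv}$ of the black path $P$ between the poles $u$ and $v$ is entirely contained in $H^-_\mu$. Because every vertex of $H^-_\mu$ is now known to be black, and because the black vertices of $H$ induce precisely the path $P$, every edge of $H^-_\mu$ is an edge of $P$; moreover, since no vertex of $H^-_\mu$ other than the poles can lie on another portion of $P$ (by the rooting of $\mathcal T$ and by the definition of type \BP-BP), we conclude $H^-_\mu = P_{uv}$.

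I do not anticipate a serious obstacle: the argument is essentially a bookkeeping exercise that reads \cref{def:auxiliary-graph-E-mu} backwards. The only subtle point worth double-checking is the treatment of $rb$-trivial components, in order to make sure that a degree-$1$ red neighbour of a non-pole black vertex of $H^-_\mu$ really does contribute a vertex to $A(\mathcal E_\mu)$ (because it cannot be ``undecided''); this is why I single it out as a separate step above.
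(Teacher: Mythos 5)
Your argument is correct: unwinding the definition of $A(\mathcal E_\mu)$ shows that a type-\BPi-BP1 embedding forces $A(\mathcal E_\mu)=\emptyset$, hence every red vertex of $\pert{\mu}$ lies in an undecided (hence pole-incident) $rb$-trivial component, so $H^-_\mu$ contains only black vertices, is therefore a subpath of $P$, and by the definition of type \BP-BP must coincide with $P_{uv}$ with no $rb$-trivial components at its internal vertices. The paper states this as an observation without proof, and your definitional unwinding is exactly the intended justification (the only cosmetic remark is that the exclusion of vertices of $P_u$ and $P_v$ follows directly from the definition of type \BP-BP and the split-pair structure, rather than from the rooting of $\mathcal T$ per se).
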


By \cref{obs:bp1-black-path}, if $\mu$ admits an embedding of type \BPi-BP1, then $H^-_{\mu}$ has a unique embedding. Therefore, to ease the description, in the following we also say that $\mu$ is of type \BPi-BP1.

If $A({\mathcal E}_\mu)$ consists of one caterpillar, then by~\cref{le:structure-A-mu-BP-BB} it contains one red outer face, say $r({\mathcal E}_\mu)$, and does not contain the other one (so Feature (F2) is uniquely determined). We distinguish two embedding types according to Feature (F3); namely, if $A({\mathcal E}_\mu)$ does not contain any internal red face, then we say that ${\mathcal E}_\mu$ is of type \BPii-{\bf BP2}, otherwise we say that ${\mathcal E}_\mu$ is of type \BPiv-{\bf BP4}. We remark that, by~\cref{le:structure-A-mu-BP-BB}, one end-vertex of the backbone $\mathcal B_{\mu}$ of $A({\mathcal E}_\mu)$ is $r({\mathcal E}_\mu)$. Further, if ${\mathcal E}_\mu$ is of type \BPiv-BP4, then the end-vertex of $\mathcal B_{\mu}$ different from $r({\mathcal E}_\mu)$, say $f^*$, is an internal face of ${\mathcal E}_\mu$ that does not share any leaf of $A({\mathcal E}_\mu)$ with the outer face of ${\mathcal E}_\mu$. Namely, by hypothesis we have $\mathcal B_{\mu}=(r({\mathcal E}_\mu),\dots,f,\dots,f^*)$, where $f$ is an internal red face of ${\mathcal E}_\mu$. If $f^*=r({\mathcal E}_\mu)$, then $A({\mathcal E}_\mu)$ contains a cycle, contradicting the fact that $A({\mathcal E}_\mu)$ is a caterpillar. Further, $f^*\neq \ell({\mathcal E}_\mu)$, given that $A({\mathcal E}_\mu)$ does not contain $\ell({\mathcal E}_\mu)$. It follows that $f^*$ is an internal face of ${\mathcal E}_\mu$.  If $f^*$ is adjacent to a leaf $w$ incident to $r({\mathcal E}_\mu)$, then $w$ is adjacent to $f^*$ and to $r({\mathcal E}_\mu)$ in $A({\mathcal E}_\mu)$, which contradicts the fact that $w$ is a leaf of $A({\mathcal E}_\mu)$. Finally, $f^*$ is not adjacent to a leaf $w$ incident to $\ell({\mathcal E}_\mu)$, as otherwise $\ell({\mathcal E}_\mu)$ would be red.


If $A({\mathcal E}_\mu)$ consists of two caterpillars, then by~\cref{le:structure-A-mu-BP-BB} we have that $A({\mathcal E}_\mu)$ contains both $\ell({\mathcal E}_\mu)$ and $r({\mathcal E}_\mu)$ (so Feature (F2) is uniquely determined). We distinguish two embedding types according to Feature (F3); namely, if $A({\mathcal E}_\mu)$ does not contain any internal red face, then we say that ${\mathcal E}_\mu$ is of type \BPiii-{\bf BP3}, otherwise we say that ${\mathcal E}_\mu$ is of type \BPv-{\bf BP5}. We remark that, by~\cref{le:structure-A-mu-BP-BB}, the backbone $\mathcal B^\ell_{\mu}$ of one caterpillar composing $A({\mathcal E}_\mu)$ starts at $\ell({\mathcal E}_\mu)$, while the backbone $\mathcal B^r_{\mu}$ of the other caterpillar composing $A({\mathcal E}_\mu)$ starts at $r({\mathcal E}_\mu)$. Further, again by~\cref{le:structure-A-mu-BP-BB}, either $\mathcal B^\ell_{\mu}$ or $\mathcal B^r_{\mu}$, say $\mathcal B^\ell_{\mu}$, is a single vertex. If ${\mathcal E}_\mu$ is of type \BPv-BP5, then it can be proved similarly as for type \BPiv-BP4 that the end-vertex of $\mathcal B^r_{\mu}$ different from $r({\mathcal E}_\mu)$, say $f^*$, is an internal face of ${\mathcal E}_\mu$ that does not share any leaf of $A({\mathcal E}_\mu)$ with the outer face of ${\mathcal E}_\mu$.



The above discussion leads to the following.

\begin{lemma} \label{le:relevant-in-extensible-BP}
	If $\mu$ is of type \BP-BP, any extensible embedding of $\pert{\mu}$ is of one of the types \BPi-BP1,~\BPii-BP2, \BPiii-BP3, \BPiv-BP4, and \BPv-BP5.
\end{lemma}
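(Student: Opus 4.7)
The plan is to perform a case analysis on Feature (F1), invoking \cref{le:structure-A-mu-BP-BB} to enumerate the possible structures of $A(\mathcal E_\mu)$, and then to branch on Features (F2) and (F3) within each case. Since the embedding types \BPi-BP1, \BPii-BP2, \BPiii-BP3, \BPiv-BP4, and \BPv-BP5 were defined precisely by these three features, the lemma will follow directly once every combination of features that can arise in an extensible embedding is shown to match one of the five types.

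First, I would apply \cref{le:structure-A-mu-BP-BB} to conclude that, for a type~\BP-BP node $\mu$, the graph $A(\mathcal E_\mu)$ consists of zero, one, or two caterpillars. If $A(\mathcal E_\mu)$ has no caterpillars, then $\pert{\mu}$ contains no red vertex and, by \cref{le:internal-red-vertex}, no internal red face; neither outer face can be red since neither has an incident red vertex of $\pert{\mu}$. All three features are therefore determined, matching the definition of type \BPi-BP1 (and \cref{obs:bp1-black-path} confirms that the embedding is in fact unique).

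Second, if $A(\mathcal E_\mu)$ consists of exactly one caterpillar, Items~\ref{le:structure-A-mu-BP-BB-item-no-traversing} and~\ref{le:structure-A-mu-BP-BB-outer-face} of \cref{le:structure-A-mu-BP-BB} imply that $A(\mathcal E_\mu)$ contains exactly one of $\ell(\mathcal E_\mu)$ and $r(\mathcal E_\mu)$, which is moreover an end-vertex of the backbone; hence Feature (F2) equals $1$. The remaining branching on Feature (F3) then yields type \BPii-BP2 when no internal red face is present, and type \BPiv-BP4 otherwise. Third, if $A(\mathcal E_\mu)$ consists of two caterpillars, Items~\ref{le:structure-A-mu-BP-BB-outer-face} of \cref{le:structure-A-mu-BP-BB} forces both outer faces into $A(\mathcal E_\mu)$ as end-vertices of the respective backbones, with one backbone reduced to a single vertex; so Feature (F2) equals $2$, and the split on (F3) gives type \BPiii-BP3 or type \BPv-BP5.

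The main obstacle, though largely mechanical, is to verify the additional side conditions baked into the definitions of types \BPiv-BP4 and \BPv-BP5, namely that the end-vertex $f^*$ of the backbone different from the red outer face is itself an internal face of $\mathcal E_\mu$ that shares no leaf of $A(\mathcal E_\mu)$ with the outer face of $\mathcal E_\mu$. I would discharge this exactly as in the discussion preceding the statement: use \cref{le:structure-A-mu-BP-BB}\ref{le:structure-A-mu-BP-BB-item-bm-in-pertinent} to place $f^*$ in the interior of $\mathcal E_\mu$ (otherwise $A(\mathcal E_\mu)$ would contain a cycle, or $b_1$ would fall inside $\pert{\mu}$, contradicting \cref{le:rooting-at-b1b2}); and rule out a shared leaf $w$ by observing that $w$ adjacent to both $f^*$ and the red outer face would give $w$ degree $2$ in the caterpillar $A(\mathcal E_\mu)$, while $w$ adjacent to the non-red outer face would force that face to be red, a contradiction. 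This completes the case analysis and shows the five types exhaust all possibilities.
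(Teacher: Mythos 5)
Your proposal is correct and follows essentially the same route as the paper: the paper's own justification is precisely the classification discussion preceding the lemma, which applies \cref{le:structure-A-mu-BP-BB} to fix Feature~(F1) at zero, one, or two caterpillars, observes that this pins down Feature~(F2), branches on Feature~(F3), and then verifies the same side conditions for types \BPiv-BP4 and \BPv-BP5 (internal $f^*$ via the cycle argument, no shared leaf via the degree-$2$ and reddening contradictions). The only cosmetic difference is your attribution of the "$f^*$ is internal" step partly to Item~\ref{le:structure-A-mu-BP-BB-item-bm-in-pertinent}, whereas the paper derives it by exclusion ($f^*\neq r(\mathcal E_\mu)$ by acyclicity, $f^*\neq\ell(\mathcal E_\mu)$ since $\ell(\mathcal E_\mu)\notin A(\mathcal E_\mu)$), an argument you also supply in your parenthetical.
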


If $\mu$ is of type~\BB-BB, then similarly to the previous case we distinguish four embedding types for $\pert{\mu}$; we call them types {\bf \BBii-BB2}, $\dots$, {\bf \BBv-BB5} in analogy with the types \BPii-BP2, $\dots$, \BPv-BP5. We observe that there exists no analogous of the embedding type \BPi-BP1, given that $A({\mathcal E}_\mu)$ contains either one or two caterpillars, by~\cref{le:structure-A-mu-BP-BB}.

\begin{lemma} \label{le:relevant-in-extensible-BB}
	If $\mu$ is of type \BB-BB, any extensible embedding of $\pert{\mu}$ is of one of the types~\BBii-BB2, \BBiii-BB3, \BBiv-BB4, and \BBv-BB5.
\end{lemma}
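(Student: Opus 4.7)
The plan is to mirror the case analysis carried out for \cref{le:relevant-in-extensible-BP}, using \cref{le:structure-A-mu-BP-BB} as the main structural tool, while highlighting the single point of divergence between the \BP-BP and \BB-BB cases. Let $\mathcal{E}_\mu$ be any extensible embedding of $\pert{\mu}$ and consider the features (F1)--(F3) introduced above.

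First I would observe that, unlike in the \BP-BP case, $A(\mathcal{E}_\mu)$ cannot be empty. Indeed, since $\mu$ is of type \BB-BB, the vertex $b_m$ is a non-pole vertex of $\pert{\mu}$ and, by Assumption \blue{A\ref{A1}}, it has at least one red neighbor, which thus belongs to $\pert{\mu}$. By \cref{le:internal-red-vertex}, this forces the presence of a red vertex incident to the outer face of $\mathcal{E}_\mu$, hence $A(\mathcal{E}_\mu)$ contains at least one outer-face vertex and is nonempty. Combined with item~1 of \cref{le:structure-A-mu-BP-BB}, this shows that $A(\mathcal{E}_\mu)$ consists of either exactly one or exactly two caterpillars, ruling out the possibility of an analog of type \BPi-BP1 and giving the partition into (F1) options underlying types \BBii-BB2/\BBiv-BB4 (one caterpillar) and \BBiii-BB3/\BBv-BB5 (two caterpillars).

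Next, I would split based on (F1). If $A(\mathcal{E}_\mu)$ consists of exactly one caterpillar $C$, then by item~2 of \cref{le:structure-A-mu-BP-BB}, exactly one of $\ell(\mathcal{E}_\mu)$ and $r(\mathcal{E}_\mu)$ belongs to $A(\mathcal{E}_\mu)$ and is an end-vertex of the backbone of $C$; so (F2) is determined. I then apply (F3): if no internal red face of $\mathcal{E}_\mu$ exists, $\mathcal{E}_\mu$ is of type \BBii-BB2, otherwise, as in the analysis for \BPiv-BP4 (using item~4 of \cref{le:structure-A-mu-BP-BB}), the other end-vertex of the backbone of $C$ corresponds to an internal face of $\mathcal{E}_\mu$ that does not share any leaf of $A(\mathcal{E}_\mu)$ with the outer face, and $\mathcal{E}_\mu$ is of type \BBiv-BB4. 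If instead $A(\mathcal{E}_\mu)$ consists of two caterpillars, then by item~3 of \cref{le:structure-A-mu-BP-BB} one caterpillar has backbone starting at $\ell(\mathcal{E}_\mu)$, the other at $r(\mathcal{E}_\mu)$, and one of the two backbones is a single vertex; so (F2) is determined. Distinguishing again based on (F3), $\mathcal{E}_\mu$ is of type \BBiii-BB3 if there is no internal red face, and of type \BBv-BB5 otherwise, where, exactly as for \BPv-BP5, the non-trivial backbone ends at an internal face sharing no leaf of $A(\mathcal{E}_\mu)$ with the outer face.

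I do not expect any serious obstacle: the only non-trivial step is the verification that the end-vertex $f^\circ$ of the backbone of $A(\mathcal{E})$ identified by \cref{condition:endvertices} of \cref{le:characterization-triconnected-onesidefixed} as adjacent to the leaf $r''$ sharing a face with $b_m$ actually belongs to $A(\mathcal{E}_\mu)$, so that it witnesses the internal red face in types \BBiv-BB4 and \BBv-BB5; this is precisely what item~5 (combined with item~4) of \cref{le:structure-A-mu-BP-BB} asserts for \BB-BB nodes, since $b_m\in\pert{\mu}$ forces both $f^\circ$ and $r''$ into $\pert{\mu}$. Hence the classification into the four types \BBii-BB2, \BBiii-BB3, \BBiv-BB4, \BBv-BB5 is exhaustive.
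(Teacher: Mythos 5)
Your proof is correct and follows essentially the same route as the paper, which obtains the \BB-BB classification by running the \BP-BP case analysis through \cref{le:structure-A-mu-BP-BB} and excluding an analogue of type \BPi-BP1 because $A(\mathcal E_\mu)$ is nonempty. Your alternative justification of nonemptiness via the red neighbor of $b_m$ is fine (the paper instead notes that a red-vertex-free $\pert{\mu}$ would be a black path, forcing type \BP-BP), and the remaining case split by Features (F1)--(F3) matches the paper's argument.
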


\subsubsection{Type \protect\BF-BF Nodes}

Finally, we discuss the type~\BF-BF, in which both $u$ and $v$ are black and the path $P_{uv}$ does not belong to $\pert{\mu}$, while the path $P_{v}$ belongs to $\pert{\mu}$. We start with the following lemma on the structure of $A({\mathcal E}_\mu)$.

\begin{lemma}\label{le:structure-A-mu-BF}
	Suppose that $\mu$ is of type~\BF-BF. Then $A({\mathcal E}_\mu)$ is a caterpillar containing a path $Q_u$ between the red neighbor $u_\ell$ of $u$ incident to $\ell(\mathcal E_\mu)$ and the red neighbor $u_r$ of $u$ incident to $r(\mathcal E_\mu)$; further, $Q_u$ passes through all the internal faces of ${\mathcal E}_\mu$ incident to~$u$ and through no other face.
\end{lemma}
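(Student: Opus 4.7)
The plan is to exploit the local combinatorics around the pole $u$: since $\mu$ is of type \BF-BF, the path $P_{uv}$ lies in $\rest{\mu}$, so the unique black vertex of $\pert{\mu}$ adjacent to $u$ would have to come from $P_{uv}$; as this is not the case, \emph{every} neighbor of $u$ in $\pert{\mu}$ (other than via the virtual edges to the rest of $\pert{\mu}$) is red. Let $u_1,u_2,\dots,u_k$ be the red neighbors of $u$ in the clockwise order around $u$ in $\mathcal E_\mu$. By the convention for the outer faces of $\mathcal E_\mu$, we have $u_1=u_\ell$ (the first edge is incident to $\ell(\mathcal E_\mu)$) and $u_k=u_r$ (the last edge is incident to $r(\mathcal E_\mu)$).

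For each $i=1,\dots,k-1$, the two consecutive edges $(u,u_i)$ and $(u,u_{i+1})$ around $u$ delimit a face $f_i$ of $\mathcal E_\mu$; since $u$ is internal to the outer boundary of $\mathcal E_\mu$ only between the two outer faces bounding the edges $(u,u_1)$ and $(u,u_k)$, each such $f_i$ is an internal face of $\mathcal E_\mu$. Conversely, every internal face of $\mathcal E_\mu$ incident to $u$ has this form. Each $f_i$ is incident to both $u_i$ and $u_{i+1}$, which are red, so $f_i$ is a red face of $\mathcal E_\mu$ and therefore a vertex of $A(\mathcal E_\mu)$, and $u_i,u_{i+1}$ (being red vertices of $\pert{\mu}$) are also vertices of $A(\mathcal E_\mu)$. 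Hence
\[
Q_u \;=\; (u_1,f_1,u_2,f_2,\dots,u_{k-1},f_{k-1},u_k)
\]
is a well-defined path in $A(\mathcal E_\mu)$ between $u_\ell$ and $u_r$ whose face-vertices are precisely the internal faces of $\mathcal E_\mu$ incident to $u$, and no other face appears on $Q_u$.

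It remains to prove that $A(\mathcal E_\mu)$ is a single caterpillar (and not two). By \cref{le:structure-A-mu}, $A(\mathcal E_\mu)$ consists of at most two caterpillars. Suppose for a contradiction that there are two components $C_1,C_2$. As in the proof of \cref{le:structure-A-mu}, any component of $A(\mathcal E_\mu)$ that contains neither $\ell(\mathcal E_\mu)$, nor $r(\mathcal E_\mu)$, nor a red pole of $\mu$ would also be a connected component of $A(\mathcal E)$, contradicting the fact that $A(\mathcal E)$ is a (connected) caterpillar by extensibility and \cref{le:characterization-triconnected-onesidefixed}. Since $\mu$ is of type \BF-BF, it has no red pole; hence one of $C_1,C_2$ must contain $\ell(\mathcal E_\mu)$ and the other $r(\mathcal E_\mu)$. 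But $u_\ell=u_1$ is incident to $\ell(\mathcal E_\mu)$ and $u_r=u_k$ is incident to $r(\mathcal E_\mu)$, so if both outer faces belong to $A(\mathcal E_\mu)$ then via $Q_u$ we have $\ell(\mathcal E_\mu)\!-\!u_1\!-\!f_1\!-\cdots-\!u_k\!-\!r(\mathcal E_\mu)$ in a single component, a contradiction. If instead, say, $r(\mathcal E_\mu)\notin A(\mathcal E_\mu)$, then no component of $A(\mathcal E_\mu)$ can be anchored at $r(\mathcal E_\mu)$, and by the argument above no second component can exist; a symmetric argument handles $\ell(\mathcal E_\mu)\notin A(\mathcal E_\mu)$.

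The expected main obstacle is the connectivity step: one must carefully argue that every red vertex of $\pert{\mu}$ and every internal red face of $\mathcal E_\mu$ lies in the same component as $Q_u$. This reduces to the observation above (already used in \cref{le:structure-A-mu}) that any hypothetical ``detached'' component of $A(\mathcal E_\mu)$ would remain a connected component of $A(\mathcal E)$ in any neat extension $\mathcal E$, contradicting \cref{condition:caterpillar} of \cref{le:characterization-triconnected-onesidefixed}; the presence of $Q_u$ then forces everything to merge through the fan of red faces around $u$.
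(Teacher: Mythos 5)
Your proof is correct and follows essentially the same route as the paper's: first observe that, by the definition of type \BF{}, all neighbors of $u$ in $\pert{\mu}$ are red, so the fan of internal faces around $u$ yields the path $Q_u$ in $A(\mathcal E_\mu)$; then invoke \cref{le:structure-A-mu} and the observation that a component of $A(\mathcal E_\mu)$ touching neither outer face nor a red pole would survive as a component of $A(\mathcal E)$, contradicting connectivity of the caterpillar $A(\mathcal E)$. Your handling of the second component (via which outer face it must contain) is just a slight reorganization of the paper's argument that the second caterpillar contains neither $u_\ell$, $u_r$, nor an outer face.
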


\begin{fullproof}
First note that all the neighbors of $u$ in $\pert{\mu}$ are red, by the definition of type~\BF-BF. It follows that $A({\mathcal E}_\mu)$ contains a path $Q_u$ between $u_\ell$ and $u_r$ passing through all the internal faces of ${\mathcal E}_\mu$ incident to~$u$ and through no other face. 

We now show that $A({\mathcal E}_\mu)$ consists of only one caterpillar. By \cref{le:structure-A-mu}, we have that $A({\mathcal E}_\mu)$ is composed of at most two caterpillars. Suppose, for a contradiction, that $A({\mathcal E}_\mu)$ consists of two caterpillars. Let $C$ be the caterpillar that contains $Q_u$, and let $C'$ be the other caterpillar. We have that $C'$ contains neither $u_\ell$ nor $u_r$. Moreover, it does not contain any outer face of ${\mathcal E}_\mu$, as otherwise it would contain  $u_\ell$ or $u_r$. It follows that $C'$ is also a connected component of $A(\mathcal E)$, a contradiction.
\end{fullproof}

If $Q_u$ contains at least one vertex corresponding to a red face of $\mathcal E_\mu$, then we denote by $f_\ell$ and $f_r$ the vertices of $Q_u$ adjacent to $u_\ell$ and $u_r$, respectively (possibly $f_\ell=f_r$). 

We now classify the embedding types for type~\protect\BF-BF nodes; refer to~\cref{fig:NodetypesBF} for a complete schematization and to~\cref{fig:NodetypesBF-gadgets} for examples. Observe that the classification with respect to Feature (F1) is unique, as stated by \cref{le:structure-A-mu-BF}. As for the case in which $\mu$ is of type \RF-RF, we have that Features (F2) and (F3) determine six possible types for $\mathcal E_{\mu}$, called~\BFNz-{\bf BFN0},~\BFNi-{\bf BFN1}, \hbox{\BFNii-{\bf BFN2}}, \hbox{\BFIz-{\bf BFI0}}, \hbox{\BFIi-{\bf BFI1}}, and \hbox{\BFIii-{\bf BFI2}}, where the third character indicates whether ${\cal E}_\mu$ contains an internal red face (I) or not (N), and the fourth character indicates the number of red outer faces of ${\cal E}_\mu$.

\begin{figure}[htb]\tabcolsep=6pt
	\centering
	\begin{tabular}{c c c c c c c}
		\includegraphics[height=.2\textwidth,page=15]{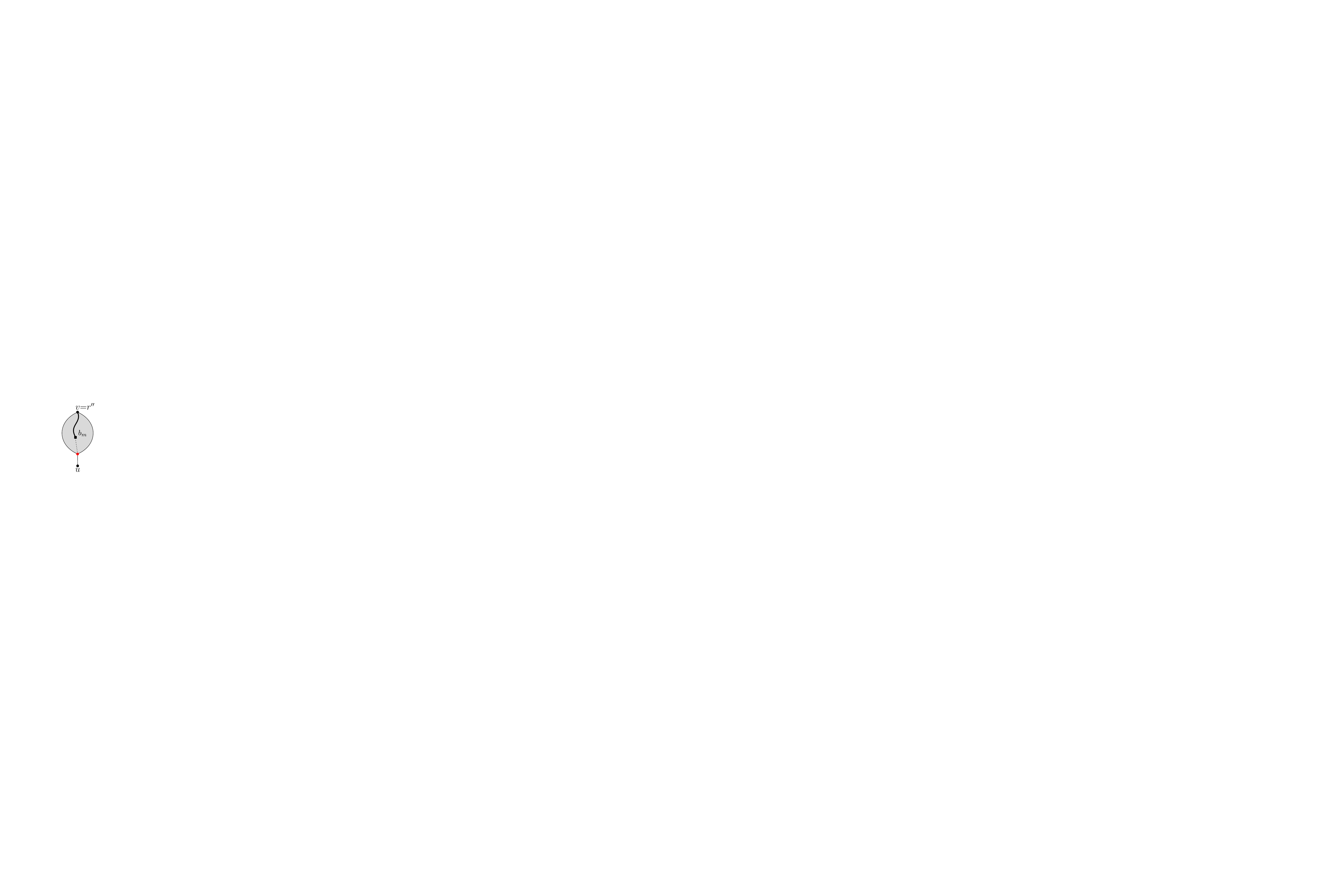}  &
		\includegraphics[height=.2\textwidth,page=14]{BF}  &
		\includegraphics[height=.2\textwidth,page=16]{BF}  &
		\includegraphics[height=.2\textwidth,page=17]{BF}  &
		\includegraphics[height=.2\textwidth,page=19]{BF} &
		\includegraphics[height=.2\textwidth,page=18]{BF}  &
		\includegraphics[height=.2\textwidth,page=20]{BF} \\
		\protect\BFNza-BFN0A & \protect\BFNzb-BFN0B & \protect\BFNia-BFN1A & \protect\BFNib-BFN1B & \protect\BFNic-BFN1C & \protect\BFNid-BFN1D  & \protect\BFNii-BFN2
	\end{tabular}
	\begin{tabular}{c c}
		\includegraphics[height=.2\textwidth,page=21]{BF}  &
		\includegraphics[height=.2\textwidth,page=22]{BF}  \\
		\protect\BFIza-BFI0A & \protect\BFIzb-BFI0B
	\end{tabular}
	\begin{tabular}{c c c c c c c}
		\includegraphics[height=.2\textwidth,page=23]{BF} &
		\includegraphics[height=.2\textwidth,page=24]{BF} &
		\includegraphics[height=.2\textwidth,page=25]{BF} &
		\includegraphics[height=.2\textwidth,page=26]{BF} &
		\includegraphics[height=.2\textwidth,page=27]{BF} &
		\includegraphics[height=.2\textwidth,page=28]{BF} &
		\includegraphics[height=.2\textwidth,page=29]{BF} \\
		\protect\BFIia-BFI1A & \protect\BFIib-BFI1B  & \protect\BFIic-BFI1C & \protect\BFIid-BFI1D & \protect\BFIie-BFI1E & \protect\BFIif-BFI1F & \protect\BFIig-BFI1G
	\end{tabular}
		\begin{tabular}{c c}
		\includegraphics[height=.2\textwidth,page=30]{BF}  &
		\includegraphics[height=.2\textwidth,page=31]{BF}  \\
		\protect\BFIiia-BFI2A & \protect\BFIiib-BFI2B
	\end{tabular}
	\caption{Embedding types for type~\protect\BF-BF nodes, without taking into account the existence of undecided $rb$-trivial components incident to $u$.}\label{fig:NodetypesBF}
\end{figure}

\begin{figure}[htb]\tabcolsep=6pt
	\centering
	\begin{tabular}{c c c c c c c}
		\includegraphics[height=.2\textwidth,page=15]{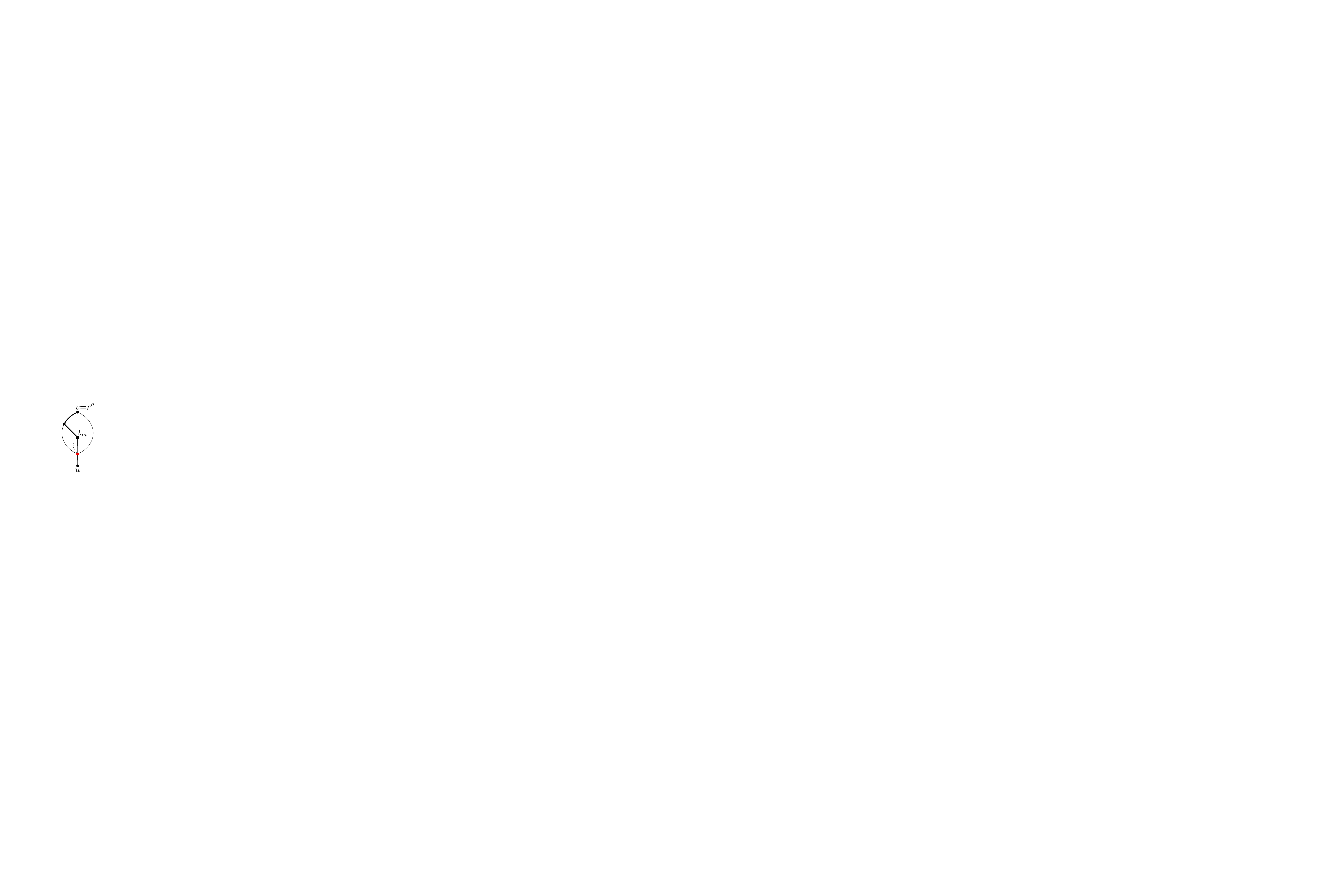}  &
		\includegraphics[height=.2\textwidth,page=14]{BF-gadgets}  &
		\includegraphics[height=.2\textwidth,page=16]{BF-gadgets}  &
		\includegraphics[height=.2\textwidth,page=17]{BF-gadgets}  &
		\includegraphics[height=.2\textwidth,page=19]{BF-gadgets} &
		\includegraphics[height=.2\textwidth,page=18]{BF-gadgets}  &
		\includegraphics[height=.2\textwidth,page=20]{BF-gadgets} \\
		\BFNza-BFN0A & \BFNzb-BFN0B & \BFNia-BFN1A & \BFNib-BFN1B & \BFNic-BFN1C & \BFNid-BFN1D  & \BFNii-BFN2
	\end{tabular}
	\begin{tabular}{c c}
		\includegraphics[height=.2\textwidth,page=21]{BF-gadgets}  &
		\includegraphics[height=.2\textwidth,page=22]{BF-gadgets}  \\
		\BFIza-BFI0A & \BFIzb-BFI0B
	\end{tabular}
	\begin{tabular}{c c c c c c c}
		\includegraphics[height=.2\textwidth,page=23]{BF-gadgets} &
		\includegraphics[height=.2\textwidth,page=24]{BF-gadgets} &
		\includegraphics[height=.2\textwidth,page=25]{BF-gadgets} &
		\includegraphics[height=.2\textwidth,page=26]{BF-gadgets} &
		\includegraphics[height=.2\textwidth,page=27]{BF-gadgets} &
		\includegraphics[height=.2\textwidth,page=28]{BF-gadgets} &
		\includegraphics[height=.2\textwidth,page=29]{BF-gadgets} \\
		\BFIia-BFI1A & \BFIib-BFI1B  & \BFIic-BFI1C & \BFIid-BFI1D & \BFIie-BFI1E & \BFIif-BFI1F & \BFIig-BFI1G
	\end{tabular}
	\begin{tabular}{c c}
		\includegraphics[height=.2\textwidth,page=30]{BF-gadgets}  &
		\includegraphics[height=.2\textwidth,page=31]{BF-gadgets}  \\
		\BFIiia-BFI2A & \BFIiib-BFI2B
	\end{tabular}
	\caption{Embeddings of pertinent graphs of type \protect\BF-BF.}\label{fig:NodetypesBF-gadgets}
\end{figure}

%
%

We now refine the above classification. Suppose first that $\mathcal E_\mu$ contains no internal red face, which implies that it is of type \BFNz-BFN0,~\BFNi-BFN1, or \BFNii-BFN2. We observe that, in this case, the path $Q_u$ of \cref{le:structure-A-mu-BF} consists of only one vertex $u_r = u_\ell$, since $\mathcal E_\mu$ does not contain any internal red face. Thus, $\pert{\mu}^-$ consists of an edge between $u$ and $u_r = u_\ell$, and of the pertinent $\pert{\nu}^-$ of a child $\nu$ of $\mu$ in $\mathcal{T}$ whose poles are $v$ and $u_r = u_\ell$. Note that $\nu$ is of type \RF-RF, and that the embedding of $\pert{\nu}$ that is contained in $\mathcal E_\mu$ is of type either \RFNz-RFN0,~\RFNi-RFN1, or \RFNii-RFN2. Hence, we refine the classification for these types by defining the types \BFNza-{\bf BFN0A}, \BFNzb-{\bf BFN0B},~\BFNia-{\bf BFN1A}, \BFNib-{\bf BFN1B}, \BFNic-{\bf BFN1C}, \BFNid-{\bf BFN1D}, and \BFNii-{\bf BFN2}, based on the type of the embedding of $\pert{\nu}$.

In order to refine the classification for the case in which $\mathcal E_\mu$ contains at least one internal red face, we are going to exploit the following lemma (which, in fact, holds in the presence of any, possibly outer, red face). Denote by $\mathcal B_\mu$ the backbone of the caterpillar $A(\mathcal E_\mu)$. 

\begin{lemma}\label{le:bf-endvertex-backbone}
	Suppose that $\mathcal E_\mu$ contains a red face and let $f^*$ and $f^\diamond$ be the end-vertices of the backbone $\mathcal B_\mu$ of the caterpillar $A(\mathcal E_\mu)$. 
	Then, up to a relabeling of $f^*$ and $f^\diamond$, we have that:
	\begin{enumerate}
	\item \label{le:bf-endvertex-backbone-bm-shares} $b_m$ shares a face with a leaf of $A(\mathcal E_\mu)$ adjacent to $f^*$; and
	\item \label{le:bf-endvertex-backbone-endvertex} $f^\diamond$ is either an outer face of ${\mathcal E}_\mu$ or is an internal face of ${\mathcal E}_\mu$ incident to $u_{\ell}$ (to $u_r$); in the latter case, $u_{\ell}$ (resp.\ $u_r$) is a leaf of $A({\mathcal E}_\mu)$.
	\end{enumerate} 
\end{lemma}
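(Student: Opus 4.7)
The plan is to mirror the strategy of the analogous \cref{le:rf-endvertex-backbone} for type~\RF-RF nodes, adapting it to the \BF-BF setting where both poles of $\mu$ are black. The three key ingredients will be: the characterization of neat embeddings of $H$ in \cref{le:characterization-triconnected-onesidefixed}; the structural description of $A(\mathcal E_\mu)$ for \BF-BF nodes in \cref{le:structure-A-mu-BF} (in particular, that $A(\mathcal E_\mu)$ is a single caterpillar containing the path $Q_u$ from $u_\ell$ to $u_r$ through all internal faces of $\mathcal E_\mu$ incident to $u$); and the fact that $b_1 \notin \pert{\mu}$, which follows from \cref{le:rooting-at-b1b2} because $\mu$ is of type~\BF-BF.

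For Part~1, I would fix a neat embedding $\mathcal E$ of $H$ extending $\mathcal E_\mu$ (one exists by extensibility of $\mathcal E_\mu$) and apply \cref{le:characterization-triconnected-onesidefixed} to obtain a leaf $r''$ of $A(\mathcal E)$, adjacent to an end-vertex $f^\circ$ of the backbone $\mathcal B$ of $A(\mathcal E)$, such that $r''$ shares a face $F$ with $b_m$ in~$\mathcal E$. Since $b_m$ is a non-pole vertex of $\pert{\mu}$, $F$ corresponds either to an internal red face of $\mathcal E_\mu$ (so $r'' \in \pert{\mu}$ and the edge $(r'',f^\circ)$ already lies in $A(\mathcal E_\mu)$) or to an outer face of $\mathcal E_\mu$ on which $b_m$ lies. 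In either situation I would identify $f^*$ with the corresponding face in $A(\mathcal E_\mu)$ and exhibit a suitable leaf-witness, and then show that $f^*$ is an end-vertex of $\mathcal B_\mu$ via the following degree-comparison principle: passing from $A(\mathcal E_\mu)$ to $A(\mathcal E)$ can only add edges, so any leaf of $A(\mathcal E)$ that already lies in $A(\mathcal E_\mu)$ is also a leaf there, which bounds the number of non-leaf neighbors $f^*$ acquires in $A(\mathcal E_\mu)$ by the single one it has in $\mathcal B$.

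For Part~2, I would first rule out that $f^\diamond$ is a red vertex: an endpoint of $\mathcal B_\mu$ that is a red vertex would have only red-face neighbors, each of degree $\geq 2$ and hence lying on $\mathcal B_\mu$, forcing the red vertex to have all its neighbors on the backbone---impossible for an endpoint. Hence $f^\diamond$ is a face; if $f^\diamond$ is an outer face we are done, so I would assume $f^\diamond$ is an internal red face of $\mathcal E_\mu$ and split on whether $f^\diamond$ is an end-vertex of $\mathcal B$ in $A(\mathcal E)$ or an interior vertex of $\mathcal B$. In the first subcase, the leaf $r'$ of $A(\mathcal E)$ adjacent to $f^\diamond$ that shares a face with $b_1$ must belong to $\pert{\mu}$ (because $f^\diamond$ is internal to $\mathcal E_\mu$) and must lie on the outer boundary of $\mathcal E_\mu$ (because $b_1 \notin \pert{\mu}$ forces the shared face to extend out of $\pert{\mu}$); since $r'$ has degree~$1$ in $A(\mathcal E)$, the corresponding outer face cannot be red in $\mathcal E_\mu$, so $r'$ is the \emph{unique} red vertex on that outer boundary. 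Because both poles are black and colors alternate along the outer-boundary walk from $u$ to $v_\mu$, this walk must reduce to the length-$2$ path $u \to r' \to v_\mu$, identifying $r' = u_\ell$ (or symmetrically $r'=u_r$) as a leaf of $A(\mathcal E_\mu)$ adjacent to $f^\diamond$. In the second subcase, $f^\diamond$ is interior to $\mathcal B$ but an endpoint of $\mathcal B_\mu$, so some leaf $v$ of $A(\mathcal E_\mu)$ adjacent to $f^\diamond$ must acquire an extra neighbor when passing to $A(\mathcal E)$; this additional neighbor can only be an outer face of $\mathcal E_\mu$ that becomes red in $\mathcal E$, and the same uniqueness argument identifies $v = u_\ell$ (or $u_r$).

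The hardest part will be the case analysis in Part~1 when $b_m$ lies on the outer boundary of $\mathcal E_\mu$: one must argue that the outer face in question is already red in $\mathcal E_\mu$---and hence belongs to $A(\mathcal E_\mu)$ so that it can play the role of $f^*$---and one must also verify that this outer face is indeed an endpoint of $\mathcal B_\mu$ rather than an interior vertex of it. This is where the single-caterpillar guarantee from \cref{le:structure-A-mu-BF}, combined with the fact that the interior of $Q_u$ already occupies an interior subpath of $\mathcal B_\mu$, becomes essential to rule out pathological configurations. A secondary subtlety is the degenerate case $f^* = f^\diamond$ (when $\mathcal B_\mu$ consists of a single vertex), where both conclusions must be verified simultaneously for the unique endpoint by combining the two arguments above.
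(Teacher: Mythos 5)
There is a genuine gap, and it starts with your very first premise. You assert that $b_1\notin\pert{\mu}$ ``follows from \cref{le:rooting-at-b1b2} because $\mu$ is of type BF''. That lemma states the opposite direction: if $b_1\in\pert{\mu}$ then $b_1$ is a pole of $\mu$ and $\mu$ is of type BE, BF, or RE --- so type BF is precisely one of the types for which $b_1$ \emph{may} belong to $\pert{\mu}$ (as the pole $u$); indeed the child of the root is a type-BF node with $u=b_1$. This is not a side issue: the case $b_1=u$ is the reason Conclusion~2 must allow $f^\diamond$ to be an \emph{internal} face incident to $u_\ell$ or $u_r$ (with that vertex a leaf). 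In your Part~2, first subcase, the false premise is what forces the leaf $r'$ onto the outer boundary of $\mathcal E_\mu$; when $b_1=u$, the face that $r'$ shares with $b_1$ can be an internal face of $Q_u$, and one must instead argue (as the paper does in step (S3) of its ``bad face'' analysis) that this face is then $f_\ell$ or $f_r$ and that the adjacent $u_\ell$ or $u_r$ is a leaf, which is exactly the allowed outcome of Conclusion~2 rather than a contradiction. Without this branch your Part~2 does not go through. (A smaller slip in the same subcase: the outer boundary walk of $\pert{\mu}$ from $u$ to $v_\mu$ need not have length $2$, since $\pert{\mu}$ contains the all-black subpath $P_v$; the identification $r'=u_\ell$ can still be salvaged because every neighbor of $u$ in $\pert{\mu}$ is red, but not by the alternation argument you give.)

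Part~1 has a second gap, which you partially flag yourself but whose intended fix is wrong. When the face shared by $b_m$ and the leaf corresponds to an outer face of $\mathcal E_\mu$, you plan to show that this outer face ``is already red in $\mathcal E_\mu$ \dots so that it can play the role of $f^*$''. That is false in general: in types BFN1A, BFN1B (and BFI1A, BFI1B) the vertex $b_m$ is incident to an outer face that is \emph{not} red; Conclusion~1 only requires that $b_m$ share \emph{some} face (possibly non-red) with a leaf adjacent to $f^*$, and in these configurations the witnessing leaf is $u_\ell=u_r$ (or a leaf adjacent to $f_\ell$), while $f^*$ is a different face such as $r(\mathcal E_\mu)$ or $f_\ell$. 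The paper handles this with a dedicated case (``if $b_m$ is incident to an outer face which is not red, then $f^*=f^\diamond=r(\mathcal E_\mu)$ and $u_\ell=u_r$ is the leaf'') and, when internal red faces exist, via the ``none of $f^*,f^\diamond$ is bad'' sub-case. Two further details your sketch glosses over: the shared face $F$ can also be an internal \emph{non-red} face of $\mathcal E_\mu$ (your dichotomy omits this), and the leaf $r''$ of $A(\mathcal E)$ may be the red vertex of an undecided $rb$-trivial component or a vertex of $\rest{\mu}$, in which case it is not a vertex of $A(\mathcal E_\mu)$ at all and cannot be used directly as the witness; your degree-comparison principle (which is otherwise a sound observation, since $A(\mathcal E_\mu)$ embeds as a subgraph of $A(\mathcal E)$) only applies after these possibilities are excluded or rerouted through $u_\ell,u_r$, which is what the paper's (S1)--(S4) machinery accomplishes.
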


\begin{proof}
Suppose first that $\mathcal E_\mu$ contains no internal red face. By \cref{le:characterization-triconnected-onesidefixed}, there exists a leaf $r''$ of $A(\mathcal E)$ that shares a face with $b_m$. If $b_m$ is an internal vertex of $\mathcal E_\mu$, then $r''$ is a vertex of $\pert{\mu}$. By \cref{le:internal-red-vertex} and by the assumption that $\mathcal E_\mu$ contains no internal red face, we have that $r''$ is incident to the outer face of $\mathcal E_\mu$. Let $f^*$ be the red outer face of $\mathcal E_\mu$ that is incident to $r''$. If $\mathcal E_\mu$ is of type~\BFNi-BFN1 (in particular, of type~\BFNic-BFN1C or~\BFNid-BFN1D, given that $b_m$ is internal), then the statement follows with $f^\diamond=f^*$. If $\mathcal E_\mu$ is of type~\BFNii-BFN2, then the statement follows by letting $f^\diamond$ be the red outer face of $\mathcal E_\mu$ that is different from $f^*$. Suppose now that $b_m$ is incident to an outer face of $\mathcal E_\mu$. If this outer face is red (hence $\mathcal E_\mu$ is of type~\BFNic-BFN1C, or~\BFNid-BFN1D, or~\BFNii-BFN2), then the proof is completed as the in the case in which $b_m$ is an internal vertex of $\mathcal E_\mu$, where the leaf used in place of $r''$ is any leaf of $A(\mathcal E_\mu)$ that is incident to the red outer face $b_m$ is incident to. If $b_m$ is incident to an outer face of $\mathcal E_\mu$ which is not red (hence $\mathcal E_\mu$ is of type~\BFNia-BFN1A or~\BFNib-BFN1B), say $\ell(\mathcal E_\mu)$, then the statement follows with $f^*=f^\diamond=r(\mathcal E_\mu)$ and with $u_\ell=u_r$ as the leaf of $A(\mathcal E_\mu)$ that shares a face with $b_m$ and that is adjacent to $f^*$ in $A(\mathcal E_\mu)$.

In the following, suppose that $\mathcal E_\mu$ contains internal red faces. Consider an end-vertex of $A({\mathcal E}_\mu)$, say $f^*$. We say that $f^*$ is a \emph{bad face} if it is an internal face of ${\mathcal E}_\mu$ and, if it is incident to $u_{\ell}$ (to $u_r$), then $u_{\ell}$ (resp.\ $u_r$) is not a leaf of $A({\mathcal E}_\mu)$. Suppose that $f^*$ is a bad face. Then the following statements hold.

\begin{enumerate}[(S1)]
\item {\em $f^*$ is an end-vertex of the backbone $\mathcal{B}$ of $A({\mathcal E})$.} Namely, if $f^*$ is not an end-vertex of $\mathcal{B}$, then there exists a leaf $r^* \neq u_\ell, u_r$ of $A({\mathcal E}_\mu)$ adjacent to $f^*$ that belongs to $\mathcal{B}$. Let $f^+$ be the face other than $f^*$ that is adjacent to $r^*$ in $\mathcal{B}$. Note that $f^+$ is red in ${\mathcal E}$ but not in ${\mathcal E}_\mu$; thus, $f^+$ is one of the outer faces of ${\mathcal E}_\mu$, say $\ell({\mathcal E}_\mu)$. This implies that $r^*$ is incident to $\ell({\mathcal E}_\mu)$. Since $r^* \neq u_\ell$, it follows that $f^+=\ell({\mathcal E}_\mu)$ is red in ${\mathcal E}_\mu$, a contradiction. 
\item {\em Every leaf $w$ of $A({\mathcal E})$ adjacent to $f^*$ is an internal vertex of ${\mathcal E}_\mu$.} Namely, suppose for a contradiction that $w$ is incident to, say, the left outer face of ${\mathcal E}_\mu$; then $\ell({\mathcal E}_\mu)$ is red (given that both $u_\ell$ and $w$ are incident to it), and hence $w$ is adjacent to both $\ell({\mathcal E}_\mu)$ and $f^*$, hence it is not a leaf, a contradiction.
\item {\em No leaf $r'$ of $A({\mathcal E})$ adjacent to $f^*$ shares a face $f$ with $b_1$.} Namely, suppose for a contradiction that such a leaf $r'$ exists. By (S2), $r'$ is an internal vertex of ${\mathcal E}_\mu$. Thus, $b_1$ belongs to $\pert{\mu}$, which implies that $b_1$ coincides with $u$ (by \cref{le:rooting-at-b1b2}) and hence that $f$ belongs to $Q_u$; since $r'$ is a leaf of $A({\mathcal E}_\mu)$ adjacent to $f^{*}$ and since all the faces of ${\mathcal E}_\mu$ belonging to $Q_u$ are red, it follows that $f=f^*$. However, since $f^*$ is an end-vertex of $\mathcal B_\mu$, we have $f^*=f_{\ell}$ or $f^*=f_r$; this implies that $u_\ell$ or $u_r$ is a leaf, a contradiction. 
\item {\em A leaf $r''$ of $A({\mathcal E})$ adjacent to $f^*$ shares a face $f$ with $b_m$.} Namely, by (S1) we have that $f^*$ is an end-vertex of the backbone $\mathcal{B}$ of $A({\mathcal E})$; then the statement follows by \cref{condition:endvertices} of \cref{le:characterization-triconnected-onesidefixed} and (S3).  
\end{enumerate}

We prove Properties~1 and~2 of the lemma's statement. 
We first show that at least one of $f^\diamond$ and $f^*$ is not a bad face. 
Namely, suppose for a contradiction that both such faces are  bad. By (S1), $f^\diamond$ and $f^*$ are the end-vertices of $\mathcal B$. By (S3), none of them is adjacent to a leaf of $A(\mathcal E)$ that shares a face with $b_1$. Thus, \cref{condition:endvertices} of \cref{le:characterization-triconnected-onesidefixed} is not satisfied with respect to $b_1$, a contradiction.
Next, we discuss the case in which exactly one of $f^\diamond$ and $f^*$ is a bad face, say $f^*$. By (S4), $f^*$ is adjacent to a leaf of $A(\mathcal E_\mu)$ sharing a face with $b_m$. Therefore, since $f^\diamond$ is not bad, it follows that $f^*$ and $f^\diamond$ satisfy Properties~1 and~2 of the lemma's statement.
Finally, we discuss the case in which none of $f^\diamond$ and $f^*$ is a bad face. 
First, consider the case in which $b_m$ is incident to one of the outer faces of $\mathcal E_\mu$, say $\ell(\mathcal E_\mu)$. Since none of $f^\diamond$ and $f^*$ is a bad face, it follows that one of $f^\diamond$ and $f^*$, say $f^*$, either coincides with $\ell(\mathcal E_\mu)$ or it holds that $u_\ell$ is a leaf of $A(\mathcal E_\mu)$ adjacent to $f^*$. In both cases, there exists a leaf of $A(\mathcal E_\mu)$ adjacent to $f^*$ that shares the outer face $\ell(\mathcal E_\mu)$ with $b_m$.
Therefore, since $f^\diamond$ is not a bad face, it follows that $f^*$ and $f^\diamond$ satisfy Properties~1 and~2 of the lemma's statement.
Second, consider the case in which $b_m$ is an internal vertex of $\mathcal E_\mu$.  Recall that, by  \cref{condition:endvertices} of \cref{le:characterization-triconnected-onesidefixed}, there exists a leaf $r''$ of $A(\mathcal E)$ that is adjacent to an end-vertex of $\mathcal B$ and that shares a face $f$ with $b_m$. Since $b_m$ is an internal vertex of $\mathcal E_\mu$, it follows that $r''$ belongs to $\pert{\mu}$ and that $f$ is an internal face of $\mathcal E_\mu$. Since $r''$ is adjacent to an end-vertex of $\mathcal B$, it cannot be adjacent to an internal vertex of $\mathcal B_\mu$.
Hence, $r''$  is a leaf of $A(\mathcal E_\mu)$ adjacent to either $f^\diamond$ and $f^*$, say $f^*$, sharing the face $f$ with $b_m$. Therefore, since $f^\diamond$ is not bad, it follows that $f^*$ and $f^\diamond$ satisfy Properties~1 and~2 of the lemma's statement. This concludes the proof.
\end{proof}

We are now ready to refine the classification for the case in which $\mathcal E_\mu$ contains at least one internal red face. 

We start with type \BFIz-BFI0. If $b_m$ is incident to one of $\ell(\mathcal E_\mu)$ or $r(\mathcal E_\mu)$, say $\ell(\mathcal E_\mu)$, and $f^*$ and $f^\diamond$ both belong to $Q_u$ (that is, the backbone $\mathcal B_\mu$ of $A(\mathcal E_\mu)$ is the subpath of $Q_u$ between $f_\ell$ and $f_r$), then we say that $\mathcal E_\mu$ is of {\bf type~\BFIza-BFI0A}. Otherwise, we say that $\mathcal E_\mu$ is of {\bf type~\BFIzb-BFI0B}.

We now deal with type \BFIi-BFI1. In this case, only one of the two outer faces of $\mathcal E_\mu$ is red, say $r(\mathcal E_\mu)$; then $\mathcal B_\mu$ contains the edge $(r(\mathcal E_\mu),u_r)$ and, in case $Q_u$ contains at least one vertex corresponding to a red face of $\mathcal E_\mu$, the subpath of $Q_u$ between $u_r$ and $f_\ell$. 

We first discuss the case in which $\mathcal B_\mu$ does not contain any other vertex; since $\mathcal E_\mu$ contains internal red faces, it follows that $Q_u$ contains at least one vertex corresponding to a red face of~$\mathcal E_\mu$. Further, we can assume that $f^\diamond=r(\mathcal E_\mu)$ and $f^*=f_\ell$. Suppose that $b_m$ is incident to $\ell(\mathcal E_\mu)$. Then we say that $\mathcal E_\mu$ is of {\bf type~\BFIia-BFI1A} if $b_m$ shares a face with a leaf of $A(\mathcal E_\mu)$ incident to $r(\mathcal E_\mu)$, and it is of {\bf type~\BFIib-BFI1B} otherwise. Suppose next that $b_m$ is not incident to $\ell(\mathcal E_\mu)$; by \cref{le:bf-endvertex-backbone}, we have that $b_m$ shares a face with a leaf of $A(\mathcal E_\mu)$ adjacent to $f^*$ or $f^\diamond$. Then we say that $\mathcal E_\mu$ is of {\bf type~\BFIic-BFI1C} or {\bf type~\BFIid-BFI1D} if $b_m$ only shares a face with a leaf of $A(\mathcal E_\mu)$ incident to
$f^*=f_\ell$ or incident to $f^\diamond=r(\mathcal E_\mu)$, respectively, while  $\mathcal E_\mu$ is of {\bf type~\BFIie-BFI1E} if $b_m$ shares faces both with a leaf of $A(\mathcal E_\mu)$ incident to $f^*=f_\ell$ and with a leaf of $A(\mathcal E_\mu)$ incident to $f^\diamond=r(\mathcal E_\mu)$. 

Next, we discuss the case in which $\mathcal B_\mu$ contains vertices different from $r(\mathcal E_\mu)$ and not belonging to $Q_u$. If there exists a red vertex different from $u_r$ that is incident to $r(\mathcal E_\mu)$ and belongs to $\mathcal B_\mu$, then we say that $\mathcal E_\mu$ is of {\bf type~\BFIif-BFI1F}, otherwise we say that $\mathcal E_\mu$ is of {\bf type~\BFIig-BFI1G}. 
Observe that, by \cref{le:structure-A-mu-BF}, if $\mathcal E_\mu$ is of type~\BFIif-BFI1F, then $f_\ell$ is an end-vertex of $\mathcal B_\mu$ and $u_\ell$ is a leaf of $A(\mathcal E_\mu)$ adjacent to it; on the other hand, if $\mathcal E_\mu$ is of type~\BFIig-BFI1G, then $r(\mathcal E_\mu)$ is an end-vertex of $\mathcal B_\mu$ and $u_\ell$ is a vertex of $\mathcal B_\mu$.

We now consider the case in which $\mathcal E_\mu$ is of type \BFIii-BFI2. Then, $\mathcal B_\mu$ contains the edge $(r(\mathcal E_\mu),u_r)$, the path $Q_u$, and the edge $(u_\ell, \ell(\mathcal E_\mu))$. If $\mathcal B_\mu$ does not contain any other vertex, then we say that $\mathcal E_\mu$ is of {\bf type \BFIiia-BFI2A}; otherwise, it is of {\bf type \BFIiib-BFI2B}. 
Observe that, by \cref{le:structure-A-mu-BF}, if $\mathcal E_\mu$ is of type~\BFIiib-BFI2B, then one of the outer faces of $\mathcal E_\mu$ is an end-vertex of $\mathcal B_\mu$, while the other one is not.

%
%


The above discussion leads to the following.

\begin{lemma} \label{le:relevant-in-extensible-BF}
	If $\mu$ is of type \BF-BF, any extensible embedding of $\pert{\mu}$ is one of the types~\BFNza-BFN0A, \BFNzb-BFN0B, \BFNia-BFN1A, \BFNib-BFN1B, \BFNic-BFN1C,  \BFNid-BFN1D, \BFNii-BFN2, 
		\BFIza-BFI0A, \BFIzb-BFI0B,
		\BFIia-BFI1A, \BFIib-BFI1B, \BFIic-BFI1C, \BFIid-BFI1D, \BFIie-BFI1E, \BFIif-BFI1F, \BFIig-BFI1G,
		\BFIiia-BFI2A, and~\BFIiib-BFI2B.
\end{lemma}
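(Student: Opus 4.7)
The plan is to argue that the listed types exhaust all possibilities by combining the preceding structural lemmas with the refinement rules introduced just before the statement. First, by \cref{le:structure-A-mu-BF}, the graph $A(\mathcal E_\mu)$ is a single caterpillar, so Feature (F1) is determined. Features (F2) and (F3) then partition the extensible embeddings into the six base classes \BFNz-BFN0, \BFNi-BFN1, \BFNii-BFN2, \BFIz-BFI0, \BFIi-BFI1, \BFIii-BFI2, according to the number of red outer faces (0, 1, or 2) and the presence or absence of an internal red face. Since these two features are mutually independent and exhaustive, every extensible embedding of $\pert{\mu}$ belongs to exactly one of these six classes.

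Second, I would handle the embeddings with no internal red face. In this case, the path $Q_u$ from \cref{le:structure-A-mu-BF} reduces to the single vertex $u_r = u_\ell$, so $\pert{\mu}^-$ decomposes into the edge $(u,u_r)$ plus the pertinent graph of a type-\RF-RF child $\nu$ whose embedding is contained in $\mathcal E_\mu$. By \cref{le:relevant-in-extensible-RF}, the embedding of $\pert{\nu}$ falls into one of the seven types of the families \RFNz-RFN0, \RFNi-RFN1, \RFNii-RFN2. Pulling this classification back to $\mathcal E_\mu$ yields exactly the seven types \BFNza-BFN0A, \BFNzb-BFN0B, \BFNia-BFN1A, \BFNib-BFN1B, \BFNic-BFN1C, \BFNid-BFN1D, and \BFNii-BFN2.

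Third, I would handle the embeddings with at least one internal red face, which is the more delicate part and the main obstacle. Here I would invoke \cref{le:bf-endvertex-backbone} to fix the pair $(f^*,f^\diamond)$ of end-vertices of $\mathcal B_\mu$ so that $b_m$ shares a face with a leaf of $A(\mathcal E_\mu)$ adjacent to $f^*$, and so that $f^\diamond$ is either an outer face of $\mathcal E_\mu$ or an internal face incident to a leaf $u_\ell$ or $u_r$. I would then go case-by-case on (F2): for type \BFIz-BFI0, the subdivision into \BFIza-BFI0A and \BFIzb-BFI0B comes from distinguishing whether $b_m$ is incident to an outer face and both $f^*,f^\diamond\in Q_u$; for type \BFIi-BFI1, the seven subtypes \BFIia-BFI1A through \BFIig-BFI1G correspond to whether $\mathcal B_\mu$ extends beyond $Q_u\cup\{r(\mathcal E_\mu),u_r\}$ (leading to \BFIif-BFI1F and \BFIig-BFI1G, distinguished by whether $u_\ell$ lies on $\mathcal B_\mu$), and otherwise to the combinations of whether $b_m$ is incident to $\ell(\mathcal E_\mu)$ and which end-vertex among $f^*$ and $f^\diamond$ it shares a face with; for type \BFIii-BFI2, the two subtypes \BFIiia-BFI2A and \BFIiib-BFI2B distinguish whether $\mathcal B_\mu$ strictly coincides with the edges $(\ell(\mathcal E_\mu),u_\ell)\cup Q_u\cup(u_r,r(\mathcal E_\mu))$ or contains additional vertices.

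In each subcase, the classification condition is symmetric up to swapping $f^*$ with $f^\diamond$ and $\ell(\mathcal E_\mu)$ with $r(\mathcal E_\mu)$, which matches the convention adopted in \cref{le:bf-endvertex-backbone}; hence every extensible embedding is accounted for. The main obstacle is ensuring that the subcase analysis for the \BFIi-BFI1 family is complete: one must verify, using \cref{le:structure-A-mu-BF} and the structural constraints on $\mathcal B_\mu$, that no further combination of positions of $b_m$ and of vertices of $\mathcal B_\mu$ outside $Q_u$ is possible, so that the seven subtypes truly exhaust this family. Once this is checked, the statement follows by collecting all the subtypes produced.
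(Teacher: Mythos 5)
Your proposal is correct and mirrors the paper's own argument: the lemma is stated there as a summary of exactly the discussion you reconstruct — $A(\mathcal E_\mu)$ is a single caterpillar by \cref{le:structure-A-mu-BF}, Features (F2)–(F3) give the six base classes, the N-classes are refined through the type-\RF-RF child $\nu$ and its RFN subtypes, and the I-classes are refined via \cref{le:bf-endvertex-backbone} and the structure of $\mathcal B_\mu$ relative to $Q_u$ and the outer faces. The exhaustiveness you flag for the \BFIi-BFI1 family is settled in the paper the same way you suggest, since each subcase distinction is an exhaustive dichotomy and \cref{le:bf-endvertex-backbone} guarantees that $b_m$ shares a face with a leaf adjacent to $f^*$ or $f^\diamond$.
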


This concludes our classification of the extensible embeddings of  $\pert{\mu}$.
By \cref{le:structure-A-mu-RF,le:relevant-in-extensible-RE,le:structure-A-mu-BE,le:structure-A-mu-BP-BB,le:structure-A-mu-BF}, we have the following.

\begin{observation}\label{obs:two-components}
If $A(\mathcal{E}_{\mu})$ contains two connected components, then $\mu$ is of type~\BP-BP or type~\BB-BB.
\end{observation}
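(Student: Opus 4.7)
The plan is to prove the observation by a direct case analysis over the six node types introduced in \cref{sse:node-classification}: \RE-RE, \RF-RF, \BE-BE, \BP-BP, \BB-BB, and \BF-BF. The strategy is to show by contrapositive that if $\mu$ is of type \RE-RE, \RF-RF, \BE-BE, or \BF-BF, then $A(\mathcal{E}_\mu)$ has at most one connected component; since the classification is exhaustive, the only remaining types compatible with two components are \BP-BP and \BB-BB, which is exactly what the observation asserts.

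First I would dispatch the type \RE-RE case: by \cref{le:relevant-in-extensible-RE}, $\pert{\mu}$ admits a unique extensible embedding and $A(\mathcal{E}_\mu)$ consists of a single red vertex (the red pole $v$), which is trivially a single connected component. Next, for type \RF-RF, \cref{le:structure-A-mu-RF} states explicitly that $A(\mathcal{E}_\mu)$ is a caterpillar, hence connected. For type \BE-BE, \cref{le:structure-A-mu-BE} asserts that $A(\mathcal{E}_\mu)$ is a path between $u_\ell$ and $u_r$, again connected. Finally, for type \BF-BF, \cref{le:structure-A-mu-BF} establishes that $A(\mathcal{E}_\mu)$ is a caterpillar containing the path $Q_u$, which is connected as well.

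Having shown that none of the types \RE-RE, \RF-RF, \BE-BE, \BF-BF can produce an auxiliary graph with two connected components, the contrapositive yields the statement: whenever $A(\mathcal{E}_\mu)$ has two connected components, $\mu$ is necessarily of type \BP-BP or \BB-BB. I do not foresee any obstacle, as the observation is a direct corollary of the structural lemmas developed earlier in \cref{sse:embedding-classification}; the only thing to verify is the completeness of the case analysis, which follows immediately from the fact that \cref{sse:node-classification} partitions the nodes of $\mathcal T$ (other than $\rho$) into exactly these six types.
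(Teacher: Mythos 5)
Your proof is correct and follows essentially the same route as the paper, which derives the observation directly from \cref{le:relevant-in-extensible-RE,le:structure-A-mu-RF,le:structure-A-mu-BE,le:structure-A-mu-BF} (connectedness of $A(\mathcal{E}_\mu)$ for the types \RE, \RF, \BE, \BF), leaving \BP{} and \BB{} as the only possibilities by the exhaustiveness of the node classification.
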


\subsection{Handling Relevant Embeddings}\label{sse:handling-rel-emb}

Consider a node $\mu$ of $\mathcal{T}$ different from $\rho$. We begin with the following definition.

\begin{definition} \label{def:relevant}
An embedding $\mathcal E_{\mu}$ of $\pert{\mu}$ is \emph{ relevant} if it has a type defined in \cref{sse:embedding-classification}. 
\end{definition}

In view of the discussion of \cref{sse:embedding-classification}, we have the following.

\begin{lemma} \label{le:relevant-in-extensible}
Every extensible embedding of $\pert{\mu}$ is relevant.
\end{lemma}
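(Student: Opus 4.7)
The plan is to prove the statement by a straightforward case analysis on the type of $\mu$. By the classification of \cref{sse:node-classification}, which is exhaustive thanks to \cref{le:no-red-red}, every node $\mu\neq\rho$ of $\mathcal T$ falls into exactly one of the six classes \RE-RE, \RF-RF, \BE-BE, \BP-BP, \BB-BB, and \BF-BF. For each of these six cases, the corresponding lemma among \cref{le:relevant-in-extensible-RE,le:relevant-in-extensible-RF,le:relevant-in-extensible-BE,le:relevant-in-extensible-BP,le:relevant-in-extensible-BB,le:relevant-in-extensible-BF} already asserts that any extensible embedding of $\pert{\mu}$ has one of the types listed in \cref{sse:embedding-classification}, which by \cref{def:relevant} is exactly what it means to be relevant. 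Collecting the six statements therefore yields the claim.

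The bulk of the work has already been done in the six lemmas cited above, so the proof is essentially bookkeeping. The one thing I would still want to make explicit is that the case analysis is indeed complete: the six node types partition the non-root nodes of $\mathcal T$, and within each type the classification based on Features (F1)--(F3), together with the subsequent refinements introduced in \cref{sse:embedding-classification} (such as the position of $b_m$, the identity of the end-vertices of $\mathcal B_\mu$, or the presence of additional vertices on $\mathcal B_\mu$ beyond those forced by $Q_u$), is a sequence of dichotomies that jointly exhaust all possible configurations of $A(\mathcal E_\mu)$.

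The main (minor) obstacle is checking that no configuration slips through the refinements. For the types~\RF-RF and~\BF-BF in particular, I would walk through the subtypes in the order in which they were introduced and match each option of the defining dichotomies (internal vs.\ outer incidence of $b_m$; existence or not of an internal red face; one vs.\ two red outer faces; the possible structures of $\mathcal B_\mu$ relative to $Q_u$) to exactly one of the enumerated subtypes. The structural constraints used here are all consequences of \cref{le:structure-A-mu-RF,le:structure-A-mu-BE,le:structure-A-mu-BP-BB,le:structure-A-mu-BF} together with \cref{le:rooting-at-b1b2}, so no new combinatorial argument is needed.
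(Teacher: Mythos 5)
Your proposal is correct and follows essentially the same route as the paper, whose proof is a one-line appeal to \cref{le:relevant-in-extensible-RE,le:relevant-in-extensible-RF,le:relevant-in-extensible-BE,le:relevant-in-extensible-BP,le:relevant-in-extensible-BB,le:relevant-in-extensible-BF}, the per-type classification lemmas of \cref{sse:embedding-classification}. Your additional remarks about the exhaustiveness of the six node types (via \cref{le:no-red-red}) and of the subtype refinements are sound but already implicit in those lemmas, so no further argument is needed.
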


\begin{fullproof}
The lemma descends from \cref{le:relevant-in-extensible-RE,le:relevant-in-extensible-RF,le:relevant-in-extensible-BE,le:relevant-in-extensible-BP,le:relevant-in-extensible-BF}.
\end{fullproof}

A property we are going to use is that relevant embeddings of $\pert{\mu}$ are composed of relevant embeddings of the pertinent graphs of the children of $\mu$. Together with \cref{le:relevant-in-extensible}, this allows us to only focus on the relevant embeddings of the pertinent graph of each node of the SPQR-tree.

\begin{lemma} \label{le:relevant-relevant}
Consider any relevant embedding $\mathcal E_{\mu}$ of $\pert{\mu}$. Then, for any child $\nu$ of $\mu$, the restriction $\mathcal E_{\nu}$ of $\mathcal E_{\mu}$ to $\pert{\nu}$ defines a relevant embedding of $\pert{\nu}$.
\end{lemma}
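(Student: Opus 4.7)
The plan is to proceed by a case analysis on the type of $\nu$ (and, implicitly, on the type of $\mathcal{E}_\mu$), and to show directly that the restriction $\mathcal{E}_\nu$ satisfies the defining features of one of the types listed in \cref{sse:embedding-classification} for nodes of the type of $\nu$. First I would observe that $\mathcal{E}_\nu$ is a well-defined planar embedding of $\pert{\nu}$: it is simply the embedding that $\mathcal{E}_\mu$ induces on $\pert{\nu}$ when we ``forget'' the edges and vertices of $\pert{\mu}$ that lie outside $\pert{\nu}$. Consequently, the defining features of the embedding classification, namely (F1) the number of connected components of $A(\mathcal{E}_\nu)$, (F2) the number of red outer faces of $\mathcal{E}_\nu$, and (F3) the presence of internal red faces, as well as all the ancillary distinctions based on the position of $b_m$ or on which leaves of $A(\mathcal{E}_\nu)$ are adjacent to the end-vertices of the backbones, are all well-defined quantities, derivable from $\mathcal{E}_\mu$.

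The cases in which $\nu$ is of type~\RE-RE or~\BE-BE are immediate: by \cref{le:structure-node-RE,le:structure-node-BE} the pertinent graph $\pert{\nu}$ has a unique embedding up to a relabeling of the neighbors of the poles, so $\mathcal{E}_\nu$ is automatically one of \RE-RE, \BE-BE slim, or \BE-BE fat, each of which is a type defined in \cref{sse:embedding-classification}. The remaining cases ($\nu$ of type~\RF-RF, \BP-BP, \BB-BB, or \BF-BF) are handled by relating $A(\mathcal{E}_\nu)$ to $A(\mathcal{E}_\mu)$: every internal red face of $\mathcal{E}_\nu$ corresponds either to an internal red face of $\mathcal{E}_\mu$ with the same red vertices of $\pert{\nu}$ incident to it, or to an internal face created by collapsing a virtual edge of $\skel(\mu)$ incident to $\nu$'s poles; similarly, each outer face of $\mathcal{E}_\nu$ corresponds to a face of $\mathcal{E}_\mu$ (internal to it or not). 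Since $\mathcal{E}_\mu$ is relevant, $A(\mathcal{E}_\mu)$ has the structure guaranteed by \cref{le:structure-A-mu-RF,le:structure-A-mu-BP-BB,le:structure-A-mu-BF} (depending on the type of $\mu$), and the induced structure on $A(\mathcal{E}_\nu)$ is then forced to satisfy the hypotheses of the corresponding structural lemma for $\nu$, so that $\mathcal{E}_\nu$ falls into one of the classified types.

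The main obstacle is the sheer number of combinations of (type of $\mathcal{E}_\mu$, type of $\nu$) to check, especially when $\nu$ is of type~\BF-BF (which has $18$ subtypes) or when $\mu$ is of type~\BP-BP or~\BB-BB (where $A(\mathcal{E}_\mu)$ may consist of two caterpillars split by the path $P_{u,v}$). For these cases, the verification reduces to tracking: (i) whether $b_m$ lies in $\pert{\nu}$, in $\overline{H}_\nu \cap \pert{\mu}$, or in $\rest{\mu}$, and how this constrains which outer face of $\mathcal{E}_\nu$ is incident to $b_m$ and to the leaves adjacent to the end-vertices of the backbones; and (ii) whether the path of $A(\mathcal{E}_\mu)$ connecting the poles (in the \RF-RF, \BF-BF cases) or the path $Q_u$ of \cref{le:structure-A-mu-BF} passes through vertices of $\pert{\nu}$. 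In each subcase, the property characterizing the subtype of $\mathcal{E}_\mu$ implies a corresponding property of $\mathcal{E}_\nu$, yielding a defined type, which completes the case analysis.
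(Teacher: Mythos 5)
Your plan reduces the lemma to a long direct case analysis, but the step that carries all the weight---``in each subcase, the property characterizing the subtype of $\mathcal E_\mu$ implies a corresponding property of $\mathcal E_\nu$''---is exactly the statement to be proved, restated case by case, with no mechanism given for establishing it. Note also that you cannot simply invoke \cref{le:structure-A-mu,le:structure-A-mu-RF,le:structure-A-mu-BP-BB,le:structure-A-mu-BF} (or \cref{le:rf-endvertex-backbone,le:bf-endvertex-backbone}) for $\mathcal E_\nu$: those lemmas are proved under the hypothesis that the embedding is \emph{extensible}, i.e., extends to a neat embedding of $H$, which is precisely what a merely relevant $\mathcal E_\mu$ (and hence its restriction) need not be. If $\mathcal E_\mu$ were extensible the lemma would be immediate, since a restriction of an extensible embedding is extensible and \cref{le:relevant-in-extensible} applies; the entire difficulty is the relevant-but-not-extensible case, and there your sketch offers only bookkeeping of $b_m$ and of $Q_u$, while the delicate points are left unargued---for instance, that $A(\mathcal E_\nu)$, whose definition counts outer faces and discards undecided $rb$-trivial components differently from $A(\mathcal E_\mu)$ (components at the poles of $\nu$ become undecided even if they were placed in $\mathcal E_\mu$), is acyclic and has the backbone end-vertex, leaf, and $b_m$-incidence properties demanded by the child subtypes. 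Your description of internal faces of $\mathcal E_\nu$ as possibly ``created by collapsing a virtual edge'' is also off: in a restriction nothing is collapsed; the internal faces of $\mathcal E_\nu$ are faces of $\mathcal E_\mu$, and the two outer faces of $\mathcal E_\nu$ correspond to the two faces of the embedding of $\skel(\mu)$ incident to the virtual edge of $\nu$.

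The paper avoids the case explosion with an idea your proposal is missing: when $\mathcal E_\mu$ is not extensible with respect to $H$, it builds a surrogate instance $H'$ by replacing $\rest{\mu}$ with a constant-size gadget (three constructions, according to whether $\mu$ is of type BE/BF, RE/RF, or BP/BB), and it uses the type and flip of $\mathcal E_\mu$ to choose the outer face in which the gadget and the undecided $rb$-trivial components are placed, so that the resulting embedding satisfies Conditions (C1) and (C2) of \cref{le:characterization-triconnected-onesidefixed} and is neat. Hence $\mathcal E_\mu$ is extensible with respect to $H'$, so its restriction $\mathcal E_\nu$ is too, and \cref{le:relevant-in-extensible}---whose classification is intrinsic to $\pert{\nu}$ and unaffected by the change of host---gives relevance of $\mathcal E_\nu$. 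A direct route like yours could perhaps be pushed through, but it would amount to re-proving, child type by child type, statements of the strength of \cref{le:rf-endvertex-backbone,le:bf-endvertex-backbone} without the extensibility hypothesis; as written, that content is missing.
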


\begin{fullproof}
If $\mathcal E_{\mu}$ is extensible, then $\mathcal E_{\nu}$ is extensible as well. Hence, by \cref{le:relevant-in-extensible}, we have that $\mathcal E_{\nu}$ is a relevant embedding of $\pert{\nu}$. 

If $\mathcal E_{\mu}$ is not extensible then, in order to apply a similar argument, we change $\rest{\mu}$ so that~$\mathcal E_{\mu}$ is extensible. More precisely, we define a new instance $\langle G',\pi'_b\rangle$ of the \btpbefP problem such that: (i) the black saturation $H'$ of $\langle G',\pi'_b\rangle$ is an $rb$-augmented component; (ii) $H'$ contains $\pert{\mu}$ as a subgraph and there exists no edge with an end-vertex in $\pert{\mu}-\{u,v\}$ and an end-vertex in $H'-\pert{\mu}$; and (iii) there exists a neat embedding $\mathcal E'$ of $H'$ that extends $\mathcal E_{\mu}$. This implies that $\mathcal E_{\mu}$ is extensible (with respect to the instance $\langle G',\pi'_b\rangle$). Hence, the embedding $\mathcal E_{\nu}$ of $\pert{\nu}$ is also extensible (still with respect to $\langle G',\pi'_b\rangle$). By~\cref{le:relevant-in-extensible}, we have that $\mathcal E_{\nu}$ is a relevant embedding of $\pert{\nu}$. In  particular, since $\mathcal E_\mu$ is relevant,
each $rb$-trivial component $(r,b)$ of $\pert{\mu}$, where $b$ is black and $b$ is different from the poles of $\mu$, lies in a face~of~$\mathcal E_\mu$ that corresponds to a face of the embedding (determined by $\mathcal E_\mu$) of the skeleton of the proper allocation node of $b$.

We construct $H'$ as follows (the definition of the instance $\langle G',\pi'_b\rangle$ is implied by the construction of $H'$). We initialize $H'=H$. We then remove from $H'$ the vertices and edges not in $\pert{\mu}$, and we insert vertices and edges according to the following case distinction. 

\begin{figure}[tb!]
	\centering
	\subfloat[Case~1]{
		\includegraphics[height=.2\textwidth,page=1]{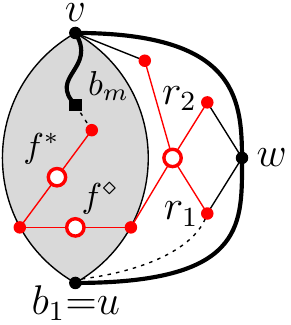} 
		\label{fig:case1-extended}
	}
	\subfloat[Case~2]{
		\includegraphics[height=.2\textwidth,page=2]{extended} 
		\label{fig:case2-extended}
	}
	\subfloat[Case~3]{
		\includegraphics[height=.2\textwidth,page=3]{extended} 
		\label{fig:case3-extended}
	}
\end{figure}
		
\emph{Case 1:} If $\mu$ is of type~\BE-BE or~\BF-BF, then we add to $H'$ a path $(u,w,v)$, where $w$ is a black vertex, and two red vertices $r_1$ and $r_2$ that are only adjacent to $w$; see \cref{fig:case1-extended}. Now the black vertices of $H'$ induce a path in which $b_1=u$; further, $H'$ contains at least three black vertices, namely $u$, $w$, and $v$, and at least three red vertices, namely $r_1$, $r_2$, and any neighbor of $u$ in $\pert{\mu}$. A neat embedding $\mathcal E'$ of $H'$ is constructed starting from $\mathcal E_{\mu}$ by embedding the path $(u,w,v)$ in such a way that the undecided $rb$-trivial components of $\mathcal E_{\mu}$, if any, as well as the $rb$-trivial components $(w,r_1)$ and $(w,r_2)$, are all incident to the same outer face $x(\mathcal E_\mu)$, with $x \in \{r,\ell\}$, of $\mathcal{E}_\mu$. 
In particular, if $\mu$ is of type~\BE-BE or \BFNz-BFN0, then we set $x(\mathcal E_\mu)$ to any of $\ell(\mathcal E_\mu)$ or $r(\mathcal E_\mu)$. Otherwise, we select $x(\mathcal E_\mu)$ as follows. Let $f^*$ and $f^\diamond$ be the end-vertices of the backbone $\mathcal B_\mu$ of the caterpillar $A(\mathcal E_\mu)$. Recall that, by \cref{le:bf-endvertex-backbone}, up to a relabeling of $f^*$ and $f^\diamond$, we have that
$f^\diamond$ is either an outer face of ${\mathcal E}_\mu$ (then we set $x(\mathcal E_\mu) = f^\diamond$) or is an internal face of ${\mathcal E}_\mu$ incident to $u_{\ell}$ (to $u_r$), which is a leaf of $A({\mathcal E}_\mu)$ (then we set $x = \ell$ or $x=r$ depending on whether $u_{\ell}$ or $u_{r}$ is a leaf of $A({\mathcal E}_\mu)$ adjacent to $f^\diamond$, respectively).
The SPQR-tree of $H'$ is rooted at the edge $(b_1,w)$. The proper allocation node of both $u=b_1$ and $w$ is the root Q-node, while the proper allocation node of $v$ is the S-node child of the root Q-node. Hence, the faces of $\mathcal{E}'$ in which $r_1$, $r_2$, and the undecided $rb$-trivial component of $\mathcal E_\mu$ have been embedded correspond to faces of the unique embedding of the skeleton of the proper allocation node of $u$, $v$ and $w$.

\emph{Case 2:} If $\mu$ is of type~\RE-RE or~\RF-RF, then we add to $H'$ a path $(u,w,w',v)$, where $w$ and $w'$ are black vertices, and two red vertices $r_1$ and $r_2$ that are only adjacent to $w'$; see \cref{fig:case2-extended}.
Now, the black vertices of $H'$ induce a path in which $b_1=w'$; further, $H'$ contains at least three black vertices, namely $u$, $w$, and $w'$, and at least three red vertices, namely $v$, $r_1$, and $r_2$. A neat embedding $\mathcal E'$ of $H'$ is constructed starting from $\mathcal E_{\mu}$ by embedding the path $(u,w,w',v)$ in such a way that the undecided $rb$-trivial components of $\mathcal E_{\mu}$, if any, as well as the $rb$-trivial components $(w',r_1)$ and $(w',r_2)$, are all incident to the same outer face $x(\mathcal E_\mu)$, with $x \in \{r,\ell\}$, of $\mathcal{E}_\mu$.
In particular, if $\mu$ is of type~\RE-RE, \RFNz-RFN0, or \RFIz-RFI0, then we set $x(\mathcal E_\mu)$ to any of $\ell(\mathcal E_\mu)$ or $r(\mathcal E_\mu)$. Further, if $\mu$ is of type~\RFNi-RFN1 or \RFIi-RFI1, then we set $x(\mathcal E_\mu)$ to the one of $\ell(\mathcal E_\mu)$ or $r(\mathcal E_\mu)$ that is red in $\mathcal E_\mu$. Suppose next that $\mu$ is of type~\RFNii-RFN2. Then, by \cref{le:structure-A-mu-RF}, we have that $\ell(\mathcal{E}_\mu)$ and $r(\mathcal{E}_\mu)$ are the end-vertices of $\mathcal{B}_\mu$. If $b_m$ is adjacent to a red vertex incident to $\ell(\mathcal{E}_\mu)$, then we set $x=r$, otherwise $b_m$ is adjacent to a red vertex incident to $r(\mathcal{E}_\mu)$, since $b_m$ has at least one red neighbor in $\pert{\mu}$, and we set $x=\ell$. Finally, suppose that $\mu$ is of type~\RFIii-RFI2. Then, by \cref{le:rf-endvertex-backbone}, we have that $b_m$ shares a face with a red vertex incident to an end-vertex of $\mathcal{B}_\mu$ that corresponds to an internal face of $\mathcal{E}_\mu$, while the other end-vertex of $\mathcal{B}_\mu$ corresponds to an outer face of $\mathcal{E}_\mu$, say $\ell(\mathcal{E}_\mu)$. Then, we set $x=\ell$.
The SPQR-tree of $H'$ is rooted at the edge $(w',w)$. The proper allocation node of both $w'=b_1$ and $w$ is the root Q-node, while the proper allocation node of both $u$ and $v$ is the S-node child of the root Q-node. Hence, the faces of $\mathcal{E}'$ in which $r_1$, $r_2$, and the undecided $rb$-trivial component of $\mathcal E_\mu$ have been embedded correspond to faces of the unique embedding of the skeleton of the proper allocation node of $w'$, $u$ and $v$.

\emph{Case 3:} If $\mu$ is of type~\BP-BP or~\BB-BB, then we add to $H'$ a path $(u,w,r,v)$, where $w$ is a black vertex and $r$ is a red vertex, and two red vertices $r_1$ and $r_2$ that are only adjacent to~$w$; see \cref{fig:case3-extended}.
Now, the black vertices of $H'$ induce a path in which $b_1=w$; further, $H'$ contains at least three black vertices, namely $u$, $w$, and $v$, and at least three red vertices, namely $r$, $r_1$, and $r_2$. A neat embedding $\mathcal E'$ of $H'$ is constructed starting from $\mathcal E_{\mu}$ by embedding the path $(u,w,r,v)$ in such a way that the undecided $rb$-trivial components of $\mathcal E_{\mu}$, if any, as well as the $rb$-trivial components $(w,r_1)$ and $(w,r_2)$, are all incident to the same outer face $x(\mathcal E_\mu)$, with $x \in \{r,\ell\}$, of $\mathcal{E}_\mu$.
In particular, if $\mu$ is of type~\BPi-BP1 or ~\BPiii-BP3, then we set $x(\mathcal E_\mu)$ to any of $\ell(\mathcal E_\mu)$ or $r(\mathcal E_\mu)$. Also,  if $\mu$ is of type~\BPii-BP2, \BBii-BB2, \BPiv-BP4, or~\BBiv-BP4, then we set $x(\mathcal E_\mu)$  to be the unique red outer face of $\mathcal E_\mu$. Further,  if $\mu$ is of type~\BPv-BP5 or \BBv-BB5, then we set $x(\mathcal E_\mu)$  to be the outer face of $\mathcal E_\mu$ that forms the backbone of one of the two caterpillars of $A(\mathcal E_\mu)$. Finally,  if $\mu$ is of type~\BBiii-BB3, then we set $x(\mathcal E_\mu)$  to be the outer face of $\mathcal E_\mu$ that is not incident to any red vertex that shares a face with $b_m$.
The SPQR-tree of $H'$ is rooted at the edge $(w,u)$. The proper allocation node of both $w=b_1$ and $u$ is the root Q-node, while the proper allocation node of $v$ is the S-node child of the root Q-node. Hence, the faces of $\mathcal{E}'$ in which $r_1$, $r_2$, and the undecided $rb$-trivial component of $\mathcal E_\mu$ have been embedded correspond to faces of the unique embedding of the skeleton of the proper allocation node of $w$, $u$ and $v$.

In all the three considered cases, ${\mathcal E}'$ satisfies the conditions of \cref{le:characterization-triconnected-onesidefixed}. 
First, $A({\mathcal E}')$ is a caterpillar. In particular, in \emph{Case 3}, if $A((\mathcal E_\mu)$ is composed of two caterpillars, then they are joined into a single one by means of the vertex $r$; whereas in the other cases the backbone of $A({\mathcal E}')$ coincides with the one of $A((\mathcal E_\mu)$, except, possibly, for $x(\mathcal E_\mu)$.
Second, $x(\mathcal E_\mu)$ is always an end-vertex of the backbone of $A({\mathcal E}')$, and both $b_1$ and $r_1$ are incident to $x(\mathcal E_\mu)$, which implies that we can set $r'=r_1$.
Finally, $b_m$ shares a face with a leaf adjacent to the other end-vertex $f^*$ of the backbone of $A({\mathcal E}')$. In fact, $b_m$ belongs to $\pert{\mu}$ and $f^*$ is also an end-vertex of the backbone of $A(\mathcal E_\mu)$. Therefore, a leaf $r''$ of $A({\mathcal E}')$ incident to $f^*$ that shares a face with $b_m$ exists,~already,~in~$A(\mathcal E_\mu)$.
\end{fullproof}

We will exploit the following definition.

\begin{definition}\label{def:flipped}
	Let $\mathcal{E}_\mu$ be a relevant embedding of $\pert{\mu}$. We say that $\mathcal{E}_\mu$ is \emph{$x$-flipped}, with $x \in \{\ell, r\}$, if one of the following conditions holds:
	\begin{enumerate}[(a)]
		\item the outer face $x(\mathcal{E}_{\mu})$ is red, and $\mathcal{E}_{\mu}$ has one of the types \RFNi-RFN1, \RFIi-RFI1, \BPii-BP2, \BBii-BB2, \BPiv-BP4, \BBiv-BB4, \BFNi-BFN1, and \BFIi-BFI1; or
		\item the outer face $x(\mathcal{E}_{\mu})$ is the only vertex of the backbone of one of the two caterpillars composing $A(\mathcal{E}_{\mu})$, and $\mathcal{E}_{\mu}$ has one of the types \BPv-BP5 or \BBv-BB5; or
		\item the outer face $x(\mathcal{E}_{\mu})$ is an end-vertex of the backbone of $A(\mathcal{E}_{\mu})$, and $\mathcal{E}_{\mu}$ has one of the types \RFIii-RFI2 and \BFIiib-BFI2B; or
		\item the vertex $u_x$ is a leaf adjacent to an end-vertex of the backbone of $A(\mathcal{E}_{\mu})$, and $\mathcal{E}_{\mu}$ has the type \BFIzb-BFI0B; or
		\item the vertex $b_m$ is incident to the outer face $x(\mathcal{E}_{\mu})$, and $\mathcal{E}_{\mu}$ has one of the types \RFNzb-RFN0B, \BFNzb-BFN0B, and \BFIza-BFI0A; or
		\item there exists a red vertex incident to the outer face $x(\mathcal{E}_{\mu})$ that shares a face with $b_m$, and $\mathcal{E}_{\mu}$ has one of the types \RFNii-RFN2, \BBiii-BB3, \BFNii-BFN2, and \BFIiia-BFI2A;
		\item $\mathcal{E}_{\mu}$ has one of the other types, namely \RE-RE, \BE-BE, \RFNza-RFN0A, \RFIz-RFI0, \BPi-BP1, \BPiii-BP3, and \BFNza-BFN0A.
	\end{enumerate}
\end{definition}

We remark that, according to \cref{def:flipped}\blue{(g)}, an embedding $\mathcal E_\mu$ of $\pert{\mu}$ of type \RE-RE, \BE-BE, \RFNza-RFN0A, \RFIz-RFI0, \BPi-BP1, \BPiii-BP3, and \BFNza-BFN0A is both $r$- and $\ell$-flipped. This is also true if $\mathcal E_\mu$ is of type \BFIiia-BFI2A and $b_m$ shares a face both with a leaf of $A(\mathcal E_\mu)$ incident to $f^*$ and with a leaf of $A(\mathcal E_\mu)$ incident to $f^\diamond$.

\remove{
In the following we establish that the type of a relevant embedding is the only information needed to determine whether that embedding is extensible.

\begin{lemma}[Equivalence]\label{le:equivalence}
Let $\mathcal{E}_1$ and $\mathcal{E}_2$ be two relevant embeddings of $\pert{\mu}$ with the same type. Then $\mathcal{E}_1$ is extensible if and only if $\mathcal{E}_2$ is extensible.
\end{lemma}

Rather than proving \cref{le:equivalence}, we state and prove the following stronger version of it, which we will also exploit in order to improve the efficiency of our algorithm. 
}

Next, we present some definitions and tools that we will exploit to manipulate relevant embeddings.

\begin{definition}\label{def:replacement}
Let $u$ and $v$ be the poles of $\mu$. A \emph{replacement-graph} for $\pert{\mu}$ is a graph \mbox{$D$ such that:} 

\begin{enumerate}[{r.}1]
		\item \label{cond:replacement-uv} the vertex set of $D$ contains $u$ and $v$; 
		\item \label{cond:replacement-color} each vertex of $D$ is colored either red or black and the vertices $u$ and $v$ have the same color as in $H$; 
		\item \label{cond:replacement-rb-trivial} there exist no $rb$-trivial component incident to $u$ or $v$ in $D$; and
		
		
		\item \label{cond:replacement-rb-augmented} denote by $K$ the graph obtained from $H$ by removing the vertices and the edges of $\pert{\mu}$---except for $u$, $v$, and their incident $rb$-trivial components, if any---and by inserting $D$ in the resulting graph, while identifying the vertices $u$ and $v$ in the two graphs; then $K$ is an $rb$-augmented component. 
	\end{enumerate} 
\end{definition}

\begin{figure}[t!]
	\centering
	\subfloat[Graph $H$]{
		\includegraphics[scale=.9,page=1]{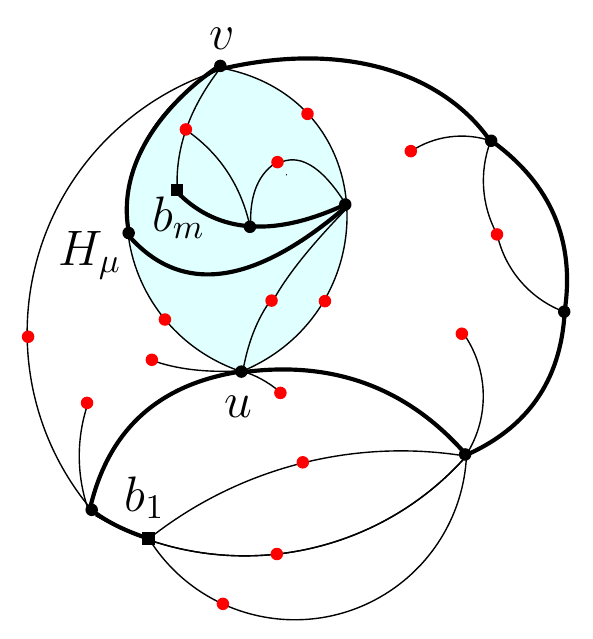} 
		\label{fig:replacement-a}
	}
	\hfil
	\subfloat[Replacement-graph $D$]{
	\includegraphics[scale=.9,page=5]{replacement} 
	\label{fig:replacement-b}
	}
\hfil
	\subfloat[Graph $K$]{
		\includegraphics[scale=.9,page=2]{replacement} 
		\label{fig:replacement-c}
	}
	\caption{Illustrations for the definitions of replacement-graph and embedding-replacement. 
	(a) A black saturation $H$ with a planar embedding $\mathcal E$; let $\mathcal E_\mu$ be the embedding of the graph $\pert{\mu}$ in the shaded-cyan region.
	(b) A replacement-graph $D$ for the graph $\pert{\mu}$ in (a) with embedding $\mathcal E_\nu$.
	(c) The graph $K$ obtained by an embedding-replacement of $\mathcal E_\mu$ with  $\mathcal E_\nu$ in $\mathcal E$.
	}
	\label{fig:replacement}
\end{figure}

\noindent
In other words, $K$ is the black saturation of a new instance of the \btpbefP problem, such that $\rest{\mu}$ 
coincides with $K - (D - \{u,v\})$. Let $\mathcal T_{K^-}$ be the SPQR-tree of the biconnected graph $K^-$ obtained from $K$ by removing the vertices of degree $1$ (i.e., its $rb$-trivial components). By rooting $\mathcal T_{K^-}$ at the Q-node corresponding to the edge $(b_1,b_2)$ (that is, at the same Q-node at which $\mathcal T$ is rooted), we have that $\mathcal T_{K^-}$ contains a node $\nu$ with poles $u$ and $v$. Note that $\mathcal T$ and $\mathcal T_{K^-}$ coincide, except for the subtrees rooted at $\mu$ and at $\nu$, respectively, and that the pertinent graph $K_{\nu}$ of $\nu$ coincides with $D$. Hence, in the following, by \emph{type} of the replacement-graph $D$, we mean the type (i.e., \RE-RE, \RF-RF, \BE-BE, \BP-BP, \BB-BB, or \BF-BF)  of the node $\nu$ in $\mathcal T_{K^-}$. Further, we extend the notions of \emph{embedding type} and of \emph{relevant embedding} to embeddings of replacement-graphs.

Let $\mathcal{E}_{\nu}$ be a relevant embedding of $K_{\nu}$ with the same type as $\mathcal{E}_{\mu}$. We say that $\mathcal{E}_{\nu}$ and $\mathcal{E}_{\mu}$ \emph{have the same flip} if they are either both $r$-flipped or both $\ell$-flipped.


Let $\mathcal E$ be an embedding of $H$ that extends $\mathcal{E}_{\mu}$, and let $\mathcal{E}_{\nu}$ be a relevant embedding of $K_{\nu}$ with the same type and the same flip as $\mathcal{E}_{\mu}$. An \emph{embedding-replacement of} $\mathcal{E}_{\mu}$ \emph{with} $\mathcal{E}_{\nu}$ \emph{in} $\mathcal E$ is an operation defined as follows; refer again to \cref{fig:replacement}. We delete from $\mathcal E$ the edges of $\pert{\mu}$ and the vertices of $\pert{\mu}$ different from $u$ and $v$ and from the $rb$-trivial components incident to such vertices, if any, and we insert $\mathcal{E}_{\nu}$ inside the region in which the deleted vertices and edges of $\pert{\mu}$ used to lie, identifying the vertices $u$ and $v$ of $\mathcal{E}_{\nu}$ and of $\mathcal E$. This is done in such a way that each $rb$-trivial component incident to a poles of $\mu$ 
that lie inside $x(\mathcal E_\mu)$ in $\mathcal E$ lies inside $x(\mathcal E_\nu)$ in the resulting embedding $\mathcal E'$ of $K$, with $x \in \{\ell,r\}$.
By construction, $\mathcal E'$ extends $\mathcal{E}_{\nu}$. 
Furthermore, each $rb$-trivial component of $H$ incident to $u$ or $v$, if any, lies in $\mathcal E'$ either in~$\ell(\mathcal E_\nu)$, or in $r(\mathcal E_\nu)$, or in a face (delimited by edges of $\rest{\mu}$) that is also a face of $\cal E$.

\remove{
We are now ready to state and prove the stronger version of \cref{le:equivalence}.

\begin{lemma}[Stronger Equivalence]\label{le:equivalence-stronger}
Let $\mathcal{E}_{\mu}$ be a relevant embedding of $\pert{\mu}$ and $\mathcal{E}_{\nu}$ be a relevant embedding of $K_{\nu}$ with the same type and the same flip as $\mathcal{E}_{\mu}$. Then $\mathcal{E}_{\mu}$ is extensible (with respect to $H$) if and only if $\mathcal{E}_{\nu}$ is extensible (with respect to $K$). 
\end{lemma}

\begin{fullproof}

\end{fullproof}
}

We conclude the section with the following. 

\begin{lemma}[Child Replacement]\label{le:child-replacement}
Let $\mu$ be a node of $\mathcal{T}$ and let $\lambda$ be a child of $\mu$ in $\mathcal{T}$.
Let $\mathcal{E}_\mu$ be an embedding of $\pert{\mu}$, and let $\mathcal{E}_{\lambda}$ be the embedding of $\pert{\lambda}$ such that $\mathcal{E}_\mu$ extends $\mathcal{E}_{\lambda}$. 
Also, let $D$ be a replacement-graph for $\pert{\lambda}$ and let $\mathcal{E}_{D}$ be an embedding of $D$ with the same type and the same flip as $\mathcal{E}_\lambda$.
Let $H_\mu^*$ be the graph obtained from $\pert{\mu}$ by replacing $\pert{\lambda}$ with $D$ and let $\mathcal{E}^*_\mu$ be the embedding of $H^*_\mu$ obtained from $\mathcal{E}_\mu$ by performing an embedding-replacement of $\mathcal{E}_{\lambda}$ with $\mathcal{E}_{D}$; refer to \cref{fig:child-replacement}.
Then $\mathcal{E}_\mu$ and $\mathcal{E}^*_\mu$ have the same type and the same flip.
\end{lemma}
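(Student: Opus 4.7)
The plan is to establish a face-preserving correspondence between $\mathcal E_\mu$ and $\mathcal E^*_\mu$ and then check that every feature used in the classification of \cref{sse:embedding-classification} is invariant under this correspondence. The key observation is that the embedding-replacement only modifies the region of $\mathcal E_\mu$ occupied by $\pert{\lambda}$: the vertices, edges, and faces of $\mathcal E_\mu$ lying in $\rest{\lambda}\cap \pert{\mu}$ remain unchanged, and by the definition of replacement-graph no new edges are introduced between $D\setminus\{u_\lambda,v_\lambda\}$ and the rest of $\pert{\mu}$. Thus I will partition the faces of $\mathcal E_\mu$ (and symmetrically of $\mathcal E^*_\mu$) into three groups: (i) faces disjoint from $\pert{\lambda}$, (ii) faces internal to $\mathcal E_\lambda$, and (iii) faces of $\mathcal E_\mu$ obtained by merging an outer face of $\mathcal E_\lambda$ with a face of $\rest{\lambda}\cap \pert{\mu}$.

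Next, I will define a natural bijection $\varphi$ between the faces of $\mathcal E_\mu$ and those of $\mathcal E^*_\mu$: identity on (i); on (ii), using the fact that $\mathcal E_\lambda$ and $\mathcal E_D$ have the same embedding type, which by the definitions in \cref{sse:embedding-classification} fixes the redness of internal faces and the subgraph of $A(\cdot)$ they induce; on (iii), using the same-flip condition, which ensures that $\ell(\mathcal E_\lambda)$ is merged with the same face of $\rest{\lambda}\cap \pert{\mu}$ as $\ell(\mathcal E_D)$, and analogously for the right outer face. The same-flip condition is also what guarantees that every undecided $rb$-trivial component of $\lambda$ incident to a pole of $\lambda$ is placed, in the merged face, on the same side as in $\mathcal E_\mu$.

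Having the bijection $\varphi$, I will check that it preserves exactly the data defining the embedding type of $\mathcal E_\mu$. For Features (F1)--(F3), the arguments are: (F3) the internal red faces of $\mathcal E_\mu$ split between categories (i) and (ii) and are preserved in each; (F2) follows because same type and same flip of $\mathcal E_\lambda$ and $\mathcal E_D$ fix which outer faces contribute red vertices (via the definitions of the subtypes \RFNz, \RFNi, \RFNii, \BFNz, etc.); (F1) the caterpillar structure of $A(\mathcal E_\mu)$ agrees with $A(\mathcal E^*_\mu)$ under $\varphi$ since the ``interface'' of $A(\mathcal E_\lambda)$ with the rest of $A(\mathcal E_\mu)$ (namely the outer-face vertices of $A(\mathcal E_\lambda)$, and the poles $u_\lambda,v_\lambda$ if red) is identical for $A(\mathcal E_D)$. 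The position of $b_m$ and the refined subtype data (endpoints of backbones, whether $b_m$ shares a face with a specific leaf, etc.) are preserved because $b_m\in\pert{\lambda}$ if and only if $\lambda$ is of type \RF-RF, \BF-BF, or \BB-BB (which forces $D$ to be of the same type, so $b_m\in D$), and the relation of $b_m$ to the outer faces of $\mathcal E_D$ matches that of $\mathcal E_\lambda$ by same type and same flip.

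Finally, from the face-bijection $\varphi$ I will read off the flip of $\mathcal E^*_\mu$: each clause of \cref{def:flipped} singles out an outer face of $\mathcal E_\mu$ based on features (redness, backbone-endpoint status, incidence with $b_m$, adjacency with $u_\ell$ or $u_r$) that are all preserved by $\varphi$; thus $x(\mathcal E^*_\mu)$ satisfies the same clause as $x(\mathcal E_\mu)$ for the same $x\in\{\ell,r\}$, proving preservation of the flip. The main obstacle is the sheer number of subtypes of \RF-RF, \BF-BF, \BP-BP, and \BB-BB whose defining features must each be verified under $\varphi$; however, each such verification reduces to the observation that the distinguishing feature lives either entirely outside $\pert{\lambda}$ (trivially preserved) or is part of the type/flip data of $\mathcal E_\lambda$ (preserved by hypothesis), so no single case requires substantive new argument.
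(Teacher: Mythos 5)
Your overall strategy---keep the part of $A(\mathcal E_\mu)$ that lies outside $\pert{\lambda}$ untouched and transfer everything inside via the type and flip of $\mathcal E_\lambda$---is the same as the paper's, but two of your central claims do not hold as stated, and they sit exactly where the real work of the proof lies. First, the face bijection $\varphi$ cannot exist on your group (ii): $D$ is a different graph (in the intended use a constant-size gadget), so the internal faces of $\mathcal E_D$ are not in correspondence with those of $\mathcal E_\lambda$, and ``same embedding type'' does not fix ``the redness of internal faces and the subgraph of $A(\cdot)$ they induce''. The type fixes only coarse data: the number of caterpillars of $A(\mathcal E_\lambda)$, which outer faces belong to it and whether they are backbone end-vertices, the existence of some internal red face, and the recorded information about $b_m$. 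Consequently the argument must be run feature-by-feature on this coarse data rather than face-by-face, which is what the paper does by isolating the nine type/flip-defining properties and proving each one for $\mathcal E^*_\mu$ from the corresponding property of $\mathcal E_D$ together with the unchanged external structure.

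Second, your closing dichotomy---every distinguishing feature either ``lives entirely outside $\pert{\lambda}$'' or ``is part of the type/flip data of $\mathcal E_\lambda$''---misses the interface features that depend jointly on both sides, and these are precisely the cases needing substantive argument. Let $e_\lambda$ be the virtual edge of $\skel(\mu)$ corresponding to $\lambda$. Whether a face of $\mathcal E_\mu$ corresponding to a face of $\mathcal S_\mu$ incident to $e_\lambda$ is red depends both on red vertices external to $\pert{\lambda}$ and on how many red vertices of $\pert{\lambda}$ lie on the relevant outer face of $\mathcal E_\lambda$; that count is not literally type data and must be derived (the paper's auxiliary fact: for a child not of type BP or BB every outer face carries a red vertex, so membership of that outer face in $A(\mathcal E_\lambda)$---which \emph{is} type data---distinguishes ``exactly one'' from ``at least two'' red vertices, while for BP/BB children it distinguishes ``none'' from ``at least one''). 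Likewise, when $b_m$ is a non-pole vertex of $\pert{\lambda}$ but the witness leaf $r''$ lies outside $\pert{\lambda}$ (possible, e.g., for a child of type BB whose embedding is of type BB2 or BB3), the type of $\mathcal E_\lambda$ does not directly record that $b_m$ is incident to a specific outer face; the paper has to argue that the face shared by $b_m$ and $r''$ is forced to be the red outer face of $\mathcal E_\mu$ and then swap the witness for a red neighbour of $b_m$ inside the replacement graph. Analogous cross-boundary checks are needed for the faces $f_\ell,f_r$ adjacent to $u_\ell,u_r$ and for the degree of the red pole in $A(\cdot)$ when the type is RFI1. Without these interface facts and case analyses, your verification of Features (F1)--(F2) and of the refined subtype and flip conditions does not actually follow.
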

\begin{fullproof}	
First, we observe that the type of $\mu$ after the replacement remains the same. In fact, 
the poles of $\mu$ are the same after the replacement, by \cref{def:replacement}.
There exists a path composed of black vertices in $H_\mu$ if and only if there exists a path composed of black vertices in $H^*_\mu$, due to the fact that $H_\lambda$ contains a black path composed of black vertices if and only if $D$ contains such a path.
Finally, $b_m$ is a pole or an non-pole vertex of $\mu$ before the replacement if and only if it is a pole or an non-pole vertex of $\mu$, respectively, due to the fact that the type of $\mathcal E_\lambda$ and $\mathcal E_D$ coincide.

According to the definitions of type and flip we have to prove that:

\begin{enumerate}
	\item Each connected component of $A(\mathcal E^*_\mu)$ is a caterpillars and the 
	number of components composing $A(\mathcal E_{\mu})$ and $A(\mathcal E^*_{\mu})$ is the same---Feature (F1);
	
	\item Face $x(\mathcal E_{\mu})$, with $x \in \{\ell,r\}$, belongs to $A(\mathcal E_{\mu})$ if and only if $x(\mathcal{E^*}_{\mu})$ belongs to $A(\mathcal E^*_{\mu})$---Feature (F2) and \cref{def:flipped} of $x$-flipped. Moreover, face $x(\mathcal E_{\mu})$ is an end-vertex of a backbone of $A(\mathcal E_{\mu})$ if and only if $x(\mathcal{E^*}_{\mu})$ is an end-vertex of a backbone of $A(\mathcal E^*_{\mu})$;
	
	\item Face $f_x(\mathcal E_{\mu})$, with $x \in \{\ell,r\}$, is an end-vertex of a backbone of $A(\mathcal E_{\mu})$ if and only if $f_x(\mathcal E^*_{\mu})$ is an end-vertex of a backbone of $A(\mathcal E^*_{\mu})$, where $f_x(\mathcal E_{\mu})$ and $f_x(\mathcal E^*_{\mu})$ are the faces the play the role of $f_x$ in $\mathcal E_{\mu}$ and $\mathcal E^*_{\mu}$, respectively. Note that these faces may only exist if $\mu$ is of type \BF-BF or \BE-BE-fat, and thus there exists at least one red vertex incident to each outer face of $\mathcal E_{\mu}$ and $\mathcal E^*_{\mu}$;
	
	\item The end-vertex of a backbone of $A(\mathcal E_{\mu})$ that is adjacent to the leaf $r''$ of $A(\mathcal E_{\mu})$ that shares a face with $b_m$ in $\mathcal E_{\mu}$ is $\ell(\mathcal E_{\mu})$, $r(\mathcal E_{\mu})$, $f_\ell(\mathcal E_{\mu})$, $f_r(\mathcal E_{\mu})$, or is none of such faces, if and only if the end-vertex of a backbone of $A(\mathcal E^*_{\mu})$ that is adjacent to the leaf $r''$ of $A(\mathcal E^*_{\mu})$ that shares a face with $b_m$ in $\mathcal E^*_{\mu}$ is $\ell(\mathcal E^*_{\mu})$, $r(\mathcal E^*_{\mu})$, $f_\ell(\mathcal E^*_{\mu})$, $f_r(\mathcal E^*_{\mu})$, or none of such faces, respectively. Note that the faces $f_x(\mathcal E_\mu)$ and $f_x(\mathcal E^*_{\mu})$, with $x \in \{\ell,r\}$ may exist only if $\mu$ is of type \BF-BF and \BE-BE-fat;
	

	\item The backbone of $A(\mathcal E_{\mu})$ contains at least an internal face of $\mathcal E_{\mu}$ if and only if the backbone of $A(\mathcal E^*_{\mu})$ contains at least an internal face of $\mathcal E^*_{\mu}$---Feature (F3);

	\item If $\mathcal E_\mu$ is of type \RFNi-RFN1, or \BFNi-BFN1, or \BFIi-BFI1, then $b_m$ is incident to the outer face of $\mathcal E_\mu$ that does not belong to $A(\mathcal E_\mu)$ if and only if $b_m$ is incident to the outer face of $\mathcal E^*_\mu$ that does not belong to $A(\mathcal E^*_\mu)$;

	\item If $\mu$ is of type \BF-BF or \RF-RF
	and
	$\mathcal E_\mu$ is of type neither \RFNi-RFN1, nor \BFNi-BFN1, nor \BFIi-BFI1, then
	 $b_m$ is an internal vertex of $\mathcal E_\mu$ if and only if it is an internal vertex of $\mathcal E^*_\mu$; furthermore, if $b_m$ is not an internal vertex of $\mathcal E_\mu$, then $b_m$ is incident to the outer face $x(\mathcal E_\mu)$ if and only if it is incident to the outer face $x(\mathcal E^*_\mu)$, with $x \in \{\ell,r\}$.
	
	\item If $\mathcal E_\mu$ is of type \RFNi-RFN1, then~$b_m$ shares a face with at least a red vertex $w$ different from $v$  in $\mathcal E_\mu$ if and only if~$b_m$ shares a face with at least a red vertex $w'$ different from $v$ in $\mathcal E^*_\mu$.
	Also, if $\mathcal E_\mu$ is of type \BFIi-BFI1, then~$b_m$ shares a face with a leaf incident to the red outer face of $\mathcal E_\mu$ if and only if~$b_m$ shares a face with a leaf incident to the red outer face of $\mathcal E^*_\mu$; and
	
	\item If $\mathcal E_\mu$ is of type \RFIi-RFI1, then the degree of $v$ in $A(\mathcal E_\mu)$ is the same as in $A(\mathcal E^*_\mu)$. 
\end{enumerate}

By hypothesis, we have that all the properties listed above hold when 
$\lambda$, $\mathcal E_\lambda$, and $\mathcal E_D$ are considered in place of
$\mu$, $\mathcal E_\mu$, and $\mathcal E^*_\mu$, respectively. 
We also observe the following facts.

\begin{fact}\label{prop:external}	
The part of $A(\mathcal E_\mu)$ that depends on the pertinent graphs of the children of $\mu$ different from $\lambda$ is the same as in $A(\mathcal E^*_\mu)$.
\end{fact}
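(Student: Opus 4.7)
The plan is to show that the portion of $A(\mathcal E_\mu)$ that arises from the skeleton of $\mu$ and from the pertinent graphs of the children of $\mu$ other than $\lambda$ is combinatorially identical in $A(\mathcal E^*_\mu)$. The key insight is that the embedding-replacement is a local operation: it modifies $\mathcal E_\mu$ only inside the topological disk that used to contain the embedding of $\pert{\lambda}$. Thus faces of $\mathcal E_\mu$ whose boundary walk uses no edge of $\pert{\lambda}$ persist verbatim as faces of $\mathcal E^*_\mu$, with the same red-vertex incidences and hence the same status (red or not, vertex of $A$ or not) as before.

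First, I would verify that the set of red vertices of $\pert{\mu}$ that do not belong to $\pert{\lambda}$ (equivalently, the red vertices coming from the skeleton of $\mu$ or from pertinents of other children) coincides in $\pert{\mu}$ and in $H_\mu^*$; this is immediate from the construction of the embedding-replacement, which only removes vertices of $\pert{\lambda}-\{u,v\}$ and re-inserts $D-\{u,v\}$ in their place, together with r.\ref{cond:replacement-uv}--r.\ref{cond:replacement-color} which guarantee that $u$ and $v$ keep their colors. Second, I would partition the faces of $\mathcal E_\mu$ into three groups: (i) faces internal to the region of $\pert{\lambda}$ (these are exactly the internal faces of $\mathcal E_\lambda$), (ii) the two faces of $\mathcal E_\mu$ corresponding to $\ell(\mathcal E_\lambda)$ and $r(\mathcal E_\lambda)$ (the ``shared'' faces), and (iii) faces external to the region of $\pert{\lambda}$. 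The faces in group (iii) are preserved setwise under the embedding-replacement, with the same incidences from red vertices not in $\pert{\lambda}$; their contribution to $A(\mathcal E_\mu)$ is therefore identical to their contribution to $A(\mathcal E^*_\mu)$. The faces in group (i) are replaced by the internal faces of $\mathcal E_D$, which contribute only to the ``internal'' part of the auxiliary graph and not to the part depending on other children.

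The delicate step concerns the shared faces in group (ii). A shared face $f$ has incidences to red vertices of $\pert{\mu}$ of two kinds: those from $\rest{\lambda}\cap \pert{\mu}$, which are unchanged under the replacement, and those from the outer boundary of $\mathcal E_\lambda$ (respectively $\mathcal E_D$). I would argue that whether $f$ belongs to $A(\mathcal E_\mu)$---and, if so, which external red vertices are adjacent to $f$ in $A(\mathcal E_\mu)$---depends only on the external incidences and on the combinatorial ``signature'' that the outer face of $\mathcal E_\lambda$ (resp.\ of $\mathcal E_D$) contributes to the $A$-graph. Since $\mathcal E_D$ has the same type and the same flip as $\mathcal E_\lambda$ (in particular, the two embeddings agree on whether each outer face is red in $A(\cdot)$, on whether it is an end-vertex of a backbone, and on which pole is incident to which outer face), the signature matches side by side. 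Consequently, the external red vertices incident to $f$ are the same in $A(\mathcal E_\mu)$ and $A(\mathcal E^*_\mu)$, and $f$ itself belongs to $A(\mathcal E_\mu)$ if and only if it belongs to $A(\mathcal E^*_\mu)$.

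The main obstacle will be making precise the statement that ``same type and same flip'' of $\mathcal E_\lambda$ and $\mathcal E_D$ implies equivalence of the outer-face signatures in a way that behaves correctly with respect to \cref{def:auxiliary-graph-E-mu}, including the special counting rule for nodes of type \BP-BP and \BB-BB (where one incident red vertex suffices). I would dispatch this by a short case analysis across the node types of $\lambda$, using \cref{le:structure-A-mu-RF,le:structure-A-mu-BE,le:structure-A-mu-BP-BB,le:structure-A-mu-BF} to enumerate what the outer faces of a relevant embedding can contribute, and using \cref{def:flipped} to match the contributions of $\mathcal E_\lambda$ and $\mathcal E_D$ on the $\ell$- and $r$-sides respectively. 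Once this case analysis is in hand, the equality of the external portions of $A(\mathcal E_\mu)$ and $A(\mathcal E^*_\mu)$ follows directly.
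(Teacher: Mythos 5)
Your argument is correct and takes essentially the same route as the paper, which disposes of this fact simply as ``holds by construction'': the embedding-replacement only alters the interior of the region occupied by $\pert{\lambda}$, which is exactly your locality argument showing that red vertices and faces outside that region keep their incidences verbatim. Your additional discussion of the two shared outer faces goes beyond what this fact asserts---in the paper that interface is handled separately by the companion fact on red vertices incident to $x(\mathcal E_\lambda)$ versus $x(\mathcal E_D)$ and by the subsequent item-by-item analysis---but it is consistent with that treatment and does no harm.
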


\begin{fact}\label{prop:outer-vertices}
If $\mu$ is of type either \BP-BP or \BB-BB, then there exists at least one red vertex incident to $x(\mathcal E_\lambda)$, with $x \in \{\ell, r\}$, if and only if there exists at least one red vertex incident to $x(\mathcal E_D)$.
Also, if $\mu$ is of type neither \BP-BP nor \BB-BB, then there exist 1 or more than one red vertices incident to $x(\mathcal E_\lambda)$, with $x \in \{\ell, r\}$, if and only if there exist 1 or more than one red vertices incident~to~$x(\mathcal E_D)$.
\end{fact}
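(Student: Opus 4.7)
The plan is to read off the number of red vertices incident to each outer face of $\mathcal E_\lambda$ (and of $\mathcal E_D$) directly from the type-plus-flip label assigned in \cref{sse:embedding-classification,def:flipped}, and argue by case analysis on the type of $\lambda$ that the identical label of $\mathcal E_D$ forces the same counts on its outer faces. The role of the hypothesis on the type of $\mu$ is only to fix the coarseness of the count we must preserve: when $\mu$ is of type \BP-BP or \BB-BB an outer face enters $A(\mathcal E_\mu)$ already with a single incident red vertex (\cref{def:auxiliary-graph-E-mu}), so controlling ``at least one'' is enough, whereas for the remaining types of $\mu$ the finer distinction ``exactly one'' vs.\ ``more than one'' is what drives the structural properties enumerated in the ambient lemma.

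For the first clause I split on the type of $\lambda$. When $\lambda$ is itself of type \BP-BP or \BB-BB, \cref{def:auxiliary-graph-E-mu} says that $x(\mathcal E_\lambda)\in A(\mathcal E_\lambda)$ is equivalent to ``at least one red vertex of $\pert{\lambda}$ is incident to $x(\mathcal E_\lambda)$'', and membership of outer faces in $A$ is feature (F2), which is recorded in the type; hence same type gives the equivalence immediately. When $\lambda$ is of any other type, I claim that both outer faces of $\mathcal E_\lambda$ always carry at least one red vertex, which makes the biconditional trivial: for types \RE-RE and \RF-RF the red pole $v$ is incident to both $\ell(\mathcal E_\lambda)$ and $r(\mathcal E_\lambda)$ by construction; for \BE-BE and \BF-BF, \cref{le:structure-node-BE,le:structure-A-mu-BF} place red vertices $u_\ell$ and $u_r$ on $\ell(\mathcal E_\lambda)$ and $r(\mathcal E_\lambda)$ respectively, using that every neighbour of the black pole $u$ in $\pert{\lambda}$ is red by definition of the node type.

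For the second clause the case $\mu=$\RE-RE is vacuous by \cref{le:structure-node-RE} (the pertinent is a single edge, so $\mu$ has no children), and otherwise $\mu$ is of type \RF-RF, \BE-BE, or \BF-BF. I again split on the type of $\lambda$. If $\lambda$ is of type \RE-RE, \BE-BE-slim, or \BE-BE-fat, then \cref{le:structure-node-RE,le:structure-node-BE,le:structure-A-mu-BE} force exactly one red vertex on each outer face in any embedding. If $\lambda$ is of type \RF-RF or \BF-BF, the refinements of \cref{le:relevant-in-extensible-RF,le:relevant-in-extensible-BF}, combined with \cref{def:auxiliary-graph-E-mu}, record for each outer face whether it carries only the pole $v$ (resp.\ only $u_x$) or at least two red vertices. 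If $\lambda$ is of type \BP-BP or \BB-BB, the type label separates ``$0$ red vertices'' from ``$\geq 1$'' on each outer face, and the flip clauses \cref{def:flipped}(a),(b) further witness whether the outer face is red in the strict sense of \cref{def:auxiliary-graph-E-mu}, hence separating ``exactly one'' from ``at least two''. In every case, equality of type and of flip between $\mathcal E_\lambda$ and $\mathcal E_D$ transfers the count.

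The main obstacle is bookkeeping rather than a single deep idea: for each of the embedding types introduced for \RF-RF, \BP-BP, \BB-BB, and \BF-BF nodes, I have to confirm that the type-plus-flip label captures precisely whether an outer face is merely in $A(\mathcal E_\lambda)$ or actually carries two or more red vertices. The definitions of $A(\mathcal E_\mu)$, of a red face, and of $x$-flipped are designed exactly to encode this distinction, so the verification is systematic but requires walking through each type family in turn.
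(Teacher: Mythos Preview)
Your treatment of the first clause is fine and matches the paper's reasoning. The problem is in the second clause, specifically the case where $\lambda$ is of type \BP-BP or \BB-BB. There you claim that ``the flip clauses \cref{def:flipped}(a),(b) further witness whether the outer face is red in the strict sense of \cref{def:auxiliary-graph-E-mu}, hence separating `exactly one' from `at least two'.'' This is not true. Clause (a) only records that $x(\mathcal E_\lambda)\in A(\mathcal E_\lambda)$, which for \BP-BP/\BB-BB means merely $\geq 1$ red vertex; clause (b) records which of the two caterpillars is the star. Neither encodes the $1$ vs.\ $\geq 2$ distinction. Concretely, a \BPii-BP2 embedding can have exactly one red vertex on its red outer face or several, with identical type and identical flip; replacing one by the other is a legal embedding-replacement, so the finer count is simply not preserved.

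The paper's own justification for the fact is a single line: it ``descends from Items~2 and from \cref{def:auxiliary-graph-E-mu}.'' Read at face value, Item~2 for $\lambda$ together with \cref{def:auxiliary-graph-E-mu} gives exactly ``$\geq 1$ red vertex on $x(\mathcal E_\lambda)$ iff $\geq 1$ on $x(\mathcal E_D)$'' (immediate when $\lambda$ is \BP-BP/\BB-BB; trivial otherwise because, as you yourself note, both outer faces always carry at least one red vertex). That coarser statement is all that is asserted --- the awkward ``1 or more than one'' in the second clause should be parsed as ``at least one'', not as a trichotomy --- and it is all that the subsequent applications need: whenever the threshold at the level of $\mu$ is two, another virtual edge or a red pole on the same face of $\mathcal S_\mu$ already contributes one red vertex (via \cref{prop:external}, \cref{le:type-C-path}, and the structural lemmas for \RF-RF/\BF-BF), so preserving ``$\geq 1$'' from $\lambda$ suffices. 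In short, you over-read the statement and then tried to prove something that is both false and unnecessary; the correct fix is to interpret the second clause as you did the first and invoke Item~2 plus \cref{def:auxiliary-graph-E-mu}, which is precisely the paper's argument.
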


\noindent \cref{prop:external} holds by construction, while \cref{prop:outer-vertices} descends from Items~2 and from \cref{def:auxiliary-graph-E-mu}.

We now prove that Items~1--9 hold for $\mu$. Let $\mathcal S_\mu$ be the embedding of $\skel(\mu)$ determined by $\mathcal E_\mu$, and let $e_\lambda$ be the virtual edge of $\skel(\mu)$ that corresponds to $\lambda$. 

Item~1 for $\mu$ descends from
Items~1,~2,~3,~5,and~9 for $\lambda$ and from \cref{prop:external}.

We prove Item~2 for $\mu$. First observe that, if $e_\lambda$ is not incident to an outer face of $\mathcal S_\mu$, then Item~2 trivially holds. Otherwise, Item~2 for $\mu$ descends from \cref{prop:external,prop:outer-vertices}.

\begin{figure}[t!]
	\centering
	\subfloat[$\mathcal E_\mu$]{
		\includegraphics[height=.3\textwidth,page=1]{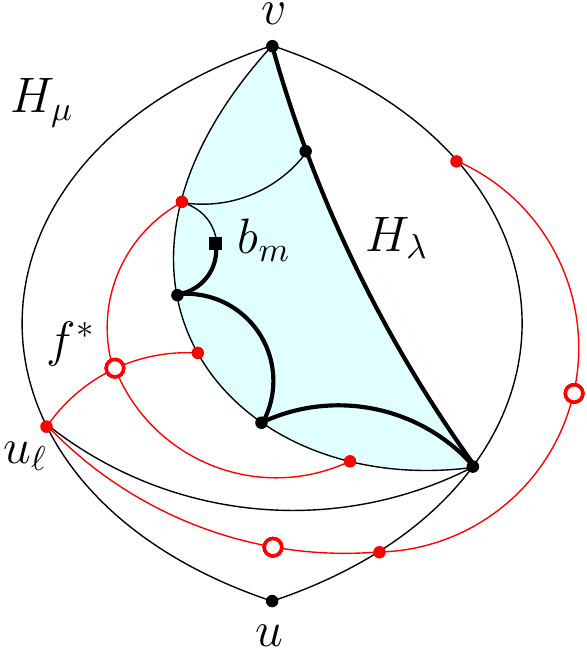} 
		\label{fig:child-replacement-a} 
	}
    \hfil
	\subfloat[$\mathcal E^*_\mu$]{
		\includegraphics[height=.3\textwidth,page=2]{child-replacement}
		\label{fig:child-replacement-b} 
	}
	\caption{Illustration for the proof that Item~3 holds for $\mu$. The type of $\mathcal E_\mu$ is \BFIig-BFI1G and the type of $\mathcal E_\lambda$ is \BBii-BB2.}	
	\label{fig:child-replacement}
\end{figure}

We prove Item~3 for $\mu$. First observe that, if $e_\lambda$ is not incident to a face of $\mathcal S_\mu$ that contains the neighbor $u_x$ of $u$ that is incident to $x(\mathcal E_\mu)$, then Item~3 trivially holds. Otherwise, Item~3 for $\mu$ descends from \cref{prop:external,prop:outer-vertices}. As an example, suppose that $\mathcal E_\mu$ is of type \BFIig-BFI1G and that it is flipped so that $u_\ell$ is incident to two red faces in $\mathcal E_\mu$; refer to \cref{fig:child-replacement}. Let $f$ be the face of $\mathcal S_\mu$ that is shared by $u_x$ and $e_\lambda$. If the face $f^*$ of $\mathcal E^*_\mu$ corresponding to $f$ is not red, then the there exists no non-pole red vertex of $D$ that is incident to $f^*$. Thus, either $\mathcal E_D$ is of type \BPii-BP2, \BPiv-BP4, \BBii-BB2, \BBiv-BB4, or $u_\ell$ is a pole of $D$ and $\mathcal E_D$ is of type \RFNz-RFN0, \RFIz-RFI0, \RFNi-RFN1, or \RFIi-RFI1. However, by \cref{prop:external,prop:outer-vertices}, it follows that the face of $\mathcal E_\mu$ corresponding to $f$ is not red, a contradiction to the fact that $\mathcal E_\mu$ is of type \BFIig-BFI1G.

We prove Item~4 for $\mu$. 
Suppose first that the end-vertex of the backbone of $A(\mathcal E_\mu)$ that is adjacent to a leaf $r''$ of $A(\mathcal E_\mu)$ that shares a face with $b_m$ is an outer face of $\mathcal E_\mu$, say $\ell(\mathcal E_\mu)$.
By Item~2 for $\mu$, we have that $\ell(\mathcal E^*_\mu)$ is an end-vertex of the backbone of $A(\mathcal E^*_\mu)$. 
Since $b_m$ belongs to $\pert{\mu}$, then $\mu$ is of type either \BB-BB, or \RF-RF, or \BF-BF. 
First, if each of $b_m$ and $r''$ either does not belong to $\pert{\lambda}$ or is a pole of $\pert{\lambda}$, then Item~4 for $\mu$ follows by \cref{prop:external}.
If both $b_m$ and $r''$ belong to $\pert{\lambda}$, then Item~4 for $\mu$ follows from Item~4 for $\lambda$.
Suppose now that $b_m$ is a non-pole vertex of $\pert{\lambda}$, while $r''$ does not belong to $\pert{\lambda}$. 
By \cref{prop:external}, this implies that $r''$ is incident to $\ell(\mathcal E^*_\mu)$. 
Then, the face $g$ of $\mathcal E_\mu$ that is shared by $b_m$ and $r''$ corresponds to a face  $f$ of $\mathcal S_\mu$, which is incident to $e_\lambda$. 
We claim that $g$ is red. Since $b_m$ is a non-pole vertex of $\pert{\lambda}$ that is incident to the outer face of $\mathcal E_\lambda$ that correspond to $f$, there exists two neighbors of $b_m$ that are incident to $g$. Since $b_m$ has exactly one black neighbor, at least one of these two vertices is red. This vertex and $r''$ imply that $g$ is red. 
The above claim, the fact that $g$ contains $r''$, and the fact that $r''$  is a leaf of $A(\mathcal E_\mu)$ incident to $\ell(\mathcal E_\mu)$, by hypothesis, implies that $g$ coincides with $\ell(\mathcal E_\mu)$. Therefore, $b_m$ is incident to $\ell(\mathcal E_\mu)$. 
By Items~6 or~7 for $\lambda$, we have that $b_m$ is also incident to $\ell(\mathcal E^*_\mu)$, when $\lambda$ is of type either \BF-BF or \RF-RF.
This implies that Item~4 holds for $\mu$ in this case. Suppose, finally, that $\lambda$ is of type \BB-BB. Note that, the type of $\mathcal E_\lambda$ is neither \BBiv-BB4 nor \BBv-BB5, as otherwise $\ell(\mathcal E_\mu)$ would not be the end-vertex of the spine of $A(\mathcal E_\mu)$
that is incident to a leaf of $A(\mathcal E_\mu)$ sharing a face with $b_m$, by \cref{le:bf-endvertex-backbone}. If $\mathcal E_\mu$ is of type either \BBii-BB2 or \BBiii-BB3, then there exists a red neighbor of $b_m$ in $\pert{\lambda}$ that is incident to an outer face of  $\mathcal E_{\lambda}$, since $b_m$ has exactly one black neighbor. By Item~4 for $\lambda$, we have that this red vertex is incident to the outer face of $\mathcal E_\lambda$ that corresponds to $\ell(\mathcal E^*_\mu)$. Thus, we can use this red vertex to play the role of $r''$ in $\mathcal E^*_\mu$.

We prove Item~5 for $\mu$. Suppose first that $\mathcal E_\lambda$ contains at least one internal red face, which implies that $\mathcal E_\mu$ contains at least one internal red face. By Item~5 for $\lambda$, we have that $\mathcal E_D$ contains at least one internal red face, which implies that $\mathcal E^*_\mu$ contains at least one internal red face. 
Suppose then that $\mathcal E_\lambda$ contains at least one outer red face that corresponds to an internal red face of $\mathcal E_\mu$. By \cref{prop:external,prop:outer-vertices}, we have that $\mathcal E_D$ contains at least one outer red face that corresponds to an internal red face of $\mathcal E^*_\mu$.
Suppose that there exists an internal red face of $\mathcal E_\mu$ that contains no vertex of $\pert{\lambda}$. Then, by \cref{prop:external}, such a face is also red in $\mathcal E^*_\mu$. This implies that Item~5 holds~for~$\mu$.

We prove Item~6 for $\mu$. If $b_m$ either does not belong to $\pert{\lambda}$ or is a pole of $\lambda$, then Item~6 descends from \cref{prop:external}. Otherwise, $b_m$ is a non-pole vertex of $\pert{\lambda}$.  Suppose that $b_m$ is incident to the outer face $x(\mathcal E_\mu)$ of $\mathcal E_\mu$, with $x \in \{\ell,r\}$, that does not belong to $A(\mathcal E_\mu)$.
Then, $b_m$ is incident to the outer face $y(\mathcal E_\lambda)$ of $\mathcal E_\lambda$ that corresponds to $x(\mathcal E_\mu)$, with $y \in \{\ell,r\}$. This implies that $y(\mathcal E_\lambda)$ does not belong to $A(\mathcal E_\lambda)$, as otherwise $x(\mathcal E_\mu)$ would belong to $A(\mathcal E_\mu)$. By Item~2~and either Items~6 or~7 for $\lambda$, we have that $b_m$ in incident to the outer face of $\mathcal E_D$ that does not belong to $\mathcal E_D$, which corresponds to the outer face $x(\mathcal E^*_\mu)$ of $\mathcal E^*_\mu$. Thus, Item~6 holds for $\mu$.

The proof that Item~7 holds for $\mu$ is analogous to the one that Item~6 holds for $\mu$ and it exploits \cref{prop:external}, and Item~2~and either Item~6~or~7 for $\lambda$.

We prove Item~8 for $\mu$. 
First, if each of $b_m$ and $w$ either does not belong to $\pert{\lambda}$ or is a pole of $\pert{\lambda}$, then Item~8 for $\mu$ follows by \cref{prop:external}.
If both  $b_m$ and $w$ belong to $\pert{\lambda}$, then Item~8 for $\mu$ follows from Item~7 or~8 for $\lambda$.
Suppose now that $b_m$ is a non-pole vertex of $\pert{\lambda}$, while $w$ does not belong to $\pert{\lambda}$. 
Note that, $e_\lambda$ and a virtual edge $e_w$ such that $w$ belongs to $\pert{e_w}$ share a face $f$ of $\mathcal S_\mu$. In particular, $b_m$ and $w$ are incident to the face of $\mathcal E_\mu$ that corresponds to $f$.
By \cref{prop:external}, $w$ is incident to the face of $\mathcal E^*_\mu$ that corresponds to $f$. The fact that also $b_m$ is incident to such a face follows from Items~6 or~7 for $\lambda$.

Finally, we prove Item~9 for $\mu$. Note that, since $\mathcal E_\mu$ is of type \RFIi-RFI1 and by Item~2 for $\mu$, we have that $v$ is adjacent to exactly one outer face both in $A(\mathcal E_\mu)$ and in $A(\mathcal E^*_\mu)$. We show that $v$ is incident to an internal red face $g$ in $\mathcal E_\mu$ if and only if it is incident to an internal red face in $\mathcal E^*_\mu$. 
The proof is analogous to the one of Item~3, where $v$ plays the role of $u_\ell$. Namely, if $e_\lambda$ is not incident to any face of $\mathcal S_\mu$ that is incident to $v$, this hols by \cref{prop:external}.
If $g$ is internal to $\mathcal E_{\lambda}$, then $v$ is a pole of $e_\lambda$ and this holds by Item~9 for $\lambda$.
Otherwise, $g$ corresponds to a face $g'$ of $\mathcal S_\mu$. Then, this holds by \cref{prop:external,prop:outer-vertices}. This concludes the proof.
\end{fullproof}


\section{Testing and Embedding Algorithm}\label{se:spqr-tree}

In this section, we show how to solve in linear time the \btpbef problem. In particular, we prove the following. 

\begin{theorem}\label{th:main}
Let $\langle G, \pi_b\rangle$ be an instance of the \btpbefP problem, where $G$ is an $n$-vertex bipartite planar graph and $\pi_b$ is a linear ordering of the black vertices of $G$. There exists an $O(n)$-time algorithm that tests whether $\langle G, \pi_b\rangle$  admits a solution and, if any, constructs one.
\end{theorem}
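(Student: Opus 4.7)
The plan is to reduce the problem to the \btpbefP problem via \cref{cor:equivalence-fixed} (linear-time reduction); then construct the black saturation $H$ of $\langle G,\pi_b\rangle$ in $O(n)$ time. If $H$ is not planar (tested in $O(n)$ time), by \cref{obs:gprime-planar} the instance is negative, so assume $H$ is planar. Apply in $O(n)$ time the simplifications of Assumptions \blue{A\ref{A1}}--\blue{A\ref{A2}} and of \cref{le:one-trivial}, and then decompose $H$ into its $rb$-augmented components $H_1,\dots,H_q$ in $O(n)$ time using a block-cut-vertex tree computation. By \cref{cor:rb-augmented-good-embeddings}, it suffices to decide whether each $H_j$ admits a good embedding and, in the affirmative case, to combine good embeddings of $H_1,\dots,H_q$ into a good embedding of $H$; this last step takes $O(n)$ total time by \cref{cor:rb-augmented-good-embeddings}, and then a \btpbef of $\langle G,\pi_b\rangle$ can be extracted in $O(n)$ time by \cref{le:book-from-good}. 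Thus the problem reduces to an $O(|H_j|)$-time algorithm deciding whether a single $rb$-augmented component $H_j$ admits a good embedding, which by \cref{le:neat-embeddings} is equivalent to the existence of a neat embedding.

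For a single $rb$-augmented component, which from now on we denote by $H$, construct in $O(|H|)$ time the SPQR-tree $\mathcal T$ of the underlying biconnected graph $H^-$ rooted at the Q-node representing the edge $(b_1,b_2)$, using the algorithm of \cite{DBLP:conf/gd/GutwengerM00}. In a bottom-up traversal of $\mathcal T$, compute for each non-root node $\mu$ the set $\mathcal R(\mu)$ of pairs (embedding type, flip) that are realizable by some relevant embedding of $\pert{\mu}$; by the classification in \cref{sse:embedding-classification}, the set $\mathcal R(\mu)$ has constant size. For a leaf Q-node, $\mathcal R(\mu)$ is trivially determined. For an internal node $\mu$ with children $\lambda_1,\dots,\lambda_k$, enumerate the embeddings of $\skel(\mu)$ that are admissible (these are unique up to flip if $\mu$ is an S- or R-node, and correspond to permutations of the virtual edges for P-nodes), and enumerate combinations of (type, flip) choices in $\mathcal R(\lambda_1)\times\cdots\times\mathcal R(\lambda_k)$. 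The crucial point, guaranteed by the Child Replacement \cref{le:child-replacement}, is that to determine the type and flip of the resulting embedding of $\pert{\mu}$, each child's contribution can be replaced by a constant-size gadget of the same (type, flip), and then $A(\mathcal E_\mu)$ can be computed on this constant-size surrogate. Finally, at the root, check whether some pair in $\mathcal R$ corresponds to an embedding that, augmented with the root edge $(b_1,b_2)$, satisfies the conditions of \cref{le:characterization-triconnected-onesidefixed}.

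The main obstacle is ensuring that the dynamic program runs in time $O(|H|)$ rather than blowing up at P-nodes, where a priori the number of cyclic orderings of the virtual edges (and the combinations of their types and flips) can be exponential. The resolution is again \cref{le:child-replacement}, together with the structural lemmas of \cref{sse:properties}: \cref{le:final-edges} bounds by one the number of virtual edges of type \RF-RF, \BF-BF, or \BB-BB; \cref{le:type-C-path} forces the virtual edges of type \BP-BP and \BB-BB to form a path in $\skel(\mu)$ with a prescribed incidence at the poles; and \cref{le:structure-node-BE,le:structure-node-RE} limit the structure of the other types. Combined with \cref{le:relevant-relevant}, which guarantees that we only need to combine relevant embeddings, these constraints imply that at each P-node only $O(k)$ orderings and combinations of (type, flip) need to be examined, where $k$ is the number of children; here, the leaves of the skeleton that correspond to virtual edges of the same type can be handled by a bucketing argument. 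For R-nodes, the skeleton has a unique embedding up to flip by \cite{Whitney33}, so only $2$ placements per child need to be considered; for S-nodes, which by convention have exactly two children, the combination is immediate.

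Since at each node $\mu$ we handle $O(|\skel(\mu)|)$ items using constant-size surrogate computations, and since $\sum_\mu |\skel(\mu)| \in O(|H|)$, the total running time is $O(|H|)$, as required. To recover an actual \btpbef when the test succeeds, perform a top-down traversal of $\mathcal T$ that, at each node, fixes a consistent choice of skeleton embedding, virtual-edge flips, and types at the children, thereby producing a neat embedding $\mathcal E$ of $H$ in $O(|H|)$ time. The algorithm of \cref{le:book-from-good} converts $\mathcal E$ into a \btpbef of $\langle G,\pi_b\rangle$ in $O(|G|)$ time. Summing over all $rb$-augmented components yields the claimed $O(n)$ bound.
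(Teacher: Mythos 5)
Your overall architecture is the same as the paper's: black saturation, planarity test, decomposition into $rb$-augmented components, a bottom-up SPQR-tree dynamic program over relevant embedding types using constant-size gadgets and \cref{le:child-replacement}, then a top-down reconstruction followed by \cref{le:book-from-good}. However, there is a genuine gap exactly where the paper has to work hardest: the per-node computation at P- and, above all, R-nodes. At an R-node the uniqueness of the skeleton embedding (Whitney) does not save you, because the type of the resulting embedding of $H_\mu$ depends on the \emph{joint} choice of an admissible type and a flip for every child, and there can be arbitrarily many children of type BP, each with several admissible types and two flips; naively this is an exponential number of combinations, and even enumerating ``2 placements per child'' independently does not tell you which combinations realize which types for $\mu$. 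The paper resolves this with the notion of intrinsically red versus reddened faces and the structural results that at most one internal non-intrinsically-red face of $\skel(\mu)$ and at most one outer face can be reddened in any relevant embedding (\cref{le:iff-reddened,le:few-non-intrinsically,le:non-intrinsically-where,lem:R-external-reddened}), combined with a case analysis on the position of $b_m$ (\cref{R:bm-external,R:bm-pole,R:bm-internal}); this is what cuts the search to a constant-size set of embedding configurations per R-node (\cref{lem:emb-configurations-bm-out,lem:emb-configurations-bm-pole,lem:emb-configurations-bm-internal}), whose types are then read off constant-size realizations via \cref{le:compute-type,le:relevant-in-configuration}. None of this appears in your argument, and the statement you substitute for it would not establish that no admissible type is lost.

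A similar, smaller gap occurs at P-nodes. It is not enough to invoke \cref{le:final-edges,le:type-C-path} and a ``bucketing argument'': the paper must additionally prove (in \cref{le:valid-set-p-vm-black}) that in any relevant embedding the number of BE-slim virtual edges preceding the non-trivial virtual edge (and, when there are two non-trivial edges, between them) is at most one; this is the step that reduces the factorially many skeleton permutations to $O(1)$ candidate embedding configurations, each checkable on an $O(|\skel(\mu)|)$-size realization, so that the total time telescopes to $O(|H|)$. Your proposal allows $O(k)$ orderings per P-node, each of which would still need an $O(k)$-time type check on the realization, which by itself already threatens the linear bound and, more importantly, is asserted rather than proved to be exhaustive. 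Without these per-node completeness arguments the claimed $O(n)$ algorithm is not justified.
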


\begin{proof}
The algorithm of \cref{th:main} consists of the following steps, all of which can be performed in linear time. 
First, construct the black saturation $H$ of $\langle G, \pi_b\rangle$, by adding an edge between any two black vertices of $G$ that are consecutive in $\pi_b$. 
Second, test whether $H$ is planar~\cite{ht-ept-74}, which, by \cref{obs:gprime-planar}, is a necessary condition for $\langle G, \pi_b\rangle$ to be a positive instance of the \btpbef problem.
Third, compute the block-cut-vertex tree of $H$~\cite{ht-eagm-73}, and construct the $rb$-augmented components for~$H$ as described in \cref{se:simply}. Recall that, by 
\cref{le:characterization-triconnected-onesidefixed,le:rb-augmented,le:neat-embeddings}, $\langle G,\pi_b\rangle$ is a positive instance of the \btpbef problem if and only if each $rb$-augmented component admits a neat embedding. 
Fourth, consider each $rb$-augmented component, which we still denote by $H$, and compute in $O(|H|)$ time whether it admits a neat embedding, by \cref{algo:neat} whose proof is contained in the remainder of the section. This concludes the test on whether $\langle G, \pi_b\rangle$ is a positive instance of the \btpbefP problem.
 
Finally, if the test above succeeds, we compute a \btpbef of $\langle G, \pi_b\rangle$ in linear time as follows.
First, we exploit \cref{th:neat-construction} in \cref{sse:embeddingconstruction} to compute
a neat embedding of $H$ in $O(|H|)$ time, for each $rb$-augmented component $H$.
Second, by applying \cref{cor:rb-augmented-good-embeddings}, we obtain a neat embedding of the black saturation of $\langle G, \pi_b\rangle$ in $O(|G|)$ time. 
Finally, by applying \cref{le:book-from-good}, we obtain a  \btpbef of $\langle G, \pi_b\rangle$ from the constructed neat embedding in $O(|G|)$ time. This concludes the proof.
\end{proof}

\cref{th:main}, together with \cref{le:book-from-good,cor:equivalence-fixed,cor:rb-augmented-good-embeddings}, implies the following.

\begin{corollary}\label{co:two-layer-fixed-algorithm}
The {\sc 2-Level Quasi-Planarity with Fixed Order} problem is linear-time solvable. Further, given a bipartite graph $K=(U_b,U_r,E_K)$ and a total order $\sigma_b$ of the vertices in $U_b$, a $2$-level quasi-planar drawing in which the vertices in $U_b$ lie along a straight line in the order $\sigma_b$ can be constructed in linear time, if it exists.
\end{corollary}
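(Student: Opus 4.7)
The plan is to reduce, via \cref{obs:gprime-planar,cor:rb-augmented-good-embeddings,le:neat-embeddings,le:characterization-triconnected-onesidefixed,le:book-from-good}, the construction of a \btpbef of $\langle G,\pi_b\rangle$ to the following core task: for each $rb$-augmented component $H$, decide in time $O(|H|)$ whether $H$ admits a neat embedding, and in the affirmative case produce one. The bulk of the proof will be the design of a linear-time dynamic program on the SPQR-tree $\mathcal T$ of $H^-$ that performs exactly this task.

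First I will build $\mathcal T$ in linear time~\cite{DBLP:conf/gd/GutwengerM00}, rooted at the Q-node $\rho$ of the edge $(b_1,b_2)$ of the black path $P$. In a single bottom-up pass I will classify each non-root node $\mu$ into one of the six node types \RE-RE,~\RF-RF,~\BE-BE,~\BP-BP,~\BB-BB,~\BF-BF of \cref{sse:node-classification}, by propagating, from children to parents, the color of the poles and the information on which subpaths of $P$ lie inside $H_\mu$. These labels, together with the skeleton of each node, can be computed and stored in total linear time.

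The heart of the algorithm is a second bottom-up pass that, at each non-root node $\mu$, computes the set $\Sigma(\mu)$ of all realizable pairs $(t,x)$ where $t$ is one of the constantly many embedding types enumerated in \cref{le:relevant-in-extensible-RE,le:relevant-in-extensible-RF,le:relevant-in-extensible-BE,le:relevant-in-extensible-BP,le:relevant-in-extensible-BB,le:relevant-in-extensible-BF} and $x\in\{\ell,r\}$ is a flip. At a Q-leaf the set $\Sigma$ is trivial. At an S-node, I will combine the $\Sigma$'s of its two children through a constant-size table lookup that encodes how two embedding types concatenated at a cut-vertex merge into a new type; the table is derived once and for all by the structural analysis of \cref{sse:embedding-classification}. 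At a P-node, the skeleton has one free choice (the cyclic order of its virtual edges), but by \cref{le:type-C-path,le:final-edges} only a constant number of inequivalent permutations need to be tried, and for each I will again consult the table. At an R-node the embedding of $\skel(\mu)$ is fixed up to a flip, and I will traverse its faces, aggregating child contributions face by face and checking the caterpillar/end-vertex consistency demanded by \cref{le:characterization-triconnected-onesidefixed}.

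The decisive structural tool that makes all this run in linear time is the Child Replacement Lemma (\cref{le:child-replacement}): when processing $\mu$, I may replace the pertinent graph of each child $\lambda$ by a constant-size replacement-graph realizing the same $(t,x)$, and the resulting type and flip at $\mu$ are unchanged. Consequently only $|\Sigma(\lambda)|=O(1)$ bits of information have to flow upward from each child, and the work at $\mu$ is $O(|\skel(\mu)|)$. Summed over all nodes, this yields total time $O(|H|)$. Once $\Sigma$ has been computed up to the child of $\rho$, the instance is positive iff some realized $(t,x)$ satisfies the global endpoint conditions on $b_1,b_m$ of \cref{le:characterization-triconnected-onesidefixed}; by storing, at each node, a constant-size witness (the chosen permutation at a P-node, the chosen flip at an R-node, the chosen $(t,x)$ for each child), a neat embedding is reconstructed top-down in linear time, after which \cref{le:book-from-good,cor:rb-augmented-good-embeddings} produce the requested \btpbef.

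The hard part will not be the bookkeeping but the case analysis feeding it. I expect the main obstacles to be: (i) writing down, exhaustively and correctly, the combination tables for S-, P-, and R-nodes across every pair of child types; (ii) correctly handling the undecided $rb$-trivial components incident to the poles, so that the flip tag $x\in\{\ell,r\}$ faithfully records on which outer face such components must eventually be placed; and (iii) proving that the caterpillar structure of $A(\mathcal E_\mu)$ and the end-vertex requirements on the backbone (especially the adjacency to leaves sharing faces with $b_m$) are preserved by the combinations, which is particularly subtle at P-nodes, where cyclic orders interact with the partition of the backbone into at most two caterpillars (\cref{le:structure-A-mu}), and at \BF-BF and \BP-BP/\BB-BB nodes, where \cref{le:bf-endvertex-backbone,le:structure-A-mu-BP-BB} impose delicate constraints on which face plays the role of the special end-vertex $f^*$.
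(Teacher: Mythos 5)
Your plan is essentially the paper's own route: the corollary is obtained there by invoking the linear-time algorithm for \btpbefP (\cref{th:main}), whose proof is exactly the machinery you sketch (black saturation and \cref{obs:gprime-planar}, reduction to $rb$-augmented components via \cref{cor:rb-augmented-good-embeddings}, neat embeddings, and a bottom-up SPQR-tree dynamic program over the constantly many embedding types with constant-size gadgets justified by \cref{le:child-replacement}, followed by a top-down reconstruction and \cref{le:book-from-good}). The case analyses you flag as the hard part are indeed where the paper spends its effort (e.g., the bounds on inequivalent P-node permutations in \cref{le:valid-set-p-vm-black} and the intrinsically-red/reddened face analysis at R-nodes), so your sketch is consistent with the paper.

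The one concrete omission is the last step: the corollary asks for a \emph{$2$-level quasi-planar drawing} with $U_b$ in the prescribed order, whereas your argument stops at producing a \btpbef of $\langle G,\pi_b\rangle$. You still need the linear-time translation from the book embedding back to the $2$-level drawing, which the paper gets from \cref{cor:equivalence-fixed} (built on \cref{le:structural-equivalence-1,le:structural-equivalence-2}): take the spine order, split it into $\xi_1$ on the black vertices and the reverse of the red order as $\xi_2$, and use the page assignment as the edge $2$-coloring of the $(2,2)$-track layout, which yields the straight-line $2$-level quasi-planar drawing. This is a small, citable step, but without it your proof establishes \cref{th:main} rather than the statement of the corollary.
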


Let $H$ be an $rb$-augmented component and let $P=(b_1,b_2,\dots,b_m)$ be the black path of $H$. In the following, we show how to test in linear time whether $H$ admits a neat embedding (see \cref{algo:neat}). 
First, we compute the biconnected graph $H^-$ obtained from $H$ by removing its degree-$1$ vertices; we construct the SPQR-tree $\mathcal{T}$ of $H^-$ and root it at the Q-node $\rho$ corresponding to the edge $(b_1,b_2)$. This can be done in linear time~\cite{DBLP:conf/icalp/BattistaT90,DBLP:conf/gd/GutwengerM00}.
We will traverse $\mathcal{T}$ bottom-up and compute, for each node $\mu \neq \rho$ of $\mathcal{T}$, the types of the relevant embeddings that $\pert{\mu}$ admits, using a dynamic-programming approach. We say that an embedding type is \emph{admissible} for $\mu$ if $\pert{\mu}$ admits a relevant embedding of that type. \cref{le:relevant-in-extensible-RE,le:relevant-in-extensible-RF,le:relevant-in-extensible-BE,le:relevant-in-extensible-BP,le:relevant-in-extensible-BB,le:relevant-in-extensible-BF} imply the following.

\begin{property}\label{pr:admissible-constant}
The set of admissible types for a node $\mu$ has  $O(1)$ size.
\end{property}

\noindent 
If there is a node $\mu$ of $\mathcal T$ for which the set of admissible types is empty, then we will reject the instance (\cref{le:empty-admissible}). Otherwise, we will conclude that $H$ admits a neat embedding (\cref{le:some-admissible}). 

\begin{lemma} \label{le:empty-admissible}
If the set of admissible types for a non-root node $\mu$ of $\mathcal T$ is empty, then $H$ does not admit any neat embedding.
\end{lemma}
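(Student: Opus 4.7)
The plan is to argue by contrapositive: I will show that if $H$ admits a neat embedding $\mathcal E$, then for every non-root node $\mu$ of $\mathcal T$ the set of admissible types for $\mu$ is nonempty. This is the natural direction, since a neat embedding of $H$ provides a concrete witness embedding for every pertinent graph $\pert{\mu}$ simultaneously.

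First, I would fix any non-root node $\mu$ and let $\mathcal E_\mu$ denote the restriction of $\mathcal E$ to the vertices and edges of $\pert{\mu}$, as described in \cref{def:extends}. By definition, $\mathcal E$ extends $\mathcal E_\mu$, and since $\mathcal E$ is assumed to be a neat embedding of $H$, \cref{def:extensible} immediately yields that $\mathcal E_\mu$ is an extensible embedding of $\pert{\mu}$.

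Next, I would invoke \cref{le:relevant-in-extensible}, which states that every extensible embedding of $\pert{\mu}$ is relevant. Thus $\mathcal E_\mu$ is relevant, meaning by \cref{def:relevant} that its type is one of the types listed in \cref{sse:embedding-classification} according to the node type of $\mu$ (cf.\ \cref{le:relevant-in-extensible-RE,le:relevant-in-extensible-RF,le:relevant-in-extensible-BE,le:relevant-in-extensible-BP,le:relevant-in-extensible-BB,le:relevant-in-extensible-BF}). The type of $\mathcal E_\mu$ is therefore an admissible type for $\mu$, so the set of admissible types is nonempty. Contrapositively, if the admissible set is empty for some non-root $\mu$, then $H$ cannot admit a neat embedding.

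There is no genuine technical obstacle here: all the hard work has already been done in the classification lemmas of \cref{sse:embedding-classification} and packaged into \cref{le:relevant-in-extensible}. The only care required is to verify that the notion of restriction used in \cref{def:extends} indeed produces a well-defined embedding of $\pert{\mu}$ (including the correct treatment of undecided $rb$-trivial components at the poles of $\mu$), so that the hypotheses of \cref{le:relevant-in-extensible} apply to $\mathcal E_\mu$. This is essentially built into the definitions and requires no additional argument.
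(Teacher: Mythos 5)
Your proof is correct and follows essentially the same route as the paper's: restrict the neat embedding to $\pert{\mu}$, note it is extensible by definition, apply \cref{le:relevant-in-extensible} to conclude it is relevant, hence its type is admissible. Phrasing it as a contrapositive rather than a contradiction is an immaterial difference.
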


\begin{proof}
Suppose for a contradiction that $H$ admits a neat embedding $\mathcal E$  and that there exists a node $\mu$ of $\mathcal T$ whose set of admissible types is empty. By \cref{def:extensible}, the restriction of $\mathcal E$ to $\pert{\mu}$ defines an extensible embedding $\mathcal E_\mu$ of $\pert{\mu}$. By \cref{le:relevant-in-extensible}, we have that $\mathcal E_\mu$ is relevant, hence the set of admissible types for $\mu$ is not empty, a contradiction.
\end{proof}

\begin{lemma} \label{le:some-admissible}
If the set of admissible types for the child of the root $\rho$ of $\mathcal T$ is not empty, then $H$ admits a neat embedding.
\end{lemma}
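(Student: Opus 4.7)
The plan is to extend a relevant embedding of the pertinent graph of the child $\tau$ of $\rho$ into a neat embedding of the whole $H$. First, I would apply \cref{le:rooting-at-b1b2} to conclude that $\tau$ has poles $b_1$ and $b_2$ and is of type~\BF-BF; consequently, by \cref{le:relevant-in-extensible-BF}, any admissible type $t$ for $\tau$ belongs to the catalogue of \BF-BF embedding types established in \cref{sse:embedding-classification}. By hypothesis, $\pert{\tau}$ admits a relevant embedding $\mathcal E_\tau$ of some admissible type $t$.

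Next, I would build a candidate embedding $\mathcal E$ of $H$ from $\mathcal E_\tau$ by inserting the reference edge $(b_1,b_2)$ together with the at most two undecided $rb$-trivial components incident to the poles $b_1$ and $b_2$ of $\tau$. Since $b_1$ and $b_2$ are both incident to both outer faces $\ell(\mathcal E_\tau)$ and $r(\mathcal E_\tau)$ of $\mathcal E_\tau$, there is freedom in choosing the outer face into which $(b_1,b_2)$ is drawn; the strategy is to select this face based on the type $t$ and the flip of $\mathcal E_\tau$ (\cref{def:flipped}), and to place the undecided $rb$-trivial components inside the newly created internal face, so that the backbone of the resulting auxiliary graph $A(\mathcal E)$ terminates at a face incident to $b_1$ on one end and at the end-vertex $f^*$ guaranteed by \cref{le:bf-endvertex-backbone} on the other. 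This augmentation is in spirit analogous to the gadget construction used in the proof of \cref{le:relevant-relevant}.

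Third, I would verify that $\mathcal E$ is a neat embedding of $H$ by checking \blue{Conditions~C\ref{condition:caterpillar}} and~\blue{C\ref{condition:endvertices}} of \cref{le:characterization-triconnected-onesidefixed} and the requirement of \cref{def:neat}. \blue{Condition~C\ref{condition:caterpillar}} would follow because \cref{le:structure-A-mu-BF} guarantees that $A(\mathcal E_\tau)$ is already a caterpillar, and the only modification that $A(\mathcal E)$ undergoes with respect to $A(\mathcal E_\tau)$ is the possible promotion of one outer face of $\mathcal E_\tau$ to a red face $f$ of $\mathcal E$ together with the attachment of at most two new red leaves to $f$, which preserves the caterpillar property. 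For \blue{Condition~C\ref{condition:endvertices}}, the leaf $r''$ is supplied by \cref{le:bf-endvertex-backbone}, while $r'$ is chosen among the red vertices incident to the outer face of $\mathcal E_\tau$ into which $(b_1,b_2)$ is inserted: depending on the type $t$, it is either one of the inserted $rb$-trivial components, the red neighbor $u_x$ of $b_1$ already incident to that outer face in $\mathcal E_\tau$, or an already-present red vertex on its boundary. Finally, neatness holds because $\mathcal E_\tau$ being relevant already respects the skeleton-face correspondence for every $rb$-trivial component incident to a non-pole vertex of $\pert{\tau}$, whereas the undecided components at $b_1$ and $b_2$ end up in faces corresponding to faces of the skeletons of the proper allocation nodes of the two poles (namely $\rho$ for $b_1$ and the S-node ancestor of $\tau$ for $b_2$).

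The hard part will be the exhaustive case analysis across the many \BF-BF embedding types: for each type, one must exhibit a choice of outer face of $\mathcal E_\tau$ and a witness leaf $r'$ that together make \blue{Condition~C\ref{condition:endvertices}} true. Fortunately, the classification of \cref{sse:embedding-classification} and the flip definition of \cref{def:flipped} were engineered precisely so that such a choice exists for every admissible type, so the verification reduces to a type-by-type look-up into the catalogue, which is tedious but routine.
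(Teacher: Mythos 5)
Your construction is essentially the paper's: route the reference edge $(b_1,b_2)$ in the outer face of a relevant embedding $\mathcal E_\tau$ of the pertinent graph of the child of $\rho$ (which is of type BF by \cref{le:rooting-at-b1b2}), place the undecided $rb$-trivial components in a face incident to $(b_1,b_2)$ chosen according to the type, take $r''$ from \cref{le:bf-endvertex-backbone}, and verify goodness via \cref{le:characterization-triconnected-onesidefixed}; so the approach matches. Two caveats. First, a small slip: the proper allocation node of $b_2$ is $\rho$ itself (the root Q-node whose skeleton contains both $b_1$ and $b_2$), not an ``S-node ancestor of $\tau$'' --- $\tau$'s parent is $\rho$; this is harmless, since both faces incident to $(b_1,b_2)$ do correspond to faces of the unique embedding of $\skel(\rho)$, which is what neatness needs. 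Second, the ``type-by-type look-up'' you defer as routine contains one step that is not a pure look-up: when $\mathcal E_\tau$ is of type BFN0 there is no red face at all and the only red vertex of $\pert{\tau}$ outside the undecided components is $u_\ell=u_r$, so the face into which $(b_1,b_2)$ is drawn becomes red, and a leaf $r'$ sharing a face with $b_1$ exists, only if undecided $rb$-trivial components are actually present at the poles; the paper derives their existence from Assumption~(A\ref{A2}) (at least three red vertices) before setting $f_1=f_k$ and choosing $r'$ among them and $r''=u_\ell=u_r$. Your sketch should make that existence argument explicit; the remaining cases (BFI0 with or without undecided components, and the types with a red outer face) then go through exactly as you indicate.
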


\begin{proof}
Let $\mu$ be the child of the root $\rho$ of $\mathcal T$. By \cref{le:rooting-at-b1b2}, we have that $\mu$ is of type \BF-BF and the pole $u$ coincides with $b_1$. Since $\mu$ has at least one admissible type, we have that $\pert{\mu}$ admits a relevant embedding $\mathcal{E}_\mu$. We prove that $\mathcal{E}_\mu$ is extensible, by showing that it can be augmented to a neat embedding $\mathcal E$ of $H$. For this, we route the edge $(u,v)$ in the outer face of~$\mathcal{E}_\mu$.  Moreover, we place the undecided $rb$-trivial component incident to~$u$ (if any) or to~$v$ (if any) in one of the two faces incident to $(u,v)$ as described below, based on the type of $\mathcal{E}_\mu$. Observe that both such faces correspond to faces of the unique embedding of the skeleton of $\rho$, which is the proper allocation node of $u$ and $v$. Therefore, in order to prove that $\cal E$ is neat, it suffices to show $\cal E$ is good, i.e., it satisfies the \cref{cond:a,cond:b} of \cref{le:characterization-triconnected-onesidefixed}. 

\begin{itemize}
	\item {\em Suppose first that the type of $\mathcal{E}_\mu$ is \BFNz-BFN0.} Refer to~\cref{fig:from-child-zero-n}. Note that, in this case, there exists an undecided $rb$-trivial component incident to $u$ and an undecided $rb$-trivial component incident to $v$, as otherwise $H$ does not contain at least three red vertices. In fact, by \cref{le:internal-red-vertex}, we have that $\mathcal{E}_\mu$ contains no internal red vertex, since no internal face of $\mathcal{E}_\mu$ is red; furthermore, $\mathcal{E}_\mu$ does not contain any red vertex incident to the outer face other than $u_\ell=u_r$ (this equality is discussed before~\cref{le:bf-endvertex-backbone}) and those of the undecided $rb$-trivial components, since no outer face of $\mathcal{E}_\mu$ is red. To obtain $\mathcal{E}$, we let the two undecided $rb$-trivial components lie inside the same face $f \in \{\ell(\mathcal E_\mu),r(\mathcal E_\mu)\}$ incident to $(u,v)$.
We have that $\mathcal{E}$ satisfies \cref{cond:a}. In fact, 
$A(\mathcal{E})$ is a star centered at~$f$, thus it is also a caterpillar and its (trivial) backbone contains the unique red face $f$ of $\mathcal{E}_\mu$. Also, $\mathcal{E}$ satisfies \cref{cond:b}. In fact, 
we can set $f=f_1=f_k$; 
also, we can select the leaf $r'$ of $A(\mathcal{E})$ sharing a face with $b_1=u$
to be the red vertex of an undecided $rb$-trivial component; 
finally, we can select the leaf $r''$ of $A(\mathcal{E})$ sharing a face with $b_m$ to be the red vertex $u_\ell=u_r$ (indeed, note that $b_m$ shares a face of $\mathcal{E}_\mu$, and hence of $\mathcal{E}$, with $u_\ell=u_r$ both if the type of $\mathcal{E}_\mu$ is~\BFNza-BFN0A and if it is~\BFNzb-BFN0B).

\begin{figure}[htb]
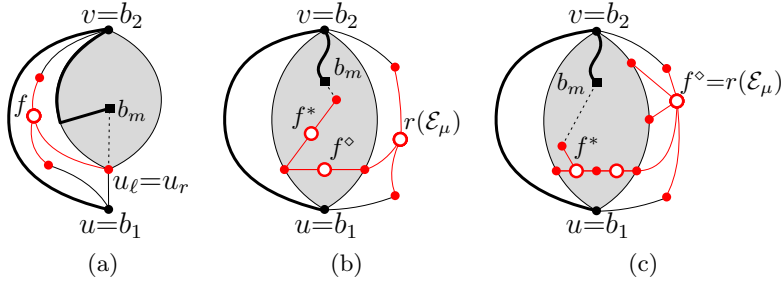
\tabcolsep=6pt
	\centering
	\subfloat[]{
		\includegraphics[height=.2\textwidth,page=33]{BF} 
		\label{fig:from-child-zero-n}
	}
		\subfloat[]{
		\includegraphics[height=.2\textwidth,page=34]{BF} 
		\label{fig:from-child-zero-i}
	}
		\subfloat[]{
		\includegraphics[height=.2\textwidth,page=35]{BF} 
		\label{fig:from-child-nonz}
	}
	\caption{Extending an embedding $\mathcal E_\mu$ of the pertinent graph of the child $\mu$ of the root $\rho$ of $\mathcal T$ to a neat embedding $\mathcal E$ of $H$. In (a) the type of $\mathcal E_\mu$ is~\protect\BFNz-BFN0 (in particular,~\protect\BFNza-BFN0A), in (b) the type of $\mathcal E_\mu$ is~\protect\BFIz-BFI0 (in particular,~\protect\BFIzb-BFI0B), while in (c) the type of $\mathcal E_\mu$ is of one of the remaining types~\protect\BF-BF (in particular,~\protect\BFIid-BFI1D).}\label{fig:from-child}
\end{figure}

\item {\em Suppose next that the type of $\mathcal{E}_\mu$ is~\BFIz-BFI0.} Refer to~\cref{fig:from-child-zero-i}.
Let $f^*$ and $f^\diamond$ be the end-vertices of the backbone $\mathcal B_\mu$ of the caterpillar $A(\mathcal E_\mu)$. 
By \cref{le:bf-endvertex-backbone}, we have  that (up to exchanging $f^*$ with $f^\diamond$ and/or $u_r$ with $u_\ell$):
\begin{enumerate}
	\item $b_m$ shares a face with a leaf of $A(\mathcal E_\mu)$ adjacent to $f^*$; and
	\item $f^\diamond$ is an internal face of ${\mathcal E}_\mu$ incident to $u_r$, and $u_r$ is a leaf of $A({\mathcal E}_\mu)$.
\end{enumerate}	

To obtain $\mathcal{E}$, we distinguish two cases based on whether 
 undecided $rb$-trivial components exist or not.
Suppose first that no undecided $rb$-trivial component exists.
We have that $\mathcal{E}$ satisfies \cref{cond:a}, since $A(\mathcal{E}) = A(\mathcal{E}_{\mu})$ is a caterpillar by \cref{le:structure-A-mu-BF}. 
Also, $\mathcal{E}$ satisfies \cref{cond:b}. In fact, 
we can set $f_1=f^\diamond$ and $f_k=f^*$.
Also, we can set $u_r$ to be the leaf $r'$ of $A(\mathcal{E})$ sharing a face with $b_1=u$.
Finally, we can set as $r''$ the leaf of $A(\mathcal{E}_{\mu})$ adjacent to $f^*$ and sharing a face with $b_m$.
Suppose now that undecided $rb$-trivial components exist.
To obtain $\mathcal{E}$, we let the undecided $rb$-trivial components lie inside $r(\mathcal E_{\mu})$.
We have that $\mathcal{E}$ satisfies \cref{cond:a}.
In fact, $\mathcal{E}$ contains the same red faces as $\mathcal{E}_{\mu}$ plus $r(\mathcal E_{\mu})$.
The auxiliary graph  $A(\mathcal{E})$ is obtained by adding to $A(\mathcal{E}_{\mu})$  the edge $(u_r,r(\mathcal E_{\mu}))$ and edges connecting $r(\mathcal E_{\mu})$ to the red vertex of each undecided $rb$-trivial component. Clearly, we obtained a caterpillar whose backbone is the same as the one of $A(\mathcal{E}_{\mu})$ plus the path $(f^\diamond,u_r,r(\mathcal E_{\mu}))$. 
Also, $\mathcal{E}$ satisfies \cref{cond:b}. In fact, 
we can set $f_1=r(\mathcal E_{\mu})$ and $f_k=f^*$.
Also, we can set the red vertex of one undecided $rb$-trivial component to be
the leaf $r'$ of $A(\mathcal{E})$ sharing a face with $b_1=u$.
Finally, we can set as $r''$ the leaf of $A(\mathcal{E}_{\mu})$ adjacent to $f^*$ and sharing a face with $b_m$.

\item {\em Suppose finally that $\mathcal{E}_{\mu}$  is of one of the remaining types \BF-BF.} Refer to~\cref{fig:from-child-nonz}.
We have that at least one of $\ell(\mathcal E_{\mu})$ or $r(\mathcal E_{\mu})$ is red. Suppose, w.l.o.g., that $r(\mathcal E_{\mu})$ is red.
To obtain $\mathcal{E}$, we let the undecided $rb$-trivial components, if any, lie inside $r(\mathcal E_{\mu})$.
We have that $\mathcal{E}$ satisfies \cref{cond:a}.
In fact, $\mathcal{E}$ contains the same red faces as $\mathcal{E}_{\mu}$.
The auxiliary graph  $A(\mathcal{E})$ is obtained from $A(\mathcal{E}_{\mu})$ by connecting $r(\mathcal E_{\mu})$ to the red vertex of each undecided $rb$-trivial component, if any. Since $A(\mathcal{E}_{\mu})$ is a caterpillar by \cref{le:structure-A-mu-BF}, we obtained a caterpillar, whose backbone is the same as the one of $A(\mathcal{E}_{\mu})$. Also, $\mathcal{E}$ satisfies \cref{cond:b}. In fact, by \cref{le:bf-endvertex-backbone}, up to a relabeling of $f^*$ and $f^\diamond$, we have that:
	\begin{enumerate}
		\item $b_m$ shares a face with a leaf of $A(\mathcal E_\mu)$ adjacent to $f^*$; and
		\item $f^\diamond$ is either an outer face of ${\mathcal E}_\mu$ or is an internal face of ${\mathcal E}_\mu$ incident to $u_{\ell}$ (to $u_r$); in the latter case, $u_{\ell}$ (resp.\ $u_r$) is a leaf of $A({\mathcal E}_\mu)$.
	\end{enumerate} 

Then we can set $f_1=f^\diamond$ and $f_k=f^*$. Also, if $f^\diamond$ is an outer face of ${\mathcal E}_\mu$, we can set 
any red leaf of $A({\mathcal E}_\mu)$ adjacent to $f^\diamond$ to be
the leaf $r'$ of $A(\mathcal{E})$ sharing a face with $b_1=u$; otherwise, if $f^\diamond$ is an internal face of ${\mathcal E}_\mu$ we can set the one between $u_{\ell}$ and $u_r$ that is a leaf of $A({\mathcal E}_\mu)$ adjacent to $f^\diamond$ to be
the leaf $r'$ of $A(\mathcal{E})$ sharing a face with $b_1=u$. Finally, we can set as $r''$ the leaf of $A(\mathcal{E}_{\mu})$ adjacent to $f^*$ and sharing a face with $b_m$.
\end{itemize}
This completes the case distinction and hence the proof of the lemma.
\end{proof}

\subsection{The Bottom-up Traversal}\label{sse:testingalgorithm} 
\newcommand{\plus}{plus}

Let $\mu$ be a node $\mathcal T$ and let $e_\mu$ be the virtual edge representing $\mu$ in the skeleton of the parent of $\mu$. We denote the poles of $\mu$ (the end-vertices of $e_\mu$) as $u_\mu$ and $v_\mu$. 
In particular, $u_\mu$ is a black vertex, and 
if $v_\mu$ is also black, then $u_\mu$ precedes $v_\mu$ along $P$.
Also, we say that an admissible type for $\mu$ is an admissible~type~for~$e_\mu$.


%
%

By \cref{le:empty-admissible,le:some-admissible}, in order to test whether $H$ admits a neat embedding, we will compute the set of admissible types for all the non-root nodes of $\mathcal T$. Note that we are also interested in constructing a neat embedding of $H$, if any. Such a task would be easily achieved if we could store an embedding of $\pert{\mu}$, for each admissible type for $\mu$, however this would imply a quadratic running time. In order to achieve a linear running time, for each node $\mu$ and for each admissible type $t_\mu$ for $\mu$, we store in $\mu$ 
the following pieces of information, whose size is $O(|\skel(\mu)|)$.

\begin{definition}\label{def:embedding-configuration}
	The following pieces of information form an \emph{embedding configuration} for $\mu$:
	\begin{enumerate}[(i)]
		\item a planar embedding $\mathcal S_\mu$ of $\skel(\mu)$;
		\item for each vertex $w$ of $\skel(\mu)$ that is incident to an $rb$-trivial component and such that $\mu$ is the proper allocation node of $w$, a designated face $f_w$ of $\mathcal S_\mu$ incident to $w$,
		\item an admissible type $t_i$, for each virtual edge $e_i$ of $\skel(\mu)$, and
		\item a label $x_i \in \{\ell,r\}$, for each virtual edge $e_i$ of $\skel(\mu)$.
	\end{enumerate}
\end{definition}


In \cref{sse:embeddingconstruction}, we will show how the computed information can be exploited to construct a neat embedding $\cal E$ of $H$, if any, in linear time (see \cref{th:neat-construction}).

In the bottom-up traversal of $\mathcal T$, we  distinguish four cases, based on whether $\mu$ is a Q-, S-, P-, or an R-node.  Observe that the leaves of $\mathcal T$ are all Q-nodes.

Suppose that $\mu$ is a Q-node.  
If $v_\mu$ is black, then 
the only admissible type $t_\mu$ for $\mu$ is \BPi-BP1.
Otherwise, $v_\mu$ is red, and the only admissible type $t_\mu$ for $\mu$ is \RE-RE. 
Since in both cases, $\mathcal S(t_\mu)$ is the unique embedding of $\skel(\mu)=(u_\mu,v_\mu)$, $\skel(\mu)$ does not contain any virtual edges, and $\mu$ is not the proper allocation node of any vertex, we do not need to store any embedding configuration~for~$\mu$.

If $\mu$ is not a Q-node, then it has $k\geq 2$ children $\nu_1,\dots,\nu_k$.  For the sake of readability, we let $e_i = e_{\nu_i}$, $u_i = u_{\nu_i}$, and $v_i = v_{\nu_i}$. We will compute the set of admissible types and embedding configurations for $\mu$, assuming to have already computed the same information for $\nu_1,\dots,\nu_k$. A key technical ingredient in our approach would be the construction of embeddings of $\pert{\mu}$, obtained by substituting each virtual edge $e$ of $\skel(\mu)$ with an embedding of its pertinent graph $\pert{e}$ whose type is one of the admissible types for $e$. A naive implementation of this technique would, however, immediately imply a super-linear running time. Therefore, in order to achieve a linear running time and in virtue of \cref{le:child-replacement}, rather than substituting $e$ with an embedding of $\pert{e}$, we will substitute it with an embedding of a constant-size replacement-graph for~$\pert{e}$ (recall \cref{def:replacement}) whose type is the same as the one of the embedding of $\pert{e}$. To this aim, we will exploit a dictionary $\mathcal D$ that associates each embedding type $t$ with a plane graph of constant size, called \emph{$t$-gadget}, whose embedding is of type $t$. This graph will be used as a replacement-graph for $\pert{e}$.
The dictionary $\mathcal D$ contains the plane graphs illustrated in \cref{fig:Nodetypesw-gadgets,fig:RF-gadgets,fig:BE-gadgets,fig:BPBB-gadgets,fig:NodetypesBF-gadgets}.

\begin{lemma}\label{obs:constant-types}
The dictionary $\mathcal D$ contains a $t$-gadget, for each embedding type~$t$, and has $O(1)$~size. 
\end{lemma}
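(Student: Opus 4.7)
The plan is to prove the lemma in two steps: first, bound the total number of embedding types by a constant, and second, exhibit, for each such type $t$, an explicit constant-size plane graph that can serve as a $t$-gadget.

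For the first step, I would collect the embedding types introduced in \cref{sse:embedding-classification} according to the node type, invoking \cref{le:relevant-in-extensible-RE,le:relevant-in-extensible-RF,le:relevant-in-extensible-BE,le:relevant-in-extensible-BP,le:relevant-in-extensible-BB,le:relevant-in-extensible-BF}: this yields the finite lists \RE-RE; \RFNza-RFN0A, \RFNzb-RFN0B, \RFNia-RFN1A, \RFNib-RFN1B, \RFNic-RFN1C, \RFNid-RFN1D, \RFNii-RFN2, \RFIz-RFI0, \RFIia-RFI1A, \RFIib-RFI1B, \RFIii-RFI2; \BE-BE-slim, \BE-BE-fat; \BPi-BP1, \BPii-BP2, \BPiii-BP3, \BPiv-BP4, \BPv-BP5; \BBii-BB2, \BBiii-BB3, \BBiv-BB4, \BBv-BB5; \BFNza-BFN0A, \BFNzb-BFN0B, \BFNia-BFN1A, \BFNib-BFN1B, \BFNic-BFN1C, \BFNid-BFN1D, \BFNii-BFN2, \BFIza-BFI0A, \BFIzb-BFI0B, \BFIia-BFI1A, \BFIib-BFI1B, \BFIic-BFI1C, \BFIid-BFI1D, \BFIie-BFI1E, \BFIif-BFI1F, \BFIig-BFI1G, \BFIiia-BFI2A, \BFIiib-BFI2B. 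Since this list is fixed, its cardinality is a constant.

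For the second step, I would define, for each embedding type $t$, the $t$-gadget to be the plane graph depicted in the corresponding panel of \cref{fig:Nodetypesw-gadgets,fig:RF-gadgets,fig:BE-gadgets,fig:BPBB-gadgets,fig:NodetypesBF-gadgets}. For each gadget, I would then verify four properties: (a) it satisfies conditions r.\ref{cond:replacement-uv}--r.\ref{cond:replacement-rb-trivial} of \cref{def:replacement} (it contains the two poles with the prescribed colors; every vertex is coloured either red or black; no $rb$-trivial component is incident to a pole); (b) the resulting graph is bipartite and its black vertices induce a path whose position with respect to the poles agrees with the node type (\RE-RE, \RF-RF, \BE-BE, \BP-BP, \BB-BB, or \BF-BF); (c) the underlying plane graph is planar, connected, and its embedding satisfies the structural constraints of the claimed type as extracted from \cref{le:structure-A-mu-RF,le:structure-A-mu-BE,le:structure-A-mu-BP-BB,le:structure-A-mu-BF} (caterpillar structure of the auxiliary graph, number of red outer faces, presence of an internal red face, relative position of $b_m$ and the leaf $r''$, etc.); and (d) every gadget has $O(1)$ vertices and edges by direct inspection. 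Since the list of types is constant and each gadget has constant size, the dictionary $\mathcal{D}$ has $O(1)$ size.

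The verification in (c) is the main obstacle, because each type is defined by a conjunction of several features (number of caterpillars, number of outer red faces, existence of an internal red face, incidence of $b_m$, adjacency of $u_\ell, u_r$ or $v$ to an end-vertex of the backbone, etc.). I would organise the check as a type-by-type table: for each $t$, I compute the auxiliary graph $A(\mathcal{E})$ of the depicted embedding of the gadget, identify its backbones, determine which outer faces belong to it, and mark the leaf that plays the role of~$r''$, then match these data against the defining features of~$t$ as listed in \cref{sse:embedding-classification}. Once these finitely many checks are carried out, the lemma follows.
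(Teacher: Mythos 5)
Your proposal is correct and follows essentially the same route as the paper: both define the $t$-gadgets to be the plane graphs in the figures, verify by inspection (type by type, against the defining features from the classification in the embedding-classification subsection) that each depicted embedding has the claimed type, and conclude $O(1)$ size from the constant number of types together with the constant size of each gadget. The paper merely compresses the type-by-type verification into a single worked example (the \BFIig-BFI1G gadget), whereas you spell out the checklist explicitly; the content is the same.
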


\begin{proof}
The first part of the statement can be observed by comparing the graphs in \cref{fig:Nodetypesw-gadgets,fig:RF-gadgets,fig:BE-gadgets,fig:BPBB-gadgets,fig:NodetypesBF-gadgets} with the definitions of the types \RE-RE, \RF-RF, \BE-BE, \BB-BB/\BP-BP, and \BF-BF, respectively. For instance, consider the plane graph $\mathcal D(t)$  in \cref{fig:NodetypesBF-gadgets} labeled as \BFIig-BFI1G.
Since its poles are black, since there exists no black path between them, and since $b_m$ is an internal vertex of $\mathcal D(t)$, we have that the type of $\mathcal D(t)$ is \BF-BF. Furthermore, the auxiliary graph of $\mathcal D(t)$ is a caterpillar whose backbone contains two internal red faces and one outer face. Finally, one end-vertex of the backbone of $\mathcal D(t)$ is an outer face, while the other end-vertex is an internal face, different from $f_\ell$ and $f_r$, that is incident to a leaf of the caterpillar that shares a face with $b_m$. Therefore, $\mathcal D(t)$ has type $t=$ BFI1G, and hence $\mathcal D(t)$ can serve as a $t$-gadget.
The second part of the statement follows from  \cref{pr:admissible-constant} and from the fact that each $t$-gadget has constant size, by construction.
\end{proof}


Given an embedding configuration $\cal C_\mu$ for $\mu$, the \emph{realization} of $\cal C_\mu$ is the plane graph $R({\cal C_\mu})$ obtained as follows.
First, initialize $R({\cal C_\mu}) = \mathcal S_\mu$.
Second, for each vertex $w$ of $\skel(\mu)$ 
such that~$f_w$ is defined, place the $rb$-trivial component incident to $w$ inside~$f_w$.
Finally, for each virtual edge~$e_i$ of $\skel(\mu)$, replace~$e_i=(u_i,v_i)$ in~$R({\cal C_\mu})$ with an $x_i$-flipped copy of the $t_i$-gadget, in such a way that $u_{i}$ and $v_{i}$ are identified with the vertices $u$ and $v$ of the $t_i$-gadget, respectively.

\begin{lemma} \label{le:rc-replacement-graph}
Given an embedding configuration $\cal C_\mu$, its realization $R({\cal C_\mu})$ is a replacement-graph for $\pert{\mu}$. Also, the types of $R({\cal C_\mu})$ and $\mu$ coincide.
\end{lemma}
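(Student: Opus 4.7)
The plan is to verify the four conditions \emph{r.1}--\emph{r.4} of \cref{def:replacement} one by one, and then to check that the type (i.e.\ \RE, \RF, \BE, \BP, \BB, or \BF) determined by the pole colors and the behavior of the black path $P$ in $R({\cal C}_\mu)$ is the same as the one determined for $\mu$. Condition \emph{r.1} is immediate from the construction, since $R({\cal C}_\mu)$ is obtained by starting from $\mathcal S_\mu$, and $\skel(\mu)$ already contains the poles $u$ and $v$ of $\mu$. For \emph{r.2}, the colors of the vertices of $\skel(\mu)$ come from $H$ and are therefore correct; each gadget from the dictionary $\mathcal D$ replacing a virtual edge $e_i=(u_i,v_i)$ has by construction the two poles of the matching color (since the admissible type $t_i$ of $e_i$ coincides with the type of $\nu_i$, whose poles have prescribed colors), and the remaining vertices it introduces are colored red or black as in $\mathcal D$. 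For \emph{r.3}, the only $rb$-trivial components added to $R({\cal C}_\mu)$ are those attached in step (ii) of the construction to vertices $w$ of $\skel(\mu)$ for which $\mu$ is the proper allocation node; by \cref{remark:pan}, no such $w$ is a pole of $\mu$, and moreover by the definition of gadget, no $t_i$-gadget introduces an $rb$-trivial component incident to its own poles.

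The main obstacle is \emph{r.4}: the graph $K$ obtained from $H$ by replacing $\pert{\mu}$ with $R({\cal C}_\mu)$ must be an $rb$-augmented component, i.e., its non-$rb$-trivial part must be biconnected, bipartite, and its black vertices must induce a path. Bipartiteness follows from \emph{r.2}, because every edge of $R({\cal C}_\mu)$ (whether internal to a gadget or inherited from $\mathcal S_\mu$) connects vertices of different colors. Biconnectedness follows from the fact that $\skel(\mu)$ is $uv$-biconnectible and that each gadget $\mathcal D(t_i)$ is $uv$-biconnectible too, so the replacement of each virtual edge $(u_i,v_i)$ by $\mathcal D(t_i)$ preserves biconnectibility; composing this with $\overline H_\mu$ along $\{u,v\}$ yields a biconnected graph, to which the $rb$-trivial components are then attached. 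For the black-path condition, I would trace the interaction of $P$ with $R({\cal C}_\mu)$. The edges of $P$ that do not belong to $\pert{\mu}$ are already present in $\overline H_\mu$. For the edges of $P$ that belong to $\pert{\mu}$, I would argue inductively on the virtual edges: each $e_i$ of $\skel(\mu)$ inherits from its admissible type $t_i$ the same ``black path behavior'' as $\pert{\nu_i}$ (i.e., whether $P$ touches a pole, enters a pole, traverses the virtual edge, or ends inside it), because this behavior is determined by the type of $\nu_i$, and the gadget $\mathcal D(t_i)$ realizes it faithfully by construction. Hence the black vertices of $K$ still form a path, and $K$ is an $rb$-augmented component.

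Finally, the equality of the types of $R({\cal C}_\mu)$ and $\mu$ follows from the same argument. The type of $\mu$ depends only on (i) the colors of $u$ and $v$, which by \emph{r.2} are preserved in $R({\cal C}_\mu)$, and (ii) which of the sub-paths $P_u$, $P_{uv}$, $P_v$ of $P$ defined in \cref{sse:node-classification} lie in the pertinent graph. By the argument above, the portions of $P$ inside $R({\cal C}_\mu)$ correspond exactly to those inside $\pert{\mu}$, with the same endpoints and the same ``enters/traverses/ends'' behavior at the poles. Hence $R({\cal C}_\mu)$ has the same type as $\mu$. I expect the hardest technical point to be the faithful tracking of $P$ through the gadget replacements, which is why I will invoke \cref{le:child-replacement} iteratively to conclude that the combinatorial behavior of $P$ with respect to the poles of $\mu$ is unaffected by substituting each $\pert{\nu_i}$ with the corresponding $t_i$-gadget.
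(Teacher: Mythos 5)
Your proof is correct in substance and follows essentially the same route as the paper's: Conditions r.1--r.3 of \cref{def:replacement} are immediate from the construction; Condition r.4 rests on the absence of red--red edges, on the $u_\mu v_\mu$-biconnectibility of $\skel(\mu)$ together with the $uv$-biconnectibility of each gadget, and on the fact that each $t_i$-gadget reproduces the node type, and hence the black-path behavior at its poles and the containment of $b_m$, of $\pert{\nu_i}$; the equality of types then follows from the pole colors and the black-path data. The paper merely reverses your order: it proves the type equality first (via the pole colors, the existence of a black path between the poles, and the number of black neighbors of the poles) and then deduces that the black vertices of $K$ induce a path.

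Two caveats. First, your closing appeal to \cref{le:child-replacement} does not apply in this setting: that lemma starts from an embedding of $\pert{\mu}$ whose restrictions to the children realize the prescribed types and flips, and an arbitrary embedding configuration $\mathcal C_\mu$ need not be realized by any embedding of $\pert{\mu}$ at all (deciding whether $R(\mathcal C_\mu)$ yields a relevant embedding is exactly the job of \cref{le:compute-type}), so the iterative replacement cannot be set up. This is harmless only because your direct observation---that the $t_i$-gadget has the same node type as $\nu_i$, hence the same black-path behavior---is precisely the paper's argument and already suffices. Second, the claim that every edge of $R(\mathcal C_\mu)$ joins vertices of different colors is false: the gadgets realizing types with black-path edges (e.g., the BP1 gadget) contain black--black edges. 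What Condition r.4 actually needs is that there are no red--red edges and that the black vertices of $K$ induce a path, both of which your black-path argument already delivers.
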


\begin{proof}
We first show that the types of $R({\cal C_\mu})$ and $\mu$ coincide.
Recall that the type of $R({\cal C_\mu})$ is determined by 
\begin{inparaenum}[(a)]
	\item  whether $v_{\mu}$ is a red vertex,
	\item  whether a black path exists in $R({\cal C_\mu})$ between $u_{\mu}$ and $v_{\mu}$, and 
	\item the number of black neighbors of $u_{\mu}$ and $v_{\mu}$ in $R({\cal C_\mu})$.
\end{inparaenum}
Since $R({\cal C_\mu})$ is obtained by replacing each virtual edge $e_i$ of $\skel(\mu)$ with the $t_i$-gadget for $e_i$ in $\mathcal C_\mu$ and $t_i$ is an admissible type for $e_i$, we have that $R({\cal C_\mu})$ and $\pert{\mu}$ are equivalent with respect to properties (a), (b),~and~(c). 

Next, we show that $R({\cal C_\mu})$ is a replacement-graph for $\pert{\mu}$. Recall the conditions of \cref{def:replacement} of a replacement-graph.
By the construction	of $R(\mathcal C_\mu)$, we immediately have that the vertex set of $R(\mathcal C_\mu)$ contains $u_\mu$ and $v_\mu$ (Condition r.\ref{cond:replacement-uv}), that each vertex of $R(\mathcal C_\mu)$ is colored either red or black and the vertices $u_\mu$ and $v_\mu$ have the same color as in $H$ (Condition r.\ref{cond:replacement-color}), and that (iii) there exists no $rb$-trivial component incident to $u_\mu$ or $v_\mu$ in $R(\mathcal C_\mu)$ (Condition r.\ref{cond:replacement-rb-trivial}).
Finally, let $K$ the graph obtained from $H$ by removing the vertices and the edges of $\pert{\mu}$ except for $u_\mu$, $v_\mu$, and their incident $rb$-trivial components, if any, and by inserting $R(\mathcal C_\mu)$ in the resulting graph, while identifying the vertices $u_\mu$ and $v_\mu$ in the two graphs; then $K$ is an $rb$-augmented component. 
Namely, the black vertices induce a path in $K$ since $R({\cal C_\mu})$ and $\mu$ have the same type. Furthermore, by the definition of the $t_i$-gadget, there exists no edge connecting two red vertices. Moreover, the fact that the graph $K^-$ obtained from $K$ by removing all the degree-$1$ vertices is biconnected immediately descends from the fact that the graph $H^-$ obtained from $H$ by removing all the degree-$1$ vertices is biconnected, that $\skel(\mu)$ is $u_\mu v_\mu$-biconnectible, and that each $t_i$-gadget is $uv$-biconnectible. This concludes the proof that Condition r.\ref{cond:replacement-rb-augmented} of \cref{def:replacement} is satisfied and hence the proof of the lemma.
\end{proof}

The type of an embedding configuration can be detected efficiently, as in the following lemma.

\begin{lemma} \label{le:compute-type}
Given an embedding configuration $\cal C_\mu$, it is possible to determine in $O(|\skel(\mu)|)$ time whether the embedding of $R({\cal C_\mu})$ is relevant and, if so, to return its type.
\end{lemma}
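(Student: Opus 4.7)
The plan is to build $R(\mathcal{C}_\mu)$ explicitly and then analyze its embedding combinatorially. Since $\skel(\mu)$ has $O(|\skel(\mu)|)$ vertices and edges, and by \cref{obs:constant-types} every virtual edge is replaced by a constant-size gadget, the realization $R(\mathcal{C}_\mu)$ has size $O(|\skel(\mu)|)$ and can be constructed, together with its face list and an indication of its two outer faces, in $O(|\skel(\mu)|)$ time. By \cref{le:rc-replacement-graph}, the coarse type of $R(\mathcal{C}_\mu)$ among $\{$\RE-RE, \RF-RF, \BE-BE, \BP-BP, \BB-BB, \BF-BF$\}$ equals the type of $\mu$, and this can be read off from the colors of $u_\mu$ and $v_\mu$ together with which subpaths of the black path $P$ lie inside $R(\mathcal{C}_\mu)$. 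The latter is decided in $O(|\skel(\mu)|)$ time by scanning the child types $t_i$ stored in $\mathcal{C}_\mu$, since each $t_i$ encodes whether $P_{u_i v_i}$ and $P_{v_i}$ are contained in the corresponding $t_i$-gadget.

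Next, I would compute the auxiliary graph $A(\mathcal E)$ of the embedding $\mathcal E$ of $R(\mathcal{C}_\mu)$, following \cref{def:auxiliary-graph-E-mu}. A single traversal of the face lists of $R(\mathcal{C}_\mu)$ suffices to identify the red faces of $\mathcal E$, to decide whether each of $\ell(\mathcal E)$ and $r(\mathcal E)$ qualifies as a vertex of $A(\mathcal E)$ (using the special case for nodes of type \BP-BP and \BB-BB, where a single red vertex on the boundary already suffices), and to output the incidences between red vertices of $R(\mathcal{C}_\mu)$ and these faces. Both the traversal and the subsequent construction of $A(\mathcal E)$ run in $O(|\skel(\mu)|)$ time.

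Finally, I would classify $A(\mathcal E)$ into one of the embedding types of \cref{sse:embedding-classification} by checking the three features underlying the classification: the number of connected components of $A(\mathcal E)$ and whether each is a caterpillar (Feature~F1), the membership of the outer faces in $A(\mathcal E)$ (Feature~F2), and the existence of an internal red face (Feature~F3). The finer subtype distinctions---such as which end-vertex of the backbone is adjacent to a leaf that shares a face with $b_m$, whether the vertex on the backbone adjacent to an outer face is $u_\ell$ or $u_r$, and whether the backbone passes through $v$ via an outer face (as needed to separate, e.g., \RFIia-RFI1A from \RFIib-RFI1B, or to distinguish the many \BF-BF subtypes in \cref{le:relevant-in-extensible-BF})---reduce to a constant number of BFS/DFS passes on $A(\mathcal E)$ and a constant number of face-walks on $R(\mathcal{C}_\mu)$, each running in $O(|\skel(\mu)|)$ time. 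If $A(\mathcal E)$ fails to be a disjoint union of at most two caterpillars, or if none of the subtypes matches, the embedding is reported not to be relevant; otherwise the unique matching subtype is returned.

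The main obstacle is to organize the subtype decision tree so that, depending on the coarse type, exactly the correct subset of conditions from \cref{sse:embedding-classification} is tested and precisely one subtype is identified whenever $\mathcal E$ is relevant. Exhaustiveness is guaranteed by \cref{le:relevant-in-extensible-RE,le:relevant-in-extensible-RF,le:relevant-in-extensible-BE,le:relevant-in-extensible-BP,le:relevant-in-extensible-BB,le:relevant-in-extensible-BF}, and since the total number of subtypes is constant by \cref{pr:admissible-constant}, performing each local test in linear time yields the claimed $O(|\skel(\mu)|)$ running time.
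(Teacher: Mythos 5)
Your proposal is correct and follows essentially the same route as the paper's proof: construct the realization $R(\mathcal C_\mu)$ from the constant-size gadgets in $O(|\skel(\mu)|)$ time, read off its coarse type, build the auxiliary graph by traversing face boundaries, and then decide relevance and the precise subtype by a constant number of linear-time structural checks organized per coarse type. The only cosmetic difference is that you infer the coarse type by scanning the stored child types rather than inspecting $R(\mathcal C_\mu)$ directly, which is equally valid.
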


\begin{fullproof}
To prove the statement, we first show that the plane graph $R({\cal C_\mu})$ can be constructed in $O(|\skel(\mu)|)$ time.
We visit $\skel(\mu)$ and, for each virtual edge $e_i$, we replace $e_i$ with the $x_i$-flipped copy of the $t_i$-gadget. Since each such a replacement can be done in $O(1)$ time, the construction of $R({\cal C_\mu})$ can be performed in $O(|\skel(\mu)|)$ time. 

Second, by \cref{le:rc-replacement-graph}, we have that $R({\cal C_\mu})$ is a replacement-graph. Hence, we determine the type of $R({\cal C_\mu})$ (i.e., \RF-RF, \BE-BE, \BP-BP, \BB-BB, or \BF-BF) by checking whether $v_{\mu}$ is a red vertex, by checking whether a black path exists in $R({\cal C_\mu})$ between $u_{\mu}$ and $v_{\mu}$, and by counting the number of black neighbors of $u_{\mu}$ and $v_{\mu}$; clearly, these computations can be performed in $O(|\skel(\mu)|)$ time. Note that the type of $R({\cal C_\mu})$ is not~\RE-RE, given that $\mu$ is not a Q-node, hence $\skel(\mu)$ contains more than one virtual edge, while by \cref{le:structure-node-RE} a replacement-graph of type~\RE-RE consists of a single edge.

Next, we determine whether the embedding $\mathcal E_R$ of $R({\cal C_\mu})$ is relevant. In order to do that, we construct the auxiliary graph $A(\mathcal E_R)$ in $O(|\skel(\mu)|)$ time as follows. We initialize $A(\mathcal E_R)$ to the set of red vertices of $R({\cal C_\mu})$. Then we traverse the boundary of each face $f$ of $\mathcal E_R$; if more than one red vertex is encountered during this traversal, then we insert a new vertex $f$ in $A(\mathcal E_R)$ and we connect it to all the red vertices incident to $f$. Note that $A(\mathcal E_R)$ has $O(|\skel(\mu)|)$ size. 

We check whether $A(\mathcal E_R)$ consists of at most two caterpillars. If that is not the case, then by \cref{le:structure-A-mu}, we conclude that the embedding $\mathcal E_R$ of $R({\cal C_\mu})$ is not relevant. Otherwise, we distinguish some cases based on the type of $R({\cal C_\mu})$.

Suppose that the type of $R({\cal C_\mu})$ is~\RF-RF. Let $b^*_m$ be the end-vertex of the black path of $R({\cal C_\mu})$ different from $u_{\mu}$. We check whether $A(\mathcal E_R)$ is a single caterpillar. By \cref{le:structure-A-mu-RF}, if the check fails, we conclude that $\mathcal E_R$ is not relevant. Assume now that $A(\mathcal E_R)$ is a single caterpillar. If $\mathcal E_R$ does not contain any internal red face, then $\mathcal E_R$ is always relevant, by~\cref{le:rf-endvertex-backbone-no-internal}. Assume now that $\mathcal E_R$ contains internal red faces. We check whether: (i) one of the end-vertices of the backbone of $A(\mathcal E_R)$ corresponds to an internal face of $\mathcal E_R$ that is incident to a leaf $r''$ that shares a face with $b^*_m$; and (ii) the other end-vertex of the backbone of $A(\mathcal E_R)$ either corresponds to an outer face of $\mathcal E_R$, or corresponds to an internal face of $\mathcal E_R$ incident to $v_{\mu}$; in the latter case, we also check whether $v_{\mu}$ is a leaf of $A(\mathcal E_R)$. By \cref{le:rf-endvertex-backbone}, if the checks succeed, then we conclude that $\mathcal E_R$ is relevant, otherwise we conclude that $\mathcal E_R$ is not relevant.

If the type of $R({\cal C_\mu})$ is~\BE-BE, then we check whether $A(\mathcal E_R)$ is a path between the red neighbors $u_\ell$ and $u_r$ of $u_{\mu}$ that are incident to the outer faces of $\mathcal E_R$. By \cref{le:structure-A-mu-BE}, if the check succeeds, then we conclude that $\mathcal E_R$ is relevant, otherwise we conclude that $\mathcal E_R$ is not relevant.

Suppose that the type of $R({\cal C_\mu})$ is \BP-BP or \BB-BB. We first check whether each red outer face is an end-vertex of the backbone of a caterpillar composing $A(\mathcal E_R)$. Further, if $A(\mathcal E_R)$ consists of two caterpillars, then we check whether one of them is a star. Suppose now that $\mathcal E_R$ contains internal red faces. If the type of $R({\cal C_\mu})$ is \BP-BP, then we check whether $b_m=v_{\mu}$ and whether the backbone of a caterpillar has an end-vertex that corresponds to an internal red face of $\mathcal E_R$ that is incident to a leaf of $A(\mathcal E_R)$ that shares a face with $v_{\mu}$. If the type of $R({\cal C_\mu})$ is \BB-BB, let $b^*_m$ be the end-vertex of the black path of $R({\cal C_\mu})$ different from $u_{\mu}$; then we check whether the backbone of a caterpillar has an end-vertex that corresponds to an internal red face of $\mathcal E_R$ that is incident to a leaf of $A(\mathcal E_R)$ that shares a face with $b^*_m$. By \cref{le:structure-A-mu-BP-BB}, if all checks succeed, then we conclude that $\mathcal E_R$ is relevant, otherwise we conclude that $\mathcal E_R$ is not relevant.

Suppose that the type of $R({\cal C_\mu})$ is \BF-BF. Let $b^*_m$ be the end-vertex of the black path of $R({\cal C_\mu})$ different from $u_{\mu}$. We check whether $A(\mathcal E_R)$ is a single caterpillar. By \cref{le:structure-A-mu-BF}, if the check fails, we conclude that $\mathcal E_R$ is not relevant. Assume now that $A(\mathcal E_R)$ is a single caterpillar. If $\mathcal E_R$ does not contain any internal red face, then $\mathcal E_R$ is always relevant. Assume now that $\mathcal E_R$ contains internal red faces. Again by \cref{le:structure-A-mu-BF}, we have that $A(\mathcal E_R)$ contains a path between the neighbors $u_\ell$ and $u_r$ of $u_{\mu}$ incident to the outer faces of $\mathcal E_R$. We check whether: (i) one of the end-vertices of the backbone of $A(\mathcal E_R)$ corresponds to a face of $\mathcal E_R$ that is incident to a leaf $r''$ that shares a face with $b^*_m$; and (ii) the other end-vertex of the backbone of $A(\mathcal E_R)$ either corresponds to an outer face of $\mathcal E_R$, or corresponds to an internal face of $\mathcal E_R$ incident to $u_\ell$ or $u_r$; in the latter case, we also check whether  $u_\ell$ (resp.\ $u_r$) is a leaf of $A(\mathcal E_R)$. By \cref{le:bf-endvertex-backbone}, if the checks succeed, then we conclude that $\mathcal E_R$ is relevant, otherwise we conclude that $\mathcal E_R$ is~not~relevant.

Note that all the above checks can be easily performed in $O(|\skel(\mu)|)$ time. If we concluded that $\mathcal E_R$ is relevant, then its type can be detected within the same time bound by checking whether the corresponding definition is satisfied. For example, if the type of $R({\cal C_\mu})$ is \BP-BP, we check whether $A(\mathcal E_R)$ does not contain any vertex (then the type of $\mathcal E_R$ is \BPi-BP1), whether it consists of one star (then the type of $\mathcal E_R$ is \BPii-BP2), whether it consists of two stars (then the type of $\mathcal E_R$ is \BPiii-BP3), whether it consists of a single caterpillar different from a star (then the type of $\mathcal E_R$ is \BPiv-BP4), and whether it consists of two caterpillars, one of which is different from a star (then the type of $\mathcal E_R$ is \BPv-BP5).
\end{fullproof}

The following lemma will be crucial to show the correctness of our approach.

\begin{lemma} \label{le:relevant-in-configuration}
An embedding type $t$ is admissible for $\mu$ if and only if there exists an embedding configuration $\cal C_\mu$ for $\mu$ such that the type of $R(\cal C_\mu)$ is $t$.
\end{lemma}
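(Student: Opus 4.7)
The plan is to prove the two directions by exploiting the Child Replacement Lemma (\cref{le:child-replacement}) and the fact that, by \cref{le:rc-replacement-graph}, the realization $R(\mathcal{C}_\mu)$ of any embedding configuration is a replacement-graph for $\pert{\mu}$ whose type coincides with the type of $\mu$.

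\smallskip
\noindent($\Longrightarrow$) Suppose first that $t$ is admissible for $\mu$, that is, $\pert{\mu}$ admits a relevant embedding $\mathcal{E}_\mu$ of type $t$. I would build an embedding configuration $\mathcal{C}_\mu$ from $\mathcal{E}_\mu$ as follows. Let $\mathcal{S}_\mu$ be the planar embedding of $\skel(\mu)$ determined by $\mathcal{E}_\mu$. For each vertex $w$ of $\skel(\mu)$ that is incident to an $rb$-trivial component and such that $\mu$ is the proper allocation node of $w$, by the definition of neat embedding (\cref{def:neat}), the face of $\mathcal{E}_\mu$ that hosts this component corresponds to a face of $\mathcal{S}_\mu$, which I designate as $f_w$. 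For each virtual edge $e_i$ of $\skel(\mu)$, by \cref{le:relevant-relevant} the restriction $\mathcal{E}_{\nu_i}$ of $\mathcal{E}_\mu$ to $\pert{\nu_i}$ is a relevant embedding of $\pert{\nu_i}$; thus I set $t_i$ to be the type of $\mathcal{E}_{\nu_i}$ and $x_i \in \{\ell,r\}$ to reflect its flip. This yields an embedding configuration $\mathcal{C}_\mu$ in the sense of \cref{def:embedding-configuration}.

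It remains to show that $R(\mathcal{C}_\mu)$, equipped with its embedding, has type $t$. Starting from $\mathcal{E}_\mu$, I iterate over the virtual edges $e_1,\dots,e_k$ of $\skel(\mu)$ and, at each step, replace the current embedding of $\pert{\nu_i}$ with the $x_i$-flipped copy of the $t_i$-gadget. By \cref{obs:constant-types} the $t_i$-gadget exists and has type $t_i$, and by construction it has the same flip as $\mathcal{E}_{\nu_i}$. By \cref{le:child-replacement}, each such embedding-replacement preserves both type and flip of the embedding of $\pert{\mu}$. After all $k$ replacements (and after inserting the $rb$-trivial components in the faces $f_w$, which affects neither type nor flip) the resulting embedded graph is exactly $R(\mathcal{C}_\mu)$, which therefore has type $t$.

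\smallskip
\noindent($\Longleftarrow$) Conversely, suppose an embedding configuration $\mathcal{C}_\mu$ exists such that the embedding of $R(\mathcal{C}_\mu)$ has type $t$. I need to exhibit a relevant embedding of $\pert{\mu}$ of type $t$. For each virtual edge $e_i$ of $\skel(\mu)$, by the inductive hypothesis (or by the definition of admissibility) $t_i$ is admissible for $\nu_i$, so $\pert{\nu_i}$ admits a relevant embedding $\mathcal{E}_{\nu_i}$ of type $t_i$. I construct an embedding $\mathcal{E}_\mu$ of $\pert{\mu}$ as follows: start from $\mathcal{S}_\mu$; for each virtual edge $e_i$, substitute it with the $x_i$-flipped copy of $\mathcal{E}_{\nu_i}$; and place the $rb$-trivial component incident to each vertex $w$ with designated $f_w$ into the face of $\mathcal{E}_\mu$ corresponding to $f_w$. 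The resulting embedding $\mathcal{E}_\mu$ is planar because each local substitution is planar, and the condition on $f_w$ makes $\mathcal{E}_\mu$ compliant with \cref{def:neat} at the level of~$\mu$.

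To conclude that $\mathcal{E}_\mu$ has type $t$, I apply \cref{le:child-replacement} iteratively in the reverse direction with respect to the first part: starting from $R(\mathcal{C}_\mu)$, I replace, for each $i$, the $t_i$-gadget sitting in place of $e_i$ by the embedding $\mathcal{E}_{\nu_i}$. Since both have type $t_i$ and the same flip $x_i$, each replacement preserves the type and the flip of the embedding of the ambient graph. After all replacements, the embedded graph obtained is $\mathcal{E}_\mu$, which therefore has the same type $t$ as $R(\mathcal{C}_\mu)$, witnessing that $t$ is admissible for~$\mu$.

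The main subtlety, in both directions, is to ensure that the hypotheses of the Child Replacement Lemma are satisfied at every step: the replacement-graph property (\cref{le:rc-replacement-graph}) takes care of this for the realization, while on the $\mathcal{E}_{\nu_i}$ side it follows from $\mathcal{E}_{\nu_i}$ being a relevant embedding with prescribed type and flip. This makes the iterated application of \cref{le:child-replacement} sound.
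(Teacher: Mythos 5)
Your proof is correct and follows essentially the same route as the paper: both directions construct the correspondence between a relevant embedding of $\pert{\mu}$ and an embedding configuration (using \cref{le:relevant-relevant} for the forward direction and the admissibility of the $t_i$'s for the backward one), and then transfer the type via an iterated application of \cref{le:child-replacement}, swapping each child embedding with the corresponding $t_i$-gadget of matching flip (or vice versa). The only cosmetic slip is invoking \cref{def:neat} for the designated faces $f_w$, where the correct justification is the relevance of $\mathcal{E}_\mu$ (as noted in the proof of \cref{le:relevant-relevant}); this does not affect the argument.
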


\begin{fullproof}
Suppose that there exists an admissible type $t$ for $\mu$ and let $\mathcal E_\mu$ be an embedding of $\pert{\mu}$ of type $t$. We construct an embedding configuration $\mathcal C_\mu$ as follows.
First, we set the embedding $\cal S_\mu$ of $\skel(\mu)$ in $\mathcal C_\mu$ to be the one determined by $\mathcal E_\mu$. 
Second, for each vertex $w$ of $\skel(\mu)$ that is incident to an $rb$-trivial component $(w,r)$ and such that  $\mu$ is the proper allocation node of $w$, we set the designated face $f_w$ to be the face of $\mathcal S_\mu$ that corresponds to the face of $\mathcal E_\mu$ incident to $w$ and $r$.
Third, for each virtual edge $e_i$ of $\mu$, let $\mathcal E_i$ be the embedding of $\pert{e_i}$ in $\mathcal E_\mu$. Note that $\mathcal E_i$ is relevant, by \cref{le:relevant-relevant}. We set $t_i$ in $\mathcal C_\mu$ to be the type of $\mathcal E_i$ and $x_i$ in $\mathcal C_\mu$ to be the flip of $\mathcal E_i$, for each virtual edge $e_i$ of $\skel(\mu)$.
This completes the construction of $\mathcal C_\mu$.
Next we show that the type of $R(\mathcal C_\mu)$ is $t$. 
Let $k$ be the number of virtual edges of $\skel(\mu)$. Also, let $\mathcal E^0_\mu = \mathcal E_\mu$. 
For $i=1,\dots,k$, consider the embedding $\mathcal E^i_\mu$ obtained from $\mathcal E^{i-1}_\mu$ by replacing the pertinent graph $\mathcal E_i$ of the virtual edge $e_i$ of $\skel(\mu)$ with an $x_i$-flipped copy of the $t_i$-gadget.
By construction, we have that $\mathcal E^k_\mu = R(\mathcal C_\mu)$. The type of $\mathcal E^0_\mu$ is $t$, by hypothesis. Furthermore, by \cref{le:child-replacement}, if the type of $\mathcal E^{i-1}_\mu$ is $t$, then also the type of $\mathcal E^{i}_\mu$ is $t$. Thus, the type of $ R(\mathcal C_\mu)$ is $t$. 

Suppose now that there exists an embedding configuration $\mathcal C_\mu$ for $\mu$ such that the type of $R(\mathcal C_\mu)$ is $t$. 
We show how to construct an embedding $\mathcal E_\mu$ of $\pert{\mu}$ of type $t$.
Let $\mathcal E^0_\mu = R(\mathcal C_\mu)$.
For $i=1,\dots,k$, consider the embedding $\mathcal E^i_\mu$ obtained from $\mathcal E^{i-1}_\mu$ by replacing the $t_i$-gadget for the virtual edge $e_i$ of $\skel(\mu)$ with an $x_i$-flipped copy of an embedding of $\pert{e_i}$ of type $t_i$.
Note that the embedding $\mathcal S_\mu$ of $\skel(\mu)$ in $\mathcal C_\mu$ is planar, by \cref{def:embedding-configuration}. Moreover, each $rb$-trivial component incident to a vertex $w$ of $\skel(\mu)$ such that $\mu$ is the proper allocation node of $w$ is incident to a face of $\mathcal E_\mu$ corresponding to a face of $\mathcal S_\mu$. Therefore, we have that $\mathcal E^k_\mu = \mathcal E_\mu$ is a planar embedding of $\pert{\mu}$. The type of $\mathcal E^0_\mu$ is $t$, by hypothesis. Furthermore, by \cref{le:child-replacement}, if the type of $\mathcal E^{i-1}_\mu$ is $t$, then also the type of $\mathcal E^{i}_\mu$ is $t$. Thus, the type of $\mathcal E_\mu$ is $t$. This concludes the proof.
\end{fullproof}

In the following subsections, we show how to compute, in $O(|\skel(\mu)|)$ time, for each non-root node $\mu$ of $\mathcal T$, a set $\mathcal V_\mu$ of $O(1)$ embedding configurations for $\mu$ such that a type $t$ is admissible for $\mu$ if and only if there exists an embedding configuration $\mathcal C_\mu \in \mathcal V_\mu$ such that the embedding of $R(\mathcal C_\mu)$ is of type $t$.  This is shown in \cref{le:valid-set-s-vm-red} when $\mu$ is an S-node, in \cref{le:valid-set-p-vm-red,le:valid-set-p-vm-black} when $\mu$ is a P-node, and in \cref{lem:emb-configurations-bm-out,lem:emb-configurations-bm-pole,lem:emb-configurations-bm-internal} when $\mu$ is an R-node.
The listed lemmas, together with \cref{le:compute-type,le:relevant-in-configuration}, yield the following.

\begin{theorem}\label{algo:neat}
There exists an $O(n)$-time algorithm that tests whether 
 an $n$-vertex $rb$-augmented component admits a neat embedding.
\end{theorem}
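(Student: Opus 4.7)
The plan is to assemble \cref{algo:neat} as a bottom-up dynamic-programming traversal of the SPQR-tree $\mathcal T$ of $H^-$, exploiting the machinery already developed. First I would build $\mathcal T$ in $O(n)$ time using the algorithm of Gutwenger and Mutzel and root it at the Q-node $\rho$ corresponding to $(b_1,b_2)$. Then I would process the nodes of $\mathcal T$ in post-order, maintaining at each non-root node $\mu$ a set $\mathcal V_\mu$ of embedding configurations (in the sense of \cref{def:embedding-configuration}) whose realizations cover exactly the admissible types for $\mu$, in the sense of \cref{le:relevant-in-configuration}. Once $\mathcal V_\mu$ is computed for the unique child $\tau$ of $\rho$, I apply \cref{le:empty-admissible,le:some-admissible}: reject if $\mathcal V_\tau=\emptyset$ and accept otherwise.

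The base case is immediate: for a Q-node the single admissible type is either \BPi-BP1 or \RE-RE and no configuration needs to be stored. For the inductive step at an S-, P-, or R-node $\mu$ with children $\nu_1,\dots,\nu_k$, I would enumerate candidate configurations $\mathcal C_\mu$ by choosing (i) a planar embedding $\mathcal S_\mu$ of $\skel(\mu)$ (unique up to a flip at S- and R-nodes by Whitney's theorem, constantly many relevant permutations at P-nodes by \cref{le:final-edges,le:type-C-path}), (ii) for each virtual edge $e_i$ an admissible type $t_i$ from $\mathcal V_{\nu_i}$ together with a flip $x_i\in\{\ell,r\}$, and (iii) a designated face $f_w$ for each pole $w$ of $\mu$ that is the proper allocation node of an $rb$-trivial component. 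For each such $\mathcal C_\mu$ I build its realization $R(\mathcal C_\mu)$ by substituting each virtual edge $e_i$ with the $t_i$-gadget from the dictionary $\mathcal D$ of \cref{obs:constant-types}; by \cref{le:rc-replacement-graph} this is a constant-size replacement-graph, and by \cref{le:compute-type} its type can be detected in $O(|\skel(\mu)|)$ time. The essential correctness statement, guaranteed by \cref{le:child-replacement} together with \cref{le:relevant-in-configuration}, is that an embedding type $t$ is admissible for $\mu$ \emph{if and only if} some enumerated configuration realizes $t$, so I may prune $\mathcal V_\mu$ to keep at most one configuration per type; by \cref{pr:admissible-constant}, $|\mathcal V_\mu|\in O(1)$.

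The remaining task is to bound the enumeration at $\mu$ by $O(|\skel(\mu)|)$, and this is where the main difficulty lies. At S- and P-nodes $\skel(\mu)$ has bounded combinatorial freedom, so the structural lemmas \cref{le:valid-set-s-vm-red,le:valid-set-p-vm-red,le:valid-set-p-vm-black} identify an $O(1)$-size family of relevant candidates. At an R-node the skeleton can have arbitrarily many virtual edges, and a naive enumeration of $(t_i,x_i)$ tuples would blow up exponentially in $k$; however, \cref{lem:emb-configurations-bm-out,lem:emb-configurations-bm-pole,lem:emb-configurations-bm-internal} show that all but $O(1)$ virtual edges are forced into a canonical type by the position of $b_m$, by the path of \BP-BP/\BB-BB virtual edges of \cref{le:type-C-path}, and by the uniqueness of \RF-RF/\BF-BF/\BB-BB virtual edges from \cref{le:final-edges}. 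Hence only $O(1)$ ``active'' children admit genuine choices, and the corresponding scan of $\skel(\mu)$ costs $O(|\skel(\mu)|)$. The hard part of the argument is to verify that this constant-sized set of choices is complete, i.e.\ that no admissible type is lost by the pruning; this is precisely the content of the three R-node lemmas, whose proofs will rely on \cref{le:child-replacement} (to replace inflexible children by their gadgets) and on the feature-based classifications of \cref{sse:embedding-classification}.

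Finally, summing the per-node cost over $\mathcal T$ gives $\sum_\mu O(|\skel(\mu)|)=O(n)$ because the total size of all skeletons in an SPQR-tree is linear in $|H^-|\le |H|\in O(n)$. Combined with the linear-time construction of $\mathcal T$ and the constant-time check at the root via \cref{le:empty-admissible,le:some-admissible}, this yields the claimed $O(n)$-time testing algorithm for the existence of a neat embedding.
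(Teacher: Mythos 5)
Your proposal is correct and follows essentially the same route as the paper: a rooted SPQR-tree traversal that maintains, per node, an $O(1)$-size set of embedding configurations whose gadget realizations cover exactly the admissible types (via \cref{le:relevant-in-configuration,le:child-replacement,le:compute-type}), with the S-, P-, and R-node lemmas bounding the per-node work by $O(|\skel(\mu)|)$ and \cref{le:empty-admissible,le:some-admissible} giving the accept/reject criterion. The only slip is cosmetic: designated faces in an embedding configuration are chosen for non-pole vertices of $\skel(\mu)$ (those whose proper allocation node is $\mu$, cf.\ \cref{remark:pan}), not for the poles, but this does not affect the argument.
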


\subsubsection{S-nodes}\label{se:s-nodes}

Suppose that $\mu$ is an S-node. 
By \cref{le:relevant-in-configuration}, in order to compute the admissible types of $\mu$, we construct the set of all the embedding configurations for $\mu$, and for each such an embedding configuration $\cal C_\mu$ we compute the type of the embedding of $R(\cal C_\mu)$, using \cref{le:compute-type}.

Recall that, by the definition of the SPQR-tree $\mathcal T$ we adopted, we have that $\skel(\mu)$ consists of a path of two edges, $e_1=(u_\mu,w)$ and $e_2=(w,v_\mu)$. Also, by \cref{remark:pan}, $\mu$ is the proper allocation node of $w$, whereas it is the proper allocation node of neither $u_\mu$ nor $v_\mu$; thus, an embedding configuration for $\mu$ needs to specify a designated face only for the vertex $w$.
We create all the embedding configurations for~$\mu$ obtained by combining all the following choices.
We set $\cal S_\mu$ to the unique embedding of $\skel(\mu)$, we choose $t_1$ and $t_2$ as each combination of admissible types for $e_1$ and for $e_2$, respectively, we choose $x_1$ and $x_2$ as each combination of $\ell$ and $r$, and we set the designated face of $w$ as each of $\ell(\mathcal S_\mu)$ or $r(\mathcal S_\mu)$, if there exists an $rb$-trivial component incident to $w$. Clearly, by \cref{pr:admissible-constant} and since $\skel(\mu)$ has $O(1)$ size, the above set of embedding configurations has $O(1)$ size. Further, since $\skel(\mu)$ has $O(1)$ size and by \cref{le:compute-type}, we~have~the~following.


\begin{lemma} \label{le:valid-set-s-vm-red}
	Suppose that $\mu$ is an S-node.
	There exists a set $\mathcal V_\mu$ of embedding configurations for~$\mu$ such that:
	\begin{enumerate}[\bf (i)]
		\item a type $t$ is admissible for $\mu$ if and only if there exists an embedding configuration $\mathcal C_\mu \in \mathcal V_\mu$ such that the embedding of $R(\mathcal C_\mu)$ is of type $t$;
		\item $|\mathcal V_\mu| \in O(1)$; and
		\item $\mathcal V_\mu$ can be constructed in $O(1)$ time.
	\end{enumerate}
\end{lemma}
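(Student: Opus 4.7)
The plan is to construct $\mathcal V_\mu$ by brute-force enumeration of all embedding configurations for $\mu$, and then use the characterization of \cref{le:relevant-in-configuration} to conclude that $\mathcal V_\mu$ contains a representative for every admissible type.

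First, I would observe that since $\mu$ is an S-node in the slightly modified SPQR-tree of \cite{DBLP:journals/siamdm/DidimoGL09}, $\skel(\mu)$ is a path $(u_\mu, w, v_\mu)$ consisting of exactly two virtual edges $e_1$ and $e_2$, and hence admits a unique planar embedding $\mathcal S_\mu$ (up to a flip). By \cref{remark:pan}, $\mu$ is the proper allocation node of $w$ but not of the poles $u_\mu$ and $v_\mu$; thus, per \cref{def:embedding-configuration}, an embedding configuration for $\mu$ needs to specify a designated face only for $w$ (and only if $w$ is incident to an $rb$-trivial component).

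Next, I would construct $\mathcal V_\mu$ by taking all combinations of the following finitely many choices: the embedding $\mathcal S_\mu$ (fixed, up to flip); if $w$ is incident to an $rb$-trivial component, the choice of $f_w \in \{\ell(\mathcal S_\mu), r(\mathcal S_\mu)\}$; a type $t_i$ for each $e_i$ chosen among the admissible types of the child $\nu_i$ (assumed already computed); and a label $x_i \in \{\ell, r\}$ for each $e_i$. By \cref{pr:admissible-constant}, each child has only $O(1)$ admissible types, so the total number of configurations generated is $O(1)$, establishing (ii). Since $\skel(\mu)$ has constant size and each configuration is built by constantly many independent choices, $\mathcal V_\mu$ can be enumerated in $O(1)$ time, establishing (iii).

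For (i), the ``only if'' direction follows directly from \cref{le:relevant-in-configuration}: any admissible type $t$ for $\mu$ arises from \emph{some} embedding configuration $\mathcal C_\mu$ for $\mu$ with $R(\mathcal C_\mu)$ of type $t$; by \cref{le:relevant-relevant}, the types $t_i$ induced on the children $\nu_i$ are relevant and hence admissible, and the face assignment $f_w$ necessarily lies in $\{\ell(\mathcal S_\mu), r(\mathcal S_\mu)\}$, so $\mathcal C_\mu$ (or an equivalent one up to flip of $\mathcal S_\mu$) is in $\mathcal V_\mu$. The ``if'' direction is again \cref{le:relevant-in-configuration}: any $\mathcal C_\mu \in \mathcal V_\mu$ whose realization has type $t$ witnesses that $t$ is admissible for $\mu$. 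To actually identify which types occur in $\mathcal V_\mu$, we apply \cref{le:compute-type} to each of the $O(1)$ configurations, spending $O(|\skel(\mu)|) = O(1)$ time per configuration.

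The only mild subtlety, which I do not expect to be a real obstacle, is ensuring the enumeration covers the correct set of ``designated faces'' in the presence or absence of an $rb$-trivial component at $w$, and that the flip labels $x_i$ of the two virtual edges are treated as independent binary choices rather than being constrained by a global flip of $\mathcal S_\mu$; this is already handled by \cref{def:embedding-configuration} and the construction above, since the realization $R(\mathcal C_\mu)$ substitutes an $x_i$-flipped $t_i$-gadget independently per edge.
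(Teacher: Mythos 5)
Your proposal is correct and follows essentially the same route as the paper: enumerate all embedding configurations for the two-edge skeleton (unique embedding, designated face for $w$ among the two outer faces, all combinations of admissible child types and flips), note the resulting set has constant size by \cref{pr:admissible-constant}, and invoke \cref{le:relevant-in-configuration} (with \cref{le:compute-type} to read off types) for the equivalence in (i).
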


%
%
\subsubsection{P-nodes}\label{se:p-nodes}

Suppose that $\mu$ is a P-node. 
Similarly to the case of S-nodes, we exploit \cref{le:relevant-in-configuration,le:compute-type} to compute in linear time the admissible types for $\mu$. 
Differently from the S-node case, however, $\skel(\mu)$ has not necessarily $O(1)$ size; further, the number of embeddings of $\skel(\mu)$, and hence the number of embedding configurations for $\mu$, is factorial in $|\skel(\mu)|$. Nonetheless, we are able to 
efficiently construct a constant-size subset $\cal V_\mu$ of the embedding configurations for $\mu$ with following property: For every admissible type $t$ for $\mu$, there exists an embedding configuration $\cal C_\mu \in \cal V_\mu$ such that the type of the embedding of $R(\cal C_\mu)$ is $t$. 
By \cref{remark:pan}, $\mu$ is the proper allocation node of neither $u_\mu$ nor $v_\mu$; thus, an embedding configuration for $\mu$ does not need to specify a designated face for any vertex of $\skel(\mu)$.
We distinguish two cases based on whether $v_\mu$ is red or black. Suppose first that $v_\mu$~is~red.

\begin{lemma}\label{le:p-node-R}
If $\mu$ is a P-node such that $v_\mu$ is red, then the type of $\mu$ is~\RF-RF and it has exactly two children. One of them is of type~\RE-RE and the other is of type~\RF-RF.
\end{lemma}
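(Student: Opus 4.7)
\medskip

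My plan is to first pin down the type of $\mu$, then classify each child individually, and finally use simplicity of $H^-$ to bound the number of children.

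Since $v_\mu$ is red, \cref{le:no-red-red} forces $u_\mu$ to be black, and the node classification in \cref{sse:node-classification} then restricts the type of $\mu$ to either \RE-RE or \RF-RF. If $\mu$ were of type \RE-RE, \cref{le:structure-node-RE} would imply that $H^-_\mu$ is a single edge, which means $\mu$ would be a Q-node by the trivial case of the SPQR-tree construction; since $\mu$ is assumed to be a P-node, this is a contradiction, so $\mu$ is of type \RF-RF. In particular, the subpath $Q_u=(u_\mu,b_{i+1},\dots,b_m)$ of the black path $P$ lies entirely in $\pert{\mu}$.

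Now consider any child $\nu_j$ of $\mu$. Its poles are $u_\mu$ (black) and $v_\mu$ (red), so again by the node classification $\nu_j$ is of type \RE-RE or \RF-RF. Split components at $\{u_\mu,v_\mu\}$ share only the poles, so the simple path $Q_u$, which starts at $u_\mu$ and never revisits $u_\mu$ nor passes through the red vertex $v_\mu$, is contained in $\pert{\nu_j}$ for exactly one child, call it $\nu_1$. Since $b_m\neq v_\mu$, the vertex $b_m$ is an internal vertex of $\pert{\nu_1}$, so $\nu_1$ is of type \RF-RF. Every other child $\nu_j$ with $j\geq 2$ has no internal black vertex in $\pert{\nu_j}$, so it must be of type \RE-RE, and \cref{le:structure-node-RE} implies that $\pert{\nu_j}^-$ is exactly the single edge $(u_\mu,v_\mu)$.

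It remains to show that $\mu$ has exactly two children, which is the main step. Suppose, towards a contradiction, that $\mu$ has $k\geq 3$ children. Then at least two of them, say $\nu_2$ and $\nu_3$, are of type \RE-RE by the argument above, and each contributes a distinct copy of the edge $(u_\mu,v_\mu)$ to $H^-$. This produces parallel edges in $H^-$, contradicting the fact that $H^-$ is simple (it is obtained from the simple bipartite graph $G$ by adding the path $P$, whose edges join vertices of the same color class and hence do not duplicate any edge of $G$). Hence $k\leq 2$, and since $\mu$ is a P-node we have $k\geq 2$, so $k=2$. Finally, $\nu_1$ is of type \RF-RF and, by the previous paragraph, the remaining child is of type \RE-RE, completing the proof. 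The only subtlety I expect is ensuring $H^-$ is indeed simple and that an \RE-RE child genuinely corresponds to a \emph{distinct} parallel edge of $H^-$, but both follow directly from the construction of $H$ and from \cref{le:structure-node-RE}.
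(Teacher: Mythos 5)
Your proof is correct and follows essentially the same route as the paper: every child with the red pole $v_\mu$ is of type~\RE-RE or~\RF-RF, at most one child is of type~\RF-RF (you re-derive this directly by tracking $Q_u$ and $b_m$, where the paper simply invokes \cref{le:final-edges}), at most one child is of type~\RE-RE by \cref{le:structure-node-RE} together with the simplicity of $H^-$, and a P-node has at least two children. The only differences are cosmetic: you spell out the simplicity argument and explicitly establish that $\mu$ itself is of type~\RF-RF, points the paper leaves implicit.
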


\begin{fullproof}
Since $\mu$ is a P-node with $v_\mu$ red, every child of $\mu$ has $v_\mu$ as a red pole, hence its type is \RE-RE or \RF-RF. By~\cref{le:final-edges}, $\mu$ has at most one child of type \RF-RF. By~\cref{le:structure-node-RE}, every child of $\mu$ of type \RE-RE is an edge, hence there is at most one such a child. It follows that $\mu$ has at most two children, one of type~\RF-RF and one of type~\RE-RE. Furthermore, since $\mu$ is a P-node, it has at least two children, from which the statement follows. 
\end{fullproof}

Let $e_1$ and $e_2$ be the virtual edges of $\skel(\mu)$.
We create all the embedding configurations for~$\mu$ obtained by combining all the following choices.
We set $\cal S_\mu$ to each of the two permutations of the virtual edges of $\skel(\mu)$, we choose $t_1$ and $t_2$ as each of the possible combinations of the admissible types for $e_1$ and for $e_2$, respectively, we choose $x_1$ and $x_2$ as each of the possible combinations of $\ell$ and $r$. Clearly, by \cref{pr:admissible-constant} and since $\skel(\mu)$ has $O(1)$ size, the above set of embedding configurations has $O(1)$ size. 
Therefore, the set $\cal V_\mu$ contains all the embedding configurations for $\mu$ and can be constructed in $O(1)$ time.
The above discussion yields the following.

\begin{lemma} \label{le:valid-set-p-vm-red}
	Suppose that $\mu$ is a P-node such that $v_\mu$ is red.
	There exists a set $\mathcal V_\mu$ of embedding configurations for $\mu$ such that:
	\begin{enumerate}[\bf (i)]
		\item a type $t$ is admissible for $\mu$ if and only if there exists an embedding configuration $\mathcal C_\mu \in \mathcal V_\mu$ such that the embedding of $R(\mathcal C_\mu)$ is of type $t$;
		\item $|\mathcal V_\mu| \in O(1)$; and
		\item $\mathcal V_\mu$ can be constructed in $O(|\skel(\mu)|)$ time.
	\end{enumerate}
\end{lemma}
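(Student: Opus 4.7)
The plan is to leverage \cref{le:p-node-R} to dramatically simplify the structure of $\skel(\mu)$. Since $\mu$ is a P-node whose pole $v_\mu$ is red, that lemma tells us that $\mu$ has exactly two children, one of type \RE-RE and one of type \RF-RF, so $\skel(\mu)$ consists of exactly two virtual edges $e_1,e_2$ between $u_\mu$ and $v_\mu$ and hence $|\skel(\mu)| \in O(1)$. Moreover, by \cref{remark:pan}, $\mu$ is not the proper allocation node of either of its poles, so an embedding configuration for $\mu$ does not need to specify a designated face for any vertex.

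Next, I would exhaustively enumerate all embedding configurations for $\mu$: the embedding $\mathcal S_\mu$ can be set to either of the $2! = 2$ permutations of the virtual edges (up to reflection this is actually the same, but enumerating both is harmless and within $O(1)$); for each virtual edge $e_i$, the admissible type $t_i$ ranges over the admissible types for the child of $\mu$ corresponding to $e_i$, which is a set of size $O(1)$ by \cref{pr:admissible-constant}; and the flip label $x_i$ ranges over the two values $\{\ell,r\}$. The resulting set $\mathcal V_\mu$ therefore has constant size, giving (ii).

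For correctness (i), the forward direction is immediate: each $\mathcal C_\mu \in \mathcal V_\mu$ is, by construction, a valid embedding configuration, so by \cref{le:relevant-in-configuration} the type of $R(\mathcal C_\mu)$ (if relevant, as determined via \cref{le:compute-type}) is an admissible type for $\mu$. Conversely, if $t$ is admissible for $\mu$, then \cref{le:relevant-in-configuration} yields an embedding configuration $\mathcal C_\mu^*$ for $\mu$ with $R(\mathcal C_\mu^*)$ of type $t$; since $\mathcal V_\mu$ enumerates all possible choices of $(\mathcal S_\mu, t_1, t_2, x_1, x_2)$ and there are no designated faces to specify, we have $\mathcal C_\mu^* \in \mathcal V_\mu$.

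Finally, for (iii), enumerating the $O(1)$ combinations takes $O(1)$ time, and for each combination we build $R(\mathcal C_\mu)$ and test its type in $O(|\skel(\mu)|) = O(1)$ time by \cref{le:compute-type}. Thus the total running time is $O(1) \subseteq O(|\skel(\mu)|)$, completing the proof. No real obstacle arises in this case; the genuine combinatorial difficulty is deferred to the case in which $v_\mu$ is black (treated in \cref{le:valid-set-p-vm-black}), where $\skel(\mu)$ may have arbitrarily many virtual edges and the number of embeddings of $\skel(\mu)$ is factorial, forcing a more refined argument.
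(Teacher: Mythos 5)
Your proposal is correct and follows essentially the same route as the paper: invoke \cref{le:p-node-R} to conclude that $\skel(\mu)$ has exactly two virtual edges (one of type \RE-RE, one of type \RF-RF), note via \cref{remark:pan} that no designated faces are needed, enumerate all $O(1)$ embedding configurations (both permutations, all admissible types by \cref{pr:admissible-constant}, both flips), and conclude (i) from \cref{le:relevant-in-configuration} since $\mathcal V_\mu$ exhausts all configurations. Nothing is missing.
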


Suppose now that $v_\mu$ is black.

\begin{lemma}\label{le:p-node-B}
If $\mu$ is a P-node such that $v_\mu$ is black, then $\mu$ has at most two children that are not of \hbox{type~\BE-BE slim}; each of them is of type~\BP-BP,~\BB-BB, or~\BF-BF. Also, if there exist exactly two such children, then one is of type~\BP-BP and the other is of type~\BF-BF.
\end{lemma}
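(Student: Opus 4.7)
My plan is to classify each child $\nu$ of $\mu$ according to which subpaths of the black path $P$ lie in $\pert{\nu}$ versus $\rest{\nu}$, and then exploit the structural rigidity imposed by the SPQR-tree decomposition. First I would observe that, because $\mu$ is a P-node, every child $\nu$ of $\mu$ has poles $u_\mu$ and $v_\mu$, which are both black by hypothesis; therefore each $\nu$ is of type \BE-BE, \BP-BP, \BB-BB, or \BF-BF. Letting $P_{uv}$ denote the subpath of $P$ between $u_\mu$ and $v_\mu$ (if any) and $P_v$ denote the subpath of $P$ between $v_\mu$ and $b_m$ (if any), I would note that any two distinct children of a P-node share only the poles $u_\mu$ and $v_\mu$, so each of the connected paths $P_{uv}$ and $P_v$ lies entirely in the pertinent graph of at most one child of $\mu$.

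The key counting step is then immediate: a child of type \BP-BP or \BB-BB must contain $P_{uv}$ in its pertinent graph, and a child of type \BB-BB or \BF-BF must contain $P_v$ in its pertinent graph. Hence $\mu$ has at most one child of type \BP-BP or \BB-BB (namely, the one containing $P_{uv}$, if any), and at most one child of type \BB-BB or \BF-BF (the one containing $P_v$, if any). If one child is of type \BB-BB, it is the unique such child, since it contains both $P_{uv}$ and $P_v$; so in that case $\mu$ has a single non-\BE-BE child. Otherwise, the child containing $P_{uv}$ (if any) is of type \BP-BP and the child containing $P_v$ (if any) is of type \BF-BF. This yields at most two non-\BE-BE children, and when there are exactly two they are necessarily one of type \BP-BP and one of type \BF-BF, as claimed in the second part of the statement.

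It remains to upgrade ``non-\BE-BE'' to ``non-\BE-BE-slim'', and this is the point that I expect to be the most delicate but is actually forced by the SPQR-tree structure. Suppose, for contradiction, that some child $\nu$ of $\mu$ is of type \BE-BE-fat; by \cref{le:structure-node-BE}, $\pert{\nu}^-$ would then consist of at least two internally-disjoint length-$2$ paths between $u_\mu$ and $v_\mu$. But this would mean that $\{u_\mu,v_\mu\}$ is a split pair of $\pert{\nu}^-$, contradicting the fact that $\pert{\nu}^-$ is a maximal split component of $\{u_\mu, v_\mu\}$ in $\skel(\mu)$'s decomposition, that is, that $\{u_\mu, v_\mu\}$ is not a split pair of $\pert{\nu}^-$. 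Equivalently, this would mean that $\mu$ and $\nu$ are two adjacent P-nodes with the same poles, which violates the SPQR-tree's defining property. Hence every \BE-BE child of $\mu$ is \BE-BE-slim, and combining this with the previous paragraph completes the proof.
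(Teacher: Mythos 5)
Your proof is correct, and it arrives at the paper's counting by a more self-contained route. Where the paper's proof simply cites \cref{le:type-C-path} (at most one child of type BP or BB) and \cref{le:final-edges} (at most one child of type BF or BB), you re-derive both facts directly from the observation that distinct children of a P-node are edge-disjoint and share only the poles, so the path $P_{uv}$ (resp.\ $P_v$) can lie entirely in at most one child's pertinent graph; specialized to a P-node this is essentially the content of those two lemmas, so the gain here is self-containedness rather than a new idea. The genuinely different (and valuable) part is your final paragraph: the paper's proof only bounds the number of non-BE children and never explains why a BE child of a P-node must be \emph{slim}, which the statement as written (and its later use in the P-node case analysis, where only BE-slim children are treated as trivial) requires; your argument that a BE-fat pertinent graph has the poles as a split pair, so such a child would itself be a P-node with the same poles, contradicting the decomposition of a P-node into maximal split components (equivalently, the absence of adjacent P-nodes in the SPQR-tree), closes that gap correctly. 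Two minor points worth making explicit: your claim that a BB/BF child ``contains $P_v$'' tacitly uses that for such a child $b_m$ is a non-pole vertex of its pertinent graph, so that $P_v$ has at least one edge (this is exactly how the proof of \cref{le:final-edges} reads the definitions, so you are consistent with the paper); and the maximality of the children's split components is the standard SPQR-tree convention the paper adopts, so your appeal to it is legitimate.
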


\begin{fullproof}
Since $\mu$ is a P-node with $v_\mu$ black, every child of $\mu$ has $v_\mu$ as a black pole, hence its type is \BE-BE, \BP-BP, \BF-BF, or \BB-BB. By~\cref{le:type-C-path}, there is at most one child of $\mu$ whose type is \BP-BP or \BB-BB. Also, by~\cref{le:final-edges}, there is at most one child of $\mu$ whose type is \BF-BF or \BB-BB. Hence, if $\mu$ has two children that are not of \hbox{type~\BE-BE}, these are one of type \BP-BP and the other of type~\BF-BF.  
\end{fullproof}

Let $e_1,\dots,e_k$ be the virtual edges of $\skel(\mu)$.
We distinguish three cases, based on the number~$h$ of virtual edges of $\mu$ that are not of type \BE-BE slim. 
Note that, by \cref{le:p-node-B}, $h \leq 2$.

Suppose first that $h=0$. 
Observe that, in this case, the type of $\mu$ is \BE-BE fat.
Then the set $\cal V_\mu$ contains a unique embedding configuration for~$\mu$ obtained by setting $\cal S_\mu$ to an arbitrary permutation of the virtual edges of $\skel(\mu)$, and by choosing $t_i=$\BE-BE slim and $x_i=\ell$, for $i=1,\dots,k$.
Clearly, such an embedding configuration can be constructed in $O(k)$ time.

Suppose now that $h=1$. W.l.o.g., let $e_1$ be the virtual edge of $\skel(\mu)$ whose type is not \BE-BE slim.
Then the set $\cal V_\mu$ contains the embedding configurations for~$\mu$~obtained by combining all the following choices.
\begin{itemize}[$\circ$]

\item For any pair of non-negative integers $a$ and $b$ such that $a+b = k-1$ and $a \leq 1$, we set $\cal S_\mu$ to be any embedding of $\skel(\mu)$ such that $a$ virtual edges of type \BE-BE precede $e_1$ (and, thus, $b = k-a-1$ virtual edges of type \BE-BE follow $e_1$).
\item We choose $t_1$ as each of the admissible types for $e_1$, and we set $t_i=$\BE-BE slim, for $i=2,\dots,k$.
\item We choose $x_1$ as each of $\ell$ and $r$, and
$x_i=\ell$, for $i=2,\dots,k$.
\end{itemize}
Clearly, the above set of embedding configurations has $O(1)$ size, since $a \leq 1$ and since $e_1$ has a constant number of admissible types, by \cref{pr:admissible-constant}. 
Therefore, the set $\cal V_\mu$ can be constructed in $O(k)$ time.

Finally, suppose that $h=2$. W.l.o.g., let $e_1$ and $e_2$ be the virtual edges of $\skel(\mu)$ whose type is not \BE-BE slim.
Then the set $\cal V_\mu$ contains the embedding configurations for~$\mu$~obtained by combining all the following choices.

\begin{itemize}[$\circ$]
	\item For any triple of non-negative integers $a$, $b$, and $c$ such that $a+b+c = k-2$, $a \leq 1$, and $b \leq 1$, we set $\cal S_\mu$ to be any embedding of $\skel(\mu)$ such that $a$ virtual edges of type \BE-BE precede $e_1$ and $b$ virtual edges of type \BE-BE lie between $e_1$ and $e_2$ (and, thus, $c = k-2-a-b$ virtual edges of type \BE-BE follow $e_2$).
	\item We choose $t_1$ and $t_2$
	as each of the possible combinations of the admissible types for $e_1$ and for $e_2$, respectively, and $t_i=$\BE-BE slim, for $i=3,\dots,k$.
	\item We choose $x_1$ and $x_2$ as each of the possible combinations of $\ell$ and $r$, and
	$x_i=\ell$, for $i=3,\dots,k$.
\end{itemize}

We have the following.

\begin{lemma} \label{le:valid-set-p-vm-black}
Suppose that $\mu$ is a P-node such that $v_\mu$ is black.
	 There exists a set $\mathcal V_\mu$ of embedding configurations for $\mu$ such that:
\begin{enumerate}[\bf (i)]
	\item a type $t$ is admissible for $\mu$ if and only if there exists an embedding configuration $\mathcal C_\mu \in \mathcal V_\mu$ such that the embedding of $R(\mathcal C_\mu)$ is of type $t$;
	\item $|\mathcal V_\mu| \in O(1)$; and
	\item $\mathcal V_\mu$ can be constructed in $O(|\skel(\mu)|)$ time.
\end{enumerate}
\end{lemma}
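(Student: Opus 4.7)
The plan is to verify that the set $\mathcal{V}_\mu$ constructed above satisfies the three claimed properties. Properties~(ii) and~(iii) follow by direct inspection of the construction, invoking Lemma~\ref{le:p-node-B} to bound the number~$h$ of non-BE-slim virtual edges by~$2$. In each of the three enumerated cases, the configurations are parameterized by $O(1)$-many choices: the values of $a$ and $b$ (each at most~$1$), the admissible types of the at most two non-BE-slim virtual edges (bounded by Property~\ref{pr:admissible-constant}), and the at most four flip assignments. Each configuration can be written down in $O(k) = O(|\skel(\mu)|)$ time by listing the prescribed linear arrangement of virtual edges.

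For property~(i), the ``only if'' direction is immediate: every $\mathcal{C}_\mu \in \mathcal{V}_\mu$ is an embedding configuration for $\mu$ by construction; Lemma~\ref{le:rc-replacement-graph} then implies that $R(\mathcal{C}_\mu)$ is a replacement-graph for $\pert{\mu}$, and Lemma~\ref{le:relevant-in-configuration} certifies that any type realized by the embedding of $R(\mathcal{C}_\mu)$ is admissible for~$\mu$. For the ``if'' direction, let $t$ be admissible. By Lemma~\ref{le:relevant-in-configuration}, there exists an embedding configuration $\mathcal{C}^*$ (not necessarily in~$\mathcal{V}_\mu$) whose realization has type~$t$. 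The plan is to normalize $\mathcal{C}^*$ into some $\mathcal{C} \in \mathcal{V}_\mu$ while preserving the type: we keep the types and flips assigned by $\mathcal{C}^*$ to the non-BE-slim virtual edges, and only redistribute the BE-slim virtual edges across the blocks delimited by the non-BE-slim edges so that each non-terminal block contains at most one BE-slim (with any surplus placed in the terminal block), preserving the emptiness of every empty block.

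The main obstacle is showing that this normalization preserves the type. The key structural observation is that a BE-slim virtual edge contributes to $\pert{\mu}$ only a single red middle vertex adjacent to both poles, and hence contributes to $A(\mathcal{E}_\mu)$ only a single vertex---either a leaf attached to an existing red face (if its block is otherwise empty) or an internal vertex extending the backbone of the caterpillar (if its block already contains another BE-slim). The features (F1)--(F3) and their refinements that characterize the admissible types \protect\BPi-BP1 through \protect\BPv-BP5 and \protect\BBii-BB2 through \protect\BBv-BB5 depend only on coarse, essentially binary information: the number of caterpillars composing $A(\mathcal{E}_\mu)$, the presence of each outer face in $A(\mathcal{E}_\mu)$, the existence of at least one internal red face, and adjacency conditions involving $b_m$. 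These features are invariant under collapsing a chain of consecutive BE-slims in a block to a single BE-slim, since the extra BE-slims only extend backbones by additional red vertices and red faces without altering the caterpillar count, the outer face presence, or the $b_m$-adjacency structure (recall that $b_m$ always lies in the pertinent graph of a non-BE-slim child). The formal verification will proceed by a case analysis on the types of the non-BE-slim children, carried out by iterated application of Lemma~\ref{le:child-replacement} to propagate type-preservation through the skeleton embedding $\mathcal{S}_\mu$; the most delicate sub-cases are the $b_m$-related refinements distinguishing types \protect\BPiv-BP4, \protect\BPv-BP5, \protect\BBiv-BB4, and \protect\BBv-BB5, where the interaction between the leaf adjacent to the backbone end-vertex that shares a face with $b_m$ and the backbone-extending BE-slims must be tracked carefully.
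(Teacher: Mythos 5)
Your treatment of (ii), (iii), and the ``only if'' half of (i) matches the paper. The gap is in the ``if'' direction, at exactly the step you flag as the main obstacle: the claim that redistributing BE-slim virtual edges across blocks (and collapsing chains of them) preserves the embedding type is not proved, and it is false in general. Moving a BE-slim edge next to an outer face changes whether that outer face contains one red vertex or is bounded by a child's pertinent graph, and it can create or destroy internal red faces between the BE-slim and its new neighbour; this alters Features (F2) and (F3) and hence the type (for instance, appending a BE-slim after a type-BF child whose red outer face was the outer face of $\mathcal E_\mu$ turns a configuration with one red outer face into one with none, and can turn an N-type into an I-type). Your appeal to \cref{le:child-replacement} cannot repair this: that lemma licenses replacing the embedding of a single child by another embedding of the \emph{same type and flip}, with the skeleton embedding $\mathcal S_\mu$ fixed; it says nothing about changing the permutation of the virtual edges of $\skel(\mu)$, which is what your redistribution does. (Also, your parenthetical that $b_m$ always lies in a non-BE-slim child is inaccurate: $b_m$ may be the pole $v_\mu$ or may not belong to $\pert{\mu}$ at all.)

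The paper's argument avoids redistribution entirely. Given a type-$t$ embedding $\mathcal E_\mu$, it first observes that the type of a realization depends only on the number of BE-slim edges in each block together with the types and flips of the non-BE-slim edges, and that a global flip of $\mathcal E_\mu$ (which preserves its type) lets one assume $a'\le b'$ (resp.\ $a'\le c'$). The crux is then an \emph{impossibility} statement: in a relevant embedding one necessarily has $a'\le 1$, and $b'\le 1$ when $h=2$. If $a'\ge 2$ and $b'\ge 2$, the internal red faces created by consecutive BE-slims on both sides force either two caterpillars whose backbones each have at least two vertices (contradicting \cref{le:structure-A-mu-BP-BB} in the BP/BB case) or a type-BFI0B structure in which neither end-vertex of the backbone is adjacent to a leaf sharing a face with $b_m$ (contradicting \cref{le:bf-endvertex-backbone}); the bound $b'\le 1$ for $h=2$ similarly uses Item~4 of \cref{le:structure-A-mu-BP-BB}. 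Hence the configuration extracted from $\mathcal E_\mu$ (up to the flip) already lies, in its type-determining data, in the constructed set $\mathcal V_\mu$, and \cref{le:relevant-in-configuration} closes the argument. So the configurations you propose to ``normalize away'' are precisely the ones that must be shown never to occur; without that impossibility argument (or a correct proof of type-invariance of your moves, which does not hold), the proof is incomplete.
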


\begin{proof}
Let $\mathcal V_\mu$ be the set constructed as described above. Clearly, $\mathcal V_\mu$  has $O(1)$ size, since $a, b \leq 1$ and since $e_1$ and $e_2$ have a constant number of admissible types, by \cref{pr:admissible-constant}. 
Therefore, the set $\cal V_\mu$ can be constructed in $O(k)$ time, with $k \in O(|\skel(\mu)|)$.
	
If there exists an embedding configuration~$\cal C_\mu$ for $\mu$ in the set $\cal V_\mu$ such that the embedding of $R(\cal C_\mu)$ is of type $t$, then we have that $t$ is admissible for $\mu$, by \cref{le:relevant-in-configuration}.

Consider any admissible type $t$ for $\mu$ and let $\mathcal E_\mu$ be a type-$t$ embedding of $\pert{\mu}$. We show that we inserted into $\mathcal V_\mu$ an embedding configuration~$\cal C_\mu$ such that the embedding of $R(\cal C_\mu)$~is~of~type~$t$.

Suppose first that $v_\mu$ is red. Then by construction we inserted into $\cal V_\mu$ all the embedding configurations for $\mu$. Thus, the statement follows by \cref{le:relevant-in-configuration}.

Suppose now that $v_\mu$ is black. We distinguish three cases, based on the number~$h \leq 2$ of virtual edges of $\skel(\mu)$ that are not of type \BE-BE slim. If $h=0$, then the type of $\mu$ is \BE-BE fat, and thus the only admissible type for $\mu$ is \BE-BE fat, by \cref{le:relevant-in-extensible-BE}. By construction, we inserted into $\cal V_\mu$ one embedding configuration $\cal C_\mu$ such that the embedding of $R(\cal C_\mu)$ is of type \BE-BE fat.

Suppose that $h=1$. 
Consider any two embedding configurations $\mathcal C'$ and $\mathcal C''$ for $\mu$ such that (i) the number of 
\BE-BE slim virtual edges preceding $e_1$ in the embedding of $\skel(\mu)$ is the same in $\mathcal C'$ as in $\mathcal C''$ and (ii) the type $t_1$ and the flip $x_1$ for $e_1$ are the same in $\mathcal C'$ as in $\mathcal C''$. Then the type of the embedding of $R(\mathcal C')$ is the same as the type of the embedding of  $R(\mathcal C'')$. Therefore, the type of an embedding configuration is determined only by the above criteria.

Let $a'$ and $b' = k - a' -1$ be the number of type \BE-BE slim virtual edges of $\skel(\mu)$ that precede and follow $e_1$, respectively, in the embedding of $\skel(\mu)$ determined by $\mathcal E_\mu$. W.l.o.g., we can assume that $a' \leq b'$, as otherwise we can flip $\mathcal E_\mu$, which does not alter its type.
Let $t^*_1$ be the type of the embedding of $\pert{e_1}$ in $\mathcal E_\mu$ and $x^*_1$ be its flip.

We claim that $a' \leq 1$. Observe that the claim implies the statement. Namely, if $a' \leq 1$, then by construction, the set $\cal V_\mu$ contains an embedding configuration $\mathcal C_\mu$ for $\mu$ such that the number of type \BE-BE slim virtual edges preceding $e_1$ in the embedding of $\skel(\mu)$ is $a=a'$, such that $t_1=t^*_1$, and such that $x_1=x^*_1$. Hence, the type of the embedding of $R(\mathcal C_\mu)$ is $t$.

We now prove the claim. Suppose, for a contradiction, that $a' > 1$ and thus $b' > 1$; refer to \cref{fig:troppi-be}. 
\begin{figure}[tb!]
	\centering
	\subfloat[]{
	\includegraphics[height=.25\textwidth,page=1]{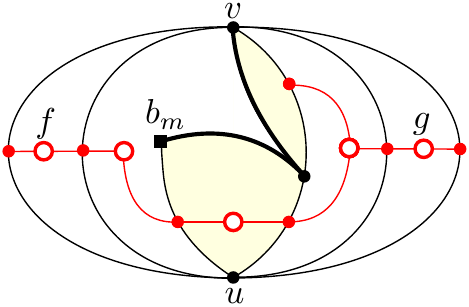} 
	\label{fig:troppi-be}
	}
	\subfloat[]{
	\includegraphics[height=.25\textwidth,page=2]{troppi-be} 
	\label{fig:troppi-be-h-2}
	}	
	\caption{Illustration for the proof of \cref{le:valid-set-p-vm-black}.	(a) 
Proof of $a' \leq 1$ when $h=1$. (b) Proof of $b' \leq 1$ when $h=2$.}
\end{figure}
Since the pertinent graphs of any two virtual edges of $\skel(\mu)$ that are consecutive in the embedding of $\skel(\mu)$ determined by $\mathcal E_{\mu}$ delimit an internal red face of $\mathcal E_\mu$, we have that the internal faces $f$ and $g$ of $\mathcal E_\mu$ delimited by the pertinent graphs of the leftmost two and of the rightmost two type \BE-BE slim virtual edges, respectively, are red.
We distinguish two cases based on whether there exists a black path in $\pert{\mu}$ between $u_\mu$ and $v_\mu$ or not. 

In the former case, both the type of $e_1$ and $\mu$ is either \BP-BP or \BB-BB. Hence, $\ell(\mathcal E_\mu)$ and $r(\mathcal E_\mu)$ belong to $A(\mathcal E_\mu)$, by \cref{def:auxiliary-graph-E-mu} and since there exists exactly one red vertex incident to each of $\ell(\mathcal E_\mu)$ and $r(\mathcal E_\mu)$. This implies that $A(\mathcal E_\mu)$ consists of two caterpillars whose backbones contain at least two vertices each, namely $f$ and $\ell(\mathcal E_\mu)$, and $g$ and $r(\mathcal E_\mu)$, respectively. This contradicts Item~\ref{le:structure-A-mu-BP-BB-item-no-traversing} of \cref{le:structure-A-mu-BP-BB}.

In the latter case, both the type of $e_1$ and $\mu$ is \BF-BF. Hence, neither $\ell(\mathcal E_\mu)$ nor $r(\mathcal E_\mu)$ belongs to $A(\mathcal E_\mu)$, by \cref{def:auxiliary-graph-E-mu}. Since $\mathcal E_\mu$ contains at least two internal red faces, namely $f$ and $g$, we have that the type of $\mathcal E_\mu$ is either BFI0A-\BFIza or BFI0B-\BFIzb. Furthermore, since $b_m$ is not incident to an outer face of $\mathcal E_\mu$, we have that the type of $\mathcal E_\mu$ is BFI0B-\BFIzb. Recall that $u_\ell$ and $u_r$ are the unique red vertices incident to $\ell(\mathcal E_\mu)$ and $r(\mathcal E_\mu)$, respectively. Since $f$ (resp. $g$) is the only red face incident to $u_\ell$ (resp. $u_r$), we have that $f$ and $g$ are the end-vertices of $A(\mathcal E_\mu)$ and that $u_\ell$ and $u_r$ are the only leaves incident to them. Since $b_m$ shares a face with neither $u_\ell$ nor $u_r$, we have a contradiction to Item~\ref{le:bf-endvertex-backbone-bm-shares} of \cref{le:bf-endvertex-backbone}.

 
Suppose that $h=2$. 
Similarly to the case $h=1$, given an embedding configuration $\mathcal C_\mu$, the type of an embedding of $R(\mathcal C_\mu)$ is only determined by the following criteria:
(i) the number of type \BE-BE slim virtual edges of $\skel(\mu)$ preceding $e_1$, following $e_1$ and preceding $e_2$, and following $e_2$ in the embedding $\mathcal S_\mu$ of $\skel(\mu)$,
(ii) the type $t_1$ (resp., $t_2$) and the flip $x_1$ (resp., $x_2$) for $e_1$ (resp., for $e_2$). 

Let $a'$, $b'$, and $c' = k-a'-b'-2$ be the number of type \BE-BE slim virtual edges of $\skel(\mu)$ that precede $e_1$, lie between $e_1$ and $e_2$, and follow $e_2$, respectively, in the embedding of $\skel(\mu)$ determined by $\mathcal E_\mu$. W.l.o.g., we can assume that $a' \leq c'$, as otherwise we can flip $\mathcal E_\mu$, which does not alter its type.
Let $t^*_1$ (resp., $t^*_2$) be the type of the embedding of $\pert{e_1}$ (resp., $\pert{e_2}$) in $\mathcal E_\mu$ and $x^*_1$ (resp., $x^*_2$) be its flip.

We claim that $a' \leq 1$ and $b' \leq 1$. Observe that the claim implies the statement. Namely, if $a' \leq 1$ and $b' \leq 1$, then by construction, the set $\cal V_\mu$ contains an embedding configuration $\mathcal C_\mu$ for $\mu$ such that the number of type \BE-BE slim virtual edges preceding $e_1$, lying between $e_1$ and $e_2$, and following $e_2$ in the embedding of $\skel(\mu)$ is $a=a'$, $b=b'$, and $c=c'$, and such that $t_1=t^*_1$, $t_2=t^*_2$, $x_1=x^*_1$, and $x_2=x^*_2$. Hence, the type of the embedding of $R(\mathcal C_\mu)$ is $t$.

Recall that, by \cref{le:p-node-B}, one of $e_1$ and $e_2$ is of type \BP-BP and the other one is of type \BF-BF, and thus $\mu$ is of type \BB-BB. The claim that $a' \leq 1$ can be proved similarly to the case in which $h=1$ and the type of $\mu$ is \BB-BB. Suppose, for a contradiction, that $b' > 1$; refer to \cref{fig:troppi-be-h-2}. Let $f$ be any internal red face of $\mathcal E_{\mu}$ delimited by the pertinent graphs of two \BE-BE slim virtual edges lying between $e_1$ and $e_2$. Denote by $w$ and $z$ the two red vertices incident to $f$. By \cref{le:structure-A-mu-BP-BB}, we have that $f$ belongs to the backbone $\mathcal B$ of a caterpillar composing $A(\mathcal E_\mu)$ having an outer face of $\mathcal E_\mu$ as an end-vertex; let $f^*$ be the other end-vertex of $\mathcal B$. Note that, when walking along $\mathcal B$ from the outer face to $f^*$, the vertices of the pertinent graph of the type \BF-BF virtual edge are encountered before $w$, and then $f$ is encountered (up to a relabeling of $w$ and $z$). Then either $f = f^*$ and $z$ is the only leaf incident to it, or $z$ belongs to $\mathcal B$. In both cases, there exists no leaf that is adjacent to $f^*$ in $A(\mathcal E_\mu)$ and that shares a face with $b_m$, since $b_m$ is a non-pole vertex of the type \BF-BF virtual edge. This contradicts Item \ref{le:structure-A-mu-BP-BB-item-bm-in-pertinent} of \cref{le:structure-A-mu-BP-BB} and concludes the proof.
\end{proof}

\subsubsection{R-nodes}\label{se:r-nodes}

Suppose that $\mu$ is an R-node. This case is much more difficult to handle than the S- and P-node cases, given that there might be a super-constant number of children of $\mu$ that are not of the ``trivial'' types \RE-RE and \BE-BE. In particular, while~\cref{le:final-edges} ensures that there is at most one child of $\mu$ which is of type \RF-RF, \BF-BF, or \BB-BB, there can be, in general, any number of children of type \BP-BP. Therefore, in order to {\em efficiently} construct the set of admissible types for $\mu$, we cannot construct all the (possibly exponentially many) embedding configurations for $\mu$ and just apply \cref{le:compute-type} to determine their types. However, we are able to efficiently construct a subset $\cal V_\mu$ of the embedding configurations of $\mu$ having size $O(|\skel(\mu)|)$ such that, for each admissible type $t$ of $\mu$, there exists an embedding configuration $\cal C_\mu \in \cal V_\mu$ such that the type of the embedding of $R(\cal C_\mu)$ is $t$. 



Recall that $\skel(\mu)$ has a unique planar embedding with its poles on the outer face, up to a flip.
Thus, we set $\cal S_\mu$ to be such an embedding for all the embedding configurations in $\cal V_\mu$. 
Also, recall that the left and the right outer face of $\cal S_\mu$ are denoted as $\ell(\cal S_\mu)$ and $r(\cal S_\mu)$, respectively.
We observe that there are some faces of $\cal S_\mu$ that correspond to red faces in any embedding of~$\pert{\mu}$. Faces of this type are called {\em intrinsically red} and are formally defined as follows. 

\begin{definition} \label{def:intrinsically-red}
	A face $f$ of $\cal S_\mu$ is \emph{intrinsically red} if:
	\begin{itemize}[-]
		\item it is incident to at least two red vertices of $\skel(\mu)$, or
		\item it is incident to at least two virtual edges of type \BE-BE or \BF-BF, or
		\item it is incident to a red vertex of $\skel(\mu)$ and to a virtual edge of type \BE-BE or \BF-BF, or
		\item it is an outer face, it is incident to a red vertex of $\skel(\mu)$, and the type of $\mu$ is \BP-BP~or~\BB-BB,~or  
		\item it is an outer face, it is incident to a virtual edge of type \BE-BE or \BF-BF, and the type of $\mu$ is \BP-BP or \BB-BB.
	\end{itemize} 
\end{definition}

In the remainder of the discussion, we assume that each face of $\mathcal S_{\mu}$ is labeled either as intrinsically red or as non-intrinsically red. Indeed, determining all the intrinsically red faces can be done in total $O(|\skel(\mu)|)$ time, by simply traversing the vertices and the virtual edges incident to each face of $\skel(\mu)$. 

Contrary to the intrinsically red faces, a non-intrinsically red face $f$ of $\cal S_\mu$ may result in a red face of a relevant embedding $\cal E_\mu$ of $\pert{\mu}$ or not; in the positive case, we refer to such a face as \emph{reddened} in $\mathcal E_\mu$. 
In particular, the fact that a non-intrinsically red face $f$ is reddened in $\mathcal E_\mu$ only depends on the type and the flip of the embedding of the pertinent graphs of the virtual edges of $\skel(\mu)$ bounding $f$ and 
and on the placement of the $rb$-trivial components incident to the vertices of $f$, if any. 

We now state necessary and sufficient conditions for a face of $\skel(\mu)$ to be reddened in a relevant embedding. 
We define the following three sets of types:
The set $\mathcal T_0$ contains the types \BFNz-BFN0, \BFIz-BFI0, \BPi-BP1, \RFNz-RFN0 and \RFIz-RFI0.
The set $\mathcal T_1$ contains the types \BPii-BP2, \BBii-BB2, \BPiv-BP4, \BBiv-BB4, \RFNi-RFN1, \RFIi-RFI1, \BFNi-BFN1, and \BFIi-BFI1.
The set $\mathcal T_2$ contains the types \BPiii-BP3, \BBiii-BB3, \BPv-BP5, \BBv-BB5, \RFNii-RFN2, \RFIii-RFI2, \BFNii-BFN2, and \BFIii-BFI2.
Note that, for each type $t$ in $\mathcal T_k$ and for each relevant embedding $\mathcal E_\mu$ of $\pert{\mu}$ whose type is $t$, we have that $A(\mathcal E_\mu)$ contains exactly $k$ outer faces of $\mathcal E_\mu$, by definition.

\begin{lemma}[Reddened conditions]\label{le:iff-reddened}
Let $f$ be a non-intrinsically red face of $\mathcal S_\mu$, let $\mathcal E_\mu$ be a relevant embedding of $\pert{\mu}$, and let $f^*$ be the face of $\mathcal E_\mu$ that corresponds to $f$.
Also, for a virtual edge $e$ of $\skel(\mu)$, we denote by $t_e$ and $x_e$ the type and the flip, respectively, of the embedding $\mathcal E_e$ of $\pert{e}$ determined by $\mathcal E_\mu$. Then $f$ is reddened in $\mathcal E_\mu$, i.e., $f^*$ is red, if and only if one of the following conditions holds:
\begin{enumerate}[(1)]
	\item \label{cond:iff-reddened-rb-trivial} there exists an $rb$-trivial component incident $f^*$; or
	\item \label{cond:iff-reddened-virtual-edges} there exists at least one virtual edge $e$ incident to $f$ such that 
	either 
	\begin{inparaenum}[(\ref{cond:iff-reddened-virtual-edges}a)]
	\item \label{cond:iff-reddened-virtual-edges-T2} $t_e \in \mathcal T_2$, or
	\item \label{cond:iff-reddened-virtual-edges-T1} $t_e \in \mathcal T_1$ and $x_e$ is such that the outer face of $\mathcal E_e$ that belongs to $A(\mathcal E_e)$ corresponds~to~$f^*$.
	\end{inparaenum}	
\end{enumerate}	
\end{lemma}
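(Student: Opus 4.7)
The proof will proceed by a careful accounting of the red vertices incident to $f^*$, organized by source. Writing $R(f^*)$ for the set of red vertices of $H$ incident to the face $f^*$ in $\mathcal{E}_\mu$, by definition $f^*$ is red iff $|R(f^*)| \ge 2$. I would decompose $R(f^*)$ into: (a) red endpoints of $rb$-trivial components placed in $f^*$; (b) non-pole red vertices of $\skel(\mu)$ incident to $f$; and (c) for each virtual edge $e$ incident to $f$, the non-pole red vertices of $\pert{e}$ that lie on the outer face of $\mathcal{E}_e$ corresponding to $f^*$. Since $f$ is non-intrinsically red, \cref{def:intrinsically-red} immediately restricts these contributions: (b) is at most one; $f$ carries at most one virtual edge of type \BE{} or \BF{}; a red vertex of $\skel(\mu)$ cannot coexist on $f$ with such a virtual edge; and if $\mu$ has type \BP{} or \BB{} and $f$ is an outer face of $\skel(\mu)$, then $f$ has neither a red vertex of $\skel(\mu)$ nor a \BE{}/\BF{} virtual edge at all.

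For the backward direction, I plan to handle each condition in turn. If (2a) holds via a virtual edge $e$ with $t_e \in \mathcal{T}_2$, then by \cref{def:auxiliary-graph-E-mu} the outer face of $\mathcal{E}_e$ on $f^*$ either already has $\ge 2$ red vertices of $\pert{e}$---which puts them on $f^*$ via contribution~(c) and immediately gives $|R(f^*)|\ge 2$---or it has exactly one such red vertex and $e$ has type \BP{} or \BB{}. In the latter case I will locate a second red vertex on $f^*$ from $\pert{\mu}\setminus\pert{e}$ via a black-path argument: if no such red vertex existed, then the portion of the boundary of $f^*$ outside $\pert{e}$ would consist only of black vertices, which combined with the black path inside $\pert{e}$ guaranteed by $e$ being of type \BP{} or \BB{} would yield a cycle of black vertices in $H$, contradicting the fact that the black vertices of $H$ induce the path $P$. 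The same reasoning handles (2b) for the unique outer face of $\mathcal{E}_e$ that belongs to $A(\mathcal{E}_e)$. For (1), the $rb$-trivial component contributes one red vertex to $f^*$; a similar structural argument, using that $f$ is non-intrinsically red and that the rb-trivial component must be placed in a face of $\mathcal{E}_\mu$ corresponding to a face of the skeleton of the proper allocation node of its black endpoint, then supplies a second red vertex on $f^*$.

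For the forward direction, I argue contrapositively: assuming (1), (2a), (2b) all fail, I show $|R(f^*)| \le 1$. The failure of (1) kills contribution~(a). The failure of (2) means that for each virtual edge $e$ incident to $f$, either $t_e \in \mathcal{T}_0$, or $t_e \in \mathcal{T}_1$ with the flip $x_e$ placing the unique outer face of $\mathcal{E}_e$ belonging to $A(\mathcal{E}_e)$ on the outer face of $\mathcal{E}_e$ not corresponding to $f^*$. Unpacking \cref{def:auxiliary-graph-E-mu}, in each such case the outer face of $\mathcal{E}_e$ on $f^*$ carries at most one non-pole red vertex of $\pert{e}$, and carries exactly one only for specific admissible types catalogued in \cref{fig:Nodetypesw-gadgets,fig:RF-gadgets,fig:BE-gadgets,fig:BPBB-gadgets,fig:NodetypesBF-gadgets}---essentially the \BE{} and \BF{} gadgets, where a red neighbour of $u_e$ always sits on each outer face. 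Combined with the constraints imposed by $f$ being non-intrinsically red (which forbid two \BE{}/\BF{} virtual edges on $f$ and forbid a \BE{}/\BF{} edge coexisting with a red vertex of $\skel(\mu)$ on $f$), the total contribution of (b) plus (c) is at most one, and $f^*$ is not red.

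The main obstacle will be the backward direction in the sub-case where a \BP{}/\BB{} virtual edge $e$ on $f$ contributes only a single red vertex to $f^*$. This is precisely the situation in which the special clause of \cref{def:auxiliary-graph-E-mu} forces the outer face into $A(\mathcal{E}_e)$, and the proof must mirror the ``cycle of black vertices'' argument used for the unnumbered lemma immediately preceding \cref{def:auxiliary-graph-E-mu}, but applied locally inside $f^*$ rather than globally to the outer faces of $\mathcal{E}_\mu$. Once this structural point is pinned down, what remains is a routine type-by-type inspection of the gadgets to count red vertices on their outer faces.
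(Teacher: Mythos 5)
There is a genuine gap, and it originates in your very first sentence: you take ``$f^*$ is red'' to mean that at least two red vertices of $\pert{\mu}$ are incident to $f^*$. That is the right reading for internal faces, but not for the outer faces of a node of type \BP-BP or \BB-BB: there the convention used throughout this section (see the last two bullets of \cref{def:intrinsically-red}, \cref{def:auxiliary-graph-E-mu}, and the unnumbered lemma preceding that definition) is that a \emph{single} incident red vertex already makes the face red, because in every embedding of $H$ a red vertex of $\rest{\mu}$ is guaranteed to lie on that outer face. Under your stricter definition the lemma is false: take an R-node $\mu$ of type \BP-BP whose outer face $f=r(\mathcal S_\mu)$ is bounded only by \BP-BP virtual edges, one of them carrying an embedding of type \BPiii-BP3 flipped so that exactly one of its red vertices faces $f^*$, and all the others of type \BPi-BP1. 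Then $f$ is non-intrinsically red, Condition~(2a) holds, yet only one red vertex of $\pert{\mu}$ is incident to $f^*$; the paper's necessity proof explicitly handles ``a single red vertex incident to $f^*$'' as a case in which $f^*$ \emph{is} red.

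Your proposed repair---the black-path argument that manufactures a second red vertex in $\pert{\mu}\setminus\pert{e}$---fails in exactly this configuration. When $f$ is an internal face of $\mathcal S_\mu$, the boundary of $f^*$ is a closed walk, so an all-black remainder together with the black path inside $\pert{e}$ closes into a cycle of black vertices and contradicts the structure of $P$; that is a legitimate alternative to the paper's use of \cref{le:type-C-path}, which instead guarantees a non-\BP-BP/\BB-BB virtual edge, hence a red vertex, on every internal face. But when $f$ is an outer face, the boundary of $f^*$ is an open walk between the poles of $\mu$: concatenating the black path inside $\pert{e}$ with the (possibly all-black) rest of that walk yields only a black path from $u_\mu$ to $v_\mu$, which is perfectly consistent with the black vertices inducing the path $P$, so no contradiction arises and no second red vertex of $\pert{\mu}$ need exist. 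The same mismatch affects your contrapositive direction, where ``$|R(f^*)|\le 1$'' does not suffice to conclude that an outer face of a \BP-BP/\BB-BB node is not red. The fix is not a sharper count but the correct notion of reddened: once outer faces of \BP-BP/\BB-BB nodes are deemed red with one incident red vertex, the argument goes through along the lines of the paper, which splits on whether $f$ is incident to a virtual edge whose type is not \BP-BP/\BB-BB (always true for internal faces) and, in the remaining outer-face case, needs only one red vertex.
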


\begin{proof}
$(\Longleftarrow)$ We first prove the sufficiency. 

{\em Suppose first that there exists at least one virtual edge $d$ incident to $f$ whose type is neither \BP-BP nor \BB-BB}; by \cref{le:type-C-path}, this is always the case if $f$ is an internal face of $\mathcal S_\mu$, while it might not be the case if $f$ is an outer face of $\mathcal S_\mu$. The assumption implies that there exists at least one red vertex $r_d$ belonging to $\pert{d}$ that is incident to $f^*$. This vertex is one of the poles of $d$, if the type of $d$ is \RE-RE or \RF-RF, or is a non-pole vertex of $\pert{d}$, if the type of $d$ is \BE-BE or \BF-BF.

\begin{itemize}
	\item If \cref{cond:iff-reddened-rb-trivial} holds, there exists a red vertex $r$ incident to $f^*$ that belongs to an $rb$-trivial component. Since $r \neq r_d$, we have that $f^*$ is red.
	\item If one of \cref{cond:iff-reddened-virtual-edges-T2,cond:iff-reddened-virtual-edges-T1} holds, by \cref{def:auxiliary-graph-E-mu} of $A(\mathcal E_e)$, we have that, if the type of $e$ is neither \BP-BP nor \BB-BB, then there exist at least two red vertices of  $\pert{e}$ that are incident to $f^*$ (and thus $f^*$ is red), while if the type of $e$ is \BP-BP or \BB-BB, then there exists at least one (non-pole) red vertex of $\pert{e}$ that is incident to $f^*$; in the latter case, we have that $e \neq d$ and thus $f^*$ is red as it is incident to $r_d$ and to such a red vertex of $\pert{e}$.
\end{itemize}

{\em Suppose next that every virtual edge incident to $f$ is of type either \BP-BP or \BB-BB}; then the type of  $\mu$ is either \BP-BP or \BB-BB, hence, in order to prove that $f^*$ is red, it suffices to prove that there exists at least one red vertex incident to it. 

\begin{itemize}
	\item If \cref{cond:iff-reddened-rb-trivial} holds, there exists a red vertex $r$ incident to $f^*$ that belongs to an $rb$-trivial component, hence $f^*$ is red.
	\item If one of \cref{cond:iff-reddened-virtual-edges-T2,cond:iff-reddened-virtual-edges-T1} holds, by \cref{def:auxiliary-graph-E-mu} of $A(\mathcal E_e)$, we have that there exists at least one (non-pole) red vertex of $\pert{e}$ that is incident to $f^*$, hence $f^*$ is red.
\end{itemize}

$(\Longrightarrow)$ Second, we prove the necessity. Suppose hence that $f^*$ is red.

{\em Suppose that there exist two red vertices $r_1$ and $r_2$ incident to $f^*$.} This is always the case if $f$ is an internal face of $\mathcal S_\mu$, while it might not be the case if $f$ is an outer face of $\mathcal S_\mu$.
First, if any of $r_1$ and $r_2$ belongs to an $rb$-trivial component, then \cref{cond:iff-reddened-rb-trivial} holds.
If both $r_1$ and $r_2$ belong to the pertinent graph of the same virtual edge $e$ incident to $f$ or if at least one of them belongs to the pertinent graph of a virtual edge of type \BP-BP or \BB-BB incident to $f$, then either \cref{cond:iff-reddened-virtual-edges-T2} or \cref{cond:iff-reddened-virtual-edges-T1} holds, by \cref{def:auxiliary-graph-E-mu}.
Therefore, each of $r_1$ and $r_2$ is either a vertex incident to $f$ or belongs to the pertinent graph of a virtual edges incident to $f$, whose type is neither \BP-BP nor \BB-BB. However, this contradicts the fact that $f$ is a non-intrinsically red face of $\mathcal S_\mu$.

{\em Suppose next that there exists a single red vertex $r$ incident to $f^*$.} Since $f^*$ is red, by assumption, it follows that $f$ is an outer face of $\mathcal S_\mu$ and that the type of $\mu$ is either \BP-BP or \BB-BB. Since $f$ is not intrinsically red, it follows that every virtual edge incident to $f$ is of type either \BP-BP or \BB-BB. Hence $r$ is either a vertex of an $rb$-trivial component (and thus \cref{cond:iff-reddened-rb-trivial} is satisfied), or is a vertex in the pertinent graph of a type \BP-BP or \BB-BB virtual edge incident to $f$ (and thus one of \cref{cond:iff-reddened-virtual-edges-T2,cond:iff-reddened-virtual-edges-T1} is satisfied).
\end{proof}

\begin{remark}\label{rm:reddened}
Whether a non-intrinsically red face of $\skel(\mu)$ is reddened in a relevant embedding~$\mathcal E_\mu$ of $\pert{\mu}$ only depends on the embedding configuration for $\mu$ determined by $\mathcal E_\mu$.
\end{remark}

Next, we show that not too many non-intrinsically red faces of $\cal S_\mu$ are reddened in $\cal E_\mu$.

First, we consider an internal face $f$ of $\cal S_\mu$ that is reddened in some relevant embedding ${\cal E}_\mu$ of $\pert{\mu}$.
Let $f_N$ be the face of ${\cal E}_\mu$ that corresponds to $f$. By~\cref{le:structure-A-mu}, the auxiliary graph $A({\cal E}_\mu)$ is composed of at most two caterpillars. 
We denote by \backbone the backbone of the caterpillar containing $f_N$. We claim that \backbone ends either in $f_N$, or ``close to it'', i.e., in the pertinent graph of one of the virtual edges incident to $f$. We formalize this in the following lemma;~refer~to~\cref{fig:FN}.

\begin{figure}[tb!]
	\centering
	\subfloat[$\cal S_\mu$]{
		\includegraphics[height=.3\textwidth,page=3]{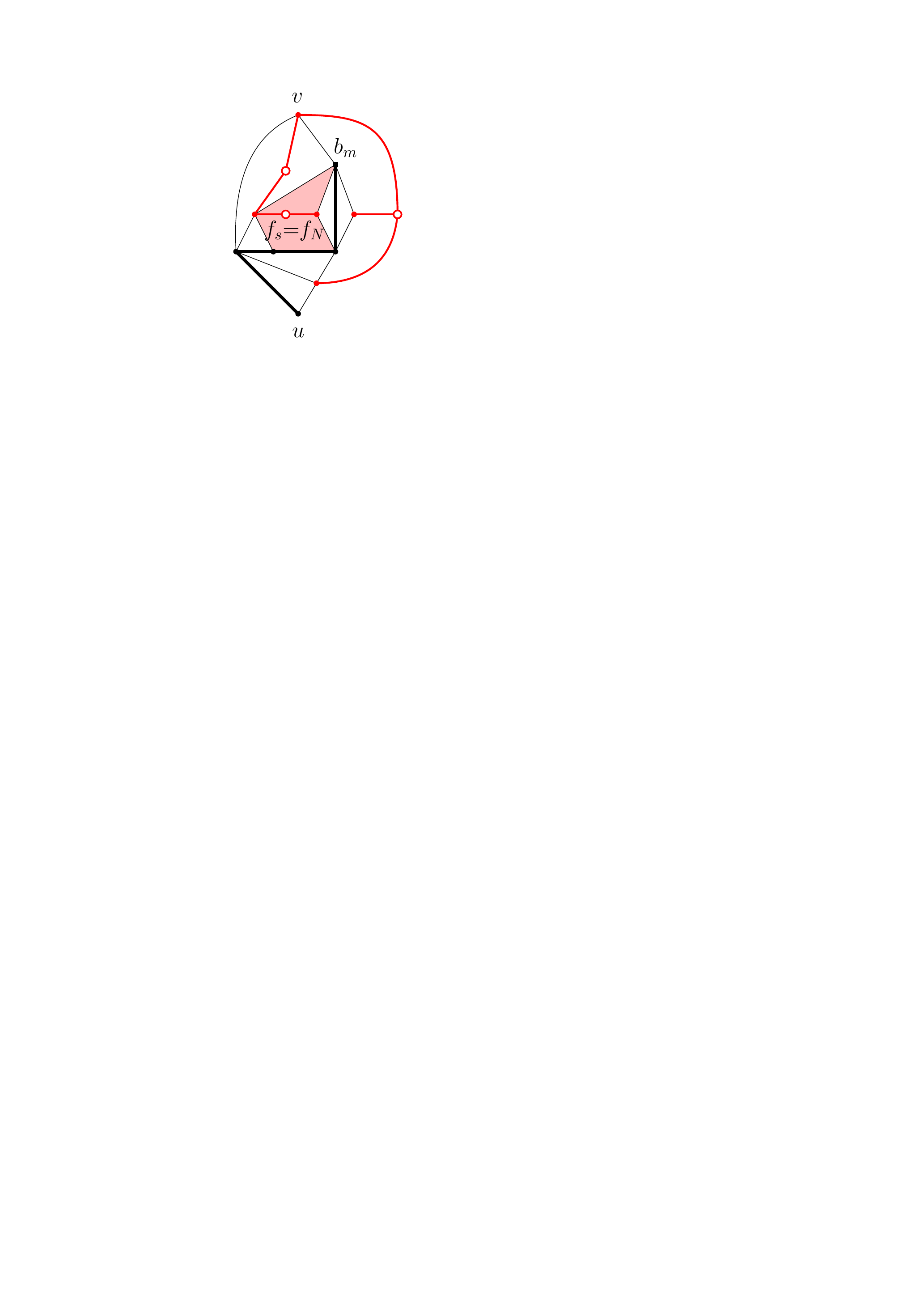} 
		\label{fig:FN-skeleton}
	}\hfil
	\subfloat[$f_s=f_N$]{
		\includegraphics[height=.3\textwidth,page=1]{FN} 
		\label{fig:FN-fsEQUALfN}
	}\hfil
	\subfloat[$f_s \neq f_N$]{
		\includegraphics[height=.3\textwidth,page=2]{FN} 
		\label{fig:FN-fsNOTEQUALfN}
	}
\caption{(a) Planar embedding $\cal S_\mu$ of the skeleton of an R-node $\mu$ of type \protect\RF-RF; the virtual edge $e$ corresponding to the unique non-Q-node child of $\mu$ is dashed; the face $f$ of $\cal S_\mu$ is not intrinsically red since the type of $e$ is \BP-BP. (b)-(c) Two distinct relevant embeddings of $\pert{\mu}$ containing a red face $f_N$ corresponding to $f$, which differ in the embedding type of $\pert{e}$ (type \protect\BPiii-BP3 in (b), and \protect\BPiv-BP4 in (c)). Face $f_N$ is an end-vertex of $\cal B_\mu$ in (b), while it is not~in~(c).}
\label{fig:FN}
\end{figure}

\begin{lemma} \label{le:few-non-intrinsically}
There is an end-vertex $f_s$ of the backbone \backbone containing $f_N$ such that either 
$f_s=f_N$ or there exists a virtual edge $e$ incident to $f$ such that
all the vertices, except for $f_N$, of the sub-path of \backbone between~$f_N$ and~$f_s$ correspond to vertices and internal faces of the embedding $\mathcal E_e$ of~$\pert{e}$~determined~by~$\mathcal E_\mu$. 
\end{lemma}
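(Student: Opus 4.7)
The plan is to analyze the structure of $\mathcal B_\mu$ in the neighborhood of $f_N$ and exploit the non-intrinsic redness of $f$ to force the backbone to remain confined within a single pertinent of a virtual edge incident to $f$.

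First, I handle the trivial case: if $f_N$ is an end-vertex of $\mathcal B_\mu$, I set $f_s=f_N$ and the lemma follows directly. Otherwise, $f_N$ is internal to $\mathcal B_\mu$ and has exactly two distinct backbone-neighbors $r_1, r_2$. Each $r_i$ is a red vertex incident to $f_N$ in $\mathcal E_\mu$, and being internal to the backbone it has degree at least $2$ in $A(\mathcal E_\mu)$. In particular, $r_i$ cannot be the red vertex of an $rb$-trivial component incident to $f_N$ (such a vertex would be a leaf of $A(\mathcal E_\mu)$ and thus excluded from $\mathcal B_\mu$). Hence each $r_i$ is either (A) a red vertex of $\skel(\mu)$ incident to $f$ in $\mathcal S_\mu$, or (B) a non-pole red vertex of $\pert{e}$ lying on the outer face walk of $\mathcal E_e$ corresponding to $f$, for some virtual edge $e$ of $\skel(\mu)$ incident to $f$. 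By the first bullet of \cref{def:intrinsically-red} and the non-intrinsic redness of $f$, at most one of $r_1, r_2$ can be of type~(A); consequently at least one—say $r_1$—is of type~(B), with associated virtual edge $e_1$.

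Second, I trace the sub-path $\sigma$ of $\mathcal B_\mu$ from $f_N$ in the direction of $r_1$, which has the form $(f_N, r_1, g_1, r_2', g_2, \ldots, f_s)$ alternating between red vertices and red faces. Since $r_1$ is a non-pole vertex of $\pert{e_1}$ lying on the outer face walk of $\mathcal E_{e_1}$ corresponding to $f$, the only face of $\mathcal E_\mu$ incident to $r_1$ that is not an internal face of $\mathcal E_{e_1}$ is $f_N$ itself. Since $\mathcal B_\mu$ is a simple path and does not revisit $f_N$, the face $g_1$ following $r_1$ on $\sigma$ is an internal red face of $\mathcal E_{e_1}$. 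Continuing inductively along $\sigma$ and invoking \cref{le:structure-A-mu} applied to $\mathcal E_{e_1}$ to ensure that $A(\mathcal E_{e_1})$ consists of at most two caterpillars, I argue that $\sigma$ remains confined to vertices of $\pert{e_1}$ and internal red faces of $\mathcal E_{e_1}$ until it reaches an end-vertex $f_s$ of $\mathcal B_\mu$.

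The main obstacle is to rule out that $\sigma$ escapes $\pert{e_1}$ through a red pole $p$ of $e_1$, which can only happen when $e_1$ is of type \RF-RF (the \RE-RE pertinent has no non-pole vertex by \cref{le:structure-node-RE} and so cannot host $r_1$). Such a pole $p$ is a red vertex of $\skel(\mu)$ incident to $f$; the non-intrinsic redness of $f$ forces $p$ to be the unique red vertex of $\skel(\mu)$ on $f$ and excludes any incident virtual edge of type \BE-BE or \BF-BF. Using these structural restrictions together with the caterpillar property of $A(\mathcal E_\mu)$ and the classification of relevant embeddings of $\pert{e_1}$ from \cref{sse:embedding-classification}, I would show that either $\sigma$ cannot actually leave $\pert{e_1}$ through $p$, or the symmetric sub-path of $\mathcal B_\mu$ obtained by walking from $f_N$ through $r_2$ remains confined within some $\pert{e_2}$ all the way to the opposite end-vertex of $\mathcal B_\mu$; in either case, an admissible choice of $f_s$ and $e$ is obtained.
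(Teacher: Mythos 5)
There is a genuine gap, and it sits exactly where the lemma's difficulty lies. Your key tracing step asserts that ``the only face of $\mathcal E_\mu$ incident to $r_1$ that is not an internal face of $\mathcal E_{e_1}$ is $f_N$ itself.'' This is false in general: a non-pole red vertex of $\pert{e_1}$ can be incident to \emph{both} outer faces of $\mathcal E_{e_1}$ (the middle vertex of a type~BE~slim child, or more generally a red ``series'' vertex of the child, cf.\ \cref{le:structure-node-BE}), in which case the second face it sees in $\mathcal E_\mu$ is the face corresponding to the \emph{other} skeleton face incident to $e_1$. So the backbone can leave $\pert{e_1}$ at the very first step after $r_1$, and your induction never gets started. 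Worse, your criterion for choosing the direction to trace (pick a neighbor of $f_N$ of type~(B)) does not identify a confined direction: in the BE-slim example the type-(B) neighbor is precisely the one through which the backbone escapes, and the lemma is then saved only by the \emph{other} direction (e.g.\ when the other neighbor lies in a BP/BB child, where \cref{le:structure-A-mu-BP-BB} forces confinement). Your proposal never proves that at least one of the two directions is confined; the last paragraph only announces that you ``would show'' it, and the acknowledged pole-escape case is likewise left unresolved.

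By contrast, the paper argues by contradiction on both directions simultaneously and never needs your (false) local claim: if a backbone-neighbor of $f_N$ lies in a BP/BB child, \cref{le:structure-A-mu-BP-BB} confines that whole side; otherwise, using that $f$ is non-intrinsically red together with \cref{le:type-C-path} (when no red skeleton vertex lies on $f$, exactly one incident virtual edge is of type BE/BF) and \cref{le:final-edges} (when a red skeleton vertex lies on $f$, at most one of its two incident RE/RF edges is RF), it forces \emph{both} backbone-neighbors of $f_N$ into the pertinent of a single BF or RF edge, and then \cref{le:structure-A-mu-BF} or \cref{le:structure-A-mu-RF} confines at least one side, a contradiction. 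To repair your proof you would need to replace the single-direction trace by exactly this kind of case analysis on the types of the virtual edges incident to $f$; as written, the argument does not go through.
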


\begin{fullproof}
Suppose, for a contradiction, that the statement is false. Then there exist two distinct red faces $f'_N$ and $f''_N$ of $\mathcal E_\mu$ such that \backbone  $=(\dots,f'_N,v',f_N,v'',f''_N,\dots)$. Let $\mathcal B'_N$ (let $\mathcal B''_N$) be the subpath of \backbone between $f_N$ and an end-vertex of \backbone comprising $f'_N$ (resp.\ comprising $f''_N$). Assume that $v'$ belongs to the pertinent graph $H_e$ of a virtual edge $e$ of type \BP-BP or \BB-BB incident to $f$; let $\mathcal E_e$ be the embedding of $H_e$ determined by $\mathcal E_\mu$. Then, by \cref{le:structure-A-mu-BP-BB}, we have that $f'_N$ is a face of $\mathcal E_e$; again by the same lemma, all the internal faces of $\mathcal E_\mu$ in $\mathcal B'_N$ belong to $\mathcal E_e$, a contradiction. We can hence assume that $v'$ (and, analogously, $v''$) does not belong to the pertinent graph of a virtual edge of type \BP-BP or \BB-BB. 

Since $f$ is a non-intrinsically red face, there is at most one red vertex in $\skel(\mu)$ on the boundary of $f$. 
We distinguish the case in which there is no red vertex in $\skel(\mu)$ on the boundary of $f$ from the case in which there is one such vertex. 

If no red vertex of $\skel(\mu)$ is on the boundary of $f$, then there is no virtual edge incident to~$f$ whose type is \RE-RE or \RF-RF. Furthermore, since $f$ is a non-intrinsically red face, there is at most one virtual edge incident to~$f$ whose type is \BE-BE or \BF-BF. Indeed, since $f$ is an internal face of $\cal S_\mu$, then there is one edge whose type is \BE-BE or \BF-BF, as otherwise all the virtual edges incident to~$f$ would be of type \BP-BP or \BB-BB; these edges would form a cycle, which is not possible by~\cref{le:type-C-path}. Denote by $e^*$ the single virtual edge incident to $f$ whose type is \BE-BE or \BF-BF, if any. All the virtual edges incident to $f$ and different from $e^*$ are of type \BP-BP or \BB-BB. Since neither $v'$ nor $v''$ belongs to the pertinent graph of a virtual edge of type \BP-BP or \BB-BB, it follows that $v'$ and $v''$ belong to $\pert{e^*}$. Since $v'$ and $v''$ are distinct, it follows that the type of $e^*$ is \BF-BF. Let $\mathcal E_{e^*}$ be the embedding of $H_{e^*}$ determined by $\mathcal E_\mu$. By \cref{le:structure-A-mu-BF},  all the internal red faces of $\mathcal E_\mu$ in $\mathcal B'_N$ different from $f_N$  or all the internal red faces of $\mathcal E_\mu$ in $\mathcal B''_N$ different from $f_N$ belong to $\mathcal E_{e^*}$, a contradiction.

We now discuss the case in which there is a red vertex $w$ of $\skel(\mu)$ on the boundary of $f$. Since $f$ is a non-intrinsically red face, there is no virtual edge incident to $f$ whose type is \BE-BE or \BF-BF. The two virtual edges $e_1$ and $e_2$ of $\skel(\mu)$ that are on the boundary of $f$ and that are incident to $w$ are of type \RE-RE or \RF-RF. All the virtual edges incident to $f$ and different from $e_1$ and $e_2$ are hence of type \BP-BP or \BB-BB. Since neither $v'$ nor $v''$ belongs to the pertinent graph of a virtual edge of type \BP-BP or \BB-BB, we have that one between $e_1$ and $e_2$ is of type \RF-RF, while the other one is of type \RE-RE. Indeed, by \cref{le:final-edges}, we have that $e_1$ and $e_2$ are not both of type \RF-RF; further, $e_1$ and $e_2$ are not both of type \RE-RE, as otherwise we would have $v'=v''=w$, while $v'$ and $v''$ are distinct. Assume hence that the type of $e_1$ is \RF-RF, while the type of $e_2$ is \RE-RE. Then both $v'$ and $v''$ belong to $\pert{e_1}$ (possibly one of them coincides with $w$). By 
\cref{le:structure-A-mu-RF}, all the internal faces of $\mathcal E_\mu$ in $\mathcal B'_N$ or all the internal faces of $\mathcal E_\mu$ in $\mathcal B''_N$ belong to $\mathcal E_{e_1}$, a contradiction.
\end{fullproof}

Next, we prove that at most one internal non-intrinsically red face of $\cal S_\mu$ is reddened in $\cal E_\mu$.

\begin{lemma} \label{le:non-intrinsically-where}
There exists at most one internal non-intrinsically red face of $\cal S_\mu$ that is reddened in $\cal E_\mu$. Furthermore, such a face is incident to a virtual edge $e$ of $\skel(\mu)$ such that $b_m$ belongs~to~$\pert{e}$.
\end{lemma}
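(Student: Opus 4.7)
The plan is to assume for a contradiction that two distinct internal non-intrinsically red faces $f_1$ and $f_2$ of $\mathcal{S}_\mu$ are both reddened in $\mathcal{E}_\mu$, and let $f_{1,N}$ and $f_{2,N}$ be the corresponding internal red faces of $\mathcal{E}_\mu$. Applying \cref{le:few-non-intrinsically} to each of them, I obtain an end-vertex $f_{i,s}$ of some backbone of $A(\mathcal{E}_\mu)$ such that either $f_{i,s} = f_{i,N}$ or the sub-path of the backbone between $f_{i,N}$ and $f_{i,s}$ consists of vertices and internal faces of $\mathcal{E}_{e_i}$ for a single virtual edge $e_i$ of $\skel(\mu)$ incident to $f_i$. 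In either case, $f_{i,s}$ is itself an internal face of $\mathcal{E}_\mu$.

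I first rule out that $f_{1,N}$ and $f_{2,N}$ lie along the same end-branch of a common backbone. If, say, $f_{2,N}$ lay on the subpath from $f_{1,N}$ to $f_{1,s}$, then \cref{le:few-non-intrinsically} would force $f_{2,N}$ to correspond to an internal face of $\mathcal{E}_{e_1}$, i.e.\ a face whose entire boundary lies in $\pert{e_1}$. But since $f_{2,N}$ corresponds to the face $f_2$ of $\mathcal{S}_\mu$, its boundary must contain edges from the pertinent graphs of every virtual edge incident to $f_2$ in $\skel(\mu)$, and $f_2$ being a face of $\mathcal{S}_\mu$ means this set has size at least two, a contradiction. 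Consequently, $f_{1,s}\neq f_{2,s}$, so both end-vertices of a single backbone of $A(\mathcal{E}_\mu)$ (or one end-vertex from each of two distinct backbones, which by \cref{obs:two-components} can only happen in types~\BP-BP and~\BB-BB) must correspond to internal red faces of $\mathcal{E}_\mu$.

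The bulk of the proof is then a case analysis on the type of $\mu$, using the structural lemmas that pin down the end-vertices of backbones. For type~\RF-RF, \cref{le:rf-endvertex-backbone} shows that at most one end-vertex of the backbone ($f^*$) is an internal face associated with $b_m$, while the other ($f^\diamond$) is either an outer face of $\mathcal{E}_\mu$ or, in type \RFIz-RFI0, an internal face incident to the red pole~$v$. For types~\BP-BP and~\BB-BB, the relevant statement is \cref{le:structure-A-mu-BP-BB}, which forces the internal-face end-vertices to be associated with $b_m$ via a leaf and which restricts the second caterpillar's backbone to a single outer-face vertex. For type~\BF-BF, \cref{le:bf-endvertex-backbone} shows that the second end-vertex is either an outer face or an internal face incident to the pole-neighbors $u_\ell$ or $u_r$ (which must then be leaves of $A(\mathcal{E}_\mu)$). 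In each subcase where a second internal end-vertex exists, the incident vertex is either a pole of $\mu$ or a pole-neighbor; a case-by-case inspection, aided by \cref{le:final-edges,le:structure-node-RE,le:type-C-path}, shows that the corresponding face of $\mathcal{S}_\mu$ is either incident to at least two red vertices of $\skel(\mu)$ or to at least two virtual edges of type~\BE-BE or~\BF-BF, hence intrinsically red by \cref{def:intrinsically-red}. This rules out the second face being non-intrinsically red and yields the first part of the statement.

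For the second part, having identified the unique reddened non-intrinsically red internal face $f$ with the end-vertex $f^*$ of the backbone adjacent to the leaf $r''$ of $A(\mathcal{E}_\mu)$ sharing a face with $b_m$, I argue that $b_m$ lies in $\pert{e}$ for a virtual edge $e$ incident to $f$. Indeed, either $f^* = f_N$, in which case $r''$ shares the face $f_N$ with $b_m$ and both vertices must belong to the pertinent graph of some virtual edge incident to $f$; or $f^*$ is an internal face of $\mathcal{E}_e$ for a virtual edge $e$ incident to $f$ (by \cref{le:few-non-intrinsically}), and then both $r''$ and $b_m$ lie in $\pert{e}$. The main obstacle will be the systematic case analysis verifying that every ``other end-vertex'' scenario identified by \cref{le:rf-endvertex-backbone,le:structure-A-mu-BP-BB,le:bf-endvertex-backbone} produces a face of $\mathcal{S}_\mu$ that is intrinsically red, since the definitions of intrinsically red faces and of the various embedding types interact with the types of the incident virtual edges in several cases.
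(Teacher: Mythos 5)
Your setup for the first part is essentially the paper's: assume two reddened internal non-intrinsically red faces, apply \cref{le:few-non-intrinsically} to obtain two \emph{distinct} backbone end-vertices that are internal faces, and then invoke \cref{le:structure-A-mu-BP-BB,le:rf-endvertex-backbone,le:bf-endvertex-backbone}. Your handling of the \BP-BP/\BB-BB cases and of the \BF-BF case is in line with the paper (for \BF-BF the paper indeed shows the offending skeleton face is incident to the black pole $u_\mu$, and since $u_\mu$ has no black neighbor in $\pert{\mu}$ the two virtual edges bounding that face at $u_\mu$ are of type \RE-RE, \RF-RF, \BE-BE or \BF-BF, which makes the face intrinsically red). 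The gap is in the \RF-RF case, i.e.\ the subcase of type \RFIz-RFI0 where the second internal end-vertex is a face incident to the red pole $v_\mu$. There your blanket claim --- the incident vertex is a pole or a pole-neighbor, hence the corresponding face of $\mathcal S_\mu$ is incident to two red vertices of $\skel(\mu)$ or to two \BE-BE/\BF-BF virtual edges --- does not follow: every virtual edge of $\skel(\mu)$ incident to $v_\mu$ has $v_\mu$ as its (unique) red pole and a black other pole, so it is of type \RE-RE or \RF-RF and forces neither a second red skeleton vertex nor a \BE-BE/\BF-BF edge onto the face. A face incident to $v_\mu$ could a priori be reddened by, say, a type-\BPii-BP2 child whose red side faces it, without being intrinsically red, so intrinsic-redness counting alone cannot close this case.

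The paper closes it with a different argument that your proposal lacks: an R-node admits no embedding of type \RFIz-RFI0 at all. Since $\mu$ is an R-node, $u_\mu$ has at least two neighbors in $\pert{\mu}$, at most one of which is black, so a red neighbor of $u_\mu$ is incident to an outer face of $\mathcal E_\mu$; in type \RFIz-RFI0 the outer faces are not red, and since $v_\mu$ is incident to them this red neighbor must be $v_\mu$; but then the edge $(u_\mu,v_\mu)$ lies in $\pert{\mu}$, contradicting that $\mu$ is an R-node rather than a P-node. You need this (or a substitute) to dispatch the \RFIz-RFI0 scenario. A further, minor caution concerning the second part of the statement: the cited items of \cref{le:rf-endvertex-backbone,le:structure-A-mu-BP-BB,le:bf-endvertex-backbone} only guarantee that the leaf $r''$ shares \emph{some} face of $\mathcal E_\mu$ with $b_m$, not the face $f_N$ itself, so your sentence ``$r''$ shares the face $f_N$ with $b_m$'' claims more than they give; the paper's own derivation here is equally terse, but if you spell it out you should route the argument through $f'_s$ lying in (or corresponding to) the pertinent graph of a virtual edge incident to $f$ rather than through a shared incidence with $f_N$.
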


\begin{proof}
First, suppose, for a contradiction, that there exist two internal non-intrinsically red faces $f'$ and $f''$ of $\cal S_\mu$ that are reddened in $\cal E_\mu$. Also, let $f'_N$ and $f''_N$ be the faces of $\cal E_\mu$ that correspond to $f'$ and $f''$, respectively.
By \cref{le:few-non-intrinsically}, there exists an end-vertex $f'_s$ (resp.,\ $f''_s$) of a backbone \backbone of $A(\mathcal E_\mu$) such that either $f'_s=f'_N$ (resp.,\ $f''_s=f''_N$) or all the vertices, except for $f'_N$ (resp.,\ except for $f''_N$), of the sub-path of \backbone between~$f'_N$ and~$f'_s$ (resp.,\ between~$f''_N$ and~$f''_s$) correspond to vertices and internal faces of the embedding of the pertinent graph of a virtual edge incident to $f'$ (resp.,\ to $f''$) in $\cal E_\mu$.
Therefore, $f''_N$ does not belong to the subpath of \backbone between $f'_N$ and $f'_s$, and $f'_N$ does not belong to the subpath of \backbone between $f''_N$ and $f''_s$.
Since $f'_N \neq f''_N$, we have that $f'_s \neq f''_s$.
Also, by \cref{le:few-non-intrinsically}, we have that $f'_s$ and $f''_s$ are both internal faces of $\mathcal E_\mu$.

%

By definition, the only embedding types for which two end-vertices of the backbones of $A(\mathcal E_\mu)$ are internal faces of $\mathcal E_\mu$ are \RFIz-RFI0, \BFIz-BFI0, \BFIif-BFI1F, and \BE-BE fat. The latter is excluded, however, since $\mu$ is an R-node. 

\begin{itemize}
	\item Suppose that the type of $\mathcal E_\mu$ is \RFIz-RFI0. 
	Since $\mu$ is an R-node, we have that $u_\mu$ has at least two neighbors in $\pert{\mu}$, exactly one of which is black. This implies that there exists a red neighbor of $u_\mu$ incident to an outer face of $\mathcal E_\mu$. This neighbor must coincide with $v_\mu$, as otherwise one of the outer faces of $\mathcal E_\mu$ would be red and the type of $\mathcal E_\mu$ could not be \RFIz-RFI0. However, the existence of the edge $(u_\mu,v_\mu)$ implies that $\mu$ is a P-node, a contradiction. 
	\item Suppose that the type of $\mathcal E_\mu$ is \BF-BF. By \cref{le:structure-A-mu-BF}, we have that $A(\mathcal E_\mu)$ is a single caterpillar. Also, by Item~\ref{le:bf-endvertex-backbone-endvertex} of \cref{le:bf-endvertex-backbone} and since $f'_s$ and $f''_s$ are both internal faces of $\mathcal E_\mu$,  we have that at least one of them, say $f'_s$, is incident to $u_\mu$.  
	This implies that $f'_N$ and $f'$ are also incident to $u_\mu$ in $\mathcal E_\mu$ and $\mathcal S_\mu$, respectively. 
	Consider the two virtual edges incident to $u_\mu$ and to $f'$. Since the type of $\mu$ is \BF-BF, $u_\mu$ does not have any black neighbor in $\pert{\mu}$, and thus these virtual edges are of type either \RF-RF, or \RE-RE, or \BF-BF, or \BE-BE. This contradicts the fact that $f'$ is non-intrinsically red. 
\end{itemize}

This concludes the proof that there exist no two internal non-intrinsically red faces $f'$ and $f''$ of $\cal S_\mu$ that are reddened in $\cal E_\mu$. Suppose now that an internal non-intrinsically red face  $f'$  that is reddened in $\cal E_\mu$ exists. 
Since $\mu$ is an R-node, its type is either \BP-BP, or \BB-BB,  or \BF-BF, or \RF-RF.
The second part of the statement follows 
from Item~\ref{le:rf-endvertex-backbone-bm} of \cref{le:rf-endvertex-backbone}, if the type of $\mu$ is \RF-RF;
from 
Item~\ref{le:structure-A-mu-BP-BB-item-bm-in-pertinent} of \cref{le:structure-A-mu-BP-BB}, if the type of $\mu$ is \BP-BP or \BB-BB; and
from Item~\ref{le:bf-endvertex-backbone-bm-shares} of \cref{le:bf-endvertex-backbone}, if the type of $\mu$ is \BF-BF.
This concludes the proof.
\end{proof}

Clearly, since $\mathcal S_\mu$ has two outer faces, the number of outer faces that can be reddened in a relevant embedding of $\pert{\mu}$ is also constant. However, in the next lemma, we prove a stronger result.

\begin{lemma}\label{lem:R-external-reddened}
There exists at most one outer non-intrinsically red face $f$ of $\cal S_\mu$ that is reddened in a relevant embedding $\cal E_\mu$ of $\pert{\mu}$. 
\end{lemma}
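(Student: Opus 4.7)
I would prove this by contradiction: assume that both outer faces $\ell(\mathcal S_\mu)$ and $r(\mathcal S_\mu)$ are non-intrinsically red and reddened in $\mathcal E_\mu$. Because both outer faces of $\mathcal E_\mu$ are then red, the type of $\mathcal E_\mu$ lies in $\mathcal T_2$, so $\mu$ must be of type \BP-BP, \BB-BB, \BF-BF, or \RF-RF. The two key structural facts I would repeatedly use are \cref{le:type-C-path}, stating that the virtual edges of type \BP-BP or \BB-BB form a single path $\mathcal P$ in $\skel(\mu)$ starting at a pole of $\mu$, and \cref{le:final-edges}, stating that at most one virtual edge of $\skel(\mu)$ has type \RF-RF (and similarly for \BF-BF and \BB-BB).

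When $\mu$ is of type \BP-BP or \BB-BB, non-intrinsic redness of an outer face, via the last two clauses of \cref{def:intrinsically-red}, forbids any red vertex of $\skel(\mu)$ and any \BE-BE/\BF-BF virtual edge on its boundary. Any \RE-RE or \RF-RF edge on the boundary would also place a red pole on the outer face, so that type is also ruled out. Hence the whole outer boundary cycle of $\skel(\mu)$ (the boundary of $\skel(\mu)+uv$ minus the $uv$ edge) would consist solely of \BP-BP and \BB-BB virtual edges, contradicting \cref{le:type-C-path} since those edges form a path, not a cycle.

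When $\mu$ is of type \RF-RF, the red pole $v_\mu$ lies on both outer faces, so non-intrinsic redness forces $v_\mu$ to be the unique red vertex of $\skel(\mu)$ on each outer face and forbids any \BE-BE/\BF-BF edge there. Since $\mathcal P$ cannot contain the red vertex $v_\mu$, \cref{le:type-C-path} gives that $\mathcal P$ starts at $u_\mu$, and so $u_\mu$ has exactly one incident \BP-BP/\BB-BB edge. But the first edge at $u_\mu$ along each outer face's $u_\mu\to v_\mu$ boundary path cannot be \RE-RE or \RF-RF (which would place a second red vertex on the outer face) nor \BE-BE/\BF-BF (excluded by non-intrinsic redness), so it must be \BP-BP or \BB-BB. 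This produces two distinct \BP-BP/\BB-BB edges incident to $u_\mu$, the desired contradiction.

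The case $\mu$ of type \BF-BF is the main obstacle and requires a more delicate sub-case analysis. The \BF-BF condition forces $u_\mu$ to have no black neighbor in $\pert{\mu}$, so no \BP-BP/\BB-BB edge is incident to $u_\mu$; thus $\mathcal P$ starts at $v_\mu$, giving a unique \BP-BP/\BB-BB edge incident to $v_\mu$. Non-intrinsic redness forces each outer-face boundary into one of two configurations near $u_\mu$: either (a) the vertex adjacent to $u_\mu$ is a non-pole red vertex of $\skel(\mu)$ with no \BE-BE/\BF-BF edge on the boundary, or (b) the first edge at $u_\mu$ is \BE-BE/\BF-BF with no non-pole red vertex on the boundary. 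Whenever an outer-face boundary path has length $\geq 3$ in either configuration, the last edge at $v_\mu$ is \BP-BP or \BB-BB, and having this for both outer faces contradicts uniqueness of the \BP-BP/\BB-BB edge at $v_\mu$. The delicate sub-case is configuration (a) with boundary length exactly $2$, say $L=(u_\mu,r_\ell,v_\mu)$ with both boundary edges $e_1=(u_\mu,r_\ell)$ and $e_2=(r_\ell,v_\mu)$ of type \RE-RE or \RF-RF. The key observation here is that $e_1$ cannot be of type \RF-RF, because $\pert{e_1}$ would then have to contain the subpath $Q_{u_\mu}$ of the black path, yet $Q_{u_\mu}$ passes through the pole $v_\mu$, which is not in $\pert{e_1}$. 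Hence $e_1$ is \RE-RE, whose only embedding type \RE-RE lies outside $\mathcal T_1\cup\mathcal T_2$; since no non-pole black vertex of $\skel(\mu)$ lies on $L$, no decided $rb$-trivial component can be incident to $\ell^*$ through $L$, so by \cref{le:iff-reddened} the reddening of $\ell$ can only come from $e_2$ being of type \RF-RF with type in $\mathcal T_1\cup\mathcal T_2$. Invoking \cref{le:final-edges} (only one \RF-RF edge in $\skel(\mu)$), a parallel analysis of the other outer face — combined with the constraint that the second outer face either exhibits the same length-$2$ configuration (then forced to be \RE-RE on both edges, hence not reddened) or has length $\geq 3$ (then forced to carry a \BP-BP/\BB-BB edge at $v_\mu$, contradicting uniqueness) — yields the final contradiction.
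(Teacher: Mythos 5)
Your proposal takes a genuinely different route from the paper: the paper argues directly that for an R-node one of the two outer faces of $\mathcal S_\mu$ is intrinsically red (or, in one BF sub-case, can never be reddened), whereas you argue by contradiction from \cref{le:type-C-path} and \cref{le:final-edges}. Your BP/BB and RF cases are sound, and your observation that $e_1=(u_\mu,r_\ell)$ cannot be of type RF is correct. The problem is the final step of the BF case, which you rightly single out as the delicate one.

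Concretely, consider the mixed sub-case: the face $\ell$ is in your configuration (a) with boundary length $2$, so its reddening forces $e_2=(r_\ell,v_\mu)$ to be of type RF with embedding type in $\mathcal T_1\cup\mathcal T_2$, while the other outer face has boundary length at least $3$ (or is in configuration (b) with length $2$, a case your dichotomy omits but which behaves the same) and therefore carries a BP/BB edge at $v_\mu$. This yields exactly \emph{one} BP/BB edge incident to $v_\mu$, which does not contradict the uniqueness you derived from \cref{le:type-C-path} (that only bounds the number of BP/BB edges at $v_\mu$ by one), so ``contradicting uniqueness'' does not close this case. The case is indeed impossible, but for a reason you still need to supply: for a BF-type $\mu$ the vertex $v_\mu$ has exactly one black neighbor in $\pert{\mu}$, and if $e_2$ is of type RF then the black subpath leaving $v_\mu$ (in particular the edge to that neighbor) lies in $\pert{e_2}$; hence no BP/BB virtual edge can be incident to $v_\mu$ at all, since such an edge would need a black neighbor of $v_\mu$ inside its own pertinent graph. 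Adding this incompatibility (the counterpart at $v_\mu$ of your argument that $e_1$ is not RF) closes the mixed case, and with it the BF case. A smaller imprecision: to exclude condition (1) of \cref{le:iff-reddened} for $\ell$ you only rule out $rb$-trivial components attached to vertices of $L$; components attached to non-pole black vertices inside $\pert{e_2}$ could also make $\ell^*$ red, but this is harmless, because the red pole $r_\ell$ is already incident to both outer faces of the embedding of $\pert{e_2}$, so any such extra red vertex forces that outer face into $A(\mathcal E_{e_2})$ and hence $e_2$'s type into $\mathcal T_1\cup\mathcal T_2$ anyway, which is the conclusion you need.
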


\begin{proof}
Since $\mu$ is an R-node, its type is neither \RE-RE, by \cref{le:structure-node-RE}, nor \BE-BE, by \cref{le:structure-node-BE}.

Suppose that the type of $\mu$ is \BP-BP or \BB-BB. Then one of the outer faces of $\mathcal S_\mu$ is intrinsically red, by definition and since its type is not \BPi-BP1, by \cref{obs:bp1-black-path}. Thus, the statement trivially follows in this case.

Suppose that the type of $\mu$ is \RF-RF. We claim that also in this case one of the outer face of $\mathcal S_\mu$ is intrinsically red. Namely, first observe that the red pole $v_\mu$ of $\mu$ is incident to both the outer faces of $\mathcal S_\mu$. Also, the black pole $u_\mu$ of $\mu$ is incident to at most one type \BP-BP virtual edge. Since $\mu$ is an R-node, $u_\mu$ is incident to at least two virtual edges; hence, $u_\mu$ is incident to a virtual edge $e$ that is not of type  \BP-BP and that is incident to an outer face of $\mathcal S_\mu$, say to $\ell(\mathcal S_\mu)$.
The other end-vertex of $e$ is different from $v_\mu$, as otherwise $\mu$ would be a P-node. Then either the type of $e$ is \BE-BE or \BF-BF, or the other pole of $e$ is a red vertex of $\skel(\mu)$ (different from $v_\mu$) that is incident to
$\ell(\mathcal S_\mu)$; in all the cases, it follows that $\ell(\mathcal S_\mu)$ is intrinsically red.

Suppose finally that the type of $\mu$ is \BF-BF. We claim that also in this case one of the outer faces of $\mathcal S_\mu$ is intrinsically red. Namely, since $\mu$ is an R-node, it has two distinct red neighbors $u_\ell$ and $u_r$ that are incident to $\ell(\mathcal S_\mu)$ and $r(\mathcal S_\mu)$, respectively; note that the virtual edges $(u_\mu,u_\ell)$ and $(u_\mu,u_r)$ are of type \RE-RE, since $u_\mu$ has no black neighbor in $\pert{\mu}$. Also, the pole $v_\mu$ of $\mu$ has one black neighbor in $\pert{\mu}$. Hence, one of the two outer faces of $\mathcal S_\mu$, say $\ell(\mathcal S_\mu)$, is such that the pertinent graph of the virtual edge $e$ incident to $v_\mu$ and to $\ell(\mathcal S_\mu)$ contains no black neighbor of $v_\mu$. If the other end-vertex of $e$ is not $u_\ell$, then the claim follows as in the case in which the type of $\mu$ is \RF-RF.  Otherwise, the virtual edges $(u_\mu,u_\ell)$ and $e=(v_\mu,u_\ell)$ are the only virtual edges incident to $\ell(\mathcal S_\mu)$. However, this implies  that $\ell(\mathcal S_\mu)$ can not be reddened. Namely, $(u_\mu,u_\ell)$ and $(v_\mu,u_\ell)$ are both of type \RE-RE; this was proved above for $(u_\mu,u_\ell)$ and it follows from the fact that the pertinent graph of $(v_\mu,u_\ell)$ contains no black neighbor of $v_\mu$ for $(v_\mu,u_\ell)$. Finally, the only black vertices incident to $\ell(\mathcal S_\mu)$ are the poles of $\mu$, and thus their $rb$-trivial components, if any, are undecided. This concludes the proof.
\end{proof}

In order to determine the desired set of embedding configurations for $\mu$, we distinguish three cases, based on the position of $b_m$ with respect to $\pert{\mu}$.

\smallskip
\noindent\fbox{\bf Case 1: $b_m$ does not belong to $\pert{\mu}$.}
We start with a structural lemma.


\begin{lemma}\label{R:bm-external}
Let $\mu$ be an R-node. Suppose that $b_m$ does not belong to $\pert{\mu}$ and that $\pert{\mu}$ admits a relevant embedding $\cal E_\mu$.
For a virtual edge $e$ of $\skel(\mu)$, let $\mathcal E_e$ be the embedding of $\pert{e}$ determined by $\cal E_\mu$.
Then the following conditions are satisfied:
\begin{enumerate}
	\item \label{R:bm-external-BP} the type of $\mu$ is \BP-BP;
	\item 
	\label{R:bm-external-BP-types}
	the type of $\cal E_\mu$ is either \BPii-BP2 or \BPiii-BP3;
	\item \label{R:bm-external-no-intrinsically-red} $\skel(\mu)$ contains no internal intrinsically red face;
	\item 
	\label{R:bm-external-possible-virtual-edges}
	the virtual edges of $\skel(\mu)$ are of type \RE-RE, \BE-BE slim, or \BP-BP;
	\item \label{R:bm-external-BP-virtual-edges} for each type \BP-BP virtual edge $e$ of $\skel(\mu)$, the type of $\mathcal E_e$ is either \BPi-BP1 or \BPii-BP2; 
	\item \label{R:bm-external-external-face} for each virtual edge $e$ such that the type of $\mathcal E_e$ is \BE-BE slim or \BPii-BP2, we have that $e$ is incident to an outer face $x(\cal S_\mu)$ of $\cal S_\mu$, with $x \in \{\ell,r\}$; also, if the type of $\mathcal E_e$ is \BPii-BP2, then~$\mathcal E_e$ is flipped in such a way that the red vertices of $\pert{e}$ are incident to $x(\cal S_\mu)$; and
	\item 
	\label{R:bm-external-internal-rb-trivial}
	each $rb$-trivial component incident to a vertex of $\skel(\mu)$ is incident to an outer face of $\mathcal E_\mu$.
\end{enumerate}
\end{lemma}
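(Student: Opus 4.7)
The plan is to establish the seven conditions roughly in order, repeatedly exploiting the contrapositive of Item~\ref{le:structure-A-mu-BP-BB-item-bm-in-pertinent} of \cref{le:structure-A-mu-BP-BB} together with \cref{le:internal-red-vertex}, \cref{le:iff-reddened}, and the definitions of the various node/embedding types.

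First, for Condition~\ref{R:bm-external-BP}, I would rule out types in turn. Since $\mu$ is an R-node and hence $\skel(\mu)$ is triconnectible, types~\RE-RE and~\BE-BE are excluded by \cref{le:structure-node-RE,le:structure-node-BE}. Types~\RF-RF, \BB-BB, and \BF-BF all require $b_m$ to be a non-pole vertex of $\pert{\mu}$ by their very definition, which contradicts our hypothesis. The only remaining option is \BP-BP. For Condition~\ref{R:bm-external-BP-types}, the contrapositive of Item~\ref{le:structure-A-mu-BP-BB-item-bm-in-pertinent} of \cref{le:structure-A-mu-BP-BB} forces $A({\cal E}_\mu)$ to contain no vertex corresponding to an internal face of ${\cal E}_\mu$, so $\mathcal{E}_\mu$ has no internal red face; this excludes types \BPiv-BP4 and \BPv-BP5. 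Type \BPi-BP1 is also excluded because, by \cref{obs:bp1-black-path}, it would force $\pert{\mu}^-$ to be a black path, contradicting the fact that $\mu$ is an R-node.

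For Condition~\ref{R:bm-external-no-intrinsically-red}, I would observe that by \cref{le:iff-reddened} every intrinsically red face is reddened in every relevant embedding; an internal intrinsically red face of $\mathcal{S}_\mu$ would therefore create an internal red face of $\mathcal{E}_\mu$, contradicting Condition~\ref{R:bm-external-BP-types}. Condition~\ref{R:bm-external-possible-virtual-edges} follows next: virtual edges of type~\RF-RF, \BF-BF, or \BB-BB all contain $b_m$ as a non-pole vertex and are thus excluded, while a virtual edge of type~\BE-BE fat would, by \cref{le:structure-A-mu-BE}, contribute at least one internal red face to $\mathcal{E}_\mu$, again contradicting Condition~\ref{R:bm-external-BP-types}. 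Condition~\ref{R:bm-external-BP-virtual-edges} is obtained by the same principle applied to the children of type \BP-BP: types \BPiii-BP3, \BPiv-BP4, and \BPv-BP5 each contribute either an internal red face of $\mathcal{E}_e$ (which remains internal in $\mathcal{E}_\mu$) or, in the case of \BPiii-BP3, a red outer face of $\mathcal{E}_e$ that becomes internal unless $e$ is incident to an outer face of $\mathcal{S}_\mu$ on \emph{both} sides (impossible), so only \BPi-BP1 and \BPii-BP2 remain.

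Finally, for Condition~\ref{R:bm-external-external-face} I would argue that if a virtual edge $e$ of type giving $\mathcal{E}_e$ equal to \BE-BE slim or \BPii-BP2 were internal in $\skel(\mu)$, then at least one outer face of $\mathcal{E}_e$ (the one containing the middle red vertex or the red outer face, respectively) would become an internal face of $\mathcal{E}_\mu$ incident to the red non-pole vertices of $\pert{e}$; together with \cref{def:auxiliary-graph-E-mu} this would furnish an internal red face of $\mathcal{E}_\mu$, a contradiction. If the type of $\mathcal{E}_e$ is \BPii-BP2 and $e$ is outer, the same reasoning forces the red outer face of $\mathcal{E}_e$ to coincide with the outer face of $\mathcal{S}_\mu$ it is incident to. For Condition~\ref{R:bm-external-internal-rb-trivial}, any $rb$-trivial component placed inside an internal face of $\mathcal{E}_\mu$ would produce an internal red vertex; by \cref{le:internal-red-vertex} this would yield an internal red face, once more contradicting Condition~\ref{R:bm-external-BP-types}. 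I expect the main subtlety to lie in Conditions~\ref{R:bm-external-BP-virtual-edges} and~\ref{R:bm-external-external-face}, which require a careful bookkeeping of which outer face of $\mathcal{E}_e$ corresponds to which face of $\mathcal{S}_\mu$ in order to correctly exclude unsuitable sub-types and flips.
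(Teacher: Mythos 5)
Your proof is correct and follows essentially the same route as the paper's: exclude all node types except \BP-BP and all embedding types except \BPii-BP2/\BPiii-BP3, conclude that $\mathcal E_\mu$ has no internal red face (hence, by \cref{le:internal-red-vertex}, no internal red vertex), and then obtain Conditions~3--7 by showing that any offending skeleton face, child embedding type, flip, or $rb$-trivial placement would create one. The only blemishes are citation-level: that intrinsically red faces are red follows from \cref{def:intrinsically-red} rather than \cref{le:iff-reddened}, and in your arguments for Conditions~5--6 the face containing the exposed red vertex of $\pert{e}$ need not itself be red, so the contradiction should be routed through \cref{le:internal-red-vertex} (an internal red vertex forces \emph{some} internal red face), exactly as you do for Condition~7 and as the paper does by first fixing ``no internal red vertex'' as a standing fact -- neither issue affects correctness.
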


\begin{proof}
\cref{R:bm-external-BP}: Since $b_m$ is not a vertex of $\pert{\mu}$, the type of $\mu$ is neither \RF-RF, nor \BB-BB, nor \BF-BF. Also, since $\mu$ is an R-node, its type is neither \RE-RE, by \cref{le:structure-node-RE}, nor \BE-BE, by \cref{le:structure-node-BE}. Thus, the type of $\mu$ is \BP-BP.

\cref{R:bm-external-BP-types}: First, the type of $\mathcal E_\mu$ is not \BPi-BP1, as otherwise, by \cref{obs:bp1-black-path}, $\pert{\mu}^-$ would be a path composed of black vertices between the poles of $\mu$, and hence $\mu$ would be either a Q-node or an S-node, while it is an R-node.
Also, the type of $\mathcal E_\mu$ is neither \BPiv-BP4 nor \BPv-BP5, as otherwise, by Item~\ref{le:structure-A-mu-BP-BB-item-bm-in-pertinent} of \cref{le:structure-A-mu-BP-BB}, $b_m$ would belong to $\pert{\mu}$. Thus, the type of $\mathcal E_\mu$ is either \BPii-BP2 or \BPiii-BP3.

Before discussing the remaining conditions, we observe that \cref{R:bm-external-BP-types} and the definition of types \BPii-BP2 and \BPiii-BP3 imply that there exists no internal red face in $\mathcal E_\mu$. Furthermore, by \cref{le:internal-red-vertex}, there exists no internal red vertex in $\mathcal E_\mu$.

\cref{R:bm-external-no-intrinsically-red} is then satisfied as otherwise $\mathcal E_\mu$ would contain an internal red face.

\cref{R:bm-external-possible-virtual-edges,R:bm-external-BP-virtual-edges,R:bm-external-external-face}: Consider first a virtual edge $e$ of $\skel(\mu)$ that is not incident to an outer face of $\mathcal S_\mu$. We have that either the type of $e$ is \RE-RE or it is \BP-BP and the type of its unique embedding is \BPi-BP1, as otherwise, by definition, $\pert{e}$ would contain a non-pole red vertex, and thus $\mathcal E_{\mu}$ would contain an internal red vertex, which we proved not to be the case. Consider now a virtual edge $e$ of $\skel(\mu)$ that is incident to an outer face $x(\cal S_\mu)$ of $\cal S_\mu$, with $x \in \{\ell,r\}$. Since $b_m$ does not belong to $\pert{\mu}$, we have that the type of $e$ is neither \RF-RF, nor \BB-BB, nor \BF-BF; further, the type of $\mathcal E_e$ is neither \BPiv-BP4 nor \BPv-BP5, as otherwise $b_m$ would belong to $\pert{e}$, and thus to $\pert{\mu}$. Moreover, the type of $\mathcal E_e$ is not \BE-BE fat, as otherwise $\mathcal E_e$, and thus $\mathcal E_\mu$, would contain an internal red face, which we proved not to be the case. Finally, the type of $\mathcal E_e$ is not \BPiii-BP3, as otherwise $\mathcal E_\mu$ would contain an internal red vertex, which we proved not to be the case; in fact, since $\mu$ is an R-node, the face of $\mathcal S_\mu$ incident to $e$ and different from $x(\cal S_\mu)$ is an internal face of $\mathcal S_\mu$, and $\mathcal E_e$ has red vertices incident to both its outer faces, by definition. For the same reason, if the type of $e$ is \BPii-BP2, then~$\mathcal E_e$ is flipped in such a way that the red vertices of $\pert{e}$ are incident to $x(\cal S_\mu)$.

Finally, \cref{R:bm-external-internal-rb-trivial} is satisfied as otherwise $\mathcal E_\mu$ would contain an internal red face.
\end{proof}

In order to compute the set of admissible types for $\mu$, we first use \cref{R:bm-external} to determine several cases in which the set of admissible types for $\mu$ is empty. To do so, we check in $O(|\skel(\mu)|)$ time whether one of the following conditions is fulfilled.
\begin{enumerate}
\item $\cal S_\mu$ contains an internal intrinsically-red face (see \cref{R:bm-external-no-intrinsically-red} of \cref{R:bm-external}),
\item there exists at least one virtual edge $e$ of $\skel(\mu)$ that is not of type \RE-RE, \BE-BE slim, or \BP-BP (see \cref{R:bm-external-possible-virtual-edges} of \cref{R:bm-external})
\item $\skel(\mu)$ contains a type-\BP-BP virtual edge whose pertinent graph admits no relevant embedding of type \BPi-BP1 or \BPii-BP2
(see \cref{R:bm-external-BP-virtual-edges} of \cref{R:bm-external}),
\item $\skel(\mu)$ contains a virtual edge not incident to an outer face of $\mathcal S_\mu$ whose pertinent graph only admits a relevant embedding of type \BPii-BP2 (see \cref{R:bm-external-external-face} of \cref{R:bm-external}), or
\item there exists an $rb$-trivial component that is incident to an internal vertex of $\skel(\mu)$ (see \cref{R:bm-external-internal-rb-trivial} of \cref{R:bm-external}). 
\end{enumerate}

If none of the above tests succeeds, then we insert into $\cal V_\mu$ a single embedding configuration defined as follows. 
For each virtual edge $e_i$ of $\skel(\mu)$, we set $t_i$ to \RE-RE or \BE-BE slim, if the type of $e_i$ is \RE-RE or \BE-BE slim, respectively. Further, consider each virtual edge $e_i$ of type \BP-BP. By \cref{R:bm-external-BP-virtual-edges} of \cref{R:bm-external}, we have that $t_i$ may be one of \BPi-BP1 or \BPii-BP2; however, by \cref{obs:bp1-black-path}, the set of admissible types for $e_i$ contains either \BPi-BP1, if the type of $\mu$ is \BPi-BP1, or \BPii-BP2, otherwise. Therefore, we set $t_i$ to be the only admissible type for $e_i$. Note that, by \cref{R:bm-external-possible-virtual-edges} of \cref{R:bm-external}, we have exhaustively considered all the virtual edges of $\skel(\mu)$. Furthermore, for each virtual edge $e_i$ that is not incident to an outer face of $\cal S_\mu$ (and thus the type of $e_i$ is \RE-RE or \BPi-BP1, by \cref{R:bm-external-BP-virtual-edges} of \cref{R:bm-external}), we arbitrarily set $x_i$ to either $\ell$ or~$r$; further, for each virtual edge $e_i$ that is incident to $x({\cal S_\mu})$, with $x \in \{\ell,r\}$, we set $x_i=x$ to comply with \cref{R:bm-external-BP-virtual-edges} of \cref{R:bm-external}. Finally, for each vertex $w$ of $\skel(\mu)$ that is incident to an $rb$-trivial component, we set the designated face $f_w$ of $w$ to the outer face of $\cal S_\mu$ vertex $w$ is incident to (this incidence is guaranteed by \cref{R:bm-external-internal-rb-trivial} of \cref{R:bm-external}).
By the above discussion, we have the following.

\begin{lemma}\label{lem:emb-configurations-bm-out}
Suppose that $\mu$ is an R-node such that $b_m$ does not belong to $\pert{\mu}$. There exists a set $\mathcal V_\mu$ of embedding configurations for $\mu$ such that:
\begin{enumerate}[\bf (i)]
	\item a type $t$ is admissible for $\mu$ if and only if there exists an embedding configuration $\mathcal C_\mu \in \mathcal V_\mu$ such that the embedding of $R(\mathcal C_\mu)$ is of type $t$;
	\item $|\mathcal V_\mu|=1$; and
	\item $\mathcal V_\mu$ can be constructed in $O(|\skel(\mu)|)$ time.
\end{enumerate}
\end{lemma}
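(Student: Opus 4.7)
The plan is to verify that the construction described just before the lemma has the three claimed properties, leveraging Lemma~\ref{R:bm-external} as the main structural tool. The underlying idea is that, when $b_m \notin \pert{\mu}$, Lemma~\ref{R:bm-external} forces the relevant embeddings of $\pert{\mu}$ to be essentially unique: the type of $\mu$ must be \BP-BP, and both the admissible types of the virtual edges of $\skel(\mu)$ and the placements of any $rb$-trivial components are completely determined by the constraints in its items.

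First, I would establish the easy direction of~(i): if any of the five preliminary tests succeeds, then by the corresponding item of Lemma~\ref{R:bm-external} the graph $\pert{\mu}$ admits no relevant embedding, so no type is admissible for $\mu$; in that case $\mathcal V_\mu$ can be taken to contain a single configuration whose realization fails the relevance check of Lemma~\ref{le:compute-type}, so that no type is produced and both (i) and (ii) hold vacuously.

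Next, assume that all five tests pass. I would argue that the single configuration $\mathcal C_\mu$ described in the construction realizes every admissible type of $\mu$. The key point is that Lemma~\ref{R:bm-external} pins down every choice in $\mathcal C_\mu$: each virtual edge of type \RE-RE has the unique admissible type \RE-RE; each virtual edge of type \BE-BE must receive the type \BE-BE~slim, since \BE-BE~fat would create an internal red face, contradicting Item~\ref{R:bm-external-BP-types}; each type \BP-BP virtual edge $e_i$ has a single admissible type among \BPi-BP1 and \BPii-BP2, selected by Observation~\ref{obs:bp1-black-path} depending on whether $e_i$ is incident to an outer face of $\cal S_\mu$ or not (together with Item~\ref{R:bm-external-external-face}); the flips $x_i$ of the edges incident to an outer face of $\cal S_\mu$ are forced by Item~\ref{R:bm-external-external-face}, while the remaining flips are free and do not affect the type; and the designated face of any $rb$-trivial component incident to a vertex of $\skel(\mu)$ must be an outer face, by Item~\ref{R:bm-external-internal-rb-trivial}. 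Hence $R(\mathcal C_\mu)$ is determined up to a global flip, which does not change its type. Invoking Lemma~\ref{le:compute-type} then yields the admissible type of $\mu$, while the converse direction of~(i) follows directly from Lemma~\ref{le:relevant-in-configuration}.

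Finally, for the running time in~(iii): computing the intrinsically red faces of $\cal S_\mu$, checking the five preliminary conditions, and constructing $\mathcal C_\mu$ all amount to a single traversal of $\skel(\mu)$ performing $O(1)$ work per vertex and per virtual edge (using the previously computed admissible-type sets of the children of $\mu$, which by Property~\ref{pr:admissible-constant} have constant size), for a total of $O(|\skel(\mu)|)$ time. The main obstacle is largely bookkeeping: ensuring that the flip assignments $x_i$ chosen edge-by-edge are globally consistent with the two outer faces of $\cal S_\mu$, and that the ``tests pass'' configuration $\mathcal C_\mu$ is always well-defined; the genuine combinatorial work has already been absorbed into Lemma~\ref{R:bm-external}.
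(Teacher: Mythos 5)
Your proposal is correct and follows essentially the same route as the paper: the five preliminary tests derived from \cref{R:bm-external} either certify that no relevant embedding (hence no admissible type) exists, or force every entry of the single embedding configuration — types \RE-RE and \BE-BE slim, the unique admissible type among \BPi-BP1/\BPii-BP2 for each \BP-BP edge, the flips of the outer-face-incident edges, and the outer designated faces of the $rb$-trivial components — after which \cref{le:compute-type,le:relevant-in-configuration} yield both directions of (i) together with the $O(|\skel(\mu)|)$ bound. The only slip is the parenthetical suggesting that the \BPi-BP1 versus \BPii-BP2 choice is governed by incidence to an outer face of $\mathcal S_\mu$ (an outer-face-incident \BP-BP edge may still be a black path and hence of type \BPi-BP1), but since you also state the correct criterion — the unique admissible type among the two, via \cref{obs:bp1-black-path} — this does not affect the argument.
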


\smallskip

\noindent\fbox{\bf Case 2: $b_m$ is a vertex of $\skel(\mu)$.} We start with a structural lemma.

\begin{lemma}\label{R:bm-pole}
	Let $\mu$ be an R-node. Suppose that $b_m$ is a vertex of $\skel(\mu)$ and that $\pert{\mu}$ admits a relevant embedding $\cal E_\mu$. For a virtual edge $e$ of $\skel(\mu)$, let $\mathcal E_e$ be the embedding of $\pert{e}$ determined by $\cal E_\mu$.
	Then the following conditions are satisfied:
	\begin{enumerate}
		\item \label{R:bm-pole-BP} the type of $\mu$ is \BP-BP if $b_m = v_{\mu}$ is a pole of $\mu$, and is either \RF-RF, or \BF-BF, or \BB-BB, otherwise;
		\item \label{R:bm-pole-possible-virtual-edges}
		the virtual edges of $\skel(\mu)$ are of type \RE-RE, \BE-BE slim, \BE-BE fat, or \BP-BP; 
		\item \label{R:bm-pole-BP-virtual-edges} \label{R:bm-pole-incident-bp} there exists at most one type \BP-BP virtual edge $e$ of $\skel(\mu)$ such that the type of $\mathcal E_e$ is \BPiv-BP4 or \BPv-BP5; further, there exists a single virtual edge $e_m$ of type \BP-BP incident to $b_m$; finally, if a virtual edge $e$ exists such that the type of $\mathcal E_e$ is \BPiv-BP4 or \BPv-BP5, then $e=e_m$.
	\end{enumerate}
\end{lemma}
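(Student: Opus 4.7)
\medskip
\noindent\textit{Proof plan.} The plan is to establish the three conditions in sequence, each time building on the previous one and on the structural results in \cref{le:structure-node-RE,le:structure-node-BE,le:type-C-path,le:structure-A-mu-BP-BB}. The central idea is that the location of $b_m$ in $\skel(\mu)$ severely constrains both the type of $\mu$ and the possible types of its children, because $b_m$ is the unique end-vertex of $P$ that could live inside $\pert{\mu}$.

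For \cref{R:bm-pole-BP}, I will first invoke \cref{le:structure-node-RE,le:structure-node-BE} to rule out the types \RE-RE and \BE-BE, since $\mu$ is an R-node. If $b_m = v_\mu$, then $v_\mu$ is black (ruling out \RE-RE and \RF-RF) and, since no black vertex of $P$ lies beyond~$b_m$, the subpath $P_v$ consists only of $b_m$ itself and cannot belong to $\pert{\mu}$ with any edge, ruling out \BF-BF and \BB-BB; the only remaining option is \BP-BP. If instead $b_m$ is a non-pole vertex of $\skel(\mu)$, then $b_m$ is a non-pole black vertex of $\pert{\mu}$; this is possible only in the types \RF-RF, \BF-BF, and \BB-BB, because in \BP-BP a non-pole black vertex in $\pert{\mu}$ would force $P$ to extend beyond $b_m$ to reach $v_\mu$, a contradiction.

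For \cref{R:bm-pole-possible-virtual-edges}, the key observation is that the types \RF-RF, \BF-BF, and \BB-BB all have $b_m$ as a non-pole vertex of the pertinent graph by definition. However, non-pole vertices of a virtual edge's pertinent graph never appear in $\skel(\mu)$. Since $b_m \in \skel(\mu)$ by assumption, no virtual edge of $\skel(\mu)$ can have any of these types, leaving only \RE-RE, \BE-BE (slim or fat), and \BP-BP.

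For \cref{R:bm-pole-BP-virtual-edges}, I will first apply \cref{le:type-C-path}, combined with the fact that Condition~\ref{R:bm-pole-possible-virtual-edges} excludes \BB-BB virtual edges, to conclude that the \BP-BP virtual edges form a simple path in $\skel(\mu)$ that covers all black vertices of $\skel(\mu)$ with at most one exception at an opposite pole. Since $b_m$ has only one black neighbor in $P$, the degree of $b_m$ in this path is at most one; combining this with the fact that $b_m$ actually lies on the path (in the non-pole case by \cref{le:type-C-path}, and in the case $b_m = v_\mu$ because the type of $\mu$ is \BP-BP and hence $P_{uv} \subseteq \pert{\mu}$ forces the edge $(b_{m-1},b_m)$ into the pertinent graph of a \BP-BP child) yields a unique such edge $e_m$ incident to $b_m$. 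To finish, I will take any \BP-BP virtual edge $e$ whose embedding is of type \BPiv-BP4 or \BPv-BP5; each of these types contains an internal red face, so $A(\mathcal E_e)$ contains a vertex corresponding to an internal face of $\mathcal E_e$. Applying item~\ref{le:structure-A-mu-BP-BB-item-bm-in-pertinent} of \cref{le:structure-A-mu-BP-BB} to $e$ in place of $\mu$, we get $b_m \in \pert{e}$; since $b_m \in \skel(\mu)$, vertex $b_m$ must be a pole of $e$, so $e$ is incident to $b_m$ and hence $e = e_m$ by uniqueness.

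The most delicate step will be the case analysis for the existence and uniqueness of $e_m$ when $b_m = v_\mu$, since \cref{le:type-C-path} explicitly allows the opposite pole to escape the path of \BP-BP/\BB-BB virtual edges; the argument must use that the \BP-BP classification of $\mu$ (established in Condition~\ref{R:bm-pole-BP}) pulls the final edge $(b_{m-1},b_m)$ of $P$ into the pertinent graph of some child, which must then be of type \BP-BP by Condition~\ref{R:bm-pole-possible-virtual-edges} and by the fact that it contains a black edge of $P$.
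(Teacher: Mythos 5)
Your proof is correct and follows essentially the same route as the paper: the same exclusion arguments give Conditions~1 and~2, and Condition~3 hinges on the identical key step, namely applying Item~\ref{le:structure-A-mu-BP-BB-item-bm-in-pertinent} of \cref{le:structure-A-mu-BP-BB} to a child of type \BPiv-BP4 or \BPv-BP5 to force $b_m$ to be one of its poles. The only (harmless) difference is that you derive the existence and uniqueness of $e_m$ somewhat more elaborately via \cref{le:type-C-path} and a case split on whether $b_m=v_\mu$, where the paper simply observes that $b_m$ has exactly one black neighbor in $\pert{\mu}$.
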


\begin{proof}
	If $b_m$ is a pole of $\mu$, then
	the proof for \cref{R:bm-pole-BP} is the same as the one for \cref{R:bm-external-BP} of \cref{R:bm-external}.
	Otherwise, the type of $\mu$ is either \RF-RF, or \BF-BF, or \BB-BB, since $b_m$ is a non-pole vertex of $\pert{\mu}$.

	Also, \cref{R:bm-pole-possible-virtual-edges} holds, since the type of each virtual edge $e$ of $\skel(\mu)$ is neither \RF-RF, nor \BB-BB, nor \BF-BF, as $b_m$ belongs to $\skel(\mu)$.

	Finally, we prove that \cref{R:bm-pole-BP-virtual-edges} hold. First, $b_m$ is incident to exactly one virtual edge $e_m$ of type \BP-BP, since it has exactly one black neighbor in $\pert{\mu}$, given that the type of $\mu$ is not~\BE-BE by~\cref{R:bm-pole-BP}. Suppose that a type \BP-BP virtual edge $e$ of $\skel(\mu)$ such that the type of $\mathcal E_e$ is \BPiv-BP4 or \BPv-BP5 exists, as otherwise there is nothing else to prove. By Item~\ref{le:structure-A-mu-BP-BB-item-bm-in-pertinent} of \cref{le:structure-A-mu-BP-BB}, one of the poles of $e$ is $b_m$. Hence, $e=e_m$ and there exists no virtual edge $e' \neq e$ of $\skel(\mu)$ such that the type of $\mathcal E_{e'}$ is \BPiv-BP4 or \BPv-BP5. This concludes the proof.
\end{proof}



Differently from to the case in which $b_m \notin \pert{\mu}$, we do not perform any preliminary test. In fact, all the conditions of \cref{R:bm-pole} only descend from the structure of $\pert{\mu}$ and not from its embedding; therefore, they are guaranteed to hold.

In the following, we exploit the conditions of \cref{le:non-intrinsically-where,lem:R-external-reddened,R:bm-pole} to compute a constant-size set $\cal V_\mu$ of embedding configurations for $\mu$. 
In particular, by \cref{le:non-intrinsically-where}, there exists at most one internal non-intrinsically red face of $\mathcal S_\mu$ that may be reddened in a relevant embedding of $\pert{\mu}$ and such a face has to be incident to $b_m$. Moreover, such a face must be one of the two faces $f_m$ and $g_m$ of $\mathcal S_{\mu}$ incident to $e_m$. Namely, consider any face $f\notin \{f_m,g_m\}$ of $\mathcal S_{\mu}$ that is incident to $b_m$; the two virtual edges that lie along the boundary of $f$ and that are incident to $b_m$ are of type~\RE-RE or \BE-BE, by~\cref{R:bm-pole-possible-virtual-edges,R:bm-pole-BP-virtual-edges} of \cref{R:bm-pole}, hence $f$ is intrinsically red. Furthermore, by \cref{lem:R-external-reddened}, we have that at most one of the outer faces of $\mathcal S_{\mu}$, say $r(\mathcal S_{\mu})$, is non-intrinsically red and is allowed to be reddened in a relevant embedding of $\pert{\mu}$.


We will exploit the fact that, by \cref{rm:reddened}, whether a non-intrinsically red face in the set $\{f_m,g_m,r(\mathcal S_\mu)\}$
is reddened in some relevant embedding $\mathcal E_\mu$ of $\pert{\mu}$ only depends on the embedding configuration for $\mu$ determined by  
$\mathcal E_\mu$.  
%
Hence, our strategy to compute the embedding configurations to be added to $\cal V_\mu$ is to consider the six pairs $(x,y)$, with $x \in \{\emptyset, f_m, g_m \}$ and $y \in \{\emptyset, r(\mathcal S_\mu)\}$. For each pair $(x,y)$, we will exploit \cref{le:iff-reddened} to construct $O(1)$ embedding configurations each of which is such that, in the corresponding realization, the reddened faces of $\mathcal S_\mu$ are exactly $x$ and $y$ (where $x =\emptyset$ means that none of $f_m$ and $g_m$ is reddened, and $y=\emptyset$ means that $r(\mathcal S_\mu)$ is not reddened). In this case, we say that the pair is \emph{satisfied} by the embedding configuration.
Recall that \BPi-BP1 belongs to $\mathcal T_0$, that \BPii-BP2 and \BPiv-BP4 belong to~$\mathcal T_1$, and that \BPiii-BP3 and \BPv-BP5 belong to $\mathcal T_2$.

Note that some pairs may not be satisfied by any embedding configuration. For example, this is the case for the pair $(f_m,\emptyset)$, when $f_m$ coincides with $r(\mathcal S_\mu)$. In fact, $x = f_m$ requires $f_m = r(\mathcal S_\mu)$ to be reddened, while $y = \emptyset$ requires $f_m=r(\mathcal S_\mu)$ not to be reddened. Another example is a pair $(f_m, \cdot)$, when $f_m$ is non-intrinsically red, all the type \BP-BP virtual edges incident to it are of type \BPi-BP1, and no vertex incident to $f_m$ is also incident to an $rb$-trivial component, by \cref{le:iff-reddened}. On the other hand, some pairs may be equivalent in the restrictions they impose, for example the pairs $(\emptyset, \cdot)$ and $(f_m, \cdot)$, when $f_m$ is intrinsically red.  


Let $(x,y)$ be one of the six pairs. We now describe in detail how to construct a set of $O(1)$ embedding configurations such that $(x,y)$ is satisfied by each of such configurations. In particular, these embedding configurations will differ only in the type $t_m$ and the flip $x_m$ chosen for the virtual edge $e_m$. We discuss the case in which 
\begin{inparaenum}[\bf (i)]
	\item $x, y \neq \emptyset$, 
	\item $x \neq y$, and 
	\item both $x$ and $y$ are non-intrinsically red, that is, when both $x$ and $y$ need to be reddened;  
\end{inparaenum}
the other cases are analogous (in fact, simpler).
The construction consists of three phases.
\smallskip

\paragraph{Satisfying the pair (x,y).}
We first compute the set $S_x$ (resp. $S_y$) that contains all the type \BP-BP virtual edges incident to $x$ (resp. to $y$) that are not of type \BPi-BP1, and all the black vertices of $\skel(\mu)$ incident to $x$ (resp. to $y$) that are incident to an $rb$-trivial component. By \cref{le:iff-reddened}, the constructed sets contain exactly those elements that can be used to ensure that $(x,y)$ is satisfied by the embedding configuration.

If $S_x$ (resp. $S_y$) is empty, then we abort the construction of the embedding configurations for the pair $(x,y)$, since it is not possible to ensure that $x$ (resp. $y$) is reddened; in this case, $(x,y)$ cannot be satisfied.

Suppose next that $S_x = S_y$ and $|S_x| = |S_y| = 1$. If the only element in $S_x = S_y$ is a black vertex or a virtual edge  that does not have \BPiii-BP3 or \BPv-BP5 among its admissible types, then we abort the construction of the embedding configurations for the pair $(x,y)$, since it is not possible to ensure that both $x$ and $y$ are reddened, by \cref{le:iff-reddened}; also in this case, $(x,y)$ cannot be satisfied. 
Otherwise, $S_x$ contains exactly one virtual edge $e_i$, which has \BPiii-BP3 or \BPv-BP5 among its admissible types. If $e_i \neq e_m$, then 
\BPiii-BP3 is admissible for $e_i$,
by \cref{R:bm-pole-incident-bp} of \cref{R:bm-pole}.
We set $t_i=$~BP3 and we arbitrarily set $x_i$ to either $\ell$ or~$r$; whereas, if $e_i=e_m$, then we make up to four independent choices by setting $t_i$ as each of the admissible types for $e_i$ among \BPiii-BP3 and \BPv-BP5, and by setting $x_i$ as each of $\ell$ and $r$. This ensures that both $x$ and $y$ are reddened, by \cref{cond:iff-reddened-virtual-edges-T2} of \cref{le:iff-reddened}. Thus, $(x,y)$ is satisfied by any embedding configuration that complies with one of these choices.

In all the other cases (that is, if $|S_x| \geq 1$, $|S_y| \geq 1$, and it holds that $|S_x| \geq 2$, or $|S_y| \geq 2$, or $S_x \neq S_y$), we can find two distinct elements, one in $S_x$ and one in $S_y$. We use these elements to ensure that both $x$ and $y$ are reddened. Namely, consider the element that belongs to $S_x$, the discussion for the one belonging to $S_y$ being analogous. If this element is a vertex $w$, then we select $x$ to be the designated face $f_w$ of $w$, by \cref{cond:iff-reddened-rb-trivial}
 of \cref{le:iff-reddened}. Otherwise, this element is a virtual edge $e_i$, and we proceed as follows.

\begin{itemize}
	\item Suppose that $e_i \neq e_m$.
	If \BPii-BP2 is admissible for $e_i$, then we set $t_i=$~BP2 and we select $x_i$ so that the red outer face of the $t_i$-gadget will correspond to $x$ in the realization of an embedding configuration, by \cref{cond:iff-reddened-virtual-edges-T1} of \cref{le:iff-reddened}.
	Otherwise, \BPiii-BP3 is admissible for $e_i$, by \cref{R:bm-pole-BP-virtual-edges} of \cref{R:bm-pole}. If the face $z$ of $\mathcal S_\mu$ incident to $e_i$ different from $x$ is either intrinsically red or coincides with $y$, then we set $t_i=$~BP3 and we arbitrarily set $x_i$ to either $\ell$ or $r$. Else, we abort the construction of the embedding configurations for the pair $(x,y)$, since it is not possible to ensure that $x$ is reddened, while avoiding that $z$ is reddened, by \cref{cond:iff-reddened-virtual-edges-T2} of \cref{le:iff-reddened}.
	\item Suppose now that $e_i = e_m$. Let  $z \in \{f_m,g_m\}$ be the face of $\mathcal S_\mu$ incident to $e_i$ different from~$x$.
	If $z$ is intrinsically red or $z = y$, then we make up to six independent choices by selecting $t_i$ as each of the admissible types for $e_i$ (recall that the type of $e_i$ is not \BPi-BP1, since $e_i \in S_x$) and by setting $x_i$ to each of the flips such that a red outer face of the $t_i$-gadget will correspond to $x$ in the realization of an embedding configuration  (observe that, if $t_i \in \{\textrm{BP2, BP4}\} \subseteq \mathcal T_1$, then we select only one flip for $x_i$, while if $t_i \in \{\textrm{BP3, BP5}\} \subseteq \mathcal T_2$, then we select both possible flips for $x_i$).
	If $z$ is non-intrinsically red and $z \neq y$, if the set of admissible types for $e_i$ contains neither \BPii-BP2 nor \BPiv-BP4, we abort the construction of the embedding configurations for the pair $(x,y)$, since it is not possible to ensure that $x$ is reddened, while avoiding that $z$ is reddened, by \cref{cond:iff-reddened-virtual-edges-T2} of \cref{le:iff-reddened}. Otherwise, the set of admissible types for $e_i$ contains \BPii-BP2, \BPiv-BP4, or both, and we make up to two independent choices by setting $t_i$ as each of the admissible types for $e_i$ among \BPii-BP2 and \BPiv-BP4, and by setting $x_i$ so that the red outer face of the $t_i$-gadget will correspond to~$x$ in the realization of an embedding configuration, by \cref{cond:iff-reddened-virtual-edges-T1} of \cref{le:iff-reddened}.
\end{itemize}

If the above procedure has not aborted, the above ``partial'' embedding configurations satisfy the pair $(x,y)$. Note that, so far, we only set the types and the flips for at most two virtual edges, one in order to redden $x$ and one in order to redden $y$; moreover, multiple independent choices have been done only if one of these virtual edges coincides with $e_m$. To complete the construction of the embedding configurations for $\mu$, we have to select a type and a flip for the remaining virtual edges in each of the partial embedding configurations constructed above. We first deal with edges of type~\BP-BP.

\smallskip

\paragraph{Handling the remaining type \BP-BP virtual edges.}
Let $e_i$ be any virtual edge of $\skel(\mu)$ of type \BP-BP that is not of type \BPi-BP1 and that has not been processed so far to satisfy the pair $(x,y)$. 
Let $f_i$ and $g_i$ be the two faces of $\mathcal S_\mu$ that are incident to $e_i$. 

If both $f_i$ and $g_i$ are non-intrinsically red faces different from $x$ and $y$, then we abort the construction of the embedding configurations for the pair $(x,y)$, since the type of  $e_i$ is not \BPi-BP1, hence at least one of $f_i$ and $g_i$ is reddened because of the red vertices of $H_{e_i}$, by \cref{cond:iff-reddened-virtual-edges-T2} of \cref{le:iff-reddened}.

If exactly one of $f_i$ and $g_i$, say $f_i$, is a non-intrinsically red face different from $x$ and $y$, then we distinguish the case in which $e_i \neq e_m$ from the one in which $e_i = e_m$. 

\begin{itemize}
	\item If $e_i \neq e_m$, we check whether the type \BPii-BP2 is admissible for $e_i$. If not, we abort the construction of the embedding configurations for the pair $(x,y)$, since $f_i$ is reddened because of the red vertices of $H_{e_i}$, by \cref{cond:iff-reddened-virtual-edges-T2} of \cref{le:iff-reddened}; otherwise, we set $t_i=$~BP2 and we set $x_i$ so that the red outer face of the $t_i$-gadget will correspond to~$g_i$ in the realization of the embedding configuration, by \cref{cond:iff-reddened-virtual-edges-T1} of \cref{le:iff-reddened}.
	\item If $e_i = e_m$, we check whether the types \BPii-BP2 or \BPiv-BP4 are admissible for $e_i$. If not, we abort the construction of the embedding configurations for the pair $(x,y)$, since $f_i$ is reddened because of the red vertices of $H_{e_i}$, by \cref{cond:iff-reddened-virtual-edges-T2} of \cref{le:iff-reddened}; otherwise, we make up to two independent choices by setting $t_i$ as each of the admissible types for $e_i$ among \BPii-BP2 and \BPiv-BP4, and by setting $x_i$ so that the red outer face of the $t_i$-gadget will correspond to~$g_i$ in the realization of an embedding configuration, by \cref{cond:iff-reddened-virtual-edges-T1} of \cref{le:iff-reddened}.  
\end{itemize}

If none of $f_i$ and $g_i$ is a non-intrinsically red face different from $x$ and $y$ (that is, $f_i$ and $g_i$ will correspond to red faces in the realization of the embedding configuration, independently of the choice of $t_i$ and $x_i$), then we distinguish two cases.
If $e_i \neq e_m$, we arbitrarily set $t_i$ to any of the admissible types for $e_i$ (these are \BPii-BP2 and/or \BPiii-BP3, by \cref{R:bm-pole-incident-bp} of \cref{R:bm-pole}) and we arbitrarily set $x_i$ to either $\ell$ or $r$. 
If $e_i = e_m$, we make up to eight independent choices by setting $t_i$ to each of the admissible types for $e_i$ and by setting $x_i$ to each of $\ell$ and $r$. 

If the above procedure has not aborted, all the above ``partial'' embedding configurations satisfy the pair $(x,y)$ and avoid that any face of $\skel(\mu)$ different from $x$ and $y$ is reddened. It remains to deal with the virtual edges that are not of type \BP-BP.
\smallskip

\paragraph{Handling the remaining virtual edges.}
Finally, for each virtual edge $e_i$ of $\skel(\mu)$ that is of type \RE-RE, \BE-BE slim, \BE-BE fat, or \BPi-BP1, 
in each of the so-far constructed partial embedding configurations for $\mu$,
we set $t_i$ to \RE-RE, \BE-BE slim, \BE-BE fat, or \BPi-BP1, respectively, and we arbitrarily set $x_i$ to either $\ell$ or~$r$. 
First note that all the types \RE-RE, \BE-BE slim, \BE-BE fat, or \BPi-BP1 belong to $\mathcal T_0$, and thus they do not contribute to redden any face of $\mathcal S_\mu$, by \cref{cond:iff-reddened-virtual-edges} of \cref{le:iff-reddened}.
Furthermore, by \cref{R:bm-pole-possible-virtual-edges} of \cref{R:bm-pole}, we have exhaustively considered all the virtual edges of $\skel(\mu)$. This concludes the construction of the (up to eight) embedding configurations for the pair $(x,y)$. 

We summarize the above discussion in the following.

\begin{lemma}\label{lem:emb-configurations-bm-pole}
	Suppose that $\mu$ is an R-node such that $b_m$ is a vertex of $\skel(\mu)$. There exists a set $\mathcal V_\mu$ of embedding configurations for $\mu$ such that:
	\begin{enumerate}[\bf (i)]
		\item a type $t$ is admissible for $\mu$ if and only if there exists an embedding configuration $\mathcal C_\mu \in \mathcal V_\mu$ such that the embedding of $R(\mathcal C_\mu)$ is of type $t$;
		\item $|\mathcal V_\mu|=O(1)$; and
		\item $\mathcal V_\mu$ can be constructed in $O(|\skel(\mu)|)$ time.
	\end{enumerate}
\end{lemma}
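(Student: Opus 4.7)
The plan is to show that, after the structural lemmas already established, the only freedom in reddening non-intrinsically red faces of $\mathcal S_\mu$ across all relevant embeddings of $\pert{\mu}$ is a constant-size combinatorial choice, and that enumerating it yields $\mathcal V_\mu$. Specifically, \cref{le:non-intrinsically-where} confines the internal reddened face (if any) to $\{f_m,g_m\}$, the two faces of $\mathcal S_\mu$ incident to the unique type-\BP-BP virtual edge $e_m$ incident to $b_m$ whose existence is given by \cref{R:bm-pole-BP-virtual-edges} of \cref{R:bm-pole}; and \cref{lem:R-external-reddened} confines the reddened outer face to a single one, which, after possibly exchanging the labels $\ell(\mathcal S_\mu)$ and $r(\mathcal S_\mu)$, I take to be $r(\mathcal S_\mu)$. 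Together with \cref{rm:reddened}, this reduces the problem to iterating over the six pairs $(x,y)\in\{\emptyset,f_m,g_m\}\times\{\emptyset,r(\mathcal S_\mu)\}$ and producing, for each, a constant number of embedding configurations whose realization reddens exactly $x$ and $y$ (in addition to the intrinsically red faces).

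For each pair $(x,y)$ I will run the three-phase procedure described just before the lemma. The \emph{first phase} uses \cref{le:iff-reddened} to force reddening of $x$ and $y$: I identify the sets of elements (virtual edges of appropriate type, or $rb$-trivial components) that, by \cref{cond:iff-reddened-rb-trivial,cond:iff-reddened-virtual-edges} of \cref{le:iff-reddened}, can trigger each reddening, then select distinct triggers and set their types/flips accordingly; branching occurs only when $e_m$ is used, and even then is bounded by $O(1)$ via \cref{pr:admissible-constant}. The \emph{second phase} handles the remaining type-\BP-BP virtual edges: by \cref{R:bm-pole-BP-virtual-edges} of \cref{R:bm-pole} none of them except possibly $e_m$ can be of type \BPiv-BP4 or \BPv-BP5, so a per-edge choice (informed by whether its incident faces are intrinsically red or have been designated as $x$ or $y$) suffices to avoid reddening any unintended face, once again with branching only at $e_m$. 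The \emph{third phase} assigns the forced types and flips to the remaining virtual edges, whose types (by \cref{R:bm-pole-possible-virtual-edges} of \cref{R:bm-pole}) lie in $\{\RE-RE,\BE-BE\text{ slim},\BE-BE\text{ fat},\BPi-BP1\}\subseteq\mathcal T_0$ and therefore, by \cref{cond:iff-reddened-virtual-edges} of \cref{le:iff-reddened}, contribute no extra reddening.

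The count \textbf{(ii)} follows because all independent branching is concentrated in the at most $O(1)$ type/flip choices for $e_m$, and the six pairs contribute a constant number of configurations; the time bound \textbf{(iii)} follows because the construction performs $O(1)$ work per vertex, virtual edge and face of $\skel(\mu)$. For \textbf{(i)}, the "if" direction is \cref{le:relevant-in-configuration}. For the "only if" direction, given any admissible type $t$ realized by a type-$t$ embedding $\mathcal E_\mu$, the structural lemmas place the set of reddened non-intrinsically red faces of $\mathcal S_\mu$ into one of the six pairs $(x,y)$, and the type/flip of the embedding of $\pert{e_m}$ in $\mathcal E_\mu$ matches one of the branches the procedure produces for that pair. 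Feeding these values into the third phase results in an embedding configuration $\mathcal C_\mu\in\mathcal V_\mu$; an iterated application of \cref{le:child-replacement} along the virtual edges of $\skel(\mu)$ (replacing each $\pert{e_i}$ by the corresponding $t_i$-gadget) then shows that $R(\mathcal C_\mu)$ has the same type as $\mathcal E_\mu$, namely $t$.

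The main obstacle I anticipate is the completeness argument for (i), more precisely the bookkeeping needed to guarantee that the three-phase procedure, which makes only \emph{local} per-edge decisions based on $(x,y)$, actually produces a configuration whose \emph{global} realization type coincides with the type of the given $\mathcal E_\mu$. The key tool to overcome this is \cref{le:child-replacement}, which asserts invariance of the type under replacement of a child's pertinent graph by any type-and-flip-equivalent replacement-graph; the subtle point is to verify that, whenever the procedure branches to avoid accidental reddening, the chosen type for a non-$e_m$ virtual edge is indeed an admissible type whose flip can be chosen to match that of $\mathcal E_\mu$ on the relevant outer face, and for this I will rely crucially on the restricted taxonomy of admissible types from \cref{R:bm-pole-possible-virtual-edges,R:bm-pole-BP-virtual-edges} of \cref{R:bm-pole}.
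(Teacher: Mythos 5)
Your proposal follows essentially the same route as the paper: it uses the structural restrictions of \cref{R:bm-pole} together with \cref{le:non-intrinsically-where} and \cref{lem:R-external-reddened} to confine the reddenable faces to $\{f_m,g_m\}$ and one outer face, then enumerates the six pairs $(x,y)$ and runs the same three-phase construction (satisfy the pair via \cref{le:iff-reddened}, handle the remaining type-\BP~edges, assign forced types to the rest), with branching confined to $e_m$ and completeness of (i) argued through \cref{rm:reddened}, \cref{le:relevant-in-configuration}, and \cref{le:child-replacement}. This matches the paper's argument in both structure and the supporting lemmas, so no further comparison is needed.
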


\noindent\fbox{\bf Case 3: $b_m$ belongs to $\pert{\mu}$ but not to $\skel(\mu)$.} 
In this case, $b_m$ is a non-pole vertex of a virtual edge $e_m$ of $\skel(\mu)$. 
We start with a structural lemma.

\smallskip

\begin{lemma}\label{R:bm-internal}
	Let $\mu$ be an R-node. Suppose that $b_m$ is a non-pole vertex of a virtual edge $e_m$ of $\skel(\mu)$ and that $\pert{\mu}$ admits a relevant embedding $\cal E_\mu$. For a virtual edge $e$ of $\skel(\mu)$, let $\mathcal E_e$ be the embedding of $\pert{e}$ determined by $\cal E_\mu$.
	Then the following conditions are satisfied:
	\begin{enumerate}
		\item \label{R:bm-internal-BP} the type of $\mu$ is either \RF-RF, or \BF-BF, or \BB-BB;
	\item  \label{R:bm-internal-BP-virtual-edges-bm-not-in-skeleton} the type of $e_m$ is either \RF-RF, or \BF-BF, or \BB-BB;
	\item \label{R:bm-internal-possible-virtual-edges-bm-not-in-skeleton}
	the virtual edges of $\skel(\mu)$ different from $e_m$ are of type \RE-RE, \BE-BE slim, \BE-BE fat, or \BP-BP; and
	\item \label{R:bm-internal-em-types} there exists no type \BP-BP virtual edge $e$ of $\skel(\mu)$ such that the type of $\mathcal E_e$ is \BPiv-BP4 or \BPv-BP5.
	\end{enumerate}
\end{lemma}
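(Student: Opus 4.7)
The plan is to prove each of the four conditions of Lemma \ref{R:bm-internal} by reasoning about where $b_m$ must lie and by invoking the structural lemmas on types of nodes and of embeddings already established in Sections \ref{sse:node-classification}--\ref{sse:embedding-classification}. The arguments closely mirror those given for Lemmas \ref{R:bm-external} and \ref{R:bm-pole}; the main source of new information is that here $b_m$ is an internal vertex of the pertinent graph $\pert{e_m}$ of a single virtual edge $e_m$ of $\skel(\mu)$.

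First I would prove Condition~\ref{R:bm-internal-BP}. Since $\mu$ is an R-node, by Lemmas \ref{le:structure-node-RE} and \ref{le:structure-node-BE} its type is neither \RE-RE nor \BE-BE. Moreover, $b_m \in \pert{\mu}$ and $b_m$ is not a pole of $\mu$ (since $b_m$ does not belong to $\skel(\mu)$), so the type of $\mu$ cannot be \BP-BP (as the type \BP-BP would require $b_m$ to be a pole). Hence the type of $\mu$ is \RF-RF, \BF-BF, or \BB-BB.

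Next, for Condition~\ref{R:bm-internal-BP-virtual-edges-bm-not-in-skeleton}, I observe that $b_m$ is a non-pole vertex of $\pert{e_m}$. By the same case analysis as above, applied to the virtual edge $e_m$ rather than to $\mu$, its type must be one among \RF-RF, \BF-BF, and \BB-BB. Dually, for Condition~\ref{R:bm-internal-possible-virtual-edges-bm-not-in-skeleton}, any other virtual edge $e \neq e_m$ of $\skel(\mu)$ has a pertinent graph that does not contain $b_m$ (since $b_m$ lies only inside $\pert{e_m}$, being a non-pole vertex there). Hence the type of such $e$ cannot be \RF-RF, \BF-BF, or \BB-BB, as each of these requires $b_m$ to lie in $\pert{e}$. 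Thus $e$ must be of type \RE-RE, \BE-BE (slim or fat), or \BP-BP.

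Finally, for Condition~\ref{R:bm-internal-em-types}, suppose for a contradiction that there is a type \BP-BP virtual edge $e$ of $\skel(\mu)$ such that the type of $\mathcal E_e$ is \BPiv-BP4 or \BPv-BP5. By Item~\ref{le:structure-A-mu-BP-BB-item-bm-in-pertinent} of Lemma \ref{le:structure-A-mu-BP-BB}, this implies that $b_m$ belongs to $\pert{e}$; moreover, by the same item, $b_m$ would in fact coincide with a pole of $e$, i.e., a vertex of $\skel(\mu)$. This contradicts the hypothesis that $b_m$ is a non-pole vertex of $e_m$ (and therefore not a vertex of $\skel(\mu)$). The main obstacle, more bookkeeping than conceptual, is to make sure that Item~\ref{le:structure-A-mu-BP-BB-item-bm-in-pertinent} of Lemma \ref{le:structure-A-mu-BP-BB} forces $b_m$ to be a pole of $e$, and not merely to lie in its pertinent graph; this follows because $e$ itself is of type \BP-BP, so every non-pole black vertex of $\pert{e}$ already precedes $b_m$ along the black path $P$ internally to~$\pert{e}$, leaving $b_m$ forced to coincide with a pole of~$e$.
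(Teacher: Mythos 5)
Your proposal is correct and follows essentially the same route as the paper's (much terser) proof: Conditions~1 and~2 from $b_m$ being a non-pole black vertex of $\pert{\mu}$ and of $\pert{e_m}$, Condition~3 from $b_m\notin\pert{e}$ for every $e\neq e_m$, and Condition~4 from Item~\ref{le:structure-A-mu-BP-BB-item-bm-in-pertinent} of \cref{le:structure-A-mu-BP-BB}. Two cosmetic remarks: for $e_m$ the exclusion of the types RE and BE should be justified by these types containing no non-pole black vertex (the R-node argument you used for $\mu$ does not transfer to $e_m$), and the fact that $b_m$ must be the pole $v$ of a type-BP edge, which you re-derive at the end, is already stated explicitly in Item~\ref{le:structure-A-mu-BP-BB-item-bm-in-pertinent} of \cref{le:structure-A-mu-BP-BB}, so no extra bookkeeping is needed.
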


\begin{proof}
Items~\ref{R:bm-internal-BP} and~\ref{R:bm-internal-BP-virtual-edges-bm-not-in-skeleton} follow from the fact that $b_m$ is a non-pole vertex of $\pert{\mu}$ and $\pert{e_m}$, respectively.
Items~\ref{R:bm-internal-possible-virtual-edges-bm-not-in-skeleton} and~\ref{R:bm-internal-em-types} follow from the fact that, for any virtual edge $e \neq e_m$, we have that $b_m$ does not belong to $\pert{e}$ and from Item~\ref{le:structure-A-mu-BP-BB-item-bm-in-pertinent} of \cref{le:structure-A-mu-BP-BB}.
\end{proof}

The algorithm to compute a constant-size set $\cal V_\mu$ of embedding configurations for $\mu$ is similar to the one for the case in which $b_m$ is a vertex of $\skel(\mu)$ (Case~2). The only differences lie in the choices we perform for the type $t_m$ and the flip $x_m$ for the virtual edge $e_m$. In fact, by \cref{R:bm-pole-possible-virtual-edges} of \cref{R:bm-pole} and by \cref{R:bm-internal-possible-virtual-edges-bm-not-in-skeleton} of \cref{R:bm-internal}, all the virtual edges of $\skel(\mu)$ different from $e_m$ are of type \RE-RE, \BE-BE slim, \BE-BE fat, or \BP-BP, both in Case~2 and in Case~3. Furthermore, if any of such virtual edges is of type \BP-BP, then the set of its admissible types contains neither \BPiv-BP4 nor \BPv-BP5, both in Case~2 and in Case~3,
by \cref{R:bm-pole-incident-bp} of \cref{R:bm-pole} and
by \cref{R:bm-internal-em-types} of \cref{R:bm-internal}, respectively.
However, differently from Case 2, where the type of $e_m$ is \BP-BP, by \cref{R:bm-pole-incident-bp} of \cref{R:bm-pole}, we have that, in Case 3, the type of $e_m$ is either \RF-RF, or \BF-BF, or \BB-BB, by \cref{R:bm-internal-BP-virtual-edges-bm-not-in-skeleton} of \cref{R:bm-internal}.
On the other hand, in Case 3, we can perform operations on $e_m$ analogous to those performed in Case 2. 
In particular, whenever the type of $e_m$ in Case 2 was \BPi-BP1, we select the type $t_m$ for $e_m$ among all the possible admissible types for $e_m$ belonging to $\mathcal T_0$; whenever the type $t_m$ selected for $e_m$ in Case 2 was either \BPii-BP2 or \BPiv-BP4, we select $t_m$ among all the possible admissible types for $e_m$ belonging to $\mathcal T_1$; finally, whenever the type $t_m$ selected for $e_m$ in Case 2 was either \BPiii-BP3 or \BPv-BP5, we select $t_m$ among all the possible admissible types for $e_m$ belonging to $\mathcal T_2$. Next, we describe the construction of the embedding configurations in greater detail.

In particular, we do not perform any preliminary test, since all the conditions of \cref{R:bm-internal} only descend from the structure of $\pert{\mu}$. Also, for each of the six pairs $(x,y)$, we construct $O(1)$ embedding configurations that satisfy the pair $(x,y)$, as follows.
Again, we discuss only the case in which 
\begin{inparaenum}[\bf (i)]
	\item $x, y \neq \emptyset$, 
	\item $x \neq y$, and 
	\item both $x$ and $y$ are non-intrinsically red, that is, when both $x$ and $y$ need to be reddened.  
\end{inparaenum}
The construction consists of three phases.
\smallskip

\paragraph{Satisfying the pair (x,y).}
We first compute the set $S_x$ (resp. $S_y$) that contains all the virtual edges incident to to $x$ (resp. to $y$) whose set of admissible types contains at least one type not in $\mathcal T_0$, and all the black vertices of $\skel(\mu)$ incident to $x$ (resp. to $y$) that are incident to an $rb$-trivial component. 
Recall that, by \cref{le:iff-reddened}, the constructed sets contain exactly those elements that can be used to ensure that $(x,y)$ is satisfied by the embedding configuration. 
Thus, if $S_x$ (resp. $S_y$) is empty, then we abort the construction of the embedding configurations for the pair $(x,y)$.
Also, suppose that $S_x = S_y$ and $|S_x| = |S_y| = 1$. If the only element in $S_x = S_y$ is a black vertex or a virtual edge whose set of admissible types does not contain any type in $\mathcal T_2$, then we abort the construction of the embedding configurations for the pair $(x,y)$. Otherwise, $S_x$ contains exactly one virtual edge $e_i$. If $e_i \neq e_m$, then 
we proceed as in Case 2; whereas, if $e_i=e_m$, then we construct $O(1)$ embedding
configurations obtained by setting $t_i$ as each of the admissible types for $e_i$ belonging to $\mathcal T_2$, and by setting $x_i$ as each of $\ell$ and $r$. 
In all the other cases, we can find two distinct elements (each being either a virtual edge or a vertex) such that one element belongs to $S_x$ and the other belongs to $S_y$. We use these elements to ensure that both $x$ and $y$ are reddened. Namely, consider the element that belongs to $S_x$, the discussion for the one belonging to $S_y$ being analogous. If this element is a vertex $w$ or if it is a virtual edge $e_i \neq e_m$, then we proceed as in Case 2.
Suppose now that the selected element in $S_x$ is $e_m$. Let $z \in \{f_m,g_m\}$ be the face of $\mathcal S_\mu$ incident to $e_m$ different from~$x$.
If $z$ is intrinsically red or $z = y$, then we construct $O(1)$ embedding configurations obtained by selecting $t_m$ as each of the admissible types for $e_m$ that are not in $\mathcal T_0$ and by setting $x_m$ to each of the flips such that a red outer face of the $t_m$-gadget corresponds to $x$.
Else, $z$ is a non-intrinsically red face different from $y$. If there is no admissible types for $e_m$ belonging to $\mathcal T_1$, we abort the construction of the embedding configurations for the pair $(x,y)$. Otherwise, we construct $O(1)$ embedding configurations obtained by setting $t_m$ as each of the admissible types for $e_m$ belonging to $\mathcal T_1$, and by setting $x_m$ so that the red outer face of the $t_m$-gadget corresponds to~$x$.

If the above procedure has not aborted, the above ``partial'' embedding configurations satisfy the pair $(x,y)$. To complete the construction of the embedding configurations for $\mu$, we have to select a type and a flip for the remaining virtual edges. We proceed as follows. 

\smallskip

\paragraph{Handling the remaining \BP-BP virtual edges.}
This phase is identical to corresponding one in Case 2. Observe, however, that in Case 3, $e_m$ is not handles in this phase, since it is not of type \BP-BP, by \cref{R:bm-internal-BP-virtual-edges-bm-not-in-skeleton} of \cref{R:bm-internal}.

\paragraph{Handling $e_m$.}
Recall that $x \in \{f_m,g_m\}$. Suppose $x = g_m$, the case $x = f_m$ being symmetric. If $f_m$ is a non-intrinsically red face different from $y$, then we check whether the set of admissible types for $e_m$ contains a at least a type not in $\mathcal T_2$. If not, we abort the construction of the embedding configurations for the pair $(x,y)$.
Otherwise, we construct $O(1)$ embedding configurations obtained by setting $t_m$ as each of the admissible types for $e_m$ belonging to either $\mathcal T_0$ or $\mathcal T_1$, and by setting $x_m$ as each of the flips such that an outer face of the $t_m$-gadget that is not red corresponds to $f_m$.

If $f_m$ is intrinsically red or $f_m=y$, then we construct $O(1)$ embedding configurations obtained by setting $t_m$ to each of the admissible types for $e_m$ and by setting $x_m$ to each of $\ell$ and $r$. 

If the above procedure has not aborted in the last two phases, all the above ``partial'' embedding configurations satisfy the pair $(x,y)$ and avoid that any face of $\skel(\mu)$ different from $x$ and $y$ is reddened. 
\smallskip

\paragraph{Handling the remaining virtual edges.}
This phase is identical to corresponding one in Case 2, and concludes the construction of the $O(1)$ embedding configurations for the pair $(x,y)$.

We summarize the above discussion in the following.

\begin{lemma}\label{lem:emb-configurations-bm-internal}
	Suppose that $\mu$ is an R-node such that $b_m$ belongs to $\pert{\mu}$ but not to $\skel(\mu)$. There exists a set $\mathcal V_\mu$ of embedding configurations for $\mu$ such that:
	\begin{enumerate}[\bf (i)]
		\item a type $t$ is admissible for $\mu$ if and only if there exists an embedding configuration $\mathcal C_\mu \in \mathcal V_\mu$ such that the embedding of $R(\mathcal C_\mu)$ is of type $t$;
		\item $|\mathcal V_\mu|=O(1)$; and
		\item $\mathcal V_\mu$ can be constructed in $O(|\skel(\mu)|)$ time.
	\end{enumerate}
\end{lemma}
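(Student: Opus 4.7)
The plan is to mimic closely the argument used for Lemma~\ref{lem:emb-configurations-bm-pole} (Case~2), replacing the role of the virtual edge $e_m$ of type~\BP-BP incident to $b_m$ by the virtual edge $e_m$ of type \RF-RF, \BF-BF, or \BB-BB guaranteed by Property~\ref{R:bm-internal-BP-virtual-edges-bm-not-in-skeleton} of Lemma~\ref{R:bm-internal}. The bulk of the proof boils down to verifying the three properties~\textbf{(i)}--\textbf{(iii)} of the statement for the set $\mathcal V_\mu$ constructed as described just before the lemma.

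First I would verify property~\textbf{(iii)} (running time) and~\textbf{(ii)} (size). By Property~\ref{R:bm-internal-possible-virtual-edges-bm-not-in-skeleton} of Lemma~\ref{R:bm-internal}, every virtual edge of $\skel(\mu)$ other than $e_m$ has type \RE-RE, \BE-BE~slim, \BE-BE~fat, or \BP-BP; the corresponding phases (``Satisfying the pair $(x,y)$'', ``Handling the remaining \BP-BP virtual edges'', ``Handling $e_m$'', ``Handling the remaining virtual edges'') process each virtual edge in $O(1)$ time, using Property~\ref{pr:admissible-constant} to bound the number of admissible types for each child by $O(1)$. Since by Lemmas~\ref{le:non-intrinsically-where} and~\ref{lem:R-external-reddened} there are only six pairs $(x,y)$ to consider and, for each pair, only $O(1)$ branching choices (at most on $e_m$, on one virtual edge in $S_x$, and on one in $S_y$), the total number of produced embedding configurations is $O(1)$ and the total construction time is $O(|\skel(\mu)|)$. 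Realizations are then directly handled by Lemma~\ref{le:compute-type}.

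The real content is property~\textbf{(i)}. The easy direction, ``if'', follows from Lemma~\ref{le:relevant-in-configuration}: every realization $R(\mathcal C_\mu)$ for $\mathcal C_\mu\in\mathcal V_\mu$ is a replacement-graph for $\pert{\mu}$ (by Lemma~\ref{le:rc-replacement-graph}) whose embedding is relevant of some type $t$, and that type is admissible. For the ``only if'' direction, let $t$ be admissible and fix a relevant embedding $\mathcal E_\mu$ of $\pert{\mu}$ of type $t$. By Lemma~\ref{le:non-intrinsically-where}, at most one internal non-intrinsically red face of $\mathcal S_\mu$ is reddened in $\mathcal E_\mu$, and by Property~\ref{R:bm-internal-em-types} of Lemma~\ref{R:bm-internal} (which, together with the analysis of where $b_m$ lies, is the analogue of Lemma~\ref{R:bm-pole} used in Case~2) such a face is incident to $e_m$, i.e.\ it belongs to $\{f_m,g_m\}$; by Lemma~\ref{lem:R-external-reddened}, at most one outer non-intrinsically red face is reddened, namely the one labeled $r(\mathcal S_\mu)$ in the discussion before the lemma. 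Thus the set of reddened non-intrinsically red faces of $\mathcal S_\mu$ in $\mathcal E_\mu$ corresponds to exactly one of the six pairs $(x,y)$ that drive the construction. By the reddening conditions of Lemma~\ref{le:iff-reddened}, the reddening of each of $x$ and $y$ in $\mathcal E_\mu$ is witnessed either by an $rb$-trivial component incident to an endpoint of the face (a black vertex placed in $S_x$ or $S_y$) or by a virtual edge incident to the face whose embedding has type in $\mathcal T_1\cup\mathcal T_2$ (placed in $S_x$ or $S_y$); hence the ``Satisfying the pair $(x,y)$'' phase does not abort, and one of the choices enumerated there matches the witnesses used by $\mathcal E_\mu$, including the appropriate flip for the outer-red face to face the correct side. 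The ``remaining \BP-BP'' phase then cannot abort either: by Lemma~\ref{le:iff-reddened} a remaining \BP-BP virtual edge $e_i$ whose two incident faces are both non-intrinsically red and both distinct from $x$ and $y$ would force at least one of them to be reddened in $\mathcal E_\mu$, contradicting Lemmas~\ref{le:non-intrinsically-where} and~\ref{lem:R-external-reddened}; a similar check handles the single face case. Finally, the ``Handling $e_m$'' phase is the one genuinely new step compared with Case~2: here we must pick a type and a flip for $e_m$ consistent with the face $f_m$ (or $g_m$) opposite to $x$, and the enumeration over admissible types of $e_m$ belonging to $\mathcal T_0,\mathcal T_1,\mathcal T_2$ together with both flips exhausts the options allowed by the embedding of $\pert{e_m}$ inside $\mathcal E_\mu$. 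Applying Lemma~\ref{le:child-replacement} virtual edge by virtual edge, the type of the resulting realization is $t$, as desired.

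The main obstacle I anticipate is precisely ``Handling $e_m$'': because $e_m$ is now the (unique) heavy child of $\mu$ containing $b_m$, I must carefully argue that the additional embedding types available for $e_m$ (types \RF-RF, \BF-BF, \BB-BB instead of \BP-BP) still interact with a face $f_m$ or $g_m$ in exactly the same way with respect to reddening as the BP counterparts, so that the $\mathcal T_0$--$\mathcal T_1$--$\mathcal T_2$ stratification used in Case~2 transfers verbatim. This is where a detailed use of Definition~\ref{def:auxiliary-graph-E-mu} and Lemma~\ref{le:iff-reddened} (specifically, that an outer red face of $\pert{e_m}$ facing $f$ contributes to reddening $f$ precisely when the type is in $\mathcal T_1$ with the right flip, or in $\mathcal T_2$) is needed to close the argument cleanly.
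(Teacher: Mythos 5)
Your plan is correct and follows essentially the same route as the paper: the paper's own proof of this lemma is exactly the Case-3 construction you describe, namely the Case-2 procedure rerun with the six pairs $(x,y)$, the sets $S_x,S_y$, and the abort conditions of \cref{le:iff-reddened}, where the only new ingredient is choosing the type $t_m$ and flip $x_m$ of the virtual edge $e_m$ (now of type \RF-RF, \BF-BF, or \BB-BB by \cref{R:bm-internal}) from the admissible types stratified by $\mathcal T_0$, $\mathcal T_1$, $\mathcal T_2$, with correctness resting on \cref{le:non-intrinsically-where,lem:R-external-reddened,le:iff-reddened,le:child-replacement,le:relevant-in-configuration} just as you indicate. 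The only nitpick is that the localization of the reddened internal face to $\{f_m,g_m\}$ should be credited to the second statement of \cref{le:non-intrinsically-where} (the face is incident to the virtual edge whose pertinent graph contains $b_m$, which here is $e_m$) rather than to Item~\ref{R:bm-internal-em-types} of \cref{R:bm-internal}.
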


\subsection{Constructing a Neat Embedding}\label{sse:embeddingconstruction}	

In this section, we show how the embedding configurations stored in the non-root nodes of~$\cal T$ by the testing algorithm in \cref{sse:testingalgorithm} can be exploited to efficiently construct a neat embedding~of~$H$.

\begin{theorem}\label{th:neat-construction}
	There exists an $O(n)$-time algorithm that constructs a neat embedding of an $n$-vertex $rb$-augmented component, if one exists.  
\end{theorem}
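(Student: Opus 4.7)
The plan is to leverage the testing algorithm of \cref{algo:neat}, which in a bottom-up traversal of $\mathcal T$ has already computed, for every non-root node $\mu$, a set $\mathcal V_\mu$ of $O(1)$ embedding configurations with the property (guaranteed by \cref{le:relevant-in-configuration}) that $t$ is admissible for $\mu$ if and only if some $\mathcal C_\mu \in \mathcal V_\mu$ yields a realization $R(\mathcal C_\mu)$ of type $t$. I will first run this testing algorithm; if the set of admissible types for the child $\mu^*$ of $\rho$ is empty, then by \cref{le:empty-admissible} no neat embedding exists and the algorithm rejects. Otherwise, by \cref{le:some-admissible}, a neat embedding exists, and I will construct one in a second top-down traversal of $\mathcal T$.

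In the top-down phase, each non-root node $\mu$ will inherit from its parent a desired type $t_\mu$ and flip $x_\mu$, and I will scan $\mathcal V_\mu$ to pick some $\mathcal C_\mu$ whose realization has type $t_\mu$ and the correct flip, in $O(|\skel(\mu)|)$ time by invoking \cref{le:compute-type} on each of the constantly many configurations in $\mathcal V_\mu$. The chosen $\mathcal C_\mu$ dictates a type $t_i$ and flip $x_i$ for each virtual edge $e_i$ of $\skel(\mu)$, which I will pass to the corresponding child and recurse. Once all children of $\mu$ have returned actual planar embeddings of their pertinent graphs of the prescribed types and flips, I will assemble an embedding $\mathcal E_\mu$ of $\pert{\mu}$ by starting from $\mathcal S_\mu$, inserting each $rb$-trivial component $(w,r)$ with designated face $f_w$ inside $f_w$, and substituting each virtual edge $e_i$ by the sub-embedding returned by the corresponding child, flipped as prescribed by $x_i$. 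By iteratively applying the Child-Replacement \cref{le:child-replacement} to rewrite each $t_i$-gadget inside $R(\mathcal C_\mu)$ into the actual sub-embedding of type $t_i$ and flip $x_i$, the resulting $\mathcal E_\mu$ has type $t_\mu$ and flip $x_\mu$; at the leaves (Q-nodes) the embedding is a single edge, which is trivially of the required type.

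To initiate the recursion, I will pick any admissible type $t^*$ for $\mu^*$, obtaining a relevant embedding $\mathcal E_{\mu^*}$ of $\pert{\mu^*}$. Then, following the case analysis in the proof of \cref{le:some-admissible}, I will extend $\mathcal E_{\mu^*}$ to a neat embedding of $H$ by routing the edge $(u,v)$ in the outer face of $\mathcal E_{\mu^*}$ and placing the undecided $rb$-trivial components incident to $u$ or $v$ inside one of the two faces incident to $(u,v)$, selected according to the type of $\mathcal E_{\mu^*}$. Since the proper allocation node of both $u$ and $v$ is $\rho$, and since both such faces correspond to faces of the unique embedding of $\skel(\rho)$, the resulting embedding is neat; by \cref{le:some-admissible}, it is also good.

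For the $O(n)$ runtime, the testing phase already runs in $O(n)$ by \cref{algo:neat}; the construction phase performs, at each node $\mu$, an $O(|\skel(\mu)|)$ search through $\mathcal V_\mu$ via \cref{le:compute-type} plus an $O(|\skel(\mu)|)$ assembly step substituting gadgets by children's embeddings, and since $\sum_\mu |\skel(\mu)| \in O(n)$ the total cost is linear. The main technical obstacle I anticipate is verifying that types and flips propagate consistently across gadget-to-sub-embedding replacements up the recursion; but this is exactly the content of \cref{le:child-replacement}, so no new structural argument beyond a careful inductive application of it is needed.
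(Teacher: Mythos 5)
Your proposal is correct and follows essentially the same route as the paper: after the bottom-up testing phase, a top-down traversal of $\mathcal T$ selects at each node an embedding configuration from $\mathcal V_\mu$ realizing the type (and flip) prescribed by the parent, expands skeletons/designated faces accordingly, justifies type preservation via \cref{le:relevant-in-configuration,le:child-replacement}, and finishes at the child of the root exactly as in \cref{le:some-admissible}, all within $O(\sum_\mu|\skel(\mu)|)=O(n)$ time. The only cosmetic difference is that you re-invoke \cref{le:compute-type} during the top-down phase to recover configuration types, whereas the paper implicitly stores them from the testing phase; this does not affect correctness or the linear bound.
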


If the testing algorithm in \cref{sse:testingalgorithm} succeeds, then, by \cref{le:empty-admissible}, there exists at least one admissible type $t$ for the unique child $\tau$ of the root of $\mathcal T$. Let $\cal C_\tau$ be an embedding configuration for $\tau$ such that the type of the embedding of $R(\mathcal C_\tau)$ is $t$, which exists by \cref{le:relevant-in-configuration}.
We construct a neat embedding $\cal E$ of $H$ by means of a top-down traversal of $\mathcal T$ starting from $\tau$.

We initialize $\cal E$ to the embedding $\cal S_\tau$ of~$\skel(\tau)$ in $\mathcal C_\tau$ with an arbitrary flip.
For each node $\mu$ encountered in the traversal we assume that the following invariants hold:
\begin{enumerate}[\bf (i)]
\item $\mu$ is associated with a type $t_\mu$ and with an embedding configuration $\mathcal C_\mu$ in $\mathcal V_\mu$ such that the type of the embedding of $R(\mathcal C_\mu)$ is $t_\mu$,
\item $\cal E$ contains a copy of
$\skel(\mu)$ whose embedding $\cal S_\mu$ is the one indicated in $\mathcal C_\mu$, and
\item the flip of $\cal S_\mu$ in $\cal E$ is the one indicated in the embedding configuration $\mathcal C_\xi$ associated with the parent $\xi$ of $\mu$.
\end{enumerate}

When considering a non-leaf node $\mu$ we perform the following operations.
First, for every vertex $w$ of $\skel(\mu)$ that is incident to an $rb$-trivial component, we embed such a component into the face of $\cal E$ corresponding to the face of $\cal S_\mu$ that has been selected as the designated face $f_w$ of $w$ in $\mathcal C_\mu$.
Second, for each virtual edge~$e_i$ of~$\skel(\mu)$ that is not a Q-node, consider the admissible type $t_i$ for $e_i$ in $\mathcal C_\mu$.
We select an embedding configuration $\mathcal C_i$ in $\mathcal V_{\nu_i}$, where $\nu_i$ is the child of $\mu$ in~$\mathcal T$ corresponding to $e_i$, such that the type of the embedding of $R(\mathcal C_i)$ is $t_i$. This configuration exists by \cref{le:empty-admissible} and since $t_i$ is admissible for $e_i$.
We associate $\nu_i$ with $t_i$ and $\mathcal C_i$, and we replace  $e_i$ in $\cal E$ with the embedding $\mathcal S_{\nu_i}$ of~$\skel(\nu_i)$ in $\mathcal C_i$, flipped as indicated by the variable $x_i$ in~$\mathcal C_\mu$.
This guarantees that the invariants hold for $\nu_i$.

When all the non-leaf nodes of $\mathcal T$ have been considered, $\cal E$ is an embedding of $\pert{\tau}$ whose type is $t$. Recall, in fact, that the skeleton of each leaf Q-node contains the real edge corresponding to it.
To obtain a neat embedding of $H$, it only remains to augment $\cal E$ with a drawing of the reference edge $(b_1,b_2)$ corresponding to $\rho$ and of the undecided $rb$-trivial components incident to~$b_1$ and~$b_2$, if any. This can be done in constant time by embedding $(b_1,b_2)$ in the outer face of~$\cal E$ and by inserting the undecided $rb$-trivial components, if any, into one of the two faces incident to $(b_1,b_2)$ as discussed in the proof of \cref{le:some-admissible}. This concludes the proof of \cref{th:neat-construction}.

\section{Conclusions} \label{se:conclusions}
In this paper, we have proved that the {\sc $2$-Level Quasi-Planarity} problem is \NPC, while it is linear-time solvable if the order of the vertices in one level is given as part of the input. The equivalence of the $2$-level quasi-planar drawings with two different types of layouts, namely the bipartite $2$-page book embeddings and the $(2,2)$-track layouts, seems to indicate that the problem we solved in this paper is central in this area. 

Trying to extend our hardness result to the {\sc Quasi-Planarity} problem (without the $2$-level constraint) is tempting. One natural idea would be to add edges to the \emph{frame} (see \cref{se:complexity}), so that, in any quasi-planar drawing of the augmented graph, the drawing of the frame is topologically equivalent to its unique $2$-level quasi-planar drawing. However, it is not clear to us whether this can be done and, even if it can, how to ensure that the graph that we assemble to the frame (i.e., the graph from the {\sc Leveled Planarity} instance) can be forced to have a ``leveled'' structure. Hence, determining the computational complexity of the {\sc Quasi-Planarity} problem remains an elusive goal, in our opinion the most attractive \mbox{one on ``almost''-planar graphs.}



\protect\


\remove{
	
\section{General lemmas and facts}
A  {\em pseudo-path} is a path in which an end-vertex is identified with an internal (non-adjacent) vertex. 

A virtual edge of $\skel(\mu)$ is {\em trivial} if it is of Type A or B, it is {\em non-trivial} otherwise (then it is of Type C, D, E or G).

\begin{lemma} [Virtual Edges with Black Edges Inside]
The subgraph of $\skel(\mu)$ induced by the non-trivial virtual edges is either a path (incident to at least one pole of $\mu$), or a cycle (incident to at least one pole of $\mu$), or a pseudo-path (then a pole of $\mu$ is the end-vertex of the path).
\end{lemma}

\begin{fullproof}
Trivially comes from the previous lemmata.
\end{fullproof}

\begin{lemma} \label{le:typeE-sidetoside}
In any embedding of a node of Type E, there is a portion of the red path that goes from side to side.
\end{lemma}

\begin{fullproof}
  
\end{fullproof}

\begin{lemma} \label{le:structure-no-red-vertex}
If $\pert{\mu}$ does not contain a red vertex, then it is a single edge (if contractions were done). 
\end{lemma}

\begin{fullproof}
The black vertices induce a path. Further, $\pert{\mu}$ is connected (hence, $\pert{\mu}$ is a black path). Finally, contractions were done.  
\end{fullproof}

We first state two lemmas about red vertices in $\pert{\mu}$.

\todo[inline]{Non mi \'e chiaro ancora se questi due lemmi sono la stessa cosa, o comunque se servono entrambi. In questa parte usiamo il secondo, quindi per ora ho messo solo la sua prova. L'altro l'ho messo qui per affinit\'a, ma forse va dopo.}

\begin{lemma}  \label{le:internal-implies-external-red}
  If $\pert{\mu}$ contains at least a red vertex, then in any embedding of $\pert{\mu}$ there exists a red vertex on the outer face. 
\end{lemma}
\begin{fullproof}
  Let $v$ be any red vertex of $\pert{\mu}$. Note that, if there exists an embedding of $\pert{\mu}$ in which all the vertices on the outer face are black, then there exists a simple cycle only composed of black vertices that contains $v$ in its interior. However, this contradicts the fact that the subgraph induced by the black vertices is a path. The statement follows.
\end{fullproof}


\section{Type D or Muffa's R (Red Pole, Not a Single Edge): R-node}

\subsection{DN0 (Muffa's RN0) and DI0 (Muffa's RI0)}

\begin{lemma}
There is no such node. 
\end{lemma}

\begin{fullproof}
Indeed, both sides of the embedding of $\pert{\mu}$ have to be black, which implies that one of them is an edge, which implies that $\mu$ is a P-node, in order to admit a DN0-embedding or an DI0-embedding.
\end{fullproof}

\subsection{DN1 (Muffa's RN1)}

We denote by $a$ and $b$ the neighbors of $v$ in $\skel(\mu)$ such that the edges $va$ and $vb$ delimit the left and right side of $\skel(\mu)$. We assume that, in any  DN1-embedding of $\pert{\mu}$, the side of $\pert{\mu}$ containing red vertices different from $v$ is the right side.

\begin{lemma} \label{le:DN1-firstpartofP}
The path $P$ uses all the virtual edges on the left side of $\skel(\mu)$ from the top pole $u$ of $\mu$ to $a$. Further, all these virtual edges correspond to Q-nodes.
\end{lemma}

\begin{fullproof}
Consider any edge $e$ on the left side of $\skel(\mu)$ not incident to $v$. By~\cref{le:structure-no-red-vertex}, if $\pert{e}$ is not a single edge (contractions were done), then it contains a red vertex $w$. If $w$ is internal to the DN1-embedding of $\pert{\mu}$, then by~\cref{le:non-isolation-red} the DN1-embedding of $\pert{\mu}$ has one internal red face, which is impossible by definition of DN1-embedding. Since $e$ is on the left side of $\skel(\mu)$ and since $\mu$ is an R-node, $w$ can only be incident to the left side of the DN1-embedding of $\pert{\mu}$, which is again impossible by definition of DN1-embedding.
\end{fullproof}

\begin{lemma} \label{le:DN1-secondpartofP}
The path $P$ contains a path $P_{ab}$ from $a$ to $b$ consisting of Q-nodes. Further, $b=y$. Moreover, the internal vertices of $P_{ab}$ are all and only the internal vertices of $\skel(\mu)$. Finally, every edge of $P_{ab}$ shares a face with $v$. 
\end{lemma}

\begin{fullproof}
Consider any virtual edge $e$ sharing an internal face with $v$ and yet not incident to $v$. The end-vertices of $e$ are black, as otherwise the face shared by $e$ and $v$ would correspond to a red face in the DN1-embedding of $\pert{\mu}$. Further, if $e$ were incident to an external vertex $x$ of $\skel(\mu)$ different from $a$ and $b$, then $\{v,x\}$ would be a separation pair, as its removal would separate $a$ from $b$. Hence, any end-vertex of $e$ different from $a$ and $b$ is an internal vertex of $\skel(\mu)$. In particular, $e$ is an internal edge of $\skel(\mu)$. 

If the pertinent graph of $e$ contained a red vertex (note that the poles of $e$ are black), then such a vertex would be internal to any DN1-embedding of $\pert{\mu}$, which by~\cref{le:non-isolation-red} contradicts the definition of DN1-embedding. By~\cref{le:structure-no-red-vertex}, $e$ is a single edge (if contractions were done). Hence, all the virtual edges that share internal faces with $v$ and are not incident to $v$ correspond to $Q$-nodes.  

Now the subgraph of $\skel(\mu)$ induced by the virtual edges sharing internal faces with $v$ and not incident to $v$ corresponds to the sub-path of the black path between $a$ and $b$. Since all these virtual edges correspond to $Q$-nodes, then the sub-path of the black path between $a$ and $b$ corresponds to a path $P_{ab}$ from $a$ to $b$ in $\skel(\mu)$. In particular, vertex $b$ is the last vertex of $P_{ab}$, as otherwise $\{b,v\}$ would be a separation pair in $\skel(\mu)$. Also, as proved above, all the vertices of $P_{ab}$ different from $a$ and $b$ are internal vertices of $\skel(\mu)$. 

We now show that there exists no virtual edge of Type C or G incident to $b$ other than the one belonging to $P_{ab}$, that is, $y=b$. Let $b'$ be the neighbor of $b$ in $P_{ab}$ (possibly $b'=a$). 

Suppose first, for a contradiction, that there exists a Type C or G virtual edge $e$ of $\skel(\mu)$ such that the edges $(b,v)$, $e$, and $(b,b')$ appear in this clockwise order around $b$ in $\skel(\mu)$. Recall that, by~\cref{le:type-C-path}, the Type C or G virtual edges of $\skel(\mu)$ determine a path $P$ in $\skel(\mu)$. The planarity of $\skel(\mu)$ implies that the end-vertex $y$ of $P$ different from $u$ is a vertex internal to the cycle composed of $P_{ab}$ and of the edges $(v,a)$, $(v,b)$. Since $\skel^+(\mu)$ is a $3$-connected graph, it follows that $y$ is incident to at least $3$ virtual edges of $\skel(\mu)$. One of these edges is in $P$; one of these edges might connect $y$ with $v$; however, there exists at least one virtual edge that either: (i) connects $y$ with a red vertex internal to $\skel(\mu)$; or (ii) is an internal edge of $\skel(\mu)$ that connects $y$ with a black vertex of $\skel(\mu)$. In both cases it follows that $\pert{\mu}$ has an internal red vertex (in the latter case this follows from~\cref{le:structure-no-red-vertex}). By Lemmata~\ref{le:structure-no-red-vertex} and~\ref{le:non-isolation-red}, we have that $\pert{\mu}$ has an internal red face, which contradicts the assumption that $\pert{\mu}$ has a DN1-embedding.

Suppose next, for a contradiction, that there exists a Type C or G virtual edge $e$ of $\skel(\mu)$ such that the edges $(b,v)$, $e$, and $(b,b')$ appear in this counter-clockwise order around $b$ in $\skel(\mu)$. 

If $y$ is an internal vertex of $\skel(\mu)$, then, since $\skel(\mu)$ plus the edge $(u,v)$ is a $3$-connected graph, it follows that $y$ is incident to at least $3$ virtual edges of $\skel(\mu)$, one of which is in $P$. 

\begin{itemize}
  \item If a different virtual edge connects $y$ with a black vertex, then this virtual edge is not of Type C; by Lemmata~\ref{le:structure-no-red-vertex} and~\ref{le:non-isolation-red}, we have that $\pert{\mu}$ has an internal red face, which contradicts the assumption that $\pert{\mu}$ has a DN1-embedding. 
  \item Otherwise, there exist at least two  virtual edges incident to $y$ and consecutive around $y$ in $\skel(\mu)$ whose end-vertices different from $y$ are both red. However, the face shared by these edges is an internal red face of $\skel(\mu)$, which contradicts the assumption that $\pert{\mu}$ has a DN1-embedding. 
\end{itemize}

It follows that $y$ is an external vertex of $\skel(\mu)$.

Consider the edge $(u,z)$ incident to the right side of $\skel(\mu)$. Observe that $z\neq b$, given that $y$ is an external vertex of $\skel(\mu)$. We distinguish two cases.

\begin{itemize}
\item If $z$ is a black vertex, then the edge $(u,z)$ is not of Type C or G, given that $P$ comprises the edge on the left side of $\skel(\mu)$. The sub-path of $P$ between $u$ and $z$, together with the edge $(u,z)$, form a cycle $C$ in $\skel(\mu)$. No red vertex is internal to $C$, given that $\pert{\mu}$ has a DN1-embedding. Further, no vertex of $P$ is internal to $C$, as otherwise $y$ would be an internal vertex of $\skel(\mu)$. Finally, there are no virtual edges between vertices of $C$ internal to $C$, as otherwise by Lemmata~\ref{le:structure-no-red-vertex} and~\ref{le:non-isolation-red}, $\pert{\mu}$ would contain an internal red face. It follows that $\{u,b\}$ is a separation pair of $\skel^+(\mu)$, since its removal disconnects $z$ from $a$. 
\item If $z$ is a red vertex, then consider the edge $(z,x)$ incident to the right side of $\skel(\mu)$, with $x\neq u$. Observe that $x\neq b$, given that $y$ is an external vertex of $\skel(\mu)$. Further, $x$ is a black vertex, by~\cref{le:no-red-red}. The sub-path of $P$ between $u$ and $x$, together with the edges $(x,z)$ and $(z,u)$, form a cycle $C'$ in $\skel(\mu)$. No red vertex is internal to $C'$, given that $\pert{\mu}$ has a DN1-embedding. Further, no vertex of $P$ is internal to $C'$, as otherwise $y$ would be an internal vertex of $\skel(\mu)$. Finally, all the virtual edges between vertices of $C'$ internal to $C'$, if any, are incident to $z$, as otherwise by~\cref{le:structure-no-red-vertex,le:non-isolation-red}, $\pert{\mu}$ would contain an internal red face. It follows that $\{z,b\}$ is a separation pair of $\skel^+(\mu)$, since its removal disconnects $x$ from $a$. 
\end{itemize}

We conclude the proof by observing that no vertex internal to $\skel(\mu)$ exists other than those internal to $P_{ab}$. Indeed, no red vertex is internal to $\skel(\mu)$, given that $\pert{\mu}$ has a DN1-embedding. Further, no black vertex other than those internal to $P_{ab}$ is internal to $\skel(\mu)$, since $P$ contains all such vertices and since $b=y$. 
\end{fullproof}

We finally argue about the virtual edges of $\skel(\mu)$ that are not in $P$.

\begin{lemma} \label{le:DN1-restOfSkel}
We have that $\skel(\mu)$ has one of the following two structures:

\begin{itemize}
  \item $\skel(\mu)$ consists of the vertices of $P$ plus $v$; it contains an external virtual edge $(u,y)$; $P_{ua}$ coincides with the edge $(u,a)$; either $(u,y)$ is of Type EN0--EN1 and $(v,y)$ is of Type A, or $(v,y)$ is of Type DN1 and $(u,y)$ is of Type B slim, or $(u,y)$ is of Type B slim and $(v,y)$ is of Type A; all the remaining virtual edges of $\skel(\mu)$ connect $v$ to all the vertices of $P_{ab}$ different from $y$ and are of Type A. 
  \item $\skel(\mu)$ consists of the vertices of $P$ plus $v$ plus a red vertex $z$; it contains two external virtual edges $(u,z)$ and $(z,y)$, the first of which is of Type A; the virtual edges $(z,y)$ and $(y,v)$ are either of Type DN0, or DN1 or A, and at most one of them is of Type DN0--DN1; it contains virtual edges of Type A connecting $z$ to all the internal vertices of $P_{ua}$; it contains virtual edges of Type A connecting each vertex of $P_{ab}$ to at least one of $v$ and $z$. 
\end{itemize}
\end{lemma}

\begin{fullproof}
By~\cref{obs:red-inside-nonC}, if there is a virtual edge $e$ of $\skel(\mu)$ not in $P$ between two black vertices, then $\pert{e}$ contains a red vertex. Further, if $e$ is an internal edge of $\skel(\mu)$, then~\cref{le:non-isolation-red} implies that $\pert{\mu}$ contains an internal red face, which is impossible by definition of DN1-embedding. Since $P_{ua}$ is on the left side of $\skel(\mu)$, by~\cref{le:DN1-firstpartofP}, and since all the vertices of $P_{ab}$ are internal to $\skel(\mu)$, by~\cref{le:DN1-secondpartofP}, it follows that the only virtual edge that connects two black vertices, that is not in $P$, and that might belong to $\skel(\mu)$ is the one connecting $u$ and $y$. The existence of this edge determines the two possible structures for $\skel(\mu)$, as in the statement.

Suppose first that the edge $(u,y)$ exists. Note that $(u,y)$ is external as otherwise by~\cref{le:non-isolation-red} $\pert{\mu}$ would contain an internal red face, which is impossible by definition of DN1-embedding. By~\cref{le:DN1-secondpartofP} $\skel(\mu)$ does not contain any internal vertex other than the internal vertices of $P_{ab}$. In particular, $v$ is the only red vertex of $\skel(\mu)$. 

Since $\skel(\mu)$ contains no internal virtual edge between pairs of black vertices other than the edges of $P_{ab}$, it follows that $P_{ua}$ coincides with the edge $(u,a)$, as any internal vertex of $P_{ua}$ would have degree $2$ in $\skel^+(\mu)$, while $\skel^+(\mu)$ is $3$-connected. Analogously, $v$ is connected to all the vertices of $P_{ab}$, as any internal vertex of $P_{ab}$ not connected to $v$ would have degree $2$ in $\skel^+(\mu)$. By~\cref{le:final-edges}, all the virtual edges incident to $v$, except possibly for the edge $(v,y)$ are of Type~A. 

The edge $(u,y)$ has two black end-vertices; since its pertinent graph does not contain a black path between the poles, $(u,y)$ might be of Type~B or~E. If it is of Type~B, then it is of Type~B slim, as otherwise $\pert{\mu}$ would contain an internal red face. If it is of Type~E, then it is either of Type~EN0 or of Type~EN1, as if it had an internal red face, then $\pert{\mu}$ would contain an internal red face as well, and if it was of Type~EN2, then $\pert{\mu}$ would contain an internal red face. 

The edge $(v,y)$ has one red and one black end-vertex, hence it might be of Type~A or~D. If it is of Type~D, then it is either of Type~DN0 or of Type~DN1, as if it had an internal red face, then $\pert{\mu}$ would contain an internal red face as well, and if it was of Type~DN2, then $\pert{\mu}$ would contain an internal red face.  

By~\cref{le:final-edges}, at most one of $(u,y)$ and $(v,y)$ is of Type D or E. 

Suppose next that the edge $(u,y)$ does not exist. By~\cref{le:DN1-secondpartofP} there are no black vertices on the right side of $\skel(\mu)$ other than $y$; further, by~\cref{le:no-red-red}, no virtual edge of $\skel(\mu)$ connects two red vertices. It follows that there is a red vertex $z$ on the right side of $\skel(\mu)$ which is connected to both $u$ and $y$. Note that the edges $(u,z)$ and $(z,y)$ are external.

By~\cref{le:DN1-secondpartofP} $\skel(\mu)$ does not contain any internal vertex other than the internal vertices of $P_{ab}$. In particular, $v$ and $z$ are the only red vertices of $\skel(\mu)$. 

Since $\skel(\mu)$ contains no internal virtual edge between pairs of black vertices other than the edges of $P_{ab}$, it follows that every internal vertex of $P_{ua}$ is adjacent to $z$, as otherwise it would have degree $2$ in $\skel^+(\mu)$. Analogously, every vertex of $P_{ab}$ is connected to at least one of $v$ and $z$. By~\cref{le:final-edges}, all the virtual edges incident to $v$ and $z$, except possibly for the edges $(v,y)$ and $(z,y)$, respectively, are of Type~A. 

That the edges $(v,y)$ and $(z,y)$ are of Type~A, or~DN0, or~DN1 can be proved exactly as for the edge $(v,y)$ in the case in which the edge $(u,y)$ exists.

By~\cref{le:final-edges}, at most one of $(v,y)$ and $(z,y)$ is of Type D. 
\end{fullproof}

\begin{theorem}
For an R-node $\mu$, it can be tested in $O(|\skel(\mu)|)$ time whether $\pert{\mu}$ admits a DN1-embedding.
\end{theorem}

\subsection{DN2 (Muffa's RN2)}

\begin{theorem}
Let $\mu$ be an R-node. We have that $\pert{\mu}$ admits a DN2-embedding if and only if the following conditions are satisfied.
\begin{enumerate}
  \item $\skel(\mu)$ contains no internal red vertex;
  \item every internal virtual edge is of Type A or C1, depending on whether it has one or two black end-vertices;
  \item every external virtual edge that belongs to $P$ is of Type C1 or C2, except, possibly, for the virtual edge of $P$ incident to $y$;
  \item every external virtual edge that does not belong to $P$ and that has two black end-vertices is of Type B slim, except, possibly, for the external virtual edges incident to $y$;
  \item every external virtual edge that has one red and one black end-vertex is of Type A, except, possibly, for the external virtual edges incident to $y$;
  \item $y$ is an external vertex of $\skel(\mu)$;
  \item if there is a virtual edge that is of Type D, E, or G, then it is an external virtual edge and it is of Type DN0 or DN1, or of Type EN0 or EN1, or of Type GN1. 
\end{enumerate}
\end{theorem}

\begin{fullproof}
We first prove the necessity. Consider any DN2-embedding ${\cal E}_\mu$; we prove that, if any of Properties~1--7 does not hold, then ${\cal E}_\mu$ contains an internal red face, a contradiction to the definition of DN2-embedding. 

\begin{enumerate}
\item If $\skel(\mu)$ contains an internal red vertex, then the existence of an internal red face follows by~\cref{le:non-isolation-red}. 
\item If $\skel(\mu)$ contains an internal virtual edge that is not of Type A or C1, then the existence of an internal red face follows by~\cref{le:structure-no-red-vertex,le:non-isolation-red}. 
\item If an external virtual edge $e$ of $P$ is not of Type C1 or C2, then it is of Type C3, C4, C5, C6, C7, or G. If $e$ is not incident to $y$ then it is not of Type G, by~\cref{le:final-edges}. If $e$ is of Type C4--C7, then any embedding of $\pert{e}$ contains an internal red face, and hence so does ${\cal E}_\mu$. Finally, if $e$ is of Type C3, then any embedding of $\pert{e}$ has red vertices incident to both sides; hence ${\cal E}_\mu$ contains an internal red vertex and thus an internal red face by~\cref{le:non-isolation-red}.   
\item If an external virtual edge $e$ of $\skel(\mu)$ has two black end-vertices, then it is not of Type A or D. If $e$ is not in $P$, then it is not of Type C or G. If $e$ is not incident to $y$, then it is not of Type E, by~\cref{le:final-edges}. It follows that $e$ is of Type B. If $e$ is of Type B fat, then any embedding of $\pert{e}$ has an internal red face, and so does ${\cal E}_\mu$. 
\item If an external virtual edge $e$ of $\skel(\mu)$ has one black and one red end-vertex, then it is either or Type A or D. If $e$ is not incident to $y$, then it is not of Type D, by~\cref{le:final-edges}. It follows that $e$ is of Type A. 
\item If $y$ is an internal vertex of $\skel(\mu)$, then it is incident to at least $3$ virtual edges of $\skel(\mu)$, given that $\skel^+(\mu)$ is a $3$-connected graph. One of these edges is in $P$. For every other virtual edge incident to $y$, the second end-vertex is red; indeed, by Property~2 every internal virtual edge whose end-vertices are both black is of Type C1, and hence belongs to $P$. It follows that there are (at least) two virtual edges with a red end-vertex that are consecutive around $y$ in the unique embedding of $\skel^+(\mu)$; hence, ${\cal E}_\mu$ has an internal red face. 
\item Consider a virtual edge $e$ of $\skel(\mu)$ of Type D, if any; the argument for the case in which $\skel(\mu)$ contains a virtual edge of Type E or G is analogous. First, $e$ is external, by Property~2. Second, if $e$ is of Type DI, then any embedding of $\pert{e}$ contains an internal red face, and hence so does ${\cal E}_\mu$. Further, $e$ is of Type DN2, then any embedding of $\pert{e}$ has red vertices incident to both sides; hence ${\cal E}_\mu$ contains an internal red vertex and thus an internal red face by~\cref{le:non-isolation-red}.   
\end{enumerate}

We now prove the sufficiency. We construct an embedding of $\pert{\mu}$ as follows. Start from the unique embedding of $\skel(\mu)$. First observe that no embedding choice has to be done for the virtual edges of $\skel(\mu)$ of Type A, C1, B slim, DN0, or EN0. The virtual edges of $\skel(\mu)$ of Type C2 are external; we embed them so that the side containing red vertices is incident to the outer face of the embedding of $\pert{\mu}$. Analogously, virtual edges of $\skel(\mu)$ of Type DN1, or EN1, or GN1 are external; we embed them so that the side containing red vertices is incident to the outer face of the embedding of $\pert{\mu}$.

We prove that the constructed embedding ${\cal E}_\mu$ is a DN2-embedding. In particular, we only need to prove that there exists no internal red face $f$. Suppose, for a contradiction, that an internal red face $f$ exists in ${\cal E}_\mu$. By definition, there are at least two red vertices, say $x$ and $y$, incident to $f$ in ${\cal E}_\mu$. 

First, we prove that $x$ and $y$ are external vertices in ${\cal E}_\mu$; we prove the statement for $x$. Note that $x$ can be either a vertex of $\skel(\mu)$ or a non-pole vertex of the pertinent graph of a virtual edge of $\skel(\mu)$. By Property~1 $x$ cannot be internal vertices of $\skel(\mu)$ -- this proves the statement if $x$ is a vertex of $\skel(\mu)$. Otherwise, let $e$ be the virtual edge such that $x$ belongs to $\pert{e}$. The virtual edge $e$ is incident to the outer face of $\skel(\mu)$, as by Property~2 every internal virtual edge corresponds to a Q-node. 

\begin{itemize}
\item Since $\pert{e}$ contains a non-pole red vertex, it follows that $e$ cannot be of Type A, C1, or DN0 -- in particular, if $e$ is of Type DN0 then $x$ would be an internal red vertex in the DN0-embedding of $\pert{e}$; by~\cref{le:non-isolation-red} this would imply that $\pert{e}$ contains an internal red face, which would contradict the definition of DN0-embedding.

\item If $e$ is of Type B slim, then $x$ is incident to both the sides of the unique embedding of $\pert{e}$; hence, $x$ is an external vertex in ${\cal E}_\mu$. Analogously, if $e$ is of Type EN0, it has to be the vertex incident to both the sides of any embedding of $\pert{e}$; this can be proved by using~\cref{le:non-isolation-red}, as in the case in which $e$ is of Type DN0.

\item Finally, if $e$ is of Type DN1, EN1, or GN1, then it is incident to the outer face of the DN1- or EN1- or GN1-embedding of $\pert{e}$ (as otherwise~\cref{le:non-isolation-red} would imply the existence of an internal red face in the embedding of $\pert{e}$, which is not possible). Our embedding choice guarantees that the side of the embedding of $\pert{e}$ containing red vertices (in particular $x$) is incident to the outer face of ${\cal E}_\mu$.
\end{itemize}

It follows that $x$ and $y$ are external red vertices in ${\cal E}_\mu$. In order to derive a contradiction to the fact that $x$ and $y$ are both incident to an internal red face $f$, we distinguish two cases.

\begin{itemize}
  \item First, suppose that $x$ and $y$ are on different sides of the outer face of ${\cal E}_\mu$ (in particular, none of them coincides with $v$). Consider one of the two vertices adjacent to $v$ along the boundary of the outer face of ${\cal E}_\mu$, call it $z$. Since $G$ contains no red-red edge, it follows that $z$ is a black vertex. Since $v$ is a red vertex, the subgraph of $\pert{\mu}$ induced by the black vertices is connected, hence it contains a path ${\cal P}_{uz}$ between $u$ and $z$. Since $u$, $x$, $z$, and $y$ appear in this circular order along the outer face of ${\cal E}_\mu$, and since $x$ and $y$ are incident to the internal face $f$ of ${\cal E}_\mu$, the path ${\cal P}_{uz}$ contains $x$ or $y$, which are however red vertices, a contradiction.   
  \item Second, suppose that $x$ and $y$ are on the same side of the outer face of ${\cal E}_\mu$ (in particular, one of them could coincide with $v$). Assume, w.l.o.g., that $u$, $x$, $y$, and $v$ appear in this circular order along the outer face of ${\cal E}_\mu$, where possibly $y=v$. Consider the path ${\cal P}_{uy}$ between $u$ and $y$ along the outer face of ${\cal E}_\mu$ containing $x$ and let $z$ be the neighbor of $y$ in ${\cal P}_{uy}$. As in the previous case, $z$ is a black vertex and $G$ contains a path ${\cal P}_{uz}$ between $u$ and $z$. Since $u$, $x$, $z$, and $y$ appear in this circular order along the outer face of ${\cal E}_\mu$, and since $x$ and $y$ are incident to the internal face $f$ of ${\cal E}_\mu$, the path ${\cal P}_{uz}$ contains $x$ or $y$, which are however red vertices, a contradiction.   
\end{itemize}

This concludes the proof of the theorem. 
\end{fullproof}

\subsection{DI1A (Muffa's RI1A?)}


\todo[inline]{questo qui sopra va su tutto}

\todo[inline]{un arco \'e di un tipo se ammette un embedding di quel tipo}

\todo[inline]{ricorda che nel caso D, una delle facce esterne (assumiamo quella destra) \'e sempre intrinsically red, dato che il polo sotto \'e red e c'\'e o un altro vertice rosso sulla faccia esterna o l'arco incidente a u sulla faccia esterna \'e di tipo B o E}

By definition of embedding type DI1A, the spine $\cal S$ has one end-vertex $f_s$ which is an internal face of $\pert{\mu}$. Note that $f_s$ might correspond to a face of $\skel(\mu)$ or might be an internal face of the pertinent graph of a virtual edge of $\skel(\mu)$.

We guess the face $f_p$ of $\skel(\mu)$ which is the first node of $\cal S$ (starting from $f_s$) that corresponds to a face of $\skel(\mu)$. Note that it might be that $f_p=f_s$ (if $f_s$ corresponds to an internal face of $\skel(\mu)$) and it might be that $f_p$ is the outer face of $\skel(\mu)$. 

For each choice of $f_p$, we further guess whether $f_p=f_s$ and, if that is not the case, we guess a virtual edge $e_p$ of $\skel(\mu)$ incident to $f_p$ such that $f_s$ is an internal face of the pertinent graph of $e_p$.

Finally, consider the virtual edge $e^*$, if any, that is of type either D, E, or G. Recall that this edge is unique by~\cref{le:final-edges}. For this edge, we guess each of its $O(1)$ realizable types.

There are $O(|f_p|)$ guesses for $e_p$, once $f_p$ is fixed. Altogether, this sums up to $O(|\skel(\mu)|)$ pairs $f_p$, $e_p$ to be guessed, since each virtual edge is incident to two faces of $\skel(\mu)$. For each of these pairs, there are $O(1)$ guesses for the type of $e^*$. Hence, the total number of guesses is $O(|\skel(\mu)|)$.

\newcommand{\gadget}[1]{$K_{#1}$}

For a fixed guess of $f_p$, of (possibly) $e_p$, and of (possibly) a type for $e^*$, we construct an auxiliary graph $G_p$ with a fixed embedding ${\cal E}_p$ such that ${\cal E}_p$ is a DI1A embedding if and only if $\pert{\mu}$ admits a DI1A embedding. We construct ${\cal E}_p$ starting from the unique embedding of $\skel(\mu)$. We replace each virtual edge $e=(u_e,v_e)$ of $\skel(\mu)$ with a gadget \gadget{e} with end-vertices $u_e$ and $v_e$, which depends on the type of $e$, on the embedding types of $\pert{e}$, and on whether $e=e_p$.

\footnote{Tabella gadgets or usa il pertinent}

\begin{itemize}
\item If $e$ is of Type A, B, C1, C3\todo[inline]{this means that it admits C3 but not C4}, C6, or C7, then flip \gadget{e} arbitrarily.
\item If $e$ is of Type C2, then if at least one face of $\skel(\mu)$ incident to $e$ is intrinsically red or is $f_p$, then flip \gadget{e} so that $x_e$ is incident to this face; otherwise, flip \gadget{e} arbitrarily.
\item If $e$ is of Type C4 or C5, then we distinguish two cases. If $e=e_p$, then flip \gadget{e} so that $x_e$ is incident to $f_p$; otherwise, flip \gadget{e} arbitrarily.
\item If $e$ is of Type C3-C4\todo[inline]{this means that it admits both C3 and C4}, then we distinguish two cases. If $e=e_p$, then set \gadget{e} to be the C4-gadget and flip it so that $x_e$ is incident to $f_p$; otherwise, set \gadget{e} to be the C3-gadget and flip it arbitrarily.
\item If $e=e^*$, then set \gadget{e} to be the corresponding gadget\todo[inline]{we have to guess also the flip}.
\end{itemize}


\section{Type E}

E ha sempre vertici rossi sulle facce esterne (i vicini di v). L'ultimo numero degli embedding di nodi di tipo E rappresenta quanti lati hanno almeno un altro vertice rosso. 

}

\bibliographystyle{abbrv}
\bibliography{bibliography}

\end{document}